\documentclass[11pt]{article}
\usepackage[T1]{fontenc}
\usepackage{lmodern}
\usepackage{microtype}
\usepackage{amsfonts,amsmath,amsthm,amssymb,mathtools}
\numberwithin{equation}{section}
\usepackage{mathrsfs}
\usepackage[margin=1in]{geometry}
\usepackage{enumitem}
\usepackage[linesnumbered,boxed,ruled,vlined]{algorithm2e}
\usepackage[colorlinks,citecolor=blue,linkcolor=blue,urlcolor=red,pagebackref]{hyperref}
\usepackage[capitalise]{cleveref}
\usepackage{url}
\usepackage{thmtools, thm-restate}
\usepackage{xcolor}
\usepackage{graphicx}
\usepackage{placeins}
\usepackage{xspace}
\xspaceaddexceptions{]\}}
\usepackage{framed}
\usepackage{tikz}

\theoremstyle{plain}
\newtheorem{theorem}{Theorem}[section]
\newtheorem{lemma}[theorem]{Lemma}
\newtheorem{proposition}[theorem]{Proposition}
\newtheorem{corollary}[theorem]{Corollary}
\newtheorem{claim}[theorem]{Claim}

\theoremstyle{definition}
\newtheorem{definition}[theorem]{Definition}

\Crefname{claim}{Claim}{Claims}
\Crefname{example}{Example}{Examples}

\setlist[enumerate]{nosep, topsep=1ex}
\setlist[itemize]{nosep, topsep=1ex}
\setlist[description]{nosep}
\allowdisplaybreaks
\def\ShowAuthNotes{1}
\ifnum\ShowAuthNotes=1
\newcommand{\authnote}[2]{\ \\ \textcolor{red}{\parbox{0.9\linewidth}{[{\footnotesize {\bf #1:} { {#2}}}]}}\newline}
\else
\newcommand{\authnote}[2]{}
\fi

\newcommand{\eps}{\varepsilon}
\newcommand{\vol}{\operatorname{vol}}

\newcommand{\extr}{\mathrm{ex}}

\newcommand{\poly}{\operatorname{\mathrm{poly}}}
\newcommand{\spa}{\operatorname{\mathrm{span}}}

\newcommand{\rk}{\operatorname{\mathrm{rank}}}

\newcommand{\supp}{\operatorname{\mathrm{supp}}}
\newcommand{\conv}{\operatorname{\mathrm{conv}}}

\newcommand{\fw}{\operatorname{width}_f}

\newcommand{\bflambda}{{\boldsymbol{\lambda}}}

\newcommand{\polylog}{\poly\log}
\newcommand{\maxbitlen}{\mathop{\mathsf{maxbitlen}}\nolimits}
\newcommand{\Tquery}{T_{\mathrm{query}}}
\newcommand{\R}{\mathbb{R}}
\newcommand{\Q}{\mathbb{Q}}
\newcommand{\Z}{\mathbb{Z}}
\newcommand{\Nonneg}{\Z_{\ge 0}}
\newcommand{\Pos}{\Z_{>0}}

\newcommand{\caD}{\mathcal{D}}

\newcommand{\caG}{\mathcal{G}}

\newcommand{\caI}{\mathcal{I}}

\newcommand{\caL}{\mathcal{L}}

\newcommand{\caO}{\mathcal{O}}
\newcommand{\caP}{\mathcal{P}}
\newcommand{\caQ}{\mathcal{Q}}

\newcommand{\caS}{\mathcal{S}}
\newcommand{\caT}{\mathcal{T}}

\newcommand{\caZ}{\mathcal{Z}}

\newcommand{\Oh}{\caO}
\newcommand{\tO}{\widetilde \caO}

\newcommand{\bfa}{{\boldsymbol{a}}}
\newcommand{\bfb}{{\boldsymbol{b}}}
\newcommand{\bfc}{{\boldsymbol{c}}}

\newcommand{\bfe}{{\boldsymbol{e}}}

\newcommand{\bfg}{{\boldsymbol{g}}}
\newcommand{\bfh}{{\boldsymbol{h}}}

\newcommand{\bfm}{{\boldsymbol{m}}}

\newcommand{\bfp}{{\boldsymbol{p}}}
\newcommand{\bfq}{{\boldsymbol{q}}}
\newcommand{\bfr}{{\boldsymbol{r}}}

\newcommand{\bft}{{\boldsymbol{t}}}
\newcommand{\bfu}{{\boldsymbol{u}}}
\newcommand{\bfv}{{\boldsymbol{v}}}
\newcommand{\bfw}{{\boldsymbol{w}}}
\newcommand{\bfx}{{\boldsymbol{x}}}
\newcommand{\bfy}{{\boldsymbol{y}}}
\newcommand{\bfz}{{\boldsymbol{z}}}
\newcommand{\one}{\mathbf{1}}
\newcommand{\zero}{{\boldsymbol{0}}}

\newcounter{problem}

\crefname{problem}{Problem}{Problems}
\Crefname{problem}{Problem}{Problems}
\newcommand{\defproblem}[4]{
  \par\vspace{3mm}
\noindent\fbox{
  \begin{minipage}{0.95\textwidth}
  \refstepcounter{problem}\label{#1}
  \textbf{Problem \theproblem.} #2 \\
  \textbf{Input:} #3  \\
  \textbf{Task:} #4
  \end{minipage}
  }
  \vspace{2mm}
}

\title{Subset Sum in Almost-Linear Time}
\author{Ce Jin\thanks{Supported by the Miller Research Fellowship at the Miller Institute for Basic Research in Science, UC Berkeley. Part of the work was done while the author was affiliated with MIT and supported by the Jane Street Graduate Research Fellowship, NSF grant CCF-2330048, and a Simons Investigator Award.} \\ UC Berkeley\\\url{cejin@berkeley.edu} \and Xiao Mao\thanks{ Supported by NSF CAREER Award CCF-2337901.} \\ Stanford University \\ \url{xiaomao@stanford.edu}}

\begin{document}

  \setcounter{page}{0} \clearpage
  \maketitle
  \thispagestyle{empty}
\begin{abstract}
  We give a deterministic algorithm for finding a subset of the input integers $1\le a_1,\ldots,a_n \le \Delta$ that sums to the target $t$ in $\mathcal{O}(n+\Delta^{1+o(1)})$ time. This improves upon the previous $\widetilde{\mathcal{O}}(n + \Delta^{1.5})$-time algorithm by Chen, Lian, Mao, and Zhang (SODA 2024), and achieves almost tight dependence on $\Delta$ assuming SETH.
  
  We use an inverse theorem for subset sums due to Conlon, Fox, and Pham (2023), and develop a quotient variant of economic cap covering from the polytope approximation literature.
\end{abstract}

  \clearpage
  \begingroup
    \hypersetup{pageanchor=false}
    \pagenumbering{gobble}
    \tableofcontents
    \clearpage
  \endgroup
  \hypersetup{pageanchor=true}
  \pagenumbering{arabic}

\section{Introduction}
\subsection{Background}

We revisit the \textsc{Subset Sum} problem, a fundamental NP-hard computational task with connections to combinatorics, discrete optimization, and cryptography.

\defproblem{prob:subset-sum}{\textsc{Subset Sum} (decision version)}{Integers $1\le a_1,\dots,a_n\le \Delta$ and a target $t$ (without loss of generality, $\Delta \le t\le n \Delta)$\footnote{To justify this,  remove the useless integers $a_i>t$, reset $\Delta \coloneqq \max_i a_i$, and return NO if $t>n \Delta$.}.}{Decide whether there exists $(x_1,\dots,x_n) \in \{0,1\}^n$ such that $\sum_{i=1}^n a_i x_i = t$.}

The \textsc{Subset Sum} problem (as well as the more difficult \textsc{0-1 Knapsack}\footnote{In \textsc{0-1 Knapsack}, given the knapsack capacity $t \in \Nonneg$ and $n$ items with weights $w_1,\dots,w_n \in \Z \cap [0,\Delta]$ and profits $p_1,\dots,p_n \in \Nonneg$, the task is to find $(x_1,\dots,x_n) \in \{0,1\}^n$ that maximizes $\sum_{i=1}^n p_i x_i$ subject to $\sum_{i=1}^n w_i x_i \le t$.})
 can be solved in pseudopolynomial time
$\Oh(nt) = \Oh(n^2 \Delta)$ by a standard dynamic program \cite{Bellman1957}
(see the textbook  \cite{bookkpp} for an extensive treatment of related problems and algorithms).
We study the fine-grained complexity of \textsc{Subset Sum} in the pseudopolynomial-time regime:

\begin{quote}
   \centering
    \itshape
 How fast can we solve \textup{\textsc{Subset Sum}} with $n$ input integers bounded by $\Delta$?
\end{quote}

Work on this question dates back more than three decades.
Early approaches were based on additive number theory and combinatorics.
In particular, the existence of long arithmetic progressions in subset sums \cite{alonfreiman,freiman93,Sarkozy2} was used to design faster  algorithms in certain regimes \cite{freimanparti,cfgalgo,freimanalgo,GalilM91,chaimovichalgo}.
For example, 
Chaimovich, Freiman, and Galil's algorithm runs in
$\tO(n + \Delta)$ time
 when $\Delta^{3/2} \sqrt{\log \Delta}\ll t \le \frac{1}{2} \sum_{i=1}^{n} a_i$ and all $a_i$ are distinct \cite{cfgalgo}. See \cref{tab:intro-dense-distinct} 
in \cref{subsec:relatedwork} and 
the survey \cite{chaimovichsurvey} for further results of this kind.

Later, Pisinger~\cite{Pisinger99}
solved \textsc{Subset Sum} in $\Oh(n \Delta)$ time via a refined dynamic program, achieving
the first general improvement over the baseline $\Oh(n^2 \Delta)$ time complexity.

\begin{table}[ht]
\centering
\renewcommand{\arraystretch}{1.15}
\begin{tabular}{|p{0.5\linewidth}|c|c|}
\hline
\textbf{Reference} & \textbf{Time in $n,\Delta$} & \textbf{Time in $n,t,\Delta$} \\
\hline
\raggedright Bellman~\cite{Bellman1957} & $\Oh(n^2 \Delta)$  & $\Oh(nt)$ \\
\hline
\raggedright Pisinger~\cite{Pisinger99} & $\Oh(n \Delta)$ & -- \\
\hline
\raggedright Koiliaris and Xu~\cite{KoiliarisX17proc} & -- &
$\tO(n + t \sqrt{n})$ \\
\hline
\raggedright Bringmann~\cite{Bringmann17} & -- & $\tO(n+t)$  \\
\hline
\raggedright Chen, Lian, Mao, and Zhang~\cite{clmzfocs24} & $\tO(n + \sqrt{n} \Delta)$ & $\tO(n+\sqrt{\Delta t})$ \\
\hline
\hline
\raggedright Galil and Margalit~\cite{GalilM91} + \cite{Bringmann17}$^{\triangle}$ &
$\tO(n+\Delta^{3/2})$ & -- \\
\hline
\raggedright Eisenbrand and Weismantel~\cite{EisenbrandW20} + \cite{Bringmann17} & $\tO(n+\Delta^2)$ & -- \\
\hline
\raggedright Polak, Rohwedder, and W\k{e}grzycki~\cite{icalp21} & $\tO(n+\Delta^{5/3})$ & -- \\
\hline
\raggedright Chen, Lian, Mao, and Zhang~\cite{ChenLMZsoda24}; \cite{jinunpublishedknapsack} & $\tO(n+\Delta^{3/2})$ & -- \\
\hline
\raggedright Chen, Lian, Mao, and Zhang~\cite{clmzfocs24}$^{\triangle}$ &
$\tO(n+\Delta^{5/4})$ & -- \\
\hline
\textbf{This work} & $\Oh(n + \Delta^{1+o(1)})$ & -- \\
\hline
\end{tabular}
\caption{Pseudopolynomial-time algorithms for \textsc{Subset Sum} (possibly randomized).
Entries marked with $^{\triangle}$ require distinct $a_1,\dots,a_n$. Throughout this paper, $\tO(T)$ denotes $\Oh(T\cdot (\log T)^{\Oh(1)})$.}
\label{tab:intro-fw}
\end{table}

Using fast Fourier transform (FFT) with divide and conquer, Koiliaris and Xu \cite{KoiliarisX17proc,KoiliarisX19} solved \textsc{Subset Sum} in
$\tO(n+t \sqrt{n} )$ deterministic time. 
This was improved in a landmark result of Bringmann \cite{Bringmann17} to an $\tO(n+t)$-time randomized algorithm  (derandomized by Chan \cite{chan26}).

Recent advances in pseudopolynomial-time \textsc{Subset Sum} algorithms came from a blend of proximity bounds (e.g., \cite{EisenbrandW20}),
FFT, and additive combinatorial structure (e.g., \cite{BringmannW21,GalilM91}).
The previous state of the art was an
$\tO(n+\Delta^{3/2})$-time algorithm by Chen, Lian, Mao, and Zhang~\cite{ChenLMZsoda24} (and an independent unpublished note \cite{jinunpublishedknapsack}).
Assuming $a_1,\dots,a_n$ are distinct, the same authors further improved the time complexity to $\tO(n+\Delta^{5/4})$ \cite{clmzfocs24,chen2026journal}. They also achieved $\tO(n+\sqrt{\Delta t})$ time without the distinctness assumption.
See \cref{tab:intro-fw} for a summary.

\begin{samepage}
Despite the exciting progress, the answer to the following question remains elusive:
\begin{quote}
   \centering
    \itshape
Main open problem: Can \textup{\textsc{Subset Sum}} be solved in $\tO(n + \Delta)$ time?
\end{quote}
\end{samepage}
This question has
been repeatedly asked in the literature on fine-grained algorithms for combinatorial optimization, e.g., \cite{AxiotisBJTW19,
BringmannW21,
icalp21,
AbboudBHS22jcss,
BringmannC22,
Cassis_2024,
Bringmann24,Jin24,
ChenLMZstoc24partition,
FischerW24,
clmzfocs24,
clmznotes,
linchensurvey}, and was described as ``the most important open question in this line of research'' in a survey talk given by Bringmann in 2021\footnote{Link: \url{https://youtu.be/lTw0flxXuKE?t=2473}}.

Under the Strong
Exponential Time Hypothesis (SETH) \cite{ImpagliazzoPZ01}, 
\textsc{Subset Sum} cannot be solved in $\tO(n + \Delta^{0.99})$ time. In fact, even a $2^{o(n)}\cdot t^{1-\delta}$-time algorithm (for any positive constant $\delta$) would refute SETH, as shown by Abboud, Bringmann, Hermelin, and Shabtay
\cite{AbboudBHS22}.

\subsection{Our result}

We \emph{almost} (up to a subpolynomial factor) answer the main open problem and achieve SETH-optimality. 
With little extra effort, our algorithm solves the following slightly more general problem (also called \textsc{Bounded Subset Sum}), and can find a witness when one exists.

\defproblem{prob:single-constraint-ilp}{\textsc{Single-Constraint Bounded Integer Feasibility} (search version)}{Integers $a_i\in [-\Delta,\Delta]$ and $u_i \in [1,U]$ for $1\le i\le n$, and an integer $b$ ($\lvert b\rvert\le nU \Delta$).}{Find a solution to $\sum_{i=1}^n a_i x_i = b, x_i\in \{0,1,\dots,u_i\}$, or report that none exists.}

\begin{theorem}
    \label{thm:main-main}
    There is a deterministic algorithm that solves
    \cref{prob:single-constraint-ilp} (which generalizes {\rm \textsc{Subset Sum}}) in $\Oh(n+\Delta^{1+o(1)})$ time (in the word-RAM model with $\Theta(\log(nU \Delta))$-bit words).
\end{theorem}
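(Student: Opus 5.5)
We plan to reduce \cref{prob:single-constraint-ilp} to a structured core and then exploit additive structure. \textbf{Step 1 (normalization and proximity).} First, by grouping equal coefficients we may assume at most $2\Delta+1$ distinct values $a$, each with a multiplicity bound $u_a\le nU$. This takes $\Oh(n)$ time by bucketing on the value. Next we solve the LP relaxation greedily: sort by sign, then fill. Proximity bounds in the style of Eisenbrand--Weismantel then show that some optimal integral solution differs from the greedy fractional one by at most $\Oh(\Delta)$ units of total $\ell_1$ mass. After subtracting the greedy prefix, the problem becomes: choose, for each value $a\in[-\Delta,\Delta]$, a small multiplicity, so as to hit a residual target of magnitude $\Oh(\Delta^2)$. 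Explicitly keeping only $\Oh(\Delta)$ copies per value, we are left with a multiset in which every partial sum that matters is confined to a window of size $\poly(\Delta)$.

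\textbf{Step 2 (dichotomy via the inverse theorem).} Here we invoke the Conlon--Fox--Pham inverse theorem. Either the multiset of available items has a large, dense set of subset sums, or its elements are mostly concentrated in a small-dimensional generalized arithmetic progression. In the dense case, the set of subset sums contains a long arithmetic progression (or all multiples of some $d$) around its middle. The residual target can then be reached by a small ``correction'' subset computed with FFT-based sumset convolution in $\Delta^{1+o(1)}$ time, with Bringmann-style color coding derandomized as in Chan's work. In the structured case, most items lie in a set of the form $d\Z$ (or more generally a lattice), with only few exceptional items. We quotient by $d$, recursing on the reduced instance of size $\Delta/d$ together with the exceptional items handled by explicit convolution of their $\Oh(\Delta/d)$ residues modulo $d$. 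The parameters are chosen so that each recursion level loses only a $\Delta^{o(1)}$ factor, over $\Oh(\log\Delta/\log\log\Delta)$-type depth.

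\textbf{Step 3 (economic cap covering, quotient variant).} The subtle part is that the target lies near the boundary of the feasible region, where greedy ``middle'' arguments fail. Here the relevant geometry concerns the convex hull of partial sums, parametrized by how many items of each value are taken. We would cover the caps near the boundary by $\Delta^{o(1)}$ many pieces, in the spirit of economic cap covering for polytope approximation. Within each piece the instance is ``locally central'' after quotienting by the lattice generated by the items in that cap, so Step 2 applies. Witness recovery follows by standard backtracking through the convolutions with witness-finding FFT, which is deterministic up to $\polylog$ overhead.

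The \textbf{main obstacle} is Step 3: proving that caps can be covered economically in the quotient setting, with the number of pieces times the per-piece cost totaling $\Delta^{1+o(1)}$. I would first try a multiscale decomposition of the distance of the target from the boundary into dyadic scales. At each scale I would apply the inverse theorem to the items effective at that scale, and bound the total cost by a geometric-type sum.
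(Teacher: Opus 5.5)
\textbf{There is a genuine gap: Steps~2 and~3 misread what the Conlon--Fox--Pham (CFP) theorem gives and omit the ideas that actually yield $\Delta^{1+o(1)}$.} Your Step~1 matches the paper's proximity reduction (\cref{lem:bounded-ilp-reduction}).

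\textbf{Step~2 does not cover the hard case.} CFP is not a dichotomy between ``dense sums containing a long AP'' and ``items in $d\Z$''. It always returns a proper GAP $f(H\cap\Z^d)$ of some rank $d\le d_{\mathrm{max}}$, containing all but $\tO(s)$ items, together with a small $s$-generating subset (\cref{thm:cfp-intro}). The entire difficulty is the case $d\ge 2$. There, the sums of the embedded items are $f$-images of lattice points of a $d$-dimensional zonotope $\caZ(X)$. Every sufficiently deep lattice point is attainable by \cref{lem:innerpoints}. The unknown sums come from a $d$-dimensional boundary shell whose $f$-image spreads over the whole value range. So $\Sigma(A)$ can have holes throughout even when $\gcd(A)=1$ (cf.\ \cref{fig:overview-positive}). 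Neither of your branches applies: there is no filled ``middle'' to patch with a small correction, and dividing out a common divisor makes no progress. Your dichotomy is essentially the one-dimensional-progression paradigm of earlier algorithms. That paradigm stalls once the number of distinct items is $o(\sqrt{\Delta})$, which is exactly the barrier the paper has to get past.

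\textbf{Step~3 uses the wrong space and the wrong quotient.} The hull of partial sums ``parametrized by how many items of each value are taken'' lives in dimension $\Theta(\Delta)$, or collapses to an interval. Cap coverings are useless there, since their constants blow up with dimension. The paper instead works in the constant-dimensional lattice $\Gamma$ supplied by CFP. It quotients by $\Lambda=\Gamma\cap\ker f$, which identifies lattice points of equal value, not by the lattice generated by the items of a cap.

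The key estimate is that the quotient shell $\pi_\Lambda(\caZ(X))\setminus\pi_\Lambda(\caZ(X)\ominus|X'|H)$ has volume $\Oh_d(\lambda s\Delta)$ in units of $\det\Gamma$. This holds only under the relative fatness condition $|X|H\subseteq\lambda\caZ_0(X)$. What must be economical is the total quotient volume of the caps, not their number, which is generally not $\Delta^{o(1)}$. Inside a cap one does not re-enter Step~2. Instead one computes all its subset sums directly, in time near-linear in the number of values it contains:
\begin{enumerate}
\item move an apex to the origin, so that all useful items lie in a halfspace;
\item reparametrize via \cref{lem:local-properization};
\item split the items by orthant;
\item run a multidimensional Bringmann routine.
\end{enumerate}

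\textbf{Nothing in your plan handles failure of relative fatness.} When fatness fails, the shell can be almost the entire zonotope. The paper repeatedly applies CFP and peels off relatively fat groups (\cref{lem:main-classify}), then solves each group separately. It merges the $\Oh_d(\log\Delta)$ groups with target-driven interval restrictions, so that each merge costs $\tO_d(s\Delta)$. A dyadic decomposition by distance from the boundary does not address this anisotropy. As a result, your claimed $\Delta^{o(1)}$ loss per level is unsupported.
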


The time complexity is $\Oh(n + C_d \Delta^{1+\Oh(1/d)})$ for a parameter $d$ as discussed below, and a  computable function $C_d$ of $d$. Letting $d$ grow sufficiently slowly with $\Delta$ yields \cref{thm:main-main}.

Our approach for \textsc{Subset Sum} lies at the intersection of additive combinatorics and computational geometry.
We use the inverse theorem of 
 Conlon, Fox, and Pham \cite{conlon2023homogeneous} 
 to model (almost all of) the input integers $a_1,\dots,a_n$ as elements of a generalized arithmetic progression (GAP) $f(B\cap \Z^d)$ with $f\colon \R^d \to \R$ linear and a box $B\subseteq \R^d$, such that, writing $a_i = f(\bfx_i)$,  $\bfx_i \in B\cap \Z^d$, every integer point deep inside the zonotope $Z= \{\sum_{i} \eps_i \bfx_i : \eps_i \in [0,1]\}$ must be expressible as a subset sum of $(\bfx_i)_i$.

Then, the remaining possible subset sums correspond to points lying in a thin shell near the boundary of the zonotope $Z$.
 To compute them, we follow the polytope approximation literature (e.g., \cite{aryafonsecamount2017,AryaFM23}) to efficiently cover the shell by a collection of caps.
Due to convexity, we can compute the subset sums in each cap in near-linear time by adapting Bringmann's algorithm~\cite{Bringmann17,chan26}.
 The total time is thus near-linear in the shell volume.

To keep the shell volume small,
we remove the redundancy caused by the non-injectivity of $f$: 
define the lattice $\Lambda = \ker(f)\cap \Z^d$, and
measure shell volume in the quotient space $\R^d/\Lambda$. We then develop and apply a quotient variant of cap covering.
 We also have to control the thickness of the shell when $Z$ is not fat. This is done by repeatedly applying \cite{conlon2023homogeneous} to partition the input integers, so that groups with thin zonotopes also have thin shells.
  See \cref{sec:overview} for a more detailed overview. 
 
  \paragraph*{Independent and concurrent work.} Independently and concurrently to our work, Chen, Mao, and Zhang \cite{chen2026subsetsumintegerprogramming} (made public on arXiv on Oct 12, 2026)  obtained the same result as \cref{thm:main-main}.\footnote{For proof of independence, see \url{https://github.com/ce-jin/subset-sum-draft/blob/main/main.pdf} for a mathematically complete draft of this paper uploaded to GitHub on Oct 6, 2026, and shared to the authors of \cite{chen2026subsetsumintegerprogramming} on Oct 12, 2026 for coordination.}
  Both papers share the ingredients of the Conlon--Fox--Pham theorem and Macbeath regions, but the implementation strategies differ: Roughly speaking, our approach partitions the input items into groups and merges the answers bottom-up, while they use a recursive approach with a more top-down flavor. 
  They also generalized the result to $d$-dimensional \textsc{Subset Sum} (and consequently, integer linear programming). 
\subsection{Related work and open questions}
\label{subsec:relatedwork}
It remains an open question to improve our algorithm to $\tO(n+\Delta)$ time. To tackle it, one may start with the \textsc{Partition} problem, which is the special case of \textsc{Subset Sum} with $t=\frac{1}{2} \sum_{i=1}^n a_i$.
Analogous questions for several related problems have already been resolved:
\textsc{Unbounded Subset Sum}\footnote{\textsc{Unbounded Subset Sum} asks whether there exists $(x_1,\ldots,x_n)\in \Nonneg^{n}$ such that $\sum_{i=1}^n a_i x_i =t$.} can be solved in $\tO(n+\Delta)$ time \cite{JansenR23}.
\textsc{Modular Subset Sum}\footnote{\textsc{Modular Subset Sum} 
asks whether there exists $(x_1,\ldots,x_n)\in \{0,1\}^n$ such that $\sum_{i=1}^n a_i x_i \equiv t \pmod{m}$.} can be solved in $\tO(n+m)$ time \cite{AxiotisBJTW19,AxiotisBBJNTW21,CardinalI21,BringmannN21icalp,Potepa21,le2026deterministiclineartimemodularsubset}.
Chen, Lian, Mao, and Zhang \cite{ChenLMZstoc24partition,clmznotes} gave a  $(1-\eps)$-approximation algorithm for \textsc{Partition} in $\tO(n + 1/\eps)$ time.

For \textsc{0-1 Knapsack},
 Bringmann \cite{Bringmann24} and Jin \cite{Jin24} gave $\tO(n + \Delta^{2})$-time exact algorithms, and Chen, Lian, Mao, and Zhang \cite{clmzknapsackstoc24} and Mao \cite{xiaomaofptas} gave $\tO(n + 1/\eps^{2})$-time approximation algorithms. See the survey \cite{linchensurvey}, and more recent simplifications and advances \cite{vaughn2026simpleralgorithmsknapsacksubset,alman2026trulysubquadratic3sumtruly}.

We are optimistic that our geometric techniques will lead to further progress in combinatorial optimization, especially for pseudopolynomial-time algorithms.
Notably, several previous papers identified the lack of an $\tO(n+\Delta)$-time 
\textsc{Subset Sum} algorithm as a barrier for resolving their open problems:
Bringmann and
Wellnitz \cite[Section~6]{BringmannW21} on dense \textsc{Subset Sum} instances in the super-linear-time regime, 
Fischer and Wennmann \cite[Page 4]{FischerW24} on minimizing tardy processing time, and Chen, Hu, Mao, and Zhang \cite[Section~1.1]{ChenHMZ26} on removing the nondegeneracy assumption in their algorithm for solving multidimensional dense \textsc{Subset Sum} instances.

Bringmann's $\tO(n+t)$-time algorithm \cite{Bringmann17,chan26} actually computes \emph{all} attainable subset sums in $[0,t]$.
This all-target feature might seem impossible in $\tO(n+\Delta)$ time, since the output size may well exceed $\Delta$.
Nevertheless, the output set is actually expressible as the union of $\Oh(\Delta \log(n \Delta))$ arithmetic progressions of the same common difference (\cref{lem:aprepresentation}).
Can we efficiently compute this compact representation? Such an algorithm would have positive implications for W\k{e}grzycki's \textsc{Bounded Support Subset Sum} question \cite[Problem~2]{AlgUWOpenProblems25};
see \cref{app:subset-sum-representation} for details.

\textsc{Subset Sum} has been studied in the exponential-time regime and in the average-case setting as well; see e.g., \cite{horowitz1974computing,nederlofsurvey,ye2026subsetsumpartialmatch,oai2,oai1,LagariasO85,JouxW26} and the references therein.

\begin{table}[ht]
\centering
\renewcommand{\arraystretch}{1.15}
\begin{tabular}{|p{0.42\linewidth}|p{0.14\linewidth}|p{0.36\linewidth}|}
\hline
\textbf{Reference} & \textbf{Time} & \textbf{Assumptions (in addition to $^\star$)} \\
\hline
\raggedright Chaimovich, Freiman, and Galil~\cite{cfgalgo} & $\tO(n + \Delta^2/n^2)$ &
 $t\gg \Delta^{3/2} \sqrt{\log \Delta}$ \\
\hline
\raggedright Freiman~\cite{freimanalgo} & $\Oh_{\eps}(n)$ &
$t\gg \Delta^{5/3}(\log \Delta)^{4/3}$,
$S\ge \eps n \Delta$ \\ \hline
\raggedright Chaimovich~\cite{chaimovichalgo} & $\Oh(n^{7/4})$ &
$n\gg \sqrt{\Delta \log \Delta}$,
$\lvert t-S/2\rvert\ll n^2$ \\
\hline
\raggedright Galil and Margalit~\cite{GalilM91} & $\tO(n + \Delta^2/n^2)$ & $nt\gg \sqrt{\Delta}\,S \log \Delta$\\
\hline
\raggedright Bringmann and Wellnitz~\cite{BringmannW21} & $\tO(n)$ & $n^2 t\gg \Delta S \log(2n)$\\
\hline
\end{tabular}
\caption{A non-exhaustive list of previous algorithms for \textsc{Subset Sum} (decision version)  under the following assumptions$^\star$: the input integers $a_1,\dots,a_n$ are distinct, and $t\le S/2$, where $S\coloneqq \sum_{i=1}^n a_i$.
Here $f\gg g$ means $f\ge Cg$ for an absolute positive constant $C$.}
\label{tab:intro-dense-distinct}
\end{table}
    \section{Overview}
\label{sec:overview}
\newcounter{overviewpanel}[figure]
\renewcommand{\theoverviewpanel}{\thefigure(\alph{overviewpanel})}
\renewcommand{\theHoverviewpanel}{\theHfigure.\arabic{overviewpanel}}
\crefalias{overviewpanel}{figure}
\definecolor{ovblue}{RGB}{54,112,153}
\definecolor{ovamber}{RGB}{220,166,48}
\definecolor{ovgray}{RGB}{126,132,137}
\tikzset{
  ov body/.style={draw=ovblue,line width=.8pt,fill=ovblue!17},
  ov local/.style={draw=ovamber!85!black,line width=.8pt,fill=ovamber!32},
  ov shell/.style={draw=ovamber!85!black,line width=.8pt,fill=ovamber!23},
  ov guide/.style={draw=ovgray!70,line width=.45pt},
  ov cut/.style={draw=ovamber,line width=.8pt,dashed},
  ov arrow/.style={draw=ovgray!90,line width=.65pt,->},
  ov point/.style={fill=black,draw=none},
  ov figure/.style={x=1cm,y=1cm,font=\footnotesize,text=black,
    line cap=round,line join=round,>=stealth},
  ov subset figure/.style={ov figure,font=\normalsize,
    ov arrow/.append style={line width=.975pt}},
  ov subset grid/.style={x=.2618cm,y=.2618cm},
  ov subset boundary/.style={draw=ovamber!85!black,line width=.975pt,fill=white},
  ov subset vector/.style={->,draw=ovblue,line width=.975pt},
  ov subset attained/.style={fill=ovblue,draw=none,radius=1.4pt},
  ov subset missing/.style={draw=ovamber!85!black,fill=white,line width=.65pt,
    radius=1.4pt}
}

\newcommand{\OverviewRandomOneDimension}{\relax
\begin{tikzpicture}[ov figure,font=\small]
  \fill[ovblue!17] ({12*13/70},-.18) rectangle ({12*57/70},.18);
  \draw[ov guide] (0,0)--(12,0);
  \foreach \a in {0,1,...,70}
    \fill[ovblue] ({12*\a/70},0) circle[radius=1.7pt];
  \foreach \a in {1,2,3,4,7,12,58,63,66,67,68,69}
    \draw[draw=ovamber!85!black,fill=white,line width=.5pt]
      ({12*\a/70},0) circle[radius=1.7pt];
  \foreach \a in {0,70}
    \node[below] at ({12*\a/70},-.2) {$\a$};
\end{tikzpicture}\relax
}

\newcommand{\OverviewFlattenedGeometry}{\relax
\begin{tikzpicture}[ov figure]
  \draw[ov arrow] (7,1.05)--(7,.43);
  \node[right] at (7.05,.75) {$f(x_1,x_2)=x_1+7x_2$};
  \draw[ov guide] (0,0)--(14,0);
  \foreach \a in {0,1,...,102}
    \fill[ovgray!65] ({14*\a/102},0) circle[radius=.6pt];
  \foreach \a in {0,9,15,16,17,22,23,24,25,26,31,32,33,37,38,39,
      40,41,42,45,46,47,48,49,53,54,55,56,57,60,61,62,
      63,64,65,69,70,71,76,77,78,79,80,85,86,87,93,102}
    \fill[ov subset attained] ({14*\a/102},0) circle;
  \foreach \a in {8,30,34,68,72,94}
    \draw[ov subset missing] ({14*\a/102},0) circle;
  \foreach \a in {0,102}
    \node[below] at ({14*\a/102},-.14) {$\a$};
\end{tikzpicture}\relax
}

\newcommand{\OverviewGenerating}{\relax
\begin{tikzpicture}[ov subset figure]
  \begin{scope}[shift={(.616,1.848)},ov subset grid]
    \draw[draw=ovgray!22,line width=.45pt,fill=white]
      (-3,-3) rectangle (3,3);
    \foreach \a in {-2,-1,...,2}
      \draw[draw=ovgray!22,line width=.45pt] (\a,-3)--(\a,3);
    \foreach \b in {-2,-1,...,2}
      \draw[draw=ovgray!22,line width=.45pt] (-3,\b)--(3,\b);
    \draw[ov arrow] (-3.7,0)--(4,0);
    \draw[ov arrow] (0,-3.7)--(0,4);
    \node[right] at (4,0) {$x_1$};
    \node[above] at (0,4) {$x_2$};
    \foreach \a/\b in {2/-3,3/-1,-3/2,0/3,1/3,2/3,3/3} {
      \draw[ov subset vector] (0,0)--(\a,\b);
      \fill[ov subset attained] (\a,\b) circle;
    }
    \fill[ov subset attained] (0,0) circle;
    \node[above left] at (-3,3) {$H$};
    \node at (0,-4.5) {$X'$};
  \end{scope}
  \begin{scope}[shift={(3.85,0)},ov subset grid]
    \draw[ov subset boundary] (-3,2)--(-1,-1)--(2,-3)--(5,-4)--(8,-1)--(10,2)--(11,5)--(11,8)--(9,11)--(6,13)--(3,14)--(0,11)--(-2,8)--(-3,5)--cycle;
    \fill[ovblue!17] (1,2) rectangle (7,8);
    \draw[ov arrow] (-3.8,0)--(11.8,0);
    \draw[ov arrow] (0,-4.8)--(0,14.8);
    \node[right] at (11.8,0) {$x_1$};
    \node[above] at (0,14.8) {$x_2$};
    \node at (7,16.3) {$\caZ(X')$ and $\Sigma(X')$};
    \foreach \a/\b in {3/-3,4/-3,5/-3,6/-3,1/-2,3/-2,4/-2,5/-2,6/-2,7/-2,0/-1,1/-1,
      2/-1,4/-1,-1/0,1/0,6/0,7/0,8/0,-2/1,-1/1,1/1,6/1,7/1,
      8/1,9/1,-2/2,-3/3,-2/3,-1/3,8/3,9/3,10/3,-3/4,-2/4,-1/4,
      8/4,9/4,10/4,-2/6,-1/6,0/6,9/6,10/6,11/6,-2/7,-1/7,0/7,
      9/7,10/7,11/7,10/8,-1/9,0/9,1/9,2/9,7/9,9/9,10/9,0/10,
      1/10,2/10,7/10,9/10,4/11,6/11,7/11,8/11,1/12,2/12,3/12,4/12,
      5/12,7/12,2/13,3/13,4/13,5/13}
      \draw[ov subset missing] (\a,\b) circle;
    \foreach \a/\b in {5/-4,2/-3,2/-2,-1/-1,3/-1,5/-1,6/-1,7/-1,8/-1,0/0,2/0,3/0,
      4/0,5/0,0/1,2/1,3/1,4/1,5/1,-3/2,-1/2,0/2,8/2,9/2,
      10/2,0/3,0/4,-3/5,-2/5,-1/5,0/5,8/5,9/5,10/5,11/5,8/6,
      8/7,-2/8,-1/8,0/8,8/8,9/8,11/8,3/9,4/9,5/9,6/9,8/9,
      3/10,4/10,5/10,6/10,8/10,0/11,1/11,2/11,3/11,5/11,9/11,6/12,
      6/13,3/14}
      \fill[ov subset attained] (\a,\b) circle;
    \foreach \a in {1,2,...,7}
      \foreach \b in {2,3,...,8}
        \fill[ov subset attained] (\a,\b) circle;
  \end{scope}
\end{tikzpicture}\relax
}

\newcommand{\OverviewPositiveGeometry}{\relax
\begin{tikzpicture}[ov subset figure]
  \begin{scope}[shift={(.4,2)},ov subset grid]
    \foreach \a in {1,2,3}
      \draw[draw=ovgray!22,line width=.45pt] (\a,0)--(\a,3);
    \foreach \b in {1,2,3}
      \draw[draw=ovgray!22,line width=.45pt] (0,\b)--(3,\b);
    \draw[ov arrow] (-.55,0)--(4,0);
    \draw[ov arrow] (0,-.55)--(0,4);
    \node[right] at (4,0) {$x_1$};
    \node[above left] at (0,4) {$x_2$};
    \node[below left,inner sep=2pt] at (0,0) {$O$};
    \node at (1.5,4.6) {$X$};
    \foreach \a/\b in {1/2,1/3,2/1,2/2,2/3,3/2} {
      \draw[ov subset vector] (0,0)--(\a,\b);
      \fill[ov subset attained] (\a,\b) circle;
    }
    \fill[ov subset attained] (0,0) circle;
  \end{scope}
  \node[align=center] at (.85,1.1)
    {$A=\{9,15,16,$\\$17,22,23\}$};
  \begin{scope}[shift={(3.4,.1)},ov subset grid]
    \draw[ov subset boundary] (0,0)--(2,1)--(5,3)--(7,5)--(9,8)--(10,10)--(11,13)--(9,12)--(6,10)--(4,8)--(2,5)--(1,3)--cycle;
    \draw[ov arrow] (-.8,0)--(12,0);
    \draw[ov arrow] (0,-.8)--(0,14);
    \node[right] at (12,0) {$x_1$};
    \node[above left] at (0,14) {$x_2$};
    \node[below left,inner sep=2pt] at (0,0) {$O$};
    \node at (6,15.2) {$\caZ(X)$ and $\Sigma(X)$};
    \foreach \a/\b in {1/1,2/4,5/9,6/4,9/9,10/12}
      \draw[ov subset missing] (\a,\b) circle;
    \foreach \a/\b in {0/0,2/1,1/2,2/2,3/2,1/3,2/3,3/3,4/3,5/3,3/4,4/4,5/4,2/5,3/5,4/5,5/5,6/5,7/5,3/6,4/6,5/6,6/6,7/6,4/7,5/7,6/7,7/7,8/7,4/8,5/8,6/8,7/8,8/8,9/8,6/9,7/9,8/9,6/10,7/10,8/10,9/10,10/10,8/11,9/11,10/11,9/12,11/13}
      \fill[ov subset attained] (\a,\b) circle;
  \end{scope}
\end{tikzpicture}\relax
}

\newcommand{\OverviewShellObstructions}{\relax
\begin{tikzpicture}[ov figure,scale=.85]
  \pgfmathsetmacro{\ovLatticeStep}{1.6/sqrt(130)}
  \pgfmathsetmacro{\ovKernelTilt}{-atan(3/11)}
  \pgfmathsetmacro{\ovLatticeOriginX}{.55*cos(\ovKernelTilt)+.8*sin(\ovKernelTilt)}
  \pgfmathsetmacro{\ovLatticeOriginY}{.55*sin(\ovKernelTilt)-.8*cos(\ovKernelTilt)}
  \def\ovObstructionD{
    (-3,0)--(-2,-2)--(0,-3)--(2,-2)--
    (3,0)--(2,2)--(0,3)--(-2,2)--cycle}
  \def\ovObstructionThinZ{
    (-24,0)--(-23,-2)--(-21,-3)--(21,-3)--
    (23,-2)--(24,0)--(24,1)--(23,3)--
    (21,4)--(-21,4)--(-23,3)--(-24,1)--cycle}
  \def\ovObstructionThinInner{
    (-21,0)--(21,0)--(21,1)--(-21,1)--cycle}
  \def\ovObstructionRoundZ{
    (-20,1)--(-19,-2)--(-17,-5)--(-14,-8)--
    (-11,-10)--(-8,-11)--(3,-11)--(6,-10)--
    (9,-8)--(12,-5)--(14,-2)--(15,1)--
    (15,12)--(14,15)--(12,18)--(9,21)--
    (6,23)--(3,24)--(-8,24)--(-11,23)--
    (-14,21)--(-17,18)--(-19,15)--(-20,12)--cycle}
  \def\ovObstructionRoundInner{
    (-17,1)--({-49/3},-1)--(-15,-3)--(-12,-6)--
    (-10,{-22/3})--(-8,-8)--(3,-8)--(5,{-22/3})--
    (7,-6)--(10,-3)--({34/3},-1)--(12,1)--
    (12,12)--({34/3},14)--(10,16)--(7,19)--
    (5,{61/3})--(3,21)--(-8,21)--(-10,{61/3})--
    (-12,19)--(-15,16)--({-49/3},14)--(-17,12)--cycle}
  \begin{scope}[shift={(2.9,2.8)}]
    \begin{scope}[shift={(0,{-.5*\ovLatticeStep})},x=\ovLatticeStep cm,y=\ovLatticeStep cm]
      \foreach \i in {-27,...,27}
        \foreach \j in {-23,...,21}
          \fill[ovgray!50] (\i,\j) circle[radius=.45pt];
      \begin{scope}[shift={(0,12)}]
        \draw[draw=ovgray!85,line width=.8pt,fill=ovgray!9] \ovObstructionD;
        \begin{scope}
          \clip \ovObstructionD;
          \foreach \i in {-3,...,3}
            \foreach \j in {-3,...,3}
              \fill[ovgray!50] (\i,\j) circle[radius=.45pt];
        \end{scope}
        \draw[draw=ovgray!85,line width=.8pt] \ovObstructionD;
        \node[right,inner sep=3pt] at (3,0) {$D$};
      \end{scope}
      \draw[ov shell] \ovObstructionThinZ;
      \draw[ov body] \ovObstructionThinInner;
      \begin{scope}
        \clip \ovObstructionThinZ;
        \foreach \i in {-24,...,24}
          \foreach \j in {-3,...,4}
            \fill[ovgray!50] (\i,\j) circle[radius=.45pt];
      \end{scope}
      \draw[draw=ovamber!85!black,line width=.8pt] \ovObstructionThinZ;
    \end{scope}
    \node at (-1.65,.94) {$\caZ(X)$};
    \draw[ov guide] (-1.65,.73)--(-1.65,{3.5*\ovLatticeStep});
    \node at (1.05,-.96) {$\caZ(X)\ominus D$};
    \draw[ov guide] (1.05,-.73)--(.9,{-.5*\ovLatticeStep});
  \end{scope}

  \begin{scope}[shift={(10.8,2.8)}]
    \begin{scope}[shift={(\ovLatticeOriginX,\ovLatticeOriginY)},x=\ovLatticeStep cm,y=\ovLatticeStep cm]
      \foreach \i in {-23,...,19}
        \foreach \j in {-17,...,27}
          \fill[ovgray!50] (\i,\j) circle[radius=.45pt];
      \draw[ov shell] \ovObstructionRoundZ;
      \draw[ov body] \ovObstructionRoundInner;
      \begin{scope}[even odd rule]
        \clip \ovObstructionRoundZ;
        \foreach \k in {-2,...,2}
          \clip[shift={({3*\k},{11*\k})}]
            (-60,-60) rectangle (60,60) \ovObstructionRoundInner;
        \fill[ovamber!85!black] (-30,-30) rectangle (30,30);
      \end{scope}
      \begin{scope}
        \clip \ovObstructionRoundZ;
        \foreach \i in {-21,...,16}
          \foreach \j in {-12,...,25}
            \fill[ovgray!50] (\i,\j) circle[radius=.45pt];
      \end{scope}
      \draw[draw=ovamber!85!black,line width=.8pt] \ovObstructionRoundZ;
    \end{scope}
    \begin{scope}[rotate=\ovKernelTilt]
      \node at (-.45,1.05) {$\caZ(X)\ominus D$};
      \draw[->,draw=ovgray!90,line width=1.1pt] (.55,-.8)--(.55,.8);
      \fill[ovgray!90] (.55,-.8) circle[radius=1.4pt];
      \node[right] at (.55,0) {$\boldsymbol{\ell}$};
      \draw[ov arrow] (-1.5,-2.72)--(1.8,-2.72);
      \node[right] at (1.8,-2.72) {$f$};
    \end{scope}
  \end{scope}
\end{tikzpicture}\relax
}

\newcommand{\OverviewCapMacbeath}{\relax
\begin{tikzpicture}[ov figure,scale=.68]
  \fill[ovgray!8] (0,0) ellipse[x radius=2.35,y radius=1.45];
  \draw[draw=ovamber!65!black,line width=.85pt]
    (1.4,1.746903)
    arc[start angle=53.434263,end angle=-53.434263,x radius=3.525,y radius=2.175]
    arc[start angle=233.434263,end angle=126.565737,x radius=3.525,y radius=2.175]
    --cycle;
  \draw[ov local]
    (1.4,1.164602)
    arc[start angle=53.434263,end angle=-53.434263,x radius=2.35,y radius=1.45]
    arc[start angle=233.434263,end angle=126.565737,x radius=2.35,y radius=1.45]
    --cycle;
  \draw[draw=ovgray!85,line width=.8pt]
    (0,0) ellipse[x radius=2.35,y radius=1.45];
  \fill[ov point] (1.4,0) circle[radius=1.5pt];
  \node[left,fill=ovamber!32,inner sep=1.5pt] at (1.34,0) {$\bfx$};
  \node[above] at (-1.6,1.1) {$K$};
  \node[right] at (3.15,1.15) {$M_K^{3/2}(\bfx)$};
  \draw[ov guide] (3.12,1.15)--(2.6124,.7439);
  \node[right] at (3.15,-.75) {$M_K^1(\bfx)$};
  \draw[ov guide] (3.12,-.75)--(2.1298,-.6128);
\end{tikzpicture}\relax
}

\newcommand{\OverviewEconomicCovering}{\relax
\begingroup
\colorlet{ovcovershell}{ovamber!55}
\colorlet{ovcoverdark}{ovamber!35!black}
\colorlet{ovcoverlight}{ovamber!95}
\colorlet{ovcovermid}{ovamber!70!black}
\def\ovCoverZ{(1.04,0.49)--(1.038173,0.531835)--(1.032708,0.573351)--(1.023644,0.614233)--
    (1.011052,0.65417)--(0.995028,0.692857)--(0.975692,0.73)--(0.953193,0.765317)--
    (0.927701,0.798538)--(0.899411,0.829411)--(0.868538,0.857701)--(0.835317,0.883193)--
    (0.8,0.905692)--(0.762857,0.925028)--(0.72417,0.941052)--(0.684233,0.953644)--
    (0.643351,0.962708)--(0.601835,0.968173)--(0.56,0.97)--(-0.56,0.97)--
    (-0.601835,0.968173)--(-0.643351,0.962708)--(-0.684233,0.953644)--(-0.72417,0.941052)--
    (-0.762857,0.925028)--(-0.8,0.905692)--(-0.835317,0.883193)--(-0.868538,0.857701)--
    (-0.899411,0.829411)--(-0.927701,0.798538)--(-0.953193,0.765317)--(-0.975692,0.73)--
    (-0.995028,0.692857)--(-1.011052,0.65417)--(-1.023644,0.614233)--(-1.032708,0.573351)--
    (-1.038173,0.531835)--(-1.04,0.49)--(-1.04,-0.49)--(-1.038173,-0.531835)--
    (-1.032708,-0.573351)--(-1.023644,-0.614233)--(-1.011052,-0.65417)--(-0.995028,-0.692857)--
    (-0.975692,-0.73)--(-0.953193,-0.765317)--(-0.927701,-0.798538)--(-0.899411,-0.829411)--
    (-0.868538,-0.857701)--(-0.835317,-0.883193)--(-0.8,-0.905692)--(-0.762857,-0.925028)--
    (-0.72417,-0.941052)--(-0.684233,-0.953644)--(-0.643351,-0.962708)--(-0.601835,-0.968173)--
    (-0.56,-0.97)--(0.56,-0.97)--(0.601835,-0.968173)--(0.643351,-0.962708)--
    (0.684233,-0.953644)--(0.72417,-0.941052)--(0.762857,-0.925028)--(0.8,-0.905692)--
    (0.835317,-0.883193)--(0.868538,-0.857701)--(0.899411,-0.829411)--(0.927701,-0.798538)--
    (0.953193,-0.765317)--(0.975692,-0.73)--(0.995028,-0.692857)--(1.011052,-0.65417)--
    (1.023644,-0.614233)--(1.032708,-0.573351)--(1.038173,-0.531835)--(1.04,-0.49)--cycle}
\def\ovCoverInner{(0.97,0.488473)--(0.968373,0.525728)--(0.963706,0.561184)--(0.955965,0.596099)--
    (0.945211,0.630205)--(0.931526,0.663245)--(0.915013,0.694967)--(0.895798,0.725128)--
    (0.874027,0.7535)--(0.849867,0.779867)--(0.8235,0.804027)--(0.795128,0.825798)--
    (0.764967,0.845013)--(0.733245,0.861526)--(0.700205,0.875211)--(0.666099,0.885965)--
    (0.631184,0.893706)--(0.595728,0.898373)--(0.558473,0.9)--(-0.558473,0.9)--
    (-0.595728,0.898373)--(-0.631184,0.893706)--(-0.666099,0.885965)--(-0.700205,0.875211)--
    (-0.733245,0.861526)--(-0.764967,0.845013)--(-0.795128,0.825798)--(-0.8235,0.804027)--
    (-0.849867,0.779867)--(-0.874027,0.7535)--(-0.895798,0.725128)--(-0.915013,0.694967)--
    (-0.931526,0.663245)--(-0.945211,0.630205)--(-0.955965,0.596099)--(-0.963706,0.561184)--
    (-0.968373,0.525728)--(-0.97,0.488473)--(-0.97,-0.488473)--(-0.968373,-0.525728)--
    (-0.963706,-0.561184)--(-0.955965,-0.596099)--(-0.945211,-0.630205)--(-0.931526,-0.663245)--
    (-0.915013,-0.694967)--(-0.895798,-0.725128)--(-0.874027,-0.7535)--(-0.849867,-0.779867)--
    (-0.8235,-0.804027)--(-0.795128,-0.825798)--(-0.764967,-0.845013)--(-0.733245,-0.861526)--
    (-0.700205,-0.875211)--(-0.666099,-0.885965)--(-0.631184,-0.893706)--(-0.595728,-0.898373)--
    (-0.558473,-0.9)--(0.558473,-0.9)--(0.595728,-0.898373)--(0.631184,-0.893706)--
    (0.666099,-0.885965)--(0.700205,-0.875211)--(0.733245,-0.861526)--(0.764967,-0.845013)--
    (0.795128,-0.825798)--(0.8235,-0.804027)--(0.849867,-0.779867)--(0.874027,-0.7535)--
    (0.895798,-0.725128)--(0.915013,-0.694967)--(0.931526,-0.663245)--(0.945211,-0.630205)--
    (0.955965,-0.596099)--(0.963706,-0.561184)--(0.968373,-0.525728)--(0.97,-0.488473)--cycle}
\expandafter\def\csname ovCoverCap0\endcsname{(1.04,0.49)--(1.038173,0.531835)--(1.032708,0.573351)--(1.023644,0.614233)--
    (1.011052,0.65417)--(0.995028,0.692857)--(0.975692,0.73)--(0.953193,0.765317)--
    (0.927701,0.798538)--(0.91,0.817856)--(0.91,-0.817856)--(0.927701,-0.798538)--
    (0.953193,-0.765317)--(0.975692,-0.73)--(0.995028,-0.692857)--(1.011052,-0.65417)--
    (1.023644,-0.614233)--(1.032708,-0.573351)--(1.038173,-0.531835)--(1.04,-0.49)--cycle}
\expandafter\def\csname ovCoverRegion0\endcsname{(1.067097,0.735)--(1.064358,0.797752)--(1.056159,0.860027)--(1.042564,0.92135)--
    (1.023676,0.981255)--(0.999639,1.039285)--(0.985805,1.06586)--(0.971971,1.039285)--
    (0.947934,0.981255)--(0.929046,0.92135)--(0.915451,0.860027)--(0.907252,0.797752)--
    (0.904513,0.735)--(0.904513,-0.735)--(0.907252,-0.797752)--(0.915451,-0.860027)--
    (0.929046,-0.92135)--(0.947934,-0.981255)--(0.971971,-1.039285)--(0.985805,-1.06586)--
    (0.999639,-1.039285)--(1.023676,-0.981255)--(1.042564,-0.92135)--(1.056159,-0.860027)--
    (1.064358,-0.797752)--(1.067097,-0.735)--cycle}
\expandafter\def\csname ovCoverCap1\endcsname{(1.039316,0.505658)--(1.038173,0.531835)--(1.032708,0.573351)--(1.023644,0.614233)--
    (1.011052,0.65417)--(0.995028,0.692857)--(0.975692,0.73)--(0.953193,0.765317)--
    (0.927701,0.798538)--(0.899411,0.829411)--(0.868538,0.857701)--(0.835317,0.883193)--
    (0.8,0.905692)--(0.762857,0.925028)--(0.72417,0.941052)--(0.684233,0.953644)--
    (0.643351,0.962708)--(0.601835,0.968173)--(0.575658,0.969316)--cycle}
\expandafter\def\csname ovCoverRegion1\endcsname{(1.127297,0.482821)--(1.115437,0.53632)--(1.096549,0.596225)--(1.072512,0.654256)--
    (1.043509,0.70997)--(1.00976,0.762945)--(0.971522,0.812777)--(0.929087,0.859087)--
    (0.882777,0.901522)--(0.832945,0.93976)--(0.77997,0.973509)--(0.724256,1.002512)--
    (0.666225,1.026549)--(0.60632,1.045437)--(0.552821,1.057297)--(0.564681,1.003798)--
    (0.583569,0.943894)--(0.607606,0.885863)--(0.63661,0.830148)--(0.670359,0.777173)--
    (0.708596,0.727341)--(0.751031,0.681031)--(0.797341,0.638596)--(0.847173,0.600359)--
    (0.900148,0.56661)--(0.955863,0.537606)--(1.013894,0.513569)--(1.073798,0.494681)--cycle}
\expandafter\def\csname ovCoverCap2\endcsname{(0.887856,0.84)--(0.868538,0.857701)--(0.835317,0.883193)--(0.8,0.905692)--
    (0.762857,0.925028)--(0.72417,0.941052)--(0.684233,0.953644)--(0.643351,0.962708)--
    (0.601835,0.968173)--(0.56,0.97)--(-0.56,0.97)--(-0.601835,0.968173)--
    (-0.643351,0.962708)--(-0.684233,0.953644)--(-0.72417,0.941052)--(-0.762857,0.925028)--
    (-0.8,0.905692)--(-0.835317,0.883193)--(-0.868538,0.857701)--(-0.887856,0.84)--cycle}
\expandafter\def\csname ovCoverRegion2\endcsname{(1.167059,0.917124)--(1.144285,0.92898)--(1.086255,0.953017)--(1.02635,0.971904)--
    (0.965027,0.985499)--(0.902752,0.993698)--(0.84,0.996438)--(-0.84,0.996438)--
    (-0.902752,0.993698)--(-0.965027,0.985499)--(-1.02635,0.971904)--(-1.086255,0.953017)--
    (-1.144285,0.92898)--(-1.167059,0.917124)--(-1.144285,0.905269)--(-1.086255,0.881232)--
    (-1.02635,0.862344)--(-0.965027,0.848749)--(-0.902752,0.84055)--(-0.84,0.83781)--
    (0.84,0.83781)--(0.902752,0.84055)--(0.965027,0.848749)--(1.02635,0.862344)--
    (1.086255,0.881232)--(1.144285,0.905269)--cycle}
\begin{tikzpicture}[ov figure]
  \foreach \panel/\offset in {0/1.8,1/5.5} {
    \begin{scope}[shift={(\offset,0)},x=1.4cm,y=1.4cm]
      \fill[ovcovershell] \ovCoverZ;
      \fill[ovblue!17] \ovCoverInner;
      \foreach \sx/\sy in {1/1,-1/1} {
        \begin{scope}[xscale=\sx,yscale=\sy]
          \ifnum\panel=0
            \draw[draw=ovcoverdark,line width=.65pt]
              \csname ovCoverCap0\endcsname;
          \else
            \draw[draw=ovcoverdark,line width=.65pt]
              \csname ovCoverRegion0\endcsname;
            \fill[ovcoverdark] (0.985805,0) circle[radius=1pt];
          \fi
        \end{scope}
      }
      \foreach \sx/\sy in {1/1,-1/1,-1/-1,1/-1} {
        \begin{scope}[xscale=\sx,yscale=\sy]
          \ifnum\panel=0
            \draw[draw=ovcovermid,line width=.65pt]
              \csname ovCoverCap1\endcsname;
          \else
            \draw[draw=ovcovermid,line width=.65pt]
              \csname ovCoverRegion1\endcsname;
            \fill[ovcovermid] (0.840059,0.770059) circle[radius=1pt];
          \fi
        \end{scope}
      }
      \foreach \sx/\sy in {1/1,1/-1} {
        \begin{scope}[xscale=\sx,yscale=\sy]
          \ifnum\panel=0
            \draw[draw=ovcoverlight,line width=.65pt]
              \csname ovCoverCap2\endcsname;
          \else
            \draw[draw=ovcoverlight,line width=.65pt]
              \csname ovCoverRegion2\endcsname;
            \fill[ovcoverlight] (0,0.917124) circle[radius=1pt];
          \fi
        \end{scope}
      }
      \draw[draw=ovcovershell!75!black,line width=.4pt,dash pattern=on 2pt off 1.5pt] \ovCoverInner;
      \draw[draw=ovamber!65!black,line width=.65pt] \ovCoverZ;
      \node at (0,0) {$\caZ(X)\ominus D$};
    \end{scope}
  }
  \node at (1.8,-1.9) {(a) Caps $Q_i$};
  \node at (5.5,-1.9) {(b) $M_{\caZ(X)}^{3/2}(\bfx_i)$};
\end{tikzpicture}\relax
\endgroup
}

\newcommand{\OverviewClassClipping}{\relax
\begingroup
\def\ovClassZ{(0.612,0)--(0.442,0.442)--(0,0.612)--(-0.442,0.442)--
    (-0.612,0)--(-0.442,-0.442)--(0,-0.612)--(0.442,-0.442)--cycle}
\def\ovClassK{(0,0) circle[radius=.2622]}
\def\ovClassCore{(-0.331075,0)--(-0.23911,-0.23911)--(0,-0.331075)--(0.23911,-0.23911)--
    (0.331075,0)--(0.23911,0.23911)--(0,0.331075)--(-0.23911,0.23911)--cycle}
\def\ovClassSum{(-5.212,0)--(-4.938098,-0.712144)--(-4.930232,-0.718909)--(-4.908984,-0.734993)--
    (-4.885888,-0.750532)--(-4.860942,-0.765568)--(-4.834145,-0.780136)--(-4.805497,-0.794263)--
    (-4.774997,-0.807974)--(-4.742643,-0.821288)--(-4.300643,-0.991288)--(-4.266432,-1.004222)--
    (-4.230364,-1.016792)--(-4.192434,-1.029009)--(-4.152641,-1.040885)--(-4.110981,-1.052428)--
    (-4.067449,-1.063648)--(-4.02204,-1.074551)--(-3.974749,-1.085143)--(-3.925569,-1.095429)--
    (-3.874492,-1.105416)--(-3.82151,-1.115106)--(-3.766612,-1.124503)--(-3.709787,-1.133612)--
    (-3.651022,-1.142433)--(-3.590302,-1.150971)--(-3.52761,-1.159226)--(-3.462926,-1.167202)--
    (-3.396228,-1.1749)--(-3.327491,-1.18232)--(-3.256686,-1.189465)--(-3.18378,-1.196336)--
    (-3.108736,-1.202933)--(-3.03151,-1.209258)--(-2.952055,-1.21531)--(-2.870313,-1.221091)--
    (-2.786221,-1.226602)--(-2.699703,-1.231842)--(-2.610673,-1.236812)--(-2.519032,-1.241513)--
    (-2.424661,-1.245944)--(-2.327422,-1.250106)--(-2.227152,-1.253999)--(-2.123655,-1.257623)--
    (-2.016697,-1.260979)--(-1.905989,-1.264066)--(-1.791177,-1.266884)--(-1.671813,-1.269434)--
    (-1.547325,-1.271715)--(-1.416959,-1.273728)--(-1.27969,-1.275473)--(-1.134063,-1.276949)--
    (-0.977896,-1.278157)--(-0.807614,-1.279096)--(-0.616531,-1.279767)--(-0.388468,-1.28017)--
    (0,-1.280304)--(0.388468,-1.28017)--(0.616531,-1.279767)--(0.807614,-1.279096)--
    (0.977896,-1.278157)--(1.134063,-1.276949)--(1.27969,-1.275473)--(1.416959,-1.273728)--
    (1.547325,-1.271715)--(1.671813,-1.269434)--(1.791177,-1.266884)--(1.905989,-1.264066)--
    (2.016697,-1.260979)--(2.123655,-1.257623)--(2.227152,-1.253999)--(2.327422,-1.250106)--
    (2.424661,-1.245944)--(2.519032,-1.241513)--(2.610673,-1.236812)--(2.699703,-1.231842)--
    (2.786221,-1.226602)--(2.870313,-1.221091)--(2.952055,-1.21531)--(3.03151,-1.209258)--
    (3.108736,-1.202933)--(3.18378,-1.196336)--(3.256686,-1.189465)--(3.327491,-1.18232)--
    (3.396228,-1.1749)--(3.462926,-1.167202)--(3.52761,-1.159226)--(3.590302,-1.150971)--
    (3.651022,-1.142433)--(3.709787,-1.133612)--(3.766612,-1.124503)--(3.82151,-1.115106)--
    (3.874492,-1.105416)--(3.925569,-1.095429)--(3.974749,-1.085143)--(4.02204,-1.074551)--
    (4.067449,-1.063648)--(4.110981,-1.052428)--(4.152641,-1.040885)--(4.192434,-1.029009)--
    (4.230364,-1.016792)--(4.266432,-1.004222)--(4.300643,-0.991288)--(4.742643,-0.821288)--
    (4.774997,-0.807974)--(4.805497,-0.794263)--(4.834145,-0.780136)--(4.860942,-0.765568)--
    (4.885888,-0.750532)--(4.908984,-0.734993)--(4.930232,-0.718909)--(4.938098,-0.712144)--
    (5.212,0)--(4.938098,0.712144)--(4.930232,0.718909)--(4.908984,0.734993)--
    (4.885888,0.750532)--(4.860942,0.765568)--(4.834145,0.780136)--(4.805497,0.794263)--
    (4.774997,0.807974)--(4.742643,0.821288)--(4.300643,0.991288)--(4.266432,1.004222)--
    (4.230364,1.016792)--(4.192434,1.029009)--(4.152641,1.040885)--(4.110981,1.052428)--
    (4.067449,1.063648)--(4.02204,1.074551)--(3.974749,1.085143)--(3.925569,1.095429)--
    (3.874492,1.105416)--(3.82151,1.115106)--(3.766612,1.124503)--(3.709787,1.133612)--
    (3.651022,1.142433)--(3.590302,1.150971)--(3.52761,1.159226)--(3.462926,1.167202)--
    (3.396228,1.1749)--(3.327491,1.18232)--(3.256686,1.189465)--(3.18378,1.196336)--
    (3.108736,1.202933)--(3.03151,1.209258)--(2.952055,1.21531)--(2.870313,1.221091)--
    (2.786221,1.226602)--(2.699703,1.231842)--(2.610673,1.236812)--(2.519032,1.241513)--
    (2.424661,1.245944)--(2.327422,1.250106)--(2.227152,1.253999)--(2.123655,1.257623)--
    (2.016697,1.260979)--(1.905989,1.264066)--(1.791177,1.266884)--(1.671813,1.269434)--
    (1.547325,1.271715)--(1.416959,1.273728)--(1.27969,1.275473)--(1.134063,1.276949)--
    (0.977896,1.278157)--(0.807614,1.279096)--(0.616531,1.279767)--(0.388468,1.28017)--
    (0,1.280304)--(-0.388468,1.28017)--(-0.616531,1.279767)--(-0.807614,1.279096)--
    (-0.977896,1.278157)--(-1.134063,1.276949)--(-1.27969,1.275473)--(-1.416959,1.273728)--
    (-1.547325,1.271715)--(-1.671813,1.269434)--(-1.791177,1.266884)--(-1.905989,1.264066)--
    (-2.016697,1.260979)--(-2.123655,1.257623)--(-2.227152,1.253999)--(-2.327422,1.250106)--
    (-2.424661,1.245944)--(-2.519032,1.241513)--(-2.610673,1.236812)--(-2.699703,1.231842)--
    (-2.786221,1.226602)--(-2.870313,1.221091)--(-2.952055,1.21531)--(-3.03151,1.209258)--
    (-3.108736,1.202933)--(-3.18378,1.196336)--(-3.256686,1.189465)--(-3.327491,1.18232)--
    (-3.396228,1.1749)--(-3.462926,1.167202)--(-3.52761,1.159226)--(-3.590302,1.150971)--
    (-3.651022,1.142433)--(-3.709787,1.133612)--(-3.766612,1.124503)--(-3.82151,1.115106)--
    (-3.874492,1.105416)--(-3.925569,1.095429)--(-3.974749,1.085143)--(-4.02204,1.074551)--
    (-4.067449,1.063648)--(-4.110981,1.052428)--(-4.152641,1.040885)--(-4.192434,1.029009)--
    (-4.230364,1.016792)--(-4.266432,1.004222)--(-4.300643,0.991288)--(-4.742643,0.821288)--
    (-4.774997,0.807974)--(-4.805497,0.794263)--(-4.834145,0.780136)--(-4.860942,0.765568)--
    (-4.885888,0.750532)--(-4.908984,0.734993)--(-4.930232,0.718909)--(-4.938098,0.712144)--cycle}
\def\ovClassSumCore{(-4.931075,0)--(-4.722454,-0.542415)--(-4.713499,-0.547283)--(-4.693726,-0.557033)--
    (-4.671327,-0.567102)--(-4.645678,-0.577657)--(-4.207218,-0.746296)--(-4.176911,-0.757754)--
    (-4.147015,-0.768173)--(-4.11474,-0.778568)--(-4.080136,-0.788896)--(-4.043249,-0.799116)--
    (-4.004116,-0.809202)--(-3.962775,-0.819129)--(-3.919254,-0.828876)--(-3.873571,-0.838431)--
    (-3.825747,-0.847782)--(-3.775804,-0.856916)--(-3.723742,-0.865828)--(-3.669575,-0.874511)--
    (-3.613306,-0.882957)--(-3.554932,-0.891165)--(-3.49445,-0.899129)--(-3.43185,-0.906848)--
    (-3.367126,-0.914318)--(-3.300258,-0.921537)--(-3.231222,-0.928503)--(-3.159998,-0.935215)--
    (-3.086554,-0.941672)--(-3.010851,-0.947872)--(-2.932849,-0.953813)--(-2.852491,-0.959497)--
    (-2.769722,-0.964921)--(-2.68447,-0.970084)--(-2.596649,-0.974987)--(-2.506167,-0.979628)--
    (-2.412906,-0.984007)--(-2.31673,-0.988124)--(-2.217478,-0.991977)--(-2.114956,-0.995567)--
    (-2.008931,-0.998894)--(-1.899118,-1.001956)--(-1.78516,-1.004753)--(-1.666611,-1.007285)--
    (-1.542899,-1.009552)--(-1.413268,-1.011554)--(-1.276695,-1.01329)--(-1.13172,-1.014759)--
    (-0.976159,-1.015963)--(-0.806431,-1.016899)--(-0.615839,-1.017568)--(-0.388191,-1.01797)--
    (0,-1.018104)--(0.388191,-1.01797)--(0.615839,-1.017568)--(0.806431,-1.016899)--
    (0.976159,-1.015963)--(1.13172,-1.014759)--(1.276695,-1.01329)--(1.413268,-1.011554)--
    (1.542899,-1.009552)--(1.666611,-1.007285)--(1.78516,-1.004753)--(1.899118,-1.001956)--
    (2.008931,-0.998894)--(2.114956,-0.995567)--(2.217478,-0.991977)--(2.31673,-0.988124)--
    (2.412906,-0.984007)--(2.506167,-0.979628)--(2.596649,-0.974987)--(2.68447,-0.970084)--
    (2.769722,-0.964921)--(2.852491,-0.959497)--(2.932849,-0.953813)--(3.010851,-0.947872)--
    (3.086554,-0.941672)--(3.159998,-0.935215)--(3.231222,-0.928503)--(3.300258,-0.921537)--
    (3.367126,-0.914318)--(3.43185,-0.906848)--(3.49445,-0.899129)--(3.554932,-0.891165)--
    (3.613306,-0.882957)--(3.669575,-0.874511)--(3.723742,-0.865828)--(3.775804,-0.856916)--
    (3.825747,-0.847782)--(3.873571,-0.838431)--(3.919254,-0.828876)--(3.962775,-0.819129)--
    (4.004116,-0.809202)--(4.043249,-0.799116)--(4.080136,-0.788896)--(4.11474,-0.778568)--
    (4.147015,-0.768173)--(4.176911,-0.757754)--(4.207218,-0.746296)--(4.645678,-0.577657)--
    (4.671327,-0.567102)--(4.693726,-0.557033)--(4.713499,-0.547283)--(4.722454,-0.542415)--
    (4.931075,0)--(4.722454,0.542415)--(4.713499,0.547283)--(4.693726,0.557033)--
    (4.671327,0.567102)--(4.645678,0.577657)--(4.207218,0.746296)--(4.176911,0.757754)--
    (4.147015,0.768173)--(4.11474,0.778568)--(4.080136,0.788896)--(4.043249,0.799116)--
    (4.004116,0.809202)--(3.962775,0.819129)--(3.919254,0.828876)--(3.873571,0.838431)--
    (3.825747,0.847782)--(3.775804,0.856916)--(3.723742,0.865828)--(3.669575,0.874511)--
    (3.613306,0.882957)--(3.554932,0.891165)--(3.49445,0.899129)--(3.43185,0.906848)--
    (3.367126,0.914318)--(3.300258,0.921537)--(3.231222,0.928503)--(3.159998,0.935215)--
    (3.086554,0.941672)--(3.010851,0.947872)--(2.932849,0.953813)--(2.852491,0.959497)--
    (2.769722,0.964921)--(2.68447,0.970084)--(2.596649,0.974987)--(2.506167,0.979628)--
    (2.412906,0.984007)--(2.31673,0.988124)--(2.217478,0.991977)--(2.114956,0.995567)--
    (2.008931,0.998894)--(1.899118,1.001956)--(1.78516,1.004753)--(1.666611,1.007285)--
    (1.542899,1.009552)--(1.413268,1.011554)--(1.276695,1.01329)--(1.13172,1.014759)--
    (0.976159,1.015963)--(0.806431,1.016899)--(0.615839,1.017568)--(0.388191,1.01797)--
    (0,1.018104)--(-0.388191,1.01797)--(-0.615839,1.017568)--(-0.806431,1.016899)--
    (-0.976159,1.015963)--(-1.13172,1.014759)--(-1.276695,1.01329)--(-1.413268,1.011554)--
    (-1.542899,1.009552)--(-1.666611,1.007285)--(-1.78516,1.004753)--(-1.899118,1.001956)--
    (-2.008931,0.998894)--(-2.114956,0.995567)--(-2.217478,0.991977)--(-2.31673,0.988124)--
    (-2.412906,0.984007)--(-2.506167,0.979628)--(-2.596649,0.974987)--(-2.68447,0.970084)--
    (-2.769722,0.964921)--(-2.852491,0.959497)--(-2.932849,0.953813)--(-3.010851,0.947872)--
    (-3.086554,0.941672)--(-3.159998,0.935215)--(-3.231222,0.928503)--(-3.300258,0.921537)--
    (-3.367126,0.914318)--(-3.43185,0.906848)--(-3.49445,0.899129)--(-3.554932,0.891165)--
    (-3.613306,0.882957)--(-3.669575,0.874511)--(-3.723742,0.865828)--(-3.775804,0.856916)--
    (-3.825747,0.847782)--(-3.873571,0.838431)--(-3.919254,0.828876)--(-3.962775,0.819129)--
    (-4.004116,0.809202)--(-4.043249,0.799116)--(-4.080136,0.788896)--(-4.11474,0.778568)--
    (-4.147015,0.768173)--(-4.176911,0.757754)--(-4.207218,0.746296)--(-4.645678,0.577657)--
    (-4.671327,0.567102)--(-4.693726,0.557033)--(-4.713499,0.547283)--(-4.722454,0.542415)--cycle}
\def\ovClassThinCore{(-4.337391,0)--(-4.235123,-0.265896)--(-4.230181,-0.270146)--(-4.210053,-0.285382)--
    (-4.188255,-0.300047)--(-4.164798,-0.314186)--(-4.139693,-0.327834)--(-4.112952,-0.341021)--
    (-4.084583,-0.353774)--(-4.054599,-0.366112)--(-4.023002,-0.378058)--(-3.989802,-0.389629)--
    (-3.955003,-0.400838)--(-3.918607,-0.4117)--(-3.880615,-0.422226)--(-3.841024,-0.432431)--
    (-3.799833,-0.442321)--(-3.757035,-0.451906)--(-3.712612,-0.461197)--(-3.666553,-0.470203)--
    (-3.618852,-0.478928)--(-3.569469,-0.48738)--(-3.518393,-0.495568)--(-3.465595,-0.503493)--
    (-3.411038,-0.511165)--(-3.35469,-0.518584)--(-3.296501,-0.525759)--(-3.23645,-0.53269)--
    (-3.174475,-0.53938)--(-3.110514,-0.545835)--(-3.04453,-0.552053)--(-2.976455,-0.558038)--
    (-2.906226,-0.56379)--(-2.833774,-0.569308)--(-2.759002,-0.574596)--(-2.681848,-0.579653)--
    (-2.602217,-0.584475)--(-2.519985,-0.589066)--(-2.435052,-0.593423)--(-2.347283,-0.597544)--
    (-2.256515,-0.601429)--(-2.162578,-0.605076)--(-2.065254,-0.608484)--(-1.964304,-0.611651)--
    (-1.859433,-0.614576)--(-1.750261,-0.617255)--(-1.636332,-0.619689)--(-1.517062,-0.621874)--
    (-1.391674,-0.623811)--(-1.259137,-0.625495)--(-1.117968,-0.626926)--(-0.965953,-0.628102)--
    (-0.799473,-0.62902)--(-0.611769,-0.629679)--(-0.386562,-0.630077)--(0,-0.630211)--
    (0.386562,-0.630077)--(0.611769,-0.629679)--(0.799473,-0.62902)--(0.965953,-0.628102)--
    (1.117968,-0.626926)--(1.259137,-0.625495)--(1.391674,-0.623811)--(1.517062,-0.621874)--
    (1.636332,-0.619689)--(1.750261,-0.617255)--(1.859433,-0.614576)--(1.964304,-0.611651)--
    (2.065254,-0.608484)--(2.162578,-0.605076)--(2.256515,-0.601429)--(2.347283,-0.597544)--
    (2.435052,-0.593423)--(2.519985,-0.589066)--(2.602217,-0.584475)--(2.681848,-0.579653)--
    (2.759002,-0.574596)--(2.833774,-0.569308)--(2.906226,-0.56379)--(2.976455,-0.558038)--
    (3.04453,-0.552053)--(3.110514,-0.545835)--(3.174475,-0.53938)--(3.23645,-0.53269)--
    (3.296501,-0.525759)--(3.35469,-0.518584)--(3.411038,-0.511165)--(3.465595,-0.503493)--
    (3.518393,-0.495568)--(3.569469,-0.48738)--(3.618852,-0.478928)--(3.666553,-0.470203)--
    (3.712612,-0.461197)--(3.757035,-0.451906)--(3.799833,-0.442321)--(3.841024,-0.432431)--
    (3.880615,-0.422226)--(3.918607,-0.4117)--(3.955003,-0.400838)--(3.989802,-0.389629)--
    (4.023002,-0.378058)--(4.054599,-0.366112)--(4.084583,-0.353774)--(4.112952,-0.341021)--
    (4.139693,-0.327834)--(4.164798,-0.314186)--(4.188255,-0.300047)--(4.210053,-0.285382)--
    (4.230181,-0.270146)--(4.235123,-0.265896)--(4.337391,0)--(4.235123,0.265896)--
    (4.230181,0.270146)--(4.210053,0.285382)--(4.188255,0.300047)--(4.164798,0.314186)--
    (4.139693,0.327834)--(4.112952,0.341021)--(4.084583,0.353774)--(4.054599,0.366112)--
    (4.023002,0.378058)--(3.989802,0.389629)--(3.955003,0.400838)--(3.918607,0.4117)--
    (3.880615,0.422226)--(3.841024,0.432431)--(3.799833,0.442321)--(3.757035,0.451906)--
    (3.712612,0.461197)--(3.666553,0.470203)--(3.618852,0.478928)--(3.569469,0.48738)--
    (3.518393,0.495568)--(3.465595,0.503493)--(3.411038,0.511165)--(3.35469,0.518584)--
    (3.296501,0.525759)--(3.23645,0.53269)--(3.174475,0.53938)--(3.110514,0.545835)--
    (3.04453,0.552053)--(2.976455,0.558038)--(2.906226,0.56379)--(2.833774,0.569308)--
    (2.759002,0.574596)--(2.681848,0.579653)--(2.602217,0.584475)--(2.519985,0.589066)--
    (2.435052,0.593423)--(2.347283,0.597544)--(2.256515,0.601429)--(2.162578,0.605076)--
    (2.065254,0.608484)--(1.964304,0.611651)--(1.859433,0.614576)--(1.750261,0.617255)--
    (1.636332,0.619689)--(1.517062,0.621874)--(1.391674,0.623811)--(1.259137,0.625495)--
    (1.117968,0.626926)--(0.965953,0.628102)--(0.799473,0.62902)--(0.611769,0.629679)--
    (0.386562,0.630077)--(0,0.630211)--(-0.386562,0.630077)--(-0.611769,0.629679)--
    (-0.799473,0.62902)--(-0.965953,0.628102)--(-1.117968,0.626926)--(-1.259137,0.625495)--
    (-1.391674,0.623811)--(-1.517062,0.621874)--(-1.636332,0.619689)--(-1.750261,0.617255)--
    (-1.859433,0.614576)--(-1.964304,0.611651)--(-2.065254,0.608484)--(-2.162578,0.605076)--
    (-2.256515,0.601429)--(-2.347283,0.597544)--(-2.435052,0.593423)--(-2.519985,0.589066)--
    (-2.602217,0.584475)--(-2.681848,0.579653)--(-2.759002,0.574596)--(-2.833774,0.569308)--
    (-2.906226,0.56379)--(-2.976455,0.558038)--(-3.04453,0.552053)--(-3.110514,0.545835)--
    (-3.174475,0.53938)--(-3.23645,0.53269)--(-3.296501,0.525759)--(-3.35469,0.518584)--
    (-3.411038,0.511165)--(-3.465595,0.503493)--(-3.518393,0.495568)--(-3.569469,0.48738)--
    (-3.618852,0.478928)--(-3.666553,0.470203)--(-3.712612,0.461197)--(-3.757035,0.451906)--
    (-3.799833,0.442321)--(-3.841024,0.432431)--(-3.880615,0.422226)--(-3.918607,0.4117)--
    (-3.955003,0.400838)--(-3.989802,0.389629)--(-4.023002,0.378058)--(-4.054599,0.366112)--
    (-4.084583,0.353774)--(-4.112952,0.341021)--(-4.139693,0.327834)--(-4.164798,0.314186)--
    (-4.188255,0.300047)--(-4.210053,0.285382)--(-4.230181,0.270146)--(-4.235123,0.265896)--cycle}
\def\ovClassThin{(-4.496098,0.270144)--(-4.6,0)--(-4.496098,-0.270144)--(-4.488232,-0.276909)--
    (-4.466984,-0.292993)--(-4.443888,-0.308532)--(-4.418942,-0.323568)--(-4.392145,-0.338136)--
    (-4.363497,-0.352263)--(-4.332997,-0.365974)--(-4.300643,-0.379288)--(-4.266432,-0.392222)--
    (-4.230364,-0.404792)--(-4.192434,-0.417009)--(-4.152641,-0.428885)--(-4.110981,-0.440428)--
    (-4.067449,-0.451648)--(-4.02204,-0.462551)--(-3.974749,-0.473143)--(-3.925569,-0.483429)--
    (-3.874492,-0.493416)--(-3.82151,-0.503106)--(-3.766612,-0.512503)--(-3.709787,-0.521612)--
    (-3.651022,-0.530433)--(-3.590302,-0.538971)--(-3.52761,-0.547226)--(-3.462926,-0.555202)--
    (-3.396228,-0.5629)--(-3.327491,-0.57032)--(-3.256686,-0.577465)--(-3.18378,-0.584336)--
    (-3.108736,-0.590933)--(-3.03151,-0.597258)--(-2.952055,-0.60331)--(-2.870313,-0.609091)--
    (-2.786221,-0.614602)--(-2.699703,-0.619842)--(-2.610673,-0.624812)--(-2.519032,-0.629513)--
    (-2.424661,-0.633944)--(-2.327422,-0.638106)--(-2.227152,-0.641999)--(-2.123655,-0.645623)--
    (-2.016697,-0.648979)--(-1.905989,-0.652066)--(-1.791177,-0.654884)--(-1.671813,-0.657434)--
    (-1.547325,-0.659715)--(-1.416959,-0.661728)--(-1.27969,-0.663473)--(-1.134063,-0.664949)--
    (-0.977896,-0.666157)--(-0.807614,-0.667096)--(-0.616531,-0.667767)--(-0.388468,-0.66817)--
    (0,-0.668304)--(0.388468,-0.66817)--(0.616531,-0.667767)--(0.807614,-0.667096)--
    (0.977896,-0.666157)--(1.134063,-0.664949)--(1.27969,-0.663473)--(1.416959,-0.661728)--
    (1.547325,-0.659715)--(1.671813,-0.657434)--(1.791177,-0.654884)--(1.905989,-0.652066)--
    (2.016697,-0.648979)--(2.123655,-0.645623)--(2.227152,-0.641999)--(2.327422,-0.638106)--
    (2.424661,-0.633944)--(2.519032,-0.629513)--(2.610673,-0.624812)--(2.699703,-0.619842)--
    (2.786221,-0.614602)--(2.870313,-0.609091)--(2.952055,-0.60331)--(3.03151,-0.597258)--
    (3.108736,-0.590933)--(3.18378,-0.584336)--(3.256686,-0.577465)--(3.327491,-0.57032)--
    (3.396228,-0.5629)--(3.462926,-0.555202)--(3.52761,-0.547226)--(3.590302,-0.538971)--
    (3.651022,-0.530433)--(3.709787,-0.521612)--(3.766612,-0.512503)--(3.82151,-0.503106)--
    (3.874492,-0.493416)--(3.925569,-0.483429)--(3.974749,-0.473143)--(4.02204,-0.462551)--
    (4.067449,-0.451648)--(4.110981,-0.440428)--(4.152641,-0.428885)--(4.192434,-0.417009)--
    (4.230364,-0.404792)--(4.266432,-0.392222)--(4.300643,-0.379288)--(4.332997,-0.365974)--
    (4.363497,-0.352263)--(4.392145,-0.338136)--(4.418942,-0.323568)--(4.443888,-0.308532)--
    (4.466984,-0.292993)--(4.488232,-0.276909)--(4.496098,-0.270144)--(4.6,0)--
    (4.496098,0.270144)--(4.488232,0.276909)--(4.466984,0.292993)--(4.443888,0.308532)--
    (4.418942,0.323568)--(4.392145,0.338136)--(4.363497,0.352263)--(4.332997,0.365974)--
    (4.300643,0.379288)--(4.266432,0.392222)--(4.230364,0.404792)--(4.192434,0.417009)--
    (4.152641,0.428885)--(4.110981,0.440428)--(4.067449,0.451648)--(4.02204,0.462551)--
    (3.974749,0.473143)--(3.925569,0.483429)--(3.874492,0.493416)--(3.82151,0.503106)--
    (3.766612,0.512503)--(3.709787,0.521612)--(3.651022,0.530433)--(3.590302,0.538971)--
    (3.52761,0.547226)--(3.462926,0.555202)--(3.396228,0.5629)--(3.327491,0.57032)--
    (3.256686,0.577465)--(3.18378,0.584336)--(3.108736,0.590933)--(3.03151,0.597258)--
    (2.952055,0.60331)--(2.870313,0.609091)--(2.786221,0.614602)--(2.699703,0.619842)--
    (2.610673,0.624812)--(2.519032,0.629513)--(2.424661,0.633944)--(2.327422,0.638106)--
    (2.227152,0.641999)--(2.123655,0.645623)--(2.016697,0.648979)--(1.905989,0.652066)--
    (1.791177,0.654884)--(1.671813,0.657434)--(1.547325,0.659715)--(1.416959,0.661728)--
    (1.27969,0.663473)--(1.134063,0.664949)--(0.977896,0.666157)--(0.807614,0.667096)--
    (0.616531,0.667767)--(0.388468,0.66817)--(0,0.668304)--(-0.388468,0.66817)--
    (-0.616531,0.667767)--(-0.807614,0.667096)--(-0.977896,0.666157)--(-1.134063,0.664949)--
    (-1.27969,0.663473)--(-1.416959,0.661728)--(-1.547325,0.659715)--(-1.671813,0.657434)--
    (-1.791177,0.654884)--(-1.905989,0.652066)--(-2.016697,0.648979)--(-2.123655,0.645623)--
    (-2.227152,0.641999)--(-2.327422,0.638106)--(-2.424661,0.633944)--(-2.519032,0.629513)--
    (-2.610673,0.624812)--(-2.699703,0.619842)--(-2.786221,0.614602)--(-2.870313,0.609091)--
    (-2.952055,0.60331)--(-3.03151,0.597258)--(-3.108736,0.590933)--(-3.18378,0.584336)--
    (-3.256686,0.577465)--(-3.327491,0.57032)--(-3.396228,0.5629)--(-3.462926,0.555202)--
    (-3.52761,0.547226)--(-3.590302,0.538971)--(-3.651022,0.530433)--(-3.709787,0.521612)--
    (-3.766612,0.512503)--(-3.82151,0.503106)--(-3.874492,0.493416)--(-3.925569,0.483429)--
    (-3.974749,0.473143)--(-4.02204,0.462551)--(-4.067449,0.451648)--(-4.110981,0.440428)--
    (-4.152641,0.428885)--(-4.192434,0.417009)--(-4.230364,0.404792)--(-4.266432,0.392222)--
    (-4.300643,0.379288)--(-4.332997,0.365974)--(-4.363497,0.352263)--(-4.392145,0.338136)--
    (-4.418942,0.323568)--(-4.443888,0.308532)--(-4.466984,0.292993)--(-4.488232,0.276909)--cycle}
\def\ovClassThinK{(0,0) ellipse[x radius=.2622,y radius=.038093328]}
\begin{tikzpicture}[ov figure,scale=1.32]
  \node[anchor=west] at (.1,6.9) {(a)};
  \node at (6.1,6.9) {$\caZ(X_1)+\caZ(X_2)$};
  \begin{scope}[shift={(6.1,5.2)}]
    \draw[ov shell] \ovClassSum;
    \draw[ov body] \ovClassSumCore;
  \end{scope}
  \begin{scope}[shift={(3.2,3.55)}]
    \draw[draw=ovgray!85,line width=.65pt] \ovClassK;
  \end{scope}
  \node[right,inner sep=2pt] at (3.55,3.55) {$D_1=sH_1$};

  \node[anchor=west] at (.1,2.9) {(b)};
  \node at (1.1,2.9) {$\caZ(X_1)$};
  \begin{scope}[shift={(1.1,1.6)}]
    \draw[ov shell] \ovClassZ;
    \draw[ov body] \ovClassCore;
  \end{scope}
  \begin{scope}[shift={(1.1,.45)}]
    \draw[draw=ovgray!85,line width=.65pt] \ovClassK;
  \end{scope}
  \node at (1.1,-.02) {$D_1=sH_1$};
  \node at (2.25,1.6) {$+$};
  \fill[ovgray!8] (8.05,.6) rectangle (9.274,2.6);
  \begin{scope}[shift={(7.45,1.6)}]
    \draw[ov shell,line width=.45pt,fill=ovamber!23] \ovClassThin;
    \draw[ov body,line width=.45pt] \ovClassThinCore;
    \begin{scope}
      \clip (.6,-1) rectangle (1.824,1);
      \fill[ovblue!35] \ovClassThinCore;
    \end{scope}
    \draw[draw=ovblue,line width=.45pt] \ovClassThinCore;
  \end{scope}
  \begin{scope}[shift={(4.95,.45)}]
    \draw[draw=ovgray!85,line width=.4pt] \ovClassThinK;
  \end{scope}
  \node at (4.95,-.02) {$D_2=sH_2$};
  \node at (4.95,2.55) {$\caZ(X_2)$};
  \draw[draw=ovgray!90,line width=.8pt,dashed] (8.05,.6)--(8.05,2.6);
  \draw[draw=ovgray!90,line width=.8pt,dashed] (9.274,.6)--(9.274,2.6);
  \node at (8.662,2.9) {$f^{-1}(I_2)$};
  \draw[ov arrow] (2.85,-.35)--(12.15,-.35);
  \node[right] at (12.15,-.35) {$f$};
  \foreach \xx in {8.05,9.274}
    \draw[ov guide] (\xx,-.25)--(\xx,-.45);
  \node[below,inner sep=2pt] at (8.05,-.45) {$t-W_1$};
  \node[below,inner sep=2pt] at (9.274,-.45) {$t$};
  \draw[ov guide] (8.05,-.9)--(8.05,-1.05)--(9.274,-1.05)--(9.274,-.9);
  \node[below,inner sep=2pt] at (8.662,-1.05) {$I_2$};
\end{tikzpicture}\relax
\endgroup
}

\newcommand{\OverviewColumnSweep}{\relax
\begin{tikzpicture}[ov figure]
  \node at (1.9,7.65) {(a) Physical shell};
  \begin{scope}[shift={(.25,4.45)},x=.047cm,y=.047cm]
    \draw[ov shell] (0,32)--(32,0)--(64,32)--(32,64)--cycle;
    \draw[draw=ovblue!65,line width=.5pt,fill=ovblue!17]
      (2,32)--(32,2)--(62,32)--(32,62)--cycle;
    \node at (36,35) {$K\ominus D$};
    \draw[draw=ovblue,line width=.9pt,->] (16.5,16)--(16.5,32);
    \fill[ovblue] (16.5,16) circle[radius=1.7pt];
    \fill[ovblue] (16.5,32) circle[radius=1.7pt];
    \node[below left,inner sep=1pt] at (16.5,16) {$\bfz$};
    \node[above left,inner sep=1pt] at (16.5,32) {$\bfz'$};
    \node[right,inner sep=2pt,font=\scriptsize] at (16.5,24) {$+\boldsymbol{\beta}$};
  \end{scope}
  \draw[ov arrow] (3.5,6.1)--(4.1,6.1);
  \node[above,inner sep=2pt] at (3.8,6.1) {$\pi_\Lambda$};

  \node at (8.4,7.65) {(b) Quotient shell};
  \begin{scope}[shift={(4.35,5.4)},x=.13cm,y=.13cm]
    \draw[ov shell] (0,8)--(8,0)--(56,0)--(64,8)--(56,16)--(8,16)--cycle;
    \draw[draw=ovblue!65,line width=.5pt,fill=ovblue!17]
      (2,8)--(10,0)--(54,0)--(62,8)--(54,16)--(10,16)--cycle;
    \draw[ov guide,dashed] (0,0)--(64,0);
    \draw[ov guide,dashed] (0,16)--(64,16);
    \foreach \yy in {0,16}
      \draw[draw=ovgray,line width=.8pt,->] (31,\yy)--(35,\yy);
    \fill[ovblue] (16.5,8) circle[radius=1.8pt];
    \node[above,inner sep=2pt] at (34,8) {$\pi_\Lambda(K\ominus D)$};
    \node[below,inner sep=2pt,font=\scriptsize] at (26,7)
      {$\pi_\Lambda(\bfz)=\pi_\Lambda(\bfz')$};
  \end{scope}
  \node[font=\scriptsize] at (8.5,5.08) {top and bottom identified};
  \node[font=\scriptsize] at (8.5,4.72) {most of the physical shell has a deep translate};

  \node at (6.4,4.08) {(c) Boxes and depth jumps};
  \begin{scope}[x=.2cm,y=.2cm]
    \fill[white] (0,0) rectangle (64,16);
    \foreach \xx/\yy in {8/0,9/0,10/0,11/0,52/0,53/0,54/0,55/0,56/0,7/1,8/1,9/1,10/1,53/1,54/1,55/1,56/1,6/2,7/2,8/2,9/2,54/2,55/2,56/2,57/2,5/3,6/3,7/3,8/3,55/3,56/3,57/3,58/3,4/4,5/4,6/4,7/4,56/4,57/4,58/4,59/4,3/5,4/5,5/5,6/5,57/5,58/5,59/5,60/5,2/6,3/6,4/6,5/6,58/6,59/6,60/6,61/6,1/7,2/7,3/7,4/7,59/7,60/7,61/7,62/7,0/8,1/8,2/8,3/8,4/8,59/8,60/8,61/8,62/8,63/8,1/9,2/9,3/9,4/9,5/9,58/9,59/9,60/9,61/9,62/9,2/10,3/10,4/10,5/10,6/10,57/10,58/10,59/10,60/10,61/10,3/11,4/11,5/11,6/11,7/11,56/11,57/11,58/11,59/11,60/11,4/12,5/12,6/12,7/12,8/12,55/12,56/12,57/12,58/12,59/12,5/13,6/13,7/13,8/13,9/13,54/13,55/13,56/13,57/13,58/13,6/14,7/14,8/14,9/14,10/14,53/14,54/14,55/14,56/14,57/14,7/15,8/15,9/15,10/15,11/15,52/15,53/15,54/15,55/15,56/15}
      \fill[ovamber!38] (\xx,\yy) rectangle ++(1,1);
    \foreach \xx/\yy/\xe in {12/0/13,13/0/14,14/0/16,16/0/19,19/0/23,23/0/29,29/0/38,38/0/45,45/0/49,49/0/51,51/0/52,11/1/12,12/1/13,13/1/15,15/1/18,18/1/22,22/1/28,28/1/37,37/1/45,45/1/49,49/1/51,51/1/52,52/1/53,10/2/11,11/2/12,12/2/14,14/2/17,17/2/21,21/2/27,27/2/36,36/2/45,45/2/50,50/2/52,52/2/53,53/2/54,9/3/10,10/3/11,11/3/13,13/3/16,16/3/20,20/3/26,26/3/35,35/3/45,45/3/50,50/3/53,53/3/54,54/3/55,8/4/9,9/4/10,10/4/12,12/4/15,15/4/19,19/4/25,25/4/34,34/4/45,45/4/51,51/4/54,54/4/55,55/4/56,7/5/8,8/5/9,9/5/11,11/5/14,14/5/18,18/5/24,24/5/33,33/5/45,45/5/51,51/5/54,54/5/56,56/5/57,6/6/7,7/6/8,8/6/10,10/6/13,13/6/17,17/6/23,23/6/32,32/6/45,45/6/52,52/6/55,55/6/57,57/6/58,5/7/6,6/7/7,7/7/9,9/7/12,12/7/16,16/7/22,22/7/31,31/7/45,45/7/52,52/7/56,56/7/58,58/7/59,5/8/6,6/8/7,7/8/9,9/8/12,12/8/16,16/8/22,22/8/31,31/8/45,45/8/52,52/8/56,56/8/58,58/8/59,6/9/7,7/9/8,8/9/10,10/9/13,13/9/17,17/9/23,23/9/32,32/9/45,45/9/52,52/9/55,55/9/57,57/9/58,7/10/8,8/10/9,9/10/11,11/10/14,14/10/18,18/10/24,24/10/33,33/10/45,45/10/51,51/10/54,54/10/56,56/10/57,8/11/9,9/11/10,10/11/12,12/11/15,15/11/19,19/11/25,25/11/34,34/11/45,45/11/51,51/11/54,54/11/55,55/11/56,9/12/10,10/12/11,11/12/13,13/12/16,16/12/20,20/12/26,26/12/35,35/12/45,45/12/50,50/12/53,53/12/54,54/12/55,10/13/11,11/13/12,12/13/14,14/13/17,17/13/21,21/13/27,27/13/36,36/13/45,45/13/50,50/13/52,52/13/53,53/13/54,11/14/12,12/14/13,13/14/15,15/14/18,18/14/22,22/14/28,28/14/37,37/14/45,45/14/49,49/14/51,51/14/52,52/14/53,12/15/13,13/15/14,14/15/16,16/15/19,19/15/23,23/15/29,29/15/38,38/15/45,45/15/49,49/15/51,51/15/52}
      \fill[ovblue!18] (\xx,\yy) rectangle (\xe,\yy+1);
    \draw[step=1,draw=ovgray!30,line width=.12pt] (0,0) grid (64,16);
    \foreach \xx/\yy/\xe in {12/0/13,13/0/14,14/0/16,16/0/19,19/0/23,23/0/29,29/0/38,38/0/45,45/0/49,49/0/51,51/0/52,11/1/12,12/1/13,13/1/15,15/1/18,18/1/22,22/1/28,28/1/37,37/1/45,45/1/49,49/1/51,51/1/52,52/1/53,10/2/11,11/2/12,12/2/14,14/2/17,17/2/21,21/2/27,27/2/36,36/2/45,45/2/50,50/2/52,52/2/53,53/2/54,9/3/10,10/3/11,11/3/13,13/3/16,16/3/20,20/3/26,26/3/35,35/3/45,45/3/50,50/3/53,53/3/54,54/3/55,8/4/9,9/4/10,10/4/12,12/4/15,15/4/19,19/4/25,25/4/34,34/4/45,45/4/51,51/4/54,54/4/55,55/4/56,7/5/8,8/5/9,9/5/11,11/5/14,14/5/18,18/5/24,24/5/33,33/5/45,45/5/51,51/5/54,54/5/56,56/5/57,6/6/7,7/6/8,8/6/10,10/6/13,13/6/17,17/6/23,23/6/32,32/6/45,45/6/52,52/6/55,55/6/57,57/6/58,5/7/6,6/7/7,7/7/9,9/7/12,12/7/16,16/7/22,22/7/31,31/7/45,45/7/52,52/7/56,56/7/58,58/7/59,5/8/6,6/8/7,7/8/9,9/8/12,12/8/16,16/8/22,22/8/31,31/8/45,45/8/52,52/8/56,56/8/58,58/8/59,6/9/7,7/9/8,8/9/10,10/9/13,13/9/17,17/9/23,23/9/32,32/9/45,45/9/52,52/9/55,55/9/57,57/9/58,7/10/8,8/10/9,9/10/11,11/10/14,14/10/18,18/10/24,24/10/33,33/10/45,45/10/51,51/10/54,54/10/56,56/10/57,8/11/9,9/11/10,10/11/12,12/11/15,15/11/19,19/11/25,25/11/34,34/11/45,45/11/51,51/11/54,54/11/55,55/11/56,9/12/10,10/12/11,11/12/13,13/12/16,16/12/20,20/12/26,26/12/35,35/12/45,45/12/50,50/12/53,53/12/54,54/12/55,10/13/11,11/13/12,12/13/14,14/13/17,17/13/21,21/13/27,27/13/36,36/13/45,45/13/50,50/13/52,52/13/53,53/13/54,11/14/12,12/14/13,13/14/15,15/14/18,18/14/22,22/14/28,28/14/37,37/14/45,45/14/49,49/14/51,51/14/52,52/14/53,12/15/13,13/15/14,14/15/16,16/15/19,19/15/23,23/15/29,29/15/38,38/15/45,45/15/49,49/15/51,51/15/52}
      \draw[draw=ovblue!80,line width=.4pt]
        (\xx,\yy) rectangle (\xe,\yy+1);
    \draw[draw=black!75,line width=.85pt]
      (0,8)--(8,0)--(56,0)--(64,8)--(56,16)--(8,16)--cycle;
    \draw[ov guide,dashed] (0,0)--(64,0);
    \draw[ov guide,dashed] (0,16)--(64,16);
    \foreach \yy in {0,16}
      \draw[draw=ovgray,line width=1pt,->] (31,\yy)--(35,\yy);
    \draw[draw=ovblue!65!black,line width=1.2pt] (16,8) rectangle (22,9);
    \fill[ovblue!65!black] (16.5,8) circle[radius=1.6pt];
    \draw[draw=ovblue!80,line width=.55pt] (19,9)--(19,18.1);
    \node[above,inner sep=1pt,font=\scriptsize] at (19,18.1)
      {$t=7$: skip six boxes};
    \draw[ov arrow] (44,18)--(56,18);
    \node[above,inner sep=1pt,font=\scriptsize] at (50,18)
      {enumeration direction};
  \end{scope}
  \draw[fill=ovblue!18,draw=ovblue!80,line width=.4pt]
    (.5,-.64) rectangle (1.15,-.39);
  \node[right,inner sep=3pt,font=\scriptsize] at (1.15,-.51) {deep: skip interval};
  \draw[fill=ovamber!38,draw=ovamber!95!black,line width=.4pt]
    (5,-.64) rectangle (5.25,-.39);
  \node[right,inner sep=3pt,font=\scriptsize] at (5.25,-.51) {shallow: mark box};
  \draw[fill=white,draw=ovgray!50,line width=.4pt]
    (9.2,-.64) rectangle (9.45,-.39);
  \node[right,inner sep=3pt,font=\scriptsize] at (9.45,-.51) {unoccupied};
\end{tikzpicture}\relax
}

\newcommand{\OverviewBallQuotientShell}{\relax
\begin{tikzpicture}[ov figure]
  \pgfdeclareradialshading{ovfigeightgold}{\pgfpoint{-9bp}{10bp}}{
    color(0bp)=(ovamber!22!white);
    color(14bp)=(ovamber!65!white);
    color(28bp)=(ovamber);
    color(40bp)=(ovamber!82!black);
    color(50bp)=(ovamber!82!black)}
  \pgfdeclareradialshading{ovfigeightcore}{\pgfpoint{-9bp}{10bp}}{
    color(0bp)=(ovblue!5!white);
    color(12bp)=(ovblue!13!white);
    color(25bp)=(ovblue!27!white);
    color(38bp)=(ovblue!43!white);
    color(50bp)=(ovblue!43!white)}
  \shade[shading=ovfigeightgold] (2.15,1.95) circle[radius=1.6];
  \shade[shading=ovfigeightcore] (2.15,1.95) circle[radius=1.408];
  \draw[ovamber!90!black,line width=.75pt] (2.15,1.95) circle[radius=1.6];
  \draw[ovblue!70,line width=.55pt] (2.15,1.95) circle[radius=1.408];
  \draw[ovamber!55!black,densely dashed,line width=.45pt]
    (3.75,1.95) arc[start angle=0,end angle=180,x radius=1.6,y radius=.49];
  \draw[ovamber!70!black,line width=.45pt]
    (.55,1.95) arc[start angle=180,end angle=360,x radius=1.6,y radius=.49];
  \draw[ovblue!75,densely dashed,line width=.5pt]
    (2.15,1.95) ellipse[x radius=1.408,y radius=.4312];
  \draw[ovamber!85!black,line width=.5pt] (3.24,3.12)--(3.66,3.36);
  \node[right,inner sep=2pt] at (3.66,3.36) {$\caZ(X)$};
  \draw[ovblue!70,line width=.5pt] (3.38,2.62)--(3.95,2.83);
  \node[right,inner sep=2pt] at (3.95,2.83) {$\caZ(X)\ominus D$};
  \begin{scope}[xshift=1.3cm]
  \path[shade,top color=ovamber!70!white,bottom color=ovamber!90!black,shading angle=35,draw=ovamber!90!black,line width=.45pt]
    (6.1404,1.3563)--(6.2354,1.2056)--(6.3304,1.0799)--(6.4254,0.9726)--(6.5204,0.8803) --
    (6.6154,0.8007)--(6.7103,0.7326)--(6.8053,0.6751)--(6.9003,0.6278)--(6.9953,0.5906) --
    (7.0903,0.5636)--(7.1853,0.5473)--(7.2803,0.5424)--(7.3753,0.5503)--(7.4703,0.5731) --
    (7.5653,0.6143)--(7.6602,0.6805)--(7.6602,0.6805)--(7.7039,0.6738)--(7.7477,0.6730) --
    (7.7914,0.6773)--(7.8351,0.6863)--(7.8788,0.6998)--(7.9225,0.7174)--(7.9662,0.7391) --
    (8.0099,0.7647)--(8.0536,0.7943)--(8.0973,0.8278)--(8.1410,0.8654)--(8.1848,0.9071) --
    (8.2285,0.9533)--(8.2722,1.0042)--(8.3159,1.0603)--(8.3596,1.1222)--(8.3596,1.1222) --
    (8.2646,1.0560)--(8.1696,1.0147)--(8.0746,0.9919)--(7.9796,0.9840)--(7.8846,0.9889) --
    (7.7897,1.0053)--(7.6947,1.0323)--(7.5997,1.0695)--(7.5047,1.1168)--(7.4097,1.1742) --
    (7.3147,1.2424)--(7.2197,1.3219)--(7.1247,1.4143)--(7.0297,1.5216)--(6.9347,1.6473) --
    (6.8398,1.7980)--(6.8398,1.7980)--(6.7961,1.7361)--(6.7523,1.6800)--(6.7086,1.6291) --
    (6.6649,1.5829)--(6.6212,1.5412)--(6.5775,1.5036)--(6.5338,1.4701)--(6.4901,1.4405) --
    (6.4464,1.4149)--(6.4027,1.3932)--(6.3590,1.3756)--(6.3152,1.3621)--(6.2715,1.3531) --
    (6.2278,1.3488)--(6.1841,1.3496)--(6.1404,1.3563) --cycle;
  \path[shade,top color=ovamber!90!white,bottom color=ovamber!85!black,shading angle=20,draw=ovamber!90!black,line width=.4pt]
    (7.6602,0.4318)--(7.7039,0.4338)--(7.7477,0.4397)--(7.7914,0.4493)--(7.8351,0.4624) --
    (7.8788,0.4788)--(7.9225,0.4985)--(7.9662,0.5215)--(8.0099,0.5475)--(8.0536,0.5767) --
    (8.0973,0.6090)--(8.1410,0.6445)--(8.1848,0.6832)--(8.2285,0.7253)--(8.2722,0.7709) --
    (8.3159,0.8202)--(8.3596,0.8735)--(8.3596,1.1222)--(8.3159,1.0603)--(8.2722,1.0042) --
    (8.2285,0.9533)--(8.1848,0.9071)--(8.1410,0.8654)--(8.0973,0.8278)--(8.0536,0.7943) --
    (8.0099,0.7647)--(7.9662,0.7391)--(7.9225,0.7174)--(7.8788,0.6998)--(7.8351,0.6863) --
    (7.7914,0.6773)--(7.7477,0.6730)--(7.7039,0.6738)--(7.6602,0.6805) --cycle;
  \path[shade,top color=ovamber!95!white,bottom color=ovamber!65!black,shading angle=35,draw=ovamber!90!black,line width=.4pt]
    (6.1404,1.1076)--(6.2354,0.9825)--(6.3304,0.8735)--(6.4254,0.7777)--(6.5204,0.6934) --
    (6.6154,0.6195)--(6.7103,0.5551)--(6.8053,0.4997)--(6.9003,0.4531)--(6.9953,0.4152) --
    (7.0903,0.3861)--(7.1853,0.3660)--(7.2803,0.3555)--(7.3753,0.3553)--(7.4703,0.3666) --
    (7.5653,0.3912)--(7.6602,0.4318)--(7.6602,0.6805)--(7.5653,0.6143)--(7.4703,0.5731) --
    (7.3753,0.5503)--(7.2803,0.5424)--(7.1853,0.5473)--(7.0903,0.5636)--(6.9953,0.5906) --
    (6.9003,0.6278)--(6.8053,0.6751)--(6.7103,0.7326)--(6.6154,0.8007)--(6.5204,0.8803) --
    (6.4254,0.9726)--(6.3304,1.0799)--(6.2354,1.2056)--(6.1404,1.3563) --cycle;
  \path[shade,top color=ovamber!70!white,bottom color=ovamber!90!black,shading angle=35,draw=ovamber!90!black,line width=.45pt]
    (6.1404,3.0265)--(6.1527,3.0328)--(6.1651,3.0388)--(6.1774,3.0445)--(6.1897,3.0498) --
    (6.2021,3.0549)--(6.2144,3.0597)--(6.2267,3.0642)--(6.2390,3.0684)--(6.2514,3.0723) --
    (6.2637,3.0760)--(6.2760,3.0794)--(6.2884,3.0826)--(6.3007,3.0856)--(6.3130,3.0883) --
    (6.3253,3.0908)--(6.3377,3.0930)--(6.3500,3.0951)--(6.3623,3.0969)--(6.3747,3.0985) --
    (6.3870,3.0999)--(6.3993,3.1011)--(6.4116,3.1021)--(6.4240,3.1030)--(6.4363,3.1036) --
    (6.4486,3.1040)--(6.4610,3.1042)--(6.4733,3.1043)--(6.4856,3.1042)--(6.4979,3.1039) --
    (6.5103,3.1034)--(6.5226,3.1027)--(6.5349,3.1019)--(6.5473,3.1009)--(6.5596,3.0997) --
    (6.5719,3.0983)--(6.5842,3.0968)--(6.5966,3.0952)--(6.6089,3.0933)--(6.6212,3.0913) --
    (6.6336,3.0892)--(6.6459,3.0869)--(6.6582,3.0844)--(6.6705,3.0818)--(6.6829,3.0790) --
    (6.6952,3.0760)--(6.7075,3.0730)--(6.7199,3.0697)--(6.7322,3.0663)--(6.7445,3.0628) --
    (6.7568,3.0591)--(6.7692,3.0552)--(6.7815,3.0512)--(6.7938,3.0471)--(6.8062,3.0428) --
    (6.8185,3.0384)--(6.8308,3.0338)--(6.8398,3.0304)--(6.8431,3.0291)--(6.8555,3.0242) --
    (6.8678,3.0192)--(6.8801,3.0140)--(6.8925,3.0087)--(6.9048,3.0032)--(6.9171,2.9976) --
    (6.9295,2.9919)--(6.9418,2.9859)--(6.9541,2.9799)--(6.9664,2.9737)--(6.9788,2.9674) --
    (6.9911,2.9609)--(7.0034,2.9543)--(7.0158,2.9475)--(7.0281,2.9405)--(7.0404,2.9335) --
    (7.0527,2.9262)--(7.0651,2.9189)--(7.0774,2.9113)--(7.0897,2.9036)--(7.1021,2.8958) --
    (7.1144,2.8878)--(7.1267,2.8797)--(7.1390,2.8714)--(7.1514,2.8629)--(7.1637,2.8543) --
    (7.1760,2.8456)--(7.1884,2.8367)--(7.2007,2.8276)--(7.2130,2.8183)--(7.2253,2.8089) --
    (7.2377,2.7993)--(7.2500,2.7896)--(7.2623,2.7797)--(7.2747,2.7696)--(7.2870,2.7593) --
    (7.2993,2.7489)--(7.3116,2.7383)--(7.3240,2.7275)--(7.3363,2.7165)--(7.3486,2.7054) --
    (7.3610,2.6941)--(7.3733,2.6825)--(7.3856,2.6708)--(7.3979,2.6589)--(7.4103,2.6468) --
    (7.4226,2.6345)--(7.4349,2.6219)--(7.4473,2.6092)--(7.4596,2.5963)--(7.4719,2.5831) --
    (7.4842,2.5697)--(7.4966,2.5561)--(7.5089,2.5422)--(7.5212,2.5282)--(7.5336,2.5138) --
    (7.5459,2.4992)--(7.5582,2.4844)--(7.5705,2.4693)--(7.5829,2.4539)--(7.5952,2.4383) --
    (7.6075,2.4223)--(7.6199,2.4061)--(7.6322,2.3896)--(7.6445,2.3727)--(7.6569,2.3555) --
    (7.6602,2.3507)--(7.6692,2.3620)--(7.6815,2.3771)--(7.6938,2.3920)--(7.7062,2.4066) --
    (7.7185,2.4208)--(7.7308,2.4347)--(7.7432,2.4484)--(7.7555,2.4617)--(7.7678,2.4748) --
    (7.7801,2.4875)--(7.7925,2.5000)--(7.8048,2.5122)--(7.8171,2.5241)--(7.8295,2.5358) --
    (7.8418,2.5472)--(7.8541,2.5583)--(7.8664,2.5691)--(7.8788,2.5797)--(7.8911,2.5901) --
    (7.9034,2.6001)--(7.9158,2.6100)--(7.9281,2.6195)--(7.9404,2.6289)--(7.9527,2.6379) --
    (7.9651,2.6467)--(7.9774,2.6553)--(7.9897,2.6636)--(8.0021,2.6717)--(8.0144,2.6795) --
    (8.0267,2.6871)--(8.0390,2.6944)--(8.0514,2.7015)--(8.0637,2.7083)--(8.0760,2.7149) --
    (8.0884,2.7212)--(8.1007,2.7273)--(8.1130,2.7331)--(8.1253,2.7387)--(8.1377,2.7440) --
    (8.1500,2.7490)--(8.1623,2.7538)--(8.1747,2.7583)--(8.1870,2.7626)--(8.1993,2.7666) --
    (8.2116,2.7703)--(8.2240,2.7738)--(8.2363,2.7769)--(8.2486,2.7798)--(8.2610,2.7824) --
    (8.2733,2.7847)--(8.2856,2.7868)--(8.2979,2.7885)--(8.3103,2.7899)--(8.3226,2.7910) --
    (8.3349,2.7918)--(8.3473,2.7923)--(8.3596,2.7924)--(8.3596,2.7924)--(8.3473,2.8097) --
    (8.3349,2.8266)--(8.3226,2.8432)--(8.3103,2.8596)--(8.2979,2.8756)--(8.2856,2.8913) --
    (8.2733,2.9068)--(8.2610,2.9219)--(8.2486,2.9369)--(8.2363,2.9515)--(8.2240,2.9659) --
    (8.2116,2.9801)--(8.1993,2.9940)--(8.1870,3.0077)--(8.1747,3.0211)--(8.1623,3.0343) --
    (8.1500,3.0473)--(8.1377,3.0601)--(8.1253,3.0727)--(8.1130,3.0851)--(8.1007,3.0973) --
    (8.0884,3.1092)--(8.0760,3.1210)--(8.0637,3.1326)--(8.0514,3.1440)--(8.0390,3.1552) --
    (8.0267,3.1662)--(8.0144,3.1770)--(8.0021,3.1877)--(7.9897,3.1982)--(7.9774,3.2085) --
    (7.9651,3.2186)--(7.9527,3.2285)--(7.9404,3.2383)--(7.9281,3.2480)--(7.9158,3.2574) --
    (7.9034,3.2667)--(7.8911,3.2758)--(7.8788,3.2848)--(7.8664,3.2936)--(7.8541,3.3023) --
    (7.8418,3.3108)--(7.8295,3.3191)--(7.8171,3.3273)--(7.8048,3.3353)--(7.7925,3.3432) --
    (7.7801,3.3509)--(7.7678,3.3585)--(7.7555,3.3659)--(7.7432,3.3732)--(7.7308,3.3803) --
    (7.7185,3.3872)--(7.7062,3.3941)--(7.6938,3.4007)--(7.6815,3.4073)--(7.6692,3.4136) --
    (7.6602,3.4182)--(7.6569,3.4199)--(7.6445,3.4260)--(7.6322,3.4319)--(7.6199,3.4377) --
    (7.6075,3.4433)--(7.5952,3.4489)--(7.5829,3.4542)--(7.5705,3.4594)--(7.5582,3.4645) --
    (7.5459,3.4694)--(7.5336,3.4742)--(7.5212,3.4788)--(7.5089,3.4833)--(7.4966,3.4876) --
    (7.4842,3.4918)--(7.4719,3.4958)--(7.4596,3.4997)--(7.4473,3.5034)--(7.4349,3.5070) --
    (7.4226,3.5105)--(7.4103,3.5137)--(7.3979,3.5169)--(7.3856,3.5199)--(7.3733,3.5227) --
    (7.3610,3.5254)--(7.3486,3.5279)--(7.3363,3.5302)--(7.3240,3.5324)--(7.3116,3.5345) --
    (7.2993,3.5363)--(7.2870,3.5381)--(7.2747,3.5396)--(7.2623,3.5410)--(7.2500,3.5422) --
    (7.2377,3.5433)--(7.2253,3.5442)--(7.2130,3.5449)--(7.2007,3.5454)--(7.1884,3.5458) --
    (7.1760,3.5459)--(7.1637,3.5459)--(7.1514,3.5457)--(7.1390,3.5454)--(7.1267,3.5448) --
    (7.1144,3.5441)--(7.1021,3.5431)--(7.0897,3.5420)--(7.0774,3.5407)--(7.0651,3.5393) --
    (7.0527,3.5378)--(7.0404,3.5362)--(7.0281,3.5345)--(7.0158,3.5328)--(7.0034,3.5309) --
    (6.9911,3.5289)--(6.9788,3.5268)--(6.9664,3.5247)--(6.9541,3.5224)--(6.9418,3.5200) --
    (6.9295,3.5176)--(6.9171,3.5150)--(6.9048,3.5123)--(6.8925,3.5095)--(6.8801,3.5067) --
    (6.8678,3.5037)--(6.8555,3.5006)--(6.8431,3.4974)--(6.8398,3.4965)--(6.8308,3.4941) --
    (6.8185,3.4907)--(6.8062,3.4872)--(6.7938,3.4836)--(6.7815,3.4799)--(6.7692,3.4760) --
    (6.7568,3.4721)--(6.7445,3.4681)--(6.7322,3.4639)--(6.7199,3.4596)--(6.7075,3.4552) --
    (6.6952,3.4507)--(6.6829,3.4461)--(6.6705,3.4414)--(6.6582,3.4365)--(6.6459,3.4316) --
    (6.6336,3.4265)--(6.6212,3.4213)--(6.6089,3.4159)--(6.5966,3.4104)--(6.5842,3.4047) --
    (6.5719,3.3987)--(6.5596,3.3924)--(6.5473,3.3859)--(6.5349,3.3792)--(6.5226,3.3722) --
    (6.5103,3.3649)--(6.4979,3.3574)--(6.4856,3.3497)--(6.4733,3.3417)--(6.4610,3.3334) --
    (6.4486,3.3249)--(6.4363,3.3162)--(6.4240,3.3072)--(6.4116,3.2979)--(6.3993,3.2884) --
    (6.3870,3.2787)--(6.3747,3.2687)--(6.3623,3.2584)--(6.3500,3.2479)--(6.3377,3.2371) --
    (6.3253,3.2260)--(6.3130,3.2147)--(6.3007,3.2032)--(6.2884,3.1913)--(6.2760,3.1792) --
    (6.2637,3.1668)--(6.2514,3.1541)--(6.2390,3.1411)--(6.2267,3.1279)--(6.2144,3.1143) --
    (6.2021,3.1005)--(6.1897,3.0863)--(6.1774,3.0719)--(6.1651,3.0571)--(6.1527,3.0420) --
    (6.1404,3.0265) --cycle;
  \path[shade,top color=ovamber!90!white,bottom color=ovamber!85!black,shading angle=20,draw=ovamber!90!black,line width=.4pt]
    (7.6602,2.3507)--(7.7039,2.4040)--(7.7477,2.4533)--(7.7914,2.4989)--(7.8351,2.5410) --
    (7.8788,2.5797)--(7.9225,2.6152)--(7.9662,2.6475)--(8.0099,2.6767)--(8.0536,2.7027) --
    (8.0973,2.7256)--(8.1410,2.7454)--(8.1848,2.7618)--(8.2285,2.7749)--(8.2722,2.7845) --
    (8.3159,2.7904)--(8.3596,2.7924)--(8.3596,2.5437)--(8.3159,2.5504)--(8.2722,2.5512) --
    (8.2285,2.5469)--(8.1848,2.5379)--(8.1410,2.5244)--(8.0973,2.5068)--(8.0536,2.4851) --
    (8.0099,2.4595)--(7.9662,2.4299)--(7.9225,2.3964)--(7.8788,2.3588)--(7.8351,2.3171) --
    (7.7914,2.2709)--(7.7477,2.2200)--(7.7039,2.1639)--(7.6602,2.1020) --cycle;
  \path[shade,top color=ovamber!95!white,bottom color=ovamber!65!black,shading angle=35,draw=ovamber!90!black,line width=.4pt]
    (6.1404,3.0265)--(6.2354,3.0672)--(6.3304,3.0917)--(6.4254,3.1030)--(6.5204,3.1028) --
    (6.6154,3.0923)--(6.7103,3.0722)--(6.8053,3.0431)--(6.9003,3.0052)--(6.9953,2.9586) --
    (7.0903,2.9033)--(7.1853,2.8389)--(7.2803,2.7649)--(7.3753,2.6807)--(7.4703,2.5849) --
    (7.5653,2.4758)--(7.6602,2.3507)--(7.6602,2.1020)--(7.5653,2.2527)--(7.4703,2.3784) --
    (7.3753,2.4857)--(7.2803,2.5781)--(7.1853,2.6576)--(7.0903,2.7258)--(6.9953,2.7832) --
    (6.9003,2.8305)--(6.8053,2.8677)--(6.7103,2.8947)--(6.6154,2.9111)--(6.5204,2.9160) --
    (6.4254,2.9081)--(6.3304,2.8853)--(6.2354,2.8440)--(6.1404,2.7778) --cycle;
  \end{scope}
  \begin{scope}[xshift=.65cm]
  \fill[ovgray!90] (4.45,1.65) circle[radius=1.2pt];
  \draw[->,ovgray!90,line width=.7pt] (4.45,1.65)--(5.44,1.21);
  \node[right,inner sep=2pt] at (5.44,1.21) {$\boldsymbol{\ell}_1$};
  \draw[->,ovgray!90,line width=.7pt] (4.45,1.65)--(4.90,1.94);
  \node[above right,inner sep=2pt] at (4.90,1.94) {$\boldsymbol{\ell}_2$};
  \end{scope}
  \node[align=center] at (2.15,-.02) {$\caZ(X)\setminus(\caZ(X)\ominus D)$};
  \node[align=center] at (8.55,-.02)
    {$\pi_\Lambda(\caZ(X))\setminus\pi_\Lambda(\caZ(X)\ominus D)$};
\end{tikzpicture}\relax
}

\subsection{Geometry of subset sums: random models}

\label{subsec:overview-motivation}
From a structure-versus-randomness perspective, a natural heuristic question is whether
\emph{random} instances of \textsc{Subset Sum} are easy to solve.
The answer turns out to be yes, at least for the simplest model with independent uniformly random input integers $A=\{a_1,\ldots,a_n\}\subseteq [1,\Delta]$ (where $\Delta\le n^{\Oh(1)}$).
It is well known that, for some threshold $t_0=\tO(\Delta)$,
the middle interval $[t_0,(\sum_i a_i)-t_0]\cap\Z$ is completely filled by the 
attainable subset sums $\Sigma(A) \coloneqq  \{\sum_{b\in B} b: B\subseteq A\} $
with high probability (see e.g., \cite{BorgsCP01} for a similar statement); see \cref{fig:overview-random-one-dimensional} for an illustration.
Therefore, it suffices to compute the set of subset sums in the boundary interval 
$[0,t_0]\cap\Z$ (and 
its symmetric counterpart),
which can be done using Bringmann's algorithm \cite{Bringmann17} in $\tO(n+\Delta)$ time.

\begin{figure}[ht]
\centering
\OverviewRandomOneDimension
\caption{The subset sums $\Sigma(A)$ (shown as solid blue dots) of a random input $A = \{10,10,8,9,5,11,11,6\}$.}
\label{fig:overview-random-one-dimensional}
\end{figure}

\begin{figure}[ht]
\centering
\OverviewPositiveGeometry
\par\medskip
\OverviewFlattenedGeometry
\caption{
 The subset sums (shown as solid blue dots) of an instance generated from a two-dimensional random model. (For clearer figures, the second coefficient ($7$) of the map $f$ is chosen to be smaller than what is defined in the text ($19$).)}
\label{fig:overview-positive}
\end{figure}

Clearly, such a filling phenomenon cannot always happen for worst-case input instances. 
It is easy to
systematically construct instances where the set of subset sums contains holes throughout.
One way to do so is to embed geometric restrictions: For example, for some small $d\ge 2$, take a size-$n$ subset $X$ of the box $[r]^d$ (with $r\le n^{\Oh(1)}$). For the linear map $f(\bfx)=\sum_{i=1}^d (nr+1)^{i-1} x_i$ injective on $[0,nr]^d$, define $A = f(X)$ as the input set.
Clearly, $\Sigma(A)\subseteq f(\caZ(X)\cap \Z^d)$, where $\caZ(X) \coloneqq \{\sum_{\bfx \in X} a_{\bfx} \bfx : a_{\bfx}\in [0,1]\}\subseteq [0,nr]^d$ is the \emph{zonotope} of $X$.
See an illustration in \cref{fig:overview-positive}.

Nevertheless, if the set $X$ above is randomly chosen, then a similar filling phenomenon still happens: all lattice points sufficiently deep inside $\caZ(X)$ are attainable subset sums of $X$, and are easy to deal with, as illustrated in \cref{fig:overview-positive}. Thus, it suffices to compute the remaining shallow points lying in a shell near the boundary.
Heuristically, $\Delta=\max A\approx n^{d-1} r^d$, and $\caZ(X)$ has surface area $\approx (nr)^{d-1}$. With a depth threshold $\tO_d(r)$, the shell volume is roughly $r(nr)^{d-1}\approx \Delta$, which fits within our time complexity budget.
Later we will sketch how to compute these shallow subset sums in time near-linear in the shell volume.
We remark that one can actually solve the all-target version of \textsc{Subset Sum} in this case: the entire set $\Sigma(X)$, as a compact representation, is computable in near-linear time in $\Delta$.

\subsection{Geometry of subset sums: worst-case input}
\label{subsec:overview-motivation-worstcase}
Remarkably, Conlon, Fox, and Pham
\cite{conlon2023homogeneous} showed that this phenomenon of random models actually occurs for \emph{arbitrary} polynomially large subsets of $[\Delta]$: almost all items in $A$ (except for very few garbage items) can be embedded into a \emph{generalized arithmetic progression (GAP)} $P=f(H\cap \Z^d)$ of constant rank $d$,
with a box $H=\prod_{i=1}^d[-m_i,m_i]$, $m_i\in \Pos$ and a linear map $f$ with integer coefficients,
 so that their subset sums have the same geometric filling phenomenon as illustrated earlier. We now state their result formally.

For $s\ge 1$, we say a GAP $P=f(H\cap \Z^d)$ is \emph{$s$-proper} if $f$ is injective on $sH\cap \Z^d$ (\emph{proper} means $1$-proper). We extract the following notion from the work of \cite{conlon2023homogeneous}:
\begin{definition}[$s$-generating; see \cref{def:generating}]
  \label{def:generating-specialized}
Let $P=f(H\cap \Z^d)$ be a proper GAP. For $s\ge 1$, if a set
$X'\subseteq H\cap \Z^d$ satisfies
\[ \Sigma(X')\supseteq (\boldsymbol{\tau}+sH)\cap \Z^d \]
for some $\boldsymbol{\tau}\in \Z^d$, then we say that $X'$ is
\emph{$s$-generating for} $H\cap \Z^d$ (and that $f(X')$ is
\emph{$s$-generating for} $P$).
See \cref{fig:overview-generating} for an illustration.
\end{definition}

\begin{figure}[ht]
\centering
\OverviewGenerating
\caption{The vector set $X'\subseteq H\cap \Z^2$ is $1$-generating for $H\cap \Z^2$, because every lattice point in $\boldsymbol{\tau}+H$ (shaded) is a subset sum of $X'$.}
\label{fig:overview-generating}
\end{figure}
The following theorem is \cite[Theorem 1.5]{conlon2023homogeneous}:
\begin{theorem}[CFP theorem]
    \label{thm:cfp-intro}
    For any $\beta>1$, there are constants $C>1,d_{\mathrm{max}} \ge 1$ such that the following
    holds.

 Let $A$ be a set of integers in
 $[-\Delta,\Delta]\setminus \{0\}$,
 and $s \in [\Delta^{1/\beta}, C^{-1} \min \{\Delta,\lvert A\rvert/\log \Delta\}]\cap \Z$. 
There exist sets $A'\subsetneq \hat A\subseteq A$ and an $s$-proper origin-symmetric GAP $P\supseteq \hat A $ of rank $d\le d_{\mathrm{max}}$, such that $\lvert A\rvert-\lvert\hat A\rvert\le Cs \log \Delta$ and
 $\lvert A'\rvert\le Cs$, and $A'$ is $s$-generating for $P$.\footnote{We remark that these properties imply $\lvert P\rvert=\Oh_\beta(\Delta/s^{d-1})$ (see e.g., \cite[Proof of Theorem~2]{egyptian}).}
\end{theorem}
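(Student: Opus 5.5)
This statement is \cite[Theorem~1.5]{conlon2023homogeneous}, and in the paper we would invoke it as a black box. If I had to reprove it, I would run an iterative \emph{structure-growing} argument. Throughout I would maintain a set $A'\subseteq A$ with $\lvert A'\rvert\le Cs$, together with a symmetric convex progression $Q=f(K\cap\Z^{d})$, where $K$ is an origin-symmetric convex body. The invariant is that $\Sigma(A')$ contains a translate of $f(sK\cap\Z^{d})$. At the end, I would use the discrete John theorem of Tao and Vu to replace $K$ by a box $H$ of comparable size with rank $d\le d_{\max}$ and $f$ injective on $sH\cap\Z^d$. This produces the $s$-proper origin-symmetric GAP $P$ and makes $A'$ $s$-generating for it.

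\textbf{Initialization.} I would take a first batch of about $s$ elements of $A$ and apply a Sárközy / Szemerédi--Vu type theorem. Its subset sums contain a long arithmetic progression, of length $\gtrsim s\cdot(\text{typical size})$ and with some common difference $q$. This gives a rank-$1$ $Q$ satisfying the invariant.

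\textbf{Iteration.} In each round, look at the elements of $A$ not yet used.
\begin{itemize}
\item If all but at most $Cs$ of them lie in $Q$, or in a bounded dilate of $Q$, stop. The discarded ones are garbage and the rest form $\hat A$.
\item Otherwise, there are more than $Cs$ elements outside $Q$. Choose $s$ of them, which lie in few directions modulo $Q$, using a pigeonhole or Freiman-type argument in the quotient of the lattice by $Q$'s span. Add them to $A'$. Summing them enlarges $Q$ either by a new generator (the rank goes up) or by stretching an existing direction by a constant factor (the volume of $K$ in the appropriate quotient measure multiplies by at least $2$).
\end{itemize}
Since $\lvert f(sK)\rvert\le\lvert\Sigma(A)\rvert\le s\Delta$ and $s\ge\Delta^{1/\beta}$, the volume can double only $O_\beta(\log\Delta)$ times. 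The rank can increase only $O_\beta(1)$ times, because $\lvert Q\rvert\gtrsim s^{d}\ge\Delta^{d/\beta}$ forces $d\le O(\beta)$. Charging $O(s)$ discarded or used elements to each round gives $\lvert A\rvert-\lvert\hat A\rvert\le Cs\log\Delta$. The hypothesis $s\le\lvert A\rvert/(C\log\Delta)$ guarantees that the process never runs out of elements.

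\textbf{Main obstacle.} The hard step is maintaining \emph{filling} rather than mere containment: when a new batch is added, I need to show that $\Sigma(A')$ contains a translate of the full dilate $sK'\cap\Z^d$ of the enlarged body. For this I would use the structure of iterated sumsets. If $Y$ is a set of $s$ vectors with $\Sigma(Y)$ containing a translate of $sK$, and $Y'$ is a new batch lying near a fixed coset direction $\bfv$, then $\Sigma(Y\cup Y')\supseteq\Sigma(Y)+\Sigma(Y')$. The new sumset contains translates of $sK+\{0,\bfv,\dots,s\bfv\}$ up to small errors, and those errors are absorbed by shrinking the dilate by a constant factor. Properness at scale $s$ is then recovered by passing to a sublattice whenever $f$ fails to be injective on $sK\cap\Z^d$, which only lowers the rank. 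I expect this combined bookkeeping of filling and properness to take most of the work, and it is exactly what \cite{conlon2023homogeneous} carries out.
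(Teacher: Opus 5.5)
There is a genuine gap: your scheme cannot produce a generating set of size $\Oh(s)$.

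\textbf{The size of $A'$.} Your iteration adds a fresh batch of $s$ elements to $A'$ in every round. Your own count allows $\Oh_\beta(\log\Delta)$ rounds, so you only get $\lvert A'\rvert=\Oh(s\log\Delta)$. Meanwhile, in your scheme the elements consumed each round go into $A'$, and the only discarded elements are the final $\le Cs$ leftovers. You have therefore swapped the two bounds. The content of the theorem is exactly that only the garbage pays the $\log\Delta$, while $\Oh(s)$ elements already fill a translate of the $s$-dilate. Reclassifying old batches as garbage does not help, because your filling invariant is built cumulatively from all of them.

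The paper's proof (\cref{thm:algorithmiccfp}, \cref{alg:main}) is organized the other way around. The $\Oh_\beta(\log\Delta)$ rounds serve only to shrink the potential $\lvert 2s(\supp(\hat A)\cup\{0\})\rvert$ by a constant factor each time. Every failed round's batch is thrown away as garbage. The generating set is a single fresh batch $A'=G^{(1)}\sqcup\cdots\sqcup G^{(\ell_\beta)}$ of size $s\ell_\beta$. It is taken in the round where the check at \cref{line:check} shows that removing it changes the potential by less than a factor $1+(2D_\beta)^{-1}$.

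\textbf{The growth step.} This step is also not established. From more than $Cs$ elements outside $Q$ you cannot pigeonhole $s$ of them into ``few directions modulo $Q$'' unless they occupy only $\Oh(C)$ classes, and nothing guarantees that. Individual leftover elements need not carry any additive structure for a Freiman-type argument to exploit. Likewise, absorbing the accumulated errors $\sum_{j\in S}\bfe_j$ by a constant-factor shrinkage is asserted, not argued; a priori these sums live on the same scale as $sK$.

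The missing idea is to extract structure from iterated sumsets, where bounded doubling at some scale is automatic, rather than growing the progression element by element. The paper does this in four steps.
\begin{enumerate}
\item Let $Y_0=\supp(\hat A)\cup\{0\}$. \Cref{lem:hdim} finds, by telescoping, an intermediate scale $q$ with $\lvert 2qY_0\rvert=\Oh_\beta(\lvert qY_0\rvert)$. Combined with the box-filling \cref{lem:fillboxreduced}, this gives $\lvert 2sY_0\rvert\ll_\beta\lvert sY_0\rvert$.
\item It then applies Freiman's theorem to $sY_0$ and divides the box widths by $2s$ (\cref{lem:cfp-sumset-growth}). This yields a $2s$-proper $P\supseteq Y_0$ with $\lvert sY_0\rvert\gg_\beta s^d\lvert P\rvert$. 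This is where properness comes from. Passing to a sublattice, as you suggest, would in general lose $\hat A\subseteq P$; what lowers the rank is quotienting by a short kernel vector.
\item The greedy rule together with \cref{lem:growelement} makes each $\Sigma(f_P^{-1}(G^{(i)}))$ dense in $sB_P\cap\Z^d$. Stability of the potential forces $f_P^{-1}(\supp(\hat A\setminus A'))$ to generate $\Z^d$, and the greedy rule then forces each $f_P^{-1}(G^{(i)})$ to generate $\Z^d$ as well (\cref{lem:gapgenerates,lem:eachbigenerates}).
\item Finally, \cref{lem:fillboxreduced} applied to these $\ell_\beta$ sets gives $(\boldsymbol{\tau}+sB_P)\cap\Z^d\subseteq\Sigma(f_P^{-1}(A'))$.
\end{enumerate}

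\textbf{Initialization.} Your rank-$1$ start via S\'ark\"ozy or Szemer\'edi--Vu is unavailable here. Those results need on the order of $\sqrt{\Delta}$ elements, whereas $s$ may be as small as $\Delta^{1/\beta}$.
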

In this paper, we give an algorithmic version of \cref{thm:cfp-intro}, which computes the GAP $P$ and the sets $A'\subsetneq \hat A\subseteq A$ in $\tO_\beta(s^2\Delta)$ time. Along the way, we also simplify certain steps from the original proof in \cite{conlon2023homogeneous};
see \cref{sec:algorithmiccfp}.

The utility of \cref{thm:cfp-intro} stems from the following lemma, showing that a small generating set leads to the filling phenomenon; see \cref{fig:overview-filling} for an illustration.  Here, 
$K\ominus D=\{\bfx:\bfx+D\subseteq K\}$ is the \emph{erosion}.
\begin{lemma}[Special case of \cref{lem:innerpoints}]
    \label{lem:innerpoints-specialized}
Let $H\subseteq \R^d$ be an origin-symmetric convex body with $\spa(H\cap \Z^d)=\R^d$
and $X\subseteq H\cap \Z^d$. If a subset $X'\subseteq X$
is $d$-generating for $H\cap \Z^d$, then
$(\caZ(X)\ominus \lvert X'\rvert H)\cap \Z^d\subseteq \Sigma(X)$.
\end{lemma}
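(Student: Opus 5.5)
The plan is to split $X$ into the generating part and the rest, handle the rest by a rounding argument on its zonotope, and let $X'$ absorb the rounding error. Write $k=\lvert X'\rvert$ and $Y=X\setminus X'$, so that $\Sigma(X)=\Sigma(Y)+\Sigma(X')$ and $\caZ(X)=\caZ(Y)+\caZ(X')$. By hypothesis there is $\boldsymbol{\tau}\in\Z^d$ with $(\boldsymbol{\tau}+dH)\cap\Z^d\subseteq\Sigma(X')$. Fix an integer point $\bfz\in\caZ(X)\ominus kH$. It suffices to find $\bfs\in\Sigma(Y)$ with $\bfz-\boldsymbol{\tau}-\bfs\in dH$. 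Then $\bfz-\bfs\in(\boldsymbol{\tau}+dH)\cap\Z^d\subseteq\Sigma(X')$, and hence $\bfz\in\Sigma(Y)+\Sigma(X')=\Sigma(X)$.

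\emph{Step 1: $\bfz-\boldsymbol{\tau}\in\caZ(Y)$.} Since $\boldsymbol{\tau}\in\Sigma(X')$, write $\boldsymbol{\tau}=\sum_{\bfx\in T}\bfx$ for some $T\subseteq X'$. Then
\[
\caZ(X')-\boldsymbol{\tau}=\sum_{\bfx\in X'\setminus T}[0,1]\,\bfx+\sum_{\bfx\in T}[-1,0]\,\bfx .
\]
Each summand lies in $H$, because $H$ is convex and origin-symmetric and $\bfx\in H$. Therefore $\caZ(X')-\boldsymbol{\tau}\subseteq kH$. Combining this with $\bfz+kH\subseteq\caZ(X)$ gives
\[
(\bfz-\boldsymbol{\tau})+\caZ(X')\subseteq\caZ(Y)+\caZ(X'),
\]
that is, $\bfz-\boldsymbol{\tau}\in(\caZ(Y)+\caZ(X'))\ominus\caZ(X')$. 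For compact convex sets the cancellation law $(A+B)\ominus B=A$ holds. This follows from support functions: $\bfp+B\subseteq A+B$ gives $h_{\bfp}+h_B\le h_A+h_B$, so $h_{\bfp}\le h_A$, so $\bfp\in A$. Hence $\bfz-\boldsymbol{\tau}\in\caZ(Y)$. This is the step I expect to carry the real content. The erosion by $kH$ is exactly what is needed to swallow the translate $\caZ(X')-\boldsymbol{\tau}$.

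\emph{Step 2: rounding inside $\caZ(Y)$.} Set $\bfp=\bfz-\boldsymbol{\tau}\in\caZ(Y)$ and consider the polytope
\[
\Bigl\{\bflambda\in[0,1]^Y:\ \sum_{\bfy\in Y}\lambda_{\bfy}\bfy=\bfp\Bigr\}.
\]
It is nonempty and bounded, so it has a vertex $\bflambda$. At a vertex, at least $\lvert Y\rvert$ of the defining constraints are tight. At most $d$ of these are the equality rows, so at least $\lvert Y\rvert-d$ are box constraints, and at most $d$ coordinates $\lambda_{\bfy}$ lie strictly in $(0,1)$. Let $S=\{\bfy:\lambda_{\bfy}=1\}$ and $\bfs=\sum_{\bfy\in S}\bfy\in\Sigma(Y)$. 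Then $\bfp-\bfs$ is a sum of at most $d$ terms $\mu\bfy$ with $\mu\in(0,1)$ and $\bfy\in H$. Each term lies in $H$ because $H$ is convex and contains $\zero$, so $\bfp-\bfs\in dH$.

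\emph{Conclusion.} The point $\bfz-\bfs$ is an integer point, and $\bfz-\bfs-\boldsymbol{\tau}=\bfp-\bfs\in dH$. So $\bfz-\bfs\in(\boldsymbol{\tau}+dH)\cap\Z^d\subseteq\Sigma(X')$, and therefore $\bfz\in\Sigma(X)$. The hypothesis $\spa(H\cap\Z^d)=\R^d$ does not seem to be needed in this argument beyond making the setting nondegenerate. The only points to double-check are the cancellation law for Minkowski erosion, which needs convexity of $\caZ(Y)$ (true, since it is a zonotope), and the vertex-counting bound on fractional coordinates.
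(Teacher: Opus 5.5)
Your proof is correct and follows the paper's argument essentially step for step. You translate by $\boldsymbol{\tau}$, use $\caZ(X')-\boldsymbol{\tau}\subseteq \lvert X'\rvert H$ and the cancellation law $(A+B)\ominus B=A$ to land in $\caZ(X\setminus X')$, and then round a vertex solution with at most $d$ fractional coordinates. The only step you leave implicit is $\boldsymbol{\tau}\in\Sigma(X')$, which holds because $\zero\in dH$ by origin-symmetry; your support-function proof of the cancellation law simply replaces the paper's citation of Schneider.
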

The proof of \cref{lem:innerpoints-specialized} follows from a standard rounding argument; see also \cite[Lemmas~30 and 31]{ChenHMZ26} and
\cite[Lemma~13]{pham2025sharp}.

Combining \cref{thm:cfp-intro} and \cref{lem:innerpoints-specialized} shows that, except for $|A|-|\hat A| = \tO_{\beta}(s)$ garbage items, the remaining input set $\hat A$ can be embedded into a GAP $P=f(H\cap \Z^d)$ as
\[
  \hat A=f(X),\,\, A'=f(X'),\,\, \text{ and } X'\subseteq X\subseteq H\cap \Z^d,
\]
with $X'$ $s$-generating for $H\cap \Z^d$,
so that all the deep lattice points in $\caZ(X) \ominus |X'|H$ are mapped by $f$ to attainable subset sums of $\hat A$ (provided $s\ge d$, which is true in our applications).
We remark that Chen, Huang, Mao, and Zhang \cite{ChenHMZ26} established a similar filling phenomenon for subset sums of dense vector sets under certain non-degeneracy conditions.

\begin{figure}[ht]
\centering
\begin{tikzpicture}[ov figure,font=\small,scale=.9]
  \begin{scope}[shift={(.2,1.05)},x=1.3cm,y=1.3cm]
    \fill[ovblue] (0.0000,0.0000) circle[radius=1.65pt];
    \fill[ovblue] (0.1000,0.6600) circle[radius=1.65pt];
    \fill[ovblue] (0.1200,-0.2400) circle[radius=1.65pt];
    \fill[ovblue] (0.2100,0.1800) circle[radius=1.65pt];
    \fill[ovblue] (0.2200,0.4200) circle[radius=1.65pt];
    \fill[ovblue] (0.3100,0.8400) circle[radius=1.65pt];
    \fill[ovblue] (0.3300,-0.0600) circle[radius=1.65pt];
    \fill[ovblue] (0.4200,0.3900) circle[radius=1.65pt];
    \fill[ovblue] (0.4300,0.6000) circle[radius=1.65pt];
    \fill[ovblue] (0.5200,0.0500) circle[radius=1.65pt];
    \fill[ovblue] (0.5200,1.0500) circle[radius=1.65pt];
    \fill[ovblue] (0.5400,0.1500) circle[radius=1.65pt];
    \fill[ovblue] (0.6200,0.7100) circle[radius=1.65pt];
    \fill[ovblue] (0.6300,0.5700) circle[radius=1.65pt];
    \fill[ovblue] (0.6400,-0.1900) circle[radius=1.65pt];
    \fill[ovblue] (0.6400,0.8100) circle[radius=1.65pt];
    \fill[ovblue] (0.7300,0.2300) circle[radius=1.65pt];
    \fill[ovblue] (0.7300,1.2300) circle[radius=1.65pt];
    \fill[ovblue] (0.7400,0.4700) circle[radius=1.65pt];
    \fill[ovblue] (0.7500,0.3300) circle[radius=1.65pt];
    \fill[ovblue] (0.8300,0.8900) circle[radius=1.65pt];
    \fill[ovblue] (0.8500,-0.0100) circle[radius=1.65pt];
    \fill[ovblue] (0.8500,0.9900) circle[radius=1.65pt];
    \fill[ovblue] (0.9400,0.4400) circle[radius=1.65pt];
    \fill[ovblue] (0.9500,0.6500) circle[radius=1.65pt];
    \fill[ovblue] (1.0400,1.1000) circle[radius=1.65pt];
    \fill[ovblue] (1.0600,0.2000) circle[radius=1.65pt];
    \fill[ovblue] (1.1500,0.6200) circle[radius=1.65pt];
    \fill[ovblue] (1.1600,0.8600) circle[radius=1.65pt];
    \fill[ovblue] (1.2500,1.2800) circle[radius=1.65pt];
    \fill[ovblue] (1.2700,0.3800) circle[radius=1.65pt];
    \fill[ovblue] (1.3700,1.0400) circle[radius=1.65pt];
  \end{scope}
  \node at (1.1,-.50) {$\Sigma(X\setminus X')$};
  \node[font=\large] at (2.5,1.6) {$+$};
  \begin{scope}[shift={(4.0,1.7)},x=1.3cm,y=1.3cm]
    \draw[ovblue,line width=.8pt,fill=ovblue!3] (-.6,-.6) rectangle (.6,.6);
    \filldraw[fill=ovblue!45,draw=ovblue,line width=.7pt]
      (-.51,-.16) rectangle (.19,.44);
  \end{scope}
  \node at (4,-.50) {$\Sigma(X')$};
  \draw[ovblue,line width=.5pt] (3.84,2.08)--(3.35,2.86);
  \node[above,inner sep=2pt] at (3.35,2.86) {$\boldsymbol\tau+dH$};
  \draw[ov arrow] (5.25,1.6)--(6.25,1.6);
  \begin{scope}[shift={(7.6,1.05)},x=1.3cm,y=1.3cm]
    \path[ov shell] (-0.6000,-0.6000)--(-0.4800,-0.8400)--(0.7200,-0.8400)--(1.2400,-0.7900)--(1.4500,-0.6100)--(1.8700,-0.2200)--(1.9700,0.4400)--(1.9700,1.6400)--(1.8500,1.8800)--(0.6500,1.8800)--(0.1300,1.8300)--(-0.0800,1.6500)--(-0.5000,1.2600)--(-0.6000,0.6000)--cycle;
    \path[fill=ovblue!45,draw=none] (-0.5100,-0.1600) rectangle (0.1900,0.4400);
    \path[fill=ovblue!45,draw=none] (-0.4100,0.5000) rectangle (0.2900,1.1000);
    \path[fill=ovblue!45,draw=none] (-0.3900,-0.4000) rectangle (0.3100,0.2000);
    \path[fill=ovblue!45,draw=none] (-0.3000,0.0200) rectangle (0.4000,0.6200);
    \path[fill=ovblue!45,draw=none] (-0.2900,0.2600) rectangle (0.4100,0.8600);
    \path[fill=ovblue!45,draw=none] (-0.2000,0.6800) rectangle (0.5000,1.2800);
    \path[fill=ovblue!45,draw=none] (-0.1800,-0.2200) rectangle (0.5200,0.3800);
    \path[fill=ovblue!45,draw=none] (-0.0900,0.2300) rectangle (0.6100,0.8300);
    \path[fill=ovblue!45,draw=none] (-0.0800,0.4400) rectangle (0.6200,1.0400);
    \path[fill=ovblue!45,draw=none] (0.0100,-0.1100) rectangle (0.7100,0.4900);
    \path[fill=ovblue!45,draw=none] (0.0100,0.8900) rectangle (0.7100,1.4900);
    \path[fill=ovblue!45,draw=none] (0.0300,-0.0100) rectangle (0.7300,0.5900);
    \path[fill=ovblue!45,draw=none] (0.1100,0.5500) rectangle (0.8100,1.1500);
    \path[fill=ovblue!45,draw=none] (0.1200,0.4100) rectangle (0.8200,1.0100);
    \path[fill=ovblue!45,draw=none] (0.1300,-0.3500) rectangle (0.8300,0.2500);
    \path[fill=ovblue!45,draw=none] (0.1300,0.6500) rectangle (0.8300,1.2500);
    \path[fill=ovblue!45,draw=none] (0.2200,0.0700) rectangle (0.9200,0.6700);
    \path[fill=ovblue!45,draw=none] (0.2200,1.0700) rectangle (0.9200,1.6700);
    \path[fill=ovblue!45,draw=none] (0.2300,0.3100) rectangle (0.9300,0.9100);
    \path[fill=ovblue!45,draw=none] (0.2400,0.1700) rectangle (0.9400,0.7700);
    \path[fill=ovblue!45,draw=none] (0.3200,0.7300) rectangle (1.0200,1.3300);
    \path[fill=ovblue!45,draw=none] (0.3400,-0.1700) rectangle (1.0400,0.4300);
    \path[fill=ovblue!45,draw=none] (0.3400,0.8300) rectangle (1.0400,1.4300);
    \path[fill=ovblue!45,draw=none] (0.4300,0.2800) rectangle (1.1300,0.8800);
    \path[fill=ovblue!45,draw=none] (0.4400,0.4900) rectangle (1.1400,1.0900);
    \path[fill=ovblue!45,draw=none] (0.5300,0.9400) rectangle (1.2300,1.5400);
    \path[fill=ovblue!45,draw=none] (0.5500,0.0400) rectangle (1.2500,0.6400);
    \path[fill=ovblue!45,draw=none] (0.6400,0.4600) rectangle (1.3400,1.0600);
    \path[fill=ovblue!45,draw=none] (0.6500,0.7000) rectangle (1.3500,1.3000);
    \path[fill=ovblue!45,draw=none] (0.7400,1.1200) rectangle (1.4400,1.7200);
    \path[fill=ovblue!45,draw=none] (0.7600,0.2200) rectangle (1.4600,0.8200);
    \path[fill=ovblue!45,draw=none] (0.8600,0.8800) rectangle (1.5600,1.4800);
    \node[above,inner sep=2pt] at (1.65,1.90) {$\caZ(X)$};
  \end{scope}
  \node at (8.5,-.50) {$\Sigma(X\setminus X')+\Sigma(X')$};
\end{tikzpicture}
\caption{An illustration of
    \cref{lem:innerpoints-specialized}.
When $X$ contains a small $d$-generating subset $X'$, sufficiently deep lattice points of $\caZ(X)$ are in $\Sigma(X)$, since convolving $\Sigma(X\setminus X')$ with $\Sigma(X')$ translates the solid region $(\boldsymbol\tau+dH)\cap\Z^d\subseteq\Sigma(X')$ to cover every lattice point in $\caZ(X)\ominus |X'|H$.}
\label{fig:overview-filling}
\end{figure}

We apply the CFP theorem, with $s=\Delta^\delta$ for small fixed $\delta\in (0,1)$, so $d\le d_{\mathrm{max}} = \Oh_{\delta}(1)$, and the size of the garbage item set $A\setminus \hat A$ is thus $\tO(s) = \tO(\Delta^{\delta})$.
Our target time complexity for \textsc{Subset Sum} is $\Oh(n)+\tO_\delta(\Delta s^{\Oh(1)})$.

As a toy example, if the CFP theorem returns $d=1$, then we can immediately solve the instance as follows.  Let the returned 1D progression be $P = g\cdot \{-L,\dots,L\}$.
Since $A'$ is $s$-generating for $P$ whereas $|A'| = \Oh_\delta(s)$, we have $gL=\Oh_\delta(\Delta)$.
By \cref{lem:innerpoints-specialized}, every multiple of $g$ in the middle interval
$[\lvert A'\rvert gL,\ \sum_{a\in\hat A}a-\lvert A'\rvert gL]$ is a subset sum of $\hat A$.
The two remaining boundary intervals have total length at most
$2\lvert A'\rvert gL=\Oh_\delta(s \Delta)$, so the subset sums of $\hat A$ in them can be computed by Bringmann's algorithm \cite{Bringmann17} in $\tO_\delta(s\Delta)$ time. Finally compute the subset sums of the $\tO_\delta(s)$ garbage items $A\setminus \hat A$ by Bringmann's algorithm in $\tO_\delta(s\Delta)$ time, and merge the results. The overall runtime is $\tO_\delta(n+s \Delta)$ (provided the CFP output is given for free).

Unfortunately, if the CFP theorem returns $d\ge 2$, then the boundary part $\caZ(X)\setminus (\caZ(X)\ominus |X'|H)$ is a shell whose volume may no longer be bounded by $\Oh_{\delta}(s\Delta)$, unlike in the one-dimensional case. This is the main challenge that we have to address in our algorithm.

We briefly pause here to discuss some previous works.
Many pseudopolynomial-time \textsc{Subset Sum} and \textsc{Knapsack}
algorithms rely on long one-dimensional arithmetic progressions in subset
sums (e.g., \cite{GalilM91,BringmannW21,ChenLMZsoda24,clmzknapsackstoc24,Bringmann24,clmzfocs24,Jin24,linchenstoc25,weakapproxicalp,
linchensoda26}) in an analogous way: the utility of these progressions in these works is essentially the
$d=1$ case of \cref{lem:innerpoints-specialized}, sometimes combined with proximity arguments.
The existence of such progressions was usually proved via theorems of Freiman and S\'ark\"ozy
\cite{freiman93,Sarkozy2,fox}, or Szemer\'edi and Vu
\cite[Corollary~5.2]{szemeredivu-apsumsets} under certain density conditions;
for example, \cite{freiman93,Sarkozy2,fox} showed that every set $A\subseteq[\Delta]$ with
$\lvert A\rvert\ge C\sqrt{\Delta}$ (for some absolute constant $C>0$) has a homogeneous arithmetic progression of length $\Delta$ in $\Sigma(A)$.
The failure of this theorem for $|A|= o(\sqrt{\Delta})$ has been the main obstacle to improving many previous algorithms relying on one-dimensional progressions.

Chaimovich suggested algorithmically using higher dimensional structure,
\cite[pp.~350--351]{chaimovichsurvey}, and Szemer\'edi and Vu did extend
the aforementioned theorem to GAPs \cite[Theorem~7.1]{szemeredivu-apsumsets}, but its algorithmic utility is unclear. It turns out that the structural result of Conlon, Fox, and Pham
\cite{conlon2023homogeneous} provides the right higher dimensional tool for the algorithmic breakthrough.

\subsection{Shell volume under idealized assumptions}
\label{subsec:overview-boundary}
We now consider the case where the CFP theorem returns a GAP $f(H\cap \Z^d)$ of rank $d\ge 2$. Recall that $X\subseteq H\cap\Z^d$ encodes the input set $\hat A$ (excluding garbage items) by $f(X) = \hat A$, and the $s$-generating set $X'$ has size $\lvert X'\rvert= \Oh_{\delta}(s) = \Oh_\delta(\Delta^\delta)$. Write $D=|X'|H$.
We are interested in bounding the volume of the shell
\begin{equation}
\label{eqn:intro-shell}
\caZ(X)\setminus (\caZ(X)\ominus D).
\end{equation}
(Strictly speaking, we are interested in the number of its points contained in $\Z^d$, but they are roughly the same for our purposes.)
In general, this shell may have volume much larger than $s \Delta$, unlike in the one-dimensional case. 
Nevertheless, we can bound the volume by $\Oh_\delta(\lambda s\Delta)$ under the following two idealized assumptions.
Here, 
$\caZ_0(X)\coloneqq \{\sum_{\bfx \in X} a_{\bfx} \bfx : a_{\bfx}\in [-1/2,1/2]\}$ denotes the centered zonotope.
\begin{itemize}
  \item \emph{Relative fatness}: for some small fatness parameter $\lambda\ge 1$ (think of $\lambda = \Oh_\delta(1)$ or $\Oh_\delta(\log \Delta)$),
\[ \lvert X\rvert H\subseteq \lambda \caZ_0(X). \]
(Compare with this trivial containment in the other direction: $\caZ_0(X)\subseteq (\lvert X\rvert/2) H$.)
\item \emph{Injectivity}: $f$ is injective on $\caZ(X)\cap \Z^d$.
\end{itemize}

Writing $\fw(S)\coloneqq \sup f(S)-\inf f(S)$, injectivity implies 
\[|f(\caZ(X)\cap \Z^d)| \le 1 +  \fw(\caZ(X)) = 1 + \sum_{\bfx\in X}f(\bfx) \le 1+ |X|\Delta.\]  

Writing
\[\eps\coloneqq \lambda\lvert X'\rvert/\lvert X\rvert=\Oh_\delta(\lambda s/\lvert X\rvert),\]
relative fatness gives
\begin{equation}
  \label{eqn:intro-cazx}
 \caZ_0(X) \ominus D \supseteq (1-\eps)\caZ_0(X),
\end{equation}
so
\[ \vol_d\bigl(\caZ(X)\setminus (\caZ(X)\ominus D)\bigr)
 \le \Oh_\delta(\eps)\,\vol_d(\caZ(X)).\]
Chaining the two inequalities (and using volume to approximate lattice point counts) shows that  the shell 
\eqref{eqn:intro-shell}
has volume $\Oh_{\delta}(\lambda s\Delta)$, as desired (and matching the 1-dimensional toy case).
 
Both assumptions are necessary for the 
$\Oh_{\delta}(\lambda s\Delta)$
shell volume bound. See 
\cref{fig:overview-shell-obstructions} for illustrations of how the volume bound fails when exactly one of the two assumptions fails. 

In the next subsection, we explain the algorithm computing subset sums under the two idealized assumptions. 
Later, we explain the ideas needed to remove these assumptions. 
 
\begin{figure}[ht]
\centering
\OverviewShellObstructions
\caption{Left: relative fatness fails. The shell volume can be almost as large as the volume of the entire zonotope. 
\newline Right: Let $\Z^2\cap\ker f=\Z\boldsymbol{\ell}$, so points differing by integer multiples of $\boldsymbol{\ell}$ represent the same subset sums. Due to this redundancy, the volume of $\caZ(X)$ is much larger than $\lvert f(\Sigma(X))\rvert$. Consequently, the shell volume  is also too large.
Actually, points in the light yellow parts of the shell correspond to attainable subset sums, as they are $\boldsymbol{\ell}$-congruent to points in the deep part $\caZ(X)\ominus D$. So we actually only care about the dark yellow parts of the shell; we shall come back to this in \cref{subsec:overview-quotient-geometry}. }
\label{fig:overview-shell-obstructions}
\end{figure}

\subsection{Computing subset sums under idealized assumptions}
\label{subsec:overview-computing}
Assume both the relative fatness (with fatness parameter $\lambda = \Oh_{\delta}(1)$ for convenience) and the injectivity assumptions. The previous subsection showed that the shell \eqref{eqn:intro-shell} of the zonotope $\caZ(X)$ has volume $\Oh_\delta(s\Delta)$. In this subsection, we sketch how to compute all the subset sums in this shell \eqref{eqn:intro-shell}, in near-linear time in the shell volume.

Again, we can actually solve the all-target version of subset sum in this way, since the deep subset sums inside $\caZ(X)\ominus D$ are easy to handle; see \cref{subsec:overview-implementation}. The reason we would like to solve the all-target instead of single-target version is because in the end we have to merge the subset sums of $\hat A = f(X)$ with the subset sums of the garbage items $A\setminus \hat A$.

\paragraph*{Cap covering.}
We use the economic cap covering of Ewald, Larman, and Rogers, and
B\'ar\'any and Larman \cite{ewaldlarmanrogers1970,baranylarman1988}
(see also \cite[Lemma~3.1]{aryafonsecamount2017}). 
For any convex body $K\subseteq\R^d$ containing $\zero$ and $0<\eps<1$,
it provides a collection of caps $Q_i$ of $K$ which covers $K\setminus(1-\eps)K$, with total cap volume 
\[ \sum_i\vol_d(Q_i)\le\Oh_d(\eps)\vol_d(K). \]
Here, a \emph{cap} is the intersection of $K$ with a half space. 
Taking $K=\caZ_0(X)$, by \cref{eqn:intro-cazx}, we can therefore cover the shell $\caZ(X)\setminus(\caZ(X)\ominus D)$ with near-optimal total cap volume $\Oh_\delta(s\Delta)$; see
\cref{fig:overview-economic-covering-caps} for an illustration.

\begin{figure}[!htb]
\centering
\begin{minipage}[t]{.59\linewidth}
\vspace{0pt}
\centering
\OverviewEconomicCovering
\caption{Left: a cap covering of the shell $\caZ(X)\setminus (\caZ(X)\ominus D)$. Right: a covering of the shell using Macbeath regions.}
\label{fig:overview-economic-covering}
\refstepcounter{overviewpanel}\label{fig:overview-economic-covering-caps}
\end{minipage}\hfill
\begin{minipage}[t]{.39\linewidth}
\vspace{0pt}
\centering
\OverviewCapMacbeath
\caption{Macbeath regions at scales $1$ and $3/2$ centered at $\bfx$.}
\label{fig:overview-cap-macbeath}
\end{minipage}
\end{figure}
\FloatBarrier

We follow the literature to construct such cap coverings using \emph{Macbeath regions}~\cite{macbeath}, defined as follows (see \cref{fig:overview-cap-macbeath}):
\begin{restatable*}[Macbeath regions]{definition}{EuclideanMacbeathDefinition}
\label{def:macbeath-euc}
For a convex body $K\subseteq \R^d$, $\bfx\in K$, and $\lambda\ge 0$, the \emph{$\lambda$-scaled Macbeath region} at $\bfx$ is defined as
\[
 M_K^\lambda(\bfx)
 =\bfx+\lambda\bigl((K-\bfx)\cap (\bfx-K)\bigr).
\]
\end{restatable*}
Macbeath regions and cap coverings are standard tools in polytope approximation, with broad applications in computational geometry; see, e.g., \cite{ewaldlarmanrogers1970,BronnimannCP93,barany2000mregions,aryafonsecamount2017,AbdelkaderM18,AryaM25} and the references therein.
At a high level,  Macbeath regions satisfy a nice expansion-containment property (see \cref{sec:macbeath-regions}) which allows one to greedily construct a net of disjoint Macbeath regions (at scale $1/2$) centered in the shell, so that their constant-factor expansions cover the shell, and their total volume can be bounded using disjointness. Then, approximating Macbeath regions by caps gives the cap covering.

\paragraph*{Subset sums in caps.}
We now sketch how to compute the subset sums $\Sigma(X)\cap Q$ in a cap
$Q=\caZ(X)\cap\{\bfy:\langle\bfu,\bfy\rangle\le a\}$, in near-linear
$\tO_d(|Q\cap\Z^d|)$ time.
Assume $Q$ has sufficient thickness,
which can be guaranteed by the cap covering construction. 
Take an apex of $Q$, namely a vertex $\bfr\in\caZ(X)$ minimizing
$\langle\bfr,\bfu\rangle$ (by integrality, $\bfr\in\Z^d$).
By translation and possibly flipping the signs of items in $X$, we may
assume without loss of generality that $\bfr=\zero$, which implies that
$\langle\bfu,\bfx\rangle\ge 0$ for all $\bfx\in X$.
Then, we may assume without loss of generality that all $\bfx\in X$ are in $Q$ (otherwise, $\bfx$ is useless).

Applying a discrete John-type lemma to $Q-Q$ gives an
invertible linear map $\Phi$ with $\Phi\Z^d=\Z^d$ and a box
$B=\prod_{i=1}^d[-L_i,L_i]$, with $L_i\in\Pos$, such that
$\Phi(Q-Q)\subseteq B\subseteq\kappa_d\Phi(Q-Q)$, where $\kappa_d$
depends only on $d$ (see \cref{lem:latticeboxinandout}).
Since $\zero\in Q$, we have $\Phi(X)\subseteq\Phi(Q)\subseteq B$.
$|B\cap\Z^d|=\Oh_d(|Q\cap\Z^d|)$.

Partition $X=X_1\sqcup\cdots\sqcup X_{2^d}$ according to the orthants
containing $\Phi\bfx$, assigning zero coordinates consistently.
For each $j$, compute $S_j=\Sigma(\Phi X_j)\cap B$.
After reflecting the negative coordinates, this is the problem of computing
subset sums of nonnegative integer vectors inside $\prod_i[0,L_i]$,
which takes $\tO_d(|B\cap\Z^d|)$ time (\cref{lem:box-subset-sums}).
This generalizes Bringmann's algorithm \cite{Bringmann17} (derandomized by
Chan~\cite{chan26}), for computing $\Sigma(A)\cap[0,t]$ for a positive
integer input set $A$ in $\tO(|A|+t)$ time; the black-box reduction is
proved in \cref{app:box-subset-sums}.

If $Y\subseteq X$ has sum in $Q$, then each $Y\cap X_j$ also has sum
in $Q$, because all $\langle\bfu,\bfx\rangle$ are nonnegative.
Thus each transformed partial sum belongs to $S_j$.
Conversely, every element of $S_1+\cdots+S_{2^d}$ is a subset sum of
$\Phi(X)$. Hence keeping exactly those $\bfz\in S_1+\cdots+S_{2^d}$
with $\langle\bfu,\Phi^{-1}\bfz\rangle\le a$ gives
$\Phi(\Sigma(X)\cap Q)$.
We compute the sumset by convolution inside $2^dB$, which has
$\Oh_d(|B\cap\Z^d|)$ lattice points, and then apply $\Phi^{-1}$ to
the selected sums. The total time is $\tO_d(|Q\cap\Z^d|)$.

\subsection{Relaxing the injectivity assumption}
\label{subsec:overview-quotient-geometry}
We now discuss how to relax the assumption that $f$ is injective on $\caZ(X)\cap \Z^d$.
In this case, we can no longer bound the volume of the zonotope $\caZ(X)$ using the number of distinct integer values in $f(\Sigma(X))$, since different points in the zonotope may map to the same value under $f$.
But, in this case, many points $\bfx$ in the shell \eqref{eqn:intro-shell} are also known to map to attainable subset sums, as long as there is another deep point $\bfx'\in \caZ(X)\ominus D$ such that $f(\bfx) = f(\bfx')$. Therefore, we are actually interested in a subset of the shell.

To develop this idea, define the lattice $\Lambda=\Z^d\cap\ker f$, so that $f(\bfx) = f(\bfx')$ for two points $\bfx,\bfx' \in \caZ(X)\cap \Z^d$ precisely when $\bfx-\bfx'\in \Lambda$.
Write $\pi_\Lambda \colon \R^d\to \R^d/\Lambda$ for the quotient map, and let $\vol_d$ also denote the volume in $\R^d/\Lambda$. Therefore, we actually only care about the following \emph{quotient shell}:
\[
 \pi_\Lambda(\caZ(X))\setminus\pi_\Lambda(\caZ(X)\ominus D).
\]
See 
the right part of \cref{fig:overview-shell-obstructions} and \cref{fig:overview-ball-shell} for illustrations in 2D and 3D, respectively.

\begin{figure}[!htb]
\centering
\OverviewBallQuotientShell
\caption{The full shell $\caZ(X)\setminus(\caZ(X)\ominus D)$ (left) and representatives of
the quotient shell (right), modulo $\Lambda=\Z\boldsymbol{\ell}_1+\Z\boldsymbol{\ell}_2$.}
\label{fig:overview-ball-shell}
\end{figure}

The quotient shell satisfies a volume bound similar to that in
\cref{subsec:overview-boundary}: whenever $\caZ(X)$ contains a translate of $\eps^{-1}D$,
\begin{equation}
  \label{eqn:quotientvolbound}
 \vol_d\bigl(\pi_\Lambda(\caZ(X))\setminus\pi_\Lambda(\caZ(X)\ominus D)\bigr)
 \le\Oh_d(\eps)\,\vol_d(\pi_\Lambda \caZ(X)).
\end{equation}
We prove this in \cref{lem:quotient-shell-bound}.
Since $f$ has integer coefficients, the volume of $\pi_\Lambda(\caZ(X))$
is at most the length of the interval $f(\caZ(X))$, which is at most
$\lvert X\rvert\Delta$.
Taking $\eps=\lambda\lvert X'\rvert/\lvert X\rvert$ in \eqref{eqn:quotientvolbound}
gives a shell volume of $\Oh_\delta(\lambda s\Delta)$.

Then, to apply the idea from the previous subsection, we have to develop a quotient variant of cap covering, which only needs the caps to cover the useful part of $\caZ(X)$ that could map to the quotient shell under $\pi_{\Lambda}$, and requires that the total cap volume (measured in $\R^d/\Lambda$) is not much more than the quotient volume bound in \cref{eqn:quotientvolbound}. 
See \cref{sec:quotient-cap-covering}.

We also need to adjust the previous algorithm that computes subset sums in a cap, since the Euclidean volume of the cap $Q$ may be much greater than its volume in $\R^d/\Lambda$.
When this happens, 
we handle this issue by
reparameterizing the cap into a lower-dimensional box of smaller volume, in a way similar to the rank reduction of improper GAPs; see \cref{lem:local-properization}.

\subsection{Relaxing the fatness assumption}
\label{subsec:overview-classification}
\begin{figure}[!htb]
\centering
\OverviewClassClipping
\caption{(a) The shell of $\caZ(X_1)+\caZ(X_2)$ with thickness $sH_1$.
(b) The shells of $\caZ(X_i)$ with thickness $sH_i$. We restrict $\caZ(X_2)$ to
values in $I_2=[t-W_1,t]\cap f(\caZ(X_2))$.}
\label{fig:overview-class-clipping}
\refstepcounter{overviewpanel}\label{fig:overview-class-clipping-full}
\refstepcounter{overviewpanel}\label{fig:overview-class-clipping-restricted}
\end{figure}

We now discuss how to deal with the case where relative fatness fails.
To illustrate the idea, consider the toy case in \cref{fig:overview-class-clipping}, where $X$ is decomposed as
$X=X_1\sqcup X_2$, with $X_1,X_2\subseteq\Z^2$. The set $X_1$ has relatively few
vectors pointing in different directions, while $X_2$ has many more
vectors, all nearly horizontal. Write $H_i=\conv(\pm X_i)$, and
suppose $H_1$ is roughly round and $H_2\subseteq H_1$ is very thin
vertically. Assume each $X_i$ generates $\Z^2$, has an $s$-generating
subset for $H_i\cap\Z^2$ of size $\Oh_\delta(s)$, and satisfies relative
fatness with respect to $H_i$, with a small parameter $\lambda$.

Suppose $\caZ(X_2)$ is much longer horizontally than $\caZ(X_1)$
but no taller. Relative fatness for $X_1\sqcup X_2$ would
require growth in both directions proportional to
$\lvert X_1\rvert+\lvert X_2\rvert$, since
$\conv(\pm(X_1\sqcup X_2))=H_1$. But the height of
$\caZ(X_1)+\caZ(X_2)$ remains comparable to
that of $\caZ(X_1)$. Thus the union can fail relative fatness even though
$X_1$ and $X_2$ separately satisfy it, and the shell with thickness
$sH_1$ can be too large; see \cref{fig:overview-class-clipping-full}.

We first compute $f(\Sigma(X_1))$ and $f(\Sigma(X_2))$ separately (with shell 
thicknesses $sH_1$ and $sH_2$), using the algorithm discussed in \cref{subsec:overview-quotient-geometry}.
Then, we combine the results as follows. With the target sum $t$ in mind, we
only keep the sums of $X_2$ that could contribute to $t$. Indeed, set
$W_1=\sum_{\bfx\in X_1}f(\bfx)$. Every subset sum from $X_1$ has
value in $[0,W_1]$, so the relevant values from $X_2$ lie in
$I_2=[t-W_1,t]\cap f(\caZ(X_2))$; see
\cref{fig:overview-class-clipping-restricted}.

We apply the cap covering to
\[
 K=(\caZ(X_1)+\caZ(X_2))\cap
 f^{-1}\bigl(I_2+[0,W_1]\bigr).
\]
This contains $\caZ(X_1)+(\caZ(X_2)\cap f^{-1}(I_2))$, so no pair
of subset sums contributing to $t$ is lost. When $I_2$ is nonempty,
$K$ contains a translate of $\caZ(X_1)$, and its value width is at
most $2W_1\le 2\lvert X_1\rvert\Delta$. Relative fatness of $X_1$
lets us use $\eps=\lambda s/\lvert X_1\rvert$ in the quotient-shell
bound from \cref{subsec:overview-quotient-geometry}. Together with the
width bound, this gives a shell volume of $\Oh_\delta(\lambda s\Delta)$
for thickness $sH_1$. Restricting to $I_2$ is what removes
the dependence on the much larger $\lvert X_2\rvert$.

\paragraph*{Merging groups.}
We now explain how the cap covering helps combine the subset sums of
$X_1$ and $X_2$. Write $K_1=\caZ(X_1)$ and
$K_2=\caZ(X_2)\cap f^{-1}(I_2)$, so $K_1+K_2\subseteq K$.
For a cap $Q_i=K\cap\{\bfx:\langle\bfu_i,\bfx\rangle\ge a_i\}$
containing a lattice point $\bfq_i$, choose a lattice point
$\bfx_i\in K_1$ maximizing $\langle\bfu_i,\bfx_i\rangle$.
If $\bfz_1+\bfz_2\in Q_i$ with $\bfz_j\in K_j$, then
$\bfx_i+\bfz_2\in Q_i$ as well. Subtracting these two points of
$Q_i$, and then subtracting $\bfq_i$ from $\bfx_i+\bfz_2$, gives
\[
 \bfz_1\in\bfx_i+(Q_i-Q_i),
 \qquad
 \bfz_2\in\bfq_i-\bfx_i+(Q_i-Q_i).
\]
Thus, to find pairs of subset sums whose sum lies in $Q_i$, we only
need to consider these two portions of $K_1$ and $K_2$; see
\cref{fig:overview-cap-merge}.

\begin{figure}[!htb]
\centering
\begin{tikzpicture}[ov figure]
  \begin{scope}[shift={(1.1,1.9)}]
    \path[ov shell,fill=ovamber!23] (0,0) circle[radius=1];
    \path[fill=ovblue!17,draw=ovblue!65,line width=.45pt]
      (0,0) circle[radius=.82];
    \begin{scope}
      \clip (0,0) circle[radius=1];
      \path[fill=ovamber!55] (-1,.6) rectangle (1,1);
      \draw[ov cut] (-1,.6)--(1,.6);
    \end{scope}
    \draw[ovblue!65,line width=.45pt] (0,0) circle[radius=.82];
    \draw[ovamber!85!black,line width=.8pt] (0,0) circle[radius=1];
    \fill[black] (0,1) circle[radius=1.6pt];
    \node[above] at (0,1) {$\bfx_i$};
    \fill[ovamber!70!black] (-.42,.82) circle[radius=1.5pt];
    \node[above left,inner sep=2pt] at (-.42,.82) {$\bfz_1$};
    \node at (0,-.18) {$K_1\ominus D$};
    \node[below] at (0,-1) {$K_1$};
  \end{scope}
  \node at (2.45,1.9) {$+$};
  \begin{scope}[shift={(3.7,1.9)}]
    \path[ov shell,fill=ovamber!23] (-.8,-.9) rectangle (.8,.9);
    \path[fill=ovblue!17,draw=ovblue!65,line width=.45pt]
      (-.62,-.72) rectangle (.62,.72);
    \path[fill=ovamber!55,draw=none] (-.8,.5) rectangle (.8,.9);
    \draw[ov cut] (-.8,.5)--(.8,.5);
    \draw[ovblue!65,line width=.45pt] (-.62,-.72) rectangle (.62,.72);
    \draw[ovamber!85!black,line width=.8pt] (-.8,-.9) rectangle (.8,.9);
    \fill[black] (0,.9) circle[radius=1.6pt];
    \node[above] at (0,.9) {$\bfq_i-\bfx_i$};
    \fill[ovamber!70!black] (.42,.77) circle[radius=1.5pt];
    \node[below left,inner sep=2pt] at (.42,.86) {$\bfz_2$};
    \node at (0,-.18) {$K_2\ominus D$};
    \node[below] at (0,-.9) {$K_2$};
  \end{scope}
  \node at (4.95,1.9) {$=$};
  \begin{scope}[shift={(7.15,1.9)}]
    \def\ovMergeK{(-.8,-1.9)--(.8,-1.9)
      arc[start angle=-90,end angle=0,radius=1]
      --(1.8,.9) arc[start angle=0,end angle=90,radius=1]
      --(-.8,1.9) arc[start angle=90,end angle=180,radius=1]
      --(-1.8,-.9) arc[start angle=180,end angle=270,radius=1]--cycle}
    \def\ovMergeErosion{(-.8,-1.72)--(.8,-1.72)
      arc[start angle=-90,end angle=0,radius=.82]
      --(1.62,.9) arc[start angle=0,end angle=90,radius=.82]
      --(-.8,1.72) arc[start angle=90,end angle=180,radius=.82]
      --(-1.62,-.9) arc[start angle=180,end angle=270,radius=.82]--cycle}
    \path[ov shell,fill=ovamber!23] \ovMergeK;
    \path[fill=ovblue!17,draw=none] \ovMergeErosion;
    \begin{scope}
      \clip \ovMergeK;
      \path[fill=ovamber!55,draw=none] (-2,1.5) rectangle (2,2);
      \draw[ov cut] (-2,1.5)--(2,1.5);
    \end{scope}
    \draw[ovblue!65,line width=.45pt] \ovMergeErosion;
    \draw[ovamber!85!black,line width=.8pt] \ovMergeK;
    \fill[black] (0,1.9) circle[radius=1.6pt];
    \node[above] at (0,1.9) {$\bfq_i$};
    \fill[ovamber!70!black] (0,1.59) circle[radius=1.5pt];
    \node[below,inner sep=3pt] at (0,1.48) {$\bfz_1+\bfz_2$};
    \node at (1.05,1.69) {$Q_i$};
    \node at (0,0) {$K\ominus D$};
    \node[below] at (0,-1.9) {$K=K_1+K_2$};
  \end{scope}
\end{tikzpicture}
\caption{Merging $K_1$ and $K_2$ in the schematic case $K=K_1+K_2$.
The cap $Q_i$ of $K$ is highlighted in dark yellow.
The highlighted portions of $K_1$ and $K_2$ are
$K_1\cap(\bfx_i+Q_i-Q_i)$ and
$K_2\cap(\bfq_i-\bfx_i+Q_i-Q_i)$.
Every pair of points in $K_1,K_2$ whose sum lies in $Q_i$ must belong to these two portions respectively.}
\label{fig:overview-cap-merge}
\end{figure}

For each $Q_i$, list the $f$-values of the lattice points in these two
portions, keep those that occur among the already computed subset
sums of $X_1$ and $X_2$, and combine the two lists by sparse convolution.
Our idea of using cap covering for the merging is inspired by
Bringmann and Cassis's near-convex min-plus convolution algorithm
\cite{BringmannC23}.

This toy example has two groups. In general, our algorithm partitions the input items into
logarithmically many groups (together with the extra garbage), compute the subset sums for each of them, and finally merge
in reverse order using the same idea of interval restriction based on the target.
To construct this grouping,
we start from the initial CFP embedding, and repeatedly extract a relatively fat group so that the convex hull of the remaining vectors loses a constant
fraction of its volume, and then apply CFP again to the
remaining vectors.
Note that, unlike in the toy example (where the second group lies in the same dimension, except that the convex hull $H_2$ becomes thinner), the new CFP embedding may introduce new dimensions or become restricted to a lower dimensional subspace. We must also bound how often these changes occur;
see \cref{sec:classify}.

\subsection{Implementation details}
\label{subsec:overview-implementation}

We have discussed the main ideas leading to the algorithm, but there remain many implementation details to fill in.
In particular, our algorithmic ideas require performing certain geometric operations on polytopes and lattices, so we have to account for the time complexity to implement them. 
Notably, the most implementation-heavy part of the algorithm lies in constructing our quotient version of the cap covering by
greedily packing Macbeath regions. 

Fortunately, the polytopes that we encounter are zonotopes of the input vectors in $X$, possibly modified by a constant number of additional operations such as affine transformations, intersections, or Minkowski sums. 
Since the zonotope $\caZ(X)$ only has $\lvert\supp X\rvert$ many possibly weighted generators (our $X$ can actually be a multiset, so each generator is weighted by its multiplicity, and $\supp X$ denotes its support), we can answer the linear optimization query $\arg\max_{\bfy\in \caZ(X)}\langle \bfy,\bfu\rangle$ for any $\bfu$ in nearly $\lvert\supp X\rvert$ time by simply scanning the generators. By the equivalence of linear optimization and strong separation oracles \cite[Theorem~6.4.9]{GLS88}, we can also perform linear queries on its modified versions with roughly the same time complexity; see \cref{sec:geometry-oracles}.

Our geometric algorithm processes the quotient shell at a discretization scale at least as large as $\pi_\Lambda D$, so the number of basic geometric queries that we perform is bounded, up to logarithmic factors, by the quotient-shell volume bound divided by $\vol_d(\pi_{\Lambda} D)$. 
Since $\supp X\subseteq H\subseteq D$, and $f$ is injective on $H\cap\Z^d$, we actually have $ \lvert\supp X\rvert = \Oh_d(\vol_d(\pi_\Lambda D))$, so  the $\lvert\supp X\rvert$ factor in the geometric query time is compensated by the denominator $\vol_d(\pi_{\Lambda} D)$ in the query count. Thus the total time remains near-linear in the volume of the quotient shell bound.

The geometric operations we need to perform also include finding a $\Lambda$-lattice point inside a polytope. We use Lenstra's fixed-dimensional integer programming algorithm for this task, which works when the input polytope is provided as a linear optimization oracle  \cite[Section~2, Remark~(d)]{Lenstra83}; see \cref{lem:lattice-optimization}.

So far, we have focused on computing the shallow subset sums. 
The deep subset sums in $\caZ(X)\ominus D$ are easier to find. Intuitively, we can start at any point near the shell of $\caZ(X)$ and travel along a fixed vector $\bfx_0 \in D\cap\Z^d$ until it reaches the other side of the boundary, and the visited lattice points correspond to deep subset sums.  
This also allows us to compactly represent the subset sums (both deep and shallow) as a disjoint union of arithmetic progressions of the same common difference, which we call an \emph{AP decomposition} in the main text of the paper.
We also conveniently use AP decompositions (instead of geometric representations) as the input and output format for the two key lemmas in the main algorithm, in order to simplify the lemma interface; see \cref{sec:main-algo}.

When the answer to the \textsc{Subset Sum} instance is YES, it is not hard to also recover a subset that witnesses the target sum. 
In our algorithmic framework, witness recovery for shallow subset sums is straightforward. 
For deep subset sums, we solve an LP and apply the same rounding argument as in \cref{lem:innerpoints-specialized}; see \cref{lem:cap-cover-witness}.

    \section*{Roadmap}
\addcontentsline{toc}{section}{Roadmap}
\cref{sec:prelim} gives useful preliminaries, including frequently used notations and definitions.
\Cref{sec:main-algorithm} formally states the building blocks of our algorithm, and shows that they together imply \cref{thm:main-main}. These building blocks are proved in subsequent sections:
\begin{itemize}
  \item \Cref{sec:algorithmiccfp} proves the algorithmic Conlon--Fox--Pham theorem.
    \item
\Cref{sec:geometry-oracles} collects standard algorithmic tools for optimization and lattices.
\item
\Cref{sec:classify} partitions the input items into groups.
\item 
\Cref{sec:macbeath-regions} collects basic properties of Macbeath regions.
\item \Cref{sec:quotient-cap-covering} constructs the quotient variant of economic cap covering for a polytope.
  \item 
\Cref{sec:subset-sum-cap-covering}
applies the quotient cap covering of
\Cref{sec:quotient-cap-covering} to the zonotope of a group of input items, thereby giving a cap covering of the shallow subset sums together with a compact representation of all the deep subset sums.
\item 
\Cref{sec:one-group}
invokes the main theorem of \Cref{sec:subset-sum-cap-covering} to compute all the subset sums of a group of input items.
\item 
\cref{sec:merging-groups} invokes the main theorem of
\Cref{sec:subset-sum-cap-covering} to merge the results from different groups.
\end{itemize}

    \section*{Acknowledgments}
    \addcontentsline{toc}{section}{Acknowledgements}
    
    C. J. thanks Arturs Backurs and Christos Tzamos for introducing the main open problem to him during a meeting at MIT in summer 2018.
   Additionally, C. J. thanks 
   Karl Bringmann, Hongxun Wu, and Zixuan Xu 
   for discussions about this problem in the past, Timothy Chan for answering a question related to computational geometry, and Nick Fischer and Timothy Randolph for questions and comments that helped improve the exposition of this paper.
   C. J. also thanks Hongxun Wu for a gift subscription to ChatGPT.
    
   X. M. thanks Aviad Rubinstein for helpful discussions about this problem, and for supporting him throughout his Ph.D. journey. 

We thank the authors of \cite{chen2026subsetsumintegerprogramming} for coordinating concurrent conference submissions with us.
   
    \section*{AI Disclosure}
    \addcontentsline{toc}{section}{AI Disclosure}
The human authors started  this project in 2025, and had designed by June 2026 a high-level proof framework that resembles what is described in the overview (\cref{sec:overview}), with technical details complete in some components, but lacking in others (mostly in the geometric implementation).
Our initial inspiration was the discovery of the AP decomposition (\cref{lem:aprepresentation}), and the intuition that \cite[Theorem 1.5]{conlon2023homogeneous} can almost provide a ``convexified'' variant of \cref{lem:aprepresentation}.
AI assistance during this period was limited to literature searches: notably, the literature on economic cap covering and Macbeath regions was brought to our attention by Gemini in July 2025. 

Since July 2026, we have been interacting with ChatGPT to fill in the missing details. Along the way, the interactions also made us realize a few simplifications to our original framework. In more details:
\begin{itemize}
    \item The ideas described in \cref{subsec:overview-motivation,subsec:overview-motivation-worstcase,subsec:overview-boundary,subsec:overview-quotient-geometry,subsec:overview-classification} came from the human authors. 
    \item For computing the shallow subset sums in a single group, we originally considered a partition-and-convolve approach, but later switched to the simpler method sketched in \cref{subsec:overview-computing}.
    \item The implementation details described in \cref{subsec:overview-implementation} and  the main text received more AI input. We initially considered a more complicated approach in which most algorithmic operations would be performed intrinsically in $\R^d/ \Lambda$ with suitable discretization, but failed to work out the details. 
     GPT 5.6 Sol suggested that it is simpler to perform the operations in $\R^d$ by utilizing the fact that the zonotopes have fast linear optimization oracles.
\end{itemize}

  We also used Codex (GPT 5.6 Sol and GPT 6 Astra) for helping draft preliminary proofs, creating figures, and proofreading.

\section{Preliminaries}
\label{sec:prelim}

In this section, we collect frequently used notations and definitions. Less standard ones include the notations in displayed equations, and the definitions in 
\cref{subsec:progression,subsec:apdecomp}.

\subsection{Basic notation}

Let $[n] = \{1,2,\dots,n\}$. Let $\log x$ denote $\log_2 x$.
Assume $\Delta\ge 2$.

Use $\ll_d, \Oh_d(\cdot)$, etc., to hide factors depending on a parameter $d$. Such factors are also denoted by $C_d$ or $c_d$, etc. They are always computable functions of $d$.

Let $\tO(T)$ and $\tO_d(T)$ denote $\Oh(T \cdot (\log T)^{\Oh(1)})$ and $\Oh_d(T \cdot (\log T)^{\Oh_d(1)})$, respectively.

A (finite) \emph{multiset} $A$ is represented by a list of pairs of element $x$ together with its \emph{multiplicity} $\mu_A(x)$, over all $x\in \supp A \coloneqq \{x:\mu_A(x)>0\}$. Its \emph{size} is $\lvert A\rvert=\sum_{x} \mu_A(x)$. 
If $\mu_A(x)\le \mu_B(x)$ for all $x$,
 then write $A\subseteq B$, and define their \emph{multiset difference} $B\setminus A$ by the multiplicity function $\mu_{B\setminus A}\coloneqq \mu_B-\mu_A$.
The \emph{multiset union (sum)} $A\sqcup B$ is defined by 
$\mu_{A\sqcup B}\coloneqq \mu_A+\mu_B$.
For a multiset $A$, the notation $\{f(x): x\in A\}$ (also written as $f(A)$) builds a multiset that preserves the multiplicities.

The \emph{set of subset sums} of a (multi)set
$A$ is
\[\Sigma(A) = \Big\{\sum_{b\in B} b :B\subseteq A\Big\}. \]

The \emph{sumset} (or \emph{Minkowski sum}) of two sets $A,B$
is $A+B=\{a+b:a\in A,b\in B\}$. 
Define $A-B$ analogously.
For $k\in \Pos$, 
the \emph{$k$-fold sumset} of $A$ is $kA \coloneqq \underbrace{A+A+\dots + A}_{\text{$k$ times}}$. 

We denote the set dilation by $k\cdot A=\{ka:a\in A\}$ to distinguish it from the $k$-fold sumset $kA$, but \emph{with the following exceptions} to avoid notational clutter: For a lattice $L$ or a convex set $K$, we may write $rL,rK$ for their $r$-factor dilations.

For $r=p/q\in \Q$ in lowest terms, define \[\maxbitlen(r) = \max \{\log(2+\lvert p\rvert),\log(2+\lvert q\rvert)\},\] so that $r$ can be encoded as a fraction using $\Oh(\maxbitlen(r))$ bits.
We extend this notation to other objects in \Cref{par:rational-arithmetic}.

\subsection{Sparse convolution}
Generalizing
the fast Fourier transform (FFT), the following algorithm multiplies sparse polynomials in near-linear output-sensitive time.

\begin{lemma}[Nonnegative sparse convolution
{\cite[Theorem~1.1]{BringmannFN22}}]
    \label{lem:sparse-convolution}
       Given polynomials $f(x),g(x)$ (represented as lists of nonzero terms) with degrees at most $N$ and all coefficients in $ \Z \cap [0,U]$, their product $h(x) = f(x) g(x)$ can be computed in
    $\Oh(t \polylog (NU))$ time deterministically, where $t$ denotes the number of nonzero terms in $h(x)$.
\end{lemma}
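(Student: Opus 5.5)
The plan is to reduce to dense FFT computations on folded polynomials, that is, reductions modulo $x^m-1$, and to recover the exponents of isolated terms from weighted moments. Nonnegativity is used throughout: since all coefficients of $f$ and $g$ are nonnegative, no cancellation can occur. Hence $\supp h=\supp f+\supp g$, and folding $h$ modulo $x^m-1$ gives a polynomial whose support is exactly $(\supp h)\bmod m$. In particular, the number of nonzero buckets of the folded product is computable. It equals $\lvert(\supp f+\supp g)\bmod m\rvert$, which estimates $t$ from below and equals $t$ when there are no collisions.

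\textbf{Step 1 (folded products).} For a modulus $m$, compute $f\bmod(x^m-1)$ and $g\bmod(x^m-1)$ in linear time in their number of terms. Then multiply them cyclically by FFT in $\Oh(m\log m)$ word operations. The coefficients are bounded by $(N+1)U^2$, which fits in $\Oh(\log(NU))$ bits.

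\textbf{Step 2 (weighted moments).} Do the same for the weighted pairs $(xf')\cdot g+f\cdot(xg')=xh'$ and $x^2h''$ in folded form. For a bucket $r$ containing a single term $c\,x^e$ of $h$, the three folded values are $c$, $ce$ and $ce(e-1)$. So $e$ is recovered as the ratio of the first two. We certify that the bucket really is a singleton by checking that this ratio is an integer $e\equiv r\pmod m$ with $e\le 2N$, and that the second moment is consistent. This consistency test can fail only if the bucket's coefficients satisfy $\sum c_i e_i^2=(\sum c_i e_i)^2/\sum c_i$. By Cauchy--Schwarz and nonnegativity, that equality forces all the $e_i$ in the bucket to be equal, so a bucket that passes is a genuine singleton.

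\textbf{Step 3 (peeling).} Subtract the recovered terms from the target. The residual $h-\tilde h$ still has nonnegative coefficients, because every recovered term is exact. Its folded versions are obtained by subtracting the folded $\tilde h$ from the folded products of Step 1. We then iterate with a fresh modulus, working on the residual. We grow $m$ geometrically as $m\approx 2^j$ with $j$ increasing until the first time a constant fraction of the nonzero buckets are certified singletons. Because $t$ is not known in advance, this doubling is also how the algorithm adapts to the unknown output size: the total cost is $\Oh(t\polylog(NU))$ provided each round uses $m=\Oh(t\polylog(NU))$ and a constant fraction of the remaining support is recovered in each round.

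\textbf{Main obstacle (deterministic moduli).} The difficulty is to guarantee, without randomness, that some modulus of size $\tO(t)$ isolates a constant fraction of the residual support. Randomly, a prime modulus $p\approx t\log N$ works because each pair of exponents collides for only $\Oh(\log N)$ primes. Deterministically, I would use the same counting argument. The support has at most $t^2$ pairs of exponents, and each pair collides modulo only $\Oh(\log N)$ primes. So among the primes in $[P,2P]$ with $P=\Theta(t\log N)$, only a small fraction of prime--element combinations produce collisions, and thus many primes isolate most terms. The remaining problem is to find such a prime without scanning too many primes. The checks from Step 2 let us verify cheaply whether a prime is good, but certifying a good prime quickly is exactly where the deterministic argument of \cite{BringmannFN22} requires extra structure. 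Concretely, they use a small set of candidate moduli and combine the folded information across them, together with the certification above. I expect reproducing that step to be the main effort. Everything else consists of routine FFT and bookkeeping.
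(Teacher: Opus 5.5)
\paragraph{Main gap: the deterministic choice of moduli.}
The paper does not prove this lemma. It imports \cite[Theorem~1.1]{BringmannFN22} as a black box, so the citation is the whole proof there. Read as an independent proof, your proposal has a genuine gap, and it sits exactly where the content of the cited theorem lies. Output-sensitive nonnegative sparse convolution \emph{with randomness} is classical (Cole and Hariharan): a random prime modulus of size $\Theta(t\log N)$ isolates most of the support. What \cite{BringmannFN22} contributes is achieving this \emph{deterministically}, and that is precisely the step you call the main obstacle and then defer to them. Your counting argument only shows that a constant fraction of the primes in $[P,2P]$ are good. It gives no bound on how many bad primes a deterministic scan meets before a good one: it is consistent with a constant fraction of all $\Theta(P/\log P)$ primes in the range being bad. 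Each test costs $\Omega(t)$, so this yields a bound of order $t^2$ rather than $\Oh(t\polylog(NU))$.

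\paragraph{Step~3 fails independently of the moduli.}
A certified singleton bucket gives you one recovered term. It never tells you what fraction of the residual support you have recovered. The size $\lvert\supp(h-\tilde h)\rvert$ is unknown, and the number of nonzero buckets is only a lower bound for it. For instance, all but one nonzero bucket could be singletons while the last bucket holds almost all remaining terms. So your stopping rule (``a constant fraction of the nonzero buckets are certified singletons'') does not imply that a constant fraction of the residual support is peeled. For the same reason, your claim that the checks of Step~2 verify cheaply whether a prime is good does not hold. Every round refolds $f$, $g$ and $\tilde h$ at cost up to $\Theta(t)$; note that $\lvert\supp f\rvert+\lvert\supp g\rvert\le t+1$ can be tight. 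Hence neither the number of rounds nor the total time is bounded.

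A standard remedy is to first compute a superset $T\supseteq\supp h$ with $\lvert T\rvert\le 6t$ by the halving recursion $X+Y\subseteq 2\bigl(\lfloor X/2\rfloor+\lfloor Y/2\rfloor\bigr)+\{0,1,2\}$, where $X=\supp f$ and $Y=\supp g$, using $\lvert\lfloor X/2\rfloor+\lfloor Y/2\rfloor\rvert\le 2\lvert X+Y\rvert$. Isolation of $T$ by a candidate modulus can then be tested directly in $\Oh(\lvert T\rvert)$ time. Even so, you must still supply either a deterministic way to find good moduli quickly or another deterministic recovery procedure for a known support. Your Steps~1--2 (folding, and the Cauchy--Schwarz singleton certificate, which is correct thanks to nonnegativity) are sound, but they are the routine part of the argument.
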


In particular, the sumset of $A,B\subseteq [N]$ can be computed
in $\Oh(\lvert A+B\rvert\polylog N)$ time.

\subsection{Geometry and lattices}

Write $\vol_d(\cdot)$ for $d$-dimensional volume.
For $A\subseteq \R^d$, write $\conv(A)$ and $\spa(A)$ for the convex hull and linear span of $A$.
A \emph{(convex) polytope} is the convex hull of finitely many points.
A \emph{vertex} of a polytope $K$ is a point $\bfv\in K$ such that
$\bfv\notin \conv(K\setminus \{\bfv\})$.
A \emph{convex body} in $\R^d$ is a compact convex set in $\R^d$ with nonempty interior.

 Let $\langle \cdot,\cdot \rangle$ denote the Euclidean inner product.
For a linear function $f(\bfx) = \langle \bfx,\bfu\rangle$ and a bounded set $S$, write
$\fw(S)=\sup f(S)-\inf f(S)$.

Denote the \emph{Minkowski difference} (or \emph{erosion}) by 
$A\ominus B=\{\bfx:\bfx+B\subseteq A\}$.
If $A$ is convex, then $A\ominus B$ is convex.

Denote the line segment joining $\bfx,\bfy$ by
$[{\bfx},{\bfy}]=\{\eps{\bfx}+(1-\eps){\bfy}:0\le \eps\le 1\}$.
The \emph{zonotope} generated by a multiset $X=({\bfx}_i)_{i=1}^{\lvert X\rvert}$ is
\[
\caZ(X)=\sum_{i=1}^{\lvert X\rvert}\, [0,{\bfx}_i]=\Big \{\sum_{i=1}^{\lvert X\rvert} \eps_i {\bfx}_i:0\le \eps_i\le 1\Big \}.
\]
Similarly, define the \emph{centered zonotope} by
$\caZ_0(X)=\sum_{i=1}^{\lvert X\rvert} [-{\bfx}_i/2,{\bfx}_i/2]$.

A matrix $B=({\bfb}_1,\ldots,{\bfb}_{d'})\in \R^{d\times d'}$ with linearly
independent columns generates the \emph{lattice}
$\caL(B)= B\Z^{d'} = \{\sum_{i=1}^{d'} n_i {\bfb}_i:n_i\in \Z\}$ of \emph{rank} $d'$.
Its \emph{determinant} is $\det \caL(B)=\sqrt{\det(B^\top B)}$, which equals
$\lvert\det B\rvert$ when $d'=d$.
More generally,
 for arbitrary $X\subseteq \R^d$, denote
\[\caL(X)=\Big \{\sum_{i=1}^m n_i {\bfx}_i:m\ge 0,\ {\bfx}_i\in X,\ n_i\in \Z\Big \}.\] 
If $X\subseteq \Z^d$,  then $\caL(X)$ is a lattice.
In this paper, we only deal with rational lattices $L\subseteq \Q^d$, and our algorithm represents a rational lattice by its Hermite normal form (HNF) (\cref{sec:hermite}) by default.

For a lattice $\Lambda\subseteq \R^d$,
write $\pi_\Lambda$ for the quotient map \[\pi_\Lambda\colon \R^d\to \R^d/\Lambda.\] 
Let $\vol_d$ also denote the Haar measure on
$\R^d/\Lambda$, normalized so that $\vol_d(\pi_\Lambda(A))$ equals the Euclidean
volume $\vol_d(A)$ for every measurable $A\subseteq \R^d$ on which
$\pi_\Lambda$ is injective.

\begin{lemma}[cf.\ {\cite[Eq. (3.10)]{taovu06}}]
\label{lem:lattice-quotient-dilation}
Let $\Lambda,\Gamma\subseteq \R^d$ be lattices, $K\subseteq \R^d$ be an origin-symmetric compact convex set, and $s\ge 1$. Then,
$\lvert\pi_\Lambda(sK\cap \Gamma)\rvert\le (2s+1)^d\lvert\pi_\Lambda(K\cap \Gamma)\rvert$.

In particular, 
 $\lvert sK\cap \Gamma\rvert\le (2s+1)^d\lvert K\cap \Gamma\rvert$.
\end{lemma}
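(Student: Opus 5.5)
We will use the standard Ruzsa-type covering (packing) argument, carried out in the quotient group $\R^d/\Lambda$. The second claim is the first claim with $\Lambda=\{\zero\}$, so only the first needs proof. Write $G=\R^d/\Lambda$ and $S=\pi_\Lambda(sK\cap\Gamma)$, a finite subset of $G$, since $sK\cap\Gamma$ is finite.

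\textbf{Step 1 (maximal separated subset).} Choose a maximal subset $T\subseteq sK\cap\Gamma$ such that the translates $\pi_\Lambda(\bft+\tfrac12 K)$, $\bft\in T$, are pairwise disjoint in $G$. Maximality gives a covering. For any $\bfy\in sK\cap\Gamma$ there is some $\bft\in T$ with $\pi_\Lambda(\bfy+\tfrac12K)\cap\pi_\Lambda(\bft+\tfrac12K)\ne\emptyset$. Hence $\bfy-\bft\in K+\lambda$ for some $\lambda\in\Lambda$, using the origin-symmetry and convexity of $K$ (so $\tfrac12K-\tfrac12K=K$). Also $\bfy-\bft\in\Gamma$. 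Note that $\lambda$ need not lie in $\Gamma$; we only need the projected statement. It follows that $\pi_\Lambda(\bfy)\in\pi_\Lambda(\bft)+\pi_\Lambda(K\cap(\Gamma+\Lambda)\cap(\bfy-\bft-\Lambda))$. This is awkward, so instead we run the maximality with respect to the discrete sets directly, as follows.

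\textbf{Step 1$'$ (discrete version).} Choose $T\subseteq sK\cap\Gamma$ maximal such that the sets $\pi_\Lambda(\bft+(\tfrac12K\cap\Gamma))$ are pairwise disjoint. If $\bfy\in sK\cap\Gamma$ is not in $T$, then there exist $\bfa,\bfb\in\tfrac12K\cap\Gamma$ with $\bfy+\bfa-\bft-\bfb\in\Lambda$. This gives $\pi_\Lambda(\bfy)=\pi_\Lambda(\bft+\bfb-\bfa)$ with $\bfb-\bfa\in K\cap\Gamma$. Therefore $S\subseteq\pi_\Lambda(T)+\pi_\Lambda(K\cap\Gamma)$, and so $|S|\le |T|\cdot|\pi_\Lambda(K\cap\Gamma)|$.

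\textbf{Step 2 (bounding $|T|$).} The disjoint sets $\pi_\Lambda(\bft+(\tfrac12K\cap\Gamma))$ all lie inside $\pi_\Lambda((s+\tfrac12)K\cap\Gamma)$. Comparing sizes gives
\[|T|\cdot|\pi_\Lambda(\tfrac12K\cap\Gamma)|\le|\pi_\Lambda((s+\tfrac12)K\cap\Gamma)|.\]
This inequality is circular: it bounds $|T|$ in terms of a dilate of the very quantity we are trying to control. This is the main obstacle.

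To break the circularity, we switch to volumes. Let $F$ be a fundamental domain of $\Gamma$, and consider the bodies $\bft+\tfrac12K$ measured in the torus $G$ (when $\Lambda+\Gamma$ is not discrete, we work in $\R^d/(\Lambda\cap\Gamma)$ instead). The intended chain of estimates is:
\begin{itemize}
\item disjointness of the translates $\pi(\bft+\tfrac14K)$ gives $|T|\vol(\tfrac14K)\le\vol(\pi((s+\tfrac14)K))$;
\item scaling a convex body gives $\vol(\pi(cK))\le c^d\vol(\pi(K))$-type bounds;
\item the result is $|T|\le(4s+1)^d$ times correction factors.
\end{itemize}
This route loses constants relative to the stated bound.

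\textbf{Step 3 (direct slicing, to get the sharp constant).} The cleaner route follows Tao--Vu Eq.~(3.10). Cover $sK$ by at most $(2s+1)^d$ translates $\bfv+\tfrac12K$, using a maximal $\tfrac12K$-packing of $sK$ and the estimate $\vol((s+\tfrac14)K)/\vol(\tfrac14K)\le(4s+1)^d$. Each translate satisfies
\[(\bfv+\tfrac12K)\cap\Gamma\subseteq\bfw+(K\cap\Gamma)\]
for any chosen $\bfw$ in it, so each translate contributes at most $|\pi_\Lambda(K\cap\Gamma)|$ classes. Obtaining exactly $(2s+1)^d$ rather than $(4s+1)^d$ needs the sharper covering-number bound $N(sK,K)\le(2s+1)^d$ for origin-symmetric $K$ and lattice-point counts. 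For this we would cite or reproduce the argument in \cite{taovu06}: take a maximal set of points of $sK\cap\Gamma$ whose $\tfrac12K$-neighbourhoods are disjoint, and compare volumes of $\tfrac12K$ against $(s+\tfrac12)K$, which yields $(2s+1)^d$.

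Combining the covering of $sK\cap\Gamma$ by at most $(2s+1)^d$ sets of the form $\bfw+(K\cap\Gamma)$ with the fact that each such set has at most $|\pi_\Lambda(K\cap\Gamma)|$ images under $\pi_\Lambda$ gives the lemma.
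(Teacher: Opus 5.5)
Your proposal does contain a valid proof, but only in the closing sentences of Step~3, and the text before it should be cut or corrected. In particular, the claim that $sK$ is covered by at most $(2s+1)^d$ translates of $\tfrac12K$ does not follow from the estimate you quote, which gives $(4s+1)^d$. It is also false in general: for a disk in $\R^2$ and large $s$ it would give coverings of a large disk by small disks with density tending to $1$, whereas such coverings asymptotically need density at least $2\pi/\sqrt{27}>1$.

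The argument that works is the one you sketch at the end, and you should state its key point explicitly. Take $T\subseteq sK\cap\Gamma$ maximal such that the sets $\bft+\tfrac12K$ are pairwise disjoint \emph{in $\R^d$}. Imposing disjointness in $\R^d/\Lambda$, as in Steps~1$'$ and~2, is exactly what made the count circular. Maximality gives $sK\cap\Gamma\subseteq\bigcup_{\bft\in T}(\bft+K)$. Since every center $\bft$ lies in $\Gamma$, we have $(\bft+K)\cap\Gamma=\bft+(K\cap\Gamma)$, hence $\lvert\pi_\Lambda(sK\cap\Gamma)\rvert\le\lvert T\rvert\cdot\lvert\pi_\Lambda(K\cap\Gamma)\rvert$. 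With centers $\bfv\notin\Gamma$ you would only get $(\bfv+K)\cap\Gamma\subseteq\bfw+(2K\cap\Gamma)$, so centering the packing at lattice points is essential. Finally, $\lvert T\rvert\le\vol\bigl((s+\tfrac12)K\bigr)/\vol(\tfrac12K)=(2s+1)^{\dim K}\le(2s+1)^d$. Here the volumes must be taken in $\spa K$, because $K$ is only assumed compact and convex and may have empty interior in $\R^d$.

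This is a genuinely different route from the paper's, which avoids packings and volumes altogether. With $q=\lceil 2s\rceil$, the paper partitions $sK\cap\Gamma$ into at most $q^d$ classes modulo $q\Gamma$. For a representative $\bfa$ of a class, every $\bfx$ in that class satisfies $(\bfx-\bfa)/q\in K\cap\Gamma$, since $2sK\subseteq qK$. Moreover $\pi_\Lambda((\bfx-\bfa)/q)$ determines $\pi_\Lambda(\bfx)$, since $q\Lambda\subseteq\Lambda$. That argument takes a few lines and needs no maximality, no measure, and no special handling of lower-dimensional $K$. It gives the marginally better factor $\lceil 2s\rceil^d$, and in effect covers $sK\cap\Gamma$ by translates of the sparse set $q\cdot(K\cap\Gamma)$. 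Your Ruzsa-covering argument instead produces a cover of $sK\cap\Gamma$ by at most $(2s+1)^d$ lattice translates of $K\cap\Gamma$ itself. It uses only that $\Gamma$ is closed under subtraction, at the cost of the volume comparison.
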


\begin{proof}
Let $q=\lceil2s\rceil$. Partition $sK\cap \Gamma$ into its at most
$q^d$ residue classes modulo $q \Gamma$.
Choose a representative $\bfa$ in each nonempty class.
For every $\bfx$ in the class of $\bfa$,
since $\bfx-\bfa\in 2sK\subseteq qK$, we have
$(\bfx-\bfa)/q\in K\cap \Gamma$.
Since $\pi_\Lambda((\bfx-\bfa)/q)$ determines $\pi_\Lambda(\bfx)$, summing over all $q^d$ residue classes gives
$\lvert\pi_\Lambda(sK\cap \Gamma)\rvert\le q^d\lvert\pi_\Lambda(K\cap \Gamma)\rvert$.
The claim then follows from $q\le 2s+1$.
\end{proof}

\subsection{Progressions and \texorpdfstring{$s$}{s}-generating sets}
\label{subsec:progression}

We use a slight variant of convex progressions (the standard definition corresponds to the $\Gamma= \Z^d$ case):
\begin{definition}[Progression $f(K\cap \Gamma)$]
\label{def:progression}
Let $\Gamma$ be a rank-$d$ lattice, $K\subseteq \spa(\Gamma)$ be an origin-symmetric compact convex set such that $\dim \spa(K\cap \Gamma) = d$,
 and $f\colon \spa(\Gamma) \to \R^k$ be
a linear map. The set $f(K\cap \Gamma)$ is called a \emph{progression} of \emph{rank} $d$.

By abuse of notation, a progression $f(K\cap \Gamma)$ is understood to include its representation $(K,\Gamma,f)$, since the same set may have different representations.

For $s\ge 1$, we say the progression $f(K\cap \Gamma)$ is \emph{$s$-proper} if $f$ is
injective on $sK\cap \Gamma$.  We abbreviate $1$-proper to \emph{proper}.
\end{definition}

The progression $f(K\cap \Gamma)$ with $f$ being the identity map is simply referred to as $K\cap \Gamma$.

An origin-symmetric generalized arithmetic progression (GAP) is a progression $f(K \cap \Z^d)$ where $K$ is a box
$\prod_{i=1}^d[-m_i,m_i]$ with $m_i\in \Pos$.
More details on GAPs are given in \cref{subsec:somelemmas} where they are needed.

We extract the following notion from \cite{conlon2023homogeneous} and call it \emph{$s$-generating}:

\begin{definition}[$s$-generating]
\label{def:generating}
Let  $f(K\cap \Gamma)$ be a proper progression (\cref{def:progression}).
For $s\ge 1$, if a multiset $X$ of vectors in $K\cap \Gamma$ satisfies
   $\Sigma(X)\supseteq (\boldsymbol{\tau}+sK)\cap \Gamma$ for some ${\boldsymbol{\tau}}\in \Gamma$, then we say $X$ is \emph{$s$-generating for} $K\cap \Gamma$, and we also say the multiset
$f(X)$ is \emph{$s$-generating for} $f(K\cap \Gamma)$.
\end{definition}

\begin{lemma}
    \label{lem:innerpoints}
Let $H\cap \Gamma$ be a progression of rank $d\ge 1$, and let $X$ be a multiset with $\supp(X)\subseteq H\cap \Gamma$.
If a submultiset $X'\subseteq X$ is $d$-generating for $H\cap \Gamma$, then
$(\caZ(X)\ominus \lvert X'\rvert H)\cap \Gamma\subseteq \Sigma(X)$.
\end{lemma}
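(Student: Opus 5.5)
Write $m=\lvert X'\rvert$ and $Y=X\setminus X'$ (multiset difference). The plan is a rounding argument. We write a deep point as a fractional combination of $Y$ plus a point of $\caZ(X')$. We then round the fractional part of the $Y$-coefficients to get an honest subset sum of $Y$. The remainder, which is a point of $\Gamma$ in a slightly shifted copy of $\caZ(X')$, is absorbed by the generating property of $X'$.

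\textbf{Step 1: decomposition.} Let $\bfz\in(\caZ(X)\ominus mH)\cap\Gamma$. By definition of $d$-generating, $\Sigma(X')\supseteq(\boldsymbol\tau+dH)\cap\Gamma$ for some $\boldsymbol\tau\in\Gamma$. Every subset sum of $X'$ lies in $\caZ(X')\subseteq mH$ (each element lies in $H$ and $H$ is convex and origin-symmetric, with $\zero\in H$). Hence $\boldsymbol\tau+dH\subseteq \caZ(X')+dH$, so $\boldsymbol\tau\in mH$ up to the obvious slack. Since $\bfz+mH\subseteq\caZ(X)$ and $H=-H$, we get $\bfz-\boldsymbol\tau\in\caZ(X)$ if $\boldsymbol\tau\in mH$. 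Here some care is needed: I would instead apply the erosion directly with the vector $\boldsymbol\tau-\bfz$. It is cleaner to note that $\bfz-\bfw\in\caZ(X)$ for any $\bfw\in mH$, and to choose $\bfw$ later.

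\textbf{Step 2: rounding.} Write $\bfz-\bfw=\sum_{\bfx\in X}\eps_\bfx\bfx$ with $\eps_\bfx\in[0,1]$. Reduce this to a vertex of the polytope $\{\eps\in[0,1]^{X}:\sum\eps_\bfx\bfx=\bfz-\bfw\}$ restricted to the $Y$-coordinates, keeping the $X'$ part. More concretely, I would first move all of $X'$'s contribution into the free part: consider $\bfz-\bfw-\sum_{\bfx\in X'}\eps_\bfx\bfx=\sum_{\bfy\in Y}\eps_\bfy\bfy$. Take a basic solution in the $Y$-variables, so at most $d$ coordinates $\eps_\bfy$ are fractional. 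Let $S\subseteq Y$ be the set of $\bfy$ with $\eps_\bfy=1$. Then
\[\bfz-\Sigma_S=\bfw+\sum_{\bfx\in X'}\eps_\bfx\bfx+\sum_{\text{$\le d$ fractional }\bfy}\eps_\bfy\bfy,\]
where $\Sigma_S$ denotes the sum of $S$. The last sum lies in $dH$, and the middle sum lies in $\caZ(X')$.

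\textbf{Step 3: absorbing with $X'$.} We want $\bfz-\Sigma_S\in(\boldsymbol\tau+dH)\cap\Gamma$. Lattice membership is automatic, since $\bfz,\Sigma_S\in\Gamma$. So we need $\bfw+\sum_{X'}\eps_\bfx\bfx-\boldsymbol\tau\in\zero$-neighborhood structure, i.e., the rounding error should be compensated. The main obstacle is choosing $\bfw$ and the $X'$-coefficients so that $\bfw+\sum_{X'}\eps_\bfx\bfx=\boldsymbol\tau$ exactly. Then the residual is just the $\le d$ fractional $Y$-terms, which lie in $dH$, and we are done.

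To arrange this, I would start differently: set $\bfw\coloneqq\boldsymbol\tau-\Sigma(X')_{\text{full}}$. That is, use all of $X'$ with coefficient $1$, so that $\bfz-\bfw=(\bfz-\boldsymbol\tau+\sum_{X'}\bfx)$, and require $\bfz-\boldsymbol\tau\in\caZ(Y)$. To verify the latter, note $\bfz-\boldsymbol\tau+\sum_{X'}\bfx\in\caZ(X)$ whenever $\sum_{X'}\bfx-\boldsymbol\tau\in mH$. This holds because $\boldsymbol\tau\in\Sigma(X')$, so $\sum_{X'}\bfx-\boldsymbol\tau\in\Sigma(X')\subseteq mH$. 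A point of $\caZ(X)$ of the form $\bfp+\sum_{X'}\bfx$ need not, a priori, lie in $\caZ(Y)+\sum_{X'}\bfx$, and this is the delicate point. I would handle it by working with $\caZ(X)=\caZ(Y)+\caZ(X')$ and exchanging coefficients. Alternatively, and more simply, I would replace the target by $\boldsymbol\tau'=\boldsymbol\tau+\bfv$ with $\bfv\in\Sigma(X')$-consistent choices, using that $\Sigma(X')\supseteq\boldsymbol\tau+dH$ gives flexibility of size $dH$ around $\boldsymbol\tau$. The expected final form is
\[\bfz=\Sigma_S+(\text{element of }(\boldsymbol\tau+dH)\cap\Gamma)\in\Sigma(Y)+\Sigma(X')\subseteq\Sigma(X).\]
Getting the exact bookkeeping of the $X'$-coefficients right, so that the erosion depth $mH$ suffices, is where I expect the work to lie.
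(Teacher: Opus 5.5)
There is a genuine gap, and it is the step you flag as delicate yourself: you never prove $\bfz-\boldsymbol\tau\in\caZ(Y)$. Your proposal uses the erosion hypothesis $\bfz+mH\subseteq\caZ(X)$ at a single vector $\bfw$. As you note, membership of one translate in $\caZ(X)$ cannot certify membership in $\caZ(Y)$.

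Your two fallbacks do not close this. For the first, take an arbitrary representation $\bfz-\boldsymbol\tau+\sum_{\bfx\in X'}\bfx=\sum_{\bfy\in Y}\eps_\bfy\bfy+\sum_{\bfx\in X'}\eps_\bfx\bfx$. The deficit $\sum_{\bfx\in X'}(1-\eps_\bfx)\bfx$ can be as large as $mH$, while $d$-generation only gives slack $dH$ around $\boldsymbol\tau$. For the second, ``exchanging coefficients'' has no concrete mechanism behind it.

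The missing idea is to apply the erosion to a whole set rather than a point. Since $\zero\in H$ and $\boldsymbol\tau\in\Gamma$, you have $\boldsymbol\tau\in\Sigma(X')$, say $\boldsymbol\tau=\sum_{\bfx\in T}\bfx$ with $T\subseteq X'$. Then
\[
\caZ(X')-\boldsymbol\tau=\sum_{\bfx\in X'\setminus T}[\zero,\bfx]+\sum_{\bfx\in T}[-\bfx,\zero]
\]
is a Minkowski sum of $m$ segments, each contained in $H$. Hence $\caZ(X')-\boldsymbol\tau\subseteq mH$. The erosion hypothesis now gives $(\bfz-\boldsymbol\tau)+\caZ(X')\subseteq\caZ(X)=\caZ(Y)+\caZ(X')$.

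Next apply the cancellation law $(A+C)\ominus C=A$ for nonempty compact convex $A,C$. It follows from support functions: $\bfp+C\subseteq A+C$ iff $\langle\bfu,\bfp\rangle+h_C(\bfu)\le h_A(\bfu)+h_C(\bfu)$ for all $\bfu$. This yields exactly $\bfz-\boldsymbol\tau\in\caZ(Y)$.

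From there your Steps 2--3, with all $X'$-coefficients equal to $1$, go through. A vertex of $\{\eps\in[0,1]^Y:\sum_{\bfy\in Y}\eps_\bfy\bfy=\bfz-\boldsymbol\tau\}$ has at most $d$ fractional entries. Rounding them down gives $\Sigma_S\in\Sigma(Y)$ with $\bfz-\Sigma_S\in(\boldsymbol\tau+dH)\cap\Gamma\subseteq\Sigma(X')$. This is the paper's argument.
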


\begin{proof}
    Without loss of generality, assume the ambient space is $\R^d$. 
Let $\bfy\in (\caZ(X)\ominus \lvert X'\rvert H)\cap \Gamma$. We will prove $\bfy\in \Sigma(X)$ by finding $\bfz'\in \Sigma(X\setminus X')$ such that $\bfy-\bfz'\in \Sigma(X')$.

Since $X'$ is $d$-generating for $H\cap \Gamma$ (\cref{def:generating}), there is $\boldsymbol{\tau}\in \Gamma$ with
$(\boldsymbol{\tau}+dH)\cap \Gamma\subseteq \Sigma(X')$.
Since $H$ is origin-symmetric (by \cref{def:progression}), $\zero\in H$, and thus $\boldsymbol{\tau}\in \Sigma(X')$.
Then, $\caZ(X')-\boldsymbol{\tau}$ is a Minkowski sum of $\lvert X'\rvert$
segments of the form $[0,\bfx]$ or $[-\bfx,0]$ where $\bfx \in X'$. Each segment is contained in $H$, so $\caZ(X')-\boldsymbol{\tau}\subseteq \lvert X'\rvert H$.  By the choice of $\bfy$, this implies $\bfy + (\caZ(X')-\boldsymbol{\tau}) \subseteq \caZ(X)$, so
\[
 \bfz\coloneqq \bfy-\boldsymbol{\tau}
 \in \caZ(X)\ominus \caZ(X')=\caZ(X\setminus X'),
\]
where the last equality follows from the standard fact $(K_1+K_2)\ominus K_2 = K_1$ for compact nonempty convex sets $K_1,K_2$ (see e.g., \cite[Lemma~3.1.11]{schneider2014convex}).

Write $X\setminus X'=(\bfx_i)_{i=1}^r$, so the linear program $\bfz = \sum_{i=1}^r \alpha_i \bfx_i, \boldsymbol{\alpha} \in [0,1]^r$ is feasible. Choose a basic feasible solution $\boldsymbol{\alpha}$, which can have at most $d$ fractional coordinates. Rounding $\boldsymbol{\alpha}$ down to $\boldsymbol \alpha' \in \{0,1\}^r$ gives
$\bfz' = \sum_{i=1}^r \alpha_i' \bfx_i \in \Sigma(X\setminus X')$ with $\bfz-\bfz'\in dH$. Thus, $\bfy-\bfz' = \boldsymbol{\tau} + \bfz -\bfz'\in (\boldsymbol{\tau}+dH)\cap \Gamma\subseteq \Sigma(X')$. This finishes the proof that $\bfy\in \Sigma(X)$.
\end{proof}

\subsection{AP decompositions}
\label{subsec:apdecomp}

\begin{definition}[AP decomposition]
\label{def:ap-decomposition}
For finite $A\subseteq \Z$ and
$r\in \Pos$, the \emph{AP decomposition of $A$ with common difference $r$} is the unique decomposition of $A$ into disjoint arithmetic progressions of common difference $r$, each maximal in $A$. 

The \emph{size} of the AP decomposition is the number of arithmetic progressions it contains, which equals $\lvert A\setminus (A+r)\rvert$ (since $A\setminus (A+r)$ consists of precisely the starting points of these arithmetic progressions).

Algorithmically, the AP decomposition is represented by the set $A \bmod r \subseteq \{0,\dots,r-1\}$ as a sorted list, together with a list $\caD_i$ for every $i\in A \bmod r$, which consists of the pairs $(a,\ell)$ representing the arithmetic progressions $\{a,a+r,\dots,a + (\ell-1) r\}$ with $a\equiv i \pmod r$, in increasing order of~$a$.
\end{definition}

For example, the AP decomposition of $A = \{4,5,6,7,9,10\}$ with common difference $r=2$ is $\{4,6\}\sqcup \{10\}\sqcup \{5,7,9\}$. It has size $\lvert A\setminus (A+r)\rvert = \lvert\{4,5,10\}\rvert=3$.

Using binary search, it is straightforward to implement the following operations on AP decompositions. We omit the proof here.

\begin{lemma}
\label{lem:ap-operations}
\begin{itemize}
  \item Initialization: Given integer arithmetic progressions $A_1,\dots,A_k$ of the same common difference~$r$, we can construct the AP decomposition of $\bigcup_{i=1}^k A_i$ of common difference $r$ in $\Oh(k \log k)$ time.
\end{itemize}
Given a size-$k$ AP decomposition of an integer set $A$, we can perform:
\begin{itemize}
\item Membership: given an integer $a$, decide whether $a\in A$ in $\Oh(\log k)$ time.
\item Restriction: given an interval $[\ell,u]$, compute the AP decomposition of $A\cap [\ell,u]$ in $\Oh(k)$ time.
\end{itemize}
\end{lemma}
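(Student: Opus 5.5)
The three operations reduce to sorting, binary search, and linear scans over the representation of \cref{def:ap-decomposition}. That representation is a sorted list of residues $A \bmod r$, and for each residue $i$ a list $\caD_i$ of pairs $(a,\ell)$ sorted by $a$. Maximality in $A$ means two consecutive pairs $(a,\ell),(a',\ell')$ in $\caD_i$ satisfy $a' > a+\ell r$: the progressions neither overlap nor touch. I would handle the operations in the order membership, restriction, initialization.

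\emph{Membership.} Given $a$, compute $i = a \bmod r$ and binary search for $i$ in the sorted residue list, taking $\Oh(\log k)$ time since the list has at most $k$ entries. If $i$ is found, binary search $\caD_i$ for the last pair $(a_0,\ell)$ with $a_0 \le a$. Then $a \in A$ exactly when such a pair exists and $a \le a_0 + (\ell-1)r$; the congruence holds automatically. Both searches take $\Oh(\log k)$ time.

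\emph{Restriction.} Scan all pairs $(a,\ell)$ in all lists, at most $k$ in total. For each pair, intersect $\{a,\dots,a+(\ell-1)r\}$ with $[\ell_0,u]$; this keeps the indices $j$ with $\lceil(\ell_0-a)/r\rceil \le j \le \lfloor (u-a)/r\rfloor$ clipped to $[0,\ell-1]$, each computed in $O(1)$ arithmetic. Drop empty pieces, and drop residues whose lists become empty. Sub-progressions of maximal progressions remain pairwise separated, so the result is maximal in $A\cap[\ell_0,u]$. Order is preserved, so no sorting is needed, and the total time is $\Oh(k)$.

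\emph{Initialization.} This is the only step with any content. First discard empty progressions and compute each residue $a_j \bmod r$. Sort the $k$ progressions by the pair (residue, starting point) in $\Oh(k\log k)$ time. Then, within each residue class, sweep in order of start, maintaining a current merged progression $[a, e]$, where $e$ is the last element. A new progression $[a', e']$ is merged, setting $e \leftarrow \max(e,e')$, whenever $a' \le e + r$; this condition covers both overlapping and adjacent progressions. Otherwise the current progression is emitted and a new one starts. Because all progressions in a class share the difference $r$ and residue, their union is a union of "intervals" on the lattice $i + r\Z$. The sweep is then the standard interval-union algorithm, yielding disjoint maximal progressions in increasing order. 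Finally, convert each $[a,e]$ to $(a,(e-a)/r+1)$ and collect the distinct residues in sorted order. The main obstacle is correctness of maximality here, but the merge condition $a'\le e+r$ ensures that emitted progressions are separated by at least one missing term. The overall time is $\Oh(k\log k)$.
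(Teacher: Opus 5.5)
Your proposal is correct and matches what the paper intends; the paper omits this proof, noting only that the operations are straightforward with binary search. Your binary search for membership, per-progression clipping for restriction, and sort-by-(residue, start) with an interval-union sweep for initialization fill in those routine details soundly, including the maximality checks.
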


    \section{Main Algorithm}
\label{sec:main-algorithm}

In this section, we formally state all the algorithmic ingredients, and use them to solve
\cref{prob:single-constraint-ilp} in $\Oh(n+\Delta^{1+o(1)})$ time, thereby 
proving \cref{thm:main-main}.

To solve \cref{prob:single-constraint-ilp}, we first use a standard proximity argument to reduce \cref{prob:single-constraint-ilp} to the \textsc{Subset Sum}  problem on an input multiset of size $\poly(\Delta)$. 
This proximity argument can be found in e.g., 
\cite[Lemma~2.1]{icalp21}, and was also used in \cite[Observation~6.1]{ChenLMZsoda24} for their $\tO(n+\Delta^{1.5})$-time algorithm. 
We omit the proof here.

\begin{lemma}
\label{lem:bounded-ilp-reduction}
In $\Oh(n+\Delta)$ time, an instance of \cref{prob:single-constraint-ilp}
can be reduced to a \textsc{Subset Sum} 
instance on an input multiset $A$ with  $\supp(A)\subseteq \{1,2,\dots,\Delta\}$ and $\lvert A\rvert \le \Oh(\Delta^2)$.
\end{lemma}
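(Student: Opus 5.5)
We reduce in three stages: first eliminate negative coefficients, then use a proximity argument to fix almost all variables, and finally binary-split the remaining bounded variables. The proximity step is the crux.

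\textbf{Stage 1: normalize signs.} For each $i$ with $a_i<0$, substitute $x_i' = u_i - x_i$. This replaces $a_i$ by $-a_i>0$ and shifts $b$ by $-a_i u_i$. Items with $a_i=0$ are discarded, since their variables are irrelevant. Now every $a_i\in[1,\Delta]$. Next, group the items by value $v\in[1,\Delta]$ and merge the bounds: setting $U_v=\sum_{i:a_i=v}u_i$, the instance becomes $\sum_v v\,y_v=b$ with $0\le y_v\le U_v$. Any solution in the $y_v$ is distributed back greedily among the items of value $v$. This takes $\Oh(n+\Delta)$ time, and the number of distinct values is at most $\Delta$.

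\textbf{Stage 2: proximity.} Sort the values and let $\bfg$ be the greedy (LP-like) fractional solution: fill the smallest values first until reaching $b$. This solution is integral except for at most one coordinate, which we round down. The key claim is the standard exchange argument (as in \cite[Lemma~2.1]{icalp21}): if a feasible integer solution $\bfy$ exists, then one exists with $\lVert\bfy-\bfg\rVert_1=\Oh(\Delta)$.

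To see this, split $\bfy-\bfg$ into its positive part (on larger values) and its negative part (on smaller values). Their weighted sums nearly agree, up to an error of at most $\Delta$. Order the unit moves so that partial sums stay in $[-\Delta,\Delta]$. If there are more than $2\Delta+1$ steps, then by pigeonhole two prefixes have equal partial sums. The segment between them is a zero-sum exchange, and undoing it reduces the distance to $\bfg$ while preserving feasibility. Hence at most $\Oh(\Delta)$ units differ.

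Consequently, each $y_v$ may be restricted to the window $[\max(0,g_v-c\Delta),\min(U_v,g_v+c\Delta)]$. Subtracting the lower endpoints from $b$ leaves an instance in which every bound is at most $2c\Delta+1$.

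\textbf{Stage 3: binary splitting.} Replace each bounded variable $0\le z_v\le B_v$ by $\Oh(\log B_v)$ items $v\cdot 2^j$ together with a remainder item, in the standard way. However, these items exceed $\Delta$. To avoid this, use unary expansion instead: each value $v$ contributes $\min(B_v,2c\Delta+1)$ copies of $v$. This yields a multiset with support in $[1,\Delta]$ and size $\Oh(\Delta\cdot\Delta)=\Oh(\Delta^2)$, as required. The running time is dominated by Stage 1 plus writing the multiplicity list, which is $\Oh(\Delta)$ entries, so the total is $\Oh(n+\Delta)$.

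\textbf{Main obstacle.} The delicate step is the proximity bound in Stage 2. Specifically, one must get the Steinitz-type ordering right so that the partial sums stay within $[-\Delta,\Delta]$. One must also handle the single fractional coordinate of $\bfg$ correctly, so that the window is $\Oh(\Delta)$ uniformly and not $\Oh(\Delta)$ per value.
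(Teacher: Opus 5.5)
Your proposal is correct and is the standard proximity reduction that the paper invokes without proof, citing \cite[Lemma~2.1]{icalp21} and \cite[Observation~6.1]{ChenLMZsoda24}: normalize signs and merge equal values, take the rounded-down greedy point $\bfg$, use the Steinitz-type exchange to find a solution within $\ell_1$-distance $O(\Delta)$ of $\bfg$, and expand each value in unary over a window of width $O(\Delta)$. Two small remarks. The exchange argument only needs the residual $b-\sum_v v g_v$ to lie in $[0,\Delta)$; the sign structure of $\bfy-\bfg$ plays no role. Also, the binary-splitting detour in Stage~3 can simply be deleted, since you end up using unary expansion anyway.
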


We henceforth work with the \textsc{Subset Sum} instance $A$ produced by \cref{lem:bounded-ilp-reduction}. 

\subsection{Grouping the input integers}
\label{sec:classification-interface}

We will partition the input integer multiset $A$ into a logarithmic number of groups $A_1,\dots,A_k$ and a small exceptional part $A^{\extr}$.
Each group $A_i$ is embedded into a low-dimensional progression $f(H_i \cap \Gamma_i)$, with a representing vector multiset $X_i$, where $\supp(X_i)\subseteq H_i \cap \Gamma_i$ and $A_i = f(X_i)$, so that $X_i$ satisfies a relative fatness condition and contains a small generating submultiset $X_i'$. 
These conditions are formally stated in 
\eqref{eq:oneclass-iii} below; they are essential for our algorithm that computes the subset sums of each group.
We also ensure certain extra conditions
between the groups, which are useful for merging the answers from the groups.

We now state the conditions required for each group. Let $\Gamma$ be a lattice, 
$H\subseteq \spa \Gamma$ be an origin-symmetric polytope,
$f\colon \spa \Gamma \to \R$ be a linear map,
and $X'\subseteq X$ be two nonempty multisets with $\supp(X)\subseteq H\cap \Gamma$.
For a fatness parameter $\lambda\ge 1$,
define the following conditions :
\begin{subequations}
\label{eq:oneclass-iii}
\begin{align}
 &f(\supp X)\subseteq [1,\Delta]\cap \Z, \text{ and }
 f(H\cap \Gamma) \text{ is a proper progression } && \text{(see \cref{def:progression})},
 \label{eq:oneclass-values}\\*
 &H\subseteq \frac{\lambda}{\lvert X\rvert} \caZ_0(X)
 && \hspace{-69pt}\text{(recall $\caZ_0(X)\coloneqq \sum_{\bfx \in X} [-\bfx/2,\bfx/2]$)},
 \label{eq:oneclass-balance}\\*
 &X' \text{ is $(\rk \Gamma)$-generating for }H\cap \Gamma  && \text{(see \cref{def:generating})}.
 \label{eq:oneclass-generating}
\end{align}
\end{subequations}

We now state the main partitioning theorem. Write $\Pi_{U}$ for the orthogonal projection onto the subspace $U$.

\begin{theorem}
\label{lem:main-classify}
Let $d\ge 1$ be given. 
Let $\bfe_1,\dots,\bfe_d$ be the standard basis of $\R^d$, and for $1\le d'\le d$ identify $\R^{d'}$ with the subspace $\spa \{\bfe_1,\dots,\bfe_{d'}\}$ of $\R^d$.

Given a nonempty multiset $A$
of $\Delta^{\Oh(1)}$ integers in $[1,\Delta]$, 
we can construct in $\tO_d(\Delta^{1+2/d})$ time
 a partition into submultisets, 
\begin{equation*}
 A=A^{\extr}\sqcup \bigsqcup_{i=1}^k A_i,
 \text{ where } \lvert A^{\extr}\rvert=\Oh_d(\Delta^{1/d} \log^2 \Delta),\, k=\Oh_d(\log \Delta),
\end{equation*}
integers
$1\le d_1\le d_2\le \dots \le d_k\le d$, 
and nonempty vector multisets $X_i',X_i$ with $X_i'\subseteq X_i$, such that
\[  \supp(X_i)\subseteq \Big (\Z \cap [-\Delta^{\Oh(1)}, \Delta^{\Oh(1)}]\Big )^{d_i} \text{ and } f(X_i)=A_i,\]
where $f\colon \R^d\to \R$ is given by $f(\bfx)=x_1$,
and, for $i\in [k]$, writing
\[
\Gamma_i\coloneqq \caL(X_i)\quad\text{and}\quad
H_i\coloneqq \conv \Big(\pm \bigcup_{j=i}^{k} \Pi_{\R^{d_i}}( \supp X_j)\Big),
\]
 the following properties hold:
\begin{enumerate}[label=\textup{(\roman*)}]
\item\label{item:main-classify-zonotope}
The conditions in \eqref{eq:oneclass-iii} hold for
$(\Gamma,H,f, X',X)=(\Gamma_i,H_i,f|_{\spa(\Gamma_i)}, X_i',X_i)$ with
$\lambda=\Oh_d(\log \Delta)$.
Moreover, the algorithm also returns $\boldsymbol{\tau}_i\in \Gamma_i$ witnessing \eqref{eq:oneclass-generating}, i.e.,
$\boldsymbol{\tau}_i+(\Gamma_i\cap (\rk \Gamma_i) H_i)\subseteq \Sigma(X_i')$.
\item\label{item:main-classify-generating}
$\lvert X_i'\rvert=\Oh_d(\Delta^{1/d})$. 
\item\label{item:main-classify-projection}
For each $i\in [k]$,
$\bigcup_{j=i}^k \Pi_{\R^{d_i}}(\supp X_j)\subseteq \Gamma_i$.
\item\label{item:main-classify-prefix}
For $1\le i\le k$,
 $\sum_{j=1}^i\lvert A_j\rvert=\Oh_d(\lvert A_i\rvert\log \Delta)$.
\end{enumerate}
\end{theorem}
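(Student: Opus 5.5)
We will prove \cref{lem:main-classify} by an iterative peeling procedure built on the algorithmic CFP theorem (the algorithmic version of \cref{thm:cfp-intro}, computing $P$, $A'\subsetneq\hat A$ in $\tO_\beta(s^2\Delta)$ time). Throughout we take $s=\Theta_d(\Delta^{1/d})$, so each CFP call discards $\Oh_d(\Delta^{1/d}\log\Delta)$ garbage items and returns a generating set of size $\Oh_d(\Delta^{1/d})$. This already gives \ref{item:main-classify-generating}. Over $k=\Oh_d(\log\Delta)$ calls the garbage totals $\Oh_d(\Delta^{1/d}\log^2\Delta)$, matching the bound on $\lvert A^{\extr}\rvert$. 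Each call costs $\tO_d(s^2\Delta)=\tO_d(\Delta^{1+2/d})$, so the time bound holds once the number of rounds is logarithmic. If $\lvert A\rvert$ is too small for the CFP hypothesis $s\le C^{-1}\lvert A\rvert/\log\Delta$, we simply put the rest into $A^{\extr}$.

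\textbf{One round.} Suppose we hold a current multiset $R$ of remaining integers. We first embed it into the current GAP $f(H\cap\Z^{d'})$, rewritten via coordinates so that $f(\bfx)=x_1$ after a unimodular change of basis. We then keep the nested coordinate structure, identifying $\R^{d'}$ with the span of the first $d'$ standard basis vectors. This is what makes the projections $\Pi_{\R^{d_i}}$ meaningful.

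With $X$ the vector multiset of $R$ and $H=\conv(\pm\supp X)$, we look for a fat group. We order the vectors by a John-type / Löwner ellipsoid analysis of $H$ and take a suffix-closed family of "large" directions. Concretely, we extract the submultiset $X_i$ of vectors whose removal shrinks $\vol(\conv(\pm\cdot))$ by a constant factor, chosen maximal so that $\lvert X_i\rvert H_i\subseteq\lambda\caZ_0(X_i)$ holds with $\lambda=\Oh_d(\log\Delta)$. The lower bound $\caZ_0(X_i)\supseteq c\cdot\lvert X_i\rvert H_i/\log\Delta$ should follow from a dyadic pigeonhole. We bucket the vectors by the scale at which they lie relative to $H$; some bucket carries a $1/\log\Delta$ fraction, and the vectors in that bucket span $H$ up to constants. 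The generating set $X_i'$ with witness $\boldsymbol\tau_i$ comes from the CFP output for this round, restricted to $X_i$. It is $s$-generating, and since $s\ge d$, also $(\rk\Gamma_i)$-generating after passing to $\Gamma_i=\caL(X_i)$. Properness of $f$ on $H_i\cap\Gamma_i$ is inherited from $s$-properness.

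After extracting $X_i$, we reapply CFP to the remainder, possibly changing the dimension. To get $d_1\le\dots\le d_k$ and \ref{item:main-classify-projection}, we process in the order in which later groups are embedded in coordinates of lower dimension. The definition of $H_i$ uses the union over $j\ge i$ of projections of later groups, so $H_i$ must contain all of them. This is arranged by building groups in reverse: the group extracted first (the largest hull) receives the \emph{last} index. We must then check that $\caL(X_i)$ contains the projections of later groups; we will force this by intersecting with the lattice and redefining coordinates at each step.

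\textbf{Counting rounds and the prefix condition.} The number of rounds is bounded as follows. Each round either shrinks the volume of the symmetric hull by a constant factor, or drops the dimension. Volumes are bounded between $1$ and $\Delta^{\Oh(1)}$ by integrality, so $k=\Oh_d(\log\Delta)$. For \ref{item:main-classify-prefix}, we choose the extraction threshold so that each extracted group is at least a constant fraction of everything remaining. Geometric decay then gives $\sum_{j\le i}\lvert A_j\rvert=\Oh_d(\lvert A_i\rvert\log\Delta)$.

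\textbf{Main obstacle.} The hardest step is reconciling three things: the fatness condition \eqref{eq:oneclass-balance} for $H_i$ defined through projections of \emph{all later} groups, the lattice containment \ref{item:main-classify-projection}, and the fact that re-running CFP can change both the embedding and the dimension. Our first attempt will be to perform all re-embeddings by a single compatible flag of coordinates obtained from successive CFP outputs. We would then bound how often a dimension change can occur by $d$. We expect that controlling the hull when the dimension increases will require a careful potential argument.
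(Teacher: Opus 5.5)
Your top-level plan (CFP with $s=\Theta_d(\Delta^{1/d})$, peel off a fat group, recurse on the remainder, count rounds by hull volume) is the right one, but several steps you specify would fail.

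\textbf{Ordering, extraction, and property (iv).} Your index order is backwards. Condition \eqref{eq:oneclass-balance} for group $i$ is measured against $H_i$, which contains the projections of \emph{all} groups $j\ge i$. A group peeled off in some round, however, is fat only relative to the hull of what remains in that round. So groups must be indexed in order of extraction, with later groups living in higher dimension: re-embedding only appends coordinates, so $\Pi_{\R^{d_i}}$ recovers their earlier representation. With your reversal, $X_1$ is the last and thinnest group, yet it must be fat relative to the hull of everything. This already fails in the toy example of \cref{subsec:overview-classification}.

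That example also refutes your route to property (iv). The fat part may be a vanishing fraction of the remainder, so no threshold makes every group a constant fraction of what is left while keeping fatness.

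A dyadic scale pigeonhole does not give fatness either, because fatness is about directions. It requires $\sum_{\bfx\in X_i}\lvert\langle\bfu,\bfx\rangle\rvert\ge\frac{2}{\lambda}\lvert X_i\rvert\max_{\bfy\in H_i}\langle\bfu,\bfy\rangle$ for every $\bfu$. A bucket of nearly parallel vectors of equal scale violates this, and nothing in the bucketing forces the leftover hull to shrink.

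The missing tool is \cref{lem:one-step-peeling}. Normalize $H$ via \cref{lem:bounded-coordinate-selection} so that it is sandwiched between cubes. Then greedily extract $d'$-tuples with determinant at least $2\eps_d$, detected through $\Oh_d(1)$ boxes. Each tuple's centered zonotope contains $cH$, which gives fatness with $\lambda=\Oh_d(1)$. Once no such tuple survives, a maximal-determinant argument gives $\vol_{d'}(\conv(\pm W))\le\frac12\vol_{d'}(H)$.

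The CFP generating set must then be \emph{added} to the group, $E_i=F_i\cup E_i'$, not restricted to it. Restriction destroys the generating property. It also destroys the identity $\caL(X_i)=\caL(E_i')=\caL(Y_i)$ (with $Y_i$ the remainder re-embedded in round $i$), which is what places the projections of all later groups in $\caL(X_i)$, i.e.\ property (iii). Fatness survives the union because $sH_i\subseteq\caZ_0(E_i')$. Property (iv) is obtained only afterwards, by merging consecutive groups at the prefix maxima of $\lvert E_r\rvert$. That merging is exactly where $\lambda$ degrades from $\Oh_d(1)$ to $\Oh_d(\log\Delta)$.

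\textbf{Re-embedding and the bound $d_k\le d$.} You leave these as the main obstacle, and they need a specific construction. Instead of a unimodular change of basis, lift the CFP output $g(B\cap\Z^m)$ by $\phi(\bfu)=(g(\bfu),u_{r_g+1},\ldots,u_m)$, where $g(\bfe_1),\ldots,g(\bfe_{r_g})$ are linearly independent, so that $\Pi_{\R^{d_{i-1}}}\circ\phi=g$. Set $Y_i=\phi(g^{-1}(\hat R))$, where $\hat R$ is the cleaned part of the remainder $Y_{i-1}\setminus E_{i-1}$. This keeps $f(\bfx)=x_1$ and gives $\Pi_{\R^{d_{i-1}}}(Y_i)=\hat R$ and $\Pi_{\R^{d_{i-1}}}(\Gamma_i)=g(\Z^m)\subseteq\Gamma_{i-1}$. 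Hence $\Pi_{\R^{d_{i-1}}}$ maps $sH_i\cap\Gamma_i$ injectively into $sH_{i-1}\cap\Gamma_{i-1}$, and $s$-properness of $f$ is inherited round by round. It does not follow from the CFP GAP alone, which is proper only for the vector map $g$.

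The CFP rank is bounded only by a constant depending on $\beta$, and ambient dimensions accumulate ($d_i=d_{i-1}+m-r_g$), so $d_i\le d$ is not automatic. The paper maintains $\lvert sH_i\cap\Gamma_i\rvert\le 3s\Delta(C/s)^{q_i}$ with $q_i=d_i-\dim H_i$, using \cref{lem:classify-lattice-section}. Together with $s^m\ll_d\lvert sH_i\cap\Gamma_i\rvert$, this forces $s^{d_i-1}\ll_d\Delta$, hence $d_i\le d$ once $s=\lceil C\Delta^{1/d}\rceil$ with $C$ large.

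The round count then uses that $(d_i,q_i)$ is coordinatewise nondecreasing and bounded by $d$, with the hull volume halving while this pair stays constant. ``The volume shrinks or the dimension drops'' is not a valid potential, since $\dim H_i$ can increase when coordinates are appended.
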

We remark that the computed vectors $\boldsymbol{\tau}_i$ are needed only for the witness recovery steps.

We prove \cref{lem:main-classify} in \cref{sec:classify} by iteratively applying the algorithmic Conlon--Fox--Pham theorem (stated and proved in \cref{sec:algorithmiccfp}).

We give some intuition to help parse the statement of \cref{lem:main-classify} and motivate the next few subsections. Consider the projected vectors in a suffix, $\hat X_i \coloneqq \bigsqcup_{j=i}^k \Pi_{\R^{d_i}} X_j$. 
Then, $f(\hat X_i) = \bigsqcup_{j=i}^k A_j$,  $\supp(\hat X_i) \subseteq H_i \cap \Gamma_i$, and $X_i'\subseteq \hat X_i$ is $\rk(\Gamma_i)$-generating for $H_i\cap \Gamma_i$. 
(In particular, taking $i=1$, this embeds $\bigsqcup_{j=1}^k A_j = A\setminus A^{\extr}$ into a progression $f(H_1 \cap \Gamma_1)$ with a small generating $X'_1$, which is what the Conlon--Fox--Pham theorem achieves.)
Therefore, 
    \cref{lem:innerpoints} implies 
    \[ (\caZ(\hat X_i) \ominus \lvert X_i'\rvert H_i)\cap \Gamma_i \subseteq \Sigma(\hat X_i), \]
    characterizing the ``deep'' subset sums of $\hat X_i$. This, though, is insufficient for us to efficiently compute $\Sigma(\bigsqcup_{j=i}^k A_j) =f(\Sigma(\hat X_i))$. 
    Instead, we compute the single-group subset sums $\Sigma(A_i) = f(\Sigma(X_i))$, 
    which have the analogous ``deep'' characterization
    \[ (\caZ(X_i) \ominus \lvert X_i'\rvert H_i)\cap \Gamma_i \subseteq \Sigma(X_i), \]
    but also satisfy the crucial fatness condition \eqref{eq:oneclass-balance}, \[H_i \subseteq \frac{\Oh_d(\log \Delta)}{\lvert X_i\rvert} \caZ_0(X_i).\]
    In the next subsection we state the algorithms to efficiently compute $ f(\Sigma(X_i))$ using this fatness condition, and to compute the sumset of $f(\Sigma(X_i))$ and $f(\Sigma(\hat X_i\setminus X_i))$ restricted to an interval; here, the interval restriction is for controlling the time complexity.
    By iterating, we can thus compute the subset sums of $A_1\sqcup \dots \sqcup A_k$ restricted to the interval of interest, $[t-\sum_{a\in A^{\extr}} a,t]$. The condition~\ref{item:main-classify-prefix} of \cref{lem:main-classify} allows us to bound the total time complexity of this iterative procedure.

\subsection{Solving one group and merging groups}

We now state the algorithm for computing the subset sums of one group and compactly representing them as an AP decomposition (\cref{def:ap-decomposition}). 

\begin{lemma}
\label{lem:oneclass}
Let $X'\subseteq X$ be two multisets of vectors in $\Z^d$.
Suppose \eqref{eq:oneclass-iii} holds for  $X',X$, $\Gamma=\caL(X),H=\conv(\pm \supp X)$, a linear map $f\colon \R^d\to \R$ with integer coefficients, and an unknown $\lambda \ge 1$. Then, we have the following algorithms:
\begin{itemize}
\item $\mathsf{OneGroup}(X,X',f)$ computes an AP decomposition of $f(\Sigma(X))$ of size $\Oh_d(\lambda \lvert X'\rvert\Delta)$.
\item 
$\mathsf{OneGroupWitness}(X,X',f,\boldsymbol{\tau},t)$, where
$\boldsymbol{\tau}+(\Gamma\cap (\rk \Gamma) H)\subseteq \Sigma(X')$ and
$t\in f(\Sigma(X))$,
computes a submultiset $W\subseteq X$ such that $\sum_{\bfw\in W} f(\bfw)=t$.
\end{itemize}
Both algorithms run in
$\tO_d(\lambda \lvert X'\rvert\Delta)$ time, provided the vector coordinates and multiplicities in $X$ and coefficients of $f$ have $\Oh(\log \Delta)$ bits.
\end{lemma}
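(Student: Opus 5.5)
Write $D=\lvert X'\rvert H$, $\Lambda=\Gamma\cap\ker f$ and $K=\caZ(X)$. The plan is to split $f(\Sigma(X))$ into \emph{deep} sums and \emph{shallow} sums. The deep sums are handled by \cref{lem:innerpoints}: since $X'$ is $(\rk\Gamma)$-generating for $H\cap\Gamma$ by \eqref{eq:oneclass-generating}, every point of $(K\ominus D)\cap\Gamma$ lies in $\Sigma(X)$. The shallow sums are the values $f(\bfz)$ with $\bfz\in\Sigma(X)$ and $\pi_\Lambda(\bfz)\notin\pi_\Lambda(K\ominus D)$; they come from the quotient shell $\pi_\Lambda(K)\setminus\pi_\Lambda(K\ominus D)$.

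\textbf{Step 1 (volume bound).} From the fatness condition \eqref{eq:oneclass-balance}, $D\subseteq \eps\,\caZ_0(X)$ with $\eps=\lambda\lvert X'\rvert/\lvert X\rvert$. Hence $K$ contains a translate of $\eps^{-1}D$ (when $\eps<1$; otherwise everything is shallow and the bound is trivial). The quotient shell estimate \eqref{eqn:quotientvolbound} (\cref{lem:quotient-shell-bound}) then bounds the shell volume by $\Oh_d(\eps)\vol_d(\pi_\Lambda K)$. Because $f$ has integer coefficients and $\pi_\Lambda$ identifies exactly the $f$-fibres in $\Gamma$, $\vol_d(\pi_\Lambda K)\lesssim \fw(K)\le\lvert X\rvert\Delta$ (after normalizing by $\det$ of $\Gamma$ modulo $\Lambda$). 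So the shell has volume, and up to constants lattice-point count, $\Oh_d(\lambda\lvert X'\rvert\Delta)$.

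\textbf{Step 2 (quotient cap covering).} Apply the quotient variant of economic cap covering from \cref{sec:quotient-cap-covering} to $K$ at thickness $D$. This means greedily packing disjoint scaled Macbeath regions centered in the shell, working at discretization scale $\pi_\Lambda D$. Expanding each region and approximating it by a cap gives caps $Q_1,\dots,Q_m$ of $K$ that cover every $\Gamma$-point whose class lies in the quotient shell, with $\sum_i\vol_d(\pi_\Lambda Q_i)=\Oh_d(\lambda\lvert X'\rvert\Delta)$. The geometry runs on linear optimization oracles for $K$, each query costing about $\lvert\supp X\rvert$ time. The number of queries is about the shell volume divided by $\vol_d(\pi_\Lambda D)$, and $\lvert\supp X\rvert=\Oh_d(\vol_d(\pi_\Lambda D))$ because $f$ is injective on $H\cap\Gamma$ (properness, \eqref{eq:oneclass-values}). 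So the total stays $\tO_d(\lambda\lvert X'\rvert\Delta)$. The same sweep also traces the deep region: travel along a fixed short vector of $D\cap\Gamma$ from shell points, using Lenstra's algorithm (\cref{lem:lattice-optimization}) to find lattice points. This yields the deep values as arithmetic progressions with one common difference, of total size bounded by the number of boxes.

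\textbf{Step 3 (subset sums in caps).} For each cap $Q_i$, translate to its apex and flip signs so that all items have nonnegative inner product with the cap normal. Next apply \cref{lem:local-properization} to reparametrize $Q_i$ modulo $\Lambda$ as a low-dimensional box whose volume matches $\vol_d(\pi_\Lambda Q_i)$. Then split the items by orthant and run the box subset-sum algorithm (\cref{lem:box-subset-sums}) followed by sparse convolution (\cref{lem:sparse-convolution}). This gives $f(\Sigma(X)\cap Q_i)$ in time near-linear in the cap's quotient volume. Finally, take the union of the deep progressions and the shallow values, with single values treated as length-1 progressions, and build the AP decomposition with \cref{lem:ap-operations}. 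Its size is $\Oh_d(\lambda\lvert X'\rvert\Delta)$.

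\textbf{Witness recovery.} If $t$ is shallow, its witness comes from the cap computation by standard backtracking through the convolutions. If $t$ is deep, pick $\bfy\in(K\ominus D)\cap\Gamma$ with $f(\bfy)=t$ and rerun the proof of \cref{lem:innerpoints} constructively. Solve the LP for $\bfy-\boldsymbol\tau\in\caZ(X\setminus X')$, take a basic solution and round it down, then find the remainder inside $(\boldsymbol\tau+(\rk\Gamma)H)\cap\Gamma\subseteq\Sigma(X')$ by a box subset-sum computation over $X'$, which has size $\Oh_d(\lvert X'\rvert\Delta)$.

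\textbf{Main obstacle.} I expect Step 2 to be the hardest: making the quotient cap covering economical in $\R^d/\Lambda$ while every operation is actually performed in $\R^d$ through oracles. Keeping the Macbeath-region packing argument valid modulo $\Lambda$ and keeping the oracle cost within budget is where the delicate work lies.
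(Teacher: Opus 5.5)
Your proposal is correct and follows the paper's proof. The paper packages your Steps 1--2 as \cref{thm:subset-sum-cap-covering}, applied with $\rho=\lambda\lvert X'\rvert/\lvert X\rvert$ on the interval $[0,w_X]$ where $w_X=\sum_{\bfx\in X}f(\bfx)$, obtaining the deep values $G$ as an AP decomposition together with caps and enclosing parallelepipeds; it then handles each cap by \cref{lem:cap-subset-sums} as in your Step 3, and recovers deep-case witnesses via \cref{lem:cap-cover-witness}. The one ingredient you use only implicitly is cap thickness (each $Q_i$ contains a translate of $c_dH$), which is what lets the paper bound both $\lvert f((P_i-P_i)\cap\Gamma)\rvert$ and the per-cap $\lvert\supp X\rvert$ scan by $\Oh_d(\vol_d(\pi_\Lambda P_i)/\det\Gamma)$ through \cref{lem:quotient-lattice-point-bound}.
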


We prove
\cref{lem:oneclass} in \cref{sec:one-group}.

We now state the algorithm for merging the subset sums of one group $f(X)$ with an accumulated partial result that consists of subset sums of $f(Y)$ restricted to an interval $J$.
The algorithm returns the combined subset sums $f(\Sigma(X\sqcup Y))$ restricted to an interval $I$, defined so that any pair of $a\in f(\Sigma(X))$ and $b\in f(\Sigma(Y))$ with $a+b\in I$ must satisfy $b\in J$, and thus $b$ is included in the input. 
The algorithm is efficient if the interval length $\lvert J\rvert$ is not much larger than $\lvert X\rvert\cdot \Delta$.

\begin{lemma}
\label{lem:merge}
Let $X'\subseteq X$ be two multisets of vectors in $\Z^d$, and let
$Y$ be a multiset of vectors in $\caL(X)$.
Suppose \eqref{eq:oneclass-iii} holds for $X',X, \Gamma = \caL(X)$,  
$H=\conv(\pm \supp(X\sqcup Y))$, a linear map
$f\colon \R^d\to \R$ with integer coefficients, and an unknown $\lambda\ge 1$.
Assume $S_X = f(\Sigma(X))$ is implicitly given by access to its AP decomposition.

Let $J\subseteq f(\caZ(Y))$ be a nonempty closed interval (with integer endpoints), and $S_Y=f(\Sigma(Y))\cap J$ be implicitly given by access to its AP decomposition.
Let $I$ be a closed interval (with integer endpoints) satisfying
\begin{equation}
\label{eq:public-merge-restriction-condition}
 I\subseteq f(\caZ(X\sqcup Y)) \text{ and }
 \bigl(I-f(\caZ(X))\bigr)\cap f(\caZ(Y))\subseteq J.
\end{equation}
Then we have the following algorithms:
\begin{itemize}[beginpenalty=10000]
\item $\mathsf{Merge}(X,X',Y,f,S_X,J,S_Y,I)$ computes an AP decomposition of
\begin{equation*}
 (S_X+S_Y)\cap I=f(\Sigma(X\sqcup Y))\cap I,
\end{equation*}
where the equality follows from \eqref{eq:public-merge-restriction-condition}.
The decomposition has size
$\Oh_d\!\left(\lambda \lvert X'\rvert\left(\Delta+\frac{\lvert J\rvert}{\lvert X\rvert}\right)\right)$.
\item $\mathsf{MergeWitness}(X,X',Y,f,S_X, J,S_Y,I,\boldsymbol{\tau},t)$, where
$\boldsymbol{\tau}$ satisfies
$\boldsymbol{\tau}+(\Gamma\cap (\rk \Gamma) H)\subseteq \Sigma(X')$ and
$t\in (S_X+S_Y)\cap I$, computes $a\in S_X$ and
$b\in S_Y$ with $a+b=t$.
\end{itemize}
Both algorithms run in
$\tO_d\!\left(\lambda \lvert X'\rvert\left(\Delta+\frac{\lvert J\rvert}{\lvert X\rvert}\right)\right)$ time, provided the vector coordinates and multiplicities in $X, Y$,  and coefficients of $f$ have $\Oh(\log \Delta)$ bits.
\end{lemma}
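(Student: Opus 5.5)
The plan is to treat $(S_X+S_Y)\cap I$ geometrically in $\R^d$. Lift the problem to the convex body
\[
K\coloneqq \bigl(\caZ(X)+\caZ(Y)\bigr)\cap f^{-1}(I),
\]
and split the lattice points of $K\cap\Gamma$ into a deep part $K\ominus D$ and a quotient shell, where $D\coloneqq \lvert X'\rvert H$ and $\Lambda\coloneqq \Gamma\cap\ker f$. The deep part is automatic.
\begin{itemize}
\item Since $\supp(X\sqcup Y)\subseteq H\cap\Gamma$ and $X'\subseteq X\sqcup Y$ is $(\rk\Gamma)$-generating by \eqref{eq:oneclass-generating}, \cref{lem:innerpoints} applies to $X\sqcup Y$.
\item It gives $(\caZ(X\sqcup Y)\ominus D)\cap\Gamma\subseteq\Sigma(X\sqcup Y)$.
\item So every value in $f\bigl((K\ominus D)\cap\Gamma\bigr)$ lies in $f(\Sigma(X\sqcup Y))\cap I$, which equals $(S_X+S_Y)\cap I$ by \eqref{eq:public-merge-restriction-condition}.
\end{itemize}
Moreover, a value is deep as soon as some $\Lambda$-translate of it is deep, so only the quotient shell $\pi_\Lambda(K)\setminus\pi_\Lambda(K\ominus D)$ needs explicit work.

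The first main step bounds the size of this quotient shell.
\begin{itemize}
\item By \eqref{eq:public-merge-restriction-condition}, $I\subseteq J+f(\caZ(X))$. Hence $\fw(K)\le\lvert I\rvert\le\lvert J\rvert+\lvert X\rvert\Delta$, using \eqref{eq:oneclass-values} to bound $\fw(\caZ(X))\le\lvert X\rvert\Delta$.
\item Since $f$ has integer coefficients, this width bounds $\vol_d(\pi_\Lambda K)$ (after rescaling by $\det\Lambda$ if needed).
\item $K$ contains a translate of $\caZ(X)$: pick $b\in J\cap f(\caZ(Y))$ realised by a point of $\caZ(Y)$, and choose it so that $b+f(\caZ(X))\subseteq f(\caZ(X\sqcup Y))$ overlaps $I$ appropriately. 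I expect to have to clip $\caZ(X)$ to a slab, as in the overview.
\item By the fatness condition \eqref{eq:oneclass-balance}, $\caZ_0(X)$ contains $\eps^{-1}D$ with $\eps=\lambda\lvert X'\rvert/\lvert X\rvert$.
\item The quotient shell bound \eqref{eqn:quotientvolbound} (\cref{lem:quotient-shell-bound}) then gives quotient shell volume
\[
\Oh_d(\eps)(\lvert J\rvert+\lvert X\rvert\Delta)=\Oh_d\bigl(\lambda\lvert X'\rvert(\Delta+\lvert J\rvert/\lvert X\rvert)\bigr),
\]
which is exactly the claimed size and time bound.
\end{itemize}

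The second step applies the quotient economic cap covering of \cref{sec:quotient-cap-covering}, in the packaged form of \cref{sec:subset-sum-cap-covering}, to $K$ with thickness $D$. This yields:
\begin{itemize}
\item caps $Q_i$ covering the relevant part of the shell, with total quotient volume of the same order;
\item an AP decomposition of the deep values, with common difference $f(\bfx_0)$ for a fixed $\bfx_0\in D\cap\Gamma$, obtained by sweeping lines in direction $\bfx_0$ across $K\ominus D$.
\end{itemize}

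For each cap, follow the merging idea of the overview. Pick a lattice point $\bfq_i\in Q_i$ and a point $\bfx_i\in\caZ(X)\cap\Gamma$ maximising the cap's defining functional. Then any $\bfz_1+\bfz_2\in Q_i$ with $\bfz_1\in\caZ(X)$ and $\bfz_2\in\caZ(Y)$ satisfies
\[
\bfz_1\in\bfx_i+(Q_i-Q_i),\qquad \bfz_2\in\bfq_i-\bfx_i+(Q_i-Q_i).
\]
For each cap:
\begin{itemize}
\item enumerate the $f$-values of lattice points in these two regions, taken modulo $\Lambda$ via the local properization \cref{lem:local-properization} so that the count is proportional to quotient volume;
\item filter them by membership in $S_X$ and $S_Y$ using \cref{lem:ap-operations};
\item combine the two lists by \cref{lem:sparse-convolution}.
\end{itemize}
Finally, union the cap outputs restricted to $I$ with the deep APs to form the AP decomposition.

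For $\mathsf{MergeWitness}$:
\begin{itemize}
\item If $t$ is found in a cap, the convolution directly records a pair $(a,b)$.
\item If $t$ is deep, first find a lattice point $\bfy\in f^{-1}(t)\cap(K\ominus D)$ by Lenstra's algorithm (\cref{lem:lattice-optimization}). Then split $\bfy$ by the LP-rounding argument of \cref{lem:innerpoints}, using $\boldsymbol{\tau}$, into a subset of $X$ and a subset of $Y$; this is \cref{lem:cap-cover-witness}. Take $a,b$ as the $f$-values of these two subsets, so that $a\in S_X$ and $b\in S_Y$; here $b\in J$ by \eqref{eq:public-merge-restriction-condition}.
\end{itemize}

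The step I expect to be the main obstacle is the per-cap accounting. We must ensure that the lattice points of $\bfx_i+(Q_i-Q_i)$ and $\bfq_i-\bfx_i+(Q_i-Q_i)$, counted modulo $\Lambda$, number $\Oh_d$ of the quotient volume of $Q_i$ (plus $\vol(\pi_\Lambda D)$), even though $Q_i-Q_i$ may be long in $\Lambda$-directions. We must also pay for the oracle queries on $K$, at cost $\approx\lvert\supp(X\sqcup Y)\rvert\lesssim\vol(\pi_\Lambda D)$ each. I would handle this by requiring each cap to have width at least $D$, so that $\vol(\pi_\Lambda(Q_i-Q_i))\ll_d\vol(\pi_\Lambda Q_i)$ by a Rogers--Shephard-type bound in the quotient, and then rely on the local properization to enumerate representatives.
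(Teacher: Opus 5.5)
You have the paper's architecture: the deep part comes from \cref{lem:innerpoints} and \cref{thm:subset-sum-cap-covering}, the per-cap candidate lists are filtered by $S_X,S_Y$ and combined by sparse convolution, and deep witnesses use \cref{lem:cap-cover-witness}. But the body you cover is the wrong one, and this breaks the central step.

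You take $K=\caZ(X\sqcup Y)\cap f^{-1}(I)$. The interval $I$ can be much shorter than $w_X\coloneqq\fw(\caZ(X))$. In \cref{alg:subset-sum-main}, $\lvert I_1\rvert\le W^{\extr}$ while $W_1$ can be of order $\lvert A_1\rvert\Delta$, and $I$ can even be a single point. So your $K$ need not contain a translate of $\caZ(X)$, of $\rho^{-1}D$, or even of $H$, and the hypotheses of \cref{thm:subset-sum-cap-covering} fail. Clipping $\caZ(X)$ to a slab, as you suggest, destroys exactly the fatness coming from \eqref{eq:oneclass-balance}.

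Worse, your exchange step fails. To conclude $\bfz_1\in\bfx_i+(Q_i-Q_i)$ you need $\bfx_i+\bfz_2\in Q_i$, hence $\bfx_i+\bfz_2\in K$, i.e.\ $f(\bfx_i)+f(\bfz_2)\in I$. Nothing ensures this, because $\bfx_i$ maximizes $\langle\bfu_i,\cdot\rangle$ over all of $\caZ(X)$. For a cap at the face $f=\max I$, $\bfx_i$ is the vertex with $f(\bfx_i)=w_X$. So your candidate lists can miss genuine pairs. Note also that you state the claim for arbitrary $\bfz_2\in\caZ(Y)$, which would only hold with no clipping at all.

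The paper's fix is to clip by the larger interval $J^+\coloneqq J+[0,w_X]\supseteq I$, i.e.\ $K=\caZ(X\sqcup Y)\cap f^{-1}(J^+)$. Then $K\supseteq\caZ(X)+\bigl(\caZ(Y)\cap f^{-1}(J)\bigr)$, which supplies both missing facts:
\begin{itemize}
\item $K$ contains a translate of $\caZ(X)$ (since $J\subseteq f(\caZ(Y))$), hence of $\rho^{-1}D$ with $\rho=\lambda\lvert X'\rvert/\lvert X\rvert$, and of $dH$.
\item $\bfx_i+\bfz_2\in K$ whenever $f(\bfz_2)\in J$.
\end{itemize}
The exchange claim is proved only for $\bfz_2$ with $f(\bfz_2)\in J$. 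This is exactly where \eqref{eq:public-merge-restriction-condition} is used: it guarantees that every pair contributing a sum in $I$ has $f(\bfz_2)\in J$. Since $\fw(K)\le\lvert J\rvert+w_X\le\lvert J\rvert+\lvert X\rvert\Delta$, the cost $\rho\lvert J^+\rvert$ is the claimed bound, and you intersect with $I$ only at the end.

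The per-cap accounting you flag as the main obstacle is routine. Enumerate $f\bigl((P_i-P_i)\cap\Gamma\bigr)$ directly by \cref{lem:quotient-lattice-enumeration}, using the enclosing parallelepipeds $P_i$. Since $P_i-P_i$ is a translate of $2P_i$ and contains $2c_dH$, \cref{lem:quotient-lattice-point-bound,lem:volumeineq} bound $\lvert f(2(P_i-P_i)\cap\Gamma)\rvert$ by $\Oh_d(\vol_d(\pi_\Lambda P_i)/\det\Gamma)$; this controls both lists and their sumset. Properness of $f(H\cap\Gamma)$ gives $\lvert\supp(X\sqcup Y)\rvert$ of the same order, which pays for the oracle queries. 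So you need neither a Rogers--Shephard bound nor local properization here.
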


We prove \cref{lem:merge} in \cref{sec:merging-groups}.

\subsection{Main algorithm}
\label{sec:main-algo}

\begin{algorithm}
\DontPrintSemicolon
\SetKwProg{Fn}{Function}{:}{}
\caption{Main algorithm for \textsc{Subset Sum} with target $t$ (decision version)}
\label{alg:subset-sum-main}
\Fn{$\mathsf{SubsetSum}_d(A,t)$}{
    Partition $A = A^{\extr}\sqcup \bigsqcup_{i=1}^k A_i $ by \cref{lem:main-classify}, and compute
    $\Sigma(A^{\extr})$ by standard DP\\
    \lIf{$k=0$}{\Return{$\one[t\in \Sigma(A^{\extr})]$}}
\lFor{$i\in [k]$}{compute the interval $I_i$  from \eqref{eq:assembly-windows}}
\lFor{$i\in [k]$}{find an AP decomposition of $\Sigma(A_i)$ by
    $\mathsf{OneGroup}(X_i,X_i',f)$ (\cref{lem:oneclass})}
    $S_k\gets \Sigma(A_k)\cap I_k$
    \textup{(\cref{lem:ap-operations})}\\
    \For{$i=k-1,\ldots,1$}{
        $Y_i \gets \bigsqcup_{j=i+1}^k \Pi_{\R^{d_i}} X_j$\\
        $S_i\gets \mathsf{Merge}\bigl(X_i,X_i',
        Y_i,
        f,\Sigma(A_i),I_{i+1},S_{i+1},I_i\bigr)$
        \textup{(\cref{lem:merge,lem:ap-operations})}
    }
    \Return{$\one[(t - \Sigma(A^{\extr})) \cap S_1 \neq \varnothing]$ \textup{(\cref{lem:ap-operations})}}
}
\end{algorithm}

We now describe the main algorithm for a given dimension parameter $d\ge 1$, with pseudocode given in 
\cref{alg:subset-sum-main}.
In words,  we first use \cref{lem:main-classify} to partition the input multiset $A$ (with $\supp(A) \subseteq \Z \cap [1,\Delta]$) into groups $A_1,A_2,\dots,A_k$ and a small exceptional part $A^{\extr}$.
Compute the subset sums of each group using $\mathsf{OneGroup}$ (\cref{lem:oneclass}), and then
merge their results by applying $\mathsf{Merge}$ (\cref{lem:merge}) successively in decreasing order $A_k,A_{k-1},\dots,A_1$. Finally, merge the result with the subset sums of $A^{\extr}$ computed by brute-force dynamic programming. 
After each merging step, we restrict the obtained partial subset sums to an interval that excludes those irrelevant values which can never contribute to the target subset sum $t$.

We now define the intervals for the restriction.
Given the partition in \cref{lem:main-classify}, for $i\in [k]$, write 
\[W_i=\sum_{a\in A_i} a,\,\, 
W_{\ge i}=\sum_{j =  i}^k W_j,\, \text{ and }
W^{\extr}=\sum_{a\in A^{\extr}} a.\]
We assume $0\le t\le \sum_{a\in A} a$ (otherwise, $t$ is not a subset sum of $A$).
Since $A^{\extr}$ and $A_1,\dots,A_{i-1}$ contribute total sum at most $W^{\extr}+\sum_{j<i} W_j$, the relevant interval for the subset sums of the suffix $A_{i}\sqcup \dots \sqcup A_{k}$ is
\begin{equation}
\label{eq:assembly-windows}
 I_i=[0,W_{\ge i}]\cap
 \Big [t-W^{\extr}-\sum_{j<i} W_j,\ t\Big ],
\end{equation}
which is nonempty due to $0\le t\le \sum_{a\in A} a$, and satisfies 
\begin{equation}
    \label{eq:assembly-window-recurrence}
 I_{i+1}\supseteq (I_i-[0,W_i])\cap [0,W_{\ge i+1}] \,\, \text{for all } i\in [k-1].
\end{equation}

\begin{lemma}
\label{lem:main-algorithm}
Given integer $d\ge 1$, a multiset $A$ of
integers in $[1,\Delta]$, and an integer $t$, 
$\mathsf{SubsetSum}_d(A,t)$ (\cref{alg:subset-sum-main}) correctly decides whether there is a submultiset
$B\subseteq A$ with $\sum_{a\in B} a=t$
in $\tO_d(\Delta^{1+2/d})$ time.
In the YES case, one can recover a witness $B$ in the same time complexity.
\end{lemma}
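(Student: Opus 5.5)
We argue correctness and running time separately, with an induction on $i$ from $k$ down to $1$ for correctness. The invariant is
\[
 S_i=f\bigl(\Sigma(\textstyle\bigsqcup_{j\ge i}X_j)\bigr)\cap I_i=\Sigma\bigl(\textstyle\bigsqcup_{j\ge i}A_j\bigr)\cap I_i .
\]
The base case $i=k$ is immediate from $\mathsf{OneGroup}$ and restriction (\cref{lem:ap-operations}). For the inductive step, apply \cref{lem:merge} with $X=X_i$, $X'=X_i'$, $Y=Y_i=\bigsqcup_{j>i}\Pi_{\R^{d_i}}X_j$, $J=I_{i+1}$ and $I=I_i$, and check its hypotheses one by one. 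First, $f$ is unchanged by $\Pi_{\R^{d_i}}$, since $f$ reads only the first coordinate and $d_i\ge 1$; hence $f(\Sigma(Y_i))=\Sigma(\bigsqcup_{j>i}A_j)$. Second, $Y_i\subseteq\caL(X_i)=\Gamma_i$ by \cref{lem:main-classify}\ref{item:main-classify-projection}. Third, $H=\conv(\pm\supp(X_i\sqcup Y_i))$ is exactly $H_i$, so \eqref{eq:oneclass-iii} holds by \ref{item:main-classify-zonotope}. Finally, $f(\caZ(X_i))=[0,W_i]$ and $f(\caZ(Y_i))=[0,W_{\ge i+1}]$, so \eqref{eq:public-merge-restriction-condition} reduces to $I_i\subseteq[0,W_{\ge i}]$, which holds by definition, together with \eqref{eq:assembly-window-recurrence}. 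We also need $J\subseteq f(\caZ(Y))$ nonempty; this follows from \eqref{eq:assembly-windows} and $0\le t\le\sum A$. The remaining subtlety is that \cref{lem:oneclass,lem:merge} are stated in $\Z^d$ with $\Gamma=\caL(X)$; we work in $\R^{d_i}$ with $\Gamma_i=\caL(X_i)$, which matches.

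At the end, $t\in\Sigma(A)$ iff $t=e+b$ with $e\in\Sigma(A^{\extr})$ and $b\in\Sigma(A\setminus A^{\extr})$. Any such $b$ lies in $[t-W^{\extr},t]\cap[0,W_{\ge1}]=I_1$, so it suffices to test whether $(t-\Sigma(A^{\extr}))\cap S_1\neq\varnothing$. Since $|\Sigma(A^{\extr})|\le|A^{\extr}|\Delta$, this takes $\tO(\Delta^{1+1/d})$ membership queries. Witness recovery retraces the computation backwards. Call $\mathsf{MergeWitness}$ at step $i$ to split the current target into a part $a\in\Sigma(A_i)$, which $\mathsf{OneGroupWitness}$ (using $\boldsymbol{\tau}_i$) turns into items, and a part $b\in S_{i+1}$, which is passed on to step $i+1$. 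The witness for $A^{\extr}$ comes from the DP table.

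For the running time, \cref{lem:main-classify} costs $\tO_d(\Delta^{1+2/d})$. The DP on $A^{\extr}$ costs $\Oh(|A^{\extr}|^2\Delta)=\tO_d(\Delta^{1+2/d})$. Each $\mathsf{OneGroup}$ call costs $\tO_d(\lambda|X_i'|\Delta)=\tO_d(\Delta^{1+1/d})$ using \ref{item:main-classify-generating}, and $k=\Oh_d(\log\Delta)$. The main obstacle is the merge cost, which has the form $\tO_d(\lambda|X_i'|(\Delta+|I_{i+1}|/|X_i|))$. Here we bound
\[
|I_{i+1}|\le W^{\extr}+\sum_{j\le i}W_j+1\le\Delta\Bigl(|A^{\extr}|+\textstyle\sum_{j\le i}|A_j|\Bigr)+1,
\]
and invoke \ref{item:main-classify-prefix} to get $\sum_{j\le i}|A_j|=\Oh_d(|X_i|\log\Delta)$. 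The $A^{\extr}$ term contributes at most $|A^{\extr}|\Delta/|X_i|\le\tO_d(\Delta^{1+1/d})$. Thus $|I_{i+1}|/|X_i|=\tO_d(\Delta^{1+1/d})$, and each merge costs $\tO_d(\Delta^{1/d}\cdot\Delta^{1+1/d})=\tO_d(\Delta^{1+2/d})$. Summing over the $\Oh_d(\log\Delta)$ merge steps gives the claimed $\tO_d(\Delta^{1+2/d})$ bound.

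It remains to check the bit-length provisos of \cref{lem:oneclass,lem:merge}. Coordinates are $\Delta^{\Oh(1)}$ by \cref{lem:main-classify}, and multiplicities are at most $|A|=\Delta^{\Oh(1)}$. The AP-decomposition sizes bound the cost of each restriction step in the same way as the merge costs above.
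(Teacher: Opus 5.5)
Your proposal is correct and follows the paper's proof essentially step for step. It uses the same backward induction on $S_i=\Sigma(A_i\sqcup\cdots\sqcup A_k)\cap I_i$, the same check of the hypotheses of \cref{lem:merge} via \eqref{eq:assembly-window-recurrence}, the same backtracking through $\mathsf{MergeWitness}$ and $\mathsf{OneGroupWitness}$, and the same appeal to \cref{lem:main-classify}\,\ref{item:main-classify-prefix} for the merge costs. The only differences are minor: you bound the $W^{\extr}$ contribution to each merge by $\tO_d(\Delta^{1+2/d})$, whereas the paper uses $\lvert X_i'\rvert\le\lvert X_i\rvert$ to get $\tO_d(\Delta^{1+1/d})$ (either suffices, since partitioning and the DP already cost $\tO_d(\Delta^{1+2/d})$), and you should state, as the paper does, that \eqref{eq:oneclass-iii} is preserved when $H_i$ is shrunk to $\conv(\pm\supp X_i)$, which is what justifies calling $\mathsf{OneGroup}$.
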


\begin{proof} ~
    \paragraph*{Correctness.}
    In \cref{alg:subset-sum-main}, if the partition step returns $k=0$, then the answer is clearly correct. We now assume $k\ge 1$.
For each $i\in [k]$, 
the partition step (\cref{lem:main-classify}) guarantees that 
the conditions \eqref{eq:oneclass-iii}
hold for $(X_i,X_i',H_i,\caL(X_i))$ where
$ H_i=\conv (\pm \bigcup_{j=i}^{k} \Pi_{\R^{d_i}}( \supp X_j))$ and $\lambda = \Oh_d(\log \Delta)$.
Clearly, the conditions \eqref{eq:oneclass-iii} remain valid 
with $H_i$ replaced by its subset $\conv(\pm \supp X_i)$. Thus,  \cref{lem:oneclass} applies, so
$\mathsf{OneGroup}(X_i,X_i',f)$ correctly computes an AP decomposition of $f(\Sigma(X_i))=\Sigma(A_i)$.
    
We now prove by backward induction that the computed sets $S_i$ (represented as AP decompositions) satisfy
\[
 S_i=\Sigma(A_i\sqcup \cdots\sqcup A_k)\cap I_i \text{ for all } i\in [k].
\]
The base case $i=k$ immediately follows from the definition $S_k = \Sigma(A_k)\cap I_k$. 
Now consider $1\le i<k$, and assume the displayed inductive hypothesis holds for $i+1$.
Since $Y_i=\bigsqcup_{j=i+1}^k \Pi_{\R^{d_i}} X_j$ by definition and $f(\bfx) = x_1$ by \cref{lem:main-classify}, we have
$f(Y_i)=\bigsqcup_{j=i+1}^k f(X_j) = A_{i+1}\sqcup \cdots\sqcup A_k$, and the induction hypothesis gives
$S_{i+1}=f(\Sigma(Y_i))\cap I_{i+1}$.
By \eqref{eq:assembly-windows} and \eqref{eq:assembly-window-recurrence}, 
the requirement \eqref{eq:public-merge-restriction-condition} is satisfied by $(I,J,X,Y) = (I_{i},I_{i+1},X_i,Y_i)$, so \cref{lem:merge} applies, and thus $\mathsf{Merge}$ correctly computes 
$f( \Sigma(X_i\sqcup Y_i))\cap I_i = \Sigma(A_i \sqcup \dots \sqcup A_{k})\cap I_i$, completing the induction step.

Setting $i=1$ in the previous displayed property gives \[S_1 = \Sigma(A_1\sqcup \cdots\sqcup A_k)\cap I_1 = \Sigma(A\setminus A^{\extr})\cap \Big [t - \sum_{a\in A^{\extr}} a,\,  t\Big ].\]
Thus, $t\in \Sigma(A)$ if and only if there exists $t'\in \Sigma(A^{\extr})$ such that $t-t'\in S_1$.  This proves the correctness of \cref{alg:subset-sum-main}.

\paragraph*{Running time.}
Write $s=\Delta^{1/d}$. By \cref{lem:main-classify}, the partitioning step takes $\tO_d(\Delta^{1+2/d})$ time,  and produces $k=\Oh_d(\log \Delta)$
groups and an exceptional part of size $\lvert A^{\extr}\rvert=\tO_d(s)$.
Hence, $W^{\extr} = \sum_{a\in A^{\extr}} a \le \tO_d(s \Delta)$, and
standard dynamic programming computes $\Sigma(A^{\extr})$ in $\Oh(\lvert A^{\extr}\rvert\cdot W^{\extr})= \tO_d(s^2 \Delta)$ time.

Recall $\lvert X_i'\rvert=\Oh_d(s)$, 
$\lambda=\Oh_d(\log \Delta)$, and $k =\Oh_d(\log \Delta)$ from \cref{lem:main-classify}.
By \cref{lem:oneclass}, each call to $\mathsf{OneGroup}$ takes
$\tO_d(\lambda \lvert X_i'\rvert\Delta) = \tO_d(s \Delta)$ time, with total time $\tO_d(s \Delta)$ over the $k$ calls.

Now we consider the $\mathsf{Merge}$ call at iteration $i$.
By \eqref{eq:assembly-windows}, 
\[\lvert I_{i+1}\rvert \le W^{\extr}+\sum_{j=1}^i W_j\le W^{\extr}+\Delta \sum_{j=1}^i \lvert A_j\rvert.\]
Thus, by \cref{lem:merge}, the $\mathsf{Merge}$ call takes time
\begin{align*}
\tO_d\left (\lambda \lvert X_i'\rvert\left(\Delta+\frac{\lvert I_{i+1}\rvert}{\lvert A_i\rvert}\right)\right )
 &\le \tO_d\left (\lambda \lvert X_i'\rvert\Delta+\lambda \frac{\lvert X_i'\rvert}{\lvert A_i\rvert} W^{\extr}
       +\lambda \lvert X_i'\rvert\Delta \frac{\sum_{j=1}^i \lvert A_j\rvert}{\lvert A_i\rvert}\right )\\
 &=\tO_d(s \Delta),
\end{align*}
where the last term is bounded using \cref{lem:main-classify}\,\ref{item:main-classify-prefix}. The total time
for $\mathsf{Merge}$ calls over all $k-1$ iterations is therefore $\tO_d(s \Delta)$.

The final step can be tested by looping over all $t'\in \Sigma(A^{\extr})$ and checking whether $t-t'\in S_1$ using 
\cref{lem:ap-operations}, in $\Oh((W^{\extr}+1) \log \Delta) = \tO_d(s \Delta)$ time. 

Thus, the total time of \cref{alg:subset-sum-main} is $\tO_d(\Delta^{1+2/d})$.

\paragraph{Witness recovery.}
The standard dynamic programming with backtracing can find a witness $B'\subseteq A^{\extr}$ with $\sum_{a\in B'} a=t'$. Then, we can find a witness $B''\subseteq A_1\sqcup \dots \sqcup A_k$ with $\sum_{a\in B''} a=t-t'$ by backtracing using the procedures $\mathsf{OneGroupWitness}$ and $\mathsf{MergeWitness}$ provided by \cref{lem:oneclass,lem:merge} with the vectors $\boldsymbol{\tau}_i$ computed by \cref{lem:main-classify}\,\ref{item:main-classify-zonotope}. The total time remains $\tO_d(\Delta^{1+2/d})$.
\end{proof}

\begin{proof}[Proof of \cref{thm:main-main}]
After applying the reduction of \cref{lem:bounded-ilp-reduction} in $\Oh(n+\Delta)$ time,
run the algorithm of \cref{lem:main-algorithm}
(which is uniform in the parameter $d$)
on the \textsc{Subset Sum} instance for
$d=1,\ldots,\lfloor\log \Delta\rfloor$ in parallel at equal speed, and return the answer of the first completed run.
Since the run with parameter $d$ takes time $C_d \Delta^{1 + 2/d} (\log \Delta)^{C_d}$ for some $C_d$, their overall time complexity is $\Oh(\log \Delta) \cdot \min_{d\le \lfloor\log \Delta\rfloor}\{C_d \Delta^{1 + 2/d}(\log \Delta)^{C_d}\}$, which is clearly in $\Oh(\Delta^{1+o(1)})$. Hence, the total time is $\Oh(n+\Delta^{1+o(1)})$ as claimed.
\end{proof}

\section{Algorithmic Conlon--Fox--Pham Theorem}
\label{sec:algorithmiccfp}
Conlon, Fox, and Pham  proved a structural  theorem that almost characterizes the set of subset sums using a generalized arithmetic progression (GAP)
\cite[Theorem~1.5]{conlon2023homogeneous}.
In this section, we prove the following algorithmic version of the Conlon--Fox--Pham theorem which efficiently finds such a GAP.

\begin{theorem}
    \label{thm:algorithmiccfp}
    For any $\beta>1$, there are constants $C>1,d \ge 1$ such that the following
    holds.

 Let $A$ be a multiset of integers in
 $[-\Delta,\Delta]\setminus \{0\}$,
 and $s \in [\Delta^{1/\beta}, C^{-1} \min \{\Delta,\lvert A\rvert/\log \Delta\}]\cap \Z$. 
We can compute multisets $A'\subsetneq \hat A\subseteq A$ and an $s$-proper origin-symmetric GAP $P\supseteq \supp(\hat A)$ of rank at most $d$, such that $\lvert A\rvert-\lvert\hat A\rvert\le Cs \log \Delta$ and
 $\lvert A'\rvert\le Cs$, and $A'$ is $s$-generating for $P$.

The algorithm runs in deterministic time 
$\tO_{\beta}\!\left(s\,\lvert s(\supp(A)\cup \{0\})\rvert\right)$, provided  $\lvert A\rvert\le \Delta^{\Oh(1)}$.
\end{theorem}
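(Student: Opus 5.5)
We will follow the structure of the Conlon--Fox--Pham proof and make each existential step constructive. Throughout, the basic computational primitive is the iterated sumset $\ell(\supp(A)\cup\{0\})$ for $\ell\le s$. Every relevant set lives inside $s(\supp(A)\cup\{0\})$, so each sumset step costs $\tO(\lvert s(\supp(A)\cup\{0\})\rvert)$ by sparse convolution (\cref{lem:sparse-convolution}). Doubling then gives all needed dilates in $\Oh(\log s)$ convolutions.

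\textbf{Step 1: Freiman-type structure.} Use a pigeonhole/greedy argument over the scales $\ell=2^j\le s$ to find an $\ell$ at which the growth $\lvert 2\ell S\rvert/\lvert \ell S\rvert$ is bounded by $\Oh_\beta(1)$, where $S=\supp(A)\cup\{0\}$. Such an $\ell$ exists because $\lvert sS\rvert\le 2s\Delta+1\le \Delta^{\Oh(\beta)}$ in terms of $s\ge\Delta^{1/\beta}$. At that scale, we would apply an algorithmic Freiman/Chang-type covering to place $\ell S$ inside a proper GAP of bounded rank. Concretely, we would compute the lattice of near-periods of $\ell S$ (the set of $x$ with $\lvert(\ell S+x)\cap \ell S\rvert$ large) as a Bohr-set/lattice object. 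We would then extract a reduced basis in fixed dimension, via HNF plus LLL-style reduction in rank $\Oh_\beta(1)$, to get a box $H$ and a map $f$ with $f(H\cap\Z^d)\supseteq S$ up to few exceptions.

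\textbf{Step 2: rank reduction to $s$-properness.} If the GAP is not $s$-proper, a kernel vector of $f$ in $sH\cap\Z^d$ exists. We find it by enumerating $\pi$ over the box and detecting a collision in the precomputed sumsets. We then apply the standard rank reduction (quotient by the kernel vector, rescale the box), which lowers the rank. This repeats at most $d$ times, with a polynomial loss absorbed into the constant $C$ by choosing $\beta$-dependent parameters at the start.

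\textbf{Step 3: removing garbage and extracting a generating set.} Iteratively discard elements of $A$ that fall outside the GAP or lie in a proper sub-progression carrying few elements. The garbage from each of the $\Oh(\log\Delta)$ dyadic levels is $\Oh(s)$, which gives the $Cs\log\Delta$ bound. For the generating set, greedily pick $A'$ in blocks of $\Oh(s/d)$ elements from $\hat A$ along each direction of the box. We then verify $s$-generation directly: compute $\Sigma(A')$ by Bringmann-type DP restricted to $sP$ (size $\lvert sP\rvert=\Oh(\lvert s S\rvert)$, costing $\tO(s\lvert sS\rvert)$ overall) and check that a translate of $sP$ is filled. If the check fails, the CFP argument supplies a witness subspace/sub-progression with too few elements, and we recurse on it. The potential drops each time, so there are only $\Oh_\beta(1)$ rounds.

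\textbf{Main obstacle.} The main obstacle is Step 1, making the Freiman-type structure step efficient and deterministic within $\tO(s\lvert sS\rvert)$. The original argument is nonconstructive, since it goes through Bogolyubov/Chang. We would first try replacing it with a direct lattice construction. Take $\Gamma$ as the set of $x$ with $\ell S+x\subseteq 2\ell S$ plus a small error, computed via convolution of indicator functions. Then recover the GAP as $\Gamma$-points in a John-ellipsoid box computed in fixed dimension.
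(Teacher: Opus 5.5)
There is a genuine gap in your Step 3, which is where the actual content of the theorem lives. ``Greedily pick blocks along each direction of the box, verify by DP, and if the check fails recurse on a witness sub-progression'' gives no working mechanism. You never say which elements give a block $\gg_\beta\lvert sS\rvert$ distinct subset sums. You never say why the preimage of each block generates $\Z^d$; without this no box-filling statement applies, since the sums may all lie in a coset of a sublattice. And you never say what a failed check produces, or why acting on it makes progress.

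The paper's mechanism is different. Write $Y_0=\supp(\hat A)\cup\{0\}$ and $X_0=\supp(\hat A\setminus A')\cup\{0\}$.
\begin{itemize}
\item \textbf{Many subset sums.} Each of the $\ell_\beta$ blocks $G^{(i)}$ has size $s$. It is built by adding, at each step, the element that maximizes $\lvert\Sigma(G_j)\rvert$. By \cref{lem:growelement}, if $\lvert kX_0\rvert\ge 2\lvert\Sigma(G_{j-1})\rvert$, some $a\in X_0$ enlarges $\Sigma(G_{j-1})$ by a factor $1+1/(2k)$. Also, $\lvert 2^hX_0\rvert$ grows by only $\Oh_\beta(1)$ per doubling once $2^h\ge\Delta^{1/(2\beta)}$. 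Together these force $\lvert\Sigma(G^{(i)})\rvert\gg_\beta\lvert sY_0\rvert$ (\cref{lem:bilarge}).
\item \textbf{Lattice generation.} This comes from a potential test. If removing $A'$ does not shrink $\lvert 2sY_0\rvert$ by a factor $1+(2D_\beta)^{-1}$, then $f_P^{-1}(X_0)$ generates $\Z^d$; otherwise $2sX_0$ and $y+sX_0$ would be disjoint inside $2sY_0$. Then every greedy block also generates $\Z^d$; otherwise an available element outside its lattice would double $\lvert\Sigma\rvert$ at all $s$ steps, giving $2^s\le 3\Delta^2$.
\item \textbf{Filling.} Only at this point does the box-filling lemma (\cref{lem:fillboxreduced}) on the $\ell_\beta$ blocks give $s$-generation.
\item \textbf{Failure branch.} If the test shows a large drop, nothing is recursed on: $A'$ is discarded as garbage and the loop restarts. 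The potential lies in $[1,4s\Delta+1]$ and drops multiplicatively, so there are $\Oh_\beta(\log\Delta)$ rounds of $\Oh_\beta(s)$ discards each. This is exactly where $\lvert A\rvert-\lvert\hat A\rvert\le Cs\log\Delta$ comes from. Your claims of $\Oh_\beta(1)$ rounds and ``$\Oh(s)$ garbage per dyadic level'' have no supporting argument.
\end{itemize}

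Steps 1--2 also omit a quantitative property you use silently in Step 3, namely $\lvert sP\rvert=\Oh_\beta(\lvert sS\rvert)$. This property is forced: $s$-generation by $\lvert A'\rvert\le Cs$ elements gives $\lvert sP\rvert\le\lvert\Sigma(A')\rvert\le\lvert CsS\rvert$. Freiman at some scale $\ell\le s$ followed by generic rank reduction to $s$-properness does not provide it. The standard losses in rank reduction grow with the properness scale $s$, so they cannot be ``absorbed into $C$''.

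The paper's \cref{lem:hdim} closes this in three steps.
\begin{itemize}
\item Pigeonhole a scale $q$ with $\lvert 2qS\rvert\le K_\beta\lvert qS\rvert$, and apply Freiman there to get a GAP $Q$.
\item Use the box-filling lemma to show that $2^hS$ contains a translate of the GAP $rqQ$ with $r\gg_\beta 2^h/q$. This yields bounded doubling $\lvert 2sS\rvert\ll_\beta\lvert sS\rvert$ at the final scale $s$.
\item Apply Freiman to $sS$ to get a $2$-proper GAP containing $2sS$, then shrink its box by $2s$. The shrink is legitimate because $\caL(f^{-1}(S))=\Z^d$ forces every half-width to be at least $2s$. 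The result is a $2s$-proper $P\supseteq S$ with $s^d\lvert P\rvert\ll_\beta\lvert sS\rvert$.
\end{itemize}

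Finally, the deterministic Freiman step you single out as the main obstacle is handled in the paper by derandomizing Randolph--W\k{e}grzycki. The real difficulty is the greedy/potential argument described above.
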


 Since
$s(\supp(A)\cup \{0\})\subseteq [-s \Delta,s \Delta]$, the time complexity can
also be bounded by $\tO_{\beta}(s^2 \Delta)$.

Our actual application requires a vector generalization of \cref{thm:algorithmiccfp}, which we state in \cref{subsec:vectorversion}. The same black-box generalization also appeared in \cite{pham2025sharp}.

Our proof of \cref{thm:algorithmiccfp} is a streamlined version  of
the original proof in \cite{conlon2023homogeneous}. We avoid some technical complications of their proof that are not algorithmically friendly, and make their central idea more transparent.
In particular, our adapted proof bypasses certain non-constructive definitions (e.g., \cite[Definitions 2.25, 2.28, 2.33]{conlon2023homogeneous}), and completely avoids the probabilistic steps in \cite[Section 2.5]{conlon2023homogeneous}.
We achieve this simplification by converting some of their indirect arguments into potential decrement steps in our algorithm.   
As in \cite{conlon2023homogeneous}, we use Freiman's theorem for low-doubling sets \cite{freiman93} (for us, its algorithmic version \cite{rw24}), but we do not require its strengthened version due to Bilu \cite{bilu} which was needed in \cite{conlon2023homogeneous}.

\subsection{Preliminaries}
\label{subsec:somelemmas}

A rank-$d$ \emph{generalized arithmetic progression (GAP) in $\R^k$}
is a set of the form
\[
 Q={\bfx}_Q+f_Q(B_Q\cap \Z^d)\subseteq \R^k,
\]
where ${\bfx}_Q\in \R^k$, $f_Q\colon \R^d\to \R^k$ is linear,
and $B_Q=\prod_{i=1}^d[a_i,b_i]$ is a box with integer endpoints $a_i<b_i$.
A GAP $Q$ is understood to include its representation $(\bfx_Q,f_Q,B_Q)$, since the same set may have different representations.
It has \emph{widths} $b_i-a_i+1 \ge 2$ and  \emph{volume} 
$\vol(Q)=\prod_{i=1}^d(b_i-a_i+1)\ge \lvert Q\rvert$.

We say $Q$ is \emph{origin-symmetric} if
$\bfx_Q=\zero$ and $a_i=-b_i$ for all $i$.
We say $Q$ is \emph{proper} if $f_Q$ is injective on $B_Q\cap \Z^d$.
For $r\in \Pos$, its $r$-fold sumset is the GAP
 $rQ=r \bfx_Q+f_Q((rB_Q)\cap \Z^d)$. Define the GAP $rQ-r' Q$ analogously.

We say $Q$ is an \emph{$r$-proper GAP} if the GAP $rQ$ is proper.

If $Q$ is proper and $\bfx_Q=\zero$, for $\bfq\in Q$,  let $f_Q^{-1}(\bfq)$ denote the unique
$\bfu\in B_Q\cap \Z^d$ such that $f_Q(\bfu)=\bfq$.

We need the following standard estimates:
\begin{lemma}
    \label{prop:growgap}
 Let $Q$ be a rank-$d$ GAP, and $p\in \Pos,r\in \Nonneg$. Then, $\lvert pQ-rQ\rvert\le (p+r)^d \lvert Q\rvert$, $\vol(pQ-rQ)\le (p+r)^d \vol(Q)$, and $\vol(pQ)\ge (p/2)^d \vol(Q)$.
\end{lemma}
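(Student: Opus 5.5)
The plan is to reduce all three estimates to the box representation. Write $Q=\bfx_Q+f_Q(B_Q\cap\Z^d)$ with $B_Q=\prod_{i=1}^d[a_i,b_i]$ and widths $w_i=b_i-a_i+1\ge 2$. By the definition preceding the lemma, $pQ-rQ$ is the GAP with base point $(p-r)\bfx_Q$, map $f_Q$, and box $pB_Q-rB_Q=\prod_i[pa_i-rb_i,\,pb_i-ra_i]$. Its $i$-th width is
\[
(p+r)(b_i-a_i)+1=(p+r)(w_i-1)+1 .
\]

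For the volume upper bound, I use $(p+r)(w_i-1)+1\le (p+r)w_i$, which holds because $p+r\ge 1$. Taking the product over $i$ gives $\vol(pQ-rQ)\le (p+r)^d\vol(Q)$.

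For the volume lower bound, set $r=0$. The $i$-th width of $pQ$ is $p(w_i-1)+1$. Since $w_i\ge 2$ we have $w_i-1\ge w_i/2$, so this width is at least $p w_i/2$. Multiplying over $i$ gives $\vol(pQ)\ge (p/2)^d\vol(Q)$.

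The cardinality bound $\lvert pQ-rQ\rvert\le (p+r)^d\lvert Q\rvert$ is the one genuine step. The volume bound does not suffice here, because $Q$ need not be proper and $\lvert Q\rvert$ may be far smaller than $\vol(Q)$. Instead I will run a covering argument in the spirit of \cref{lem:lattice-quotient-dilation}. Every integer point of $pB_Q-rB_Q$ can be written as $\bfu=\sum_{j=1}^{p}\bfu_j-\sum_{j=1}^{r}\bfv_j$ with all $\bfu_j,\bfv_j\in B_Q\cap\Z^d$, since each coordinate interval splits into $p+r$ translates of $[a_i,b_i]$. Hence $pQ-rQ=Q+\dots+Q-Q-\dots-Q$ as a sumset. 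Iterating the trivial bound $\lvert S+T\rvert\le \lvert S\rvert\lvert T\rvert$ would be far too lossy, so the correct route is the structural one. Translate so that $a_i=0$, and write each coordinate $c\in[0,(p+r)(w_i-1)]$ as $c=k(w_i-1)+c'$ with $k\in\{0,\dots,p+r-1\}$ and $c'\in[0,w_i-1]$; the top value $c=(p+r)(w_i-1)$ is handled by the choice $k=p+r-1$, $c'=w_i-1$. Then $pB_Q-rB_Q$, suitably translated, is covered by the $(p+r)^d$ translates $\bfk+B_Q$ with $\bfk\in\prod_i (w_i-1)\{0,\dots,p+r-1\}$. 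Applying $f_Q$ (together with the base point) maps each translate onto a translate of $Q$, which gives $\lvert pQ-rQ\rvert\le (p+r)^d\lvert Q\rvert$.

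The expected obstacle is exactly this distinction between set size and volume in the non-proper case. The translate-covering argument sidesteps it because it only ever uses that $f_Q$ is linear, so every piece of the cover is a translate of $Q$ itself regardless of collisions.
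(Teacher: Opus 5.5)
Your proposal is correct and follows essentially the same route as the paper. The cardinality bound comes from covering $pQ-rQ$ by $(p+r)^d$ translates of $Q$, which uses only linearity of $f_Q$, and the lower volume bound comes from $p(w_i-1)+1\ge (p/2)w_i$; the only cosmetic difference is that the paper gets $\vol(pQ-rQ)\le (p+r)^d\vol(Q)$ from the same covering, whereas you read it off the explicit widths $(p+r)(w_i-1)+1\le (p+r)w_i$, which amounts to the same thing.
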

\begin{proof}
    The first inequality follows from the observation that $pQ-rQ$ can be covered by the union of $(p+r)^d$ translates of $Q$.
    The second inequality follows similarly.

    Suppose $Q$ has widths $w_i$ for $i\in [d]$. 
 Then, $pQ$ has widths $p(w_i-1)+1\ge (p/2) w_i$ due to $w_i\ge 2$, giving the third inequality.
\end{proof}

We work with GAPs in $\Z$ until \cref{subsec:vectorversion} where we switch to GAPs in $\Z^k$. 

Randolph and W\k{e}grzycki \cite[Theorem~5]{rw24} proved
an algorithmic version of Freiman's theorem. We need the following strengthened version.

\begin{lemma}[Algorithmic Freiman's theorem]
\label{lem:rw24}
Let $K\ge 1$, and let
   $A \subseteq \Z \cap [-\Delta,\Delta]$ satisfy $\lvert A+A\rvert\le K\lvert A\rvert$ and $0\in A$.
   There is a deterministic algorithm that computes in $\Oh_K(\lvert A\rvert) \polylog \Delta$ time
   an origin-symmetric $2$-proper
   GAP $P$ of rank $d=\Oh_K(1)$ and volume $\Oh_K(\lvert A\rvert)$ which contains $A+A$, such that $\caL(f_P^{-1}(A))=\Z^d$, and the representation of $P$ has $\Oh_K(\log \Delta)$ bits.
\end{lemma}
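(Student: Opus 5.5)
We derive this strengthened Algorithmic Freiman's theorem from the Randolph--W\k{e}grzycki algorithm \cite[Theorem~5]{rw24}. That algorithm, given $A$ with $\lvert A+A\rvert\le K\lvert A\rvert$, outputs a GAP of rank $\Oh_K(1)$ and size $\Oh_K(\lvert A\rvert)$ containing $A$ in the stated time. The plan is to apply it to a suitable small-doubling set and then post-process the output with standard GAP manipulations to get the three extra properties: origin symmetry, $2$-properness, and $\caL(f_P^{-1}(A))=\Z^d$.

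\emph{Step 1 (containing $A+A$).} Since $0\in A$, Pl\"unnecke--Ruzsa gives $\lvert 4A-4A\rvert\le K^{8}\lvert A\rvert$, so the set $A_2\coloneqq A+A$ also has doubling $\Oh_K(1)$. Apply \cite{rw24} to $A-A$ (or to $A_2\cup(-A_2)$), obtaining a GAP $Q=\bfx_Q+f_Q(B_Q\cap\Z^{d_0})$ that contains it, with $\vol(Q)=\Oh_K(\lvert A\rvert)$. Replacing $Q$ by $Q-Q$ yields an origin-symmetric GAP containing $A+A$. By \cref{prop:growgap} its volume grows only by a factor $2^{d_0}$, and its coefficients stay $\Oh_K(\log\Delta)$ bits.

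\emph{Step 2 (making it $2$-proper).} Apply the standard rank-reduction/properization lemma (Tao--Vu, Prop.~3.40 / Bilu style), which is algorithmic in fixed dimension. If $f$ is not injective on $2B\cap\Z^{d}$, a short kernel vector in $4B\cap\Z^d$ exists and can be found by enumerating or by lattice reduction in $\Oh_K(1)$ dimensions. Using it, we re-embed the GAP into a GAP of rank one less and volume at most $\Oh_d(1)$ times larger that still contains the old set. This repeats at most $d_0$ times. Enlarging a GAP (for example to $2P$) before properizing keeps the containment of $A+A$.

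\emph{Step 3 (lattice condition).} Let $L=\caL(f_P^{-1}(A))\subseteq\Z^d$. If $L\ne\Z^d$, first let $V=\spa(L)$. If $\dim V<d$, we cannot drop dimensions directly, since $A+A$ must stay covered. But $A+A\subseteq P$ and $f^{-1}(A+A)\subseteq L+L=L$, so the preimages of all of $A+A$ lie in $L$. We therefore pull back along a basis matrix $M$ of $L$ (computed via HNF): define $f'=f\circ M$, and let the new box be a John-type box $B'$ with $M^{-1}(B\cap V)\subseteq B'\subseteq \kappa_d M^{-1}(B\cap V)$, using the discrete John lemma (\cref{lem:latticeboxinandout}-type). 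After this, $\caL(f'^{-1}(A))=\Z^{d'}$. Volume increases by at most $\Oh_d(1)$, and $2$-properness can be degraded only by constant scaling. To account for this, we do Step 2 with a larger properness target (e.g.\ $2\kappa_d$-proper) so that it survives.

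The step I expect to be the main obstacle is exactly this interaction: making the lattice normalization and the properization simultaneously hold. Each operation can break the other, so I would order it as \emph{normalize lattice, then properize with a strong margin}. I would also check that properization's rank reduction preserves generation of $\Z^{d}$ by preimages, since the quotient map is surjective on lattices. Running time is dominated by the \cite{rw24} call plus $\Oh_K(1)$ fixed-dimensional lattice operations on $\Oh_K(\log\Delta)$-bit numbers.
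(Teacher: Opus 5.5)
The lemma asks for a \emph{deterministic} algorithm, and your proposal does not deliver one. The algorithm of \cite[Theorem~5]{rw24}, which you call as a black box, is randomized. Its randomness sits in the modeling step \cite[Lemma~A.2]{rw24}, which maps a large subset of $A$ Freiman-isomorphically into a cyclic group $\Z_m$. The later steps (a Bohr set inside $2B-2B$, a proper GAP inside the Bohr set, and the covering $A\subseteq Q-Q+X$) can be run deterministically. Verifying the output and rerunning only gives a Las Vegas algorithm with expected running time, and trying all dilations costs a factor of order $\Delta$.

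The missing idea is to derandomize the modeling step. The paper does this by conditional expectations. Take $A_{\mathrm{diff}}=8A-8A$, a prime $m\in(8\lvert A_{\mathrm{diff}}\rvert,16\lvert A_{\mathrm{diff}}\rvert)$, $q=2^L>\max_{x\in A_{\mathrm{diff}}}\lvert x\rvert$, and let $\phi_\lambda(x)$ be the centered residue of $\lambda x$ modulo $q$. For a uniformly random odd $\lambda$, each nonzero $x\in A_{\mathrm{diff}}$ has $\Pr[m\mid\phi_\lambda(x)]<2/m$, so the expected number of bad $x$ is below $1/4$. Fix the bits of $\lambda$ from the lowest upward. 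Each conditional probability is an exact count of multiples of $m$ in one residue class modulo a power of two. So the expectation can be kept below $1/4$ at $\Oh(\lvert A_{\mathrm{diff}}\rvert)\polylog\Delta$ cost per bit, and at the end no bad $x$ remains. Restricting to the most popular of $16$ intervals of $\phi_\lambda(A)$ then gives a Freiman $8$-isomorphism $A'\to\Z_m$ with $\lvert A'\rvert\ge\lvert A\rvert/16$. After that, the deterministic parts of \cite{rw24} apply.

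Once a deterministic $P_0\supseteq A$ of rank $\Oh_K(1)$ and volume $\Oh_K(\lvert A\rvert)$ is available, your Steps~1--3 are a workable, if longer, finish. Keep the order as written in Steps~2--3, not the reversed order you float at the end: properize with margin $2\kappa_d$ first, because Step~3 needs $2$-properness to get $f^{-1}(A+A)\subseteq L$. Also take $M$ to be the basis produced by \cref{lem:latticeboxinandout}, not the HNF basis; otherwise an axis-aligned box need not sandwich $M^{-1}(B\cap V)$.

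The paper avoids the interaction you worry about altogether. It lifts $A$ to $\tilde A\subseteq\Z^d$ and applies \cref{lem:local-properization} once to the lattice $\Gamma=\caL(\tilde A)$ (\cref{lem:reducegap}). This yields simultaneously a $4$-proper origin-symmetric $Q\supseteq A$ and $\caL(f_Q^{-1}(A))=\Phi\Gamma=\Z^{d'}$. Only then does it double to $P=2Q$, which contains $A+A$ and is $2$-proper. The lattice condition carries over because $f_P=f_Q$ and preimages in $P$ are unique.
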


Compared to the original randomized algorithm by \cite{rw24}, \cref{lem:rw24} is deterministic and additionally achieves $2$-properness and the $\caL(f_P^{-1}(A))=\Z^d$ condition (assuming $0\in A$).
In \cref{app:freiman} we describe how to strengthen \cite{rw24} to obtain \cref{lem:rw24}.

The following lemma of Conlon, Fox, and Pham
\cite{conlon2023homogeneous} shows that the sumset of many
dense subsets of an integer box contains a large box, provided each subset generates the integer lattice. It generalizes a result of Lev
\cite{lev} in the one-dimensional case.

\begin{lemma}[{\cite[Lemma 2.15]{conlon2023homogeneous}} rephrased]
\label{lem:fillboxreduced}
For any $c\in (0,1]$ and $d\in \Pos$, there are constants
$\gamma>0$ and $C\ge 1$ such that
the following holds.

Let $Q=\Z^d \cap \prod_{i=1}^d[-w_i,w_i]$ with $\min_i w_i\ge C$.
For $\ell\ge C$, let subsets
$A_1,\ldots,A_\ell\subseteq Q$ satisfy $\lvert A_j\rvert\ge c\lvert Q\rvert$, $\zero\in A_j$,
and $\caL(A_j)=\Z^d$ for all $j\in [\ell]$.
Then
$A_1+\cdots+A_\ell$ contains a translate of
$\lfloor\gamma \ell\rfloor Q$.
\end{lemma}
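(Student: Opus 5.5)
The plan is to follow the Lev-type argument that underlies \cite[Lemma~2.15]{conlon2023homogeneous}: grow the partial sums $S_m=A_1+\cdots+A_m$ in stages, using Pl\"unnecke/Kneser-type growth to show that $\lvert S_m\rvert$ keeps increasing until $S_m$ is dense in a dilate $mQ$. Once that density holds, a second phase turns a dense subset of a box into a solid box. Throughout, $c$ and $d$ are fixed, and constants are allowed to depend on them.

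\textbf{Phase 1: growth until density.} The first step is to show a dichotomy. For sets $S\subseteq mQ$ and $A_j$ with $\zero\in A_j$ and $\caL(A_j)=\Z^d$, either $\lvert S+A_j\rvert\ge \lvert S\rvert+c'\lvert Q\rvert$, or $S$ is already almost periodic, that is, nearly a union of cosets of a sub-box or subgroup structure. Since $\zero\in A_j$, the sums are monotone: $S_m\subseteq S_{m+1}$. I would apply Kneser's theorem in a finite quotient $\Z^d/N\Z^d$ with $N$ comparable to the widths; this is legitimate because all sets lie in $\ell Q$ and we may embed into a torus large enough to avoid wraparound on an appropriate scale. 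Kneser's theorem gives $\lvert S+A\rvert\ge \lvert S+P\rvert+\lvert A+P\rvert-\lvert P\rvert$ for the stabilizer $P$. The generation hypothesis $\caL(A_j)=\Z^d$ prevents $A_j$ from lying in a single coset of a proper stabilizer, and this forces growth proportional to $\lvert Q\rvert$ unless $S+P$ is essentially everything at scale $P$. Summing over $\ell/2$ steps, $\lvert S_{\ell/2}\rvert\gtrsim \ell\lvert Q\rvert$. Since $S_{\ell/2}\subseteq (\ell/2)Q$, which has volume roughly $(\ell/2)^d\lvert Q\rvert$, this lower bound is \emph{not} yet dense for $d\ge2$. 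So I would instead run the growth argument on the dilated scale directly. Split the $\ell$ sets into $d$ blocks. By induction, each block of $\ell/d$ summands contains a translate of a full $1$-dimensional fiber progression, then a $2$-dimensional slab, and so on. This is an induction on dimension via projection onto coordinate hyperplanes, using Lev's one-dimensional theorem on fibers.

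\textbf{Phase 2: from density to a solid box.} Once a block sum $T$ is dense, meaning $\lvert T\rvert\ge c''\lvert rQ\rvert$ inside $rQ$, I would add further blocks $A_{j}$ containing $\zero$ and generating $\Z^d$. A standard sumset-filling argument then applies: $T+T'$ for two dense subsets of a box contains a translate of a box of comparable size, after removing a popular-difference defect, via a pigeonhole/Bohr-set argument or Lev's lemma coordinatewise. The cost is a constant-factor loss in $r$, which is absorbed into $\gamma$. The condition $\min_i w_i\ge C$ ensures that lattice-point counts behave like volumes and that rounding losses are negligible.

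\textbf{Main obstacle.} The hardest step is Phase 1 in higher dimensions: controlling Kneser stabilizers so that generation of $\Z^d$ by each $A_j$ genuinely yields growth in every coordinate direction rather than only along one. I would handle this by induction on $d$ with fibration, projecting onto $\R^{d-1}$ and applying the $(d-1)$-dimensional statement to projected sets (which still generate and are dense). The delicate point is lifting: one must show that fibers over a filled $(d-1)$-box are long intervals. This is where I would use Lev's $1$-dimensional result on each fiber, together with averaging to guarantee that a constant fraction of the $A_j$ have dense fibers.
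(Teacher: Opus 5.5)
The paper does not prove this lemma. It imports it from \cite[Lemma~2.15]{conlon2023homogeneous} as the $d$-dimensional extension of Lev's theorem \cite{lev}. Judged on its own, your outline has genuine gaps: both load-bearing steps fail as stated.

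\textbf{Phase~2.} You claim that $T+T'$ contains a translate of a box of comparable size whenever $T,T'$ are dense in a box. This is false, even with $\zero\in T$ and $\caL(T)=\Z^d$. For $d=1$ and $r\ge 12$, take $T=T'=\{x\in[0,r]\cap\Z:\lVert x\sqrt{2}\rVert_{\R/\Z}\le 1/10\}$. It has density about $1/5$ and contains $0$, $5$ and $12$, so it generates $\Z$. Yet $T+T'\subseteq\{x:\lVert x\sqrt{2}\rVert_{\R/\Z}\le 1/5\}$, and since $\lVert\sqrt{2}\rVert_{\R/\Z}=\sqrt{2}-1>2/5$, this set contains no two consecutive integers. (A simpler example: $\{0,1\}\cup(4\Z\cap[0,r])$ has a sumset that misses every residue $3\bmod 4$.) Popular-difference or Bogolyubov arguments only give Bohr sets, and a Bohr set is not a box. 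Turning density into a solid box needs a number of further summands that grows with $1/c$, which is why the lemma requires $\ell\ge C(c,d)$. Proving that such a growing sum fills a box is exactly the content of the lemma (Lev's theorem when $d=1$), so Phase~2 assumes what is to be proved.

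\textbf{Phase~1.} This phase does not supply the density either. If $N$ is large enough that $\ell Q$ sits in $\Z^d/N\Z^d$ with side lengths below $N/2$, no nonzero period fits inside that box. So the Kneser stabilizer is trivial and you only recover $\lvert S+A_j\rvert\ge\lvert S\rvert+\lvert A_j\rvert-1$. If instead you allow wraparound, the torus sumset no longer controls the integer sumset. As you note, linear growth is far from $\lvert mQ\rvert\approx m^d\lvert Q\rvert$.

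\textbf{The induction on $d$.} Your fallback breaks at the lifting step. The inductive hypothesis does apply to the projections $\pi(A_j)$ onto the first $d-1$ coordinates. But the fiber of $A_1+\cdots+A_m$ over a point $y$ is a union, over decompositions $y=y_1+\cdots+y_m$, of sums of fibers $F_j(y_j)$ of the $A_j$. These fibers need be neither dense nor of gcd~$1$. For example, $A_j=\{\bfe_d\}\cup\{\bfx\in Q:x_d\text{ even}\}$ contains $\zero$, is dense and generates $\Z^d$, yet every fiber except the one over $\zero$ lies in $2\Z$. So Lev's one-dimensional theorem cannot be applied fiber by fiber. Moreover, the decomposition that works changes with the target point $y$. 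Averaging so that ``a constant fraction of the $A_j$ have dense fibers'' does not give, for each $y$, a single decomposition whose fibers are simultaneously dense and arithmetically non-degenerate.

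What is missing is a genuinely $d$-dimensional filling mechanism that uses the hypothesis $\caL(A_j)=\Z^d$ globally across many summands. Without it, neither phase establishes the statement.
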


\subsection{Structure of \texorpdfstring{$s$}{s}-fold sumsets}

The following lemma is extracted and simplified from \cite[Lemmas 2.22 and 2.26]{conlon2023homogeneous}.
It embeds a given $\Delta$-bounded integer set into a rank-$d$ GAP $P$, so that its $s$-fold sumset, where $s\ge \Delta^{\Omega_d(1)}$, has size comparable to $\vol(sP)$.

\begin{lemma}
    \label{lem:hdim}
    For any $\beta \ge 1$, there is a constant $d_\beta\ge 1$
    such that the following holds.

 For any set $A \subseteq \Z\cap [-\Delta,\Delta]$ with $0\in A$ and any
 integer $s\ge \Delta^{1/\beta}$, there exists an
 origin-symmetric $2s$-proper GAP $P\supseteq A$ of rank $d\le d_{\beta}$ such that
 $\lvert sA\rvert\gg_\beta s^d\lvert P\rvert$ and $\caL(f_P^{-1}(A))=\Z^d$.

   Moreover, the GAP $P$ can be found by a deterministic algorithm in $\Oh_{\beta}(\lvert sA\rvert)\cdot \polylog(s \Delta)$ time.
\end{lemma}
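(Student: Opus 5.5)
We will prove \cref{lem:hdim} by an iterative dimension-increment (potential) argument driven by \cref{lem:rw24}. The goal is a rank-$d$ GAP $P\supseteq A$ whose $s$-fold dilate is essentially filled by $sA$. The potential is $\log\lvert sA\rvert$, measured against the scale $\log(s\Delta)$.

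\textbf{Initial GAP.} Start with the trivial rank-$1$ GAP $P_0=\{-\Delta,\dots,\Delta\}$. It contains $A$ and is $2s$-proper once we rescale. However, $\lvert sA\rvert$ may be far smaller than $s\lvert P_0\rvert$.

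\textbf{Dyadic scan.} Consider the chain of sumsets $2^jA$ for $j=0,1,\dots,\log s$. Since $\lvert sA\rvert\le (2s\Delta+1)$ and $\lvert A\rvert\ge 1$, the ratios $\lvert 2^{j+1}A\rvert/\lvert 2^jA\rvert$ multiply to at most $\Oh(s\Delta)\le s^{\Oh(\beta)}$. Split $[\log s/2,\log s]$ into $\Theta_\beta(1)$ windows of consecutive scales. By pigeonhole, some window of length $L=\Theta_\beta(\log s)$ has growth ratio $\lvert 2^{j+1}A\rvert\le K\cdot 2^{d'}\lvert 2^jA\rvert$ at most of its steps, where $d'$ is the "local dimension". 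More concretely, I would aim to find a scale $m=2^j\in[\sqrt s, s/2]$ with $\lvert 2mA\rvert\le K\lvert mA\rvert$ for a constant $K=K_\beta$. This is forced, since otherwise $\lvert sA\rvert\ge K^{\Omega(\log s)}$ would exceed $s\Delta$ once $K$ is large enough relative to $\beta$.

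\textbf{Apply Freiman at that scale.} Apply \cref{lem:rw24} to $B=mA$, which contains $0$. This yields an origin-symmetric $2$-proper GAP $P'$ of rank $d=\Oh_K(1)$ and volume $\Oh_K(\lvert mA\rvert)$ containing $mA+mA$, with $\caL(f_{P'}^{-1}(mA))=\Z^d$. Next, rescale the box: let $P$ be the GAP with box $B_{P'}/m$ (suitably rounded outward). Then $A\subseteq P$ and $mP\approx P'$. The key tool for passing from $mA$ to $sA$ is \cref{lem:fillboxreduced}. The set $mA$ is dense in $P'$ by the volume bound and generates the lattice. Taking $\ell=s/m$ copies, we get that $sA=\ell(mA)$ contains a translate of $\lfloor\gamma\ell\rfloor P'$. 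Hence
\[
\lvert sA\rvert\gg (s/m)^d\lvert P'\rvert\gg s^d\lvert P\rvert
\]
by \cref{prop:growgap}. Boxes whose widths are below the constant $C$ in \cref{lem:fillboxreduced} will be handled by discarding them: those coordinates have bounded range and can be merged by a rank reduction. For the lattice condition, $\caL(f_P^{-1}(A))$ should equal $\Z^d$ because $f_{P'}^{-1}(mA)\subseteq m$-fold sums of $f_P^{-1}(A)$; a unimodular change of basis would fix any residual index.

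\textbf{Main obstacle.} The main obstacle is $2s$-properness. \cref{lem:rw24} gives only $2$-properness of $P'$, which translates to roughly $2m$-properness of $P$, not $2s$. I would handle this by a properization loop. If $f_P$ fails to be injective on $2sB_P$, then standard GAP rank reduction (as in Tao--Vu) replaces $P$ by a GAP of rank $d-1$ containing it, with $\vol$ increased by at most $\Oh_d(s)^{\Oh(d)}$ relative factors. We then restart the scale argument at larger scales. Since the rank drops each time, this terminates after $\le d$ rounds. Choosing $\sqrt s\le m$ with enough room across rounds, via windows at geometrically shrinking ratios of $\log s$, preserves the density bound $\lvert sA\rvert\gg_\beta s^d\lvert P\rvert$.

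\textbf{Running time.} Each step computes sumsets $2^jA$ by sparse convolution (\cref{lem:sparse-convolution}) in $\tO(\lvert sA\rvert)$ time, plus one call to \cref{lem:rw24} on a set of size $\le\lvert sA\rvert$. This gives the claimed $\Oh_\beta(\lvert sA\rvert)\polylog(s\Delta)$ bound.
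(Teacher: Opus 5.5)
There is a genuine gap, and it is exactly at the step you call the main obstacle. The inequality $\lvert sA\rvert\gg(s/m)^d\lvert P'\rvert$ does not follow from \cref{prop:growgap}. That lemma bounds volumes from below but cardinalities only from above. Knowing that $sA$ contains a translate of $\lfloor\gamma\ell\rfloor P'$ gives $\lvert sA\rvert\gg\ell^d\lvert P'\rvert$ only if $f_{P'}$ is injective on $\lfloor\gamma\ell\rfloor B_{P'}\cap\Z^d$. That is, your rescaled $P$ would already have to be $\Theta(s)$-proper, whereas the Freiman GAP of $mA$ gives only $\Oh(m)$-properness.

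This really can fail. Take $A=\{x+Ny:\lvert x\rvert\le X,\ \lvert y\rvert\le Y\}$ with $mX\ll N\ll sX$. A legitimate output at scale $m$ is the rank-$2$ box, yet $sA$ is essentially an interval of size about $sNY\ll s^2XY$.

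So everything rests on the properization loop, which does not work as stated. A rank reduction whose only guarantee is a volume increase by $\Oh_d(s)^{\Oh(d)}$ cannot preserve a density bound that must be tight up to $\Oh_\beta(1)$; dropping the rank by one buys back only one factor of $s$. Moreover, ``restarting the scale argument'' reruns pigeonhole and \cref{lem:rw24} on a new sumset $m'A$. This produces a fresh GAP of rank $\Oh_K(1)$ unrelated to the reduced one. So there is no decreasing rank potential, no termination argument, and no reason the density bound is recovered.

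The missing idea is that the scale-$m$ GAP never needs to be $2s$-proper; use box filling only to bound the \emph{doubling} at scale $s$. With your rescaled $P$ (box $\lfloor w_i/(2m)\rfloor$), box filling shows that $sA$ contains a translate of the GAP $rP$ for some integer $r\gg_\beta s$. Meanwhile $2sA\subseteq 2sP$, which is covered by $\Oh_\beta(1)$ translates of $rP$ as in the proof of \cref{prop:growgap}. Hence $\lvert 2sA\rvert\ll_\beta\lvert sA\rvert$, with no injectivity used anywhere.

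Now apply Freiman to $sA$ itself, i.e.\ \cref{lem:cfp-sumset-growth} with $q=s$. That GAP is $2$-proper and contains $2sA$, so shrinking its box by the factor $2s$ makes it $2s$-proper automatically. It also gives $s^d\lvert P\rvert\ll_\beta\lvert sA\rvert$, and $\caL(f_P^{-1}(A))=\Z^d$ follows from additivity of $f^{-1}$ on short sums. Algorithmically, the intermediate scale is then needed only in the analysis.

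Alternatively, a single application of \cref{lem:local-properization} to $(sB_P,\Z^d,f_P)$, followed by shrinking the output box by $s$, would also repair your route. Its output size is controlled by the image size $\lvert f_P(sB_P\cap\Z^d)\rvert\ll_\beta\lvert sA\rvert$ rather than by an inflated volume.

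Minor points:
\begin{itemize}
\item Take $m\le s/C$ so that $\ell$ exceeds the constant of \cref{lem:fillboxreduced}.
\item Round the box inward, not outward, or properness is lost.
\item No unimodular correction of the lattice is needed (nor could one change an index); additivity already gives $\caL(f_P^{-1}(A))=\Z^d$.
\end{itemize}
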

The lower bound on $\lvert sA\rvert$ is nearly tight, as $\lvert sA\rvert\le \lvert sP\rvert\le s^d\lvert P\rvert$ by \cref{prop:growgap}.

To prove \cref{lem:hdim}, we use the following lemma:
\begin{lemma}
\label{lem:cfp-sumset-growth}
Given positive integers $K, q$, and a set $A\subseteq \Z\cap [-\Delta,\Delta]$
with $0\in A$ such that $\lvert2qA\rvert\le K\lvert qA\rvert$, we can 
deterministically compute an origin-symmetric
$2q$-proper GAP $P\supseteq A$ of rank $d=\Oh_K(1)$, such that
$q^d\lvert P\rvert\ll_K\lvert qA\rvert$ and $\caL(f_P^{-1}(A))=\Z^d$, in 
time $\Oh_K(\lvert qA\rvert) \polylog(q \Delta)$.
\end{lemma}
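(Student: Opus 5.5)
The plan is to apply the algorithmic Freiman theorem (\cref{lem:rw24}) to the set $B\coloneqq qA$ and then pull the resulting structure back to $A$. Since $0\in A$, we have $0\in qA$, and $qA\subseteq \Z\cap[-q\Delta,q\Delta]$. The hypothesis $\lvert qA+qA\rvert=\lvert 2qA\rvert\le K\lvert qA\rvert$ says exactly that $qA$ has doubling at most $K$.

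\textbf{Step 1 (Freiman step).} Computing $qA$ itself takes $\Oh_K(\lvert qA\rvert)\polylog(q\Delta)$ time by repeated doubling with sparse convolution (\cref{lem:sparse-convolution}). Each intermediate sumset $jA$ with $j\le 2q$ has size at most $\lvert 2qA\rvert\le K\lvert qA\rvert$, because $0\in A$ makes the iterated sumsets nested. \cref{lem:rw24} then returns an origin-symmetric $2$-proper GAP $R$ of rank $d=\Oh_K(1)$ with
\[
\vol(R)=\Oh_K(\lvert qA\rvert),\qquad R\supseteq 2qA,\qquad \caL(f_R^{-1}(qA))=\Z^d .
\]
Write $R=f(B_R\cap\Z^d)$ with $B_R=\prod_i[-m_i,m_i]$.

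\textbf{Step 2 (scaling down by $q$).} We need a GAP containing $A$ that is $2q$-proper. The natural candidate is $P\approx R/q$, namely $f$ restricted to the box $\prod_i[-\lceil m_i/q\rceil,\lceil m_i/q\rceil]$; then $2qP$ is roughly $2R$, whose properness follows from $2$-properness of $R$. Three properties must be checked.
\begin{itemize}
\item Containment $A\subseteq P$. The preimages $f^{-1}(qa)$ of multiples of elements of $A$ lie in $B_R$. Properness of $R$ on $2B_R$ should force $f^{-1}(qa)=q\,f^{-1}(a)$ once $f^{-1}(a)$ is chosen consistently. Concretely, the sums $ja$ for $j\le q$ lie in $R$, and injectivity on $2B_R$ makes the preimage path additive.
\item The bound $q^d\lvert P\rvert\ll_K\lvert qA\rvert$. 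This follows from $\vol(R)=\Oh_K(\lvert qA\rvert)$ and $\vol(P)\approx\vol(R)/q^d$.
\item The lattice condition $\caL(f_P^{-1}(A))=\Z^d$. This should follow from $\caL(f_R^{-1}(qA))=\Z^d$: every element of $f_R^{-1}(qA)$ is a sum of $q$ preimages of elements of $A$ (by additivity under properness), so the latter already generate $\Z^d$.
\end{itemize}

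\textbf{Main obstacle.} The delicate point is dimensions with $m_i<q$. Rounding up to $\lceil m_i/q\rceil$ could break $2q$-properness, and in such a degenerate direction the additive preimage argument also degenerates. I would first try rank reduction in this case. If some width $m_i$ is smaller than about $q$, then the lattice $\caL(f_R^{-1}(A))=\Z^d$ together with $qA\subseteq R$ forces every element $\bfu$ of $f_R^{-1}(A)$ to satisfy $q\lvert u_i\rvert\le m_i<q$, hence $u_i=0$. That contradicts the lattice generating $\Z^d$ unless such directions are absent. So one expects $m_i\ge q$ automatically, with a harmless factor-$2$ loss. If needed, apply a standard properization or rank-reduction step that preserves the volume bound up to $\Oh_K(1)$ factors.

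\textbf{Running time.} The total time is dominated by computing $qA$ and the call to \cref{lem:rw24}, both within $\Oh_K(\lvert qA\rvert)\polylog(q\Delta)$.
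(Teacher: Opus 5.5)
Your overall route matches the paper's. You apply \cref{lem:rw24} to $qA$, use $0\in A$ and $2$-properness of $R$ to make preimages additive along partial sums, pull the lattice condition back from $qA$ to $A$, and use it to show every half-width satisfies $m_i\ge q$. The step that fails is your choice of box $\prod_i[-\lceil m_i/q\rceil,\lceil m_i/q\rceil]$.

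With that box, $2qP$ has half-widths $2q\lceil m_i/q\rceil$. These exceed $2m_i$ whenever $q\nmid m_i$; for example $m_i=q+1$ gives $4q>2q+2$. But $2$-properness of $R$ only gives injectivity of $f$ on $2B_R\cap\Z^d$ and says nothing just outside that box. For instance, $f(u_1,u_2)=u_1+(4m_1+1)u_2$ is injective on $2B_R\cap\Z^2$ but not once the first coordinate range reaches $\pm(2m_1+1)$. So $2q$-properness of $P$ does not follow. The problem is also not confined to directions with $m_i<q$, as your ``main obstacle'' paragraph suggests; your own lattice argument already rules those out. Rank reduction or properization is therefore the wrong remedy.

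The fix is to round down. If $\bfu=f_R^{-1}(a)$ for $a\in A$, then $\bfu$ is an integer vector with $q\bfu=f_R^{-1}(qa)\in B_R$, so $\lvert u_i\rvert\le\lfloor m_i/q\rfloor$. Here you need additivity for arbitrary $g_1,\dots,g_m\in A$, not just for multiples $ja$, since the lattice pull-back uses sums of distinct elements. The same partial-sum argument gives it, because all partial sums lie in $2qA\subseteq R$. Consequently:
\begin{itemize}
\item $B_P=\prod_i[-\lfloor m_i/q\rfloor,\lfloor m_i/q\rfloor]$ still contains $f_R^{-1}(A)$;
\item the lattice argument gives $\lfloor m_i/q\rfloor\ge 1$;
\item $2qB_P\subseteq 2B_R$ gives $2q$-properness from $2$-properness of $R$;
\item $q^d\lvert P\rvert\le\vol(2qP)\le\vol(2R)\le 2^d\vol(R)\ll_K\lvert qA\rvert$.
\end{itemize}
The paper does exactly this at the finer scale $\lfloor m_i/(2q)\rfloor$. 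It uses $2qA\subseteq R$ to get $2q\,f_R^{-1}(a)\in B_R$, so $2qB_P\subseteq B_R$ and plain properness of $R$ suffices.
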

\begin{proof}
Assume $\lvert A\rvert\ge 2$; otherwise, $A=\{0\}$, and we return $P=\{0\}$.
We can compute $qA$ by repeated squaring using the sparse convolution algorithm (\cref{lem:sparse-convolution}), in total $ \Oh(\lvert qA\rvert) \polylog(q \Delta) $ time since each of the $\Oh(\log q)$ intermediate sumsets has size at most $\lvert qA\rvert$.
Then, apply the algorithmic Freiman's theorem (\cref{lem:rw24}) to $qA$ to obtain an origin-symmetric
$2$-proper GAP $Q=f_Q(B_Q\cap \Z^d)$ containing $2qA$, where
$B_Q=\prod_{i=1}^d[-w_i,w_i]$, of rank $d=\Oh_K(1)$ and size $\lvert Q\rvert\ll_K\lvert qA\rvert$, with $\caL(f_Q^{-1}(qA))=\Z^d$, in $\Oh_K(\lvert qA\rvert) \polylog(q \Delta)$ time.

We first claim that, for any 
 $g_1,\ldots,g_m\in A$ where $1\le m\le 2q$, it holds that
 \begin{equation}
     \label{eqn:fqg}
 f_Q^{-1}(g_1+\cdots+g_m)=\sum_{j=1}^m f_Q^{-1}(g_j).
 \end{equation}
To show this claim, denote
$x_j=\sum_{i=1}^j g_i$ for $0\le j\le m$.
Since $0\in A$, all $x_j$ belong to $2qA\subseteq Q$.
The equality $x_j+0=x_{j-1}+g_j$ and the $2$-properness of $Q$ imply
$f_Q^{-1}(x_j)=f_Q^{-1}(x_{j-1})+f_Q^{-1}(g_j)$ for $j\in [m]$. Then, \eqref{eqn:fqg} follows by induction on $j$.

     \Cref{eqn:fqg}
implies $f_Q^{-1}(qA)\subseteq qf_Q^{-1}(A)$, so
$\Z^d=\caL(f_Q^{-1}(qA))\subseteq \caL(f_Q^{-1}(A))$, and thus $\caL(f_Q^{-1}(A)) = \Z^d$.
For every $g\in A$, \eqref{eqn:fqg} also implies
$2qf_Q^{-1}(g)=f_Q^{-1}(2qg)\in B_Q\cap \Z^d$, and hence
$ f_Q^{-1}(g)\in (\frac{1}{2q} B_Q)\cap \Z^d$.
Combining these two properties  gives
$\Z^d = \caL(f_Q^{-1}(A)) \subseteq \caL((\frac{1}{2q} B_Q)\cap \Z^d)$, which implies the half-widths $w_i$ of the box $B_Q$ must satisfy $w_i\ge 2q$ for all $i\in [d]$.
Thus, the GAP
\[
 P=f_Q\left(B_P 
 \cap \Z^d\right),\,\, B_P = 
 \prod_{i=1}^d \left [-\left \lfloor \frac{w_i}{2q}\right \rfloor,\left \lfloor \frac{w_i}{2q}\right \rfloor\right ],
\]
contains $A$, with $f_P^{-1}(g)=f_Q^{-1}(g)$ for every $g\in A$.
Thus $\caL(f_P^{-1}(A))=\Z^d$ as claimed. Since $2qB_P\subseteq B_Q$
and $Q$ is proper, $P$ is $2q$-proper. Therefore,
\cref{prop:growgap} implies
$q^d\lvert P\rvert\le \vol(2qP)=\lvert2qP\rvert\le \lvert Q\rvert\ll_K\lvert qA\rvert$ as claimed.
\end{proof}

\begin{proof}[Proof of \cref{lem:hdim}]
Assume $\lvert A\rvert\ge 2$; otherwise, $A = \{0\}$, and we may return $P=\{0\}$.
We assume that $s$
is sufficiently large in terms of $\beta$; otherwise, $s=\Oh_\beta(1)$, and $\Delta\le s^\beta=\Oh_\beta(1)$, so we may return the one-dimensional arithmetic progression $P$ consisting of all multiples of $\gcd(A)$
in $[-\Delta,\Delta]$, which has size $\Oh_\beta(1)$ and satisfies $\lvert sA\rvert\ge s+1 \gg_{\beta} s\lvert P\rvert$ and $\caL(f_P^{-1}(A))=\Z$ as claimed.

Choose $h$ divisible by $4$ so that $2^h\le s<2^{h+4}$.
Then $h$ can be assumed to be sufficiently large in terms of $\beta$.
Since
$ \lvert2^h A\rvert \le 2\cdot 2^{h} \Delta +1 \le 3s^{\beta+1}$, by telescoping and averaging, there exists an integer $j\in [h/4,h/2-1]$ such that, writing $q=2^j \ge 2^{h/4}$,
\[
 \frac{\lvert2qA\rvert}{\lvert qA\rvert}
 \le \left (\frac{\lvert2^{h/2} A\rvert}{\lvert2^{h/4} A\rvert}\right )^{1/(h/2 - h/4)}\le (3s^{\beta+1})^{4/h}
 =(3s^{\beta+1})^{\Oh(1/\log s)}
 =2^{\Oh(\beta)}.
\]
Thus $\lvert2qA\rvert\le K_\beta\lvert qA\rvert$ for
$K_\beta=\Oh_\beta(1)$. By \cref{lem:cfp-sumset-growth}, there exists
an origin-symmetric $2q$-proper GAP $Q = f_Q(B_Q\cap \Z^{d'})\supseteq A$ of rank $d'=\Oh_\beta(1)$
such that $q^{d'}\lvert Q\rvert\ll_\beta\lvert qA\rvert$ and
$\caL(f_Q^{-1}(A))=\Z^{d'}$.

Let $T=qf_Q^{-1}(A)$ and $\ell=2^h/q \ge 2^{h/2}$. Then
$\zero\in T\subseteq qB_Q\cap \Z^{d'}$, $\caL(T)=\Z^{d'}$, and 
the density 
of $T$ in $qB_Q\cap \Z^{d'}$ can be bounded as 
\[
 \lvert qB_Q\cap \Z^{d'}\rvert=\vol(qQ) \mathrel{\underbrace{\le}_{\text{\cref{prop:growgap}}}} q^{d'}\lvert Q\rvert
 \ll_\beta\lvert qA\rvert=\lvert T\rvert.
\]
Since both $q\ge 2^{h/4}$ and $\ell \ge 2^{h/2}$ are sufficiently large in terms of $\beta$,
\cref{lem:fillboxreduced} applies to the box $qB_Q \cap \Z^{d'}$
and $\ell$ copies of $T$, showing that $\ell T = 2^h f^{-1}_Q(A)$ contains a translate of the box $rqB_Q\cap \Z^{d'}$ for some $r=\lfloor\gamma \ell\rfloor\gg_\beta \ell$.
Applying $f_Q$ shows that $2^h A$ contains a translate of $rqQ$.
Since $r\gg_\beta \ell$, $d'=\Oh_\beta(1)$, and $2^h\le s <2^{h+4} = 16 \ell q$, this implies
\begin{equation*}
 \lvert2sA\rvert\le \lvert32 \ell qQ\rvert\mathrel{\underbrace{\ll_\beta}_{\text{\cref{prop:growgap}}}}\lvert rqQ\rvert
 \le \lvert2^h A\rvert\le \lvert sA\rvert.
\end{equation*}
Therefore, \cref{lem:cfp-sumset-growth} applies to $A$ and $s$, and returns 
the required GAP $P$ in $\Oh_\beta(\lvert sA\rvert) \polylog(s \Delta)$ time.
\end{proof}

We record the following corollary:
\begin{corollary}
\label{lem:cfp-large-sumsets}
Let $\beta\ge 1$, $A\subseteq \Z\cap [-\Delta,\Delta]$, and integer
 $s\ge \Delta^{1/\beta}$. Then, for $m\in \Pos$,
$\lvert m s A- m sA\rvert\ll_{\beta,m}\lvert sA\rvert$.
\end{corollary}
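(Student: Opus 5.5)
The plan is to reduce to \cref{lem:hdim} and then bound $msA-msA$ through the GAP $P$ it provides. First I will dispose of the hypothesis $0\in A$, which the corollary does not assume. Fix any $a_0\in A$ and set $A_0=A-a_0$. Then $A_0\subseteq \Z\cap[-2\Delta,2\Delta]$ and $0\in A_0$. Since $sA_0=sA-sa_0$ and $msA_0-msA_0=msA-msA$, all the relevant cardinalities are translation invariant, so it suffices to treat $A_0$.

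The hypothesis $s\ge \Delta^{1/\beta}$ is not quite $s\ge (2\Delta)^{1/\beta'}$, so I will switch to a slightly larger exponent $\beta'=\beta+1$, say.
\begin{itemize}
\item If $\Delta$ is at least an absolute constant, then $(2\Delta)^{1/(\beta+1)}\le \Delta^{1/\beta}\le s$.
\item If $\Delta$ is bounded, then $|msA-msA|\le 4ms\Delta+1\ll_{\beta,m} s$, while $|sA|\ge s+1$ whenever $|A|\ge 2$ (the sums of $k$ copies of $\max A$ and $s-k$ copies of $\min A$ are distinct), and the case $|A|=1$ is trivial.
\end{itemize}
So we may assume $0\in A$ and the hypotheses of \cref{lem:hdim} hold with $\beta'$ in place of $\beta$.

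\Cref{lem:hdim} then gives an origin-symmetric GAP $P\supseteq A$ of rank $d\le d_{\beta'}$ with $|sA|\gg_\beta s^d|P|$. Since $A\subseteq P$, we get $msA-msA\subseteq msP-msP=2msP$ as sets; here the GAP $rP$ is $f_P(rB_P\cap\Z^d)$, and the set inclusion uses origin-symmetry. By \cref{prop:growgap}, $|2msP|\le (2ms)^d|P|$. Combining,
\[
|msA-msA|\le (2m)^d s^d|P|\ll_{\beta,m}|sA|,
\]
because $d=\Oh_\beta(1)$.

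The main point to check is that \cref{prop:growgap} applies to the representation. Concretely, $(2ms)P$ is defined via the box $2msB_P$ and is covered by $(2ms)^d$ translates of $P$, which holds regardless of properness. I would also verify that the implicit constant in \cref{lem:hdim} depends only on $\beta'$, hence only on $\beta$. No other obstacle is expected.
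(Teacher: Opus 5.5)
Your proof is correct and is essentially the paper's argument: reduce to $0\in A$, invoke \cref{lem:hdim} to get an origin-symmetric $P\supseteq A$ of bounded rank with $s^d\lvert P\rvert\ll_\beta\lvert sA\rvert$, and bound $\lvert msA-msA\rvert\le\lvert 2msP\rvert\le(2ms)^d\lvert P\rvert$ via \cref{prop:growgap}. You also handle a point the paper's ``by translation invariance'' glosses over, namely that translating to get $0\in A$ doubles the range to $[-2\Delta,2\Delta]$. One small correction: $(2\Delta)^{1/(\beta+1)}\le\Delta^{1/\beta}$ requires $\Delta\ge 2^\beta$, which is a $\beta$-dependent threshold rather than an absolute one. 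Your bounded-$\Delta$ case absorbs this anyway; alternatively, $\beta'=2\beta$ works for all $\Delta\ge 2$.
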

\begin{proof}
We may assume $0\in A$ by translation invariance.
By \cref{lem:hdim}, there exists an origin-symmetric
GAP $P\supseteq A$ of rank $d=\Oh_\beta(1)$ satisfying
$s^d\lvert P\rvert\ll_\beta\lvert sA\rvert$. Thus \cref{prop:growgap} gives
\[
 \lvert msA-msA\rvert\le \lvert2msP\rvert
 \le (2ms)^d\lvert P\rvert\ll_{\beta,m}\lvert sA\rvert. \qedhere
\]
\end{proof}

\subsection{Growing subset sums}
In this section, we prove \cref{thm:algorithmiccfp} by adapting 
\cite[Sections 2.6 and 3]{conlon2023homogeneous}.
We will choose $C$ sufficiently large in terms of $\beta$.
The hypothesis in \cref{thm:algorithmiccfp} implies 
$\Delta\ge Cs$, so $\Delta$ is also sufficiently large in terms of $\beta$; we use this assumption later in the proof.

Let $d_\beta$ be the constant from \cref{lem:hdim}. Let $D_\beta\ge 1$ be the
implicit constant in \cref{lem:cfp-large-sumsets}
with $m=2$ and $2 \beta$ in place of $\beta$.
Let $\ell_\beta\in \Pos$, whose value will be determined later in the proof of
\cref{thm:algorithmiccfp}. Recall $s\ge \Delta^{1/\beta}$ from the hypothesis of \cref{thm:algorithmiccfp}.

Given the input multiset $A$, our algorithm is described in $\mathsf{Main}(A)$ of \cref{alg:main}. 
Starting from $\hat A = A$, each \textbf{while} iteration
constructs a submultiset $A'$ of the current remainder $\hat A$. If the condition at \cref{line:check} is violated, then we remove $A'$ from $\hat A$ and continue the loop. Otherwise, we return $\hat A$, $A'$, and a GAP computed from $\hat A$ as the desired answer. 
The submultiset $A'$ chosen at each iteration is $A'=G^{(1)}\sqcup \cdots\sqcup G^{(\ell_\beta)}$, where each $G^{(i)}$ is a size-$s$ submultiset chosen by \textsf{Greedy}.

\begin{algorithm}
\DontPrintSemicolon
\SetKwProg{Fn}{Function}{:}{}
\SetKwInput{KwInput}{Input}
\SetKwInput{KwOutput}{Output}
\SetKw{Break}{break}
\caption{Algorithm for \cref{thm:algorithmiccfp}}
\label{alg:main}
\Fn{$\mathsf{Main}(A)$}{
    $\hat A\gets A$\\
    \While{$\textup{\textsc{True}}$}{
        \lFor{$i\gets 1,2,\ldots,\ell_\beta$}{$G^{(i)}\gets \mathsf{Greedy}
            (\hat A\setminus (G^{(1)}\sqcup \cdots\sqcup G^{(i-1)}))$
            \label{line:maingreedy}}
        $A'\gets G^{(1)}\sqcup \cdots\sqcup G^{(\ell_\beta)}$\\
        \lIf{\label{line:check}$\lvert2s(\supp(\hat A)\cup \{0\})\rvert<
            (1+(2D_\beta)^{-1})\lvert2s(\supp(\hat A\setminus A')\cup \{0\})\rvert$}{\Break}
        \lElse{$\hat A\gets \hat A\setminus A'$}
    }
    Run \cref{lem:hdim} on 
$\supp(\hat A)\cup \{0\}$ and $s$ to compute a rank-$d$ GAP $P$ 
    \label{line:computegap}\\
    \Return{$\hat A,A',P$}
}
\Fn{$\mathsf{Greedy}(X)$}{
    \KwInput{a submultiset $X\subseteq A$ with $\lvert X\rvert\ge s$}
    \KwOutput{a submultiset $G\subseteq X$ with $\lvert G\rvert=s$}
    $G_0\gets \varnothing$\\
    \For{$j\gets 1,2,\ldots,s$}{
        Pick $g_j\in X\setminus G_{j-1}$ maximizing
        $\lvert\Sigma(G_{j-1}\sqcup \{g_j\})\rvert$\label{line:maximizergj}\\
        $G_j\gets G_{j-1}\sqcup \{g_j\}$
    }
    \Return{$G_s$}
}
\end{algorithm}

Viewing the sumset size $\lvert2s(\supp(\hat A)\cup \{0\})\rvert$ as a potential function, each failure of the check at \cref{line:check} causes a multiplicative potential drop, allowing us to bound the number of \textbf{while} iterations as follows:
\begin{lemma}
There is a constant $u_\beta\ge 2$ depending on $\beta$,
such that, if
\begin{equation}
  \label{eq:cfp-large-input}
 \lvert A\rvert \ge 2s \ell_\beta u_\beta \log \Delta,
\end{equation}
then $\mathsf{Main}(A)$ (\cref{alg:main}) is well-defined and terminates within
$u_{\beta} \log \Delta$ iterations of the {\rm \textbf{while}} loop, and returns submultisets 
$A'\subsetneq \hat A\subseteq A$ that satisfy $\lvert A'\rvert=s \ell_\beta$ and
$\lvert A\rvert-\lvert\hat A\rvert\le s \ell_\beta u_\beta \log \Delta$.
\label{lem:numiter}
\end{lemma}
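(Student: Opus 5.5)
The plan is a potential-function argument on $\Phi(\hat A) \coloneqq \lvert 2s(\supp(\hat A)\cup\{0\})\rvert$. I first check well-definedness, then bound the number of iterations by the multiplicative drop in $\Phi$, and finally read off the size bounds.

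\emph{Well-definedness.} In each \textbf{while} iteration, the calls to $\mathsf{Greedy}$ at \cref{line:maingreedy} need their argument to have size at least $s$. Suppose that at the start of an iteration at most $u_\beta\log\Delta-1$ previous iterations have occurred. Each previous iteration removed exactly $\lvert A'\rvert=s\ell_\beta$ elements, since every $G^{(i)}$ has size exactly $s$. So the current $\hat A$ satisfies
\[
\lvert \hat A\rvert \ge \lvert A\rvert - s\ell_\beta(u_\beta\log\Delta-1) \ge s\ell_\beta u_\beta\log\Delta + s\ell_\beta
\]
by \eqref{eq:cfp-large-input}. Before the $i$-th call, at most $(\ell_\beta-1)s$ further elements have been taken, so the remaining multiset has size at least $s$. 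Hence all calls are valid, and $A'$ is a submultiset of $\hat A$ of size $s\ell_\beta$. Moreover $\lvert \hat A\rvert>s\ell_\beta$, which gives $A'\subsetneq\hat A$ once we exit.

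\emph{Bounding the iterations.} Each time the check at \cref{line:check} fails, we set $\hat A\gets\hat A\setminus A'$, and the failure condition says precisely
\[
\Phi(\hat A\setminus A') \le (1+(2D_\beta)^{-1})^{-1}\,\Phi(\hat A).
\]
Now $\Phi$ is always at least $1$, because $0$ lies in the sumset. Initially $\Phi(A)\le \lvert[-2s\Delta,2s\Delta]\rvert\le 4s\Delta+1\le 5\Delta^2$, using $s\le\Delta$. So after $m$ failures we have $1\le 5\Delta^2(1+(2D_\beta)^{-1})^{-m}$, which gives
\[
m\le \frac{\log(5\Delta^2)}{\log(1+(2D_\beta)^{-1})}.
\]
Since $\Delta$ is large, this is at most $u_\beta\log\Delta-1$ once we choose $u_\beta\coloneqq \max\{2,\,4/\log(1+(2D_\beta)^{-1})\}$, a constant depending only on $\beta$. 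Therefore the loop breaks within $u_\beta\log\Delta$ iterations. In particular, the assumption used in the well-definedness step (at most $u_\beta\log\Delta-1$ prior iterations) holds by induction, so that step is justified throughout.

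\emph{Size bounds.} At termination, $\hat A$ equals $A$ minus the $A'$'s from at most $u_\beta\log\Delta-1$ removing iterations, each of size $s\ell_\beta$. This yields $\lvert A\rvert-\lvert\hat A\rvert\le s\ell_\beta u_\beta\log\Delta$ and $\lvert A'\rvert=s\ell_\beta$.

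The only delicate point is the circularity between well-definedness and the iteration bound. I handle it by induction on the iteration count, as above. The rest is routine arithmetic with the constants $D_\beta$ and $u_\beta$.
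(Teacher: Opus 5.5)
Your proposal is correct and follows the paper's own argument. You use the potential $\lvert 2s(\supp(\hat A)\cup\{0\})\rvert\in[1,4s\Delta+1]$, which shrinks by a factor $1+(2D_\beta)^{-1}$ at every failed check, to bound the number of iterations, and then read off the size bounds. Your explicit induction, which separates well-definedness of the $\mathsf{Greedy}$ calls from the iteration count, simply spells out what the paper compresses into ``there always remain enough elements in $\hat A$''.
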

\begin{proof}
The sumset size $\lvert 2s(\supp(\hat A)\cup \{0\})\rvert$ lies in $[1,4s \Delta +1]$ and is nonincreasing across the \textbf{while} iterations. Since each failure at \cref{line:check} decreases the size by at least a factor of $(1+(2D_\beta)^{-1})$, the total number of \textbf{while} iterations is at most
$1 + \log(4s \Delta+1)/\log(1+(2D_\beta)^{-1})\le u_\beta \log \Delta$ for some $u_\beta$ depending only on $\beta$.
Each \textbf{while} iteration makes $\ell_\beta$ calls to $\mathsf{Greedy}$, and a failed iteration removes a submultiset $A'$ of size $s \ell_\beta$ from $\hat A$. Therefore, the total number of removed elements is $\lvert A\rvert-\lvert\hat A\rvert \le s \ell_\beta u_\beta \log \Delta$, and
 there always remain enough elements in $\hat A$ assuming \eqref{eq:cfp-large-input}.
\end{proof}

For the rest of the analysis, assume \eqref{eq:cfp-large-input} holds,  and
consider the last \textbf{while}  iteration in $\mathsf{Main}(A)$. 
Let $G^{(1)},\ldots,G^{(\ell_\beta)}$ be the submultisets chosen at
\cref{line:maingreedy} in this iteration, and let the return values be
$\hat A,\, A' = G^{(1)}\sqcup \dots \sqcup G^{(\ell_\beta)}$, and $P=f_P(B_P\cap \Z^d)$, where
$B_P=\prod_{j=1}^d[-w_j,w_j]$ with $w_j\in \Pos$.
Define sets
\[Y_0=\supp(\hat A)\cup \{0\} \text{ and }
X_0=\supp(\hat A\setminus A')\cup \{0\}.\]
Since $A'\subsetneq \hat A$ and $0\notin \hat A$, we have $\lvert X_0\rvert\ge 2$.
Since the \textbf{if} condition at \cref{line:check} passes and $D_\beta \ge 1$, we have \begin{equation}
    \label{eqn:y0x0}
   \lvert2sY_0\rvert < 2\lvert2sX_0\rvert. 
\end{equation}

We now show that the multisets $G^{(i)}$ returned by $\mathsf{Greedy}$  have many distinct subset sums:
\begin{lemma}
    \label{lem:bilarge}
For every $i\in [\ell_\beta]$,
$\lvert\Sigma(G^{(i)})\rvert\gg_\beta\lvert sY_0\rvert$.
\end{lemma}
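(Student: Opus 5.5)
The plan has two parts. First reduce the target from $Y_0$ to $X_0$. Then prove a ``greedy growth'' estimate for $\Sigma(G_j)$ by comparing the greedy increments with the growth of the iterated sumsets $kX_0$.

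\textbf{Reduction.} By \eqref{eqn:y0x0}, $\lvert sY_0\rvert\le\lvert 2sY_0\rvert<2\lvert 2sX_0\rvert$. Since $X_0\subseteq\Z\cap[-\Delta,\Delta]$ and $s\ge\Delta^{1/\beta}$, \cref{lem:cfp-large-sumsets} (with $m=2$) gives $\lvert 2sX_0\rvert\le\lvert 2sX_0-2sX_0\rvert\ll_\beta\lvert sX_0\rvert$. Hence it suffices to show $\lvert\Sigma(G^{(i)})\rvert\gg_\beta\lvert sX_0\rvert$.

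\textbf{Pool property.} Every $G^{(i)}$ is part of $A'$, so throughout $\mathsf{Greedy}$ the available multiset $X\setminus G_{j-1}$ still contains all of $\hat A\setminus A'$. In particular, every element of $X_0\setminus\{0\}$ is an admissible choice at every step $j$. Write $S=\Sigma(G_{j-1})$. Since $0\in S$, the greedy choice gives
\[
\lvert\Sigma(G_j)\rvert-\lvert S\rvert\ \ge\ \eta_j\coloneqq\max_{x\in X_0}\lvert (S+x)\setminus S\rvert .
\]

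\textbf{Key counting step.} Fix $k\ge 1$. By the triangle inequality along the $k$ summands, every $y\in kX_0$ satisfies $\lvert (S+y)\setminus S\rvert\le k\eta_j$, so $\lvert S\cap (S+y)\rvert\ge\lvert S\rvert-k\eta_j$. Summing over $y\in kX_0$, and counting the pairs $(a,b)\in S^2$ with $b-a=y$, gives
\[
\lvert S\rvert^2\ \ge\ \lvert kX_0\rvert\,\bigl(\lvert S\rvert-k\eta_j\bigr).
\]
Consequently, whenever $\lvert kX_0\rvert>2\lvert S\rvert$, we get $\eta_j\ge\lvert S\rvert/(2k)$. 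Take $k$ minimal with $\lvert kX_0\rvert>2\lvert S\rvert$; this is at most $s$ as long as $\lvert S\rvert<\lvert sX_0\rvert/2$. Each step then grows $\lvert S\rvert$ by about $\lvert S\rvert/k(\lvert S\rvert)$, where $k(\cdot)$ is essentially the inverse function of $k\mapsto\lvert kX_0\rvert$.

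\textbf{Main obstacle.} The remaining work is to turn this recursion into a lower bound of order $\lvert sX_0\rvert$ within $s$ steps. This needs a regularity bound on the growth of $k\mapsto\lvert kX_0\rvert$, roughly $\lvert kX_0\rvert\gg_\beta (k/s)^{d}\lvert sX_0\rvert$ with $d=\Oh_\beta(1)$, for $k$ in a range $[\Delta^{\Omega_\beta(1)},s]$. I would obtain it from the structure in \cref{lem:hdim} applied to $X_0$. That lemma gives a GAP $P$ with $kX_0\subseteq kP$ and $\lvert kP\rvert\le k^d\lvert P\rvert$. The lower bound would come from applying \cref{lem:fillboxreduced} to $k$ copies of $f_P^{-1}(X_0)$, as in the proof of \cref{lem:hdim}; this shows $kX_0$ contains a translate of a dilate of $P$ of size $\gg_\beta k^d\lvert P\rvert$, while $\lvert sX_0\rvert\le s^d\lvert P\rvert$. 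For small $k$, the trivial bound $\lvert kX_0\rvert\ge k+1$ suffices to get through the initial phase.

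\textbf{Integrating the recursion.} With this bound, $k(\lvert S\rvert)\ll s(\lvert S\rvert/\lvert sX_0\rvert)^{1/d}$. The difference inequality $\Delta\lvert S\rvert\gtrsim\lvert S\rvert^{1-1/d}\lvert sX_0\rvert^{1/d}/s$ integrates like $\frac{d}{dj}\lvert S\rvert^{1/d}\gtrsim\lvert sX_0\rvert^{1/d}/s$, so $\lvert S\rvert$ reaches $c_\beta\lvert sX_0\rvert$ within $O_\beta(s)$ steps. The loss from the initial phase, and the constant-factor gap between $O_\beta(s)$ and $s$, are absorbed into the implied constant.
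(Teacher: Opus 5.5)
Your plan follows the paper's proof. All of $X_0\setminus\{0\}$ remains available to $\mathsf{Greedy}$. The growth lemma (\cref{lem:growelement}, which you re-derive correctly by pair counting) turns $\lvert kX_0\rvert>2\lvert S\rvert$ into a gain of order $\lvert S\rvert/k$. A regularity bound for $k\mapsto\lvert kX_0\rvert$ then converts this into $\lvert\Sigma(G^{(i)})\rvert\gg_\beta\lvert sX_0\rvert$. The paper discretizes your integration dyadically: it sets $t_h=\lvert 2^hX_0\rvert/2$ and bounds the greedy steps per level by $1+2^{h+1}\log(t_h/t_{h-1})$. Your reduction from $Y_0$ to $X_0$ is the same chain the paper uses at the end.

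The step that fails as written is your justification of the regularity bound. \Cref{lem:fillboxreduced} requires each summand set to fill a constant fraction of the box, and $f_P^{-1}(X_0)$ need not do so. Take $X_0=\{0,1,2,4,\dots,2^L\}$ with $2^L\approx\Delta$ (e.g.\ $A$ has many copies of each power of two). Then $2s$-properness forces $f_P^{-1}(2^{j+1})=2f_P^{-1}(2^j)$, so $2^Lf_P^{-1}(1)\in B_P$ and $\lvert P\rvert\gg\Delta$, while $\lvert X_0\rvert\approx\log\Delta$. In the proof of \cref{lem:hdim}, that lemma is applied to the $q$-fold sumset $qf_Q^{-1}(A)$. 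That set is dense only because of the bound $q^{d'}\lvert Q\rvert\ll_\beta\lvert qA\rvert$ at the specially chosen scale $q$, and you have no such bound at an arbitrary scale $k$.

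The bound you need follows from \cref{lem:cfp-large-sumsets} alone, and this is exactly the input the paper uses. For every integer $k\ge\Delta^{1/(2\beta)}$ one has $\lvert 2kX_0\rvert\le\lvert 2kX_0-2kX_0\rvert\le D_\beta\lvert kX_0\rvert$. Doubling $\lceil\log_2(s/k)\rceil$ times gives $\lvert kX_0\rvert\ge D_\beta^{-1}(k/s)^{\log_2 D_\beta}\lvert sX_0\rvert$ for all $k\in[\Delta^{1/(2\beta)},s]$.

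For the initial phase you do not need $\lvert kX_0\rvert\ge k+1$. Let $k_0=\lceil\Delta^{1/(2\beta)}\rceil$. As long as $2\lvert S\rvert<\lvert k_0X_0\rvert$, your counting step with $k=k_0$ multiplies $\lvert S\rvert$ by at least $1+1/(2k_0)$ per step. So this phase uses $\Oh(\Delta^{1/(2\beta)}\log\Delta)=o(s)$ steps, because $s\ge\Delta^{1/\beta}$. With these substitutions your integration goes through.
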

The proof of \cref{lem:bilarge} uses the following lemma from \cite{conlon2023homogeneous}:
\begin{lemma}[{\cite[Lemma 2.41]{conlon2023homogeneous}}]
    \label{lem:growelement}
   Let $S,A$ be finite nonempty sets of integers, and let $k\in \Pos$ be
   such that $\lvert kA\rvert\ge 2\lvert S\rvert$. Then there exists $a\in A$ such that
   $\lvert(S+a) \setminus S\rvert \ge \frac{\lvert S\rvert}{2k}$.
\end{lemma}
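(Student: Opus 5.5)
The plan is to argue by contradiction using a subadditivity property of the "escape" function $e(x) \coloneqq \lvert (S+x)\setminus S\rvert$, followed by a double-counting argument over the $k$-fold sumset $kA$.

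\emph{Step 1 (subadditivity).} For all integers $x,y$ we claim
\[
 (S+x+y)\setminus S \subseteq \bigl(((S+x)\setminus S)+y\bigr)\cup\bigl((S+y)\setminus S\bigr).
\]
Take $z\in S+x+y$ with $z\notin S$. If $z-y\in S$, then $z\in (S+y)\setminus S$. Otherwise $z-y\in (S+x)\setminus S$, so $z$ lies in the first set. Taking cardinalities gives $e(x+y)\le e(x)+e(y)$. By induction, every $\sigma=a_1+\cdots+a_k\in kA$ with $a_i\in A$ satisfies $e(\sigma)\le\sum_{i=1}^k e(a_i)$.

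\emph{Step 2 (contradiction hypothesis).} Suppose, contrary to the claim, that $e(a)<\lvert S\rvert/(2k)$ for every $a\in A$. Step~1 then gives $e(\sigma)<\lvert S\rvert/2$ for every $\sigma\in kA$. Since $\lvert S+\sigma\rvert=\lvert S\rvert$, this is equivalent to
\[
 \lvert (S+\sigma)\cap S\rvert>\lvert S\rvert/2 \quad\text{for every } \sigma\in kA.
\]

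\emph{Step 3 (double counting).} Summing over the distinct elements $\sigma\in kA$ and using the hypothesis $\lvert kA\rvert\ge 2\lvert S\rvert$ gives
\[
 \sum_{\sigma\in kA}\lvert (S+\sigma)\cap S\rvert > \lvert kA\rvert\,\lvert S\rvert/2\ge \lvert S\rvert^2 .
\]
On the other hand, the left-hand side counts pairs $(x,y)\in S\times S$ with $x-y\in kA$. Each such pair determines its $\sigma=x-y$ uniquely, so the count is at most $\lvert S\rvert^2$. This is a contradiction, so some $a\in A$ has $e(a)\ge \lvert S\rvert/(2k)$, as claimed.

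The only genuinely non-routine step is the subadditivity inclusion in Step~1. Once it is established, the rest is a direct averaging argument. One should keep the inequalities strict in Step~2 so that the comparison with the trivial bound $\lvert S\rvert^2$ in Step~3 yields an actual contradiction.
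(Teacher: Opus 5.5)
Your proof is correct and complete. Subadditivity of $e(x)=\lvert(S+x)\setminus S\rvert$ reduces the claim to finding a single $\sigma\in kA$ with $e(\sigma)\ge\lvert S\rvert/2$. Your injection of $\{(\sigma,x):\sigma\in kA,\ x\in S\cap(S+\sigma)\}$ into $S\times S$ supplies such a $\sigma$, and you handle the strict inequalities correctly.

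The paper gives no proof of this lemma; it quotes it from Conlon, Fox, and Pham. Your subadditivity-plus-counting argument is the standard short proof, and it works verbatim in any abelian group.
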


\begin{proof}[Proof of \cref{lem:bilarge}]
Define 
\[ t_0=1 \text{ and }t_h=\lvert2^h X_0\rvert/2 \text{ for } h \ge 1. \]
Since $\lvert X_0\rvert\ge 2$, $t_h\ge (2^h+1)/2$.

Fix $i\in [\ell_\beta]$ and consider the call to $\mathsf{Greedy}(X)$
with $X=\hat A\setminus (G^{(1)}\sqcup \cdots\sqcup G^{(i-1)})$, which returns $G^{(i)} = \{g_1,\dots,g_s\}$.
Write the prefix $G_j=\{g_1,\ldots,g_j\}$ for $0\le j\le s$. Then,
\[X_0\setminus \{0\} = \supp(\hat A \setminus A')\subseteq \supp(X \setminus G^{(i)} )\subseteq  \supp(X\setminus G_{j-1})\]
for every $j\in [s]$.
Write $\Sigma_j\coloneqq \Sigma(G_j)$.  
For $h\ge 1$, if $\lvert\Sigma_{j-1}\rvert\le t_h$, then \cref{lem:growelement} applied to $(S,A,k) = (\Sigma_{j-1}, X_0,2^h)$ shows that there exists $g \in X_0 \subseteq \supp(X \setminus G_{j-1})\cup \{0\}$ such that $\lvert(\Sigma_{j-1}+g)\setminus \Sigma_{j-1}\rvert \ge 2^{-h-1}\lvert\Sigma_{j-1}\rvert$, and thus $g\neq 0$. The greedy rule at \cref{line:maximizergj} then implies
\[
 \lvert\Sigma_j\rvert\ge (1+2^{-h-1})\lvert\Sigma_{j-1}\rvert.
\]
Thus, by telescoping, 
\begin{equation}
 \lvert\{j\in [s]:\lvert\Sigma_j\rvert\in [t_{h-1},t_h]\}\rvert
 \le 1+2^{h+1} \log \frac{t_h}{t_{h-1}}.
 \label{eqn:thth}
\end{equation}

Let $h^*$ be the largest integer such that $t_{h^*}\le \lvert\Sigma_s\rvert$.
Then $h^*\le \log \lvert\Sigma_s\rvert+1=\Oh(\log \Delta)$.
For $h>\frac{1}{2 \beta} \log \Delta$, \cref{lem:cfp-large-sumsets} gives $t_h/t_{h-1}=\Oh_\beta(1)$.
For the remaining $h$, the bound in \eqref{eqn:thth} is $\Oh(\Delta^{1/(2 \beta)} \log \Delta)$.
Thus, summing \eqref{eqn:thth} over all $h\in [1,h^*+1]$
gives
\[
 s\le \sum_{h=1}^{h^*+1}\left(1+2^{h+1} \log \frac{t_h}{t_{h-1}}\right)
 \le \Oh(\Delta^{1/(2 \beta)} \log^2 \Delta)+\Oh_\beta(2^{h^*}),
\]
where the first inequality is due to $t_0 \le \lvert\Sigma_1\rvert\le \dots \le \lvert\Sigma_{s}\rvert< t_{h^*+1}$.
Since $s\ge \Delta^{1/\beta}$, the inequality above implies $2^{h^*}\gg_\beta s$. In particular, $2^{h^*}\ge \Delta^{1/(2 \beta)}$ (for $\Delta$ sufficiently large in terms of $\beta$), so \cref{lem:cfp-large-sumsets} implies $\lvert2^{h^*} X_0\rvert\gg_\beta \lvert2sX_0\rvert$. Thus,
\[
 \lvert sY_0\rvert \le \lvert2s Y_0\rvert \mathrel{\underbrace{<}_{\text{\eqref{eqn:y0x0}}}} 2\lvert2sX_0\rvert
 \ll_\beta \lvert2^{h^*} X_0\rvert=2t_{h^*}\le 2\lvert\Sigma_s\rvert = 2\lvert\Sigma(G^{(i)})\rvert,
\]
which finishes the proof of \cref{lem:bilarge}.
\end{proof}

Now we analyze the rank-$d$ GAP $P$ computed at \cref{line:computegap}.
\begin{lemma}
\label{lem:gapgenerates}
$\caL(f_P^{-1}(X_0))=\Z^d$.
\end{lemma}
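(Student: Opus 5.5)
The plan is to argue by contradiction. I would show that if $f_P^{-1}(X_0)$ fails to generate $\Z^d$, then $2sY_0$ splits into two disjoint large pieces, and this contradicts the \textbf{if} condition at \cref{line:check} that held in the last iteration.

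\emph{Setup.} By \cref{lem:hdim} applied at \cref{line:computegap}, $P=f_P(B_P\cap\Z^d)\supseteq Y_0$ is origin-symmetric and $2s$-proper, and $\caL(f_P^{-1}(Y_0))=\Z^d$. Set $\Lambda\coloneqq\caL(f_P^{-1}(X_0))\subseteq\Z^d$ and suppose $\Lambda\neq\Z^d$. Since $\caL(f_P^{-1}(Y_0))=\Z^d$, there is some $y\in Y_0\setminus X_0$ with $\bfu\coloneqq f_P^{-1}(y)\notin\Lambda$.

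\emph{Two disjoint pieces.} Since $0\in X_0\subseteq Y_0$, we have
\[2sY_0\supseteq 2sX_0\ \cup\ \bigl(y+(2s-1)X_0\bigr).\]
Every element of the first piece is $f_P$ of a point of $\Lambda\cap 2sB_P$. Every element of the second piece is $f_P$ of a point of $(\bfu+\Lambda)\cap 2sB_P$. Here I use that all preimages lie in $B_P$ and that sums of at most $2s$ points of $B_P$ lie in $2sB_P$. Because $f_P$ is injective on $2sB_P\cap\Z^d$ and the cosets $\Lambda$ and $\bfu+\Lambda$ are disjoint, the two pieces are disjoint. Hence
\[|2sY_0|\ge|2sX_0|+|(2s-1)X_0|.\]

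\emph{Lower bound on the second piece.} Since $0\in X_0$, we have $(2s-1)X_0\supseteq sX_0$. I would apply \cref{lem:cfp-large-sumsets} with $m=2$ and $2\beta$ in place of $\beta$; this is valid since $s\ge\Delta^{1/\beta}\ge\Delta^{1/(2\beta)}$. It gives $|2sX_0|\le|2sX_0-2sX_0|\le D_\beta|sX_0|$, where $D_\beta$ is exactly the constant used in \cref{alg:main}. Therefore
\[|2sY_0|\ge\bigl(1+D_\beta^{-1}\bigr)|2sX_0|>\bigl(1+(2D_\beta)^{-1}\bigr)|2sX_0|.\]
Since the loop broke in the last iteration, the condition at \cref{line:check} held there, i.e.\ $|2sY_0|<(1+(2D_\beta)^{-1})|2sX_0|$. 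This is a contradiction, so $\Lambda=\Z^d$.

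The main point needing care is the disjointness step. It relies on $2s$-properness covering all sums of up to $2s$ elements of $B_P$, including the one shifted by $\bfu$, and on $X_0$ and $Y_0$ both containing $0$ so that shorter sums are absorbed into longer ones. The rest is bookkeeping of the constant $D_\beta$.
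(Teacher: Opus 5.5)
Your proof is correct and follows essentially the same route as the paper. You pick $y\in Y_0$ whose preimage lies outside $\caL(f_P^{-1}(X_0))$, use $2s$-properness to make $2sX_0$ and a $y$-shifted sumset of $X_0$ disjoint inside $2sY_0$, and invoke \cref{lem:cfp-large-sumsets} to contradict the break condition at \cref{line:check}. The only differences are cosmetic: you shift $(2s-1)X_0$ rather than $sX_0$, and you spell out $\lvert 2sX_0\rvert\le\lvert 2sX_0-2sX_0\rvert$.
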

\begin{proof}
By \cref{lem:hdim}, $P$ is $2s$-proper and $\caL(f_P^{-1}(Y_0))=\Z^d$.
Suppose for contradiction that $\caL(f_P^{-1}(X_0))\ne \Z^d$.
Then there exists $y\in Y_0$ such that $f_P^{-1}(y)\notin \caL(f_P^{-1}(X_0))$, which implies 
\[
 2s f_P^{-1}(X_0)\cap \bigl(f_P^{-1}(y)+s f_P^{-1}(X_0)\bigr)=\varnothing.
\]
Since $2sP$ is proper, this implies $2sX_0\cap (y+sX_0)=\varnothing$. Since $0\in X_0\subseteq Y_0$, we have 
$2sX_0\subseteq 2sY_0$ and
$y+sX_0\subseteq 2sY_0$,
so $\lvert2sY_0\rvert\ge \lvert2sX_0\rvert+\lvert sX_0\rvert$.
By \cref{lem:cfp-large-sumsets} and the choice of $D_\beta$, $\lvert2sX_0\rvert\le D_\beta\lvert sX_0\rvert$. Hence,
$\lvert2sY_0\rvert\ge (1+1/D_\beta)\lvert2sX_0\rvert$,
which contradicts the \textbf{if} condition at \cref{line:check}, since $(2D_\beta)^{-1}<1/D_\beta$.
\end{proof}

\begin{lemma}
\label{lem:eachbigenerates}
For every $i\in [\ell_\beta]$, $\caL(f_P^{-1}(G^{(i)}))=\Z^d$.
\end{lemma}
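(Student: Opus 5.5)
The plan is a contradiction argument that runs inside $\mathsf{Greedy}$ itself, using \cref{lem:gapgenerates} for the outside element and the greedy rule for the growth. It does not need the size bound of \cref{lem:bilarge}; only the bound $\lvert\Sigma(G^{(i)})\rvert\le 4s\Delta+1$ is used.

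\textbf{Setup.} Fix $i$ and let $G^{(i)}=\{g_1,\dots,g_s\}$ be produced by $\mathsf{Greedy}(X)$, with prefixes $G_j$ and $\Sigma_j=\Sigma(G_j)$ as in the proof of \cref{lem:bilarge}. Every $g_j$ lies in $\supp(\hat A)\subseteq Y_0\subseteq P$, so $f_P^{-1}(g_j)\in B_P\cap\Z^d$ is defined. Suppose for contradiction that $L\coloneqq\caL(f_P^{-1}(G^{(i)}))\ne\Z^d$. By \cref{lem:gapgenerates}, $\caL(f_P^{-1}(X_0))=\Z^d$. Hence there is $x\in X_0$ with $\bfv\coloneqq f_P^{-1}(x)\notin L$. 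Necessarily $x\neq 0$, so $x\in\supp(\hat A\setminus A')$.

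\textbf{Step 1 (the candidate $x$ is always available).} As in the proof of \cref{lem:bilarge}, $\supp(\hat A\setminus A')\subseteq\supp(X\setminus G_{j-1})$ for every $j\in[s]$. So $x$ is a legal choice at every greedy step.

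\textbf{Step 2 (adding $x$ doubles the subset sums).} Each element of $\Sigma_{j-1}$ is a sum of at most $s-1$ elements of $G^{(i)}\subseteq P$. Its sum with $x$ is a sum of at most $s$ elements of $P$. Because $P$ is $2s$-proper, $f_P$ is injective on $2sB_P\cap\Z^d$, so preimages are additive on such sums. Concretely, set $T_{j-1}\coloneqq\Sigma(f_P^{-1}(G_{j-1}))$. Then
\[
\Sigma_{j-1}=f_P(T_{j-1}),\qquad T_{j-1}\subseteq L\cap sB_P,\qquad \Sigma_{j-1}+x=f_P(T_{j-1}+\bfv),\qquad T_{j-1}+\bfv\subseteq (L+\bfv)\cap sB_P.
\]
Since $\bfv\notin L$, the cosets $L$ and $L+\bfv$ are disjoint. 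Injectivity of $f_P$ on $2sB_P\cap\Z^d$ then gives $\Sigma_{j-1}\cap(\Sigma_{j-1}+x)=\varnothing$. Therefore $\lvert\Sigma(G_{j-1}\sqcup\{x\})\rvert=2\lvert\Sigma_{j-1}\rvert$.

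\textbf{Step 3 (the greedy rule forces exponential growth).} The rule at \cref{line:maximizergj} picks $g_j$ maximizing $\lvert\Sigma(G_{j-1}\sqcup\{g_j\})\rvert$, and $x$ was an available candidate. Hence $\lvert\Sigma_j\rvert\ge 2\lvert\Sigma_{j-1}\rvert$ for every $j\in[s]$, and so $\lvert\Sigma(G^{(i)})\rvert\ge 2^s$. On the other hand, $\Sigma(G^{(i)})\subseteq[-s\Delta,s\Delta]$, so $\lvert\Sigma(G^{(i)})\rvert\le 2s\Delta+1$. Since $s\ge\Delta^{1/\beta}$ and $\Delta$ is sufficiently large in terms of $\beta$, we have $2^s>2s\Delta+1$, which is the contradiction.

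\textbf{Main obstacle.} The delicate point is Step 2. One must check that all sums involved, including $\Sigma_{j-1}+x$ with up to $s$ summands and differences of two such sums, stay within the range where the $2s$-properness of $P$ gives injectivity. Only then does the coset separation $L$ versus $L+\bfv$ transfer from $\Z^d$ to genuine disjointness in $\Z$.
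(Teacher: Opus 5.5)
Your proof is correct and follows the paper's argument essentially step for step: take $a\in\supp(\hat A\setminus A')$ with $f_P^{-1}(a)\notin L$ via \cref{lem:gapgenerates}, note that $a$ is available at every greedy step, use injectivity of $f_P$ on $sB_P\cap\Z^d$ to get $\Sigma(G_{j-1})\cap(a+\Sigma(G_{j-1}))=\varnothing$, and derive the contradiction $2^s\le 2s\Delta+1$. As your Step~2 shows, injectivity on $sB_P\cap\Z^d$ ($s$-properness) already suffices, and that is exactly what the paper invokes; the bound $4s\Delta+1$ in your opening paragraph differs from the $2s\Delta+1$ you use later, but this is harmless.
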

\begin{proof}
Fix $i$ and suppose for contradiction that $L=\caL(f_P^{-1}(G^{(i)}))\ne \Z^d$.
Then, by \cref{lem:gapgenerates}, there exists $a\in \supp(\hat A\setminus A')$ such that $f_P^{-1}(a)\notin L$.
For every step $j$ in the call of $\mathsf{Greedy}$ returning $G^{(i)}$, $a$ is available at \cref{line:maximizergj}.
We claim $\Sigma(G_{j-1})\cap (a+\Sigma(G_{j-1}))=\varnothing$, and thus the greedy choice at \cref{line:maximizergj} must yield $\lvert\Sigma(G_j)\rvert=2\lvert\Sigma(G_{j-1})\rvert$. Indeed, if $a+\sum_{x\in X'} x=\sum_{y\in Y'} y$ for some submultisets $X',Y'\subseteq G_{j-1}\subseteq G^{(i)}$, the $s$-properness of $P$ would give
\[
 f_P^{-1}(a)=\sum_{y\in Y'} f_P^{-1}(y)-\sum_{x\in X'} f_P^{-1}(x)\in L,
\]
a contradiction.

Since $\lvert\Sigma(G_j)\rvert=2\lvert\Sigma(G_{j-1})\rvert$ for every $j\in [s]$, 
\[
 2^s=\lvert\Sigma(G_s)\rvert\le 2s \Delta+1\le 3 \Delta^2,
\]
which contradicts $s\ge \Delta^{1/\beta}$ for sufficiently large $\Delta$. Thus, $\caL(f_P^{-1}(G^{(i)}))=\Z^d$ as claimed.
\end{proof}

\begin{proof}[Proof of \cref{thm:algorithmiccfp}]
We first analyze the time complexity of $\textsf{Main}(A)$ (\cref{alg:main}).
Consider step $j$ of $\mathsf{Greedy}(X)$, and write $S=\Sigma(G_{j-1})$ and $U=\supp(X)\cup \{0\}$.
For all $g\in \Z$, we have $\lvert\Sigma(G_{j-1}\sqcup \{g\})\rvert=2\lvert S\rvert-r_g$, where $r_g=\lvert S\cap (S+g)\rvert$.
Using sparse convolution (\cref{lem:sparse-convolution}), we compute all nonzero $r_g$ in $\Oh(\lvert S-S\rvert\polylog \Delta)$ time.
Then, we find the desired maximizer $g_j$ by scanning $\supp(X\setminus G_{j-1})$, and compute $\Sigma(G_j)=S\cup (S+g_j)$.
Since $S\subseteq sU$,
\[
 \lvert S-S\rvert+\lvert\supp(X)\rvert\le 2\lvert sU-sU\rvert\mathrel{\underbrace{\ll_\beta}_{\text{\cref{lem:cfp-large-sumsets}}}}\lvert sU\rvert.
\]
Hence all $s$ steps in $\mathsf{Greedy}(X)$ take $\Oh_\beta(s\lvert sU\rvert\polylog \Delta)$ total time.
The two $2s$-fold sumsets at \cref{line:check} can be computed by repeated squaring with sparse convolution (\cref{lem:sparse-convolution}) in $\lvert2s(\supp(A)\cup \{0\})\rvert\polylog \Delta \ll_{\beta} \lvert s(\supp(A)\cup \{0\})\rvert\polylog \Delta$ time by \cref{lem:cfp-large-sumsets}.
Each \textbf{while} iteration makes $\ell_\beta=\Oh_\beta(1)$ greedy calls, and \cref{lem:numiter} bounds the number of iterations by $\Oh_\beta(\log \Delta)$.
Finally, \cref{lem:hdim} at \cref{line:computegap} takes $\Oh_\beta(\lvert s(\supp(\hat A)\cup \{0\})\rvert\polylog \Delta)$ time.
Therefore, the total time complexity is $\Oh_\beta(s\lvert s(\supp(A)\cup \{0\})\rvert\polylog \Delta)$, as claimed.

Now we prove the correctness of \cref{alg:main}.
By \cref{lem:numiter}, $A'\subseteq \hat A\subseteq A$, $\lvert A'\rvert=s \ell_\beta$, and $\lvert A\rvert-\lvert\hat A\rvert\le s \ell_\beta u_\beta \log \Delta$.
By \cref{lem:hdim}, $P = f_P(B_P\cap \Z^d)\supseteq \supp(\hat A)\cup \{0\}$ is an origin-symmetric $2s$-proper GAP of rank $d\le d_\beta$.
Next, it remains to prove that $A'= 
G^{(1)}\sqcup \cdots\sqcup G^{(\ell_\beta)}$ is $s$-generating for $P$.

For every $i\in [\ell_\beta]$, we have $\zero\in \Sigma(f_P^{-1}(G^{(i)}))\subseteq s B_P \cap \Z^d$ and
\[
\begin{aligned}
 \lvert\Sigma(f_P^{-1}(G^{(i)}))\rvert=\lvert\Sigma(G^{(i)})\rvert
 &\mathrel{\underbrace{\gg_\beta}_{\text{\cref{lem:bilarge}}}} \lvert s(\supp(\hat A)\cup \{0\})\rvert\mathrel{\underbrace{\gg_\beta}_{\text{\cref{lem:hdim}}}} s^d\lvert P\rvert
\ge \lvert sB_P\cap \Z^d\rvert.
\end{aligned}
\]
By \cref{lem:eachbigenerates}, $\caL(\Sigma(f_P^{-1}(G^{(i)})))=\Z^d$.
The box $sB_P\cap \Z^d$ has half-widths at least $s\ge \Delta^{1/\beta}$.
Thus, after choosing $\ell_\beta$ sufficiently large in terms of $\beta$, \cref{lem:fillboxreduced} applies to the dense sets $\Sigma(f_P^{-1}(G^{(i)}))$ ($i\in [\ell_\beta]$) in the box  $sB_P\cap \Z^d$, and gives $\boldsymbol{\tau}\in \Z^d$ such that
\[
 (\boldsymbol{\tau}+sB_P)\cap \Z^d
 \subseteq
 \sum_{i=1}^{\ell_\beta} \Sigma(f_P^{-1}(G^{(i)}))
 =
 \Sigma(f_P^{-1}(A')).
\]
Therefore, by \cref{def:generating}, $A'$ is $s$-generating for $P$, as desired.
The proof is finished
after choosing the constant $C$ in \cref{thm:algorithmiccfp} to be
 larger than $2 \ell_\beta u_\beta$ and sufficiently large in terms of $\beta$ as mentioned earlier, and setting the rank upper bound in \cref{thm:algorithmiccfp} to $d_\beta$.
\end{proof}

\subsection{Generalization to vectors}
\label{subsec:vectorversion}
Following Pham and Zakharov \cite{pham2025sharp}, we generalize
the algorithmic Conlon--Fox--Pham theorem (\cref{thm:algorithmiccfp}) from integers to vectors.
We also bound the widths of the obtained GAP and its $s$-fold sumset size for later applications in
\cref{sec:classify}.

\begin{theorem}
\label{thm:algorithmiccfp-highdim}
For any $\beta>1$ and $k\in \Pos$, there are constants $C>1,d\ge 1$
such that the following holds.

Let $A$ be a multiset of nonzero vectors in
$(\Z\cap [-\Delta,\Delta])^k$, and $s \in [\Delta^{1/\beta},\allowbreak C^{-1} \min \{\Delta,\lvert A\rvert/\log \Delta\}]\cap \Z$.
We can compute multisets $A'\subsetneq \hat A\subseteq A$ and an
$s$-proper origin-symmetric GAP $P\supseteq \supp(\hat A)$ of rank
$m\le d$, such that $\lvert A\rvert-\lvert\hat A\rvert\le Cs \log \Delta$ and
$\lvert A'\rvert\le Cs$, and $A'$ is $s$-generating for $P$.

Moreover, writing $P=f_P(B_P\cap \Z^m)$ with
$B_P=\prod_{j=1}^m[-w_j,w_j]$ and $w_j\in \Pos$, we have
$ w_j\le C \Delta$ and 
$\lvert sP\rvert=\vol(sP)\le C\lvert s(\supp(\hat A)\cup \{\zero\})\rvert$.

The algorithm runs in deterministic time
$\tO_{\beta,k}\!\left(s\,\lvert s(\supp(A)\cup \{\zero\})\rvert\right)$
provided $\lvert A\rvert\le \Delta^{\Oh(1)}$.
\end{theorem}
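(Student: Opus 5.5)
We reduce to the integer case, \cref{thm:algorithmiccfp}, through a linear map $\phi\colon \Z^k\to\Z$ that is injective on a large dilate of the input box, as Pham and Zakharov do. Take $N=\lceil 8 C_0 s\Delta\rceil+1$, with $C_0$ a large constant, and set $\phi(\bfx)=\sum_{i=1}^k N^{i-1}x_i$. This map is injective on $(\Z\cap[-N/2,N/2))^k$, and hence on every sumset $rS-rS$ with $S\subseteq(\Z\cap[-\Delta,\Delta])^k$ and $r\le C_0 s$.

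Let $\Delta'=N^k\Delta$ and apply \cref{thm:algorithmiccfp} to the integer multiset $\phi(A)\subseteq[-\Delta',\Delta']\setminus\{0\}$, using a slightly smaller exponent $\beta'$. Since $\Delta'\le(s\Delta)^{O(k)}\le\Delta^{O(k)}$, we get $\log\Delta'=O_k(\log\Delta)$. We also get $s\ge\Delta'^{1/\beta'}$ once $\beta'\approx\beta(2k+1)$, so the hypothesis on $s$ holds after adjusting $C$. This produces $\hat A$, $A'$, and an $s$-proper origin-symmetric GAP $P_0=f_0(B\cap\Z^m)\supseteq\phi(\supp\hat A)$, with $A'$ $s$-generating for $P_0$.

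The main step is to lift $P_0$ to a GAP in $\Z^k$. Set $\psi\coloneqq\phi^{-1}\circ f_0$ on $B\cap\Z^m$. We need $\psi$ to be the restriction of a linear map $f_P\colon\R^m\to\R^k$. Recall from the proof of \cref{lem:hdim} that $\caL(f_0^{-1}(\supp\hat A\cup\{0\}))=\Z^m$ and that $P_0$ is $2s$-proper. Choose a basis of $\Z^m$ made of short integer combinations of the $f_0^{-1}(a)$, and define $f_P$ on it through the actual vectors $a\in\supp\hat A$. To check consistency, observe that every element of $sB\cap\Z^m$ can be written as $f_0^{-1}$ of a bounded signed sum of elements of $\phi(\supp\hat A)$. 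This follows from \cref{lem:fillboxreduced} and the filling given by $A'$. Injectivity of $\phi$ on the relevant sumsets then transfers each additive relation from $\Z$ back to $\Z^k$.

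This is the step I expect to be the main obstacle. The number of summands has to stay $O_\beta(s)$, so that injectivity of $\phi$ still applies; this is exactly why $N$ is chosen with a margin of $C_0 s$. Once $f_P$ is defined, $P=f_P(B\cap\Z^m)$ contains $\supp\hat A$ and is $s$-proper, because $\phi\circ f_P=f_0$. The set $A'$ is $s$-generating for $P$, because $\Sigma(A')$ and the translated box $\boldsymbol\tau+sB$ correspond under $\phi$.

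For the extra bounds, first argue as in \cref{lem:cfp-sumset-growth} that each $f_P(\bfe_j)$ is nonzero and $\lvert f_P(w_j\bfe_j)\rvert$ stays in the box of radius $O(s\Delta)/s$. Properness gives $w_j\le O(\Delta)$; if the plain bound fails, shrink $w_j$ as in that proof. Next, $\vol(sP)=\lvert sP\rvert$ holds by $s$-properness. The inequality $\lvert sP\rvert\le C\lvert s(\supp\hat A\cup\{\zero\})\rvert$ comes from \cref{lem:hdim}, which gives $\lvert sY_0\rvert\gg s^m\lvert P_0\rvert$, combined with \cref{prop:growgap}; injectivity of $\phi$ on $sY_0$ carries this back to the vectors.

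The running time is that of \cref{thm:algorithmiccfp}, namely $\tO(s\lvert s\phi(\supp A\cup\{\zero\})\rvert)$, and this equals $\tO(s\lvert s(\supp A\cup\{\zero\})\rvert)$ by injectivity. On top of that we pay polylogarithmic overhead for Hermite normal form and basis computations in fixed dimension $m$.
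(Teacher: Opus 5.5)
Your proposal is correct and follows essentially the paper's route. You encode vectors by a base-$N$ map that is injective on $O(s)$-fold sums and apply \cref{thm:algorithmiccfp} as a black box. You then use the translate $\boldsymbol{\tau}+sB$ covered by $\Sigma(\phi(A'))$ to write each $f_0(\bfz)$, $\bfz\in sB\cap\Z^m$, as $\phi$ of a difference of two subset sums of $A'$. Injectivity then transfers the unit-step relations $f_0(\bfz+\bfe_i)=f_0(\bfz)+f_0(\bfe_i)$ back to $\Z^k$ and yields the linear lift with $f_P(\bfe_i)=\phi^{-1}(f_0(\bfe_i))$, which is the paper's $\bfg_i$. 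Containment, $s$-properness, $s$-generation, and $w_j=O(\Delta)$ (from $\|f_P(sw_j\bfe_j)\|_\infty=O(s\Delta)$ and $f_P(\bfe_j)\neq\zero$) then follow as in the paper.

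A few parts are unnecessary or slightly off. The short-basis/HNF detour and \cref{lem:fillboxreduced} are not needed. The fallback of shrinking $w_j$ is also not needed, and it would break containment and generation. Your bound on $\lvert sP\rvert$ via the internal estimate $\lvert sY_0\rvert\gg s^m\lvert P_0\rvert$ of \cref{lem:hdim} is valid but not black-box; the paper instead uses $\lvert sP_0\rvert\le\lvert\Sigma(A_0')\rvert\le\lvert C_0sU_0\rvert$ together with \cref{lem:cfp-large-sumsets}. Finally, the exponent you pass to \cref{thm:algorithmiccfp} must be larger, $\beta'\ge(2k+1)\beta$, not smaller.
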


We prove
\cref{thm:algorithmiccfp-highdim} in
\cref{app:vector-cfp-proof} by a black-box application of 
\cref{thm:algorithmiccfp}.

\section{Basic Geometric Algorithms}
\label{sec:geometry-oracles}

In this section, we collect some standard algorithms for optimization and lattices.

\paragraph*{Rational arithmetic.}
\label{par:rational-arithmetic}
We use exact rational arithmetic in geometric algorithms. 
For a rational number $r=p/q$ in lowest terms, recall $\maxbitlen(r) \coloneqq \max \{\log(2+\lvert p\rvert),\log(2+\lvert q\rvert)\}$, so $r$ can be represented in $\Oh(\maxbitlen(r))$ bits.

All rational numbers $r$ encountered in our algorithm for \cref{thm:main-main} have $\maxbitlen(r)=(\log \Delta)^{\Oh_d(1)}$, so the time complexities of arithmetic operations are lower-order factors of the final time complexity. Nevertheless, we keep track of the input and output bit lengths in lemma statements for completeness.

We extend the 
$\maxbitlen(\cdot)$ notation to the following objects.

For a rational matrix $M$, define $\maxbitlen(M) = \max_{i,j} \maxbitlen(M_{i,j})$; define
$\maxbitlen(\bfv)$ for a rational vector $\bfv$ similarly.
For a rational linear map $f$, let $\maxbitlen(f)=\maxbitlen(M_f)$ where $M_f$ is its matrix in the standard bases.
For a rational polytope $K$, let
$\maxbitlen(K)= \max_{\bfv} \maxbitlen(\bfv)$ where $\bfv$ ranges over its vertices.

For a rational lattice $L$, let $\maxbitlen(L) = \maxbitlen(H_L)$ where the matrix $H_L$ is the HNF representation  of $L$ (see \cref{sec:hermite}).

For a multiset $A$, define $\maxbitlen(A) = \max \big \{\log(2+\lvert A\rvert),\max_{a\in \supp(A)} \maxbitlen(a)\big \}$.

We use the shorthand $\maxbitlen(x_1,\dots,x_k) = \max_{1\le i\le k} \maxbitlen(x_i)$.
In these definitions, a maximum over an empty set is taken to be $1$.

To avoid distraction, in the proofs we often omit routine arguments for bounding the output bit lengths. For example, 
we often implicitly use the following simple fact:
 For a rational polytope $K\subseteq \R^d$, a vector $\bfu\in \Q^d$,
and a lattice $L\subseteq \Q^d$, if 
$ \bfx\in K\cap (\bfu+L)$, then 
$ \maxbitlen(\bfx)=\Oh_d\bigl(\maxbitlen(K,\bfu,L)\bigr)$.

\subsection{Linear optimization oracles}
Recall the notion of a linear optimization oracle for a polytope.
Here we use the convention that the oracle always returns a vertex of the polytope.
\begin{definition}[$\Tquery(K)$]
\label{def:linear-optimization}
For a rational polytope $K\subseteq \R^d$, a \emph{linear optimization
oracle} receives a query $\bfc\in \Q^d$ and
returns a vertex $\bfx$ of $K$ that maximizes $\langle \bfc, \bfx\rangle$ (or reports $K=\varnothing$).

Let $\Tquery(K)$ be such that $K$ has a linear optimization oracle that answers a query
$\bfc$ in $\Tquery(K)\cdot (\maxbitlen(\bfc))^{\Oh_d(1)}$ time.
\end{definition}

One may also consider an \emph{arbitrary-maximizer} linear optimization oracle,
where the returned rational maximizer $\bfx\in K$ is not necessarily a vertex of $K$.
By \cite[Remark~6.5.2]{GLS88}, arbitrary-maximizer linear optimization oracles
are polynomially equivalent to (vertex-maximizer) linear optimization oracles.
Here and below, \emph{polynomial equivalence} between two oracles means the following
in both directions. Given a bound $b_0\ge \maxbitlen(K)$, a query $\bfq$ with
$b=\maxbitlen(\bfq)$ to one oracle for $K$ can be answered by making
$(b_0+b)^{\Oh_d(1)}$ queries $\bfq_i$ to the other oracle for $K$, with
$\maxbitlen(\bfq_i)\le (b_0+b)^{\Oh_d(1)}$, and spending
$(b_0+b)^{\Oh_d(1)}$ additional time.

\begin{lemma}
\label{lem:polytope-span}
Given a linear optimization oracle for a rational polytope $K\subseteq \R^d$ containing $\zero$,
one can compute $d'=\dim K$ and linearly independent vertices
$\bfp_1,\ldots,\bfp_{d'}$ of $K$ in $b_0^{\Oh_d(1)} \Tquery(K)$ time,
where $b_0=\maxbitlen(K)$.
\end{lemma}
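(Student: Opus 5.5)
We build the basis greedily: maintain a linearly independent list of vertices $\bfp_1,\dots,\bfp_j$ of $K$, initially empty, and try to extend it one vertex at a time. Since $\zero\in K$, we have $\dim K=\dim\spa(K)$, and $\spa(K)$ is spanned by the vertices of $K$. So the procedure should stop exactly when $\spa(\bfp_1,\dots,\bfp_j)=\spa(K)$, and at that point output $d'=j$.

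For the extension step, let $V_j=\spa(\bfp_1,\dots,\bfp_j)$ and compute a rational basis $\bfc_1,\dots,\bfc_{d-j}$ of $V_j^\perp$ by Gaussian elimination. For each $\bfc_i$ we make two oracle queries, with $\bfc_i$ and with $-\bfc_i$. Because $\zero\in K$, the maximum of $\langle\bfc_i,\cdot\rangle$ over $K$ is at least $0$, and so is the maximum of $\langle-\bfc_i,\cdot\rangle$.

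Suppose some query returns a vertex $\bfv$ with $\langle\pm\bfc_i,\bfv\rangle>0$. Then $\bfv\notin V_j$, so we set $\bfp_{j+1}=\bfv$ and continue.

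Suppose instead that all $2(d-j)$ returned values are $0$. Then every $\bfx\in K$ satisfies $\langle\bfc_i,\bfx\rangle\le 0$ and $\langle-\bfc_i,\bfx\rangle\le 0$, so $\langle\bfc_i,\bfx\rangle=0$ for all $i$. This gives $K\subseteq V_j$, and since $\bfp_1,\dots,\bfp_j\in K$ we get $\spa(K)=V_j$. We stop and report $d'=j$.

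For the cost, there are at most $d+1$ rounds with $\Oh(d)$ queries each. The query vectors $\bfc_i$ come from Gaussian elimination on vertices of $K$, so each has bit length $\maxbitlen(\bfc_i)=\Oh_d(b_0)$; this is the only bookkeeping point. By \cref{def:linear-optimization}, each query therefore costs $\Tquery(K)\cdot b_0^{\Oh_d(1)}$. The linear algebra in each round also takes $b_0^{\Oh_d(1)}$ time, giving the claimed total of $b_0^{\Oh_d(1)}\Tquery(K)$. The main obstacle is only this bit-length control; it follows from Cramer's rule, since the $\bfc_i$ can be taken as cofactor vectors of a matrix with at most $d$ rows of bit length $\le b_0$.
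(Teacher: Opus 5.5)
Your proposal is correct and takes essentially the same approach as the paper: query the oracle with $\pm$ a basis of $V_j^\perp$, append a maximizing vertex whenever some maximum is positive, and otherwise conclude $K\subseteq V_j$ and output $d'=j$. Your bit-length bound on the query vectors via Cramer's rule fills in a detail the paper leaves implicit.
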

\begin{proof}
Repeat the following for $i=0,1,\dots,d$: Write $V_i=\spa(\bfp_1,\dots,\bfp_i)$, and compute a basis $(\bfb_1,\dots,\bfb_{d-i})$ of $V_i^\perp$ by Gaussian elimination.
Query $\max_{\bfx\in K}\langle\pm \bfb_j,\bfx\rangle$ for all $1\le j\le d-i$. If any of these maxima is positive, then let $\bfp_{i+1}$ be the corresponding maximizing vertex of $K$, and continue to iteration $i+1$.
Otherwise, terminate the loop, and we have $d'=\dim K = i$ and $K\subseteq V_{d'}$, so we can return $\bfp_1,\dots,\bfp_{d'}$.
\end{proof}

Convex hulls and zonotopes have linear optimization oracles with query time near-linear in the number of input vectors:
\begin{lemma}
\label{lem:zonotope-algorithms}
Let $X$ be a multiset of vectors in $\Z^d$.
Then $\maxbitlen \big (\conv(\supp(X)),\caZ(X)\big )=\Oh_d(b_0)$ and
$\Tquery \big (\conv(\supp(X))\big),\Tquery \big (\caZ(X)\big )\le (\lvert\supp X\rvert+1)\cdot b_0^{\Oh_d(1)}$, where $b_0=\maxbitlen(X)$.
\end{lemma}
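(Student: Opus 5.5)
We treat the two polytopes separately. In both cases the oracle is built by a direct scan of $\supp X$, and the bit-length bound comes from describing the vertices explicitly. Write $N=\lvert\supp X\rvert$.

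\emph{Bit lengths.} Every vertex of $\conv(\supp X)$ is a point of $\supp X$, so $\maxbitlen(\conv(\supp X))\le b_0$. Every vertex of $\caZ(X)$ has the form $\sum_{\bfx\in S}\mu_X(\bfx)\,\bfx$ for some $S\subseteq\supp X$, since a vertex is the unique maximizer of some generic $\bfc$ and each generator is then taken fully or not at all. Its coordinates are integers of absolute value at most $\lvert X\rvert\cdot\max\lvert x_j\rvert\le 2^{2b_0}$. Hence $\maxbitlen(\caZ(X))=\Oh(b_0)$.

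\emph{Oracle for the convex hull.} Given $\bfc$, compute $\langle\bfc,\bfx\rangle$ for each $\bfx\in\supp X$ and return a maximizer. The catch is that it must be a vertex, and ties may make an arbitrary maximizer a non-vertex. To handle this, break ties lexicographically with respect to the perturbed objective $\bfc+\eta\bfe_1+\eta^2\bfe_2+\cdots+\eta^d\bfe_d$ for infinitesimal $\eta$. That is, among the maximizers of $\langle\bfc,\cdot\rangle$ take those maximizing $x_1$, then among those maximize $x_2$, and so on. The lexicographically maximal point of the face $F=\arg\max_K\langle\bfc,\cdot\rangle$ is a vertex of $F$, hence a vertex of $K$. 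Each comparison involves numbers of $\Oh(b_0+\maxbitlen(\bfc))$ bits, so the total time is $N\cdot(b_0+\maxbitlen(\bfc))^{\Oh_d(1)}$. (Alternatively, one may invoke the vertex-maximizer equivalence \cite[Remark~6.5.2]{GLS88}, but the direct argument avoids extra overhead.)

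\emph{Oracle for the zonotope.} A maximizer of $\langle\bfc,\cdot\rangle$ over $\caZ(X)$ is $\sum_{\bfx}\mu_X(\bfx)\,\eps_{\bfx}\,\bfx$, where $\eps_{\bfx}=1$ if $\langle\bfc,\bfx\rangle>0$ and $\eps_{\bfx}=0$ if $\langle\bfc,\bfx\rangle<0$. Generators with $\langle\bfc,\bfx\rangle=0$ are ambiguous, and to land on a vertex we again use the lexicographic perturbation. For such $\bfx$ set $\eps_{\bfx}=1$ exactly when the first nonzero coordinate of $\bfx$ is positive, i.e.\ when $\langle\bfc+\eta\bfe_1+\cdots+\eta^d\bfe_d,\bfx\rangle>0$ for small $\eta>0$. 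Since $\bfx\neq\zero$ is implicit (zero generators contribute nothing), the perturbed objective is nonzero on every nonzero generator. So it has a unique maximizer over $\caZ(X)$, which is therefore a vertex. That maximizer also maximizes $\langle\bfc,\cdot\rangle$, because the perturbation only breaks ties. Computing all signs and the weighted sum takes $N$ arithmetic operations on numbers of $(b_0+\maxbitlen(\bfc))^{\Oh(1)}$ bits.

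The only delicate point is guaranteeing a \emph{vertex} rather than an arbitrary maximizer, and the symbolic lexicographic perturbation resolves this in both cases. The $+1$ in the bound $(N+1)\cdot b_0^{\Oh_d(1)}$ covers the empty-support case.
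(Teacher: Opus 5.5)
Your proposal is correct and follows the paper's proof. Both bound the bit lengths by noting that vertices are elements of $\supp X$ or subset sums of $X$. Both get vertex outputs by breaking ties lexicographically on $(\langle\bfc,\bfx\rangle,x_1,\ldots,x_d)$; your infinitesimal perturbation $\bfc+\eta\bfe_1+\cdots+\eta^d\bfe_d$ is exactly this rule.
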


\begin{proof}
   Every vertex of $\conv(\supp(X))$ is a vector in $X$, and
every vertex of $\caZ(X)$ is a subset sum of $X$,
giving the claimed maximum bit length bounds.

Let $\bfc\in \Q^d$ be the given query.
The oracle for $\conv(\supp(X))$ returns the $\bfx\in \supp X$ that lexicographically maximizes $(\langle\bfc,\bfx\rangle,x_1,\ldots,x_d)$.
The oracle for $\caZ(X)$ returns the sum of $\mu_X(\bfx) \bfx$ over $\bfx\in \supp X$ for which
the first nonzero coordinate of
$(\langle\bfc,\bfx\rangle,x_1,\ldots,x_d)$ is positive.
This returns a point $\bfz \in \caZ(X)$ that lexicographically maximizes $(\langle \bfc,\bfz\rangle, z_1,\dots,z_d)$ and hence is a vertex of $\caZ(X)$ maximizing 
$\langle \bfc,\bfz\rangle$.
\end{proof}

We can build linear optimization oracles after taking intersection with a given polyhedron, and/or applying a linear map:
\begin{lemma}
\label{lem:constrained-optimization}
Let $K_0\subseteq \R^d$ be a rational polytope given by a linear optimization
oracle.
Let $A\in \Q^{m\times d}$, $\bfa\in \Q^m$, and $f\colon \R^d\to \R^{d'}$ be a rational linear map ($d'\le d$).
Define
$K=f\left(\left\{\bfx\in K_0:A \bfx\le \bfa\right\}\right)$.
Let $b_0=\maxbitlen(K_0,A,\bfa,f)$.
Then $\maxbitlen(K)=\Oh_d(b_0)$ and
$\Tquery(K)\le (\Tquery(K_0)+m) b_0^{\Oh_d(1)}$.
\end{lemma}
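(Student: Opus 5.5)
The plan is to reduce everything to the polynomial equivalence of linear optimization and strong separation oracles (Grötschel--Lovász--Schrijver, \cite[Theorem~6.4.9]{GLS88}), which holds for rational polytopes with known bound on the vertex bit length. We handle the intersection and the linear map in turn.

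\emph{Bit length.} First bound $\maxbitlen(K)$. Every vertex of $K' = \{\bfx\in K_0 : A\bfx\le\bfa\}$ is the unique solution of a $d\times d$ system. Each row of this system comes either from a facet inequality of $K_0$ or from a row of $(A,\bfa)$. Facet inequalities of a rational polytope have encoding length $\Oh_d(\maxbitlen(K_0))$, since they are determined by $d$ affinely independent vertices. Cramer's rule then gives $\maxbitlen(K')=\Oh_d(b_0)$. Every vertex of $f(K')$ is the image of a vertex of $K'$, because a linear functional maximized uniquely on $f(K')$ pulls back to a face of $K'$ containing a vertex mapping to it. Hence $\maxbitlen(K)=\Oh_d(b_0)$.

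\emph{Intersection.} Convert the optimization oracle of $K_0$ into a strong separation oracle, using $(b_0)^{\Oh_d(1)}$ calls. Separation for the halfspaces $A\bfx\le\bfa$ is trivial in $\Oh(m)\cdot b_0^{\Oh_d(1)}$ time: check each row and return a violated one. Separation for $K'$ is the combination: check the $m$ rows, and otherwise call the separation oracle for $K_0$. Converting back via GLS gives an optimization oracle for $K'$. Each query costs $(b_0+b)^{\Oh_d(1)}$ separation calls, so the total is $(\Tquery(K_0)+m)\,(b_0+b)^{\Oh_d(1)}$, as claimed.

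\emph{Linear map.} Optimization for $f(K')$ is direct. For a query $\bfc\in\Q^{d'}$, maximize $\langle f^\top\bfc,\bfx\rangle$ over $K'$ and output $f(\bfx)$. The image of a maximizer is a maximizer, but not necessarily a vertex of $f(K')$. To repair this, use an arbitrary-maximizer to vertex-maximizer conversion \cite[Remark~6.5.2]{GLS88}. Alternatively, perturb lexicographically: optimize successively $\bfc, \bfe_1,\dots,\bfe_{d'}$ pulled back through $f$, adding each attained optimum as an equality constraint (a new row of $A$, $\bfa$). The unique lexicographic maximizer over $f(K')$ is a vertex, and this costs only $\Oh(d)$ further oracle calls with polynomially bounded bit lengths.

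The main obstacle is making the GLS equivalence rigorous in this setting. It needs a priori bit bounds on $K'$, which the first paragraph supplies, and care with lower-dimensional polytopes. Exact strong separation for non-full-dimensional rational polytopes is covered by the "well-described polyhedron" framework of GLS. We would rely on that framework rather than reprove it, checking only that all produced queries have bit length $(b_0+b)^{\Oh_d(1)}$.
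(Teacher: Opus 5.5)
Your proposal is correct and follows essentially the same route as the paper. You convert the oracle for $K_0$ to strong separation via the GLS equivalence, add the $m$ inequalities as extra separators, and convert back; you then pull the query back through $f$, repair the arbitrary maximizer to a vertex (your lexicographic refinement is a valid self-contained substitute for Remark~6.5.2), and use Cramer's rule together with ``vertices of $f(K')$ are images of vertices of $K'$'' to get $\maxbitlen(K)=\Oh_d(b_0)$.
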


For the proof of \cref{lem:constrained-optimization}, recall that a \emph{strong separation oracle} for a rational polytope $K\subseteq \R^d$ receives a query point $\bfx\in \Q^d$, and either reports $\bfx\in K$, or returns
$\bfu\in \Q^d\setminus \{0\}$ and $a\in \Q$ satisfying
$\langle\bfu,\bfy\rangle\le a$ for all $\bfy \in K$
whereas 
$\langle\bfu,\bfx\rangle>a$.
By \cite[Theorem~6.4.9]{GLS88},
 linear optimization oracles and strong separation oracles are polynomially equivalent.

\begin{proof}[Proof of \cref{lem:constrained-optimization}]
By the aforementioned oracle equivalences, we have a 
strong separation oracle for $K_0$.
This immediately gives a strong separation oracle for $P
\coloneqq \{\bfx\in K_0:A \bfx\le \bfa\}$: on a query point $\bfx$,
 use the oracle for $K_0$ to certify $\bfx\in K_0$ (or return a separator of $\bfx$ from $K_0$), and then certify $A \bfx\le \bfa$
(or return the violated inequality as a separator).
By the aforementioned oracle equivalences,
we hence have a linear optimization oracle for $P$.
Given $\bfc\in \Q^{d'}$, use this oracle to find $\bfx\in P$ that maximizes $\langle \bfc, f(\bfx)\rangle$,  
and return $f(\bfx)$.
This gives an arbitrary-maximizer linear optimization oracle for $K = f(P)$.
By the aforementioned oracle equivalences, we get a linear optimization oracle for $K$.

Cramer's rule gives
$\maxbitlen(P)=\Oh_d(\maxbitlen(K_0,A,\bfa)) = \Oh_d(b_0)$.
Since $K=f(P)$, we have 
$\maxbitlen(K)=\Oh_d(\maxbitlen(f,P)) =\Oh_d(b_0) $.
The time bound follows from the polynomial equivalences and the additional time for checking each of the $m$ inequalities in $A \bfx \le \bfa$.
\end{proof}

We record some immediate corollaries of  \cref{lem:constrained-optimization}.
\begin{corollary}
\label{cor:polytope-oracles}
\leavevmode
\begin{enumerate}
\item Suppose a polytope $K$ is given by
$K=\{\bfx\in \R^d:A \bfx\le \bfa\}$, where $A\in \Q^{m\times d}$,
$\bfa\in \Q^m$, and $b_0=\maxbitlen(A,\bfa)$.
Then $\maxbitlen(K)=\Oh_d(b_0)$ and $\Tquery(K)=m b_0^{\Oh_d(1)}$.
\item Let $K_1,K_2\subseteq \R^d$ be rational polytopes given by linear
optimization oracles, and let $b_0=\maxbitlen(K_1,K_2)$.
Then $\maxbitlen(K_1\cap K_2)=\Oh_d(b_0)$ and
$\Tquery(K_1\cap K_2)\le (\Tquery(K_1)+\Tquery(K_2)) b_0^{\Oh_d(1)}$.
\end{enumerate}
\end{corollary}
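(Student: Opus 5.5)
Both parts of \cref{cor:polytope-oracles} are direct specializations of \cref{lem:constrained-optimization}; the only work is choosing $K_0$, $A$, $\bfa$, $f$ so that the resulting polytope is the one we want, and checking the parameters.

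For part~1, I want $K_0$ to be a polytope that certainly contains $K$ and has a trivial oracle. The plan is to take $K_0$ to be the axis-parallel box $[-R,R]^d$ with $R=2^{\Oh_d(b_0)}$ chosen so that $K\subseteq K_0$. This is possible because $K$ is bounded and, by Cramer's rule, every vertex of $K$ has $\maxbitlen=\Oh_d(b_0)$. The box has an obvious linear optimization oracle: on query $\bfc$, set each coordinate to $R\cdot\operatorname{sign}(c_i)$, choosing $+R$ when $c_i=0$ to get a vertex. This runs in $\Oh(d)\cdot\maxbitlen(\bfc)^{\Oh(1)}$ time, and $\maxbitlen(K_0)=\Oh_d(b_0)$.

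Then I apply \cref{lem:constrained-optimization} with this $K_0$, the given $(A,\bfa)$, and $f=\mathrm{id}$. This yields $K=\{\bfx\in K_0:A\bfx\le\bfa\}$ with $\maxbitlen(K)=\Oh_d(b_0)$ and $\Tquery(K)\le(\Tquery(K_0)+m)\,b_0^{\Oh_d(1)}=m\,b_0^{\Oh_d(1)}$. The empty case $K=\varnothing$ is handled because the oracle in \cref{lem:constrained-optimization} reports emptiness. The one point to handle carefully is that the box bound $R$ must be derived from vertex bit lengths; the vertices are solutions of $d\times d$ subsystems of $A\bfx=\bfa$, so Cramer's rule gives this bound directly.

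For part~2, I first convert the oracle for $K_2$ into a strong separation oracle, using the polynomial equivalence \cite[Theorem~6.4.9]{GLS88}. I then rerun the argument from the proof of \cref{lem:constrained-optimization} with this separation oracle in place of the explicit inequality check $A\bfx\le\bfa$. On a query point, I check membership in $K_1$ and in $K_2$ via their separation oracles, and return whichever separator fails. Converting this separation oracle for $K_1\cap K_2$ back into an optimization oracle costs $b_0^{\Oh_d(1)}$ oracle calls, which gives the time bound $(\Tquery(K_1)+\Tquery(K_2))\,b_0^{\Oh_d(1)}$. Every vertex of $K_1\cap K_2$ is determined by facet inequalities of $K_1$ and $K_2$, each of bit length $\Oh_d(b_0)$ as is standard for rational polytopes, so Cramer's rule gives $\maxbitlen(K_1\cap K_2)=\Oh_d(b_0)$.
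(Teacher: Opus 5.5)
Your proposal is correct. Part~1 is the paper's argument exactly: enclose $K$ in a box of bit length $\Oh_d(b_0)$, justified by Cramer's rule, and apply \cref{lem:constrained-optimization} with $f$ the identity.

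For Part~2 you take a different route. You reopen the proof of \cref{lem:constrained-optimization} and replace the explicit check $A\bfx\le\bfa$ by a call to a strong separation oracle for $K_2$. This means you must redo by hand both the oracle-equivalence bookkeeping and the bit-length bound for $K_1\cap K_2$. Both steps are valid as you sketch them: the facets of each $K_i$ have bit length $\Oh_d(b_0)$ by the vertex-to-facet complexity conversion \cite[Lemma~6.2.4]{GLS88}, and Cramer's rule then bounds the vertices of the intersection.

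The paper instead uses \cref{lem:constrained-optimization} purely as a black box. The product $K_1\times K_2\subseteq\R^{2d}$ has an optimization oracle with $\Tquery(K_1\times K_2)\le\Tquery(K_1)+\Tquery(K_2)$, obtained by optimizing over each factor separately. Then $K_1\cap K_2$ is the image under $f(\bfx_1,\bfx_2)=\bfx_1$ of $K_1\times K_2$ cut by the $2d$ inequalities expressing $\bfx_1=\bfx_2$. Both the time bound and the $\maxbitlen$ bound follow directly from the lemma, at the harmless cost of doubling the dimension.

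Your version stays in $\R^d$ and would extend verbatim to intersecting with any polytope given only by a separation oracle. However, it duplicates work the lemma already packages, whereas the paper's diagonal trick needs no separate complexity argument.
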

\begin{proof}
For the first part, Cramer's rule gives $\maxbitlen(K)=\Oh_d(b_0)$.
Hence $K$ lies in a box $K_0$ with
$\maxbitlen(K_0)=\Oh_d(b_0)$ and $\Tquery(K_0)=b_0^{\Oh_d(1)}$.
Then apply \cref{lem:constrained-optimization} with $f$ being the identity map.

For the second part, the polytope $K_1\times K_2\subseteq \R^{2d}$ has a linear optimization oracle by separately optimizing
over $K_1$ and $K_2$ and returning the pair of vertices $(\bfx_1,\bfx_2)$.
Thus  
$\Tquery(K_1\times K_2)\le \Tquery(K_1)+\Tquery(K_2)$ and $\maxbitlen(K_1\times K_2)\le b_0$.
Then, apply \cref{lem:constrained-optimization} 
to $K_0=K_1\times K_2$, with the $2d$ inequalities expressing $\bfx_1=\bfx_2$,
and $f(\bfx_1,\bfx_2) = \bfx_1$.
\end{proof}

\begin{lemma}
\label{lem:polar-optimization}
Let $K\subseteq \R^d$ be a rational polytope containing $\zero$ in its interior,
given by a linear optimization oracle, and let $b_0=\maxbitlen(K)$.
Its polar
\[
K^\circ=\{\bfa\in \R^d:\langle\bfa,\bfx\rangle\le 1
\text{ for every }\bfx\in K\}
\]
is a rational polytope satisfying $\maxbitlen(K^\circ)=\Oh_d(b_0)$ and
$\Tquery(K^\circ)\le b_0^{\Oh_d(1)} \Tquery(K)$.
\end{lemma}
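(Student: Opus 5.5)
The plan is to build a strong separation oracle for $K^\circ$ directly out of the linear optimization oracle for $K$, and then convert it back into a linear optimization oracle via the equivalence \cite[Theorem~6.4.9]{GLS88}, in the same way as the proof of \cref{lem:constrained-optimization}. The bit-length bound will be handled separately, using polarity between vertices and facets.

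For the separation oracle, take a query point $\bfa\in\Q^d$ and ask the oracle for $K$ for a vertex $\bfx$ maximizing $\langle \bfa,\bfx\rangle$.
\begin{itemize}
\item If $\langle\bfa,\bfx\rangle\le 1$, then $\langle \bfa,\bfy\rangle\le 1$ for every $\bfy\in K$, so we report $\bfa\in K^\circ$.
\item Otherwise we return the separator $(\bfu,\alpha)=(\bfx,1)$. Every $\bfb\in K^\circ$ satisfies $\langle \bfx,\bfb\rangle\le 1$ because $\bfx\in K$, while $\langle\bfx,\bfa\rangle>1$.
\end{itemize}
Each call costs one oracle query with the same bit length plus $b_0^{\Oh_d(1)}$ arithmetic. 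Since $\zero$ is interior to $K$, the set $K^\circ$ is bounded and full-dimensional, hence a polytope, and
\[
K^\circ=\{\bfa:\langle\bfa,\bfv\rangle\le 1 \text{ for all vertices } \bfv \text{ of } K\}.
\]

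For the bit-length claim, each vertex of $K^\circ$ is the unique solution of $d$ linearly independent equations $\langle \bfa,\bfv_i\rangle=1$, with the $\bfv_i$ vertices of $K$. Each $\bfv_i$ has $\maxbitlen\le b_0$. After clearing denominators (a common denominator has $\Oh_d(b_0)$ bits), Cramer's rule gives $\maxbitlen(K^\circ)=\Oh_d(b_0)$.

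The last step feeds this bound, together with the separation oracle, into the separation-to-optimization direction of \cite[Theorem~6.4.9]{GLS88}. That direction needs an a priori bound on $\maxbitlen(K^\circ)$ (the facet or vertex complexity), which we now have. It then answers an optimization query $\bfc$ using $(b_0+\maxbitlen(\bfc))^{\Oh_d(1)}$ separation calls on points of polynomially bounded bit length. This yields $\Tquery(K^\circ)\le b_0^{\Oh_d(1)}\Tquery(K)$, where the dependence on $\maxbitlen(\bfc)$ is absorbed into the factor $(\maxbitlen(\bfc))^{\Oh_d(1)}$ allowed by \cref{def:linear-optimization}.

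The main point of care is that ellipsoid-based equivalence: we must check that the queries the separation oracle receives have bit length polynomial in $b_0+\maxbitlen(\bfc)$, so that the $K$-oracle cost matches the stated form. I would also note that the oracle equivalence needs $K^\circ$ to be full-dimensional, or needs only well-described polyhedra; either way the interior-point hypothesis on $K$ guarantees it.
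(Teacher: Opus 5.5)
Your proposal is correct and follows the paper's proof. The paper likewise builds a strong separation oracle for $K^\circ$ from a single optimization query to $K$, using your separator $(\bfx,1)$; it gets the query-time bound from the equivalence in \cite[Theorem~6.4.9]{GLS88} and $\maxbitlen(K^\circ)=\Oh_d(b_0)$ from Cramer's rule (cf.\ \cite[Lemma~6.2.4]{GLS88}), so your write-up just fills in details the paper leaves implicit.
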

\begin{proof}
A strong separation oracle for 
$K^\circ$ can be directly implemented using a linear optimization oracle for $K$, so
the claimed time bound 
follows from the aforementioned oracle equivalences.
The bound on $\maxbitlen(K^\circ)$ directly follows from Cramer's rule (see also \cite[Lemma~6.2.4]{GLS88}).
\end{proof}

\subsection{Polytope approximation}

\begin{lemma}[cf.\ \cite{AwerbuchKleinberg08}]
\label{lem:bounded-coordinate-selection}
Let $K\subseteq \R^d$ be a nonempty rational polytope given by a linear optimization oracle, $d'=\dim K$, and $b_0=\maxbitlen(K)$.
Then, one can find
vertices $\bfp_0,\ldots,\bfp_{d'}$ of $K$ such that
\[
 K\subseteq \left\{\sum_{i=0}^{d'} a_i \bfp_i:\sum_{i=0}^{d'} a_i=1,\ \lvert a_i\rvert\le 2\right\},
\]
in $b_0^{\Oh_d(1)} \Tquery(K)$ time.
\end{lemma}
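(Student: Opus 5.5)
We follow the approximate-maximum-volume-simplex approach of Awerbuch and Kleinberg. First reduce to the full-dimensional case. Translate so that a vertex $\bfp$ of $K$ (obtained from one oracle call) is the origin. Apply \cref{lem:polytope-span} to $K-\bfp$ to get $d'=\dim K$ and linearly independent vertices spanning $\spa(K-\bfp)$. Then fix a rational coordinate map onto $\R^{d'}$. This can be taken to be a coordinate projection that is injective on the affine hull, found by Gaussian elimination, so bit lengths stay $\Oh_d(b_0)$. By \cref{lem:constrained-optimization}, this gives a linear optimization oracle for the image, with the same $\Tquery$ up to $b_0^{\Oh_d(1)}$ factors. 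Affine coefficients $a_i$ are preserved under injective affine maps, so it suffices to treat $d'=d$ with $K$ full-dimensional.

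Start with any $d+1$ affinely independent vertices $\bfp_0,\dots,\bfp_d$, for instance $\bfp$ together with the spanning vertices. Then repeat the following local improvement. For each $i$, let $\bfn_i$ be a normal vector to the hyperplane through $\{\bfp_j\}_{j\ne i}$. Query the oracle with $\pm\bfn_i$ to obtain a vertex $\bfq$ maximizing $\lvert\langle\bfn_i,\bfq-\bfp_j\rangle\rvert$ for any $j\ne i$. Replacing $\bfp_i$ by $\bfq$ changes the simplex volume by the ratio of these distances to the hyperplane. If for some $i$ the ratio exceeds $2$, perform the swap, which at least doubles the volume. If no swap is possible, every point $\bfx\in K$ satisfies $\lvert\langle\bfn_i,\bfx-\bfp_j\rangle\rvert\le 2\lvert\langle\bfn_i,\bfp_i-\bfp_j\rangle\rvert$ for all $i$. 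This says exactly that the barycentric coordinate $a_i(\bfx)$ with respect to the simplex has $\lvert a_i\rvert\le 2$. So termination yields the claimed containment.

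The main obstacle is bounding the number of iterations. All vertices have $\maxbitlen=\Oh_d(b_0)$, so any nondegenerate simplex with vertices in $K$ has volume between $2^{-b_0^{\Oh_d(1)}}$ and $2^{\Oh_d(b_0)}$. The lower bound holds because the volume is a nonzero determinant of rationals with bounded denominators. Each swap doubles the volume, so there are $b_0^{\Oh_d(1)}$ swaps. Each swap costs $\Oh(d)$ oracle queries, with normals of bit length $b_0^{\Oh_d(1)}$ obtained by Cramer's rule, plus polynomial arithmetic. The total is therefore $b_0^{\Oh_d(1)}\Tquery(K)$, and the returned $\bfp_i$ are vertices because the oracle returns vertices.
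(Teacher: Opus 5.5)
Your proposal is correct and follows essentially the same route as the paper: querying the oracle with $\pm\boldsymbol{n}_i$ is exactly maximizing $\pm a_i(\bfx)$ over $K$, and swapping whenever $\lvert a_i\rvert>2$, with the simplex volume as a potential, is precisely the paper's argument (your explicit coordinate projection is just a concrete way of carrying out the paper's ``restrict to the affine hull'' step). One small point: clearing denominators row by row shows that a nonzero determinant of vertex differences is at least $2^{-\Oh_d(b_0)}$, so there are only $\Oh_d(b_0)$ swaps rather than your looser $b_0^{\Oh_d(1)}$, though either bound gives the claimed $b_0^{\Oh_d(1)}\Tquery(K)$ running time.
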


The set of vertices $\bfp_0,\ldots,\bfp_{d'}$ is called  (an affine version of) a barycentric spanner in \cite{AwerbuchKleinberg08}.
Their existence (for any compact convex set $K$) follows from   picking the vertices of a maximum-volume inscribed simplex of $K$, which gives the coefficient bound $\lvert a_i\rvert\le 1$.
The constructive proof below
achieving $\lvert a_i\rvert\le 2$
follows \cite{AwerbuchKleinberg08}.

\begin{proof}[Proof of \cref{lem:bounded-coordinate-selection}]
Obtain an arbitrary vertex $\bfp_0$ of $K$ by querying the oracle.
Apply \cref{lem:polytope-span} to $K-\bfp_0$ to compute $d' = \dim K$
and affinely independent vertices $\bfp_0,\bfp_1,\dots,\bfp_{d'} \in K$.

In the following, assume $d'=d$ for simplicity; the general case can be similarly proved by restricting to the affine hull of $K$.

Starting from $\bfp_0,\dots,\bfp_{d}$, repeat the following operation. For $\bfx\in \R^d$, define $(a_0(\bfx),\dots,a_d(\bfx)) \in \R^{d+1}$ as the unique tuple so that $\bfx=\sum_{i=0}^d a_i(\bfx) \bfp_i$ and $\sum_{i=0}^d a_i(\bfx)=1$.
Each $a_i(\bfx)$ is an affine function of $\bfx$.
Query $\max_{\bfx\in K}\pm a_i(\bfx)$ for all $0\le i\le d$.
If any of these maxima exceeds $2$, then replace $\bfp_i$ by the corresponding maximizing vertex of $K$ and repeat.
Otherwise, terminate the loop, and return the current $\bfp_0,\dots,\bfp_d$ as the answer. 
The termination condition immediately gives the claimed containment.

To bound the number of iterations, observe that 
the volume of the simplex $\conv \{\bfp_0,\dots,\bfp_d\}$ is multiplied by $\lvert a_i(\bfx)\rvert>2$ at each replacement.
Since $\maxbitlen(\bfp_i)\le b_0$, this volume lies between $2^{-\Oh_d(b_0)}$ and $2^{\Oh_d(b_0)}$ throughout.
Hence there are $\Oh_d(b_0)$ replacements, each requiring $\Oh_d(1)$ oracle calls.
Together, the total time is $b_0^{\Oh_d(1)} \Tquery(K)$.
\end{proof}

\begin{lemma}
\label{lem:macbeath-center}
Let $K\subseteq \R^d$ be a convex body.
There are an origin-symmetric parallelepiped $B$ and a point $\bfc$ such that
$\bfc+B\subseteq K\subseteq \bfc+\beta_d B$
for some $\beta_d\ge 1$ depending only on $d$.

If $K$ is a rational polytope given by a linear optimization oracle,
and $b_0=\maxbitlen(K)$, then one can compute such rational $B,\bfc$
with $\maxbitlen(B,\bfc)=\Oh_d(b_0)$
in $b_0^{\Oh_d(1)} \Tquery(K)$ time.
\end{lemma}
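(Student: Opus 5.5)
The plan is to use the barycentric spanner from \cref{lem:bounded-coordinate-selection} directly. Since $K$ is a convex body, $\dim K=d$. Invoke \cref{lem:bounded-coordinate-selection} to obtain vertices $\bfp_0,\dots,\bfp_d$ of $K$ such that every $\bfx\in K$ can be written as $\bfx=\sum_i a_i\bfp_i$ with $\sum_i a_i=1$ and $\lvert a_i\rvert\le 2$. Let $S=\conv\{\bfp_0,\dots,\bfp_d\}\subseteq K$, and let $\bfg=\frac{1}{d+1}\sum_i\bfp_i$ be its centroid. The spanner property says exactly that $K\subseteq S'\coloneqq\{\sum a_i\bfp_i:\sum a_i=1,\lvert a_i\rvert\le 2\}$. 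This is an affine image of a fixed polytope in barycentric coordinates, so it is contained in $\bfg+c_d(S-\bfg)$ for some $c_d$ depending only on $d$. Concretely, subtracting $\bfg$ changes the coefficients to $a_i-\frac1{d+1}$, which have absolute value at most $3$, and $S-\bfg$ contains the set of combinations with $\sum b_i=0$ and $\lvert b_i\rvert\le \frac{1}{(d+1)^2}$ or so. So we get the sandwich $S\subseteq K\subseteq \bfg+c_d(S-\bfg)$.

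Next, replace the simplex by a parallelepiped. Take the edge vectors $\bfv_i=\bfp_i-\bfp_0$ for $i\in[d]$, and set
\[
B=\Big\{\sum_{i=1}^d t_i\bfv_i:\lvert t_i\rvert\le \rho_d\Big\},\qquad \bfc=\bfg,
\]
with $\rho_d=\frac{1}{d(d+1)}$. A point $\bfg+\sum t_i\bfv_i$ has barycentric coefficient $\frac1{d+1}+t_i$ on $\bfp_i$ for $i\ge1$, and $\frac{1}{d+1}-\sum t_i$ on $\bfp_0$. All of these are nonnegative when $\lvert t_i\rvert\le\rho_d$, so $\bfc+B\subseteq S\subseteq K$. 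Conversely, any point of $S'$ minus $\bfg$ has $\bfv$-coordinates $a_i-\frac1{d+1}$ for $i\ge 1$, which are bounded by $3$ in absolute value, so $K\subseteq \bfc+\frac{3}{\rho_d}B$. We therefore take $\beta_d=3d(d+1)$. This settles the existence claim; the existence argument could even use a maximum-volume simplex, but the constructive version covers both at once.

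For the algorithmic claim, the only real work is \cref{lem:bounded-coordinate-selection}, which runs in $b_0^{\Oh_d(1)}\Tquery(K)$ time and returns vertices with $\maxbitlen(\bfp_i)\le b_0$. After that, $\bfg$ and the generators $\rho_d\bfv_i$ of $B$ are fixed rational combinations of the $\bfp_i$ with coefficients depending only on $d$, so $\maxbitlen(B,\bfc)=\Oh_d(b_0)$, and they take $\Oh_d(1)$ arithmetic operations to compute. We represent $B$ by its $d$ generators.

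The main point needing care is the containment computation in barycentric coordinates, i.e., verifying that the constants $\rho_d$ and $\beta_d$ work uniformly. The rest is bookkeeping. The full-dimensionality assumption (that $K$ is a convex body) guarantees that the simplex is nondegenerate, so $B$ is a genuine parallelepiped.
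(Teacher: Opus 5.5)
Your proof is correct and essentially identical to the paper's: both use the spanner from \cref{lem:bounded-coordinate-selection}, the centroid $\bfc=\frac{1}{d+1}\sum_i\bfp_i$, the parallelepiped generated by $\frac{1}{d(d+1)}(\bfp_i-\bfp_0)$, the same barycentric-coordinate checks, and the constant $\beta_d=3d(d+1)$. Two minor remarks: your intermediate sandwich $S\subseteq K\subseteq \bfg+c_d(S-\bfg)$ is never used. Also, for a general convex body the spanner's existence comes from a maximum-volume inscribed simplex, since the algorithmic lemma only applies to rational polytopes given by an oracle; the paper's proof glosses over the same point.
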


\begin{proof}
   By \cref{lem:bounded-coordinate-selection}, we can find $\bfv_0,\ldots,\bfv_d\in K$, such that every $\bfx\in K$ can be  written as 
$\bfx=\sum_{i=0}^d a_i(\bfx) \bfv_i$, $\sum_{i=0}^d a_i(\bfx)=1$, with $\lvert a_i(\bfx)\rvert\le 2$ for all $i$.

Define $\bfc=(\sum_{i=0}^d \bfv_i)/(d+1)$, $\bfq_i=\bfv_i-\bfv_0$ for $1\le i\le d$,
and $B=\frac{1}{d(d+1)} \sum_{i=1}^d[-\bfq_i,\bfq_i]$. We show that $\bfc$ and $B$ satisfy the claimed property.

For any $\bfx \in K$,  \[\bfx-\bfc=\sum_{i=1}^d\left (a_i(\bfx)-\frac{1}{d+1}\right ) \bfq_i\]
has coefficients of absolute value at most $3$, proving
$K\subseteq \bfc+3d(d+1) B$.

Conversely, for any $\bfy \in \bfc+B$, writing
$ \bfy = \bfc+\sum_{i=1}^d t_i \bfq_i $ with
 $\lvert t_i\rvert\le 1/(d(d+1))$, we have
\[
\bfy = \left(\frac{1}{d+1}-\sum_{i=1}^d t_i\right) \bfv_0
  +\sum_{i=1}^d\left(\frac{1}{d+1}+t_i\right) \bfv_i.
\]
The coefficients are nonnegative and sum to $1$, so 
$\bfy \in K$ by convexity of $K$, proving $\bfc+B\subseteq K$.

\end{proof}

\subsection{Lattice algorithms}
\label{sec:hermite}

A matrix $H\in \Z^{d\times d'}$ with linearly independent columns is in
\emph{(column) Hermite normal form} (HNF) if, writing $r_j = \max \{i: H_{i,j}\neq 0\}$, the following properties hold:
\begin{itemize}
   \item $r_1<\cdots<r_{d'}$,
   \item $H_{r_j,j}>0$ for all $1\le j\le d'$,
   \item $0\le H_{r_j,k}<H_{r_j,j}$ for all $1\le j<k\le d'$.
\end{itemize}
It is well-known that every integer lattice $L\subseteq \Z^d$ with $\rk L = d'$ has a unique HNF representation $H \in \Z^{d\times d'}$ such that $L=H\Z^{d'}$.

For a rational lattice $L\subseteq \Q^d$, writing $q$ for the least positive
integer with $qL\subseteq \Z^d$, and writing $H$ for the unique HNF representation of $qL$,
the unique HNF representation of $L$ is $H_L= H/q$.

We represent a rational lattice $L$ by its unique HNF representation $H_L$ by default, and denote $\maxbitlen(L) = \maxbitlen(H_L)$. 
The following lemma computes the HNF representation of a lattice generated by input vectors: 

\begin{lemma}[{\cite[Theorem~5]{StorjohannLabahn96}}]
\label{lem:lattice-basis-hnf}
Given $q\in \Pos$ and $\bfz_1,\ldots,\bfz_m\in \Z^d$ with
$b_0=\maxbitlen(q,\allowbreak \bfz_1,\ldots,\bfz_m)$,
one can compute the HNF representation $H_L$ of the lattice
$L=\caL(\bfz_1/q,\allowbreak \ldots,\bfz_m/q)$
 with
$\maxbitlen(H_L)=\Oh_d(b_0)$
in $\tO\!\left((m+1) \poly(d,b_0)\right)$ time.
\end{lemma}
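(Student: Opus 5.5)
The statement immediately above is \cref{lem:lattice-basis-hnf}, which is cited from Storjohann and Labahn. The plan is to reduce it to the integer-lattice case and invoke their algorithm, while checking the input and output size bounds.

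First, scale. We have $L=\caL(\bfz_1/q,\dots,\bfz_m/q)=\frac1q\caL(\bfz_1,\dots,\bfz_m)$. So we compute the HNF $H'$ of the integer lattice $L'=\caL(\bfz_1,\dots,\bfz_m)\subseteq\Z^d$ and then fix the denominator. The least positive integer $q'$ with $q'L\subseteq\Z^d$ is
\[
q'=\frac{q}{\gcd(q,g)},
\]
where $g$ is the gcd of all entries of $H'$, or equivalently of all the $\bfz_i$. Integer-lattice HNF is unique and lattices scale, so $\frac{q'}{q}H'$ is an integer matrix satisfying the HNF conditions and generating $q'L$. Hence it is the HNF of $q'L$, and $H_L=H'/q$ after reducing fractions. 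This step costs $\Oh(md)$ gcd operations on $b_0$-bit numbers.

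Second, compute $H'$ in time $\tO(m\,\poly(d,b_0))$. The idea is to process the generators incrementally. We first obtain a full-column-rank set of at most $d$ generators, using fraction-free Gaussian elimination over the generators to find a maximal independent subset. We then compute the determinant $\delta$ (a suitable minor) of that subset; by Hadamard's bound it has $\Oh_d(b_0)$ bits. In the full-rank case $\delta\Z^{d}\subseteq L'$, and in general the same holds on the span. We can therefore work modulo $\delta$ and fold in each remaining generator one at a time. Each fold is an HNF update of a $(d+1)$-column matrix modulo $\delta$, costing $\poly(d,b_0)$. This gives $\tO((m+1)\poly(d,b_0))$ overall, which is exactly the content of Storjohann--Labahn's Theorem~5; we would cite it rather than reprove it.

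Third, bound the output size. Each pivot of $H'$ is a ratio of $r\times r$ minors of the generator matrix. The off-diagonal entries are reduced below their pivots. Hence every entry is at most $\delta$ in absolute value, which is $2^{\Oh_d(b_0)}$. Dividing by $q$ keeps $\maxbitlen(H_L)=\Oh_d(b_0)$.

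The main obstacle is the lower-rank case of the modular step, since $\delta\Z^d\not\subseteq L'$ there. I would handle it as the source does: restrict to the coordinates given by the row-rank profile, or equivalently add $\delta$ times the appropriate unit vectors only in those pivot rows. Then I would recover the full HNF from the resulting echelon structure.
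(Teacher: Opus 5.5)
Your overall route matches the paper's, which gives no argument beyond the citation to Storjohann--Labahn, with the rational case absorbed into the convention $H_L=H/q$ from the preliminaries. Your scaling step is correct. The step that does not go through as written is the output-size bound when $r=\rk L'<d$.

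The paper's column HNF conditions constrain only the entries in the pivot rows $r_1<\dots<r_r$. The entries of a column $\bfh_k$ in the other rows are not reduced modulo anything, so ``off-diagonal entries are reduced below their pivots'' says nothing about them, and they can exceed $\lvert\delta\rvert$. For instance, $L'=\Z(5,1)^\top$ has HNF $(5,1)^\top$ with pivot row $2$. The pivot-row minor that your projection to the rank profile would use is $\delta=1$, which is smaller than the entry $5$.

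The bound $\maxbitlen(H_L)=\Oh_d(b_0)$ is still true, but it needs one more step.
\begin{itemize}
\item Let $G\in\Z^{d\times r}$ be your independent generators and $G_P$ its restriction to the pivot rows. $G_P$ is invertible because every nonzero vector of $\spa(L')$ has its last nonzero coordinate in a pivot row.
\item Put $\delta=\det G_P$. The pivots multiply to the determinant of the projection of $L'$ onto the pivot coordinates, and that determinant divides $\delta$. Hence every pivot-row entry is at most $\lvert\delta\rvert$.
\item Each column satisfies $\bfh_k=G\,G_P^{-1}(\bfh_k)_P$. By Cramer's rule, each entry of $G\,G_P^{-1}$ is a ratio of two $r\times r$ minors of $G$ whose denominator has absolute value at least $1$.
\item Hadamard's inequality then gives $\|\bfh_k\|_\infty\le 2^{\Oh_d(b_0)}$.
\end{itemize}

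For the same reason, adding $\delta\bfe_{r_i}$ to the generators is not equivalent to restricting to the pivot coordinates. In $\Z^d$ it changes the lattice whenever $r<d$; for $L'=\Z(1,1)^\top$ it produces $\Z^2$. It is valid only after projecting, followed by the lift through $G\,G_P^{-1}$. That lift is exactly where the unreduced entries above come from.
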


The following lemma computes the kernel and image of a linear map on a lattice:
\begin{lemma}[{e.g., \cite[Section~14, ``The kernel and image algorithm'']{Lenstra08Lattices}}]
\label{lem:lattice-kernel-image}
Given a lattice $\Gamma\subseteq \Q^d$ and a rational linear map
$f\colon \R^d\to \R$ with
$b_0=\maxbitlen(\Gamma,f)$, one can compute 
the lattice $\Lambda=\Gamma\cap \ker f$ and $g\in \Q_{\ge 0}$ such that
$f(\Gamma)=g\Z$,
with $\maxbitlen(g,\Lambda)=\Oh_d(b_0)$  in $\poly(d,b_0)$ time.
\end{lemma}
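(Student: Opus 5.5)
The plan is to reduce everything to an integer basis of $\Gamma$ and then use the HNF machinery of \cref{lem:lattice-basis-hnf}. Write the HNF representation of $\Gamma$ as $H_\Gamma\in\Q^{d\times d'}$, scaled by the least $q$ with $qH_\Gamma$ integral. The map $\phi=f\circ H_\Gamma\colon \Z^{d'}\to\Q$, $\phi(\bfy)=\langle \bfc,\bfy\rangle$, has a rational coefficient vector $\bfc=(f(\bfh_1),\dots,f(\bfh_{d'}))$, where $\bfh_j$ are the columns. After clearing the common denominator $N$ of $\bfc$ (of bit length $\Oh_d(b_0)$), we get an integer vector $N\bfc\in\Z^{d'}$.

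\textbf{Image.} The image is $f(\Gamma)=\phi(\Z^{d'})=\gcd(N c_1,\dots,N c_{d'})\Z/N$. So we set $g=\gcd(Nc_1,\dots,Nc_{d'})/N$, with $g=0$ if $\bfc=\zero$. The Euclidean algorithm computes this in $\poly(d,b_0)$ time, and $\maxbitlen(g)=\Oh_d(b_0)$.

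\textbf{Kernel.} We need $\Lambda=H_\Gamma(\ker\phi\cap\Z^{d'})$, and we find a basis of the integer kernel of the $1\times d'$ integer row $N\bfc$. The extended Euclidean algorithm, applied iteratively, gives a unimodular $U\in\Z^{d'\times d'}$ with $(N\bfc)U=(g',0,\dots,0)$. One can also run column HNF on the $(1+d')\times d'$ stacked matrix $\binom{N\bfc}{I}$. The last $d'-1$ columns of $U$ (or all $d'$ columns if $\bfc=\zero$) form a basis of the integer kernel.

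\textbf{Main obstacle: bit lengths of $U$.} Naive iterated Euclid can blow up the entries, so we control them. Either we use Storjohann--Labahn-style HNF (\cref{lem:lattice-basis-hnf} and the cited kernel/image algorithm) with reduction modulo the determinant, or we observe directly that a kernel basis can be chosen with entries bounded by $\max_i|Nc_i|$. For the latter, take the vectors $(Nc_j/\gamma)\bfe_i-(Nc_i/\gamma)\bfe_j$ with suitable gcds $\gamma$; these generate a full-rank sublattice of the kernel. We then pass them through \cref{lem:lattice-basis-hnf}, which, applied to these $\Oh(d^2)$ generators, returns the HNF of the lattice they generate in $\poly(d,b_0)$ time.

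\textbf{Sublattice vs.\ full kernel.} The generators above give the whole kernel and not a proper sublattice. The reason is a standard fact: for a primitive integer row $\bfa$, the vectors $a_j\bfe_i-a_i\bfe_j$ generate $\bfa^\perp\cap\Z^{d'}$. We apply this to $\bfa=N\bfc/\gcd$, and it can be proved by induction on $d'$ via Bézout.

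\textbf{Mapping back.} Finally we set $\Lambda=\caL(H_\Gamma\bfk_1,\dots,H_\Gamma\bfk_r)$ and again invoke \cref{lem:lattice-basis-hnf}, with the common denominator $q$, to obtain its HNF. The bit lengths stay $\Oh_d(b_0)$ throughout.
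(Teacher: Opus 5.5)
Your proposal is correct, and it takes a different, more elementary route than the paper. The paper gives no argument of its own for this lemma: it cites the general kernel-and-image algorithm from Lenstra's survey on lattices, a routine for arbitrary lattice homomorphisms that is used as a black box.

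You instead exploit that the target of $f$ is one-dimensional. After pulling back along the HNF basis and clearing denominators, the image is simply $\gcd(Nc_1,\dots,Nc_{d'})\Z/N$. The integer kernel of the primitive row $\bfa=N\bfc/\gamma$ is generated by the $\binom{d'}{2}$ vectors $a_j\bfe_i-a_i\bfe_j$, whose entries are bounded by $\max_i\lvert Nc_i\rvert$. A call to \cref{lem:lattice-basis-hnf} on these generators, pushed forward by $H_\Gamma$ with denominator $q$, then yields the HNF of $\Lambda$ with the required $\Oh_d(b_0)$ bit-length bound.

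What your route buys is a self-contained proof with transparent size control and no need for lattice reduction. What it gives up is generality: it only covers maps to $\R$, which is all the paper uses.

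Three small tightenings:
\begin{enumerate}
\item The generating fact needs no induction. If $\langle\bfa,\bfx\rangle=0$, then
\[
\sum_i x_i\,(a_j\bfe_i-a_i\bfe_j)=a_j\bfx \quad\text{for every } j.
\]
Combining these with B\'ezout coefficients satisfying $\sum_j u_ja_j=1$ gives $\bfx$ itself.
\item Commit to this generator route with the single global gcd $\gamma=\gcd(Nc_1,\dots,Nc_{d'})$, rather than hedging with the unimodular-$U$ alternative, whose entry sizes you never bound.
\item Drop the ``full-rank sublattice'' phrasing. The whole point is that these vectors generate the entire kernel.
\end{enumerate}
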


The Lenstra--Lenstra--Lov\'asz algorithm computes a nearly orthogonal basis of a given lattice:
\begin{lemma}[{\cite{LLL}}]
\label{lem:latticereduct}
Every lattice $L\subseteq \R^d$ with $\rk L=d'$ has a basis
$(\bfb_i)_{i=1}^{d'}$ satisfying
$
 \prod_{i=1}^{d'}\|\bfb_i\|_2\le 2^{(d')^2} \det L
 $.

If $L$ is rational and $b_0=\maxbitlen(L)$, one can compute such a basis
with $\maxbitlen(\bfb_1,\ldots,\bfb_{d'})=\Oh_d(b_0)$
in $\poly(d,b_0)$ time.
\end{lemma}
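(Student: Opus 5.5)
The plan is to run the Lenstra--Lenstra--Lov\'asz basis reduction with parameter $\delta=3/4$ and read off the product bound from the standard properties of an LLL-reduced basis. The bit-length claim then follows from the product bound together with integrality after scaling.

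\textbf{Existence.} Let $(\bfb_i)_{i=1}^{d'}$ be any basis of $L$. Write $(\bfb_i^*)$ for its Gram--Schmidt orthogonalization, taken inside $\spa L$. Then $\det L=\sqrt{\det(B^\top B)}=\prod_i\|\bfb_i^*\|_2$, which holds also when $d'<d$. An LLL-reduced basis is size-reduced, $\lvert\mu_{i,j}\rvert\le 1/2$, and satisfies the Lov\'asz condition $\|\bfb_{i+1}^*\|^2\ge(\delta-\mu_{i+1,i}^2)\|\bfb_i^*\|^2\ge\frac12\|\bfb_i^*\|^2$. These give $\|\bfb_j^*\|^2\le 2^{i-j}\|\bfb_i^*\|^2$ for $j\le i$. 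Hence
\[
\|\bfb_i\|^2=\|\bfb_i^*\|^2+\sum_{j<i}\mu_{i,j}^2\|\bfb_j^*\|^2\le 2^{i-1}\|\bfb_i^*\|^2 .
\]
Taking the product over $i$ gives $\prod_i\|\bfb_i\|\le 2^{d'(d'-1)/4}\det L\le 2^{(d')^2}\det L$. Such a basis exists because the LLL swap/size-reduce process terminates for real lattices: the potential $\prod_i\det(\bfb_1,\dots,\bfb_i)$ drops by a constant factor per swap and is bounded below by the discreteness of $L$.

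\textbf{Algorithm.} Take the HNF representation $H_L=H/q$, where $H\in\Z^{d\times d'}$ has independent columns and $\maxbitlen(q,H)\le b_0$. Run the integral LLL algorithm on the columns of $H$; this works for rank-deficient integer lattices in $\Z^d$ and runs in $\poly(d,b_0)$ time. Then divide the output by $q$. Scaling preserves reducedness, so the output is a reduced basis of $L$ satisfying the product bound.

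\textbf{Main obstacle: the output bit length $\Oh_d(b_0)$.} I expect this step to need the most care, and I would argue as follows. Every nonzero vector of $L\subseteq q^{-1}\Z^d$ has norm at least $1/q$. Also, $\det L\le\prod_j\|H_{:,j}/q\|\le 2^{\Oh_d(b_0)}$. The product bound then gives, for each $i$,
\[
\|\bfb_i\|\le 2^{(d')^2}\det L\cdot q^{d'-1}=2^{\Oh_d(b_0)}.
\]
Since each coordinate of $\bfb_i$ lies in $q^{-1}\Z$ with absolute value $2^{\Oh_d(b_0)}$, its numerator and denominator have $\Oh_d(b_0)$ bits.
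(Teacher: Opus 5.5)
Your proposal is correct and is essentially the standard LLL argument that the paper invokes by citation; the paper gives no separate proof. One small imprecision: $q$ is the lcm of up to $dd'$ reduced denominators of $H_L$, so you only get $\maxbitlen(q,H)=\Oh_d(b_0)$ rather than $\le b_0$, but that weaker bound is all your subsequent running-time and bit-length estimates actually use.
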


Lenstra's algorithm solves integer linear programming (ILP) in fixed dimension.
The algorithm applies when the input polytope is given as a linear optimization oracle \cite[Section~2, Remark~(d)]{Lenstra83}.

\begin{lemma}[Fixed-dimension ILP \cite{Lenstra83}]
\label{lem:lattice-optimization}
Given a rational polytope $K\subseteq \R^d$ with a
linear optimization oracle, a vector $\bfc\in \Q^d$, and a
lattice $L\subseteq \Q^d$, writing $F=K\cap L$,
one can find $\bfx\in F$ that maximizes $\langle \bfc,\bfx\rangle$ (or report $F=\varnothing$) in $b_0^{\Oh_d(1)} \Tquery(K)$ time,
where $b_0=\maxbitlen(K,\bfc,L)$.
\end{lemma}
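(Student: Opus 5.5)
The plan is to follow Lenstra's algorithm, using only the oracle tools from this section, and to reduce optimization to feasibility.

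\emph{Reduction to feasibility and to full dimension.} First I reduce optimization to deciding whether $K\cap L\cap\{\bfx:\langle\bfc,\bfx\rangle\ge\gamma\}$ is nonempty (and returning a point if so).
\begin{itemize}
\item By \cref{cor:polytope-oracles}, adding the halfspace $\langle\bfc,\bfx\rangle\ge\gamma$ keeps a linear optimization oracle with only a $b_0^{\Oh_d(1)}$ overhead.
\item Any optimal lattice point has $\maxbitlen=\Oh_d(b_0)$, by the fact about points of $K\cap(\bfu+L)$ recalled above. Hence the optimal value $\langle\bfc,\bfx\rangle$ lies in a set of rationals whose numerators and denominators are bounded by $2^{\Oh_d(b_0)}$.
\item Consecutive distinct elements of that set differ by at least $2^{-\Oh_d(b_0)}$. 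So binary search over $\gamma$ needs only $\Oh_d(b_0)$ feasibility calls. A final exact rational rounding, or a last call that maximizes lexicographically, pins down the optimum.
\end{itemize}
Next I make the problem full-dimensional.
\begin{itemize}
\item Using \cref{lem:polytope-span}, applied to $K-\bfp_0$ for a vertex $\bfp_0$ obtained from one oracle call, I compute $\dim K$ and the affine hull $\bfp_0+V$ of $K$.
\item Intersecting $L$ with this affine subspace is a linear-algebra problem on lattices, handled by \cref{lem:lattice-kernel-image} and \cref{lem:lattice-basis-hnf}. It yields either the empty set, or a coset $\bfu+L'$ with $L'=L\cap V$.
\item Choosing a basis of $L'$, I change coordinates so that the coset becomes $\Z^{d'}$ and $K$ becomes a full-dimensional polytope in $\R^{d'}$. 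Its oracle is obtained from \cref{lem:constrained-optimization}, since it is an affine image of a constrained version of $K$.
\end{itemize}

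\emph{Feasibility step (flatness recursion).} Now $K\subseteq\R^{d'}$ is full-dimensional and the lattice is $\Z^{d'}$.
\begin{itemize}
\item By \cref{lem:macbeath-center}, I compute a parallelepiped with $\bfc_0+B\subseteq K\subseteq \bfc_0+\beta_d B$, where $B=M[-1,1]^{d'}$.
\item I apply $M^{-1}$, which turns $B$ into a cube and $\Z^{d'}$ into the lattice $M^{-1}\Z^{d'}$. I then LLL-reduce this lattice by \cref{lem:latticereduct}.
\item \textbf{Case 1: all reduced basis vectors are shorter than a threshold $c_d$.} Then a fundamental parallelepiped fits inside the cube. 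Rounding $M^{-1}\bfc_0$ in the reduced basis (Babai-style) gives a lattice point in $\bfc_0+B\subseteq K$, and I return it.
\item \textbf{Case 2: some reduced basis vector is long.} The LLL product bound $\prod_i\|\bfb_i\|_2\le 2^{(d')^2}\det L$ then yields a dual lattice vector $\bfw$ whose width over $\bfc_0+\beta_d B$ is $\Oh_d(1)$. Concretely, $\bfw$ is the dual vector to the longest reduced vector, projected orthogonally to the others.
\item In Case 2, every lattice point of $K$ lies on one of $\Oh_d(1)$ hyperplanes $\langle\bfw,\bfx\rangle=k$ with $k\in\Z$. I recurse on each slice $K\cap\{\langle\bfw,\bfx\rangle=k\}$. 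Each slice has an oracle by \cref{cor:polytope-oracles}, lives in dimension at most $d'-1$, and is made full-dimensional again by the previous paragraph.
\end{itemize}
The recursion tree has $\Oh_d(1)^{d}$ leaves. Each node makes $b_0^{\Oh_d(1)}$ oracle calls (from \cref{lem:macbeath-center}) plus $\poly(d,b_0)$ lattice work, giving the claimed total $b_0^{\Oh_d(1)}\Tquery(K)$.

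\emph{Main obstacle: bit lengths under recursion.} The step I expect to be hardest is controlling bit lengths through the recursion, so that every oracle and lattice call stays within $b_0^{\Oh_d(1)}$ and the polynomial oracle equivalences do not blow up.
\begin{itemize}
\item Each level applies a rational coordinate change and adds one hyperplane with entries of bit length $\Oh_d(b_0)$. By Cramer's rule and the bit bounds in \cref{lem:constrained-optimization,lem:macbeath-center,lem:latticereduct}, $\maxbitlen$ grows by at most a factor $\Oh_d(1)$ per level.
\item Since there are at most $d$ levels, the total growth is $\Oh_d(b_0)$.
\end{itemize}
I would make this rigorous by an invariant: at each node of the recursion, the polytope, the lattice, and the accumulated affine map all have $\maxbitlen=\Oh_{d}(b_0)$, with the constant depending only on the depth of the node.
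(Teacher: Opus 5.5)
Your proposal is correct, and its outer layer is the paper's own argument. The paper also obtains the optimization form by binary search on $\langle\bfc,\bfx\rangle$ using \cref{lem:constrained-optimization}. It also handles a general rational lattice by restricting to $\spa(L)$ and applying a linear map so that $L$ becomes $\Z^{d'}$.

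Where you genuinely differ is the feasibility core. The paper does not prove it; it cites Lenstra's Remark~(d), which says his algorithm runs with $K$ given only by a linear optimization oracle. You instead re-derive Lenstra's flatness recursion from tools already in this section. \cref{lem:macbeath-center} supplies the rounding $\bfc_0+B\subseteq K\subseteq\bfc_0+\beta_d B$. Then \cref{lem:latticereduct} together with Hadamard's inequality gives the dichotomy:
\begin{itemize}
\item either all reduced vectors are shorter than $c_d$, and Babai rounding lands in $\bfc_0+B$;
\item or some reduced vector has length at least $c_d$. Then the lattice hyperplanes orthogonal to the corresponding dual vector $\bfw$ are at least $2^{-(d')^2}c_d$ apart, and only $\Oh_d(1)$ of them meet $K$.
\end{itemize}
The citation route takes two lines and inherits Lenstra's handling of lower-dimensional bodies and bit lengths. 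Your route is self-contained within the paper's oracle toolkit and shows that nothing beyond parallelepiped rounding and LLL is needed. The cost is the bit-length bookkeeping you outline. That bookkeeping does go through, since the depth is at most $d$ and each level inflates $\maxbitlen$ by only an $\Oh_d(1)$ factor.

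Three small repairs are needed.
\begin{itemize}
\item You claim that after reparametrizing by a basis of $L'=L\cap V$, the body ``becomes full-dimensional in $\R^{d'}$''. This fails when $\rk L'<\dim V$: then $K\cap(\bfu+\spa L')$ can be lower-dimensional, even a single point, so \cref{lem:macbeath-center} does not yet apply. You must recompute the affine hull and re-intersect with the lattice coset. Repeat until the body is full-dimensional in the span of the lattice; this takes at most $d$ rounds, since the dimension strictly drops each time.
\item The halfspace $\langle\bfc,\bfx\rangle\ge\gamma$ and the slicing hyperplanes are unbounded. Their oracles should therefore come directly from \cref{lem:constrained-optimization}, not from \cref{cor:polytope-oracles}, which is stated for polytopes.
\item Finding the particular solution $\bfu$ of the linear Diophantine system needs a routine extended-gcd/HNF step in addition to \cref{lem:lattice-kernel-image}.
\end{itemize}
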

The original result in \cite{Lenstra83} was stated only for deciding whether $K\cap \Z^d = \varnothing$. The form stated in \cref{lem:lattice-optimization} follows from restricting to $\spa(L)$ and applying a linear map, and binary search 
for optimization, using \cref{lem:constrained-optimization}.

As a simple corollary, we can enumerate the values of a linear map on lattice
points in a polytope.

\begin{lemma}
\label{lem:quotient-lattice-enumeration}
Given a rational lattice $\Gamma\subseteq \Q^d$,
a rational linear map $f\colon \R^d\to \R$,
 and $A\in \Q^{m\times d}, \bfa \in \Q^{m}$ which define a polytope $K=\{\bfx\in \R^d:A \bfx\le \bfa\}$,  one can enumerate $f(K\cap \Gamma)$ in $(\lvert f(K\cap \Gamma)\rvert+1) m b_0^{\Oh_d(1)}$ time where $b_0=\maxbitlen(\Gamma,f,A,\bfa)$.
\end{lemma}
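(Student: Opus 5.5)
We reduce the statement to repeated calls of Lenstra's algorithm (\cref{lem:lattice-optimization}), combined with the lattice kernel/image computation (\cref{lem:lattice-kernel-image}). The plan is to sweep the value set $f(K\cap\Gamma)$ from left to right, spending a constant number of ILP calls per reported value plus one final call that certifies there is nothing left.

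First, we handle the degenerate case. Using \cref{lem:lattice-kernel-image}, compute $g\in\Q_{\ge0}$ with $f(\Gamma)=g\Z$. If $g=0$, then $f$ vanishes on $\Gamma$. We make one call to \cref{lem:lattice-optimization} to decide whether $K\cap\Gamma=\varnothing$, and output either $\varnothing$ or $\{0\}$. Otherwise every value in $f(K\cap\Gamma)$ lies in $g\Z$. By \cref{cor:polytope-oracles} (first part), $K$ has a linear optimization oracle with $\Tquery(K)=m\,b_0^{\Oh_d(1)}$ and $\maxbitlen(K)=\Oh_d(b_0)$. Writing $f(\bfx)=\langle\bfc_f,\bfx\rangle$, we call \cref{lem:lattice-optimization} to find $v_1=\min f(K\cap\Gamma)$ by maximizing $\langle-\bfc_f,\bfx\rangle$, or to report emptiness.

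Next comes the sweep. Given the current value $v$ already output, set $K_v=K\cap\{\bfx: f(\bfx)\ge v+g\}$. This adds one rational inequality, so it is again a polytope of the form $\{A'\bfx\le\bfa'\}$ with $m+1$ rows. Apply \cref{lem:lattice-optimization} to minimize $f$ over $K_v\cap\Gamma$. The result is exactly the next element of $f(K\cap\Gamma)$ larger than $v$, because all values are multiples of $g$. Output it and repeat, stopping when the call reports emptiness. This yields each of the $\lvert f(K\cap\Gamma)\rvert$ values once, using $\lvert f(K\cap\Gamma)\rvert+1$ ILP calls, each costing $(m+1)\,b^{\Oh_d(1)}$ time. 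Here $b$ is the bit length of the augmented instance.

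The only point needing care is the bit-length bound, which must hold uniformly along the sweep. Each threshold $v+g$ equals $f(\bfx)+g$ for some $\bfx\in K\cap\Gamma$. By the simple fact recorded in \cref{par:rational-arithmetic}, such $\bfx$ satisfies $\maxbitlen(\bfx)=\Oh_d(b_0)$, and $\maxbitlen(g)=\Oh_d(b_0)$ by \cref{lem:lattice-kernel-image}. The new row $(-\bfc_f,-(v+g))$ therefore has bit length $\Oh_d(b_0)$, so every call runs in $m\,b_0^{\Oh_d(1)}$ time. Summing over the calls gives the claimed $(\lvert f(K\cap\Gamma)\rvert+1)\,m\,b_0^{\Oh_d(1)}$ bound.
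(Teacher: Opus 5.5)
Your proposal is correct and follows the paper's proof, which also sweeps $f(K\cap\Gamma)$ upward from its minimum by repeatedly calling \cref{lem:lattice-optimization} on $K\cap\{\bfx : f(\bfx)\ge a+1\}$, using $\lvert f(K\cap\Gamma)\rvert+1$ calls in total. The differences are cosmetic: the paper clears denominators so that $f(\Gamma)\subseteq\Z$ and steps by $1$ (so it needs no special case), whereas you step by the generator $g$ of $f(\Gamma)=g\Z$ from \cref{lem:lattice-kernel-image}, which forces your separate $g=0$ case; you also write out the bit-length bound that the paper treats as routine.
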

\begin{proof}
By clearing denominators in $\Gamma$ and $f$, we may assume without loss of generality that
$f(\Gamma) \subseteq \Z$. 
Using the linear optimization oracle for $K$ by \cref{cor:polytope-oracles},
we can minimize $f$ over $K\cap \Gamma$ by \cref{lem:lattice-optimization}.
After returning a value $a$, minimize $f$ over
$\{\bfx:f(\bfx)\ge a+1\}\cap K\cap \Gamma$,
and repeat until we reach an empty set.
Since we perform $\lvert f(K\cap \Gamma)\rvert+1$ optimizations, the claimed time complexity follows from 
\cref{cor:polytope-oracles} and
\cref{lem:lattice-optimization}.
\end{proof}

\begin{lemma}
\label{lem:lattice-span}
Given a linear optimization oracle for a rational polytope $K\subseteq \R^d$ containing $\zero$,
and a lattice $\Gamma\subseteq \Q^d$, one can compute
$d'=\dim \spa(K\cap \Gamma)$ and linearly independent vectors
$\bfp_1,\ldots,\bfp_{d'}\in K\cap \Gamma$ in $b_0^{\Oh_d(1)} \Tquery(K)$ time,
where $b_0=\maxbitlen(K,\Gamma)$.
\end{lemma}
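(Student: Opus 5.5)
We will follow the greedy scheme of \cref{lem:polytope-span}, replacing vertex queries by lattice-point optimization via \cref{lem:lattice-optimization}. We maintain linearly independent vectors $\bfp_1,\dots,\bfp_i\in K\cap\Gamma$, starting with $i=0$. Write $V_i=\spa(\bfp_1,\dots,\bfp_i)$ and compute a basis $\bfb_1,\dots,\bfb_{d-i}$ of $V_i^\perp$ by Gaussian elimination over $\Q$. For each $j\le d-i$ and each sign, call \cref{lem:lattice-optimization} on $(K,\pm\bfb_j,\Gamma)$. This returns some $\bfx\in K\cap\Gamma$ maximizing $\langle\pm\bfb_j,\bfx\rangle$; the set is nonempty since $\zero\in K\cap\Gamma$. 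If any returned maximum is positive, then $\bfx\notin V_i$. We set $\bfp_{i+1}=\bfx$ and continue. Otherwise every point of $K\cap\Gamma$ has $\langle\bfb_j,\bfx\rangle=0$ for all $j$, so $K\cap\Gamma\subseteq V_i$. Together with $\bfp_1,\dots,\bfp_i\in K\cap\Gamma$, this gives $\spa(K\cap\Gamma)=V_i$. We then output $d'=i$ and $\bfp_1,\dots,\bfp_i$.

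Correctness is immediate from these two cases. Each new $\bfp_{i+1}$ lies outside $V_i$, so the vectors stay linearly independent, and the loop runs at most $d+1$ times.

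For the running time, there are $\Oh(d^2)$ calls to \cref{lem:lattice-optimization}. The main point to check is bit lengths. The query vectors $\bfb_j$ come from Gaussian elimination on $\bfp_1,\dots,\bfp_i$, so $\maxbitlen(\bfb_j)=\Oh_d(\max_k\maxbitlen(\bfp_k))$. Each $\bfp_k$ is an output of \cref{lem:lattice-optimization}, and we need its bit length bounded by $\Oh_d(b_0)$, where $b_0=\maxbitlen(K,\Gamma)$, and not by something depending on the earlier query vectors. Otherwise the bit lengths could compound across iterations, which is the potential obstacle.

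We handle this with the standing fact recorded at the start of \cref{sec:geometry-oracles}. A point of $K\cap(\bfu+L)$ returned as an optimizer can be taken with $\maxbitlen=\Oh_d(\maxbitlen(K,L))$. If this is not literally guaranteed, we first replace the returned $\bfx$ by a lattice point of bit length $\Oh_d(b_0)$ in the face $K\cap\Gamma\cap\{\langle\pm\bfb_j,\cdot\rangle\ge\langle\pm\bfb_j,\bfx\rangle\}$. This can be done by a further call to \cref{lem:lattice-optimization} with an objective of bounded bit length, for example a lexicographic tie-break. Since $i$ is at most $d$, even a crude accumulation of bit lengths stays $\Oh_d(b_0)$. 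Each call therefore costs $b_0^{\Oh_d(1)}\Tquery(K)$, and the total running time is $b_0^{\Oh_d(1)}\Tquery(K)$.
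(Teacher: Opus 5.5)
Your proof is correct and follows the route the paper intends: the paper omits this proof, saying only that it mirrors \cref{lem:polytope-span} with vertex queries replaced by lattice-point optimization via \cref{lem:lattice-optimization}. Your bit-length fallback is unnecessary, because every point of $K\cap\Gamma$ automatically has $\maxbitlen=\Oh_d(b_0)$: its coordinates are bounded by $K$ and its denominators are bounded by $\Gamma$, so bit lengths cannot accumulate across iterations.
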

The proof of
\cref{lem:lattice-span}
is similar to that of \cref{lem:polytope-span}, using \cref{lem:lattice-optimization}. We omit the proof here.

\subsection{Reparameterizing lattices}

We record the following standard discrete John-type lemma based on lattice reduction; see e.g., \cite[Section~3.5]{taovu06}. 

\begin{lemma}
\label{lem:latticeboxinandout}
Let $K\subseteq \R^d$ be an origin-symmetric convex body and
$\Gamma\subseteq \R^d$ a lattice with
$\spa(K\cap \Gamma)=\R^d$.
There are a basis $\bfb_1,\ldots,\bfb_d$ of $\Gamma$ and positive integers
$m_1,\ldots,m_d$ such that
\[
 K\subseteq \left\{\sum_{i=1}^{d} t_i \bfb_i:\lvert t_i\rvert\le m_i\right\}
 \subseteq \kappa_d K,
\]
where $\kappa_d\ge 1$ depends only on $d$.

If $K$ is a rational polytope given by a linear optimization oracle and
$\Gamma$ is rational, one can compute $\bfb_i,m_i$ in
$b_0^{\Oh_d(1)} \Tquery(K)$ time, where $b_0=\maxbitlen(K,\Gamma)$.
\end{lemma}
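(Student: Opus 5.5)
The plan is to reduce to a John-type position for $K$ and then reduce the lattice. First I handle the existential statement and track constants; the algorithmic part is addressed at the end.

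\emph{Step 1: approximate $K$ by a parallelepiped.} By \cref{lem:macbeath-center} applied to the origin-symmetric body $K$, there is an origin-symmetric parallelepiped $B_0$ and a center $\bfc$ with $\bfc+B_0\subseteq K\subseteq \bfc+\beta_d B_0$. By symmetry, $-\bfc+B_0\subseteq K$ as well. Averaging the two containments gives $B_0\subseteq K$, and also $K\subseteq 2\beta_d B_0$ since $\bfc\in K$. So up to a factor $\Oh_d(1)$ I may replace $K$ by $B_0$. Applying the linear map sending $B_0$ to the cube $[-1,1]^d$, I may then assume $[-1,1]^d\subseteq K\subseteq c_d[-1,1]^d$. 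The lattice $\Gamma$ transforms accordingly, and the hypothesis $\spa(K\cap\Gamma)=\R^d$ still holds, possibly after enlarging by a constant factor.

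\emph{Step 2: lattice reduction with respect to this near-Euclidean norm.} Take the successive minima $\lambda_1\le\dots\le\lambda_d$ of $\Gamma$ with respect to $K$. The hypothesis gives $\lambda_d\le 1$. Choose linearly independent $\bfv_i\in\lambda_iK\cap\Gamma$. The standard step (Tao--Vu, Lemma 3.36 style, or Mahler) converts these into a genuine basis $\bfb_1,\dots,\bfb_d$ of $\Gamma$ with $\bfb_i\in \Oh_d(\lambda_i)K$. Moreover, the coordinate functionals of this basis satisfy $|t_i(\bfx)|\le \Oh_d(1)/\lambda_i$ for $\bfx\in K$. This is the key estimate.

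I would derive it via Minkowski's second theorem, $\prod\lambda_i\,\vol(K)\asymp_d\det\Gamma$, together with Cramer's rule. The volume of $\conv(\pm\bfb_1,\dots,\widehat{\bfb_i},\dots,\pm\bfb_d,\pm\bfx)$ is at most $\vol(\Oh_d(1)K)$ when $\bfx\in K$, and it is $\asymp_d |t_i(\bfx)|\det\Gamma$. Comparing with $\prod_{j\ne i}\lambda_j$ gives $|t_i(\bfx)|\ll_d 1/\lambda_i$.

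\emph{Step 3: define the box.} Set $m_i=\lceil C_d/\lambda_i\rceil\ge 1$. Then the first containment holds by Step 2. For the second, any point with $|t_i|\le m_i$ satisfies $\sum t_i\bfb_i\in\sum m_i\,\Oh_d(\lambda_i)K\subseteq\kappa_dK$, using $m_i\lambda_i\le C_d+\lambda_i\le C_d+1$. Getting the coordinate bound in Step 2 is the main obstacle, since the basis from the successive minima is not automatically in the right position. I would use the Cramer/volume argument above, and if that proves messy, fall back to using LLL (\cref{lem:latticereduct}) in the Euclidean norm after Step 1.

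\emph{Algorithmic version.} The Euclidean LLL route is what I would implement. After the transformation of Step 1, $K$ is comparable to the unit cube, hence to the Euclidean ball up to $\sqrt d$. LLL gives a basis with $\prod\|\bfb_i\|\le2^{d^2}\det\Gamma$, and the same Cramer argument bounds $|t_i(\bfx)|\le\prod_{j\ne i}\|\bfb_j\|/\det\Gamma\le 2^{d^2}/\|\bfb_i\|$. I then set $m_i=\lceil 2^{d^2}\sqrt d/\|\bfb_i\|\rceil$. For the second containment I need $\|\bfb_i\|\ll_d 1$; this follows from near-orthogonality and the fact that $K\cap\Gamma$ spans (so all successive minima are at most $1$, and LLL vectors are within $2^{\Oh(d)}$ of them). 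All steps use \cref{lem:macbeath-center}, HNF/LLL and rational arithmetic in $b_0^{\Oh_d(1)}\Tquery(K)$ time. Norms are only needed up to constant factors, so I would use squared norms or rational approximations to avoid square roots.
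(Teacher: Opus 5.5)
Your \emph{Algorithmic version} is essentially the paper's proof. Since it also yields existence, it suffices by itself.

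The paper normalizes with \cref{lem:bounded-coordinate-selection} directly, rather than via \cref{lem:macbeath-center} and symmetrization. From a barycentric spanner $\bfp_0,\dots,\bfp_d$ it discards the vertex carrying the largest barycentric coefficient of $\zero$. The map $T\colon\bfp_i\mapsto\bfe_i$ then gives $[-1/d,1/d]^d\subseteq TK\subseteq[-4,4]^d$. Next it takes an LLL basis $\bfv_i$ of $T\Gamma$ and uses Cramer's rule to get $\lvert t_i\rvert\,\|\bfv_i\|_2\le 2^{d^2}\|\bfy\|_2\le A_d$ for $\bfy\in TK$, and sets $m_i=\lceil A_d/\|\bfv_i\|_2\rceil$.

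It bounds $\|\bfv_i\|_2\le A_d$ exactly as your ``near-orthogonality plus spanning'' remark suggests. Some vector of $TK\cap T\Gamma$ has a nonzero integer $i$-th coordinate, and plugging it in gives $1\le A_d/\|\bfv_i\|_2$. So you do not need the LLL comparison with successive minima, which \cref{lem:latticereduct} does not provide; the product bound suffices.

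Two minor slips. The Cramer bound is $\lvert t_i(\bfx)\rvert\le\|\bfx\|_2\prod_{j\ne i}\|\bfb_j\|_2/\det\Gamma$, so the factor $\|\bfx\|_2$ is missing. Consequently $m_i$ must carry the factor $c_d$ from Step 1, not just $\sqrt d$.

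Your successive-minima argument for existence is a genuinely different, coordinate-free route; it does not even need Step 1. However, the volume comparison as written is too weak. The cross-polytope of the unscaled $\pm\bfb_j$ ($j\ne i$) and $\pm\bfx$ lies in $\Oh_d(1)K$. Comparing its volume $\frac{2^d}{d!}\lvert t_i\rvert\det\Gamma$ with $\vol(K)\asymp_d\det\Gamma/\prod_j\lambda_j$ only gives $\lvert t_i\rvert\ll_d 1/(\lambda_1\cdots\lambda_d)$. That is far from $1/\lambda_i$ when several minima are small.

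You need to rescale first. If $\bfb_j\in C\lambda_jK$, then $\bfb_j/(C\lambda_j)$ for $j\ne i$, together with $\bfx$, all lie in $K$, so their cross-polytope lies in $K$. Minkowski's second theorem then gives $\frac{2^d}{d!}\cdot\frac{\lvert t_i\rvert\det\Gamma}{\prod_{j\ne i}C\lambda_j}\le\vol(K)\le\frac{2^d\det\Gamma}{\prod_j\lambda_j}$, i.e.\ $\lvert t_i\rvert\le d!\,C^{d-1}/\lambda_i$. After this fix, Step 3 goes through.

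With the fix, this route is cleaner and gives a better dependence of $\kappa_d$ on $d$. But it relies on Mahler's basis lemma and Minkowski's second theorem, and it is not directly algorithmic. The paper's LLL route gets both existence and the $b_0^{\Oh_d(1)}\Tquery(K)$ algorithm from one argument, using only the product bound of \cref{lem:latticereduct}.
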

\begin{proof}
Apply \cref{lem:bounded-coordinate-selection} to $K$ to obtain
$\bfp_0,\ldots,\bfp_d$, and write
$\zero=\sum_{i=0}^d \beta_i \bfp_i$ with $\sum_{i=0}^d \beta_i=1$.
 Without loss of generality, assume $\lvert\beta_0\rvert=\max_{0\le i\le d}\lvert\beta_i\rvert \neq 0$, so $\bfp_0 = -\sum_{i=1}^d \frac{\beta_i}{\beta_0} \bfp_i$.
Then, since $(\bfp_i-\bfp_0)_{i=1}^d$ form a linear basis, $(\bfp_i)_{i=1}^d $ also form a linear basis.
Let $T\colon \R^d\to \R^d$ send $\bfp_1,\ldots,\bfp_d$ to the standard basis.

For $\bfx \in K$,
writing $\bfx=\sum_{i=0}^d a_i \bfp_i$, where $\sum_{i=0}^d a_i=1$ and $\lvert a_i\rvert\le 2$ (by \cref{lem:bounded-coordinate-selection}), we have
\[
 \bfx=\sum_{i=1}^d\left(a_i-a_0 \frac{\beta_i}{\beta_0}\right) \bfp_i,
 \text{ where } 
 \left\lvert a_i-a_0 \frac{\beta_i}{\beta_0}\right\rvert\le 4.
\]
Thus, 
\[TK\subseteq [-4,4]^d.\]
On the other hand,
for any $\bfy \in [-1/d,1/d]^d$,
symmetry and convexity of $K$ give $dy_i \bfp_i\in K$ for all $1\le i\le d$, so $\sum_{i=1}^d y_i \bfp_i \in K$ holds by convexity of $K$. This proves \[[-1/d,1/d]^d\subseteq TK.\]

Let 
$\bfv_1,\ldots,\bfv_d$ be the LLL basis (\cref{lem:latticereduct}) of
$T \Gamma$.
For all $\bfy \in TK$, writing $\bfy = \sum_{j=1}^{d} t_j \bfv_j$,
 Cramer's rule gives for all $i\in [d]$ that
\[ \lvert t_i\rvert  =\frac{\lvert\det(\bfv_1,\dots,\bfv_{i-1}, \bfy ,\bfv_{i+1},\dots,\bfv_d)\rvert}{\lvert\det(\bfv_1,\dots,\bfv_d)\rvert} \le \frac{\left\|\bfy \right\|_2 \prod_{j\in [d], j\neq i}\|\bfv_j\|_2}{\det(T \Gamma)},\]
so \cref{lem:latticereduct} then gives
   \begin{equation}
 \lvert t_i\rvert\,\|\bfv_i\|_2
 \le  2^{d^2}\left\|\bfy \right\|_2 \le 2^{d^2} 4 \sqrt{d} \eqqcolon A_d,\label{eqn:tempad}
   \end{equation}
where the last inequality uses $\bfy \in TK\subseteq [-4,4]^d$.

Return $\bfb_i=T^{-1} \bfv_i$ and
$m_i=\lceil A_d/\|\bfv_i\|_2\rceil$.
Then, by the inequality \eqref{eqn:tempad} for all $\bfy \in TK$, the first claimed containment $K\subseteq \{\sum_{i=1}^d t_i \bfb_i : \lvert t_i\rvert\le m_i\}$ holds. It remains to prove the second containment.

We claim that $\|\bfv_i\|_2 \le A_d$ for all $i\in [d]$. Indeed, since $\spa(K\cap \Gamma)=\R^d$,   $T \Gamma$ contains $d$ linearly independent vectors $Y = \{\bfy_1,\dots,\bfy_d\}$ in $TK$. By linear independence, there exists $\bfy \in Y$ so that writing 
$\bfy = \sum_{j=1}^{d} t_j \bfv_j$ gives a nonzero $t_i$, and $t_i$ is integer due to $\bfy \in Y\subseteq T \Gamma$. 
So the inequality~\eqref{eqn:tempad} gives 
$\|\bfv_i\|_2\le A_d$ as claimed.

Therefore, 
$m_i\|\bfv_i\|_2 = \lceil A_d/\|\bfv_i\|_2 \rceil \cdot \|\bfv_i\|_2\le 2A_d$ for all $i\in [d]$. Hence, 
\[\left \{\sum_{i=1}^d t_i \bfv_i : \lvert t_i\rvert\le m_i\right  \} \subseteq [-2dA_d,2dA_d]^d \subseteq (2d^2 A_d) TK.\]
Applying $T^{-1}$ on both sides then proves the second claimed containment, for $\kappa_d\coloneqq 2d^2 A_d$.

The claimed running time follows from
\cref{lem:bounded-coordinate-selection,lem:latticereduct}.
\end{proof}

The following lemma approximates an improper progression $f(K\cap \Gamma)$ by a proper one $g(\Phi K \cap \Phi \Gamma ), f = g \circ \Phi$, and approximates $\Phi K$ by a box.
The proof is similar to the rank-reduction steps in \cite[Section 4]{taovu08john}.

\begin{lemma}
\label{lem:local-properization}
Given an origin-symmetric rational polytope $K\subseteq \R^d$ with a linear optimization oracle, a lattice $\Gamma\subseteq \Q^d$ such that $\spa(K\cap \Gamma)=\R^d$, and a nonzero rational linear map $f\colon \R^d\to \R$,
 one can compute $1\le d'\le d$, rational linear maps
$\Phi\colon \R^d\to \R^{d'}$ and $g\colon \R^{d'}\to \R$, and a box
$B=\prod_{i=1}^{d'}[-m_i,m_i]$, $m_i\in \Pos$,
such that
\begin{itemize}
      \item $f=g \circ \Phi$,
   \item $\Phi \Gamma=\Z^{d'}$,
         \item $\Phi K\subseteq B$,
            \item $\lvert B\cap \Z^{d'}\rvert\le C_d\lvert f(K\cap \Gamma)\rvert$, where $C_d$ depends only on $d$,
               \item and $g$ is injective on $4B\cap \Z^{d'}$,
\end{itemize}
 with $\maxbitlen(\Phi,g,B)=\Oh_d(b_0)$ in $b_0^{\Oh_d(1)} \Tquery(K)$ time, where $b_0=\maxbitlen(K,\Gamma,f)$.
\end{lemma}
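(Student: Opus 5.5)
We will prove \cref{lem:local-properization} by iterated rank reduction. At each stage we maintain a dimension $d'$, a lattice surjection $\Phi\colon \R^d\to\R^{d'}$ with $\Phi\Gamma=\Z^{d'}$, and a linear $g$ with $f=g\circ\Phi$. Initially $d'=d$: we apply \cref{lem:latticeboxinandout} to $(K,\Gamma)$ to get a basis $\bfb_i$ of $\Gamma$ and integers $m_i$, and let $\Phi$ send $\bfb_i$ to $\bfe_i$. Then $\Phi\Gamma=\Z^d$ and $\Phi K\subseteq B=\prod[-m_i,m_i]\subseteq \kappa_d\Phi K$. This gives every property except injectivity of $g=f\circ\Phi^{-1}$ on $4B\cap\Z^{d'}$.

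For the cardinality bound, we will use $B\subseteq\kappa_d\Phi K$. Together with \cref{lem:lattice-quotient-dilation} (applied with $\Lambda=\ker g\cap\Z^{d'}$, i.e.\ counting values of $g$), this yields
\[
\lvert g(sB\cap\Z^{d'})\rvert\le (2s\kappa_d+1)^{d'}\lvert f(K\cap\Gamma)\rvert.
\]
So once $g$ is injective on $4B\cap\Z^{d'}$, we get $\lvert B\cap\Z^{d'}\rvert=\lvert g(B\cap\Z^{d'})\rvert\le C_d\lvert f(K\cap\Gamma)\rvert$. Whenever we change $\Phi$, we must keep the analogous relation $\Phi K\subseteq B\subseteq c\,\Phi K$, with $c$ depending only on $d$.

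The reduction step handles the case where $g$ is not injective on $4B\cap\Z^{d'}$. In that case there is a nonzero $\bfv\in 8B\cap\Z^{d'}\cap\ker g$. We find it by testing, via \cref{lem:lattice-optimization}, whether $8B\cap\ker g\cap\Z^{d'}$ contains a point other than $\zero$; for example, we optimize $\pm\bfe_j$ over this set after reducing to a primitive vector. We then replace $\bfv$ by the primitive vector in its direction, which still lies in $8B$. Next we quotient: choose a unimodular map $U\colon\Z^{d'}\to\Z^{d'-1}\oplus\Z\bfv$ and project along $\bfv$. This gives $\Psi\colon\R^{d'}\to\R^{d'-1}$ with $\Psi\Z^{d'}=\Z^{d'-1}$ and $\bfv\in\ker\Psi$, and $g$ factors as $g=g'\circ\Psi$ because $g(\bfv)=0$. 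We set $\Phi\gets\Psi\circ\Phi$. The image $\Psi(\Phi K)$ is origin-symmetric, and its oracle comes from \cref{lem:constrained-optimization}. It still spans, since $\Psi(\Phi(K\cap\Gamma))$ spans $\R^{d'-1}$. We then rebuild a box around it by applying \cref{lem:latticeboxinandout} again to $\Psi\Phi K$ with lattice $\Z^{d'-1}$, composing with the resulting unimodular change of basis. Since the rank drops each time, there are at most $d$ rounds. The process terminates with $d'\ge1$, because $f\ne0$ and $1$-dimensional injectivity fails only when $g=0$.

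The main obstacle is keeping the box comparable to $\Phi K$ up to a factor depending only on $d$. The new box comes from \cref{lem:latticeboxinandout} applied to the projected body $\Psi\Phi K$, so we get $\Phi K\subseteq B\subseteq\kappa_d\Phi K$ afresh at every round. The constants therefore do not compound, and the counting argument above needs only this final relation. A second point is that the count uses the \emph{final} $g$ and $B$ with $f(K\cap\Gamma)=g(\Phi K\cap\Z^{d'})$. This holds because $\Phi(K\cap\Gamma)\subseteq\Phi K\cap\Z^{d'}$ and $\Phi\Gamma=\Z^{d'}$. The inclusion can be strict, so we instead count via \cref{lem:lattice-quotient-dilation} with $K$ replaced by $\Phi K$ and $\Gamma$ by $\Z^{d'}$.

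We must ensure $\lvert g(\Phi K\cap\Z^{d'})\rvert\ll_d\lvert f(K\cap\Gamma)\rvert$. We get this by noting that each fiber of $\Phi|_\Gamma$ is a coset of $\ker\Phi\cap\Gamma$. A point $\Phi\bfx\in\Phi K\cap\Z^{d'}$ lifts to $\bfx+\bfw\in\Gamma$, and $f$ is unchanged along the kernel. Bounding the gap between $\Phi K\cap\Z^{d'}$ and $\Phi(K\cap\Gamma)$ then reduces to the fact that the killed vectors $\bfv$ lie in $8B\subseteq 8\kappa_d\Phi K$. We absorb this discrepancy by a constant dilation of $K$ and use \cref{lem:lattice-quotient-dilation} once more. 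The bit-length bounds and running time follow from the $\Oh_d(1)$ calls to \cref{lem:latticeboxinandout,lem:lattice-optimization,lem:lattice-kernel-image}.
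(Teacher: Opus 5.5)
Your proposal is correct, but it takes a different route from the paper.

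\textbf{Your route.} You run an adaptive, Tao--Vu-style rank reduction. You box $\Phi K$ with \cref{lem:latticeboxinandout} and test injectivity of $g$ on $4B\cap\Z^{d'}$ by an ILP. If the test fails, you kill one primitive relation $\bfv\in 8B\cap\ker g\cap\Z^{d'}$, re-box, and repeat, for at most $d-1$ rounds.

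\textbf{The paper's route.} The paper does a single quotient. It computes $\Lambda=\Gamma\cap\ker f$ and the nested spans $V_j=\spa(\Lambda\cap a^jK)$ for $0\le j\le d$. By pigeonhole it finds a scale $q=a^{j^*}$ with $\spa(\Lambda\cap qK)=\spa(\Lambda\cap aqK)$. It then quotients once by $V=V_{j^*}$ and applies \cref{lem:latticeboxinandout} once. Because $a>8\kappa+2d$, any two lifted points with equal $f$-value differ by an element of $\Lambda\cap aqK\subseteq V$, so injectivity follows in one line. The lift only costs the fundamental parallelepiped of $\Lambda_0$, which lies in $dqK$.

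\textbf{What each buys.} The paper's scale-stabilization trick avoids searching for collisions and avoids repeated re-boxing and oracle composition. Your version is more direct algorithmically, and its $d$ rounds still fit in $b_0^{\Oh_d(1)}\Tquery(K)$.

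\textbf{What your version still needs.} The price is a multi-stage lifting argument for the count, which you only sketch ("absorb this discrepancy by a constant dilation"). The clean way is to lift one stage at a time:
\begin{itemize}
\item Let $\bfv_j\in 8B_j\subseteq 8\kappa\,\Phi_jK$ be the vector killed at stage $j$, and $\Psi_j$ the projection along it.
\item Since $\ker\Psi_j\cap\Z^{d_j}=\Z\bfv_j$, every $\bfz\in c\,\Phi_{j+1}K\cap\Z^{d_{j+1}}$ has a lattice preimage in $\bfy+[0,1)\bfv_j\subseteq(c+8\kappa)\Phi_jK$.
\item That preimage has the same $g$-value.
\item Iterating from $c=\kappa$ down to stage $0$, where $\Phi_0$ is a lattice isomorphism $\Gamma\to\Z^d$, gives $g(B\cap\Z^{d'})\subseteq f(9d\kappa K\cap\Gamma)$.
\item \cref{lem:lattice-quotient-dilation} then finishes the count.
\end{itemize}
So the box-versus-body constant does not compound, but the lifting constant does, additively over at most $d$ rounds. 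Also, the identity $f(K\cap\Gamma)=g(\Phi K\cap\Z^{d'})$ you state is only the inclusion $\subseteq$ after the first round, as you yourself note.
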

\begin{proof}
Compute $\Lambda=\Gamma\cap \ker f$ by
\cref{lem:lattice-kernel-image}, with $\rk \Lambda=d-1$.
Choose an integer $a =\Oh_d(1)$ so that
$a>8 \kappa_{i}+2d$ for all $1\le i\le d$, where $\kappa_{i}$ is as in  
\cref{lem:latticeboxinandout}.
Use \cref{lem:lattice-span} to compute
\[
 V_j=\spa(\Lambda\cap a^j K) \text{ for } 0\le j\le d.
\]
These nested spaces have dimension at most $d-1$, so
$V_{j^*}=V_{j^*+1}$ for some $0\le j^*\le d-1$.
Denote $q=a^{j^*} = \Oh_d(1)$ and $V=V_{j^*}$.  
Use \cref{lem:lattice-span} to compute linearly independent $\bflambda_1,\ldots,\bflambda_{\dim V}
\in \Lambda\cap qK$, and let $\Lambda_0 = \caL(\bflambda_1,\ldots,\bflambda_{\dim V})$. Then,
\[V = \spa(\Lambda_0) = \spa(\Lambda\cap aqK),\]
 and
$\Lambda_0$ has a fundamental parallelepiped in $(\dim V) qK\subseteq dqK$.

Define $d'=d-\dim V$, and choose a rational surjective linear map $\Phi_0\colon \R^d\to \R^{d'}$ with kernel $V$. 
Since $V\subseteq \ker f$, we can find
a unique linear map $g_0$ such that $f=g_0 \circ \Phi_0$.
Define $\Gamma_0=\Phi_0 \Gamma$, $K_0=\Phi_0 K$.
Apply \cref{lem:latticeboxinandout} to $K_0$ and $\Gamma_0$ inside the $d'$-dimensional subspace $\mathrm{im}(\Phi_0) = \spa(K_0\cap \Gamma_0)$.
It returns a basis matrix $U$ of $\Gamma_0$ and a box
$B = \prod_{i=1}^{d'}[-m_i,m_i],m_i \in \Pos$ with
\[K_0\subseteq UB\subseteq \kappa_{d'} K_0.\]
Define $\Phi=U^{-1} \Phi_0$ and $g=g_0 \circ U$. We return $\Phi,g,B$ as our answer.
Clearly, $g \circ \Phi = g_0 \circ \Phi_0 = f$, $\Phi \Gamma = U^{-1} \Phi_0 \Gamma = U^{-1} \Gamma_0 =\Z^{d'}$, and $\Phi K = U^{-1} \Phi_0 K = U^{-1} K_0 \subseteq B$ as claimed.
We next verify the remaining two claimed properties.

We claim that for every $\bfz\in 4 \kappa_{d'} K_0\cap \Gamma_0$,
there exists
\begin{equation}
\label{eq:local-properization-lift}
 \bfx\in (4 \kappa_{d'}+dq) K\cap \Gamma \text{ with }
 \Phi_0 \bfx=\bfz \text{ and } f(\bfx)=g_0(\bfz).
\end{equation}
Indeed,  we can take $\bfx'\in \Gamma$ and $\bfy\in 4 \kappa_{d'} K$ with $\Phi_0 \bfx'=\Phi_0 \bfy=\bfz$,  reduce $\bfx'-\bfy\in V$ modulo $\Lambda_0$ to get $\bfw \in \bfx' - \bfy +\Lambda_0\subseteq V$ that lies inside the fundamental parallelepiped of $\Lambda_0$ contained in $dqK$, and let $\bfx = \bfy + \bfw$. 
One can verify that 
\eqref{eq:local-properization-lift} holds for $\bfx$.

We now prove $g_0$ is injective on $4 \kappa_{d'} K_0\cap \Gamma_0$.
Indeed, if $\bfz_1,\bfz_2\in 4 \kappa_{d'} K_0\cap \Gamma_0$, $g_0(\bfz_1)=g_0(\bfz_2)$, then
\eqref{eq:local-properization-lift} gives
$\bfx_1,\bfx_2\in (4 \kappa_{d'}+dq) K\cap \Gamma$ with
$\Phi_0 \bfx_1=\bfz_1$,  $\Phi_0 \bfx_2=\bfz_2$, and $f(\bfx_1)=f(\bfx_2)$. Thus, 
\[
\bfx_1 - \bfx_2 \in  \Lambda\cap (8 \kappa_{d'}+2dq) K
 \subseteq \Lambda\cap aqK\subseteq V = \ker \Phi_0.
\]
Hence $\bfz_1=\Phi_0 \bfx_1 = \Phi_0 \bfx_2 = \bfz_2$, proving that $g_0$ is injective on
$4 \kappa_{d'} K_0\cap \Gamma_0$ as claimed.

Therefore, $g = g_0 \circ U$ is injective on $U^{-1}
(4 \kappa_{d'} K_0\cap \Gamma_0) \supseteq 4B\cap \Z^{d'}$, as claimed.

We bound  $\lvert B\cap \Z^{d'}\rvert$ by
\begin{align*}
   \lvert B\cap \Z^{d'}\rvert = \lvert UB\cap \Gamma_0\rvert & \le \lvert\kappa_{d'} K_0\cap \Gamma_0\rvert\\
   & = \lvert g_0(\kappa_{d'} K_0\cap \Gamma_0)\rvert
\hspace{10pt} &&
\text{(by injectivity of $g_0$)}\\
& \le \lvert f((4 \kappa_{d'}+dq) K\cap \Gamma)\rvert
\hspace{10pt} &&
\text{(by \eqref{eq:local-properization-lift})}\\
& \ll_d \lvert f(K\cap \Gamma)\rvert 
\hspace{10pt} &&
\text{(by \cref{lem:lattice-quotient-dilation})}
\end{align*}
as claimed.

The claimed time and maximum bit length bounds follow from 
\cref{lem:constrained-optimization,lem:lattice-kernel-image,lem:latticeboxinandout}.
\end{proof}

    \section{Grouping the Input Integers}
\label{sec:classify}

In this section, we prove \cref{lem:main-classify} by 
repeatedly applying 
\cref{thm:algorithmiccfp-highdim}.

\subsection{Preliminaries}

We need the following standard estimate for the number of lattice points in a section.
\begin{lemma}
\label{lem:classify-lattice-section}
Let $K\subseteq \R^d$ be an origin-symmetric compact convex set,
$\Gamma\subseteq \R^d$ a lattice with $\spa(K\cap \Gamma)=\R^d$,
and $U\subseteq \R^d$ a linear subspace of dimension $d-k$.
Then, for every $s>0$,
$\lvert sK \cap \Gamma \cap U\rvert \le
(5d)^d \,s^{-k}\,\lvert sK \cap \Gamma\rvert$.
\end{lemma}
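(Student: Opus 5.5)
The plan is a direct counting argument. We build many disjoint translates of the section $(s/2)K\cap\Gamma\cap U$ inside $sK\cap\Gamma$, using lattice vectors of $K$ that point out of $U$. Then we compare the sections at scales $s$ and $s/2$ via \cref{lem:lattice-quotient-dilation}.

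\textbf{Step 1 (complementary vectors).} Since $\spa(K\cap\Gamma)=\R^d$ and $\dim U=d-k$, we can pick $\bfv_1,\dots,\bfv_k\in K\cap\Gamma$ whose images in $\R^d/U$ are linearly independent. Extend a basis of $U$ greedily by vectors of $K\cap\Gamma$; this is possible because these vectors span $\R^d$.

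\textbf{Step 2 (disjoint translates).} Put $m=\lfloor s/(2k)\rfloor$. For $\bfx\in (s/2)K\cap\Gamma\cap U$ and integers $\lvert n_i\rvert\le m$, consider $\bfx+\sum_i n_i\bfv_i$.
\begin{itemize}
\item It lies in $\Gamma$.
\item It lies in $sK$. Indeed, $\sum_i n_i\bfv_i\in (\sum_i \lvert n_i\rvert) K\subseteq km K\subseteq (s/2)K$ by origin-symmetry and convexity, and $(s/2)K+(s/2)K=sK$.
\item Distinct pairs $(\bfx,(n_i))$ give distinct points. An equality $\bfx+\sum_i n_i\bfv_i=\bfx'+\sum_i n_i'\bfv_i$ forces $\sum_i (n_i-n_i')\bfv_i\in U$. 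Independence modulo $U$ then gives $n_i=n_i'$ for all $i$, and hence $\bfx=\bfx'$.
\end{itemize}
Therefore
\[
\lvert sK\cap\Gamma\rvert\ge (2m+1)^k\,\lvert (s/2)K\cap\Gamma\cap U\rvert\ge \Big(\frac{s}{2k}\Big)^k\lvert (s/2)K\cap\Gamma\cap U\rvert,
\]
using $2\lfloor y\rfloor+1\ge y$ for $y\ge 0$.

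\textbf{Step 3 (rescaling the section).} Apply the ``in particular'' part of \cref{lem:lattice-quotient-dilation}. Inside the $(d-k)$-dimensional space $U$, take the origin-symmetric compact convex set $(s/2)K\cap U$, the lattice $\Gamma\cap U$, and dilation factor $2$. This gives
\[
\lvert sK\cap\Gamma\cap U\rvert\le 5^{d-k}\,\lvert (s/2)K\cap\Gamma\cap U\rvert.
\]
That lemma is stated in $\R^d$, but its proof only partitions points into residue classes modulo $2(\Gamma\cap U)$. So it applies verbatim within $U$, even if $\Gamma\cap U$ has lower rank; in that case the number of classes is only smaller.

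\textbf{Step 4 (combine).} Steps 2 and 3 give
\[
\lvert sK\cap\Gamma\cap U\rvert\le 5^{d-k}(2k)^k s^{-k}\lvert sK\cap\Gamma\rvert\le (5d)^d s^{-k}\lvert sK\cap\Gamma\rvert.
\]
No case split on small $s$ is needed, because the inequality $2m+1\ge s/(2k)$ holds for every $s>0$. The only point needing care is Step 3, namely checking that the dilation lemma applies within the subspace $U$ and to a possibly lower-rank lattice.
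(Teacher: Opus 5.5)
Your proof is correct and is essentially the paper's argument: the paper also packs disjoint translates of the section by integer combinations of vectors $\bfv_1,\dots,\bfv_k\in K\cap\Gamma$ that span $\R^d$ together with $U$, and then applies \cref{lem:lattice-quotient-dilation} once. The only difference is that the paper translates the scale-$s$ section into $2sK$ and applies the dilation lemma to $sK$ itself in $\R^d$, giving $\lvert\Gamma\cap 2sK\rvert\le 5^d\lvert\Gamma\cap sK\rvert$ and the constant $5^d d^k$; this sidesteps your Step~3 subspace check, although that check is fine, e.g.\ by identifying $U$ with $\R^{d-k}$ linearly (and the case $k=0$, where your $m$ is undefined, is trivial).
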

\begin{proof}
Choose $\bfv_1,\dots,\bfv_k\in K\cap \Gamma$ so that $U, \bfv_1,\dots,\bfv_k$ together span $\R^d$. 
Then, the translated sets 
 $(\Gamma\cap sK\cap U) + \sum_{i=1}^k a_i \bfv_i$, for integers
$0\le a_i\le \lfloor s/d\rfloor$, are disjoint subsets of $\Gamma\cap 2sK$. Hence,
\[
 (s/d)^k
 \lvert\Gamma\cap sK\cap U\rvert\le \lvert\Gamma\cap 2sK\rvert.
\]
Applying \cref{lem:lattice-quotient-dilation} to $sK$ gives
$ \lvert\Gamma\cap 2sK\rvert\le 5^d\lvert\Gamma\cap sK\rvert$. This proves the claim since $5^d d^k\le (5d)^d$.
\end{proof}

Given a multiset of vectors, the following lemma peels off a submultiset satisfying relative fatness, so that the remaining vectors have a smaller convex hull volume.

\begin{lemma}
\label{lem:one-step-peeling}
Let $Y$ be a nonempty multiset of nonzero vectors in $\Z^d$,
$H_Y=\conv(\pm Y)$, and $b_0=\maxbitlen(Y)$.
In $\lvert\supp Y\rvert b_0^{\Oh_d(1)}$ time, one can partition
$Y=F \sqcup W$ so that
\begin{itemize}
   \item
      $ (c_{\mathrm{peel},d}\lvert F\rvert) H_Y \subseteq \caZ_0(F)$, where $c_{\mathrm{peel},d}\in (0,1)$ only depends on $d$,
      \item and
         $\vol_{\dim H_Y}(\conv(\pm W)) \le \tfrac{1}{2} \vol_{\dim H_Y}(H_Y)$.
\end{itemize}
\end{lemma}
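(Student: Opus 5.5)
The plan is to work in a normalized position for $H_Y$ and split $Y$ according to whether a vector is ``large in every direction that matters'', using the support function of a centered zonotope, $h_{\caZ_0(F)}(\bfu)=\tfrac12\sum_{\bfy\in F}\lvert\langle\bfu,\bfy\rangle\rvert$. Since $H_Y=\conv(\pm Y)$ is origin-symmetric, I first restrict to $\spa(H_Y)$ (computed by \cref{lem:polytope-span}, with the oracle of \cref{lem:zonotope-algorithms}), so assume $\dim H_Y=d$. By \cref{lem:latticeboxinandout} (or \cref{lem:macbeath-center} with $\bfc=\zero$ by symmetry) I compute a linear map $T$ with $[-1,1]^d\subseteq TH_Y\subseteq \kappa_d[-1,1]^d$. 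Both conclusions are invariant under $T$, so from now on $[-1,1]^d\subseteq H_Y\subseteq\kappa_d[-1,1]^d$ and $\vol_d(H_Y)\asymp_d 1$. The containment $(c\lvert F\rvert)H_Y\subseteq\caZ_0(F)$ is equivalent to $h_{\caZ_0(F)}(\bfu)\ge c\lvert F\rvert\,h_{H_Y}(\bfu)$ for all $\bfu$. It suffices to check this at the vertices $\pm\bfy$ of $H_Y$, i.e.\ to test membership $c\lvert F\rvert\bfy\in\caZ_0(F)$ for each $\bfy\in\supp Y$; this is an LP with $\lvert\supp F\rvert$ variables, and a failure returns a separating direction $\bfu$.

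The splitting procedure is a recursion on dimension. Fix a small constant $\theta=\theta_d$. If $F=Y$ is already fat with constant $c$, return $W=\varnothing$. Otherwise the LP yields $\bfu$ with $\sum_{\bfy\in Y}\lvert\langle\bfu,\bfy\rangle\rvert<2c\lvert Y\rvert h_{H_Y}(\bfu)$. By Markov, all but a $\sqrt c$ fraction of $Y$ lie in the slab $S_{\bfu}=\{\lvert\langle\bfu,\cdot\rangle\rvert\le 2\sqrt c\,h_{H_Y}(\bfu)\}$. In the normalized position, any origin-symmetric convex subset of $H_Y\cap S_{\bfu}$ has volume at most $O_d(\sqrt c)\vol_d(H_Y)$, by Brunn--Minkowski and the fact that sections of $H_Y$ are comparable. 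Hence I set $W=Y\cap S_{\bfu}$ and $F'=Y\setminus S_{\bfu}$, and choosing $c\le c_d$ makes $\vol_d(\conv(\pm W))\le\tfrac12\vol_d(H_Y)$. The set $F'$ now consists of vectors with $\lvert\langle\bfu,\bfy\rangle\rvert\ge 2\sqrt c\,h_{H_Y}(\bfu)$, so it is fat in direction $\bfu$. It may still fail in directions orthogonal to $\bfu$. There I recurse on the projection $\Pi_{\bfu^\perp}F'$, which lives in dimension $d-1$, and combine the two parts. The aim is to get $h_{\caZ_0(F)}(\bfv)\gtrsim\lvert F\rvert h_{H_Y}(\bfv)$ for all $\bfv$ by writing $\bfv$ as a component along $\bfu$ plus a component in $\bfu^\perp$, so that the constant degrades only by a factor depending on $d$ at each of the at most $d$ levels.

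The step I expect to be the main obstacle is making this combination sound. The recursion removes further vectors from $F'$, and those removed vectors must be accounted for in $W$ without destroying the volume halving. I would try to repair this by making the recursion monotone. Instead of unioning slabs, at each level I only ever shrink the candidate fat set and put everything removed into $W$. I then bound $\conv(\pm W)$ directly by noting that every removed vector lies in $H_Y\cap S_{\bfu_j}$ for the slab chosen at the level where it was removed. If the union of $O_d(1)$ such thin slabs still has too large a hull, I would fall back to a potential argument. Run the peeling with the slab threshold chosen at scales $2^{-i}$, and pick a scale at which the mass in the slab is stable. This is a pigeonhole over $O_d(\log)$ scales, and it costs only a constant loss in $c_{\mathrm{peel},d}$. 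The running time is $\lvert\supp Y\rvert b_0^{\Oh_d(1)}$, since there are $O_d(1)$ LP membership tests and projections, each implemented through the oracles of \cref{lem:zonotope-algorithms,lem:constrained-optimization}.
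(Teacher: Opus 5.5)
Your normalization, the LP test for fatness, and the Markov/slab step are fine. The construction of $F$, however, breaks down exactly where you flag it, and neither proposed repair fixes it. Your candidate fat set starts as the outliers $F'=Y\setminus S_{\bfu}$ and only ever shrinks. So $F$ can contain only vectors that are long in direction $\bfu$, and nothing makes such vectors fat, or even nondegenerate, in $\bfu^\perp$.

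Here is a concrete failure. Let $Y$ consist of $N$ copies of $\bfe_1$ and one copy of $\bfe_2$ in $\R^2$, with $N$ large, so $H_Y=\conv(\pm\bfe_1,\pm\bfe_2)$.
\begin{itemize}
\item Fatness of $Y$ fails, and any separating direction $\bfu$ is essentially $\bfe_2$.
\item The slab $S_{\bfu}$ contains every copy of $\bfe_1$, so $F'=\{\bfe_2\}$.
\item Since $\caZ_0(\{\bfe_2\})$ is a segment, no nonempty subset of $F'$ is fat for the two-dimensional $H_Y$.
\item Your shrinking recursion therefore ends with $F=\varnothing$ and $W=Y$. This violates the halving condition; the lemma in fact forces $F\neq\varnothing$.
\end{itemize}
The same example defeats your accounting step. $W$ is covered by one thin slab per level, as you intended: the copies of $\bfe_1$ lie in the slab around the $\bfe_1$-axis, and $\bfe_2$ lies in the second-level slab around the $\bfe_2$-axis. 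Yet $\conv(\pm W)=H_Y$. In general, unions of thin slabs in different directions do not have small symmetric hulls.

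Changing the slab scale does not help either, because the split of $Y$ into the copies of $\bfe_1$ versus $\{\bfe_2\}$ is the same at every scale. A valid output here is $F=\{\bfe_1,\bfe_2\}$. That is, you must move a \emph{controlled number} of slab vectors into $F$ to balance the outliers, and no step of your scheme does this.

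The missing idea, which is how the paper proceeds, is to build $F$ from well-conditioned bases rather than from directional outliers.
\begin{itemize}
\item Normalize via \cref{lem:bounded-coordinate-selection}, so that $(d')^{-1}[-1,1]^{d'}\subseteq\hat H\subseteq[-4,4]^{d'}$.
\item Repeatedly extract from the remainder a tuple $(\bfw_1,\ldots,\bfw_{d'})$ with $\lvert\det(\bfw_1,\ldots,\bfw_{d'})\rvert\ge 2\eps_d$.
\item By Cramer's rule, each such tuple satisfies $c\hat H\subseteq\sum_j[-\bfw_j/2,\bfw_j/2]$. Fatness is therefore additive over the extracted tuples, and $(c/d')\lvert\hat F\rvert\hat H\subseteq\caZ_0(\hat F)$ holds however many tuples are taken.
\item When no such tuple remains, a maximum-determinant basis $\bfv_1,\ldots,\bfv_{d'}$ of $\hat W$ has $\lvert\det\rvert<4\eps_d$. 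By maximality, $\conv(\pm\hat W)\subseteq\{\sum_j t_j\bfv_j:\lvert t_j\rvert\le 1\}$, whose volume is below $4\cdot 2^{d'}\eps_d\le\frac12\vol_{d'}(\hat H)$.
\item For the near-linear running time, bucket the points into $\Oh_d(1)$ boxes on which the determinant is stable, and extract each tuple with its minimum multiplicity.
\end{itemize}
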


We prove \cref{lem:one-step-peeling} in \cref{subsec:one-step-peeling}.

Recall the following conventions from
\cref{sec:classification-interface}:
Let $\bfe_1,\dots,\bfe_d$ be the standard basis of $\R^d$, and for every $1\le d'\le d$ identify $\R^{d'}$ with the subspace $\spa \{\bfe_1,\dots,\bfe_{d'}\}$ of $\R^d$.
 Write $\Pi_{U}$ for the orthogonal projection onto the subspace $U$.

\subsection{Proof of \texorpdfstring{\Cref*{lem:main-classify}}{Theorem \getrefnumber{lem:main-classify}}}

We begin with an outline of the proof. The proof is split into a main stage and a post-processing stage:
\begin{itemize}
    \item The main stage partitions $A$ into
    $\Oh_d(\log \Delta)$
    groups with an extra part $A^{\extr}$, and represents the grouped integers as vectors.
     \item The post-processing stage merges contiguous groups into larger groups.
\end{itemize}
The only purpose of the post-processing stage is to ensure the bound in \cref{lem:main-classify}\,\ref{item:main-classify-prefix}, namely that the $i$-th group constitutes an $\Omega_d(1/\log \Delta)$ fraction of the total size of the first $i$ groups.
The other main properties, including relative fatness and $s$-generation, are established in the main stage. The fatness coefficient $\lambda$ degrades from $\Oh_d(1)$ to $\Oh_d(\log \Delta)$ after post-processing.

The main stage consists of iterations $i=1,2,\dots$, where iteration $i$ selects the $i$-th group of integers $f(E_i)$ from the remaining integers $f(Y_i)$, where $E_i,Y_i$ are multisets of vectors in $\Z^{d_i}$, and $f$ is projection to the first coordinate. Initially, $d_0 = 1$ and $Y_0$ includes all integers of $A$ as $1$-dimensional vectors.

In iteration $i$,
first run the algorithmic Conlon--Fox--Pham theorem (\cref{thm:algorithmiccfp-highdim}) on the remaining vectors $Y_{i-1}\setminus E_{i-1} \subseteq \Z^{d_{i-1}}$ to embed them into a GAP, except for $\Oh_d(s \log \Delta)$ of them which are discarded and put into the extra part.
By incorporating their coordinates from the GAP embedding, each remaining
vector in $Y_{i-1}\setminus E_{i-1}$
 receives a new representation as a  possibly higher dimensional vector, while its image under $f$ remains unchanged.
These new vectors form the multiset $Y_i$, which contains a small $s$-generating submultiset $E_i'$ due to the Conlon--Fox--Pham theorem.
We then use \cref{lem:one-step-peeling} to extract a submultiset of $Y_i$
 that satisfies relative fatness. Together with $E_i'$, they form the $i$-th group $E_i$.

The main technical details are to bound the number of iterations, and to show that the dimension of the representing vectors does not exceed $d$. This requires keeping track of the lattice counts of the GAPs and changes in their dimensions.

Now we proceed to the formal proof of \cref{lem:main-classify}.

If $\Delta = \Oh_d(1)$, then $\lvert A\rvert \le \Delta^{\Oh(1)}=\Oh_d(1)$, and we can 
return no groups and $A^{\extr}=A$ to satisfy all the requirements.
We henceforth assume
$\Delta$ is sufficiently large in terms of $d$.

\paragraph*{Basic setup.}
Let $C_{\mathrm{CFP},d}$ and $r_{\mathrm{CFP},d}$ be the upper bounds for the constant $C$ and the returned rank in
\cref{thm:algorithmiccfp-highdim} for $\beta=d+1$ and all $1\le k\le d$.
Let $C_{\mathrm{count},d}=(5d)^d$. Choose $C_{\mathrm{range},d}>d$
sufficiently large in terms of 
$C_{\mathrm{CFP},d},r_{\mathrm{CFP},d},
C_{\mathrm{count},d}$, and $ d$.
Define \[s=\lceil C_{\mathrm{range},d} \Delta^{1/d}\rceil.\]

Recall $\supp(A) \subseteq \Z \cap [1,\Delta]$, and $f(\bfx)=x_1$.
For $i\ge 0$, we will define $d_i\ge 1$ nondecreasing in $i$, and multisets $E_i\subseteq Y_i$ with $\supp(Y_i)\subseteq \Z^{d_i}$. The initial values are defined by
\[d_0=1, \,\, Y_0=\{a \bfe_1:a\in A\}, \, \text{ and } E_0=\varnothing.\]
For $i\ge 0$, write \[
\Gamma_i =\caL(Y_i) \text{ and }
H_i = \conv(\{\zero\}\cup \pm Y_i).
\]
We regard $H_i\cap \Gamma_i$ as a progression of rank $\dim H_i$
in $\spa \Gamma_i$, embedded in $\R^{d_i}$.
We maintain the following four invariants for $i\ge 0$:
\begin{subequations}
\begin{align}
&f(H_i\cap \Gamma_i) \text{ is an }s \text{-proper progression},
\label{eq:classify-proper-bound}\\
&\supp(Y_i)\subseteq (\Z\cap [-\Delta_i,\Delta_i])^{d_i} \setminus \{\zero\},
\,\text{where }\Delta_i\coloneqq C_{\mathrm{range},d}^{\,d_i+1} \Delta,
\label{eq:classify-support-bound}\\
&N_i\coloneqq \lvert(sH_i)\cap \Gamma_i\rvert
\le 3s \Delta\left(\frac{C_{\mathrm{count},d}}{s}\right)^{q_i},
\,\text{where }q_i\coloneqq d_i-\dim(H_i),
\label{eq:classify-count-bound}\\
&1\le d_i\le d.
\label{eq:classify-dimension-bound}
\end{align}
\end{subequations}
The invariants clearly hold at $i=0$.

In the following, 
we describe iteration $i\ge 1$ of the main stage, assuming
the invariants hold at $i-1$.

\paragraph*{Applying Conlon--Fox--Pham.}
Let $R\coloneqq Y_{i-1}\setminus E_{i-1}$.
If $\lvert R\rvert<C_{\mathrm{range},d} s \log \Delta$, then we stop the iterations and pass the produced $k_0\coloneqq i-1$ groups to the post-processing stage described at the end of the proof.
Otherwise, we apply \cref{thm:algorithmiccfp-highdim} to $R$ with $\Delta_{i-1}$ in place of $\Delta$ and $d+1$ in place of $\beta$.
We first verify the requirements of \cref{thm:algorithmiccfp-highdim}:
By \eqref{eq:classify-support-bound},
$\supp(R)\subseteq ([-\Delta_{i-1},\Delta_{i-1}]^{d_{i-1}}\cap \Z^{d_{i-1}})\setminus \{\zero\}$.
The definition of $s$ and $\Delta_{i-1}$ gives $s\ge \Delta_{i-1}^{1/(d+1)}$ for $\Delta$ sufficiently large in terms of $d$.
Since $\lvert R\rvert\ge C_{\mathrm{range},d} s \log \Delta$ and
$\log \Delta_{i-1}=\log \Delta+\Oh_d(1)$, the choice of $C_{\mathrm{range},d}$ gives
$ s\le
 C_{\mathrm{CFP},d}^{-1} \min \{\Delta_{i-1},\lvert R\rvert/\log \Delta_{i-1}\}$ for $\Delta$ sufficiently large in terms of $d$.
Thus, \cref{thm:algorithmiccfp-highdim} applies. 

\cref{thm:algorithmiccfp-highdim} returns $R'\subsetneq \hat R\subseteq R$ and an $s$-proper GAP containing $\supp(\hat R)$,
\begin{equation}
 g(B\cap \Z^m), \, \text{ with } g\colon \R^m \to \R^{d_{i-1}} \text{ linear,}\,\, 
B=\prod_{j=1}^m[-w_j,w_j],
\text{ and } w_j\le C_{\mathrm{CFP},d} \Delta_{i-1},
\label{eqn:gbzm}
\end{equation}
such that $R'$ is $s$-generating for the GAP,
$m\le r_{\mathrm{CFP},d}$, $\lvert R'\rvert\le C_{\mathrm{CFP},d} s$, and
$\lvert R\rvert-\lvert\hat R\rvert\le C_{\mathrm{CFP},d} s \log \Delta_{i-1}$.
Let $g^{-1}$ denote the unique inverting map from $g(B\cap \Z^m)$ to $B\cap \Z^m$.

Since $R'$ is $s$-generating for the GAP, there is
$\bfz\in \Z^m$ such that
\begin{equation}\label{eq:classify-cfp-generating-box}
\Sigma(g^{-1}(R'))\supseteq (\bfz+sB)\cap \Z^m.
\end{equation}
In particular,  we have
\begin{equation}
    \label{eqn:calg}
\caL(g^{-1}(R'))=\Z^m
\end{equation}
 due to $sB\cap \Z^m \supseteq \{0,1\}^m$.

\paragraph*{Computing the translate and pre-images.}
We compute $g^{-1}(\bfa)$ for every $\bfa\in \supp(\hat R)$ by a brute-force enumeration of all points in $B\cap \Z^m$.
Then, we compute a $\bfz$ that satisfies \eqref{eq:classify-cfp-generating-box} as follows.
By \cref{thm:algorithmiccfp-highdim} and
$\lvert R'\rvert\le C_{\mathrm{CFP},d} s$,
\begin{equation*}
 \big \lvert \lvert R'\rvert B \cap \Z^m\big \rvert
 \ll_d \lvert sB \cap \Z^m\rvert
 \ll_d \lvert s(\supp(R)\cup \{\zero\})\rvert\le N_{i-1},
\end{equation*}
where the last inequality follows from $R\subseteq Y_{i-1}$, \eqref{eq:classify-count-bound}, and the definitions of $H_{i-1},\Gamma_{i-1}$.
We use standard dynamic programming to compute $S\coloneqq \Sigma(g^{-1}(R'))\subseteq \lvert R'\rvert B \cap \Z^m$
in $\tO_d(\lvert R'\rvert N_{i-1})=\tO_d(sN_{i-1})$ time. 
Since $sB$ and $ \lvert R'\rvert B$ are axis-aligned boxes, we can compute the multidimensional prefix sums of the indicator function of $S$, and then
compute $\lvert S\cap (\bfz+sB)\rvert$ for all $\bfz\in \lvert R'\rvert B \cap \Z^m$, in $\tO_d(N_{i-1})$ total time.
We choose $\bfz$ for which
$\lvert S\cap (\bfz+sB)\rvert=\lvert sB\cap \Z^m\rvert$,
so that \eqref{eq:classify-cfp-generating-box} holds.
The overall time complexity of these steps is $\tO_d(sN_{i-1})$.

\paragraph*{New vector representations.}
Let $r_g=\dim g(\R^m)$. 
We may assume
$g(\bfe_1),\dots,g(\bfe_{r_g})$ are linearly independent, by relabeling
the coordinates of the GAP in \eqref{eqn:gbzm} if necessary. Let
$d_i\coloneqq d_{i-1}+(m-r_g)\ge d_{i-1}\ge 1$, and
define $\phi\colon \R^m\to \R^{d_i}$ by
\[ \phi(\bfu)\coloneqq \bigl(g(\bfu),u_{r_g+1},\ldots,u_m\bigr), \]
which is an injective linear map  due to 
the linear independence of $g(\bfe_1),\ldots,g(\bfe_{r_g})$.
Clearly,
\begin{equation}\label{eq:classify-pi-phi}
\Pi_{\R^{d_{i-1}}}(\phi(\bfu))=g(\bfu)  \text{ for all }\bfu\in \R^m,
\end{equation}
and
   $\phi(\Z^m)\subseteq \Z^{d_i}$.
Define 
\begin{align*}
Y_i & \coloneqq \phi(g^{-1}(\hat R)),\\
E_i' &\coloneqq \phi(g^{-1}(R')) \subseteq Y_i,
\end{align*}
 with $\supp(Y_i)\subseteq \Z^{d_i}\setminus \{\zero\}$.
\Cref{eq:classify-pi-phi} gives
\begin{equation}
    \label{eq:classify-pi-y}
\Pi_{\R^{d_{i-1}}}(Y_i)=\hat R\subseteq R = Y_{i-1}\setminus E_{i-1}.
\end{equation}

\paragraph*{Coordinate bounds.}
Take arbitrary $\bfy\in Y_i =\phi(g^{-1}(\hat R))$.
For $1\le j\le d_{i-1}$,
 by
\eqref{eq:classify-pi-y}, we have
$\lvert y_j\rvert\le \max_{\bfy'\in Y_{i-1}}\|\bfy'\|_{\infty} \le \Delta_{i-1}$.
For $d_{i-1}<j\le d_i$,
$\lvert y_j\rvert\le \max_{\bfu \in g^{-1}(\hat R)}\|\bfu\|_{\infty}\le \max_{\bfu \in B\cap \Z^m}\|\bfu\|_{\infty} $, so
\eqref{eqn:gbzm} and the choice of
$C_{\mathrm{range},d}$ give $\lvert y_j\rvert\le 
C_{\mathrm{CFP},d} \Delta_{i-1}\le 
C_{\mathrm{range},d} \Delta_{i-1}$.
In the latter case, $d_i\ge d_{i-1}+1$, so
$C_{\mathrm{range},d} \Delta_{i-1}\le \Delta_i$ by the definition of $\Delta_i$. 
Thus
$\|\bfy\|_\infty\le \Delta_i$ in either case, proving
\eqref{eq:classify-support-bound}.

\paragraph*{Generation.}
By definition of $E_i'$ and \eqref{eqn:calg}, we have $\caL(E_i')=\phi(\caL(g^{-1}(R')))=\phi(\Z^m)$.
By definition of $Y_i$, we have $\supp(Y_i) \subseteq \phi(\Z^m)$.
Since $E_i'\subseteq Y_i$, and $\Gamma_i =\caL(Y_i)$,
this means
\begin{equation}
    \label{eq:classify-step-spans-lattice}
 \caL(E_i')=\Gamma_i=\phi(\Z^m).
\end{equation}
Taking real spans gives $\spa(Y_i) = \spa(\Gamma_i) = \phi(\R^m)$.
Since $H_i = \conv(\{\zero\}\cup \pm Y_i)$ and $\Gamma_i = \caL(Y_i)$, we get
\begin{equation}
    \label{eq:classify-step-spans}
 \spa(H_i\cap \Gamma_i)=\spa(H_i)=\phi(\R^m), \, \text{ and } \dim(H_i)=m,
\end{equation}
where the last claim follows from the injectivity of $\phi$.

By definition of $Y_i$, $\supp(Y_i) \subseteq \phi(B\cap \Z^m)$, so $H_i\subseteq \phi(B)$.
Thus,
\begin{equation}
    \label{eqn:sbm}
\phi\big (  sB \cap \Z^m\big )  \mathrel{\underbrace{=}_{\text{$\phi$ injective}}} \phi ( sB) \cap \phi(\Z^m)
\mathrel{\underbrace{=}_{\text{\eqref{eq:classify-step-spans-lattice}}}}
(s \phi(B))\cap \Gamma_i
 \supseteq (sH_i)\cap \Gamma_i.
\end{equation}
Let $\boldsymbol{\tau}_i\coloneqq \phi(\bfz)\in \Gamma_i$ for $\bfz$ defined in \eqref{eq:classify-cfp-generating-box}.
Then, applying $\phi$ to \eqref{eq:classify-cfp-generating-box} gives
\[ \Sigma(E_i')\supseteq
\phi\big ( (\bfz + sB) \cap \Z^m\big )   = \boldsymbol{\tau}_i + \phi(sB \cap \Z^m)
\mathrel{\underbrace{\supseteq}_{\text{\eqref{eqn:sbm}}}}
 \boldsymbol{\tau}_i + (sH_i \cap \Gamma_i).\]
Hence $E_i'$ is $s$-generating for
$H_i\cap \Gamma_i$
with translate $\boldsymbol{\tau}_{i}$. We also have $\lvert E_i'\rvert = \lvert R'\rvert\le C_{\mathrm{CFP},d} s$.

Since the vertices of $\bfz+sB$ lie in $\Z^m$, taking convex hulls in
\eqref{eq:classify-cfp-generating-box} and symmetrizing gives
$sB\subseteq \caZ_0(g^{-1}(R'))$. Applying $\phi$ gives
\begin{equation}
\label{eq:classify-generating-recentering}
 sH_i\subseteq \phi(sB)\subseteq \caZ_0(E_i').
\end{equation}

\paragraph*{Properness.}
Applying $g$ to \eqref{eqn:calg} gives $g(\Z^m)=\caL(R')\subseteq \Gamma_{i-1}$, so 
\[\Pi_{\R^{d_{i-1}}}(\Gamma_i) \mathrel{\underbrace{=}_{\text{\eqref{eq:classify-step-spans-lattice}}}} \Pi_{\R^{d_{i-1}}}(\phi(\Z^m)) \mathrel{\underbrace{=}_{\text{\eqref{eq:classify-pi-phi}}}} g(\Z^m) \subseteq \Gamma_{i-1}.  \]
By \eqref{eq:classify-pi-y} and the definition of $H_i,H_{i-1}$, we have $\Pi_{\R^{d_{i-1}}}(H_i)\subseteq H_{i-1}$.  Thus, $\Pi_{\R^{d_{i-1}}}(sH_i \cap \Gamma_i) \subseteq sH_{i-1}\cap \Gamma_{i-1}$.

By \eqref{eq:classify-pi-phi} and the $s$-properness of the GAP, the map 
$\Pi_{\R^{d_{i-1}}}(\phi(\cdot ))$ is injective on $sB \cap \Z^m$.
Hence, $\Pi_{\R^{d_{i-1}}}$ is injective on $\phi(sB\cap \Z^m)\supseteq sH_i \cap \Gamma_i$.
Thus,
$\Pi_{\R^{d_{i-1}}}\colon (sH_i)\cap \Gamma_i
 \to (sH_{i-1})\cap \Gamma_{i-1}$
is an injective map. 

By \eqref{eq:classify-proper-bound}, $f$ is injective on
$(sH_{i-1})\cap \Gamma_{i-1}$.
By definition of $f$,  $f \circ \Pi_{\R^{d_{i-1}}}=f$.
Thus, $f$ is injective on $(sH_i)\cap \Gamma_i$.
Together with \eqref{eq:classify-step-spans}, this shows $f(H_i\cap \Gamma_i)$ is an $s$-proper progression, proving
\eqref{eq:classify-proper-bound} for $i$.

\paragraph*{Lattice count and dimension bounds.}
Recall $N_i \coloneqq \lvert sH_i \cap \Gamma_i\rvert$ from
\eqref{eq:classify-count-bound}.
Since the map $\Pi_{\R^{d_{i-1}}}$
is injective from $(sH_i)\cap \Gamma_i$ to
$(sH_{i-1})\cap \Gamma_{i-1}$, we have $N_i\le N_{i-1}$.

Let $U\coloneqq g(\R^m)$, which equals
$\Pi_{\R^{d_{i-1}}}(\spa(H_i))$ by
\eqref{eq:classify-pi-phi} and \eqref{eq:classify-step-spans}.
Recall $r_g\coloneqq \dim U$. 
Since $\Pi_{\R^{d_{i-1}}}(H_i)\subseteq H_{i-1}$, we have
$U\subseteq \spa(H_{i-1})$ and hence $r_g\le \dim(H_{i-1})$.
By applying \cref{lem:classify-lattice-section}
in $\spa(H_{i-1})$,
whose dimension is $\dim(H_{i-1})\le d$, we get
\begin{equation*}
\lvert(sH_{i-1})\cap \Gamma_{i-1}\cap U\rvert
 \le C_{\mathrm{count},d} s^{-(\dim(H_{i-1})-r_g)} N_{i-1}.
\end{equation*}
Since $\Pi_{\R^{d_{i-1}}}$ maps
injectively from $(sH_i)\cap \Gamma_i$ to
$(sH_{i-1})\cap \Gamma_{i-1}\cap U$, this means
\begin{equation}
\label{eq:classify-Pi-count}
N_i
 \le C_{\mathrm{count},d} s^{-(\dim(H_{i-1})-r_g)} N_{i-1}.
\end{equation}

Recall $d_i \coloneqq d_{i-1}+(m-r_g) = d_{i-1}+(\dim(H_i)-r_g)$ (by \eqref{eq:classify-step-spans}), and $q_i\coloneqq d_i - \dim(H_i) = d_{i-1}-r_g, q_{i-1}\coloneqq d_{i-1} - \dim(H_{i-1}) $ from
\eqref{eq:classify-count-bound}. Putting them together gives
\begin{equation}
\label{eq:classify-codimension-increment}
 q_i-q_{i-1}=\dim(H_{i-1})-r_g\ge 0.
\end{equation}
 If $q_i>q_{i-1}$, then
\eqref{eq:classify-Pi-count} gives
\[
 N_i\le (C_{\mathrm{count},d}/s)^{q_i-q_{i-1}} N_{i-1}.
\]
If $q_i=q_{i-1}$, then use the initial bound $N_i\le N_{i-1}$.
In either case,
 $N_i\le (C_{\mathrm{count},d}/s)^{q_i-q_{i-1}} N_{i-1}$ holds, proving
\eqref{eq:classify-count-bound}  for $i$.

By \cref{thm:algorithmiccfp-highdim},
\[
 s^m\le \lvert sB\cap \Z^m\rvert\ll_d \lvert s(\supp(\hat R)\cup \{\zero\})\rvert
 \le \lvert s(\supp(Y_i)\cup \{\zero\})\rvert\le N_i,
\]
where the third inequality uses $\Pi_{\R^{d_{i-1}}}(Y_i)=\hat R$ by \eqref{eq:classify-pi-y}, and the last inequality uses
$\supp(Y_i)\subseteq H_i\cap \Gamma_i$.
Combining this with \eqref{eq:classify-count-bound} and $q_i=d_i-m$ gives
\[ s^{d_i-1}\ll_d \Delta C_{\mathrm{count},d}^{q_i}=\Oh_d(\Delta), \]
where the last step uses $q_i=d_{i-1}-r_g\le d$ by \eqref{eq:classify-dimension-bound}.
If $d_i\ge d+1$, this contradicts
$s^d\ge C_{\mathrm{range},d}^d \Delta$ (by definition of $s$) when $C_{\mathrm{range},d}$ is
sufficiently large in terms of $d$.
Thus $d_i\le d$, proving \eqref{eq:classify-dimension-bound} for $i$.

\paragraph*{Extracting the group.}
Apply \Cref{lem:one-step-peeling} to $Y_i$, obtaining $Y_i=F_i\sqcup W_i$ such that
\begin{align}
(c_{\mathrm{peel},d}\lvert F_i\rvert) H_i & \subseteq \caZ_0(F_i),
\label{eq:classify-peeling-balance}\\
\vol_{\dim(H_i)}(\conv(\pm W_i)) & \le \tfrac{1}{2} \vol_{\dim(H_i)}(H_i).
\notag
\end{align}
Define the multiset $E_i\coloneqq F_i\cup E'_i\subseteq Y_i$, where the multiplicities of the union are defined by the maximum (instead of sum) of 
the corresponding multiplicities in the two multisets.
Since $Y_i\setminus E_i\subseteq W_i$,
\begin{equation}
\label{eq:classify-volume-decrease}
 \vol_{\dim(H_i)}(\conv(\pm (Y_i\setminus E_i)))
 \le \tfrac{1}{2} \vol_{\dim(H_i)}(H_i).
\end{equation}

By $\lvert E_i\rvert\le  \lvert F_i\rvert+\lvert E_i'\rvert\le  \lvert F_i\rvert+C_{\mathrm{CFP},d} s$, we have
\begin{equation}
\label{eq:classify-raw-zonotope-comparison}
 \lvert E_i\rvert H_i\subseteq \lvert F_i\rvert H_i+C_{\mathrm{CFP},d} sH_i
 \mathrel{\underbrace{\subseteq}_{\text{\eqref{eq:classify-peeling-balance}, \eqref{eq:classify-generating-recentering}}}}
 c_{\mathrm{peel},d}^{-1} \caZ_0(F_i)+C_{\mathrm{CFP},d} \caZ_0(E'_i)
 \subseteq 
 (c_{\mathrm{peel},d}^{-1}+C_{\mathrm{CFP},d}) \caZ_0(E_i).
\end{equation}
Set $A_i^{\extr}=f(R\setminus \hat R)$. Then
$\lvert A_i^{\extr}\rvert\le C_{\mathrm{CFP},d} s \log \Delta_{i-1}$ and, by
\eqref{eq:classify-pi-y} and the definition of $Y_i$,
\begin{equation}
\label{eq:classify-value-partition}
 f(Y_{i-1}\setminus E_{i-1})
 =A_i^{\extr}\sqcup f(E_i)\sqcup f(Y_i\setminus E_i).
\end{equation}

\paragraph*{Iteration count.}
Since all vertices of $H_i$ are integer vectors with
coordinates at most $\Oh_d(\Delta)$, we have
\begin{equation}
\label{eq:volumepirange}
 1/d!\le \vol_{\dim(H_i)}(H_i)\le \Oh_d(\Delta^d).
\end{equation}

By \eqref{eq:classify-codimension-increment} and
\eqref{eq:classify-dimension-bound}, 
both $d_i$ and $q_i$ are nondecreasing integers
in $[0,d]$, so there are
$\Oh(d)$ maximal intervals of consecutive iterations $i$ with $(d_i,q_i)$
being constant. 
On each such interval, $\dim(H_i) = d_i-q_i$ also stays constant.
For consecutive iterations $i-1, i$ in one such interval,
 \eqref{eq:classify-pi-y} gives
$Y_i\subseteq Y_{i-1}\setminus E_{i-1}$, so 
\eqref{eq:classify-volume-decrease} (applied to $i-1$) gives
\[
 \vol_{\dim(H_i)}(H_i)\le \tfrac{1}{2} \vol_{\dim(H_{i-1})}(H_{i-1}).
\]
Thus, by \eqref{eq:volumepirange}, each such interval has
length $\Oh_d(\log \Delta)$.  Thus the main stage terminates after $k_0$ iterations for some 
$k_0=\Oh_d(\log \Delta)$.

Iterating \eqref{eq:classify-value-partition} gives $A=A^{\extr}\sqcup \bigsqcup_{i=1}^{k_0} f(E_i)$ where
\begin{equation*}
 A^{\extr}\coloneqq f(Y_{k_0}\setminus E_{k_0})
      \sqcup \bigsqcup_{i=1}^{k_0} A_i^{\extr}.
\end{equation*}
The termination rule gives $\lvert Y_{k_0}\setminus E_{k_0}\rvert \le C_{\mathrm{range},d} s \log \Delta$.
Since
$\lvert A_i^{\extr}\rvert=\Oh_d(s \log \Delta_{i-1})$, we obtain
$
 \lvert A^{\extr}\rvert=\Oh_d(s \log^2 \Delta)$
as required.

\paragraph*{Time complexity.}
The application of the algorithmic Conlon--Fox--Pham theorem (\cref{thm:algorithmiccfp-highdim})
takes 
$\tO_d(s\lvert s(\supp(R)\cup \{\zero\})\rvert) = \tO_d(sN_{i-1})$ time because
$s(\supp(R)\cup \{\zero\})\subseteq (sH_{i-1})\cap \Gamma_{i-1}$.
Earlier, we argued that computing the translate $\bfz$ and $g^{-1}(\bfa)$ for all $\bfa \in \supp(\hat R)$ also takes
$\tO_d(sN_{i-1})$ time.

The application of
\Cref{lem:one-step-peeling}
takes
$\Oh_d(\lvert\supp R\rvert)(\log \Delta)^{\Oh_d(1)}$ time, which is dominated by the time complexity of \cref{thm:algorithmiccfp-highdim}.
By \eqref{eq:classify-count-bound},  we have $N_{i-1}\le 3s \Delta$ provided $s\ge C_{\mathrm{count},d}$ (which holds when $\Delta$ is sufficiently large in terms of $d$).
Therefore, each iteration takes $\tO_d(s^2 \Delta)$ time.
The total time complexity over all $k_0 = \Oh_d(\log \Delta)$ iterations is therefore $\tO_d(s^2 \Delta)$ as claimed.

\paragraph*{Post-processing stage.}
Now we merge some of the contiguous groups among the $k_0$ obtained groups based on their sizes.
Consider the prefix maxima, 
\[
 \{i_1<\cdots<i_k\}
 =\big \{r\in [k_0]:\lvert E_r\rvert=\max_{j\in [r]}\lvert E_j\rvert\big \},
\]
and let $i_{k+1}=k_0+1$. For $u\in [k]$, define
\[
 X_u=\bigsqcup_{r=i_u}^{i_{u+1}-1} \Pi_{\R^{d_{i_u}}}(E_r), \,\,
 A_u=f(X_u),
\]
and define $X_u'=E'_{i_u}$. Then, $\lvert X_u'\rvert\le \lvert E'_{i_u}\rvert\le C_{\mathrm{CFP},d} s$, proving
\cref{item:main-classify-generating}.
Since $f \circ \Pi_{\R^{d_{i_u}}} = f$ and $ A=A^{\extr}\sqcup \bigsqcup_{i=1}^{k_0} f(E_i)$, we have
$ A=A^{\extr}\sqcup \bigsqcup_{u=1}^{k} f(X_u)$ as required.
The returned dimensions are $d_{i_1}\le \cdots\le d_{i_k}$, which are in $[1,d]$, and $k\le k_0=\Oh_d(\log \Delta)$, as required.

By definition of the prefix maxima $i_{u}$ and $i_{u+1}$, we have $\lvert E_i\rvert\le \lvert E_{i_u}\rvert$ for $1\le i<i_{u+1}$. Therefore,
\[
 \lvert E_{i_u}\rvert\le \lvert X_u\rvert = \lvert A_u\rvert\le \sum_{v=1}^u \lvert A_v\rvert
 =\sum_{i=1}^{i_{u+1}-1}\lvert E_i\rvert\le k_0\lvert E_{i_u}\rvert,
 \]
and in particular, $\sum_{v=1}^u \lvert A_v\rvert \le k_0\lvert A_u\rvert$.
Together with $k_0=\Oh_d(\log \Delta)$, this proves
\cref{lem:main-classify}\,\ref{item:main-classify-prefix}.

By \eqref{eq:classify-pi-y},
$E_i\sqcup \Pi_{\R^{d_i}}(Y_{i+1})\subseteq Y_i$ for $1\le i<k_0$;
 starting from $E_{k_0}\subseteq Y_{k_0}$ and iteratively applying this containment 
for $i=k_0-1,\ldots,i_u$, we obtain
\[
 \bigsqcup_{v=u}^k \Pi_{\R^{d_{i_u}}}(X_v)
 =\bigsqcup_{i=i_u}^{k_0} \Pi_{\R^{d_{i_u}}}(E_i)
 \subseteq Y_{i_u}.
\]
Thus, 
\[
 \tilde H_u\coloneqq \conv \!\left(\pm \bigcup_{v=u}^k
       \Pi_{\R^{d_{i_u}}}(\supp X_v)\right)\subseteq H_{i_u}.
\]
Since $\lvert X_u\rvert\le k_0\lvert E_{i_u}\rvert$ and 
$E_{i_u}\subseteq X_u$,
\[
 \lvert X_u\rvert H_{i_u}\subseteq k_0\lvert E_{i_u}\rvert H_{i_u}
\mathrel{
\underbrace{\subseteq}_{\text{
 \eqref{eq:classify-raw-zonotope-comparison}
}}
} 
  k_0(c_{\mathrm{peel},d}^{-1}+C_{\mathrm{CFP},d}) \caZ_0(E_{i_u})
 \subseteq k_0(c_{\mathrm{peel},d}^{-1}+C_{\mathrm{CFP},d}) \caZ_0(X_u).
\]
Since $\supp X_u\subseteq \tilde H_u \subseteq H_{i_u}$, this proves
\eqref{eq:oneclass-balance} with
$\lambda=k_0(c_{\mathrm{peel},d}^{-1}+C_{\mathrm{CFP},d})=\Oh_d(\log \Delta)$.

Since $E'_{i_u}\subseteq E_{i_u}\subseteq X_u$ by definition,
\eqref{eq:classify-step-spans-lattice} gives
\[
 \Gamma_{i_u}=\caL(E'_{i_u})\subseteq \caL(X_u)
 \subseteq \caL(Y_{i_u})=\Gamma_{i_u},
\]
so all containments are equalities.
Thus, for all $v\ge u$, 
$\Pi_{\R^{d_{i_u}}}(\supp X_v) \subseteq \supp Y_{i_u} \subseteq \Gamma_{i_u} = \caL(X_u)$, 
proving \cref{item:main-classify-projection}.

Since $\supp X_u\subseteq \tilde H_u\subseteq \spa \Gamma_{i_u}$ and
$\caL(X_u)=\Gamma_{i_u}$, we have
$\spa(\tilde H_u\cap \Gamma_{i_u})=\spa \Gamma_{i_u}$.
Together with $\tilde H_u\subseteq H_{i_u}$,
\eqref{eq:classify-proper-bound} implies that
$f(\tilde H_u\cap \caL(X_u))$ is an $s$-proper progression,
proving \eqref{eq:oneclass-values}.
Moreover, $X_u'=E'_{i_u}$ is $s$-generating for
$H_{i_u}\cap \Gamma_{i_u}$ with translate $\boldsymbol{\tau}_{i_u}$, and $\rk \caL(X_u)\le d\le s$,
so $X_u'$ is $(\rk \caL(X_u))$-generating for
$\tilde H_u\cap \caL(X_u)$, proving
\eqref{eq:oneclass-generating}.
We return $\boldsymbol{\tau}_{i_u}$ as the translate for $X_u'$.

By \eqref{eq:classify-support-bound},
$\supp X_u\subseteq \supp Y_{i_u}\subseteq (\Z\cap [-\Delta_{i_u},\Delta_{i_u}])^{d_{i_u}}$,
where $\Delta_{i_u}\le C_{\mathrm{range},d}^{d+1} \Delta=\Oh_d(\Delta)\le \Delta^{\Oh(1)}$ for $\Delta$ sufficiently large in terms of $d$, as required.

The time complexity of the post-processing stage is dominated by the time complexity for the main stage.

\subsection{Proof of \texorpdfstring{\Cref*{lem:one-step-peeling}}{Lemma \getrefnumber{lem:one-step-peeling}}}
\label{subsec:one-step-peeling}

The proof proceeds by repeatedly extracting vertices from the remaining input points  to form a simplex of sufficiently large volume,  until no more such simplex can be formed, at which point we terminate and conclude that the convex hull of the remaining input points has shrunken considerably. 
We appropriately discretize the space and bucket the input points for better running time.

\begin{proof}[Proof of \Cref{lem:one-step-peeling}]
Given the input multiset $Y$ of nonzero vectors in $\Z^d$, let
$d'=\dim \spa Y$.
Apply \cref{lem:bounded-coordinate-selection} to $\conv(\{\zero\}\cup \supp Y)$
(which has a trivial near-linear-time linear optimization oracle due to \cref{lem:zonotope-algorithms})
to obtain affinely independent 
$\bfp_0,\ldots,\bfp_{d'}$, and write
$\zero=\sum_{i=0}^{d'} \beta_i \bfp_i$ with $\sum_{i=0}^{d'} \beta_i=1$.
Without loss of generality, assume $\lvert\beta_0\rvert=\max_{0\le i\le d'}\lvert\beta_i\rvert>0$, so $\bfp_0 = -\sum_{i=1}^{d'} \frac{\beta_i}{\beta_0} \bfp_i$. 
Thus, $\bfp_1,\ldots,\bfp_{d'}$  form a basis of
$\spa Y$ and belong to $\supp Y$.
Define the bijective linear map $M\colon \R^{d'}\to \spa Y$ by $M(\bfe_i) =\bfp_i$ for $1\le i\le d'$.

For every $\bfx\in \supp Y$, 
\cref{lem:bounded-coordinate-selection} gives
$\bfx=\sum_{i=0}^{d'} a_i \bfp_i$ with $\sum_{i=0}^{d'} a_i=1$ and $\lvert a_i\rvert\le 2$.
Hence
\[
 \bfx=\sum_{i=1}^{d'}\left(a_i-a_0 \frac{\beta_i}{\beta_0}\right) \bfp_i,
\]
where each coefficient has absolute value at most $4$.
Writing $\hat Y=M^{-1} Y$ and $\hat H=\conv(\pm \hat Y)$,
this gives $\{\bfe_1,\ldots,\bfe_{d'}\}\subseteq \supp \hat Y\subseteq [-4,4]^{d'}$. Then, by symmetry and convexity of $\hat H$, 
\[
 (d')^{-1}[-1,1]^{d'}\subseteq \hat H\subseteq [-4,4]^{d'}.
\]

Choose $\eps_d>0$ sufficiently small in terms of $d$.
By multilinearity of the determinant and applications of the triangle inequality, we can partition $[-4,4]^{d'}$
into a family $\caQ$ of $\Oh_d(1)$ boxes such that
\begin{equation}
 \bigl\lvert\det(\bfx_1,\ldots,\bfx_{d'})
       -\det(\bfy_1,\ldots,\bfy_{d'})\bigr\rvert\le \eps_d
 \label{eq:classify-determinant-rounding}
\end{equation}
whenever $\bfx_j,\bfy_j$ belong to the same box for all $j\in [d']$.
Writing $\bfm(Q)$ for the center of a box $Q\in \caQ$, define the good tuples of boxes,
\[
 \caT=\left\{(Q_1,\ldots,Q_{d'})\in \caQ^{d'}:
 \bigl\lvert\det(\bfm(Q_1),\ldots,\bfm(Q_{d'}))\bigr\rvert
 \ge 3 \eps_d\right\}.
\]
Then, for $(\bfx_1,\dots,\bfx_{d'}) \in  Q_1\times \dots \times Q_{d'}$, \eqref{eq:classify-determinant-rounding}
shows that $\lvert\det(\bfx_1,\ldots,\bfx_{d'})\rvert$ is at least $2 \eps_d$
if $(Q_1,\ldots,Q_{d'})\in \caT$, and less than $4 \eps_d$ otherwise.

Initialize $\hat F=\varnothing$ and $\hat W=\hat Y$.
While there is a tuple $(Q_1,\ldots,Q_{d'})\in \caT$ such that
$Q_j\cap \supp \hat W\ne \varnothing$ for all $j\in [d']$, choose
$\bfw_j\in Q_j\cap \supp \hat W$ for each $j\in [d']$, and extract $k$ occurrences
of each $\bfw_j$ from $\hat W$ to $\hat F$, where $k$ is the minimum multiplicity of $\bfw_j$ in $\hat W$ over all $j\in [d']$.
(The boxes $Q_1,\ldots,Q_{d'}$ must be distinct, since their centers have nonzero determinant.)
Finally return $F = M \hat F$ and $W = M \hat W$. Since each iteration decreases $\lvert\supp \hat W\rvert$ by at least one, this process can be implemented in near-linear time. It remains to show the claimed properties on $F$ and $W$.

Consider the tuple $(\bfw_1,\ldots,\bfw_{d'})$ extracted in an arbitrary iteration of the process, which satisfies
$\lvert\det(\bfw_1,\ldots,\bfw_{d'})\rvert\ge 2 \eps_d$.
Since $\bfw_j\in \hat H\subseteq [-4,4]^{d'}$  for all $j\in [d']$,
by Cramer's rule, every $\bfx\in \hat H$ can be written as $\bfx=\sum_{j=1}^{d'} t_j \bfw_j$ with $\lvert t_j\rvert\le \frac{(d')!4^{d'}}{2 \eps_d} = \Oh_d(1)$ for all $j\in [d']$.
Thus, for some sufficiently small $c>0$ depending on $d$,
\[
 c \hat H
 \subseteq \sum_{j=1}^{d'}[-\bfw_j/2,\bfw_j/2].
\]
Taking Minkowski sums over the $\lvert\hat F\rvert/d'$ extracted tuples gives
$(c/d')\lvert\hat F\rvert\hat H \subseteq \caZ_0(\hat F)$. Applying $M$ on both sides then proves the first claimed property for $c_{\mathrm{peel},d} = c/d$.

Now we prove the second claimed property.
If $\dim \spa \hat W<d'$, then $\vol_{d'}(\conv(\pm \hat W))=0$, and there is nothing to prove.
Assume 
$\dim \spa \hat W=d'$, and choose $\bfv_1,\ldots,\bfv_{d'}\in \supp \hat W$ maximizing
$\lvert\det(\bfv_1,\ldots,\bfv_{d'})\rvert$.
By maximality and Cramer's rule,
$ \conv(\pm \hat W)\subseteq
 \left\{\sum_{j=1}^{d'} t_j \bfv_j:\lvert t_j\rvert\le 1\right\}
 $.
 By the termination rule, the boxes containing $\bfv_1,\ldots,\bfv_{d'}$ form a tuple outside $\caT$,
so $\lvert\det(\bfv_1,\ldots,\bfv_{d'})\rvert<4 \eps_d$.
Consequently,
\[
 \vol_{d'}(\conv(\pm \hat W))
 \le 2^{d'}\lvert\det(\bfv_1,\ldots,\bfv_{d'})\rvert
 <4\cdot 2^{d'} \eps_d
 \le \tfrac{1}{2} \vol_{d'}(\hat H),
\]
where the last inequality follows from $\vol_{d'}(\hat H)\ge (2/d')^{d'}$
and the choice of $\eps_d$.
Applying $M$ on both sides then proves the second claimed property. 
\end{proof}

\section{Macbeath Regions}
\label{sec:macbeath-regions}

Recall the definition of Macbeath regions~\cite{macbeath}.

\begin{restatable}[Macbeath regions]{definition}{EuclideanMacbeathDefinition}
\label{def:macbeath-euc}
For a convex body $K\subseteq \R^d$, $\bfx\in K$, and $\lambda\ge 0$, the \emph{$\lambda$-scaled Macbeath region} at $\bfx$ is defined as
\[
 M_K^\lambda(\bfx)
 =\bfx+\lambda\bigl((K-\bfx)\cap (\bfx-K)\bigr).
\]
\end{restatable}
Macbeath regions are convex and centrally symmetric.

\subsection{Basic properties}
Ewald, Larman, and Rogers~\cite{ewaldlarmanrogers1970} proved the
\emph{expansion-containment} property of Macbeath regions.
In its parameterized form (Abdelkader and Mount~\cite[Lemma~1]{AbdelkaderM18}), for $\bfx,\bfy\in K$ and
$0<\lambda<1$,
if $M_K^\lambda(\bfx)\cap M_K^\lambda(\bfy)\ne \varnothing$, then for
$\alpha\ge 0$ and $\beta = \frac{2+\alpha(1+\lambda)}{1-\lambda}$, $ M_K^{\alpha \lambda}(\bfx)\subseteq M_K^{\beta \lambda}(\bfy)$.

We present two variants of the expansion-containment property. The proofs are similar to the original proof in \cite{ewaldlarmanrogers1970}.

The first variant relaxes the $\lambda<1$ restriction, but additionally requires the center of one Macbeath region to be contained in the other.

\begin{lemma}
\label{lem:center-containment-general}
Let $K\subseteq \R^d$ be a convex body, $\bfx,\bfy\in K$, and $\lambda\ge 0$. If $\bfx\in M_K^\lambda(\bfy)$, then, for $\alpha\ge 1$,
$ M_K^\alpha(\bfx)\subseteq M_K^{(2 \lambda+1) \alpha}(\bfy)$.
\end{lemma}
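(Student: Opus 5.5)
The plan is to unwind the definition of Macbeath regions, reducing the statement to a single containment between the symmetric bodies $S_{\bfx}\coloneqq (K-\bfx)\cap(\bfx-K)$ and $S_{\bfy}\coloneqq (K-\bfy)\cap(\bfy-K)$. That containment is then proved by writing suitable points as convex combinations of points of $K$. Both $S_{\bfx}$ and $S_{\bfy}$ are convex and origin-symmetric, and $M_K^\mu(\bfy)=\bfy+\mu S_{\bfy}$.

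First, the hypothesis $\bfx\in M_K^\lambda(\bfy)$ means $\bfx=\bfy+\lambda\bfw$ for some $\bfw\in S_{\bfy}$. In particular $\bfy-\bfw\in K$, which is the one fact about $\bfw$ the argument will use.

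Now take an arbitrary $\bfz\in M_K^\alpha(\bfx)$ and write $\bfz=\bfx+\alpha\bfu$ with $\bfu\in S_{\bfx}$, that is, $\bfx\pm\bfu\in K$. Then $\bfz-\bfy=\lambda\bfw+\alpha\bfu$.

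\textbf{Key step.} I claim $\bfu\in(1+\lambda)S_{\bfy}$, i.e.\ $\bfy\pm\bfu/(1+\lambda)\in K$. For either sign, direct expansion using $\bfx=\bfy+\lambda\bfw$ gives
\[
\bfy\pm\frac{\bfu}{1+\lambda}=\frac{1}{1+\lambda}(\bfx\pm\bfu)+\frac{\lambda}{1+\lambda}(\bfy-\bfw).
\]
The right-hand side is a convex combination of two points of $K$, so it lies in $K$ by convexity. Note that the \emph{same} auxiliary point $\bfy-\bfw$ works for both signs, so the $-$ sign case needs no separate argument.

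It remains to conclude. Since $\alpha\ge1$, $\zero\in S_{\bfy}$, and $S_{\bfy}$ is convex, we have $\lambda\bfw\in\lambda S_{\bfy}\subseteq\lambda\alpha S_{\bfy}$. By the key step, $\alpha\bfu\in\alpha(1+\lambda)S_{\bfy}$. Adding these and using convexity of $S_{\bfy}$,
\[
\bfz-\bfy\in\lambda\alpha S_{\bfy}+(1+\lambda)\alpha S_{\bfy}=(2\lambda+1)\alpha S_{\bfy},
\]
so $\bfz\in M_K^{(2\lambda+1)\alpha}(\bfy)$.

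The only nontrivial point is finding the convex-combination identity in the key step; the rest is bookkeeping, including the harmless use of $\alpha\ge1$ to absorb $\lambda\bfw$.
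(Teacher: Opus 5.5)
Your proof is correct and is essentially the paper's argument: your key identity $\bfy\pm\bfu/(1+\lambda)=\frac{1}{1+\lambda}(\bfx\pm\bfu)+\frac{\lambda}{1+\lambda}(\bfy-\bfw)$ is exactly the paper's first step. The only difference is in the final combination: the paper takes a convex combination inside $M_K^1(\bfy)$, getting the intermediate scale $\lambda+\alpha(\lambda+1)$, and uses $\alpha\ge 1$ only at the end, whereas you use $\alpha\ge 1$ first to absorb $\lambda\bfw$ and then add the dilates of $S_{\bfy}$.
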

\begin{proof}
Since $\bfx \in M_K^\lambda(\bfy)$, we have $\bfx=\bfy+\lambda \bfu$ for some
$\bfu$ with $\bfy\pm \bfu\in K$.
For every $\bfv$ with $\bfx\pm \bfv\in K$, convexity of $K$ gives
\[
 \bfy\pm \frac{\bfv}{\lambda+1}
 =\frac{\lambda(\bfy-\bfu)+(\bfx\pm \bfv)}{\lambda+1}\in K.
\]
Thus both $\bfy+\bfu$ and $\bfy+\bfv/(\lambda+1)$ lie in $M_K^1(\bfy)$.
Then, convexity of $M_K^1(\bfy)$ gives
\[
 \bfy+\frac{\bfx+\alpha \bfv-\bfy}{\lambda+\alpha(\lambda+1)}
 =\frac{\lambda}{\lambda+\alpha(\lambda+1)}(\bfy+\bfu)
 +\frac{\alpha(\lambda+1)}{\lambda+\alpha(\lambda+1)}
   \left(\bfy+\frac{\bfv}{\lambda+1}\right)
 \in M_K^1(\bfy).
\]
Since every point of $M_K^\alpha(\bfx)$ has the form $\bfx+\alpha \bfv$
with $\bfx\pm \bfv\in K$, this gives
\[
 M_K^\alpha(\bfx)
 \subseteq M_K^{\lambda+\alpha(\lambda+1)}(\bfy)
 \subseteq M_K^{(2 \lambda+1) \alpha}(\bfy). \qedhere
\]
\end{proof}

The second variant involves two shrunken Macbeath regions whose images modulo a lattice $\Lambda$ have nonempty intersection.
In general, this condition alone does not imply expansion containment, but an analogous property holds under a balance condition on one of the centers, defined below.

\begin{definition}[Balancedness]
\label{def:balanced-center}
Let $K\subseteq \R^d$ be a convex body, $\Lambda\subseteq \R^d$ be a
lattice, and $\gamma\ge 1$. A point $\bfy\in K$ is
\emph{$\gamma$-balanced in $K$ modulo $\Lambda$} if
$K\cap (\bfy+\Lambda)\subseteq M_K^{\gamma}(\bfy)$.
\end{definition}

\begin{lemma}
\label{lem:quotient-expansion-containment}
Let $K\subseteq \R^d$ be a convex body, $\Lambda\subseteq \R^d$ be a lattice,
$\bfx,\bfy\in K$, and $\gamma\ge 1$. If $\bfy$ is $\gamma$-balanced modulo $\Lambda/2$
and $\pi_\Lambda(M_K^{1/2}(\bfx))\cap \pi_\Lambda(M_K^{1/2}(\bfy))\ne \varnothing$, then, for $\alpha\ge 1$,
$M_K^\alpha(\bfx)\subseteq M_K^{13 \gamma \alpha}(\bfy)$.
\end{lemma}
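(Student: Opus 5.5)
The plan is to reduce to \cref{lem:center-containment-general}. It suffices to show $\bfx\in M_K^{c}(\bfy)$ for some $c\le 6\gamma$. Then \cref{lem:center-containment-general} gives $M_K^\alpha(\bfx)\subseteq M_K^{(2c+1)\alpha}(\bfy)\subseteq M_K^{13\gamma\alpha}(\bfy)$ for all $\alpha\ge 1$, using $\gamma\ge 1$. Throughout, write $S_{\bfy}=(K-\bfy)\cap(\bfy-K)$, so that $M_K^\mu(\bfy)=\bfy+\mu S_{\bfy}$.

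First unpack the hypothesis. The nonempty intersection of the images gives $\bfu,\bfv$ with $\bfx\pm\bfu\in K$ and $\bfy\pm\bfv\in K$, together with $\bflambda\in\Lambda$, such that $\bfx+\bfu/2=\bfy+\bfv/2+\bflambda$. The key trick is to use the \emph{half} lattice. The point
\[\bfz\coloneqq \bfy+\bflambda/2=\tfrac12\bigl((\bfx+\bfu/2)+(\bfy-\bfv/2)\bigr)\]
is a midpoint of two points of $K$, so $\bfz\in K$. It also lies in $\bfy+\Lambda/2$. Balancedness modulo $\Lambda/2$ then gives $\bfz\in M_K^\gamma(\bfy)$, that is, $\bflambda/2\in\gamma S_{\bfy}$. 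This is exactly why the hypothesis is stated modulo $\Lambda/2$ rather than $\Lambda$: the congruence with $\bfy$ can only be transferred to a point of $K$ through a midpoint. Consequently
\[\bfx+\bfu/2=\bfy+\bfv/2+2(\bflambda/2)\in \bfy+(2\gamma+\tfrac12)S_{\bfy},\]
so the ``far end'' $\bfx+\bfu/2$ of the scaled region at $\bfx$ lies in $M_K^{2\gamma+1/2}(\bfy)$.

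The remaining step, which I expect to be the main obstacle, is to pass from $\bfx+\bfu/2$ to $\bfx$ itself. The difficulty is that $\bfu$ is controlled only by $\bfx$'s Macbeath region and not by $S_{\bfy}$, and a naive convex combination controls $\bfx-\bfy$ in only one direction. Writing $\bfw=\bfx-\bfy$ and $\bfa=\bfw+\bfu/2\in(2\gamma+\tfrac12)S_{\bfy}$, we have $\bfy+\bfw+\bfu/2\in K$ and $\bfy+\bfw-\bfu/2\in K$. These bound $\bfy+\bfw/c$ easily, but $\bfy-\bfw/c$ needs more.

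To handle the opposite direction, I would run the same midpoint argument a second time with the roles of $\pm\bfu$ and $\pm\bfv$ swapped. Concretely, the congruence also reads $\bfx-\bfu/2\equiv \bfy+\bfv/2-\bfu\pmod\Lambda$. Alternatively one can use the symmetric consequence $\bfy-\bflambda/2\in M_K^\gamma(\bfy)$, which holds because $M_K^\gamma(\bfy)$ is centrally symmetric about $\bfy$, to realize $\bfy-\bfw/c$ as a convex combination of $\bfy-\bfa/(2\gamma+\frac12)$, $\bfy\mp\bflambda/(2\gamma)$, $\bfy\pm\bfv$ and $\bfx-\bfu/2$. If the coefficients can be arranged with $c\approx 4\gamma+1\le 6\gamma$, then $\bfx\in M_K^{6\gamma}(\bfy)$ follows, and the reduction above finishes the proof.

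Should the convex-combination bookkeeping fail to close, the fallback is to argue as in the Ewald--Larman--Rogers proof. There one shows directly that $\bfx+\alpha\bfq\in M_K^{13\gamma\alpha}(\bfy)$ for every $\bfq$ with $\bfx\pm\bfq\in K$, by expressing $\bfy+(\bfx+\alpha\bfq-\bfy)/(13\gamma\alpha)$ as a convex combination of $\bfz$, $2\bfy-\bfz$, $\bfy\pm\bfv$ and $\bfx\pm\bfq$. The constant $13$ is expected to be just the slack that this combination needs.
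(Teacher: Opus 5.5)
\textbf{There is a genuine gap, at the step you flag as the main obstacle, though it closes with one extra point.} Your reduction to \cref{lem:center-containment-general} and your midpoint step are correct and match the paper. The point $\bfz=\bfy+\bflambda/2$ lies in $K\cap(\bfy+\Lambda/2)$, balancedness gives $\bfy\pm\bflambda/(2\gamma)\in K$, and hence $\bfa=\bfx+\bfu/2-\bfy\in(2\gamma+\tfrac12)S_{\bfy}$. But you never show $\bfy-\bfw/c\in K$, and the convex combination you suggest cannot produce it.

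To see this, take $\bfw,\bfu,\bfv$ linearly independent (the generic case). Let $\phi$ be a linear functional with $\phi(\bfw)=1$, $\phi(\bfu)=-2$, $\phi(\bfv)=0$. Measured from $\bfy$, your listed points have displacements $-\bfa/(2\gamma+\tfrac12)$, $\mp\bflambda/(2\gamma)$, $\pm\bfv$, $\bfw-\bfu/2$ (and $\zero$). On these $\phi$ takes the values $0,0,0,2$ (and $0$). The target displacement $-\bfw/c$ has $\phi=-1/c<0$, so no choice of coefficients works.

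Your fallback has the same deficiency. Its only points carrying a $\bfu$-component, $\bfz$ and $2\bfy-\bfz$, lie on the line $\bfy+\R\bflambda$, so in any combination with zero $\bfu$-component they cancel out. The $\bfw$-coefficient then comes only from $\bfx\pm\bfq$ and is at least the absolute value of the $\bfq$-coefficient. That rules out both $\bfy\pm(\bfx+\alpha\bfq-\bfy)/(13\gamma\alpha)$ once $\alpha>1$. Moreover, $2\bfy-\bfz$ lies in $M_K^\gamma(\bfy)$ but need not lie in $K$ when $\gamma>1$.

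\textbf{The fix.} You need the opposite endpoint $\bfx+\bfu\in K$, whose displacement $\bfw+\bfu$ has $\phi=-1$. With it, your own bound on $\bfa$ finishes in one line:
\[
\bfy-\frac{\bfx-\bfy}{4\gamma+2}=\frac{4\gamma+1}{4\gamma+2}\Bigl(\bfy-\frac{\bfa}{2\gamma+\frac12}\Bigr)+\frac{1}{4\gamma+2}\,(\bfx+\bfu)\in K.
\]
After expanding $\bfa=\bflambda+\bfv/2$, this is exactly the paper's combination $\bigl(4\gamma(\bfy-\bflambda/(2\gamma))+(\bfy-\bfv)+(\bfx+\bfu)\bigr)/(4\gamma+2)$. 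Also $\bfy+(\bfx-\bfy)/(4\gamma+2)\in K$ by convexity between $\bfy$ and $\bfx$. Together these give $\bfx\in M_K^{4\gamma+2}(\bfy)$, so $c=4\gamma+2\le 6\gamma$. Then \cref{lem:center-containment-general} yields $M_K^\alpha(\bfx)\subseteq M_K^{(8\gamma+5)\alpha}(\bfy)\subseteq M_K^{13\gamma\alpha}(\bfy)$.
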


\begin{proof}
The intersection hypothesis gives
$\bfy+\bfh/2=\bfx+\bfh'/2+\boldsymbol{\ell}$ with
$\bfy\pm \bfh,\bfx\pm \bfh'\in K$ and $\boldsymbol{\ell}\in \Lambda$.
Then, by convexity of $K$,
\[
 \bfc\coloneqq \bfy-\boldsymbol{\ell}/2
 =\frac{1}{2}\left(\bfx+\frac{\bfh'}{2}\right)
  +\frac{1}{2}\left(\bfy-\frac{\bfh}{2}\right)\in K.
\]
Since $\bfc\in K\cap (\bfy+\Lambda/2)$ and $\bfy$ is $\gamma$-balanced modulo $\Lambda/2$, we have $\bfc \in M_K^{\gamma}(\bfy)$, that is,
$\bfy\pm (\bfc-\bfy)/\gamma\in K$. Hence, by convexity of $K$,
\[
 \bfy-\frac{\bfx-\bfy}{4 \gamma+2}
 =\frac{4 \gamma\bigl(\bfy-(\bfc-\bfy)/\gamma\bigr)+(\bfy-\bfh)+(\bfx+\bfh')}{4 \gamma+2}\in K.
\]
Since $\bfy,\bfx\in K$, this gives $\bfx\in M_K^{4 \gamma+2}(\bfy)$.
Then, by \cref{lem:center-containment-general},
\[
 M_K^\alpha(\bfx)\subseteq M_K^{(8 \gamma+5) \alpha}(\bfy)
 \subseteq M_K^{13 \gamma \alpha}(\bfy). \qedhere
\]
\end{proof}

As an immediate corollary, we obtain the
covering property of the quotient variant of Macbeath nets:
\begin{corollary}
\label{cor:quotient-macbeath-cover}
Let $K\subseteq \R^d$ be a convex body, $\Lambda\subseteq \R^d$ be a lattice,
and $\gamma\ge 1$. Let $X\subseteq K$ be such that every $\bfx\in X$ is
$\gamma$-balanced in $K$ modulo $\Lambda/2$. Let $Y\subseteq X$ be
inclusion-maximal such that $\pi_\Lambda(M_K^{1/2}(\bfy))$, $\bfy\in Y$,
are pairwise disjoint. Then, for every $\alpha\ge 1$,
\[
 \bigcup_{\bfx\in X} M_K^\alpha(\bfx)
 \subseteq \bigcup_{\bfy\in Y} M_K^{13 \gamma \alpha}(\bfy).
\]
\end{corollary}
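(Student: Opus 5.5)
The plan is to derive the corollary directly from \cref{lem:quotient-expansion-containment}, using the maximality of $Y$ pointwise.

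Fix $\alpha\ge 1$ and an arbitrary $\bfx\in X$. It suffices to find some $\bfy\in Y$ with $M_K^\alpha(\bfx)\subseteq M_K^{13\gamma\alpha}(\bfy)$. We split into two cases according to whether $\bfx$ itself belongs to $Y$.

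\emph{Case $\bfx\in Y$.} Take $\bfy=\bfx$. Since $\gamma\ge 1$, we have $13\gamma\alpha\ge\alpha$. Macbeath regions at the same center are nested increasingly in the scale, because $(K-\bfx)\cap(\bfx-K)$ is convex and contains $\zero$. Hence $M_K^\alpha(\bfx)\subseteq M_K^{13\gamma\alpha}(\bfx)$.

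\emph{Case $\bfx\notin Y$.} By inclusion-maximality of $Y$, the family $Y\cup\{\bfx\}\subseteq X$ does not have pairwise disjoint regions $\pi_\Lambda(M_K^{1/2}(\cdot))$. Since the regions indexed by $Y$ are already pairwise disjoint, the offending pair must involve $\bfx$. So there is some $\bfy\in Y$ with
\[
\pi_\Lambda(M_K^{1/2}(\bfx))\cap\pi_\Lambda(M_K^{1/2}(\bfy))\neq\varnothing .
\]
Because $\bfy\in Y\subseteq X$, the point $\bfy$ is $\gamma$-balanced in $K$ modulo $\Lambda/2$ by hypothesis. We also have $\bfx,\bfy\in K$. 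Thus all hypotheses of \cref{lem:quotient-expansion-containment} hold, and it yields $M_K^\alpha(\bfx)\subseteq M_K^{13\gamma\alpha}(\bfy)$.

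Taking the union over all $\bfx\in X$ gives the claimed containment.

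There is no real obstacle here. The only points needing care are that the balancedness hypothesis is required at the \emph{second} center $\bfy$, which is guaranteed because $Y\subseteq X$, and the degenerate case $\bfx\in Y$, which is handled by monotonicity in the scale.
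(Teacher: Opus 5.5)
Your proof is correct and matches the paper's argument: maximality of $Y$ gives some $\bfy\in Y$ whose $\tfrac12$-scaled Macbeath region meets that of $\bfx$ modulo $\Lambda$. Then \cref{lem:quotient-expansion-containment}, applied at the balanced center $\bfy$, finishes the proof. Your separate case $\bfx\in Y$ is harmless but unnecessary, since taking $\bfy=\bfx$ already satisfies the lemma's intersection hypothesis.
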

\begin{proof}
By maximality, for every $\bfx\in X$ there exists $\bfy\in Y$ such that
$\pi_\Lambda(M_K^{1/2}(\bfx))\cap \pi_\Lambda(M_K^{1/2}(\bfy))\ne \varnothing$.
Since $\bfy$ is $\gamma$-balanced in $K$ modulo $\Lambda/2$,
\cref{lem:quotient-expansion-containment} gives
$M_K^\alpha(\bfx)\subseteq M_K^{13 \gamma \alpha}(\bfy)$.
\end{proof}

The following lemma will be useful later for packing arguments; it states that,
under a mild condition, 
the Macbeath region of a shallow point remains shallow in a modulo-$\Lambda$ sense.
\begin{lemma}
\label{lem:shallow-macbeath-region}
Let $K,D\subseteq \R^d$ be convex bodies, with $D$ origin-symmetric,
let $\Lambda\subseteq \R^d$ be a lattice, and let $\bfy\in K$.
If $(\bfy+\Lambda/2)\cap (K\ominus D)=\varnothing$, then
$\pi_\Lambda(M_K^1(\bfy))
 \subseteq \pi_\Lambda K\setminus \pi_\Lambda(K\ominus 2D)$.
\end{lemma}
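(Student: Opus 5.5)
The containment $\pi_\Lambda(M_K^1(\bfy))\subseteq \pi_\Lambda K$ is immediate, because $M_K^1(\bfy)\subseteq K$. So the content is: no point of $M_K^1(\bfy)$ is congruent modulo $\Lambda$ to a point of $K\ominus 2D$. I will argue by contradiction. Suppose $\bfz\in M_K^1(\bfy)$ and $\boldsymbol{\ell}\in\Lambda$ satisfy $\bfz+\boldsymbol{\ell}\in K\ominus 2D$. The plan is to manufacture a point of $(\bfy+\Lambda/2)\cap(K\ominus D)$, contradicting the hypothesis. The natural candidate is a midpoint, $\bfw=\bfy+\boldsymbol{\ell}/2$, which lies in $\bfy+\Lambda/2$.

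First, write $\bfz=\bfy+\bfh$ with $\bfy\pm\bfh\in K$. In particular $\bfy-\bfh\in K$. Next, decompose $\bfw$ as an average:
\[
\bfw=\tfrac12\bigl(\bfy-\bfh\bigr)+\tfrac12\bigl(\bfy+\bfh+\boldsymbol{\ell}\bigr)=\tfrac12(\bfy-\bfh)+\tfrac12(\bfz+\boldsymbol{\ell}).
\]
Now fix $\bfd\in D$. Since $\bfz+\boldsymbol{\ell}\in K\ominus 2D$, we have $\bfz+\boldsymbol{\ell}+2\bfd\in K$. Then
\[
\bfw+\bfd=\tfrac12(\bfy-\bfh)+\tfrac12(\bfz+\boldsymbol{\ell}+2\bfd)\in K
\]
by convexity of $K$. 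Hence $\bfw+D\subseteq K$, i.e.\ $\bfw\in K\ominus D$. Together with $\bfw\in\bfy+\Lambda/2$, this contradicts $(\bfy+\Lambda/2)\cap(K\ominus D)=\varnothing$.

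There is no real obstacle here. The only points needing care are that the reflected point $\bfy-\bfh$ lies in $K$, which is exactly what membership in $M_K^1(\bfy)$ gives, and that the factor $2$ in $2D$ is absorbed by the $\tfrac12$ weight in the midpoint. Origin-symmetry of $D$ is not even needed for this direction.
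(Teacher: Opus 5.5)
Your proof is correct and is essentially the paper's argument. Both write $\bfy+\boldsymbol{\ell}/2$ as the midpoint of the reflected point $2\bfy-\bfz\in K$ and the point $\bfz+\boldsymbol{\ell}\in K\ominus 2D$, then use convexity to get $\bfy+\boldsymbol{\ell}/2+D\subseteq K$, which contradicts the hypothesis.
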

\begin{proof}
Let $\bfx\in M_K^1(\bfy)$, so $\bfx,2 \bfy-\bfx\in K$.
If $\bfx+\boldsymbol{\ell}\in K\ominus 2D$ for some $\boldsymbol{\ell}\in \Lambda$,
then convexity of $K$ gives
\[
 \bfy+\frac{\boldsymbol{\ell}}{2}+D
 =\frac{1}{2}\bigl((\bfx+\boldsymbol{\ell}+2D)+(2 \bfy-\bfx)\bigr)
 \subseteq K,
\]
contradicting $(\bfy+\Lambda/2)\cap (K\ominus D)=\varnothing$.
\end{proof}

\subsection{Finding a balanced center}

Later applications of \cref{cor:quotient-macbeath-cover}  require finding a balanced center to start with.
The following lemma achieves this.

\begin{lemma}
\label{lem:balanced-center}
For any $d\ge 1$, there are $q_d\in \Z_{\ge 1}$ and $\gamma_d\ge 1$ such that the following holds.

Let $K\subseteq \R^d$ be a convex body, $\Lambda\subseteq \R^d$ be a lattice, and $\bfx_0\in K$.
There exists $\bfy\in K\cap (\bfx_0+q_d^{-1} \Lambda)$ that is $\gamma_d$-balanced in $K$ modulo $\Lambda$, and satisfies
$ K\cap (\bfx_0+\Lambda)\subseteq M_K^{\gamma_d}(\bfy)$.

If $K$ is a rational polytope given by a linear optimization oracle and
$\Lambda,\bfx_0$ are rational, then one can compute such $\bfy$ in
$b_0^{\Oh_d(1)} \Tquery(K)$ time, where $b_0=\maxbitlen(K,\Lambda,\bfx_0)$.
\end{lemma}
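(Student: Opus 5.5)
The plan is to reduce to the full-rank situation and then choose $\bfy$ as a centroid-like point of the fiber $K\cap(\bfx_0+\Lambda)$, rounded to the finer lattice $q_d^{-1}\Lambda$.

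\textbf{Reduction.} Write $V=\spa(\Lambda)$ and $d_0=\rk\Lambda$. The fiber $K\cap(\bfx_0+V)$ is a convex set in an affine $d_0$-dimensional space. Both balancedness and the containment $K\cap(\bfx_0+\Lambda)\subseteq M_K^{\gamma}(\bfy)$ only involve points of $\bfx_0+V$. Moreover, for $\bfy\in\bfx_0+V$ we have $M_K^\gamma(\bfy)\cap(\bfx_0+V)=M_{K_V}^\gamma(\bfy)$ with $K_V=K\cap(\bfx_0+V)$, because $\bfy\pm\bfv\in K$ with $\bfv\in V$ is an intrinsic condition. So we may work in $K_V$, which can be lower dimensional but is harmless.

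\textbf{Choice of $\bfy$.} Apply \cref{lem:macbeath-center} to $K_V$ (inside its affine hull) to get a parallelepiped $B$ and center $\bfc$ with $\bfc+B\subseteq K_V\subseteq\bfc+\beta_d B$. Then $K_V\subseteq M^{\beta_d}_{K_V}(\bfc)\subseteq M^{\beta_d}_K(\bfc)$, so $\bfc$ is $\beta_d$-balanced, and in fact every point of the fiber lies in $M_K^{\beta_d}(\bfc)$. However, $\bfc$ need not be congruent to $\bfx_0$ modulo $q_d^{-1}\Lambda$, so we must round.

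\textbf{Rounding (main obstacle).} Let $L=\bfx_0+\Lambda$. If $K_V\cap L=\varnothing$, the containment claim is vacuous, but we still need $\bfy\in K\cap(\bfx_0+q_d^{-1}\Lambda)$; taking $\bfy=\bfx_0$ works when the fiber contains no other lattice translate, up to the balancedness condition. In general, we apply the discrete John lemma (\cref{lem:latticeboxinandout}) to the symmetric body $B$ and the lattice $\Lambda$, restricted to the subspace $\spa(B\cap\Lambda)$, to obtain a basis adapted to $B$. We then split into two cases.
\begin{itemize}
\item If $2\beta_d B$ contains $d_0$ independent vectors of $\Lambda$, then the fundamental parallelepiped of $\Lambda$ lies in $O_d(1)\cdot B$. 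Scaling by $q_d^{-1}$ with $q_d=\Theta_d(1)$ large puts a point $\bfy\in\bfx_0+q_d^{-1}\Lambda$ inside $\bfc+\tfrac12B$.
\item Otherwise, we handle the directions where $\Lambda$ is sparse relative to $B$. There, all points of the fiber congruent to $\bfx_0$ lie in a lower-dimensional slab, and we recurse on dimension by replacing $K_V$ with $K\cap(\bfx_0+V')$ for a smaller sublattice span $V'$, in the style of the filtration argument in \cref{lem:local-properization}.
\end{itemize}
In the first case, once $\bfy\in\bfc+\tfrac12 B$, \cref{lem:center-containment-general} (with $\bfy\in M_{K}^{1/2}(\bfc)$) gives $M_K^{\beta_d}(\bfc)\subseteq M_K^{2\beta_d}(\bfy)$. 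Hence every point of $K\cap(\bfy+\Lambda)$ and of $K\cap(\bfx_0+\Lambda)$ lies in $M_K^{2\beta_d}(\bfy)$, which proves both claims with $\gamma_d=O_d(1)$. The recursion depth is at most $d$, so the constants remain $d$-dependent. We expect this case analysis, namely getting the rounding to stay deep while keeping $\bfy$ in the right coset, to be the delicate part.

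\textbf{Algorithm.} Compute $V$ from the HNF of $\Lambda$, build an oracle for $K_V$ via \cref{lem:constrained-optimization}, compute $B,\bfc$ by \cref{lem:macbeath-center}, and compute the reduced basis by \cref{lem:latticeboxinandout}. The coset point near $\bfc$ is then found by rounding coordinates in that basis, or via \cref{lem:lattice-optimization} applied to $(\bfc+\tfrac12B)\cap(\bfx_0+q_d^{-1}\Lambda)$. Each step costs $b_0^{O_d(1)}\Tquery(K)$.
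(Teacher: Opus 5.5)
Your route is different from the paper's and can be completed, but as written the case that carries all the difficulty (your second bullet) is not argued, so there is a gap there.

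\textbf{The gap.} You never define $V'$. The one property you state is that the points of $K_V$ congruent to $\bfx_0$ lie in a lower-dimensional affine subspace. That only serves the containment $K\cap(\bfx_0+\Lambda)\subseteq M_K^{\gamma_d}(\bfy)$. Balancedness concerns the coset $\bfy+\Lambda$ of the \emph{output} point, which is congruent to $\bfx_0$ only modulo $q_d^{-1}\Lambda$. You give no reason why a point balanced for the subproblem $\bigl(K\cap(\bfx_0+V'),\,\Lambda\cap V'\bigr)$ is balanced in $K$ modulo the full lattice $\Lambda$.

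\textbf{The fix.} What is missing is a trapping statement that holds for every coset through $K_V$. Put $V'=\spa(\Lambda\cap 2\beta_d B)$, and let $\bfy\in K_V$ and $\bfz\in K\cap(\bfy+\Lambda)$. Then $\bfz\in K_V$, because $\bfz-\bfy\in\Lambda\subseteq V$. Hence $\bfz-\bfy\in\Lambda\cap(K_V-K_V)\subseteq\Lambda\cap 2\beta_d B\subseteq V'$. So $K\cap(\bfy+\Lambda)=K\cap\bigl(\bfy+(\Lambda\cap V')\bigr)$ for every $\bfy\in K_V$. This applies both to $\bfy=\bfx_0$ and to the recursively returned $\bfy\in\bfx_0+q_d^{-1}(\Lambda\cap V')$. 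Combined with your correct remark that Macbeath regions of the section are sections of Macbeath regions, both properties lift from the subproblem to $(K,\Lambda)$ with the \emph{same} $q_d$ and $\gamma_d$. The recursion depth is at most $\rk\Lambda\le d$. Incidentally, the case $K_V\cap(\bfx_0+\Lambda)=\varnothing$ that you discuss cannot occur, since $\bfx_0$ lies in it.

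\textbf{A slip in the first case.} You feed $\bfy\in M_K^{1/2}(\bfc)$ into \cref{lem:center-containment-general}. That yields $M_K^{\alpha}(\bfy)\subseteq M_K^{2\alpha}(\bfc)$, which is the opposite inclusion to the one you need. Use instead $\bfc\in M_K^{1}(\bfy)$, which holds since $2\bfy-\bfc\in\bfc+B\subseteq K$. Alternatively argue directly: $\bfy+\tfrac12 B\subseteq K_V\subseteq\bfy+(\beta_d+\tfrac12)B$ gives $K_V\subseteq M_K^{2\beta_d+1}(\bfy)$.

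\textbf{An unnecessary tool.} The discrete John lemma is not needed, and its spanning hypothesis fails in your second case anyway. Rounding $\bfc-\bfx_0$ in any $\rk\Lambda$ independent vectors of $\Lambda\cap 2\beta_d B$, with $q_d\ge 2d\beta_d$, already lands in $\bfc+\tfrac12 B$. Deciding which case you are in is a single call to \cref{lem:lattice-span}.

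\textbf{Comparison with the paper.} The paper runs the same idea bottom-up.
\begin{itemize}
\item Translating so that $\bfx_0=\zero$, it starts from $\spa(K\cap\Lambda)$ and grows a subspace $V_i$.
\item It maintains a basis of $V_i$ consisting of vectors in $\Lambda\cap(F_i-F_i)$, where $F_i=K\cap V_i$. Rounding the Macbeath center of $F_i$ in this basis is therefore always accurate, and no case split arises.
\item It then searches for a point of $K\cap(\bfy_i+\Lambda)$ outside $V_i$. If one exists, it adjoins the resulting short lattice vector and repeats.
\end{itemize}
Your top-down version instead shrinks $\spa\Lambda$ to the span of short lattice vectors until they span it. Its trapping argument handles all cosets through $K_V$ at once, so you never search for an escaping coset point. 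Both versions finish in at most $d$ rounds using the same oracle primitives.
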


The proof proceeds as follows.
Start with the affine space $V=\bfx_0+\spa((K-\bfx_0)\cap \Lambda)$. 
Pick a point $\bfy\in \bfx_0+q_d^{-1} \Lambda$ near the center of $K\cap V$. 
If $K\cap (\bfy+\Lambda)$ contains a point outside $V$, we enlarge $V$ to include that point and repeat; otherwise, $\bfy$ has the required properties.
\begin{proof}[Proof of \cref{lem:balanced-center}]
By translation, we may assume $\bfx_0=\zero\in K$ without loss of generality.

Choose an integer $\beta = \Oh_d(1)$ so that $\beta \ge \max \{\beta_1,\ldots,\beta_d\}$
from \cref{lem:macbeath-center}. Define
$q_d=2d \beta$ and $\gamma_d=2 \beta+1$.
Use \cref{lem:lattice-span}
to compute $\bfv_1,\ldots,\bfv_{d'}\in K\cap \Lambda$ that form a  basis of $\spa(K\cap \Lambda)$.

We proceed in iterations $i=d',d'+1,\ldots,\rk \Lambda$.
At the start of iteration $i$, we have linearly independent vectors
$\bfv_1,\ldots,\bfv_i$,  and write
$V_i=\spa(\bfv_1,\ldots,\bfv_i)$ and $F_i=K\cap V_i$.
We maintain 
\begin{equation}
\label{eq:balanced-center-section}
 \bfv_j\in \Lambda\cap (F_i-F_i) \text{ for all } 1\le j\le i.
\end{equation}
Initially, $i=d'$, and 
\eqref{eq:balanced-center-section}
holds since $\zero\in K$.

In iteration $i$, apply \cref{lem:macbeath-center} to $F_i\subseteq V_i$, obtaining
$\bfc\in F_i$ and an origin-symmetric parallelepiped $B$ with
$\bfc+B\subseteq F_i\subseteq \bfc+\beta B$.
Write $\bfc=\sum_{j=1}^{i} t_j \bfv_j$, and let $n_j$ be the nearest integer to $q_d t_j$, for each $1\le j\le i$.
 Define $\bfy_i=q_d^{-1} \sum_{j=1}^{i} n_j \bfv_j \in V_i \cap q_d^{-1} \Lambda$.
By $ F_i \subseteq \bfc + \beta B$ and \eqref{eq:balanced-center-section}, $\bfv_j\in 2 \beta B$ for $1\le j\le i$, so
\[
 \bfy_i-\bfc = q_d^{-1} \sum_{j=1}^i (n_j - q_d t_j) \bfv_j\in q_d^{-1} i \beta B\subseteq \frac{1}{2} B.
\]
Combining with $\bfc + B\subseteq F_i \subseteq \bfc+\beta B$, this gives
$\bfy_i+B/2\subseteq F_i\subseteq \bfy_i+(\beta+1/2) B$. Since $F_i \subseteq K$ and $\gamma_d = 2 (\beta +1/2)$, this gives
\begin{equation}
\label{eq:balanced-center-rounding1}
 F_i\subseteq M_K^{\gamma_d}(\bfy_i).
\end{equation}
From $\bfy_i +B/2\subseteq F_i$ we also get
\begin{equation}
\label{eq:balanced-center-rounding2}
 \bfy_i\in F_i\cap q_d^{-1} \Lambda \subseteq K\cap q_d^{-1} \Lambda.
\end{equation}

Use \cref{lem:lattice-span} to
search for $\bfx\in K\cap (\bfy_i+\Lambda)$ outside $V_i$.
\begin{itemize}
  \item 
If such $\bfx$ exists, append $\bfv_{i+1}=\bfx-\bfy_i \in \Lambda$ and continue
to iteration $i+1$.
Since $\bfv_{i+1}\notin  V_i$, appending it preserves the linear independence of $\bfv_1,\dots,\bfv_{i+1}$.
We have 
$\bfx \in F_{i+1}$, and $\bfy_i \in F_i \subseteq F_{i+1}$,
 so $\bfv_{i+1}=\bfx-\bfy_i\in F_{i+1}-F_{i+1}$, and
\eqref{eq:balanced-center-section} holds at iteration $i+1$.
\item If no such point $\bfx$ exists, return $\bfy=\bfy_i$, which satisfies the congruence requirement due to 
\eqref{eq:balanced-center-rounding2}.
In this case, $K\cap (\bfy_i+\Lambda)\subseteq K\cap V_i = F_i$, so
\eqref{eq:balanced-center-rounding1} proves that $\bfy_i$ is $\gamma_d$-balanced in $K$ modulo $\Lambda$ as claimed. 
\Cref{eq:balanced-center-rounding1} and $K\cap \Lambda\subseteq F_{d'}\subseteq F_i$ yield the other claimed containment 
$ K\cap \Lambda \subseteq M_K^{\gamma_d}(\bfy_i)$.
\end{itemize}

The time and maximum bit length bounds follow from 
\cref{lem:constrained-optimization,lem:macbeath-center,lem:lattice-span}.
\end{proof}

\subsection{Constructing a cap}

For a convex body $K\subseteq \R^d$,  a \emph{cap} of $K$ is the intersection of $K$ with a halfspace,  $Q=K\cap \{\bfx\in \R^d:\langle\bfu,\bfx\rangle\ge a\}$, and is represented by its defining halfspace $(\bfu,a)$.

It is standard to use caps to approximate Macbeath regions, and vice versa; see e.g., \cite{ewaldlarmanrogers1970,BronnimannCP93,aryafonsecamount2017}.
A classic lemma \cite[Lemma 2.3]{aryafonsecamount2017} states that a sufficiently shallow cap is contained in a dilated Macbeath region centered at the centroid of the cap base.

Here we use a slight variant that is more convenient for our purpose. Given a prescribed center, the lemma constructs a cap that approximates the Macbeath region at this center.

\begin{lemma}
\label{lem:macbeath-cap}
Let $K\subseteq \R^d$ be a convex body and let
$\bfy$ be an interior point of $K$.
There exists $\bfu\ne \zero$ such that, for every $t\ge 1$, the cap
$Q_t=\{\bfx\in K:\langle\bfu,\bfx-\bfy\rangle\ge -t\}$ satisfies
\[
 K\cap M_K^t(\bfy)\subseteq Q_t\subseteq M_K^{(4d+3) t}(\bfy).
\]

If $K$ is a rational polytope given by a linear optimization oracle and
$\bfy$ is rational, one can compute $\bfu$ with
$\maxbitlen(\bfu)=\Oh_d(b_0)$ in $b_0^{\Oh_d(1)} \Tquery(K)$ time,
where $b_0=\maxbitlen(K,\bfy)$.
\end{lemma}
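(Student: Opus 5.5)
The plan is to translate so that $\bfy=\zero$ and work with the origin-symmetric convex body $M\coloneqq M_K^1(\zero)=K\cap(-K)$, so that $M_K^t(\zero)=tM$. The lower containment $K\cap tM\subseteq Q_t$ holds for every $t\ge 1$ exactly when $\langle\bfu,\bfm\rangle\ge -1$ for all $\bfm\in M$. Since $M$ is symmetric, this says $\bfu\in M^\circ$. So I will search for $\bfu$ in $M^\circ$ and normalize it to lie on $\partial M^\circ$, i.e.\ $h_M(\bfu)=1$.

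\textbf{Choice of $\bfu$.} Among all $\bfu$ with $h_M(\bfu)=1$, take one minimizing $h_K(\bfu)=\max_{\bfx\in K}\langle\bfu,\bfx\rangle$. This is the direction in which $\zero$ is relatively closest to $\partial K$. Such a minimizer exists by compactness, since $\zero$ is interior. The intended picture is that $K$ lies in the slab $\{\langle\bfu,\bfx\rangle\le h\}$ with $h=h_K(\bfu)\le 1$, and the cap $Q_t$ is the part of $K$ with $\langle\bfu,\bfx\rangle\in[-t,h]$. The inequality $h\le 1$ holds because $M\subseteq K$ forces $h_K\ge h_M$ everywhere. In the other direction, minimality should give that some point $\bfw\in K$ with $\langle\bfu,\bfw\rangle=h$ also satisfies $-\bfw\in K$ up to the factor of $M$, so $h_K(\bfu)=h_M(\bfu)=1$ at the optimum. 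If not, a small perturbation of $\bfu$ would decrease the ratio $h_K/h_M$.

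\textbf{Upper containment.} I must show that every $\bfx\in K$ with $\langle\bfu,\bfx\rangle\ge -t$ satisfies $-\bfx/((4d+3)t)\in K$. The optimality (KKT) condition for the minimization above, via Carath\'eodory in $\R^d$, writes $\bfu$ as a nonnegative combination of at most $d$ contact normals. These normals come from points $\bfz_1,\dots,\bfz_d\in\partial M$ with $-\bfz_j\in\partial K$. The resulting supporting halfspaces of $K$ at $-\bfz_j$, together with $\langle\bfu,\cdot\rangle\ge -t$, confine the cap to a region whose points $\bfx$ I can write as
\[-\bfx=\sum_j \mu_j(-\bfz_j)+\text{(a multiple of a point of }M),\]
with $\sum_j\mu_j$ and the multiple bounded by $O(d)\,t$. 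Convexity of $K$ and symmetry of $M$ then give $-\bfx\in (4d+3)t\,K$. Tracking constants, each of the at most $d$ contact directions should contribute a factor of about $4t$ and the symmetric part about $3t$, which is where I expect $(4d+3)t$ to come from. Getting exactly $4d+3$, rather than an unspecified $C_d$, is the step I expect to be the main obstacle. If the KKT bookkeeping does not close, my fallback is to derive the constant from a barycentric-type bound, in the spirit of \cref{lem:bounded-coordinate-selection}, applied to the $d$ contact points.

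\textbf{Algorithm.} Optimizing over $M$ is possible because $M=K\cap(-K)$ has a linear optimization oracle by \cref{cor:polytope-oracles}, and $M^\circ$ has one by \cref{lem:polar-optimization}. The minimization of $h_K(\bfu)$ over $\partial M^\circ$ is a convex program: minimize $h_K$, which is convex, over $M^\circ$ after fixing the scale. I would solve it exactly by the ellipsoid method with rational rounding as in \cite{GLS88}. Since $K$ and $M$ are rational polytopes, the optimum is attained at a vertex of a polytope with $\maxbitlen=\Oh_d(b_0)$. This yields $\maxbitlen(\bfu)=\Oh_d(b_0)$ and running time $b_0^{\Oh_d(1)}\Tquery(K)$.
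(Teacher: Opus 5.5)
Your reduction of the lower containment is fine. For $t\ge1$ one has $K\cap tM=K\cap(-tK)$, so the first containment for all $t\ge1$ is equivalent to $\bfu\in M^\circ$; the paper uses the stronger condition $\bfu\in K^\circ$.

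The gap is the choice of $\bfu$: your selection rule gives no control over the upper containment. Since $M\subseteq K$, you get $h_K\ge h_M$, hence $h_K(\bfu)\ge1$ (not $\le1$) on $\{h_M=1\}$. The minimum is in fact always exactly $1$. Because $\partial M\subseteq\partial K\cup\partial(-K)$ and $M=-M$, the set $M\cap\partial K$ is nonempty, and \emph{every} outer normal of $K$ at \emph{every} point of $M\cap\partial K$ satisfies $h_K=h_M$. So your minimization is degenerate and has bad minimizers.

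Concretely, let $K=[-1,L]\times[-1,1]$ and $\bfy=\zero$, so $M=[-1,1]^2$. Then $\bfu=(0,1)$ is the outer normal of $K$ at $(0,1)\in M\cap\partial K$, so $h_M(\bfu)=h_K(\bfu)=1$ and it is a minimizer. Yet $Q_1=\{\bfx\in K:x_2\ge-1\}=K\not\subseteq M_K^{4d+3}(\zero)=11M$ once $L>11$ (here $d=2$).

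Your KKT/Carath\'eodory step cannot rescue this. Here $\bfu$ is itself a single contact normal, and its supporting halfplane together with $x_2\ge-1$ is a slab that confines nothing in the $x_1$-direction. Also, $\{h_M=1\}=\partial M^\circ$ is not convex, so the optimization in your algorithm is not a convex program as stated.

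The missing idea is that $\bfu$ must be quantitatively \emph{central} in the polar $K^\circ$. By LP duality (with $\zero$ as a Slater point), the upper containment at $t=1$ says exactly this: every $\bfa\in K^\circ$ admits some $\mu\in[0,4d+3]$ with $\mu\bfu-\bfa\in(4d+3-\mu)K^\circ$. Nothing in your selection rule enforces that.

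The paper obtains it from \cref{lem:bounded-coordinate-selection} applied to $K^\circ$, with the oracle from \cref{lem:polar-optimization}. Take spanner vertices $\bfp_0,\dots,\bfp_d$, let $\bfc=\frac{1}{d+1}\sum_i\bfp_i$, and set $\bfu=\bfc$ (or any nonzero $\bfp_i$ if $\bfc=\zero$). Write $\bfa=\sum_i\beta_i\bfp_i$ with $\sum_i\beta_i=1$ and $\lvert\beta_i\rvert\le2$. Then $\bfq=\bigl((2d+2)\bfc-\bfa\bigr)/(2d+1)=\sum_i\frac{2-\beta_i}{2d+1}\bfp_i$ lies in $K^\circ$. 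Hence for $\bfx\in Q_t$ one gets $\langle\bfa,\bfx\rangle=(2d+2)\langle\bfc,\bfx\rangle-(2d+1)\langle\bfq,\bfx\rangle\ge-(2d+2)t-(2d+1)\ge-(4d+3)t$, i.e.\ $-\bfx\in(4d+3)tK$.

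This is where $4d+3=(2d+2)+(2d+1)$ comes from: it is the case $\mu=2d+2$ of the criterion above. The algorithm needs only the two oracle constructions and no convex optimization.
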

\begin{proof}
By translation, assume $\bfy=\zero$ without loss of generality.
Consider the polar
$K^\circ=\{\bfa\in \R^d:\langle\bfa,\bfx\rangle\le 1 \text{ for all }\bfx\in K\}$.
Use \cref{lem:bounded-coordinate-selection} to find 
$\bfp_0,\ldots,\bfp_d\in K^\circ$, and 
define $\bfc=(d+1)^{-1} \sum_{i=0}^d \bfp_i$.
If $\bfc\ne \zero$, return $\bfu=\bfc$; otherwise, return any nonzero $\bfp_i$ as $\bfu$. 
In both cases, $\bfu\ne \zero$ and $\bfu\in K^\circ$. We now prove the correctness.

To prove the first containment, take any $\bfx \in K \cap M_K^t(\zero) = K\cap (-tK)$, and it suffices to show $\langle \bfu, \bfx \rangle \ge -t$. This directly follows from the fact that $\bfu\in K^{\circ}$ and $\bfx \in -tK$.

To prove the second containment, fix any $\bfx\in Q_t = \{ \bfx\in K : \langle \bfu,\bfx\rangle \ge -t\}$.
The definition of $\bfu$ implies $\langle \bfc,\bfx \rangle \ge -t$ in both cases. 
Let $\bfa \in K^\circ$ be arbitrary. Then, \cref{lem:bounded-coordinate-selection} gives 
$\bfa=\sum_{i=0}^d \beta_i \bfp_i$ with $\sum_{i=0}^d \beta_i=1$ and $\lvert\beta_i\rvert\le 2$, so
\[
 \bfq\coloneqq \frac{(2d+2) \bfc - \bfa}{2d+1} = \frac{(2 \sum_{i=0}^d \bfp_i)-\bfa}{2d+1}
 =\sum_{i=0}^d \frac{2-\beta_i}{2d+1} \bfp_i\in K^\circ,
\]
where we used the convexity of $K^\circ$. 
Since $\bfx \in K$, this implies $\langle\bfq,\bfx\rangle\le 1$.
Then,
\[
\begin{aligned}
 \langle\bfa,\bfx\rangle
 &=(2d+2)\langle\bfc,\bfx\rangle-(2d+1)\langle\bfq,\bfx\rangle\\
 &\ge -(2d+2) t-(2d+1)\ge -(4d+3) t,
\end{aligned}
\]
where the last inequality uses $t\ge 1$.
Since $\bfa \in K^\circ$ is arbitrary, this gives $-\bfx\in (4d+3) t(K^\circ)^\circ = (4d+3) tK$.
Combining with $\bfx \in K$, this gives 
$\bfx \in (4d+3) t(K \cap (-K)) = M_K^{(4d+3) t}(\zero)$ as desired.

The time and maximum bit length bounds follow from 
\cref{lem:polar-optimization} and \cref{lem:bounded-coordinate-selection}.
\end{proof}

\section{Quotient Cap Covering}
\label{sec:quotient-cap-covering}

In this section, we state and prove
our quotient variant of economic cap covering.
It is a collection of caps $Q_i$ of a convex body $K$ covering the ``quotient shell'' of $K$, namely the set of $\bfx \in K$ for which no lattice-congruent point $\bfx'\in \bfx + \Lambda$ belongs to the deep inner part $K\ominus D$. The total volume of the caps $Q_i$ measured in $\R^d/\Lambda$ is small.

The formal statement is given in
\cref{thm:geometric-cap-cover} below.  For computational efficiency,  we approximate each returned cap $Q_i$ by a parallelepiped $P_i$.
Additionally, we return a set $R$ of points that forms a $D$-scale net (in a modulo-$\Lambda$ sense) of the inner neighborhoods of the caps. 
In later applications, points in $R$ serve as starting points to
 explore the deep part $(K\ominus D) \bmod \Lambda$.

\begin{theorem}
\label{thm:geometric-cap-cover}
Given a lattice $\Lambda\subseteq \R^d$ of rank $d'<d$, a convex body $K\subseteq \R^d$, and $0<\rho<1$, write
$D=\rho(K-K)/2$ and
\[r=\rho \vol_d(\pi_\Lambda K)/\vol_d(\pi_\Lambda D).\]

Then, there exist
a collection of $\Oh_d(r)$ caps $Q_i=K\cap \{\bfx: \langle\bfu_i,\bfx\rangle\ge a_i\}$ with enclosing parallelepipeds $P_i\supseteq Q_i$, and a point set $R\subseteq \R^d$ of size $\lvert R\rvert=\Oh_d(r)$, such that the following properties hold:
\begin{enumerate}[label=\textup{(\roman*)},leftmargin=2.5em,beginpenalty=10000]
   \item\label{item:cap-cover-volume} $\sum_i \vol_d(\pi_\Lambda P_i) \le \Oh_d(\rho) \vol_d(\pi_\Lambda K)$,
\item\label{item:cap-cover-thickness} each $Q_i$ contains a translate of $c_d D$, for some $0<c_d\le 1$ depending only on $d$,
   \item\label{item:cap-cover-deep} $K\setminus \bigcup_i Q_i\subseteq (K\ominus D)+\Lambda$,
   \item\label{item:cap-cover-centers} $R+4d \beta D\subseteq K$, where $\beta=\lceil\beta_d\rceil$ for the constant $\beta_d$ in \cref{lem:macbeath-center},
   \item\label{item:cap-cover-local} and $\bigl(K\setminus \bigcup_i Q_i\bigr)\cap (\bigcup_i P_i+D+\Lambda)\subseteq R+D+\Lambda$.
\end{enumerate}

If $K$ is a rational polytope given by a linear optimization oracle and
$\Lambda,\rho$ are rational,
then, writing $b_0=\maxbitlen(K,\Lambda,\rho)$,
one can compute  all $\bfu_i,a_i,P_i$
and $R$  
in time \[r\cdot b_0^{\Oh_d(1)} \Tquery(K),\] with
$\maxbitlen(\bfu_i,a_i,P_i)=\Oh_d(b_0)$ for every $i$
and
$\maxbitlen(\bfr)=\Oh_d(b_0)$ for every $\bfr\in R$.
\end{theorem}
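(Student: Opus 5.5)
We follow the classical greedy construction of economic cap coverings via Macbeath nets, carried out modulo $\Lambda$ using the quotient tools of \cref{sec:macbeath-regions}. Call a point $\bfy\in K$ \emph{shallow} if $(\bfy+\Lambda/2)\cap(K\ominus D)=\varnothing$. The set to be covered is the quotient shell $S=\{\bfx\in K:(\bfx+\Lambda)\cap(K\ominus D)=\varnothing\}$.

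\textbf{Step 1: centers.} For each relevant point, \cref{lem:balanced-center} (applied with the lattice $\Lambda/2$) produces a nearby $\gamma_d$-balanced center $\bfy$ modulo $\Lambda/2$, lying in $\bfx_0+q_d^{-1}\Lambda$ and with $K\cap(\bfx_0+\Lambda/2)\subseteq M_K^{\gamma_d}(\bfy)$. We will work at a fixed scale $\rho$: for a shallow point we take its balanced representative, and greedily select a family $Y$ of balanced centers whose $\pi_\Lambda(M_K^{1/2}(\bfy))$ are pairwise disjoint and maximal. \cref{cor:quotient-macbeath-cover} then gives that the $13\gamma_d\alpha$-expansions of the regions centered in $Y$ cover the $\alpha$-expansions of all candidate regions. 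We choose candidates so that each shallow $\bfx$ lies in $M_K^{\alpha}(\bfx')$ for a candidate $\bfx'\equiv\bfx$; this is the balancedness containment.

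\textbf{Step 2: volume and count.} By \cref{lem:shallow-macbeath-region}, each $\pi_\Lambda M_K^{1/2}(\bfy)$ with $\bfy\in Y$ lies in the quotient shell at depth $2D$. We will prove the quotient shell bound $\vol_d(\pi_\Lambda K\setminus\pi_\Lambda(K\ominus 2D))=\Oh_d(\rho)\vol_d(\pi_\Lambda K)$ by shrinking $K$ toward a balanced center, since $D=\rho(K-K)/2$ ensures $(1-\Oh(\rho))$-homothets of $K$ sit inside $K\ominus 2D$. Modulo $\Lambda$, this step uses the fact that homothety about a point of $\bfx_0+\Lambda/q_d$ interacts controllably with the quotient.

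We also need each selected region to have quotient volume $\gg_d\rho\,\vol_d(\pi_\Lambda D)$. To get this, we restrict candidate centers to points where $M_K^1$ contains a translate of $cD$, for instance points with $\bfy+cD\subseteq K$. Points too close to the boundary are handled by moving inward by $\Oh(D)$. Disjointness then gives $|Y|=\Oh_d(r)$ and total quotient volume $\Oh_d(\rho)\vol_d(\pi_\Lambda K)$.

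For the caps, \cref{lem:macbeath-cap} produces $Q_i\subseteq M_K^{(4d+3)t}(\bfy_i)$ with $t=13\gamma_d\alpha$. The parallelepiped $P_i$ comes from \cref{lem:macbeath-center} applied to that Macbeath region, scaled by $\beta_d$. This yields \ref{item:cap-cover-volume} via \cref{lem:lattice-quotient-dilation}-type dilation control, and \ref{item:cap-cover-thickness} since $M_K^{1}(\bfy_i)\cap K\subseteq Q_i$.

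\textbf{Step 3: deep part and the net $R$.} Property \ref{item:cap-cover-deep} follows because every point of $S$ lies in some $M_K^t(\bfy_i)\cap K\subseteq Q_i$. For $R$, for each cap we take the $\Oh_d(1)$ points of a $D$-net of $(P_i+D)\cap(K\ominus 4d\beta D)$. Such a net exists because $P_i$ has size $\Oh_d$ times $D$ in the relevant directions. Points of $(P_i+D)\cap K$ lying outside the caps are deep modulo $\Lambda$, hence within $D$ of such a net point modulo $\Lambda$, which gives \ref{item:cap-cover-centers} and \ref{item:cap-cover-local}.

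\textbf{Algorithm.} Greedy selection is performed by enumerating candidate centers on a grid of $\Lambda/q_d$-cosets at scale $D$. Disjointness of $\pi_\Lambda M_K^{1/2}$ is tested by checking whether $M-M'$ meets $\Lambda$, using \cref{lem:lattice-optimization}.

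\textbf{Main obstacle.} The main obstacle is Step 2: the quotient shell volume bound together with the per-region lower volume bound, i.e.\ controlling $\vol_d\pi_\Lambda$ of Macbeath regions that are thin in directions of $\spa\Lambda$ and bounding the candidate enumeration by $\Oh_d(r)$. I would first try to prove the shell bound by homothety about a balanced center, and to bound the enumeration by charging each grid candidate to a selected region.
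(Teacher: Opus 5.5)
Your architecture matches the paper's: a maximal quotient packing of Macbeath regions at balanced centers, \cref{cor:quotient-macbeath-cover} for covering, \cref{lem:shallow-macbeath-region} plus a shell bound for packing, and \cref{lem:macbeath-cap,lem:macbeath-center} for caps and parallelepipeds. However, Step~1 balances at the wrong lattice, and Step~2 then fails.

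\cref{lem:balanced-center} applied with $\Lambda/2$ returns $\bfy\in\bfx_0+(2q_d)^{-1}\Lambda$. This is generally not congruent to $\bfx_0$ modulo $\Lambda$, so $\bfy$ need not be shallow even when $\bfx_0$ is. Yet \cref{lem:shallow-macbeath-region} needs $(\bfy+\Lambda/2)\cap(K\ominus D)=\varnothing$ for the center itself. Moreover, this ``shallow modulo $\Lambda/2$'' condition is strictly stronger than membership in $S$, and $S$ is what (iii) forces you to cover.

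Concretely, let $d=2$, $K=[0,a]\times[0,1]$, $\Lambda=L\Z\bfe_1$ with $L/2<a<L$, and $\rho$ small. Then $(0,\tfrac12)\in S$, but $(L/2,\tfrac12)\in K\ominus D$. So either $(0,\tfrac12)$ never receives a candidate and (iii) fails, or the center returned for it lies at horizontal distance $\Omega_d(a)$ from both vertical sides of $K$. That center is deep, and $M_K^{1/2}(\bfy)$ (on which $\pi_\Lambda$ is injective) has area $\Omega_d(a)=\Omega_d(\vol_d(\pi_\Lambda K))$. Any cap built from it violates (i).

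The paper instead balances modulo $q_d\Lambda$. Then $\bfy\in\bfx_0+q_d^{-1}(q_d\Lambda)=\bfx_0+\Lambda$, so shallowness modulo $\Lambda$ (which implies shallowness modulo $q_d\Lambda$) passes to $\bfy$. The price is that balance controls only one coset of $q_d\Lambda$. So the paper:
\begin{itemize}
\item takes one center for each of the $q_d^{d'}=\Oh_d(1)$ cosets of $q_d\Lambda$ in $\Lambda$;
\item performs the disjoint packing modulo $2q_d\Lambda$, so that \cref{cor:quotient-macbeath-cover,lem:shallow-macbeath-region} apply with lattice $2q_d\Lambda$;
\item returns to $\R^d/\Lambda$ via $\vol_d(\pi_\Lambda M)\le\vol_d(\pi_{2q_d\Lambda}M)$ and $[\Lambda:2q_d\Lambda]=\Oh_d(1)$.
\end{itemize}

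Several further steps are asserted rather than proved.

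(a) Thickness. You cannot restrict centers to points with $\bfy+cD\subseteq K$, because centers are outputs of \cref{lem:balanced-center}; thickness must be transferred to them. The paper works in an enlarged body $K^+=\bfc+(1+\Oh_d(\rho))(K-\bfc)\supseteq K+2B_0$, where $B_0\supseteq D$ is a parallelepiped of size $\Oh_d(1)\cdot D$. Every $\bfg_0\in K+B_0$ then has $\bfg_0+B_0\subseteq M^1_{K^+}(\bfg_0)$, and \cref{lem:center-containment-general} turns $\bfg_0\in M^\gamma_{K^+}(\bfy)$ into $M^1_{K^+}(\bfg_0)\subseteq M^{2\gamma+1}_{K^+}(\bfy)$. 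Also, the per-region lower bound you need is $\gg_d\vol_d(\pi_\Lambda D)$, not $\gg_d\rho\,\vol_d(\pi_\Lambda D)$.

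(b) The net $R$. In Step~3, $P_i$ is not $\Oh_d(1)$ copies of $D$. For $K=[0,1]^2$, the Macbeath region at $(\tfrac12,\rho)$ is $\Theta(\rho^{-1})$ times longer than $D$ tangentially. So $\lvert R\rvert$ must be bounded through total quotient volume, $\sum_i\vol_d(\pi_\Lambda(P_i+2D))/\vol_d(\pi_\Lambda D)=\Oh_d(r)$. Moreover, a non-cap point is only known to have a $\Lambda$-translate in $K\ominus D$. That translate may lie far from $P_i+D$ and need not be within $D$ of $K\ominus 4d\beta D$. The caps must therefore cover everything outside $(K\ominus CD)+\Lambda$ for some $C\ge 4d\beta+1$; the paper uses depth $2B_0\supseteq 4d\beta D$. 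The net points must then be found as deep $\Lambda$-translates of cells via \cref{lem:lattice-translate-depth}.

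(c) The shell bound. The homothety center is irrelevant modulo $\Lambda$. The missing ingredient is that $a^{-d}\vol_d(\pi_\Lambda(aK))$ is nonincreasing in $a$ (\cref{lem:volumeineq}), which follows from $\vol_d(\pi_\Lambda E)=\int_E\lvert(\bfx+\Lambda)\cap E\rvert^{-1}\,d\bfx$.

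(d) Running time. To achieve $r\cdot b_0^{\Oh_d(1)}\Tquery(K)$, you can neither scan all cells of $\pi_\Lambda K$ (about $r/\rho$ of them) nor test disjointness pairwise. The paper skips runs of deep cells along one basis direction with one depth query each, charging runs of length in $[s,2s)$ to the shell of depth $\Oh(s)$. It charges each greedy step's intersection tests to the quotient volume of the selected region.
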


We will show in \cref{lem:geometric-boundary-representatives} that $r
= \rho \vol_d(\pi_\Lambda K)/\vol_d(\pi_\Lambda D) \gg_d 1$.

In the extreme case where $\rho\gg_d 1$,
\cref{thm:geometric-cap-cover} follows easily:
we can return the single trivial cap $K$, with an enclosing parallelepiped $B\supseteq K-\bfc \supseteq  \Omega_d(1)\cdot B$
computed from \cref{lem:macbeath-center}, and $R=\varnothing$.
\cref{item:cap-cover-thickness} directly follows, 
 \cref{item:cap-cover-volume} 
holds due to \cref{lem:volumeineq}, 
 and
\cref{item:cap-cover-deep,item:cap-cover-centers,item:cap-cover-local}
are vacuous.

The rest of this section  proves \cref{thm:geometric-cap-cover} when $0<\rho< \eps_d$, where $\eps_d\in (0,1)$ only depends on $d$.

\subsection{Preliminaries}
\label{subsec:quotient-cover-preliminaries}
We begin with two simple geometric lemmas.

The volume in $\R^d/\Lambda$ satisfies the following subhomogeneity property for convex bodies under dilation.
\begin{lemma}
\label{lem:volumeineq}
Let $\Lambda\subseteq \R^d$ be a lattice and $K\subseteq \R^d$ be a convex body. For $0<a\le b$,
\begin{equation*}
 \frac{\vol_d(\pi_\Lambda(aK))}{a^d}
 \ge \frac{\vol_d(\pi_\Lambda(bK))}{b^d}.
\end{equation*}
\end{lemma}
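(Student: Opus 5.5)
The plan is to reduce the claim to a statement about the pushforward of Lebesgue measure under the quotient map, and then use a scaling change of variables. For a measurable set $S\subseteq\R^d$, the quotient volume can be written as
\[
\vol_d(\pi_\Lambda S)=\int_{S}\frac{1}{N_S(\bfx)}\,d\bfx,
\qquad
N_S(\bfx)\coloneqq\lvert(\bfx+\Lambda)\cap S\rvert .
\]
This formula follows by decomposing $S$ along a fundamental domain of $\Lambda$ (or of $\Lambda$ extended to a full-rank lattice in $\spa\Lambda$ times its orthogonal complement) and using the normalization of the Haar measure. The representation still holds when $\Lambda$ is not of full rank, since $N_S(\bfx)$ is finite for bounded $S$.

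Apply the formula to $S=bK$ and substitute $\bfx=b\bfz$ with $\bfz\in K$. Then $\vol_d(\pi_\Lambda(bK))/b^d=\int_K N_{bK}(b\bfz)^{-1}\,d\bfz$, and we have $N_{bK}(b\bfz)=\lvert(\bfz+b^{-1}\Lambda)\cap K\rvert$. Doing the same for $a$ gives $\vol_d(\pi_\Lambda(aK))/a^d=\int_K \lvert(\bfz+a^{-1}\Lambda)\cap K\rvert^{-1}\,d\bfz$. It therefore suffices to prove the pointwise inequality
\[
\lvert(\bfz+a^{-1}\Lambda)\cap K\rvert\le \lvert(\bfz+b^{-1}\Lambda)\cap K\rvert
\qquad\text{for all }\bfz\in K,
\]
that is, to show that the number of points of a coset of the lattice $t\Lambda$ inside $K$, through a fixed point $\bfz\in K$, is nonincreasing in $t$.

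The pointwise step uses convexity, and it is the only real content. Set $\mu=a/b\in(0,1]$. Consider the homothety $h(\bfw)=\bfz+\mu(\bfw-\bfz)$ centered at $\bfz$. The map $h$ is injective. If $\bfw\in(\bfz+a^{-1}\Lambda)\cap K$, write $\bfw=\bfz+a^{-1}\boldsymbol{\ell}$. Then $h(\bfw)=\bfz+b^{-1}\boldsymbol{\ell}$, which lies in $\bfz+b^{-1}\Lambda$. It also lies in $K$, because it is a convex combination of $\bfz$ and $\bfw$, both of which are in $K$. Hence $h$ injects the first set into the second, which gives the pointwise inequality. Integrating over $\bfz\in K$ then yields the lemma.

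The only delicate point is justifying the integral formula for the quotient volume, particularly for lower-rank $\Lambda$ and with the paper's normalization. I would handle it by observing that the pushforward of Lebesgue measure restricted to $S$ has density $N_S$ with respect to Haar measure. Then $\vol_d(\pi_\Lambda S)=\int_{\pi_\Lambda S}1=\int_S N_S^{-1}$, and this matches the normalization in the paper on sets where $\pi_\Lambda$ is injective.
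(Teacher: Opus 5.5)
Your proof is correct and follows the paper's argument. Both use the same representation $\vol_d(\pi_\Lambda E)=\int_E n_E(\bfx)^{-1}\,d\bfx$, the same rescaling of both integrals to $\int_K$, and the same pointwise comparison. Your homothety centered at $\bfz$ is the paper's inclusion $a(K-\bfz)\subseteq b(K-\bfz)$, which gives $\lvert\Lambda\cap a(K-\bfz)\rvert\le\lvert\Lambda\cap b(K-\bfz)\rvert$, written in rescaled coordinates.
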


\begin{proof}
For a compact convex set $E$, let $n_E(\bfx)=\lvert(\bfx+\Lambda)\cap E\rvert$.
By partitioning $E$ into lattice translates of a fundamental domain, we get
\[\vol_d(\pi_\Lambda(E))=\int_E n_E(\bfx)^{-1}\,d \bfx.\]

For $\bfx\in K$, convexity gives
$a(K-\bfx)\subseteq b(K-\bfx)$, so
$n_{aK}(a \bfx) = \lvert\Lambda \cap (aK-a \bfx)\rvert \le \lvert\Lambda \cap (bK-b \bfx)\rvert =  n_{bK}(b \bfx)$. Hence
\[
 \frac{\vol_d(\pi_\Lambda(aK))}{a^d}
 =\int_K \frac{d \bfx}{n_{aK}(a \bfx)}
 \ge \int_K \frac{d \bfx}{n_{bK}(b \bfx)}
 =\frac{\vol_d(\pi_\Lambda(bK))}{b^d}.\qedhere
\]
\end{proof}

We  use
\cref{lem:volumeineq}
to bound the volume of the quotient shell as follows.
\begin{lemma}
\label{lem:quotient-shell-bound}
Let $\Lambda\subseteq \R^d$ be a lattice, and
$K,D\subseteq \R^d$ be convex bodies, where $D$ is origin-symmetric, and $K$ contains a translate of $\rho^{-1} D$ for some $\rho>0$.
Then, for all $a,b\ge 0$,
\[
 \vol_d\!\big(\pi_\Lambda(K+aD)\setminus \pi_\Lambda(K\ominus bD)\big)
 \le d(a+b) \rho(1+a \rho)^{d-1} \vol_d(\pi_\Lambda K).
\]
\end{lemma}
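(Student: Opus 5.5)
We translate so that $\rho^{-1}D\subseteq K$ is centered at the origin, i.e.\ $\rho^{-1}D\subseteq K$. The idea is to sandwich both sets between dilates of $K$ and then use \cref{lem:volumeineq}.

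First, the inner set. Since $bD\subseteq b\rho K$ and $K$ is convex, we have $(1-b\rho)K+b\rho K=K$. Hence $(1-b\rho)K\subseteq K\ominus bD$ whenever $b\rho\le 1$. Second, the outer set. We have $aD\subseteq a\rho K$, so $K+aD\subseteq (1+a\rho)K$, again by convexity and $\zero\in K$. Therefore
\[
 \pi_\Lambda(K+aD)\setminus \pi_\Lambda(K\ominus bD)\subseteq \pi_\Lambda((1+a\rho)K)\setminus \pi_\Lambda((1-b\rho)K).
\]
Since $\pi_\Lambda((1-b\rho)K)\subseteq\pi_\Lambda((1+a\rho)K)$, the volume of this difference is the difference of the two volumes.

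Next, we apply \cref{lem:volumeineq} twice. The upper scale gives $\vol_d(\pi_\Lambda((1+a\rho)K))\le (1+a\rho)^d\vol_d(\pi_\Lambda K)$. The lower scale, with $a=1-b\rho\le 1=b$ in the lemma, gives $\vol_d(\pi_\Lambda((1-b\rho)K))\ge (1-b\rho)^d\vol_d(\pi_\Lambda K)$. The shell volume is thus at most $\bigl((1+a\rho)^d-(1-b\rho)^d\bigr)\vol_d(\pi_\Lambda K)$. By the mean value theorem, applied to $x\mapsto x^d$ on $[1-b\rho,1+a\rho]$, this factor is at most $d(a+b)\rho(1+a\rho)^{d-1}$, as claimed.

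The case $b\rho>1$ is trivial. Here the right-hand side is at least $d(1+a\rho)^{d-1}\cdot b\rho+\dots\ge (1+a\rho)^d$ whenever $(a+b)\rho\ge 1+a\rho$. One can check that $d(a+b)\rho(1+a\rho)^{d-1}\ge (1+a\rho)^{d-1}(1+a\rho)$, because $d(a+b)\rho\ge (a+b)\rho\ge 1+a\rho$ once $b\rho\ge1$. So the bound follows from the outer bound alone, dropping the subtracted term.

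The only mild subtlety is translation. Erosion and Minkowski sum commute with translating $K$, and $\pi_\Lambda$-volumes are translation invariant, so the reduction to a centered $\rho^{-1}D$ is harmless. I expect no real obstacle beyond keeping the dilation containments correct.
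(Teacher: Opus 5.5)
Your proof is correct and follows the paper's argument: translate so that $\rho^{-1}D\subseteq K$, sandwich the shell between $\pi_\Lambda((1+a\rho)K)$ and $\pi_\Lambda((1-b\rho)K)$, apply \cref{lem:volumeineq} at both scales, and finish with the mean value theorem. Your separate handling of $b\rho\ge 1$, where you drop the subtracted term, is what the paper encodes by writing $\max\{1-b\rho,0\}^d$; note that at $b\rho=1$ \cref{lem:volumeineq} does not literally apply since $1-b\rho=0$, but that case is trivial anyway.
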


\begin{proof}
   By translation, assume $\rho^{-1} D\subseteq K$ without loss of generality.
Convexity gives $K+aD\subseteq (1+a \rho) K$ and, when $b \rho<1$,
$(1-b \rho) K\subseteq K\ominus bD$.
Writing $v=\vol_d(\pi_\Lambda K)$,
\cref{lem:volumeineq}  gives
\begin{align*}
 \vol_d\!\big(\pi_\Lambda(K+aD)\setminus \pi_\Lambda(K\ominus bD)\big)
 &\le \bigl((1+a \rho)^d-\max \{1-b \rho,0\}^d\bigr) v\\
 &\le d(a+b) \rho(1+a \rho)^{d-1} v,
\end{align*}
where the last step used the mean value theorem.
\end{proof}

The following lemma finds a deep lattice translate of a point in a convex polytope.
\begin{lemma}
\label{lem:lattice-translate-depth}
In $\R^d$,
given a polytope $K$ with a linear optimization oracle,
a lattice $\Lambda $, an origin-symmetric $d$-dimensional parallelepiped $B $,
and a point $\bfx$, one can find the largest $t\in \Nonneg$
(or report that none exists)
such that
\[
 \bfx+\boldsymbol{\ell}+tB\subseteq K
 \text{ for some }\boldsymbol{\ell}\in \Lambda,
\]
together with such a $\boldsymbol{\ell}$, 
in $\maxbitlen(K,\Lambda,B,\bfx)^{\Oh_d(1)} \Tquery(K)$ time.
\end{lemma}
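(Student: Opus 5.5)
The plan is to reduce the question to a one-dimensional family of fixed-dimension integer programs and binary search over $t$. The condition
\[
\bfx+\boldsymbol{\ell}+tB\subseteq K
\]
holds if and only if $\bfx+\boldsymbol{\ell}\in K\ominus tB$. Since $B$ is an origin-symmetric parallelepiped, write
\[
B=\Bigl\{\sum_{j=1}^d s_j\bfb_j:\lvert s_j\rvert\le 1\Bigr\},
\]
so that $tB$ has vertices $t\sum_j\sigma_j\bfb_j$ with $\sigma\in\{\pm1\}^d$. By convexity of $K$,
\[
K\ominus tB=\bigcap_{\sigma\in\{\pm1\}^d}\Bigl(K-t\sum_j\sigma_j\bfb_j\Bigr),
\]
an intersection of $2^d$ translates of $K$.

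Each translate has a linear optimization oracle obtained by shifting the answer of the oracle for $K$. So for a fixed rational $t$, \cref{cor:polytope-oracles} (part 2, applied $2^d-1$ times) gives an oracle for $K_t\coloneqq K\ominus tB$ with query time $\Tquery(K)\cdot b^{\Oh_d(1)}$, where $b=\maxbitlen(K,B,t)$. Feasibility for this $t$ then means $K_t\cap(\bfx+\Lambda)\ne\varnothing$. Translating by $-\bfx$, this is a lattice point search in $(K_t-\bfx)\cap\Lambda$, which \cref{lem:lattice-optimization} solves and returns a witness $\boldsymbol{\ell}$.

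Feasibility is monotone in $t$, since $K_{t'}\subseteq K_t$ for $t'\ge t$, so a binary search over $t\in\Nonneg$ finds the largest feasible value.

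First test $t=0$: if $(K-\bfx)\cap\Lambda=\varnothing$, report that no $t$ exists. Otherwise, if $\rk\Lambda<d$ the maximal $t$ could a priori be unbounded only if $K$ were unbounded, which it is not. Indeed, $tB\subseteq K-K$ forces $t\le T_{\max}$, where $T_{\max}$ is computable from the width of $K$ along one direction. Concretely, take $T_{\max}$ as the ratio of $\max_{\bfy\in K}\langle\bfc,\bfy\rangle-\min_{\bfy\in K}\langle\bfc,\bfy\rangle$ to the $\langle\bfc,\cdot\rangle$-width of $B$, for $\bfc$ equal to the first dual basis vector of $B$.

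This bound satisfies $\maxbitlen(T_{\max})=\Oh_d(\maxbitlen(K,B))$. Hence the binary search over integers in $[0,T_{\max}]$ uses $\Oh_d(\maxbitlen(K,B))$ rounds. In each round $t$ is an integer with $\Oh_d(b_0)$ bits, where $b_0=\maxbitlen(K,\Lambda,B,\bfx)$. The translated oracles therefore have bit lengths $\Oh_d(b_0)$, and each round costs $b_0^{\Oh_d(1)}\Tquery(K)$.

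The main point to check is the bit-length bookkeeping, so that the intersection oracle and Lenstra's algorithm stay polynomial. This follows because every polytope involved has vertices with $\maxbitlen=\Oh_d(b_0)$ by Cramer's rule, as in \cref{lem:constrained-optimization}. The vertex formula for $K\ominus tB$ is exact since $B$ is a polytope. The total time is $b_0^{\Oh_d(1)}\Tquery(K)$.
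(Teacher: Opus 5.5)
Your proof is correct, but it handles $t$ differently from the paper. The paper does not binary search over $t$. It treats $t$ as an extra integer unknown and solves one $(d+1)$-dimensional ILP over the lattice $\Lambda\times\Z$: maximize $t$ over the region $\{(\boldsymbol{\ell},t):\bfx+\boldsymbol{\ell}+t\bfv\in K \text{ for all vertices } \bfv \text{ of } B\}$. The optimization oracle for this region comes from \cref{lem:constrained-optimization}, and \cref{lem:lattice-optimization} is called once.

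You instead fix $t$, build an oracle for $K\ominus tB$ as an intersection of $2^d$ translates via \cref{cor:polytope-oracles}, run only a $d$-dimensional lattice feasibility test, and search over $t$ outside the ILP.

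The paper's formulation is shorter. It needs neither a monotonicity argument nor an explicit a priori bound on $t$; boundedness of the lifted region is left implicit there and follows because $B$ is full-dimensional. Your route stays in the original dimension and uses only feasibility queries. The price is that you must prove monotonicity ($K_{t'}\subseteq K_t$ for $t'\ge t$) and supply the bound $T_{\max}$ with $\Oh_d(b_0)$ bits to limit the number of rounds; you do both correctly.

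Since \cref{lem:lattice-optimization} is itself obtained from Lenstra's feasibility algorithm by binary search, the two routes are essentially the same mechanism and give the same $b_0^{\Oh_d(1)}\Tquery(K)$ bound. Your aside about $\rk\Lambda<d$ is irrelevant: the bound $t\le T_{\max}$ holds for every $\Lambda$. It also correctly covers the case where $K$ is not full-dimensional, in which only $t=0$ can be feasible.
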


\begin{proof}
By convexity, 
the condition 
$\bfx+\boldsymbol{\ell}+tB\subseteq K$ is equivalent to 
$\bfx+\boldsymbol{\ell}+t \bfv \in K$ for all $2^{d}$ vertices $\bfv$ of $B$.
The task then reduces to solving a $(d+1)$-dimensional integer linear program (ILP) over the lattice 
$(\boldsymbol{\ell},t) \in \Lambda \times \Z$, 
and a feasibility polytope with a linear optimization oracle that can be built from $K$ via \Cref{lem:constrained-optimization}.
The ILP can be solved using
\cref{lem:lattice-optimization}.
The claimed time complexity then follows from
\Cref{lem:constrained-optimization} and
\cref{lem:lattice-optimization}.
\end{proof}

\subsection{Discretization into cells}
\label{sec:quotient-discretization}

We now proceed to the proof of \cref{thm:geometric-cap-cover}. Write $d'=\rk \Lambda < d$.

Recall $D \coloneqq \rho(K-K)/2$, where $0<\rho<1$. For computational efficiency, we first approximate $D$ by a parallelepiped $B$ as follows.
Use \cref{lem:macbeath-center} to  choose an origin-symmetric parallelepiped $B$
and a point $\bfc$ such that
\begin{equation}
   \label{eqn:defnb}
\bfc+\rho^{-1} B\subseteq K\subseteq \bfc+\beta \rho^{-1} B.
\end{equation}
where $\beta=\lceil\beta_d\rceil = \Oh_d(1)$  from \cref{lem:macbeath-center}.
Then, taking differences gives
\begin{equation}
B\subseteq D\subseteq \beta B.
\label{eqn:dbd}
\end{equation}
We sometimes also work with $B_0\coloneqq 2d \beta^2 B$.

For convenience, we discretize 
$\Lambda$ into 
 a superlattice $\caL(\bfv_1,\dots,\bfv_{d})$ whose fundamental parallelepiped approximates $\pi_{\Lambda} B$:
\begin{lemma}[$\bfv_i,k_i$, and $V$]
\label{lem:quotient-box-discretization}
We can compute positive integers $k_1,\ldots,k_{d'}$ and a basis
$\bfv_1,\ldots,\bfv_d$ of $\R^d$, such that
$k_1 \bfv_1,\ldots,k_{d'} \bfv_{d'}$ form a basis of $\Lambda$, and the centered zonotope
\begin{equation*}
 V\coloneqq \caZ_0((\bfv_i)_{i=1}^{d})
\end{equation*}
 satisfies
\begin{equation}
\label{eq:cap-cover-box-scale}
 2V\subseteq B \text{ and }
 \vol_d(V)=\Theta_d(\vol_d(\pi_\Lambda B)).
\end{equation}

The computation takes time
$b_0^{\Oh_d(1)}$, and the return values satisfy
$\maxbitlen(k_1,\ldots,k_{d'},
\bfv_1,\ldots,\bfv_d)=\Oh_d(b_0)$.
\end{lemma}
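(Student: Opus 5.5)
We first reduce to coordinates in which $B$ is a cube. Write $B=T([-1,1]^d)$ for an invertible rational linear map $T$; $T$ is read off from the parallelepiped returned by \cref{lem:macbeath-center}. Set $\Lambda'=T^{-1}\Lambda$, a rank-$d'$ lattice in $\R^d$. Volume ratios in the quotient are preserved by $T$, and so are the two required conditions. It therefore suffices to treat $B=[-1,1]^d$, i.e.\ to find $\bfv_i$ for $\Lambda'$ with $2V\subseteq[-1,1]^d$ and $\vol_d(V)=\Theta_d(\vol_d(\pi_{\Lambda'}[-1,1]^d))$, and then map back by $T$.

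Next, compute an LLL-reduced basis $\bfb_1,\ldots,\bfb_{d'}$ of $\Lambda'$ using \cref{lem:latticereduct}. Because of the near-orthogonality guarantee, its Gram--Schmidt vectors $\bfb_i^*$ satisfy $\|\bfb_i^*\|\ge 2^{-O(d^2)}\|\bfb_i\|$, and $\prod\|\bfb_i\|\le 2^{(d')^2}\det\Lambda'$. Complete this to a basis of $\R^d$ with $d-d'$ vectors $\bfb_{d'+1},\ldots,\bfb_d$ forming an orthonormal basis of $\spa(\Lambda')^\perp$; these are computed by Gaussian elimination. The rationality issue is discussed at the end.

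Now define the scaled vectors. Put $\sigma=c_d$, a small constant to be fixed later. For $i\le d'$, set $k_i=\max\{1,\lceil \|\bfb_i\|/\sigma\rceil\}$ and $\bfv_i=\bfb_i/k_i$. For $i>d'$, set $\bfv_i=\sigma\bfb_i$. Then $k_i\bfv_i=\bfb_i$ is a basis of $\Lambda'$, and every $\|\bfv_i\|\le\sigma$. With $\sigma\le 1/(2d)$ this gives $2V\subseteq[-1,1]^d$. For the volume bound, the upper bound $\vol_d(V)\ll_d\vol_d(\pi_{\Lambda'}[-1,1]^d)$ holds because $\pi_{\Lambda'}$ is injective on the interior of $V$: the vectors $\bfv_i$ generate a superlattice $\caL(\bfv_1,\dots,\bfv_{d'})\supseteq\Lambda'$ on the first $d'$ directions, so $V$ is a fundamental domain for $\caL(\bfv_i)$ plus the orthogonal part and embeds modulo $\Lambda'$. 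Combined with $V\subseteq[-1,1]^d$, this gives $\vol_d(V)\le\vol_d(\pi_{\Lambda'}[-1,1]^d)$.

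The main obstacle is the lower bound $\vol_d(V)\gg_d\vol_d(\pi_{\Lambda'}[-1,1]^d)$. We have $\vol_d(V)=\det(\bfv_i)=\sigma^{d-d'}\det\Lambda'/\prod k_i$. Split the indices $i\le d'$ into long vectors ($\|\bfb_i\|\ge\sigma$) and short vectors ($\|\bfb_i\|<\sigma$). For long vectors, $k_i\approx\|\bfb_i\|/\sigma$. For short vectors, $k_i=1$. Let $U$ be the span of the short vectors. The cube, modulo $\Lambda'$, is bounded in volume by $\vol(\pi_{\Lambda'\cap U}(\text{projection}))$, and this in turn is at most $\det(\Lambda'\cap U)\cdot$(volume of the cube's projection onto $U^\perp$, which is $O_d(1)$). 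The LLL guarantee gives $\det(\Lambda'\cap U)\approx\prod_{\text{short}}\|\bfb_i\|$, using Gram--Schmidt lengths and the LLL comparison between short and long indices; the latter requires that a reduced basis can be ordered so that short vectors come first, which holds up to $2^{O(d^2)}$ losses. The ratio is then $\Omega_d(1)$. If $\Lambda'$ is rational but the orthonormal completion is not, we instead use any rational basis of the complement with bounded distortion, which changes the bounds only by $O_d(1)$ factors. All steps run in $b_0^{O_d(1)}$ time with $O_d(b_0)$-bit outputs.
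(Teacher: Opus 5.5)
Your proposal is correct, but it proves $\vol_d(V)=\Theta_d(\vol_d(\pi_{\Lambda}B))$ by a different route from the paper.

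The paper normalizes each completed basis vector by its $\ell_\infty$-norm, forming $A=(\bfu_1/\delta_1,\ldots,\bfu_d/\delta_d)$ with $\delta_i=\|\bfu_i\|_\infty$. Since the entries of $A$ are at most $1$ in absolute value, $A[-1/d,1/d]^d\subseteq B$. The LLL orthogonality-defect bound gives $\lvert\det A\rvert=\Omega_d(1)$, and Cramer's rule then gives $B\subseteq A[-C_d,C_d]^d$. In $A$-coordinates, $\Lambda$ becomes the rectangular lattice $\caL((\delta_i\bfe_i)_{i\le d'})$, so the quotient volume of a cube is the explicit product $\lvert\det A\rvert(2s)^{d-d'}\prod_{i\le d'}\min\{2s,\delta_i\}$. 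Both directions then follow from one computation with $k_i=\lceil 2d\delta_i\rceil$.

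You argue differently:
\begin{itemize}
\item For the upper bound $\vol_d(V)\le\vol_d(\pi_{\Lambda'}[-1,1]^d)$, you use injectivity of $\pi_{\Lambda'}$ on the interior of $V$ together with $V\subseteq[-1,1]^d$.
\item For the lower bound, you compare both sides with $\prod_{i\in S}\|\bfb_i\|$, where $S$ is the set of short indices.
\end{itemize}
Your route is more geometric and shows why the quotient volume is $\asymp_d\prod_{i\le d'}\min\{1,\|\bfb_i\|\}$. The paper's route avoids the short/long case split and gives a closed formula.

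The step you flag as the main obstacle is easier than your sketch suggests, and no reordering of the reduced basis is needed. Put $\Lambda_S=\caL(\bfb_i:i\in S)$ and $U=\spa\Lambda_S$. Passing to this sublattice, applying Fubini over $U^\perp$, and using Hadamard's inequality give
\[
\vol_d(\pi_{\Lambda'}[-1,1]^d)\le\vol_d(\pi_{\Lambda_S}[-1,1]^d)\le\det\Lambda_S\cdot\vol_{d-\lvert S\rvert}(\Pi_{U^\perp}[-1,1]^d)\ll_d\prod_{i\in S}\|\bfb_i\|.
\]
This is the only direction you need. On the other side, $k_i\le 2\|\bfb_i\|/\sigma$ for $i\notin S$ and $\det\Lambda'\ge 2^{-(d')^2}\prod_i\|\bfb_i\|$ give $\vol_d(V)=\sigma^{d-d'}\det\Lambda'/\prod_i k_i\gg_d\prod_{i\in S}\|\bfb_i\|$. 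The only LLL input is the orthogonality-defect bound, which is invariant under permuting the basis vectors, so the Gram--Schmidt ordering issue never arises.

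Your rationality fix also works. For instance, take a rational orthogonal basis of $\spa(\Lambda')^\perp$ and divide each vector by its $\ell_\infty$-norm, as the paper does.
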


Combining with \eqref{eqn:dbd} gives 
$\vol_d(V)=\Theta_d(\vol_d(\pi_\Lambda D))$.

\begin{proof}[Proof of 
\cref{lem:quotient-box-discretization}]
By applying a linear transformation, we may assume $B=[-1,1]^d$
without loss of generality.
Apply \cref{lem:latticereduct} to compute an LLL basis $\bfu_1,\ldots,\bfu_{d'}$ of $\Lambda$.
Complete $\bfu_1,\ldots,\bfu_{d'}$ to a basis
$\bfu_1,\ldots,\bfu_d$ of $\R^d$ by adjoining
an orthogonal basis of $\spa(\Lambda)^\perp$.
Write $\delta_i=\|\bfu_i\|_\infty$ for $1\le i\le d$, and define the matrix
\[A=(\bfu_1/\delta_1,\cdots, \bfu_d/\delta_d),\] in which each entry has absolute value at most one, so
\[A[-1/d,1/d]^d\subseteq B.\]
By \cref{lem:latticereduct},
\[\lvert\det(A)\rvert = \frac{\lvert\det(\bfu_1,\dots,\bfu_d)\rvert}{\prod_{i=1}^d \|\bfu_i\|_\infty}\ge \frac{\lvert\det(\bfu_1,\dots,\bfu_d)\rvert}{\prod_{i=1}^d \|\bfu_i\|_2} =\frac{\det \Lambda}{\prod_{i=1}^{d'} \|\bfu_i\|_2}  \ge \Omega_d(1).\] 
 Then, Cramer's rule gives
\[B \subseteq A[-C_d,C_d]^d\]
for a constant $C_d$ depending only on $d$.

Since $A^{-1} \Lambda=\caL\big ((\delta_i \bfe_i)_{i=1}^{d'}\big)$,
for $s>0$, 
\[
 \vol_d\bigl(\pi_\Lambda(A[-s,s]^d)\bigr) = \lvert\det A\rvert \vol_d\bigl(\pi_{A^{-1} \Lambda}([-s,s]^d)\bigr)  =\lvert\det A\rvert(2s)^{d-d'}
   \prod_{i=1}^{d'} \min \{2s,\delta_i\}.
\]
Thus, plugging in $s=1/d$ and $s=C_d$, the two containments above imply
\[\vol_d(\pi_\Lambda(B))
=\Theta_d(\lvert\det A\rvert\prod_{i=1}^{d'} \min \{1,\delta_i\}).\]

Define $k_i=\lceil2d \delta_i\rceil$ and $\bfv_i=\bfu_i/k_i$ for all $1\le i\le d'$, so that $\delta_i/k_i = \Theta_d( \min \{1,\delta_i\})$.
For $d'<i\le d$, define $\bfv_i = \bfu_i/(2d \delta_i)$. Then,
\[
 \vol_d(V)= \lvert \det (\bfv_1,\dots,\bfv_d)\rvert = \lvert\det A\rvert \left (\prod_{i=1}^{d'} \frac{\delta_i}{k_i}\right ) \cdot (2d)^{-(d-d')}
 =\Theta_d(\vol_d(\pi_\Lambda(B))),
\]
as claimed.
Also, since $\|\bfv_i\|_\infty \le \frac{1}{2d}$, we have 
$2V\subseteq B$ as claimed. 
\end{proof}

We apply \cref{lem:quotient-box-discretization} to $\Lambda$ and $B$, and obtain $k_i,\bfv_i,V = \sum_{i=1}^d [-1/2,1/2]\cdot \bfv_i$.  
Define the \emph{cell centers}
\begin{equation*}
 \caG =\sum_{i=1}^{d'}\{0,\ldots,k_i-1\} \bfv_i
 +\sum_{i=d'+1}^{d}\Z  \bfv_i.
\end{equation*}
For $\bfg \in \caG$, we call $\bfg+V$ or 
$\pi_\Lambda(\bfg+V)$ a \emph{cell}, with volume $\vol_d(\pi_\Lambda(V)) = \vol_d(V)  \in \Theta_d(\vol_d(\pi_\Lambda(B)))$.
The cells have disjoint interior, and the collection of half-open cells 
$\{\bfg + \sum_{i=1}^d [-1/2,1/2)\cdot \bfv_i : \bfg\in \caG\}$ forms a tiling of a fundamental domain of $\Lambda$ in $\R^d$.

We index 
$\bfg\in \caG$ by its coordinate $(\hat g_1,\dots,\hat g_d)$ in the 
 basis $(\bfv_i)_i$, where $\bfg = \sum_{i=1}^d \hat g_i \bfv_i$, 
and order points in $\caG$ lexicographically by the coordinates $(\hat g_1,\dots,\hat g_d)$.

Given a polytope $K'\subseteq \R^d$, define the \emph{occupied} centers
\begin{equation}
\caP_{K'}=\{\bfg\in \caG:(\bfg+V)\cap (K'+\Lambda)\ne \varnothing\}.\label{eqn:occupiedp}
\end{equation}
Since $(\bfg+V)_{\bfg \in \caG}$ form a tiling, 
$\caP_{K'}$ is nonempty whenever $K'$ is nonempty.

\begin{lemma}
\label{lem:quotient-cell-search}
Given $\bfh\in \caG$, one can find its lexicographical successor in $\caP_{K'}$ (or report that no successor exists), and find the 
lexicographically first point of 
$\caP_{K'}$, 
in $(b_0+\maxbitlen(\bfh,K'))^{\Oh_d(1)} \Tquery(K')$ time.
\end{lemma}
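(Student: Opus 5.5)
The plan is to reduce both tasks to a constant number of fixed-dimensional integer programs via Lenstra's algorithm (\cref{lem:lattice-optimization}). The key observation is a membership criterion for $\caP_{K'}$. A center $\bfg$ lies in $\caP_{K'}$ iff there exist $\boldsymbol{\ell}\in\Lambda$ and $\bfy\in K'$ with $\bfy+\boldsymbol{\ell}\in\bfg+V$.

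Write points in the basis $(\bfv_i)_i$, so that $\bfg$ has integer coordinates $\hat g_i$ and $\bfg+V$ is the coordinate box $\prod_i[\hat g_i-1/2,\hat g_i+1/2]$. The set of pairs $(\bfy,\bfg')$ with $\bfy\in K'$ and $\bfy-\bfg'\in V$ is a polytope in $\R^{2d}$. It is cut out from $K'\times\R^d$ by $2d$ linear inequalities, but it is unbounded in $\bfg'$, so first intersect with a bounding box of $\bfg'$ derived from $\maxbitlen(K')$. It has a linear optimization oracle by \cref{cor:polytope-oracles} and \cref{lem:constrained-optimization}, with $\Tquery$ only a $b^{\Oh_d(1)}$ factor above $\Tquery(K')$.

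We then optimize over the lattice $\{(\bfy,\bfg'):\bfg'\in\bfy+\caL(\bfv_1,\ldots,\bfv_d)\}$. Equivalently, $\bfg'$ ranges over the superlattice $\caL(\bfv_i)\supseteq\Lambda$ while $\bfy$ is free. Freeness of $\bfy$ is problematic for Lenstra, so instead project: the condition is $\bfg'\in\caL(\bfv_i)\cap(K'+V)$. The polytope $K'+V$ has an oracle, obtained by summing the oracle answers for $K'$ and for the $2^d$-vertex zonotope $V$. Any such $\bfg'$ reduces modulo $\Lambda$ to a unique $\bfg\in\caG$ by taking $\hat g_i\bmod k_i$ for $i\le d'$. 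So $\caP_{K'}$ is exactly the image of $\caL(\bfv_i)\cap(K'+V)$ under this reduction, and membership in the half-open tiling versus closed cells only enlarges the set consistently with \eqref{eqn:occupiedp}.

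Lexicographic search proceeds coordinate by coordinate. Introduce integer variables $\hat g_1,\ldots,\hat g_d$ together with integers $m_1,\ldots,m_{d'}$, with constraints $\sum_i\hat g_i\bfv_i+\sum_{i\le d'}m_ik_i\bfv_i\in K'+V$ and $0\le\hat g_i\le k_i-1$ for $i\le d'$. This is a $(d+d')$-dimensional integer program over $\Z^{d+d'}$ whose feasible polytope has an oracle by \cref{lem:constrained-optimization}. Boundedness of the $m_i$ follows from compactness of $K'$, since the image lies in a bounded set and the $\bfv_i$ form a basis.

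The lexicographic first point is found by minimizing $\hat g_1$, fixing it, minimizing $\hat g_2$, and so on, using $d$ calls to \cref{lem:lattice-optimization}. For the successor of $\bfh$, branch over the $d$ possible positions $j$ where a successor first exceeds $\bfh$. Impose $\hat g_i=\hat h_i$ for $i<j$ and $\hat g_j\ge\hat h_j+1$, test feasibility, and take the largest feasible $j$. Then complete lexicographically minimally as before. This uses $\Oh(d^2)$ ILP calls, each taking $(b_0+\maxbitlen(\bfh,K'))^{\Oh_d(1)}\Tquery(K')$ time, where $\maxbitlen(\bfv_i,k_i)=\Oh_d(b_0)$ by \cref{lem:quotient-box-discretization}.

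The main obstacle is justifying that the projection $K'+V$ admits a clean oracle and that the quantifier over $\boldsymbol{\ell}$ is absorbed by the integer variables $m_i$ without losing boundedness. Both follow from the basis coordinates and compactness of $K'$, but the bit-length bounds on the auxiliary box need care.
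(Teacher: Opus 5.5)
Your proposal is correct and follows essentially the same route as the paper. The paper rewrites membership $\bfg\in\caP_{K'}$ as $\bfg+\boldsymbol{\ell}\in K'-V$ (equivalently $K'+V$, since $V$ is origin-symmetric) for some $\boldsymbol{\ell}\in\Lambda$, and turns the lexicographic queries into $\Oh_d(1)$ fixed-dimensional ILPs over $(\bfg,\boldsymbol{\ell})\in\caL(\bfv_1,\ldots,\bfv_d)\times\Lambda$ with coordinate constraints on the $\hat g_i$, using \cref{lem:lattice-optimization} with an oracle from \cref{lem:constrained-optimization}; your integer coordinates $m_i$ for $\Lambda$ in the basis $(k_i\bfv_i)_{i\le d'}$ and your branch-over-$j$ successor search just make explicit what the paper leaves implicit.

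Two remarks in the detours you abandon are slightly off, though neither affects the final argument:
\begin{itemize}
\item The set of pairs $(\bfy,\bfg')$ with $\bfy\in K'$ and $\bfy-\bfg'\in V$ is already bounded, so it is not unbounded in $\bfg'$.
\item The set $\{(\bfy,\bfg'):\bfg'\in\bfy+\caL(\bfv_1,\ldots,\bfv_d)\}$ is not a lattice.
\end{itemize}
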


\begin{proof}
To find the successor (or the first) in lexicographical order, it suffices to solve the task of deciding whether there exists $\bfg = \sum_{i=1}^d \hat g_i \bfv_i \in \caP_{K'}$ subject to individual inequality constraints on each coordinate $\hat g_i$.
Since
 $\bfg\in \caP_{K'}$ if and only if 
$\bfg+\boldsymbol{\ell}\in K'-V$ for some $\boldsymbol{\ell}\in \Lambda$,
this task reduces to solving ILPs over the lattice
$(\bfg,\boldsymbol{\ell})\in \caL(\bfv_1,\ldots,\bfv_d)\times \Lambda$, and a 
feasibility polytope in $\R^d\times \spa(\Lambda)$ with a linear optimization oracle that can be built from $K'$ and $V$ via
\Cref{lem:constrained-optimization}. The ILPs can be solved using
\cref{lem:lattice-optimization}.
The claimed time complexity follows from
\Cref{lem:constrained-optimization} and \cref{lem:lattice-optimization}.
\end{proof}

\subsection{Enumerating shallow cells}
Recall the definition of the parallelepiped
$B$
in  \eqref{eqn:defnb}.
We scale it up to \[B_0\coloneqq 2d \beta^2 B.\]
Then, $K$ contains a translate of $(2d \beta^2 \rho)^{-1} B_0$.
Define the \emph{shallow} cell centers, 
\begin{equation}
   \label{eqn:shallows}
\caS =\caP_K\setminus \bigl((K\ominus 2B_0)+\Lambda\bigr),
\end{equation}
where $\caP_K$ is defined in 
\eqref{eqn:occupiedp}.

\begin{lemma}
\label{lem:shallow-cell-shell}
For every $\bfg\in \caS$,
\[
 \pi_\Lambda(\bfg+V)\subseteq
 \pi_\Lambda(K+B_0)\setminus \pi_\Lambda(K\ominus 3B_0).
\]
\end{lemma}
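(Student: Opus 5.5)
We prove the two containments separately, for a fixed $\bfg\in\caS$. The only facts we use are the cell scale $2V\subseteq B$ from \eqref{eq:cap-cover-box-scale}, the inclusions $B\subseteq B_0$ and $V\subseteq B_0/2$ (since $V\subseteq B/2$ and $B_0=2d\beta^2B$ with $2d\beta^2\ge 1$), and the origin-symmetry of $V$ and $B_0$.

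\emph{Upper containment.} Since $\bfg\in\caP_K$, there are $\bfz\in\bfg+V$ and $\boldsymbol{\ell}\in\Lambda$ with $\bfz+\boldsymbol{\ell}\in K$. For any $\bfw\in\bfg+V$ we have
\[
\bfw-\bfz\in V-V=2V\subseteq B\subseteq B_0 .
\]
Hence $\bfw+\boldsymbol{\ell}\in K+B_0$, which gives $\pi_\Lambda(\bfg+V)\subseteq\pi_\Lambda(K+B_0)$.

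\emph{Lower containment.} We argue by contradiction. Suppose some $\bfw\in\bfg+V$ satisfies $\bfw+\boldsymbol{\ell}\in K\ominus 3B_0$ for some $\boldsymbol{\ell}\in\Lambda$. Write $\bfw=\bfg+\bfv$ with $\bfv\in V\subseteq B_0/2$, so that $\bfg+\boldsymbol{\ell}=\bfw+\boldsymbol{\ell}-\bfv$. Then
\[
\bfg+\boldsymbol{\ell}+2B_0=\bfw+\boldsymbol{\ell}+(2B_0-\bfv)\subseteq \bfw+\boldsymbol{\ell}+\tfrac52 B_0\subseteq K,
\]
using $-\bfv\in B_0/2$ (by symmetry) and $\tfrac52 B_0\subseteq 3B_0$ (since $B_0$ is convex and contains $\zero$). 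Thus $\bfg\in (K\ominus 2B_0)+\Lambda$, contradicting the definition \eqref{eqn:shallows} of $\caS$. Therefore $\pi_\Lambda(\bfg+V)$ is disjoint from $\pi_\Lambda(K\ominus 3B_0)$.

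Combining the two containments proves the lemma. There is no real obstacle; the only point needing care is that the half-cell slack $V\subseteq B_0/2$ fits inside the gap between the depth thresholds $2B_0$ and $3B_0$.
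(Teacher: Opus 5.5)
Your proof is correct and follows the paper's argument essentially verbatim: the outer containment comes from $\bfg\in\caP_K$ together with $V-V=2V\subseteq B\subseteq B_0$, and the inner one from showing that a lattice translate of a point of $\bfg+V$ lying in $K\ominus 3B_0$ would force $\bfg+\boldsymbol{\ell}\in K\ominus 2B_0$. The only cosmetic difference is that you use the sharper slack $V\subseteq B_0/2$, whereas the paper simply uses $V\subseteq B_0$ via $(K\ominus 3B_0)-V\subseteq K\ominus 2B_0$.
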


\begin{proof}
Since $\bfg\in \caP_K$, the definition of $\caP_K$ yields 
$\pi_\Lambda(\bfg+V)\subseteq
\pi_\Lambda(K-V+V)\subseteq
\pi_\Lambda(K+B_0)$,
where the last step uses
$V-V=2V\subseteq B\subseteq B_0$
by \eqref{eq:cap-cover-box-scale}.

If $\bfg + \bfv +\boldsymbol \ell \in K\ominus 3B_0$, $\bfv\in V$, $\boldsymbol \ell \in \Lambda$, then 
 $\bfg + \boldsymbol \ell \in (K\ominus 3B_0) - V\subseteq K\ominus 2B_0$, 
contradicting $\bfg\in \caS$.
This proves the claim.
\end{proof}

We now bound the number of shallow centers and describe how to enumerate all of them.
The enumeration algorithm works by 
 skipping long intervals of deep centers.
\begin{lemma}
\label{lem:geometric-boundary-representatives}
We have $1\le \lvert\caS\rvert=\Oh_d\bigl(\rho \vol_d(\pi_\Lambda K)/\vol_d(\pi_{\Lambda} D)\bigr)$.
Moreover, one can compute $\caS$ in
$\Oh_d\bigl(\rho \vol_d(\pi_\Lambda K)/\vol_d(\pi_{\Lambda} D)\bigr) \cdot 
b_0^{\Oh_d(1)} \Tquery(K)$ time.
\end{lemma}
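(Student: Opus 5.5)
We prove the three claims in turn: the lower bound $\lvert\caS\rvert\ge 1$, the upper bound via volume packing, and the enumeration algorithm.

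\textbf{Lower bound.} For $\lvert\caS\rvert\ge 1$, we show that some occupied cell is shallow. Take any point $\bfx\in K$ and move along a direction to the boundary of $K$. Any boundary point $\bfz\in\partial K$ has the property that every $\Lambda$-translate lying in $K\ominus 2B_0$ would be at distance at least $2B_0$ from the boundary. Since $\rk\Lambda=d'<d$, there is a direction $\bfw\in\spa(\Lambda)^\perp$. Choose $\bfz$ maximizing $\langle\bfw,\cdot\rangle$ over $K$. Any translate $\bfz+\boldsymbol\ell$ has the same $\bfw$-value, so it cannot lie in $K\ominus 2B_0$, because that set is strictly inside in direction $\bfw$. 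Let $\bfg\in\caG$ be the center of the cell containing $\pi_\Lambda(\bfz)$. Then $\bfg\in\caP_K$. Moreover $\bfg\in\bfz-V+\Lambda$, and $\bfz-V$ still lies outside $(K\ominus 2B_0)+\Lambda$ since $V\subseteq B\subseteq B_0$. Hence $\bfg\in\caS$.

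\textbf{Upper bound.} The bound on $\lvert\caS\rvert$ is a packing argument. By \cref{lem:shallow-cell-shell}, the cells $\pi_\Lambda(\bfg+V)$ for $\bfg\in\caS$ have disjoint interiors, each has volume $\vol_d(V)=\Theta_d(\vol_d(\pi_\Lambda D))$, and all lie in the quotient shell $\pi_\Lambda(K+B_0)\setminus\pi_\Lambda(K\ominus 3B_0)$. Since $K$ contains a translate of $(2d\beta^2\rho)^{-1}B_0$, apply \cref{lem:quotient-shell-bound} with $D$ replaced by $B_0$, the parameter $\rho'=2d\beta^2\rho$, $a=1$, and $b=3$. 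This bounds the shell volume by $\Oh_d(\rho)\vol_d(\pi_\Lambda K)$, using $\rho<\eps_d$ so that $(1+\rho')^{d-1}=\Oh_d(1)$. Dividing by the cell volume and using \eqref{eqn:dbd} gives $\lvert\caS\rvert=\Oh_d(\rho\vol_d(\pi_\Lambda K)/\vol_d(\pi_\Lambda D))$.

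\textbf{Enumeration.} We scan $\caP_K$ in lexicographic order using \cref{lem:quotient-cell-search}, but skip runs of deep cells. Visiting every occupied cell would cost $\vol(\pi_\Lambda K)/\vol(V)$, far too many. At a current occupied center $\bfg$, use \cref{lem:lattice-translate-depth} with the parallelepiped $B_0$, or a parallelepiped built from $\bfv_d$-steps, to test whether $\bfg\in (K\ominus 2B_0)+\Lambda$. If the test fails, output $\bfg$ as shallow and move to its successor. If it succeeds, compute the largest $t$ such that $\bfg+\boldsymbol\ell+2B_0+[-t,t]\bfv_d\subseteq K$. Because $K\ominus 2B_0$ is convex, all centers $\bfg+j\bfv_d$ with $0\le j\le t$ are deep, so we jump to $\bfg+(t+1)\bfv_d$ and query its successor in $\caP_K$.

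The main obstacle is charging these jumps. Along a fixed line of $\bfv_d$-cells, the set of deep centers in the intersection with $(K\ominus 2B_0)+\Lambda$ is a union of intervals. After choosing $\bfv_d$ (one of the $\bfv_i$ with $i>d'$) as the last coordinate, each line in the quotient meets each convex translate-class in one interval. Each maximal deep run is then adjacent to a shallow occupied cell at its end, or at the end of the occupied range, which by the argument above is shallow since the cell just past a deep interval but still occupied lies near the boundary. Thus the number of jumps is $\Oh_d(\lvert\caS\rvert)$ times the number of lattice translates meeting a line segment. That count is $\Oh_d(1)$, because line segments in direction $\bfv_d\perp\spa\Lambda$ through distinct translates correspond to distinct cells.

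Each step costs $b_0^{\Oh_d(1)}\Tquery(K)$, where the oracles for the constrained bodies come from \cref{lem:constrained-optimization}. This yields the claimed running time.
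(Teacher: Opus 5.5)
Your lower bound is correct but argued differently from the paper. You take $\bfz\in K$ maximizing a nonzero functional $\bfw\perp\spa(\Lambda)$, so no $\Lambda$-translate of $\bfz$ lies in $K\ominus B_0$. That statement, rather than the one with $K\ominus 2B_0$ that you wrote, is what absorbs the shift by $-V$. The paper instead takes the lexicographically first occupied center and notes that if it were deep, then $\bfg-\bfv_d$ would also be occupied. Your volume-packing upper bound is the paper's argument.

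\textbf{The gap is in the running-time analysis of the enumeration.}

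\emph{The $\Oh_d(1)$ count is false.} You claim that a line $\bfg+\R\bfv_d$ meets only $\Oh_d(1)$ translates $(K\ominus 2B_0)+\boldsymbol{\ell}$. But $\pi_\Lambda K$ typically wraps around many times. Take $K=[0,N]^2$, $\Lambda=\Z(1,0)$ and $\rho=\Theta(1/N)$. Then the line $\bfg+\R\bfv_d$ meets $\Theta(N)$ translates, while $\lvert\caS\rvert=\Oh(1)$ and $r=\Oh(1)$. So ``$\Oh_d(\lvert\caS\rvert)$ times that count'' gives nothing useful. Your justification (distinct translates give distinct cells on the line) only bounds the count by the number of cells on the line.

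\emph{Per-run charging cannot work.} Your observation that every maximal deep run is followed by an occupied shallow cell is true, so there are at most $\lvert\caS\rvert$ runs. However, a run is a union of overlapping chords from different translates, and each of your jumps uses a single $\boldsymbol{\ell}$. Suppose the section of $K$ by $\bfg+\spa(\Lambda)+\R\bfv_d$ is a thin strip tilted against $\bfv_d$. Then consecutive chords overlap only slightly, and one run needs a number of jumps proportional to its length. Also, with the symmetric window $[-t,t]$, at the first deep cell after a shallow one you get $t<1$, so even a single chord takes several jumps.

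\emph{What is missing is the paper's depth-based charging by dyadic skip length.} It works for your rule as well.
\begin{itemize}
\item Suppose a skipped center $\bfg+j\bfv_d$ ($0\le j\le t$) satisfies $\bfg+j\bfv_d+\boldsymbol{\ell}'+hB_0\subseteq K$. Since $\bfv_d\in 2V\subseteq B_0$, also $\bfg+\boldsymbol{\ell}'+2B_0+[j-h+2,\,j+h-2]\bfv_d\subseteq K$. Maximality of $t$ then forces $h\le 2t+2$.
\item Hence all $\lfloor t\rfloor+1\ge\max\{1,t\}$ skipped cells lie in $\pi_\Lambda K\setminus\pi_\Lambda\bigl(K\ominus(2t+4)B_0\bigr)$. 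By \cref{lem:quotient-shell-bound}, this set has volume $\Oh_d\bigl((t+1)\rho\bigr)\vol_d(\pi_\Lambda K)$.
\item Skipped blocks from different iterations have disjoint interiors modulo $\Lambda$. So for each dyadic $s$, the iterations with $t\in[s,2s)$ number $\Oh_d\bigl(\rho\vol_d(\pi_\Lambda K)/\vol_d(V)\bigr)$, and likewise for $t<1$.
\item There are $\Oh_d(b_0)$ scales.
\end{itemize}
With this argument in place of yours, your algorithm meets the claimed time bound.
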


\begin{proof}~
   \paragraph*{Bounding $\lvert\caS\rvert$.}
To see that $\caS\ne \varnothing$, choose the lexicographically first $\bfg\in \caP_K$.
If $\bfg\notin \caS$, then
$\bfg + \boldsymbol \ell \in K\ominus 2B_0$ for some $\boldsymbol \ell \in \Lambda$, so $\bfg-\bfv_d+
\boldsymbol \ell
\in (K\ominus 2B_0) + 2V \subseteq K$, and hence $\bfg-\bfv_d \in \caP_{K}$, contradicting the choice of $\bfg$.

Since $K$ contains a translate of $(2d \beta^2 \rho)^{-1} B_0$, where $\rho<1$ and $\beta=\Oh_d(1)$,
\cref{lem:quotient-shell-bound} shows that the shell
$\pi_\Lambda(K+B_0)\setminus \pi_\Lambda(K\ominus 3B_0)$ from
\cref{lem:shallow-cell-shell} has volume at most $\Oh_d(\rho) \vol_d(\pi_\Lambda K)$.
Then, \cref{lem:shallow-cell-shell} and the disjointness of the interiors of $\pi_{\Lambda}(\bfg+V)$ give
$\lvert\caS\rvert\vol_d(V)\le \Oh_d(\rho) \vol_d(\pi_\Lambda K)$.
Since $\vol_d(V)=\Theta_d(\vol_d(\pi_\Lambda D))$, this proves the claimed bound on $\lvert\caS\rvert$.

Analogously,
\cref{lem:quotient-shell-bound} shows
 \begin{equation}
    \label{eqn:tempboundvol}
 \vol_d\bigl(\pi_\Lambda K\setminus
       \pi_\Lambda(K\ominus sB_0)\bigr)
 \le \Oh_d(s \rho) \vol_d(\pi_\Lambda K)
 \text{ for every }s>0.
 \end{equation}

\paragraph*{Enumeration algorithm.}

Use \cref{lem:quotient-cell-search}  to find the lexicographically
first $\bfg_0\in \caP_K$.
Starting from $\bfg\gets \bfg_0$, we perform the following iterations:
for the current $\bfg$, apply \cref{lem:lattice-translate-depth}
to find the largest $t\in \Nonneg$ such that 
there exists $\boldsymbol \ell \in \Lambda$ with 
$\bfg + \boldsymbol \ell + tB_0 \subseteq K$.
Then,  by definition of $\caS$ in
   \eqref{eqn:shallows},
$\bfg \in \caS$ if and only if $t<2$ or $t$ does not exist.
\begin{itemize}
   \item If
$\bfg \in \caS$, add $\bfg$ to the answer set, and use \cref{lem:quotient-cell-search} to find the next $\bfg' \in \caP_K$ after $\bfg$ in  lexicographical order, and continue the next iteration with $\bfg\gets \bfg'$.
\item Otherwise, $t\ge 2$ and $\bfg \notin \caS$.
In this case we skip the next candidates $\bfg + j \bfv_d$ for all $0\le j\le t-2$:  
find the next
$\bfg'' \in \caP_K$
after $\bfg+(t-2) \bfv_d$ in lexicographical order
via \cref{lem:quotient-cell-search}, and continue to
$\bfg\gets \bfg''$.
\end{itemize}
Terminate this algorithm once no lexicographical successor in $\caP_K$ exists.

We first show correctness.
Clearly, 
$\bfg \in \caP_K$ throughout the algorithm and its lexicographical order is increasing.
To prove that no elements of $\caS$ are missed, it suffices to prove that all skipped elements in the second case are not in $\caS$.
Indeed, since $\bfv_d \in B_0$,  the definition of $t$ implies, for all $0\le j\le t-2$, that
\[
 \bfg+j \bfv_d+\boldsymbol{\ell}+2B_0
 \subseteq \bfg+\boldsymbol{\ell}+tB_0\subseteq K,
\]
that is, $\bfg + j \bfv_d  \in (K\ominus 2B_0)+\Lambda$, so $\bfg + j \bfv_d \notin \caS$.
This proves that the algorithm correctly finds all elements of $\caS$.

Now we bound the number of iterations in this algorithm.
The iterations that find $\bfg \in \caS$ can be bounded by $\lvert\caS\rvert$; it remains to bound the second type of iterations which have $t\ge 2$.
Consider any iteration with $t\ge 2$, and let
\[J\coloneqq \bigcup_{j=0}^{t-2}(\bfg+j \bfv_d+V) \subseteq \bfg + (t-1) B_0 \subseteq K-\boldsymbol \ell\] denote the union of cells of the skipped centers.
We claim that $(\pi_{\Lambda} J) \cap \pi_{\Lambda}(K\ominus 2tB_0) = \varnothing$.
Indeed, if $\bfu+\bfm\in K\ominus 2tB_0$ for some $\bfu\in J$ and 
$\bfm\in \Lambda$, then,
since
$\bfu \in \bfg +(t-1) B_0$, we have
\[
 \bfg+\bfm+(t+1) B_0
 \subseteq \bfu+\bfm+2tB_0\subseteq K,
\]
contradicting the maximality of $t$.

Let $s\ge 2$ be a power of two. Consider all iterations with $t\in [s,2s)$. The previous paragraph shows that the corresponding set $J$ satisfies
$\pi_{\Lambda} J\subseteq \pi_\Lambda K\setminus \pi_\Lambda(K\ominus 4sB_0)$; by \eqref{eqn:tempboundvol}, this set has volume $\Oh_d(s \rho \vol_d(\pi_{\Lambda} K))$.
Since $\vol_d(\pi_\Lambda J) = \vol_d(J) = (t-1) \vol_d(V) \ge (s/2) \vol_d (V)$, and 
the sets $\pi_\Lambda J$ across the iterations have disjoint interiors, we can thus bound the number of iterations with $t\in [s,2s)$ by the volume ratio, $\Oh_d(\rho \vol_d(\pi_\Lambda K)/\vol_d(V))$.
Using $\maxbitlen(\rho, K)\le b_0$, we have $\maxbitlen(B_0) = \Oh_d(b_0)$, and hence the range of possible $t$ is 
$t\le 2^{\Oh_d(b_0)}$. 
Summing over $\Oh_d(b_0)$ relevant powers of two $s\ge 2$, and combining the proved bound on $\lvert\caS\rvert$, the total number of iterations is
$\Oh_d(\rho \vol_d(\pi_\Lambda K)/\vol_d(V))\cdot b_0$.
Finally, combining the time bounds in \cref{lem:quotient-cell-search,lem:lattice-translate-depth}
gives the claimed time complexity.
\end{proof}

\subsection{Constructing a quotient Macbeath net}

Recall from  \eqref{eqn:defnb} that the origin-symmetric parallelepiped $B_0$ satisfies 
$\bfc+
(2d \beta^2 \rho)^{-1} B_0
\subseteq K\subseteq \bfc+
(2d \beta \rho)^{-1} B_0$, where $1\le \beta \le \Oh_d(1)$.
Recall from \eqref{eq:cap-cover-box-scale} that $V$ is an origin-symmetric parallelepiped with $2V\subseteq B\subseteq B_0$.
By the discussion preceding
\cref{subsec:quotient-cover-preliminaries}, we may assume 
\[\theta\coloneqq 2d \beta^2 \rho < 1.\] 
For convenience, we will work with Macbeath regions in the following enlarged version of $K$ to provide thickness near the boundary:
\[
 K^+\coloneqq \bfc+(1+2 \theta)(K-\bfc),
 \]
 so 
 \[
 K+2B_0\subseteq K^+\subseteq K+2 \beta B_0.
\]
In particular, 
 for all  $\bfp \in K+B_0$,
\begin{equation}
   \label{eqn:boundthick}
\bfp + B_0 \subseteq M_{K^+}^1(\bfp).
   \end{equation}

Our goal in this subsection is to select a set of Macbeath regions, with small total volume in $\R^d/\Lambda$, that forms a  covering of all the shallow cells $(\bfg+V)_{\bfg\in \caS}$ in $K$ modulo $\Lambda$.
To achieve this, we first select a subset of balanced centers (see \cref{def:balanced-center}) whose Macbeath regions already form a covering. 
From them, we greedily select an inclusion-maximal pairwise disjoint subcollection. Their total volume is bounded by packing, and the covering property (after enlargement) follows from  
\cref{cor:quotient-macbeath-cover}
due to balancedness of the centers.

\paragraph*{Candidate balanced centers.}

Take $q=q_d\in \Pos$ and $\gamma=\lceil\gamma_d\rceil\ge 1$ from \cref{lem:balanced-center}.
Define \[\alpha=2 \gamma+1,\] and for $\bfy\in K^+$ use the shorthand
\[M_\bfy\coloneqq (K^+-\bfy)\cap (\bfy-K^+) = M_{K^+}^1(\bfy) - \bfy.\]

For each shallow cell, the following lemma chooses a constant number of balanced centers modulo $q \Lambda$, whose Macbeath regions cover all its $\Lambda$-lattice translates. The proof proceeds by enumerating the cosets of $q \Lambda$ in $\Lambda$, and applying 
\cref{lem:balanced-center} to the lattice $q \Lambda$.
\begin{lemma}
\label{lem:cap-cover-balanced-centers}
Given $\bfg\in \caS$,
in $b_0^{\Oh_d(1)} \Tquery(K)$ time, 
one can compute $Y_\bfg\subseteq K^+\cap (\bfg+\Lambda)$ with
$\lvert Y_\bfg\rvert=\Oh_d(1)$, such that each
$\bfy\in Y_\bfg$ is $\gamma$-balanced in $K^+$ modulo $q \Lambda$ and
satisfies $B_0\subseteq \alpha M_\bfy$. Moreover,
\begin{equation}
\label{eq:cap-cover-candidates}
 K\cap (\bfg+V+\Lambda)\subseteq
 \bigcup_{\bfy\in Y_\bfg} M_{K^+}^\alpha(\bfy).
\end{equation}
\end{lemma}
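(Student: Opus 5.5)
We follow the hint after the statement: enumerate the cosets of $q\Lambda$ in $\Lambda$ and apply \cref{lem:balanced-center} to $K^+$ with the lattice $q\Lambda$. Concretely, compute an HNF basis of $\Lambda$ (\cref{lem:lattice-basis-hnf}) and list the $q^{d'}=\Oh_d(1)$ coset representatives $\boldsymbol{\ell}_1,\dots,\boldsymbol{\ell}_{q^{d'}}$ of $\Lambda/q\Lambda$. For each $j$ we first find some point $\bfx_j\in K^+\cap(\bfg+\boldsymbol{\ell}_j+q\Lambda)$, or decide there is none. This uses \cref{lem:lattice-optimization} on the polytope $K^+-\bfg-\boldsymbol{\ell}_j$ (an affine image of $K$, so it has an oracle via \cref{lem:constrained-optimization}) with lattice $q\Lambda$. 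If $\bfx_j$ exists, we apply \cref{lem:balanced-center} to $(K^+,q\Lambda,\bfx_j)$. This yields $\bfy_j\in K^+\cap(\bfx_j+q_d^{-1}q\Lambda)=K^+\cap(\bfx_j+\Lambda)\subseteq K^+\cap(\bfg+\Lambda)$, using $q=q_d$. The point $\bfy_j$ is $\gamma_d$-balanced (hence $\gamma$-balanced) modulo $q\Lambda$ and satisfies $K^+\cap(\bfx_j+q\Lambda)\subseteq M_{K^+}^{\gamma}(\bfy_j)$. We set $Y_\bfg=\{\bfy_j\}$, so $\lvert Y_\bfg\rvert\le q^{d'}=\Oh_d(1)$, and the time bound is immediate from the cited lemmas.

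To prove $B_0\subseteq\alpha M_\bfy$, we need $\bfy$ to lie reasonably deep, and \cref{eqn:boundthick} supplies this once $\bfy\in K+B_0$. Since $\bfg\in\caS\subseteq\caP_K$, some translate $\bfg+\boldsymbol{\ell}$ meets $K-V\subseteq K+B_0$. We therefore modify the construction so that $\bfx_j$ is always chosen in $(K-V)\cap(\bfg+\boldsymbol{\ell}_j+q\Lambda)$; such points lie in $K+B_0\subseteq K^+$. Then $\bfx_j+B_0\subseteq M^1_{K^+}(\bfx_j)$, i.e. $B_0\subseteq M_{\bfx_j}$. The balancedness gives $\bfx_j\in M^{\gamma}_{K^+}(\bfy_j)$, and \cref{lem:center-containment-general} with $\lambda=\gamma$ then gives $M^1_{K^+}(\bfx_j)\subseteq M^{2\gamma+1}_{K^+}(\bfy_j)$. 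Hence $\bfx_j+B_0\subseteq \bfy_j+\alpha M_{\bfy_j}$.

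This recentres the region at $\bfx_j$ rather than at $\bfy_j$, and that mismatch is the delicate point. Two fixes are available. One is to note $\bfx_j\pm$ small vectors and use central symmetry: $M^{\alpha}_{K^+}(\bfy_j)$ is symmetric about $\bfy_j$ and contains both $\bfx_j\pm B_0$, so taking a midpoint-type convex combination of $\bfx_j+B_0$ and its reflection $2\bfy_j-\bfx_j-B_0$ gives $\bfy_j+B_0\subseteq M^\alpha_{K^+}(\bfy_j)$. The other is to enlarge $\alpha$ by a constant, which the later arguments tolerate. I expect the reflection argument to work directly, and this is the step to check carefully.

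For \eqref{eq:cap-cover-candidates}, take $\bfz\in K\cap(\bfg+\bfv+\boldsymbol{\ell})$ with $\bfv\in V$ and $\boldsymbol{\ell}\in\Lambda$. Then $\bfz-\bfv\in(K-V)\cap(\bfg+\boldsymbol{\ell}_j+q\Lambda)$ for the $j$ with $\boldsymbol{\ell}\equiv\boldsymbol{\ell}_j$, so $\bfx_j$ was found. Since $\bfz-\bfv\in K+B_0\subseteq K^+$, we get $\bfz-\bfv\in K^+\cap(\bfx_j+q\Lambda)\subseteq M^{\gamma}_{K^+}(\bfy_j)$. By \cref{eqn:boundthick} applied at $\bfz-\bfv$, together with $V\subseteq B_0$, we have $\bfz\in M^1_{K^+}(\bfz-\bfv)$. \cref{lem:center-containment-general} then gives $M^1_{K^+}(\bfz-\bfv)\subseteq M^{2\gamma+1}_{K^+}(\bfy_j)=M^\alpha_{K^+}(\bfy_j)$. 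The main obstacle is this bookkeeping: ensuring every coset and lattice point of the cell is routed through a point of $K-V$ so that \eqref{eqn:boundthick} applies.
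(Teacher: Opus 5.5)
Your proposal is correct and matches the paper's proof: you enumerate the $q^{\rk\Lambda}$ cosets of $q\Lambda$, pick a coset point in $K-V\subseteq K+B_0$ (the paper searches $K+B_0$ directly), apply the balanced-center lemma to $K^+$ and $q\Lambda$, and route every cell point through \eqref{eqn:boundthick} and the center-containment lemma. The recentring step you flag is the paper's ``take differences and divide by two'' step, and your reflection/midpoint argument does go through with $\alpha=2\gamma+1$, so no enlargement of $\alpha$ is needed.
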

\begin{proof}
   The set $\bfg + \Lambda$ is partitioned into 
$q^{\rk \Lambda} = \Oh_d(1)$ sets
$\bfg + \boldsymbol{\ell}_0+q \Lambda$, where
$\boldsymbol \ell_0+q \Lambda$ is a coset of $q \Lambda$ in $\Lambda$.
We iterate over all these cosets $\boldsymbol \ell_0+q \Lambda$, and for each of them, 
use  \cref{lem:lattice-optimization}  
 to find a point
$\bfg_0\in (K+B_0)\cap (\bfg+\boldsymbol{\ell}_0+q \Lambda)$ (if exists). 
Then, by applying \cref{lem:balanced-center} to $K^+,q \Lambda$, and $\bfg_0$, we obtain a
center $\bfy\in K^+$ such that $\bfy \in \bfg_0+\Lambda\subseteq \bfg+\Lambda$, and
$\bfy$ is $\gamma$-balanced in $K^+$ modulo $q \Lambda$, and 
\begin{equation}
   \label{eqn:tempk}
M_{K^+}^\gamma(\bfy)\supseteq K^+\cap (\bfg_0+ q \Lambda) = 
K^+\cap (\bfg+\boldsymbol \ell_0 + q \Lambda).
\end{equation}
Return $Y_\bfg$ as the set of these $\Oh_d(1)$ found centers $\bfy$.
The time bound follows from \cref{lem:lattice-optimization,lem:balanced-center,lem:constrained-optimization,cor:polytope-oracles}.
It remains to prove \eqref{eq:cap-cover-candidates} and $B_0\subseteq \alpha M_\bfy$.

In each iteration, since $\bfg_0\in K+B_0 \subseteq K^+$,
\eqref{eqn:boundthick} gives
\[\bfg_0+B_0\subseteq M_{K^+}^1(\bfg_0).\]
By \eqref{eqn:tempk}, 
the center $\bfy\in Y_\bfg$ chosen in this iteration satisfies $\bfg_0\in M_{K^+}^\gamma(\bfy)$. Applying 
\cref{lem:center-containment-general} to $\bfg_0$ and $\bfy$ then gives \[
M_{K^+}^1(\bfg_0)\subseteq M_{K^+}^\alpha(\bfy).\]
Chaining the two displayed inequalities above, taking differences, and dividing by two, we get $B_0\subseteq \alpha M_\bfy$ as claimed.

To prove \eqref{eq:cap-cover-candidates}, take $\bfx\in K\cap (\bfg+V+\Lambda)$ and pick $\bfg'\in (\bfg+\Lambda)\cap (\bfx+V)$.
Since $\bfg'\in K+B_0$,
\eqref{eqn:boundthick} gives $\bfg' + B_0 \subseteq M_{K^+}^1(\bfg')$.
In the iteration for the coset $(\bfg'-\bfg)+q \Lambda$,
the chosen center $\bfy$ satisfies $\bfg' \in M_{K^+}^\gamma(\bfy)$. Thus, the same argument as the previous paragraph gives
\[ \bfx\in \bfg'+B_0\subseteq
M_{K^+}^1(\bfg') \subseteq  M_{K^+}^\alpha(\bfy),
\]
proving 
\eqref{eq:cap-cover-candidates}.
\end{proof}

We use \cref{lem:geometric-boundary-representatives} to compute the shallow centers $\caS$.
Recall $r\coloneqq \rho \vol_d(\pi_\Lambda K)/\vol_d(\pi_\Lambda D)$, and $\lvert\caS\rvert=\Oh_d(r)$ by \cref{lem:geometric-boundary-representatives}.
We then compute $Y_\bfg$ 
 for all $\bfg\in \caS$ using \cref{lem:cap-cover-balanced-centers} in
$r\cdot b_0^{\Oh_d(1)} \Tquery(K)$ total time.

\paragraph*{A maximal packing.}
After computing the balanced centers $\bigcup_{\bfg\in \caS} Y_{\bfg}$, the following lemma selects an inclusion-maximal family $Y$ whose $1/2$-scaled Macbeath regions are disjoint modulo $2q \Lambda$: 
\begin{lemma}
\label{lem:cap-cover-packing}
One can compute an inclusion-maximal
$Y\subseteq \bigcup_{\bfg\in \caS} Y_\bfg$ such that 
$\pi_{2q \Lambda} M_{K^+}^{1/2}(\bfy)$, $\bfy\in Y$, are pairwise disjoint, in
time $r\cdot b_0^{\Oh_d(1)} \Tquery(K)$.
\end{lemma}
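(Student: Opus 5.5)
The approach is a greedy scan over the $\Oh_d(r)$ candidate centers $\bigcup_{\bfg\in\caS} Y_\bfg$, combined with bucketing by cells so that each candidate is tested against only $\Oh_d(1)$ previously accepted centers. Listing the candidates in any order, we maintain the accepted set $Y$. For each new candidate $\bfy$, we accept it if and only if $\pi_{2q\Lambda} M_{K^+}^{1/2}(\bfy)$ is disjoint from $\pi_{2q\Lambda} M_{K^+}^{1/2}(\bfy')$ for every $\bfy'\in Y$. This scan automatically produces an inclusion-maximal family, since every rejected candidate meets some accepted region and later acceptances never undo this.

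\textbf{Pairwise disjointness test.} Disjointness of $\pi_{2q\Lambda}M^{1/2}_{K^+}(\bfy)$ and $\pi_{2q\Lambda}M^{1/2}_{K^+}(\bfy')$ is equivalent to the infeasibility of finding $\boldsymbol\ell\in 2q\Lambda$ with $(\bfy+\tfrac12 M_\bfy)\cap(\bfy'+\boldsymbol\ell+\tfrac12 M_{\bfy'})\neq\varnothing$. We write this as a mixed feasibility problem in $(\bfz,\boldsymbol\ell)$, where $\bfz$ ranges over a point in the intersection. It asks that $\bfz\in\bfy+\tfrac12 M_\bfy$ and $\bfz-\boldsymbol\ell\in\bfy'+\tfrac12 M_{\bfy'}$, where each region is the intersection of two affine copies of $K^+$. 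A linear optimization oracle for the continuous part comes from \cref{lem:constrained-optimization} and \cref{cor:polytope-oracles}, applied to affine images of $K$. We then project out $\bfz$, again by \cref{lem:constrained-optimization}, and solve the resulting ILP over $2q\Lambda$ by \cref{lem:lattice-optimization}. Each test costs $b_0^{\Oh_d(1)}\Tquery(K)$.

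\textbf{Bounding the number of tests (main obstacle).} Naively testing against all of $Y$ costs $\Oh_d(r^2)$ tests, so the step I expect to be hardest is showing that only $\Oh_d(1)$ accepted centers are potential conflicts. The plan is to show that two regions can intersect modulo $2q\Lambda$ only when their centers are within a bounded number of $B_0$-scaled cells of each other modulo $\Lambda$. We then index accepted centers by the cell of $\caG$ containing $\pi_\Lambda(\bfy)$, and we store these cell indices in a dictionary. The locality argument goes as follows.
\begin{enumerate}
\item Since $B_0\subseteq \alpha M_\bfy$, which is the containment guaranteed by \cref{lem:cap-cover-balanced-centers}, every region $M_{K^+}^{1/2}(\bfy)$ has thickness at least $B_0$ at scale $\Oh_d(1)$.
\item By \cref{lem:quotient-expansion-containment}, with balancedness modulo $q\Lambda=(2q\Lambda)/2$, any intersecting $\bfy'$ satisfies $M^{1/2}_{K^+}(\bfy')$ being contained, modulo $2q\Lambda$, in $M^{13\gamma}_{K^+}(\bfy)$.
\item The accepted regions are disjoint modulo $2q\Lambda$ and each has quotient volume $\gg_d \vol_d(\pi_{2q\Lambda}M^{13\gamma}_{K^+}(\bfy))$. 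By a packing argument, only $\Oh_d(1)$ of them can conflict with $\bfy$.
\end{enumerate}

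\textbf{Locating the conflicts.} The packing bound alone does not say where the conflicting centers are. The plan here is to test against all accepted centers whose own candidate set $Y_{\bfg'}$ comes from a cell $\bfg'$ meeting $\pi_\Lambda(M^{26\gamma}_{K^+}(\bfy))$. If the Macbeath region is long relative to $V$, there could be many such cells. To keep the count at $\Oh_d(1)$, I would instead charge via volume. Accepted regions are disjoint, and each conflicting $\bfy'$ lies inside $M^{\Oh(\gamma)}(\bfy)$ modulo $\Lambda$. We can therefore enumerate conflicting centers by using \cref{lem:lattice-optimization} to search for centers lying in that region, storing accepted centers in a structure keyed by cell. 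If exact $\Oh_d(1)$ locality proves delicate, a fallback is to accept a $\polylog$ overhead by enumerating the occupied cells of $\pi_\Lambda M^{\Oh(\gamma)}(\bfy)$ via \cref{lem:quotient-cell-search}, charging these cells to the shell volume exactly as in \cref{lem:geometric-boundary-representatives}.
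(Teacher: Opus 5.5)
Your greedy rule, your disjointness test (an ILP over $2q\Lambda$ for the difference of the two half-regions), and your packing bound that only $\Oh_d(1)$ accepted centers can conflict with a given candidate are all fine. The gap is in the running time, at exactly the step you flag as ``locating the conflicts''.

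You detect conflicts from the side of the \emph{incoming} candidate. So every candidate $\bfy$, including each one that will be rejected, must search a neighborhood of $\pi_\Lambda M^{\Oh(\gamma)}_{K^+}(\bfy)$ for accepted centers. With the primitives you propose (cell enumeration by \cref{lem:quotient-cell-search} plus a dictionary keyed by cells), this costs $\Theta_d(\vol_d(\pi_\Lambda M_\bfy)/\vol_d(V))$ per candidate.

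Nothing bounds the sum of this quantity over all candidates. \cref{lem:shallow-macbeath-packing} applies only to families that are pairwise disjoint modulo $2q\Lambda$. The charging in \cref{lem:geometric-boundary-representatives} works only because the skipped blocks there are disjoint, whereas the Macbeath regions of different candidates overlap heavily.

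Concretely, take $d=2$, $\Lambda=\{\zero\}$, and $K$ a square of side $L$. Then $r=\Theta(1/\rho)$. Each of the $\Theta(r)$ candidates over the middle half of a side has a Macbeath region of width $\Theta(L)$ and height $\Theta(\rho L)$, so $\vol_d(M_\bfy)/\vol_d(V)=\Theta(r)$. Your fallback therefore costs $\Theta(r^2)$, not a polylogarithmic overhead. Stopping at the first conflict does not help in the worst case, because the lexicographic enumeration order is unrelated to where the conflicting accepted center sits.

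Your other suggestion, ``searching for accepted centers with \cref{lem:lattice-optimization}'', is not an available primitive. An ILP returns some lattice point of a polytope, not a member of a dynamically maintained finite set.

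The missing idea is to reverse the direction of conflict detection, so that all enumeration is charged to \emph{accepted} centers, whose regions are pairwise disjoint. When you accept $\bfy$, immediately enumerate $\bfg\in\caP_{M^{13\gamma\alpha}_{K^+}(\bfy)}$ via \cref{lem:quotient-cell-search}. For every such $\bfg\in\caS$, test each remaining candidate $\bfy'\in Y_\bfg$ against $\bfy$ and delete it on conflict. Rejected candidates then never perform a search of their own.

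Completeness holds because \cref{lem:quotient-expansion-containment}, applied with the balanced center $\bfy$, gives a genuine containment $\bfy'\in M^{\alpha}_{K^+}(\bfy')\subseteq M^{13\gamma\alpha}_{K^+}(\bfy)$. Hence the cell of $\bfy'$ is among those enumerated.

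For the cost, $2V\subseteq B_0\subseteq\alpha M_\bfy$ (\cref{lem:cap-cover-balanced-centers}) places every enumerated cell inside $\pi_\Lambda(\bfy+14\gamma\alpha M_\bfy)$. By \cref{lem:volumeineq}, at most $\Oh_d(\vol_d(\pi_\Lambda M_\bfy)/\vol_d(V))$ cells are enumerated for $\bfy$. Summing over the final, pairwise disjoint $Y$ and applying \cref{lem:shallow-macbeath-packing} bounds the total number of tests by $\Oh_d(r)$, each costing $b_0^{\Oh_d(1)}\Tquery(K)$.
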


Before describing the algorithm for \cref{lem:cap-cover-packing}, we first state its consequences. The first is the covering property:
By \cref{cor:quotient-macbeath-cover,lem:cap-cover-balanced-centers}, for any inclusion-maximal
$Y\subseteq \bigcup_{\bfg\in \caS} Y_\bfg$
such that $\pi_{2q \Lambda} M_{K^+}^{1/2}(\bfy)$, $\bfy\in Y$,
are pairwise disjoint, it holds that 
\begin{equation}
   \label{eqn:disjoint-conseq}
 K\cap (\caS+V+\Lambda)\subseteq
 \bigcup_{\bfy\in Y} M_{K^+}^{13 \gamma \alpha}(\bfy).
\end{equation}
The second is the volume bound due to the packing property:
\begin{lemma}
\label{lem:shallow-macbeath-packing}
Let $Y\subseteq \bigcup_{\bfg\in \caS} Y_\bfg$ be such that 
$\pi_{2q \Lambda} M_{K^+}^{1/2}(\bfy)$, $\bfy\in Y$, are pairwise disjoint.
Then,
\[
 \sum_{\bfy\in Y} \vol_d(\pi_\Lambda M_\bfy)
 =\Oh_d\bigl(\rho \vol_d(\pi_\Lambda K)\bigr).
\]
In particular, $\lvert Y\rvert \le \Oh_d(r)$.
\end{lemma}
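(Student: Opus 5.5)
The plan is a packing argument in the quotient $\R^d/2q\Lambda$, followed by transfer back to $\R^d/\Lambda$ and a shell-volume bound. Fix $\bfy\in Y$, with $\bfy\in Y_\bfg$ for some $\bfg\in\caS$. The region $M_{K^+}^{1/2}(\bfy)=\bfy+\tfrac12 M_\bfy$ is convex and centrally symmetric. We first compare its volumes in the two quotients. The map $\R^d/2q\Lambda\to\R^d/\Lambda$ is a covering of degree $(2q)^{d'}=\Oh_d(1)$, so
\[
\vol_d(\pi_\Lambda M_\bfy)\le \vol_d(\pi_{2q\Lambda}M_\bfy)\le 2^d\vol_d(\pi_{2q\Lambda}\tfrac12 M_\bfy),
\]
where the last step is \cref{lem:volumeineq} with $a=1/2$, $b=1$. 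Hence it suffices to bound $\sum_{\bfy}\vol_d(\pi_{2q\Lambda}M^{1/2}_{K^+}(\bfy))$. The pairwise disjointness hypothesis turns this sum into the volume of a union, and we bound that union by the volume of a quotient shell.

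The key step is showing that every $M_{K^+}^{1}(\bfy)$ is shallow modulo $\Lambda$, using \cref{lem:shallow-macbeath-region} applied to $K^+$, a suitable $D'$, and the lattice $\Lambda$. We need $(\bfy+\Lambda/2)\cap(K^+\ominus D')=\varnothing$. We have $\bfy\in\bfg+\Lambda$ with $\bfg\notin (K\ominus 2B_0)+\Lambda$, but the half-lattice shifts $\Lambda/2$ are not directly controlled. To handle this, replace $\Lambda$ by $2\Lambda$ in the lemma, so its hypothesis involves $\bfy+\Lambda$, and the conclusion holds modulo $2\Lambda$. This costs only a factor $2^{d'}$ when passing back to $\pi_\Lambda$.

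It remains to choose $D'$ so that $K^+\ominus D'\subseteq K\ominus 2B_0$. Since $K^+\subseteq K+2\beta B_0$ and $K$ contains a large dilate of $B_0$, convexity should give $K^+\ominus cB_0\subseteq K\ominus 2B_0$ for $c=\Oh_d(1)$; the slack here comes from $\theta<1$. With this choice, $\pi_{2\Lambda}M^1_{K^+}(\bfy)\subseteq \pi_{2\Lambda}K^+\setminus\pi_{2\Lambda}(K^+\ominus 2cB_0)$. Working throughout in $\R^d/2q\Lambda$, which refines $\R^d/2\Lambda$ since $q$ is an integer, the disjoint projections $\pi_{2q\Lambda}M^{1/2}_{K^+}(\bfy)$ all lie in the preimage of this shell. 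That preimage has volume $(q)^{d'}$ times the shell volume in $\R^d/2\Lambda$, which by \cref{lem:quotient-shell-bound} (applied with lattice $2\Lambda$, and since $K^+$ contains a translate of $\Omega_d(\rho^{-1})B_0$) is $\Oh_d(\rho)\vol_d(\pi_{2\Lambda}K^+)\le \Oh_d(\rho)\vol_d(\pi_\Lambda K)$. The last inequality uses the degree-$2^{d'}$ covering together with $\vol(\pi K^+)\le(1+2\theta)^d\vol(\pi K)$ from \cref{lem:volumeineq}. Summing gives the main claim.

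For the ``in particular'' bound, use $B_0\subseteq\alpha M_\bfy$ from \cref{lem:cap-cover-balanced-centers}. Applying \cref{lem:volumeineq} with $a=1/\alpha$, $b=1$ gives $\vol_d(\pi_\Lambda M_\bfy)\ge \alpha^{-d}\vol_d(\pi_\Lambda B_0)\gg_d\vol_d(\pi_\Lambda D)$. Dividing the main bound by this yields $\lvert Y\rvert=\Oh_d(r)$.

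The main obstacle is the shallowness step: matching the hypothesis of \cref{lem:shallow-macbeath-region} (stated with $\Lambda/2$) to the information $\bfg\notin(K\ominus2B_0)+\Lambda$ and to the enlarged body $K^+$. This is where the lattice rescaling and the erosion comparison must be done carefully.
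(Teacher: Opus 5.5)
Your proposal is correct and follows the paper's proof essentially step for step. It has the same chain $\vol_d(\pi_\Lambda M_\bfy)\le\vol_d(\pi_{2q\Lambda}M_\bfy)\le 2^d\vol_d(\pi_{2q\Lambda}M^{1/2}_{K^+}(\bfy))$ and the same shallowness argument via \cref{lem:shallow-macbeath-region} with $K^+\ominus(2\beta+2)B_0\subseteq K\ominus 2B_0$. That inclusion follows directly from $K^+\subseteq K+2\beta B_0$ and the cancellation law $(K+2\beta B_0)\ominus 2\beta B_0=K$, so no slack from $\theta$ is needed. You then use the same shell bound from \cref{lem:quotient-shell-bound} and the same division by $\vol_d(\pi_\Lambda(B_0/\alpha))$. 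The only cosmetic difference is that the paper applies \cref{lem:shallow-macbeath-region} directly with the lattice $2q\Lambda$, whose half $q\Lambda\subseteq\Lambda$ is already controlled, so it avoids your detour through $\R^d/2\Lambda$ and the degree-$q^{d'}$ preimage.
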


\begin{proof}
Let $\bfy\in Y_\bfg$. The equality $\bfy+\Lambda=\bfg+\Lambda$ and
the definition of $\caS$ give
$(\bfy+\Lambda)\cap (K\ominus 2B_0)=\varnothing$.
Since $K^+\subseteq K+2 \beta B_0$, and $(K+2 \beta B_0)\ominus (2 \beta B_0) = K$
(see e.g., \cite[Lemma~3.1.11]{schneider2014convex}), we have 
$K^+\ominus (2 \beta+2) B_0\subseteq K\ominus 2B_0$.
Together with $q \Lambda\subseteq \Lambda$, this then implies
\[
 (\bfy+q \Lambda)\cap \bigl(K^+\ominus (2 \beta+2) B_0\bigr)
 =\varnothing.
\]
Then, applying \cref{lem:shallow-macbeath-region} to $K^+$, $(2 \beta+2) B_0$, and $2q \Lambda$ gives
   \begin{equation}
      \label{eqn:containmenttemp}
 \pi_{2q \Lambda} M_{K^+}^{1}(\bfy)
 \subseteq \pi_{2q \Lambda} K^+
 \setminus \pi_{2q \Lambda}\bigl(K^+\ominus (4 \beta+4) B_0\bigr).
   \end{equation}

Then, we have
\begin{align*}
 &\sum_{\bfy\in Y} \vol_d(\pi_\Lambda M_\bfy)\\
 \le \ & \sum_{\bfy\in Y} \vol_d(\pi_{2q \Lambda} M_\bfy)
 \hspace{10pt} && \text{(since $2q \Lambda\subseteq \Lambda$)}\\
 \le \ & 2^d \sum_{\bfy\in Y} \vol_d\bigl(\pi_{2q \Lambda} M_{K^+}^{1/2}(\bfy)\bigr)
 \hspace{10pt} && \text{(by \cref{lem:volumeineq})}\\
 \le \ & 2^d \vol_d\!\left(\pi_{2q \Lambda} K^+
       \setminus \pi_{2q \Lambda}\bigl(K^+\ominus (4 \beta+4) B_0\bigr)\right)
 \hspace{10pt} && \text{(by \eqref{eqn:containmenttemp} and disjointness)}\\
 = \ & \Oh_d\bigl(\rho \vol_d(\pi_{2q \Lambda} K^+)\bigr)
 \hspace{10pt} && \text{(by \cref{lem:quotient-shell-bound},
$\bfc+\theta^{-1} B_0\subseteq K^+$, and $\theta=\Oh_d(\rho)$)}\\
 = \ & \Oh_d\bigl(\rho \vol_d(\pi_{\Lambda} K^+)\bigr)
 \hspace{10pt} && \text{(since $[\Lambda \colon 2q \Lambda]= \Oh_d(1)$)}.
\end{align*}
By \cref{lem:volumeineq} and  $K^+-\bfc = (1+2 \theta)(K-\bfc)$,
\[
 \vol_d(\pi_{\Lambda} K^+)
 \le (1+2 \theta)^d \vol_d(\pi_\Lambda K)
 =\Oh_d\bigl(\vol_d(\pi_\Lambda K)\bigr),
\]
finishing the proof of the volume bound.

Combined with 
the property that $B_0 \subseteq \alpha M_{\bfy}$ for all $\bfy \in Y$ (see \cref{lem:cap-cover-balanced-centers}), the volume bound implies
\[\lvert Y\rvert \le \Oh_d(\frac{\rho \vol_d(\pi_{\Lambda} K)}{\vol_d(\pi_\Lambda (B_0/\alpha))}) = \Oh_d(\frac{\rho \vol_d(\pi_{\Lambda} K)}{\vol_d(\pi_\Lambda D)}) = \Oh_d(r)\]
as claimed.
\end{proof}

We now prove \cref{lem:cap-cover-packing} by a greedy algorithm. The volume bound from \cref{lem:shallow-macbeath-packing} will bound the work of this algorithm.
\begin{proof}[Proof of \cref{lem:cap-cover-packing}]
Initialize $Y=\varnothing$. Initialize all elements in 
$\bigcup_{\bfg\in \caS} Y_\bfg$ as candidate centers. Repeat the following steps until no candidate remains:
\begin{enumerate}[beginpenalty=10000]
\item Choose a remaining candidate center $\bfy$, add it to $Y$, and remove it from the remaining candidates.
\item Use \cref{lem:quotient-cell-search} to enumerate
$\bfg\in \caP_{M_{K^+}^{13 \gamma \alpha}(\bfy)}
= \{\bfg\in \caG:(\bfg+V)\cap (M_{K^+}^{13 \gamma \alpha}(\bfy)+\Lambda)\ne \varnothing\}$
(defined in \eqref{eqn:occupiedp}).
\item For each enumerated $\bfg$ that also belongs to $\caS$, and each remaining candidate $\bfy'\in Y_\bfg$,
test whether
\[
 \bigl(M_{K^+}^{1/2}(\bfy)-M_{K^+}^{1/2}(\bfy')\bigr)
 \cap 2q \Lambda\ne \varnothing.
\]
If so, remove $\bfy'$ from the remaining candidates.
\end{enumerate}
Finally, return $Y$.

\paragraph*{Correctness.}
The test in step~3  
decides whether $\pi_{2q \Lambda} M_{K^+}^{1/2}(\bfy)$ intersects
$\pi_{2q \Lambda} M_{K^+}^{1/2}(\bfy')$.
To show that our greedy algorithm indeed constructs an inclusion-maximal disjoint collection, it suffices to show that all 
remaining candidates $\bfy'\in \bigcup_{\bfg\in \caS} Y_{\bfg}$ with
$\pi_{2q \Lambda} M_{K^+}^{1/2}(\bfy')$ intersecting 
$\pi_{2q \Lambda} M_{K^+}^{1/2}(\bfy)$ are included in the enumeration of step~2 and step~3.

Fix any such candidate $\bfy'$.
Since $\bfy$ is $\gamma$-balanced modulo $q \Lambda$,
\cref{lem:quotient-expansion-containment} gives
\[
\bfy' \in  M_{K^+}^{\alpha}(\bfy')\subseteq M_{K^+}^{13 \gamma \alpha}(\bfy).
\]
Let $\bfg \in \caS$ be such that $\bfy'\in Y_\bfg\subseteq \bfg +\Lambda$ (see \cref{lem:cap-cover-balanced-centers}). Hence 
$\bfg\in \caP_{M_{K^+}^{13 \gamma \alpha}(\bfy)}$.
Thus $\bfg$ is enumerated in step~2, and $\bfy'$ is enumerated in step~3.

\paragraph*{Running time.}
For a center $\bfy$ chosen in step~1, write
$N_\bfy=\lvert\caP_{M_{K^+}^{13 \gamma \alpha}(\bfy)}\rvert$ for the number of enumerated cell centers $\bfg$.
 Since $\lvert Y_{\bfg}\rvert = \Oh_d(1)$ (see \cref{lem:cap-cover-balanced-centers}), there
are $\Oh_d(N_\bfy)$ intersection tests in this iteration.
For every enumerated cell $\bfg +V$, 
\[
\pi_{\Lambda}(\bfg +V) \subseteq \pi_\Lambda(M_{K^+}^{13 \gamma \alpha} (\bfy)+2V)
 \subseteq \pi_\Lambda(\bfy+14 \gamma \alpha M_\bfy),
\]
where the last step follows from $2V\subseteq B_0\subseteq \alpha M_\bfy\subseteq \gamma \alpha M_\bfy$
(see \cref{lem:cap-cover-balanced-centers}).
By disjointness of the interiors of $\pi_{\Lambda}(\bfg+V),\bfg\in \caG$, we get 
\[
 N_\bfy \vol_d(\pi_{\Lambda} V)
 \le \vol_d\bigl(\pi_\Lambda(14 \gamma \alpha M_\bfy)\bigr)
  = \Oh_d(\vol_d(\pi_\Lambda M_\bfy)),
\]
where the last step used \cref{lem:volumeineq} and $\gamma,\alpha = \Oh_d(1)$.
Thus, due to disjointness of our constructed answers, the volume bound in
\Cref{lem:shallow-macbeath-packing} gives that the total number of intersection tests performed is
\[
 \sum_{\bfy\in Y} N_\bfy
 =\Oh_d\!\left(
 \frac{\rho \vol_d(\pi_\Lambda K)}{\vol_d(\pi_\Lambda V)}\right)
 =\Oh_d\!\left(
 \frac{\rho \vol_d(\pi_\Lambda K)}{\vol_d(\pi_\Lambda D)}\right)
 =\Oh_d(r).
\]
Also recall from 
\cref{lem:shallow-macbeath-packing} that $\lvert Y\rvert = \Oh_d(r)$.

Thus, to prove the claimed running time bound, it remains to show that each intersection test in step~3 can be implemented in
$b_0^{\Oh_d(1)} \Tquery(K)$ time.
To see this, note that \cref{cor:polytope-oracles} gives linear optimization
oracles for $M_{K^+}^{1/2}(\bfy)$ and $M_{K^+}^{1/2}(\bfy')$, which immediately give a linear optimization oracle for their difference set. Testing its intersection with
the lattice $2q \Lambda$ is done using \cref{lem:lattice-optimization}.  
\end{proof}

\subsection{Completing the proof}

We now complete the proof of \cref{thm:geometric-cap-cover}.

\paragraph*{Constructing caps.}
Let $Y$ be the set of Macbeath centers constructed by \cref{lem:cap-cover-packing}.
We first move the centers from $K^+$ into $K$ by shrinking with respect to $\bfc$:
for each $\bfy\in Y$, define
$\bfz_\bfy=\bfc+(1+2 \theta)^{-1}(\bfy-\bfc)\in K$; we have $M_K^1(\bfz_\bfy)-\bfz_\bfy=(1+2 \theta)^{-1} M_\bfy$.
Then, apply \cref{lem:macbeath-cap} to $K^+,\bfy$ to obtain a direction $\bfu_\bfy\ne \zero$ giving the following cap approximation:  for every $t\ge 1$,
\[
 K^+\cap (\bfy+tM_\bfy)
 \subseteq K^+\cap \{\bfx:\langle\bfu_\bfy,\bfx-\bfy\rangle\ge -t\}
 \subseteq \bfy+(4d+3) tM_\bfy,
\]
which, by shrinking with respect to $\bfc$ and substituting $t=m(1+2 \theta)$, gives
\begin{align*}
 K\cap \left(\bfz_\bfy+mM_\bfy\right)
 \subseteq Q_{\bfy} \coloneqq K\cap \left\{\bfx:
 \langle\bfu_\bfy,\bfx-\bfz_\bfy\rangle\ge -m\right\} \subseteq \bfz_\bfy+(4d+3) mM_\bfy.
\end{align*}
Set $m\coloneqq 13 \gamma \alpha+2 \beta \alpha$ above, and return the caps
$ Q_\bfy$ for all $\bfy\in Y$.
      We now prove that these caps satisfy the requirements.

We first prove \cref{item:cap-cover-thickness}.
Plugging in $t=1$ in the first displayed containment from the previous paragraph gives $K^+\cap (\bfy + M_{\bfy}) = (\bfy + M_{\bfy}) \subseteq K^+ \cap  \{\bfx:\langle\bfu_\bfy,\bfx-\bfy\rangle\ge -1\}$; shrinking with respect to $\bfc$ and using $(1+2 \theta)^{-1}\le m$ yield $\bfz_{\bfy} + (1+2 \theta)^{-1} M_{\bfy} \subseteq  Q_{\bfy}$.
Since $D\subseteq B_0\subseteq \alpha M_\bfy$ (by \cref{lem:cap-cover-balanced-centers}), this gives
$\bfz_\bfy+(\alpha(1+2 \theta))^{-1} D
 \subseteq Q_\bfy$, proving 
 \cref{item:cap-cover-thickness}.

      We now prove the covering property \cref{item:cap-cover-deep}.
      Recall from \eqref{eqn:disjoint-conseq} that $K\cap (\caS+V+\Lambda)\subseteq \bigcup_{\bfy \in Y} M_{K^+}^{13 \gamma \alpha}(\bfy)$.
By definition of $\bfz_{\bfy}$, we have $\bfy-\bfz_\bfy\in 2 \theta(K-\bfc)\subseteq 2 \beta B_0
\subseteq 2 \beta \alpha M_\bfy$, where the last two steps follow from \eqref{eqn:defnb} and \cref{lem:cap-cover-balanced-centers}.
Hence
\[
 M_{K^+}^{13 \gamma \alpha}(\bfy)
 \subseteq \bfz_\bfy+(13 \gamma \alpha+2 \beta \alpha) M_\bfy
 =\bfz_\bfy+mM_\bfy.
\]
Therefore, combining with \eqref{eqn:disjoint-conseq}
yields 
\[K\cap (\caS +V+\Lambda) \subseteq K \cap \bigcup_{\bfy \in Y}(\bfz_{\bfy} + mM_{\bfy}) \subseteq  \bigcup_{\bfy \in Y} Q_{\bfy}.\]
Thus, 
\[ K \setminus \bigcup_{\bfy \in Y} Q_{\bfy}\subseteq K\setminus (\caS + V+\Lambda) \subseteq K\cap \big(\bigcup_{\bfg\in \caP_{K}\setminus \caS} \bfg+V+\Lambda\big ), \]
where the last containment follows from the definition of $\caP_{K}$.
For $\bfg\in \caP_K\setminus \caS$, the definition of $\caS$ and $V\subseteq B_0$
give $\bfg+V\subseteq (K\ominus B_0)+\Lambda$.
Since $D\subseteq B_0$, this shows
\begin{equation*}
 K\setminus \bigcup_{\bfy\in Y} Q_\bfy
 \subseteq (K\ominus D)+\Lambda,
\end{equation*}
proving  \cref{item:cap-cover-deep}.

We now construct the enclosing parallelepipeds $P_{\bfy}\supseteq Q_{\bfy}$ and prove
\cref{item:cap-cover-volume}.
We apply \cref{lem:macbeath-center} to
each $M_\bfy$, and obtain  
 an origin-symmetric parallelepiped
$P'_\bfy$ with $P'_\bfy\subseteq M_\bfy\subseteq \beta P'_\bfy$.
We return
\[
 P_\bfy=\bfz_\bfy+(4d+3) m \beta P'_\bfy\supseteq Q_\bfy.
\]
Since $P_\bfy-\bfz_\bfy\subseteq (4d+3) m \beta M_\bfy$,
\cref{lem:volumeineq} gives
$\vol_d(\pi_{\Lambda} P_{\bfy})\le \Oh_{d}(\vol_d(\pi_{\Lambda} M_{\bfy}))$, and then
\cref{lem:shallow-macbeath-packing} gives \cref{item:cap-cover-volume}:
\[
\sum_{\bfy\in Y} \vol_d(\pi_\Lambda P_\bfy)=\Oh_d(\rho \vol_d(\pi_\Lambda K)).
\]

\paragraph*{Constructing $R$.}
Start with $R=\varnothing$.
For each returned parallelepiped $P_\bfy$, use \cref{lem:quotient-cell-search}
to enumerate $\bfg\in \caP_{P_\bfy+B_0}$.
 For each enumerated $\bfg$,
use \cref{lem:lattice-translate-depth} to find the largest $t\in \Nonneg$
with a $\boldsymbol{\ell}\in \Lambda$ such that
$\bfg+\boldsymbol{\ell}+tB_0\subseteq K$ (or report that no such $t$ exists).
If it returns $t\ge 2$ and a corresponding $\boldsymbol \ell$, add $\bfr=\bfg+\boldsymbol{\ell}$ to $R$.

We first show the time complexity of the enumeration is within the claimed time bound
$r\cdot b_0^{\Oh_d(1)} \Tquery(K)$. It suffices to show that the number of enumerated points $\bfg$ is $\Oh_d(r)$. 
Note that each enumerated $\bfg$ satisfies
$\bfg+V\subseteq P_\bfy+2B_0+\Lambda$ due to $2V\subseteq B_0$.
The containments $B_0\subseteq \alpha M_\bfy\subseteq \alpha \beta P'_\bfy$
show that $P_\bfy+2B_0\subseteq \bfz_{\bfy} + \Oh_d(1)\cdot (P_{\bfy} - \bfz_{\bfy})$.
 Thus, due to disjointness of the interiors of $\pi_{\Lambda}(\bfg + V)$, the total number of 
enumerated points $\bfg$ can be bounded by the volume ratio $\Oh_d(\sum_\bfy \vol_d(\pi_\Lambda P_\bfy)/\vol_d(\pi_\Lambda V))=\Oh_d(r)$ as desired.

By the algorithm, 
every $\bfr\in R$ satisfies $\bfr + 2B_0\subseteq K$;
 since $4d \beta D\subseteq 4d \beta^2 B=2B_0$, 
this proves \cref{item:cap-cover-centers}.
To prove \cref{item:cap-cover-local}, let
$\bfx\in \bigl(K\setminus \bigcup_{\bfy\in Y} Q_\bfy\bigr)
\cap \bigl(\bigcup_{\bfy\in Y} P_\bfy+D+\Lambda\bigr)$.
Since $\bfx \in \bigcup_{\bfy \in Y}(P_{\bfy}+D+\Lambda)$, there is an enumerated $\bfg$ such that $\bfx\in \bfg+V+\Lambda$.
Then,  
since $\bigcup_{\bfy \in Y} Q_\bfy\supseteq K\cap (\caS+V+\Lambda)$, we must have
$\bfg\in \caP_K\setminus \caS$.
Hence, by the definition of $\caS$,
the test at $\bfg$ returns $t\ge 2$, adding a 
 center $\bfr\in \bfg+\Lambda \subseteq \bfx + V +\Lambda \subseteq \bfx + D +\Lambda$. This proves 
\cref{item:cap-cover-local}.

This completes the proof of \cref{thm:geometric-cap-cover}.

\section{Cap Covering for Subset Sums}
\label{sec:subset-sum-cap-covering}

In this section, we apply the quotient cap covering
(\cref{thm:geometric-cap-cover})
 to the zonotope of vectors  representing input integers in a proper progression $f(H\cap \Gamma)$.
We assume that the input multiset contains a small generating submultiset, which will be supplied by \cref{lem:main-classify} in our applications. Then, all sufficiently deep lattice points in the zonotope are attainable subset sums due to 
\cref{lem:innerpoints}. 
We compute the deep subset sums, and also construct caps that cover the remaining shallow subset sums.

The main theorem is given below. Here,
we further allow restricting the zonotope to an interval $[\ell,u]$ of $f$-values.
The theorem returns the caps $Q_i$ together with their enclosing parallelepipeds $P_i$, and 
an AP decomposition that compactly represents
the deep subset sums. 

\begin{theorem}
\label{thm:subset-sum-cap-covering}
Let two multisets $X'\subseteq X$ with $\supp(X)\subseteq  \Z^d$ and $\spa(X) =\R^d$,
a linear map $f\colon \R^d\to \R$ with integer coefficients,
and integers $\ell\le u$ be given.
Write $\Gamma=\caL(X)$, $H=\conv(\pm \supp X)$, $D=\lvert X'\rvert H$, $\Lambda=\Gamma\cap \ker f$, and $K=\caZ(X)\cap f^{-1}([\ell,u])$.

Suppose $f(H\cap \Gamma)$ is a proper progression, $X'$ is $d$-generating for $H\cap \Gamma$, and $K$ contains translates of $H$ and $\rho^{-1} D$ for some (unknown) $\rho>0$.
Then, there is an algorithm that constructs:
\begin{enumerate}[label=\textup{(\roman*)},leftmargin=2.5em,beginpenalty=10000]
\item An integer set $G$, represented as an AP decomposition of size $\Oh_d(\rho(u-\ell))$, such that 
\begin{equation}
\label{eq:cap-cover-deep-values}
 G\subseteq f((K\ominus D)\cap \Gamma)
 \subseteq f(\Sigma(X))\cap [\ell,u],
\end{equation}
where the second containment follows from \cref{lem:innerpoints}.
\item Caps
$Q_i=K\cap \{\bfx:\langle\bfu_i,\bfx\rangle\ge a_i\}$ 
with enclosing parallelepipeds $P_i\supseteq Q_i$ such that
\begin{gather}
\label{eq:cap-cover-shallow-points}
 \{\bfx\in K\cap \Gamma:f(\bfx)\notin G\}
 \subseteq \bigcup_{i} Q_i,\\*
\label{eq:subset-sum-cap-volume}
 \sum_{i}
 \frac{\vol_d(\pi_\Lambda P_i)}{\det \Gamma}
 \le \Oh_d(\rho(u-\ell)),
\end{gather}
and every $Q_i$ contains a translate of
$c_d H$, where $0<c_d\le 1$ depends only on $d$.
\end{enumerate}
Writing $b_0=\maxbitlen(X,X',f,\ell,u)$,
the construction algorithm takes time
\begin{equation*}
 b_0^{\Oh_d(1)} \rho(u-\ell),
\end{equation*}
 and $\maxbitlen(\bfu_i,a_i,P_i)=\Oh_d(b_0)$ for all $i$.
\end{theorem}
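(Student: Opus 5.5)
We will derive the theorem from \cref{thm:geometric-cap-cover}, applied with the lattice $\Lambda=\Gamma\cap\ker f$ of rank $d-1<d$, to the polytope $K$ and a rational scale $\rho'$ chosen so that the thickness $\rho'(K-K)/2$ is comparable to $D=\lvert X'\rvert H$. First we build the needed oracles. \Cref{lem:zonotope-algorithms} together with \cref{lem:constrained-optimization} gives a linear optimization oracle for $K$ with $\Tquery(K)\le \lvert\supp X\rvert b_0^{\Oh_d(1)}$. Since $\rho$ is unknown, we fix $\rho'$ by computing a parallelepiped approximation of $K$ (\cref{lem:macbeath-center}) and of $H$, and comparing their scales. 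Concretely, we take $\rho'$ as the largest power of two with $\rho'(K-K)/2\subseteq D'$, where $D'=C_d D$ is a constant dilate. The hypothesis that $K$ contains a translate of $\rho^{-1}D$ then gives $\rho'=\Oh_d(\rho)$. We also clamp $\rho'\ge \rho_{\min}$ with $\rho'(K-K)/2\supseteq H$, using that $K$ contains a translate of $H$. If $\rho'$ comes out constant, the trivial case of \cref{thm:geometric-cap-cover} applies.

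Write $D^\ast=\rho'(K-K)/2\subseteq D'$, and rescale so that $D^\ast\subseteq D$. The theorem's quantity is
\[
r=\rho'\vol_d(\pi_\Lambda K)/\vol_d(\pi_\Lambda D^\ast).
\]
We bound it as follows:
\begin{itemize}
\item Since $f$ has integer coefficients and $f(\Gamma)=g\Z$ (\cref{lem:lattice-kernel-image}), we get $\vol_d(\pi_\Lambda K)\le (u-\ell)\cdot\det\Gamma/g\cdot \Oh_d(1)$. This holds because fibers of $f$ modulo $\Lambda$ have covolume $\det\Gamma/g$.
\item Since $f$ is injective on $H\cap\Gamma$ and $\spa(H\cap\Gamma)=\R^d$, we get
\[
\vol_d(\pi_\Lambda D^\ast)\gg_d \vol_d(\pi_\Lambda H)\gg_d \lvert H\cap\Gamma\rvert\det\Gamma\ge \lvert\supp X\rvert\det\Gamma.
\]
The middle step uses that a fundamental parallelepiped from \cref{lem:latticeboxinandout} lies in $\Oh_d(1)H$ and embeds injectively modulo $\Lambda$. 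This compensates $\Tquery(K)$.
\end{itemize}
Hence the running time $r b_0^{\Oh_d(1)}\Tquery(K)$ becomes $b_0^{\Oh_d(1)}\rho(u-\ell)$, and item~\ref{item:cap-cover-volume} gives \eqref{eq:subset-sum-cap-volume}. Item~\ref{item:cap-cover-thickness} gives translates of $c_dD^\ast\supseteq c_d' H$ in each cap.

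For the deep set $G$, we use the net $R$ from item~\ref{item:cap-cover-local}. By item~\ref{item:cap-cover-centers}, each $\bfr\in R$ has $\bfr+4d\beta D^\ast\subseteq K$. From each $\bfr$ we walk along a fixed primitive vector $\bfx_0\in\Gamma\cap H$ with $f(\bfx_0)=g$ (rounding $\bfr$ to a lattice point of $\Gamma$ within $H$). Using \cref{lem:lattice-translate-depth} or \cref{lem:lattice-optimization}, we compute the maximal run of steps $j$ for which $\bfr'+j\bfx_0\in K\ominus D$. Each run contributes one arithmetic progression of difference $g$ in $f$-values, and these are all in $f((K\ominus D)\cap\Gamma)$. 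Their union, assembled with \cref{lem:ap-operations}, is $G$, of size $\lvert R\rvert=\Oh_d(r)=\Oh_d(\rho(u-\ell))$. Here $r$ is normalized with respect to $\det\Gamma$, and $u-\ell$ counts multiples of $g$ up to a factor.

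It remains to prove \eqref{eq:cap-cover-shallow-points}. Take $\bfx\in K\cap\Gamma$ lying in no cap. By item~\ref{item:cap-cover-deep}, some $\bfx+\boldsymbol\ell\in K\ominus D^\ast$. We must show that $f(\bfx)\in G$, i.e.\ that the walk reaching this value was started. This is the main obstacle. The idea is to follow the line $\bfx+\boldsymbol\ell+\R\bfx_0$ toward the boundary of $K\ominus D$ until it enters $\bigcup_iP_i+D+\Lambda$. It must do so, since leaving $K\ominus D$ near the shell means entering a cap. At that point item~\ref{item:cap-cover-local} yields some $\bfr\in R$ congruent modulo $\Lambda$ to within $D$, whose run covers the value $f(\bfx)$. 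Making this rigorous requires choosing the erosion margins consistently: use $D$ versus $2D$, and use that the walk steps $\bfx_0\in H$ are small compared with $D$. The constants $4d\beta$ in item~\ref{item:cap-cover-centers} are intended for exactly this bookkeeping. If needed, I would replace $K\ominus D$ in the walk by $K\ominus 2D^\ast$ and rely on item~\ref{item:cap-cover-centers} to guarantee nonempty runs.
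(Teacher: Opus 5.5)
Your oracle construction and the time accounting are fine: $\vol_d(\pi_\Lambda K)\le(u-\ell)\det\Gamma$, and $\vol_d(\pi_\Lambda D^\ast)\gg_d\lvert\supp X\rvert\det\Gamma$ via properness. There are, however, two genuine gaps.

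\textbf{First, the scale is compared in the wrong direction.} You pick $\rho'$ so that $D^\ast=\rho'(K-K)/2\subseteq D$. Item~(iii) of \cref{thm:geometric-cap-cover} then only says that points of $K$ outside the caps are congruent modulo $\Lambda$ to points of $K\ominus D^\ast$, and $K\ominus D^\ast\supseteq K\ominus D$. A lattice point that is $D^\ast$-deep but not $D$-deep modulo $\Lambda$ lies in no cap. Yet its value need not be in $f((K\ominus D)\cap\Gamma)$, so it cannot be placed in $G$. With your scale, nothing guarantees \eqref{eq:cap-cover-shallow-points} together with $G\subseteq f((K\ominus D)\cap\Gamma)$. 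You need the opposite sandwich, $D\subseteq D^\ast$ with $\rho'=\Oh_d(\rho)$. This is exactly what the hypothesis on $\rho$ provides: $K\supseteq\bfc+\rho^{-1}D$ implies $D\subseteq\rho(K-K)/2$. So take $\rho'$ minimal with $D\subseteq\rho'P$, where $P\subseteq(K-K)/2\subseteq\beta P$ comes from \cref{lem:macbeath-center}; this yields $D\subseteq D^\ast$ and $\rho'\le\beta\rho$. Then $H\subseteq D^\ast$ and the cap thickness $c_dD^\ast\supseteq c_dH$ are automatic. Your two-sided clamp becomes unnecessary; it need not even be satisfiable, since $(K-K)/2$ and $H$ can have very different shapes.

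\textbf{Second, your construction of $G$ does not yield \eqref{eq:cap-cover-shallow-points}.} This is the heart of the theorem, not bookkeeping, and the line-walk fails at several points. A vector $\bfx_0\in\Gamma\cap H$ with $f(\bfx_0)=g$ need not exist: for $\supp X=\{\bfe_1,\bfe_2\}$ and $f(x,y)=2x+3y$, one has $f(H\cap\Gamma)=\{0,\pm2,\pm3\}$ while $g=1$. Leaving $K\ominus D$ along a geometric line does not mean entering a cap, since the next lattice point may lie outside $K$ or be congruent modulo $\Lambda$ to another deep point. Item~(v) only places the boundary point in $\bfr+D+\Lambda$. So the run through $\bfr'$ lies on a $\Lambda$-translate of a nearby line, which may meet $K\ominus D$ in a much shorter chord, and nothing forces its $f$-range to reach $f(\bfx)$. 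Non-cap points whose value lies in some $f(P_i\cap\Gamma)$ are also unaccounted for.

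The missing idea is to walk in $f$-value space, re-lifting modulo $\Lambda$ at every step. Take $\bfx_0\in\pm\supp X$ with $p=f(\bfx_0)>0$. Put $E=\bigcup_if(P_i\cap\Gamma)$, $S=\bigcup_{\bfa\in R}f((\bfa+B)\cap\Gamma)$ with $D^\ast\subseteq B\subseteq\beta D^\ast$, and $G=(f(K\cap\Gamma)\setminus E)\cup S$. Item~(iv) gives $S\subseteq f((K\ominus D^\ast)\cap\Gamma)$.
\begin{itemize}
\item In each residue class modulo $p$, every maximal gap of $E$ is either contained in or disjoint from $f(K\cap\Gamma)$. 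Indeed, a value $t$ of the gap lying in $f(K\cap\Gamma)$ lifts to a point of $K\cap\Gamma$ outside every $P_i\supseteq Q_i$. That point is $D^\ast$-deep modulo $\Lambda$, and $\pm\bfx_0\in D^\ast$ then gives $t\pm p\in f(K\cap\Gamma)$.
\item A bounded gap lies in $f(K\cap\Gamma)$ iff its first element $e$ is in $S$. The key fact is $e-p\in E$, which puts the lift of $e$ in $P_i+D^\ast+\Lambda$, where item~(v) applies.
\item The same use of item~(v) handles non-cap points whose value lies in $E$.
\end{itemize}
This gives $G$ as an AP decomposition of difference $p$ and size $\Oh(\lvert E\rvert+\lvert S\rvert)=\Oh_d(\rho(u-\ell))$.
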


The following lemma bounds lattice point count by volume.
\begin{lemma}
\label{lem:quotient-lattice-point-bound}
Let $\Lambda\subseteq \Gamma\subseteq \R^d$ be two lattices and $K\subseteq \R^d$ be an origin-symmetric convex body with
$\spa(K\cap \Gamma)=\R^d$. Then
$\lvert\pi_\Lambda(K\cap \Gamma)\rvert\det \Gamma
 \le \Oh_d\bigl(\vol_d(\pi_\Lambda K)\bigr)$.
\end{lemma}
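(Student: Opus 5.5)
The plan is to reduce to the case $\Gamma=\Z^d$ by a linear change of variables, and then compare the counting measure on $\pi_\Lambda(K\cap\Gamma)$ with the volume of $\pi_\Lambda$ of a small dilate of $K$ via a packing argument.

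\emph{Normalization.} Apply a linear map $T$ with $T\Gamma=\Z^d$. Then $\det\Gamma$ is replaced by $1$, and all quotient volumes scale by $|\det T|$. The statement is invariant under this change, so we may assume $\Gamma=\Z^d$.

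\emph{Choosing the tiles.} Since $\spa(K\cap\Z^d)=\R^d$, pick $d$ linearly independent vectors $\bfp_1,\dots,\bfp_d\in K\cap\Z^d$. Let
\[ E=\tfrac{1}{2d}\sum_{i=1}^d[-\bfp_i,\bfp_i]\subseteq \tfrac12 K, \]
using symmetry and convexity of $K$. Then $E$ is a parallelepiped whose volume is a positive integer multiple of $(1/d)^d$, so $\vol_d(E)\ge d^{-d}$. Rather than working with $E$ directly, it is cleaner to tile by a fundamental domain of $\Z^d$ built from $E$, chosen so that its translates by points of $\Z^d$ are disjoint. One can take $F=\{\sum t_i\bfp_i/(2d): t_i\in[0,1)\}$ intersected with a fundamental domain; more simply, use the half-open unit parallelepiped of a basis reduced inside $\kappa_d K$.

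\emph{Packing via the discrete John lemma.} The simplest concrete route applies \cref{lem:latticeboxinandout}. It gives a basis $\bfb_i$ of $\Z^d$ and integers $m_i$ with
\[ K\subseteq \Big\{\sum t_i\bfb_i: |t_i|\le m_i\Big\}\subseteq\kappa_d K. \]
Put $F=\{\sum t_i\bfb_i: t_i\in[0,1)\}$, so that $\vol_d(F)=1$. Each $\bfb_i$ lies in $\kappa_d K$, hence $F\subseteq d\kappa_d K$. For any set $S\subseteq\Z^d$ of representatives of $\pi_\Lambda(K\cap\Z^d)$, the translates $\bfs+F$ ($\bfs\in S$) are pairwise disjoint modulo $\Lambda$. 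Indeed, if two of them met modulo $\Lambda$, then $\bfs-\bfs'\in F-F+\Lambda$; but $\bfs-\bfs'$ is a lattice point, and the only lattice point in $F-F$ is $\zero$, which would force $\pi_\Lambda\bfs=\pi_\Lambda\bfs'$. All these translates lie in $K+d\kappa_dK=(1+d\kappa_d)K$. Hence
\[ |\pi_\Lambda(K\cap\Gamma)|\le\vol_d(\pi_\Lambda((1+d\kappa_d)K)). \]

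\emph{Removing the dilation.} By \cref{lem:volumeineq} with $a=1$ and $b=1+d\kappa_d$, we have
\[ \vol_d(\pi_\Lambda(bK))\le b^d\,\vol_d(\pi_\Lambda K), \]
which gives the claimed $\Oh_d$ bound.

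The step that needs the most care is the disjointness of the tiles modulo $\Lambda$. The tiles must be translates by the chosen representatives of a half-open fundamental cell of $\Gamma$ itself. Once that is set up, the argument is a routine volume count.
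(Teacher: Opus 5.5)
Your argument is correct, but it produces the packing tile differently from the paper. Both proofs run the same count. One translates a tile of volume $\det\Gamma$ by one representative per $\Lambda$-class, notes that the images are pairwise disjoint in $\R^d/\Lambda$ and lie in $\pi_\Lambda(cK)$, and then removes the dilation with \cref{lem:volumeineq}.

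You obtain the tile from the discrete John lemma (\cref{lem:latticeboxinandout}). It gives a genuine basis of $\Gamma$ inside $\kappa_d K$, whose half-open parallelepiped is a fundamental domain of $\Gamma$ contained in $d\kappa_d K$. This yields the constant $(1+d\kappa_d)^d$.

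The paper instead takes the sublattice $\Gamma_0$ generated by $d$ linearly independent vectors of $K\cap\Gamma$. Its half-open fundamental parallelepiped $P_0\subseteq dK$ contains a measurable fundamental domain $P$ of $\Gamma$, simply because $\Gamma_0\subseteq\Gamma$. Packing translates of $P$ inside $(d+1)K$ then gives the constant $(d+1)^d$.

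This is exactly the ``intersect with a fundamental domain'' idea you sketched with $E$ and then abandoned. The paper's route is more elementary: it needs no lattice reduction and no appeal to a lemma built on LLL, and it gives a far better constant, since $\kappa_d$ grows like $2^{d^2}$. Your route buys only the mild convenience of an explicit parallelepiped tile.

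Three small points in your write-up:
\begin{itemize}
\item The representatives must be chosen in $K\cap\Gamma$, not merely in $\Gamma$, to get containment in $(1+d\kappa_d)K$.
\item The disjointness step uses $\Lambda\subseteq\Gamma$ to conclude that $\boldsymbol{s}-\boldsymbol{s}'-\boldsymbol{\ell}$ is a lattice point of $F-F$.
\item The same fact gives injectivity of $\pi_\Lambda$ on each tile, so each image has quotient volume $\det\Gamma$.
\end{itemize}
The normalization to $\Gamma=\Z^d$ is harmless but unnecessary.
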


\begin{proof}
Choose linearly independent $\bfu_1,\ldots,\bfu_d\in K\cap \Gamma$ and
let $\Gamma_0=\caL(\bfu_1,\dots,\bfu_d)$.
Then $P_0=\sum_{i=1}^d[0,1)\cdot \bfu_i\subseteq dK$ is a
fundamental parallelepiped of $\Gamma_0$.
Since $\Gamma_0\subseteq \Gamma$,
$P_0$ contains a 
measurable fundamental domain $P$ of $\Gamma$.
Pick $\lvert \pi_{\Lambda}(K\cap \Gamma)\rvert$ points $\bfx_i\in K\cap \Gamma$, one per class modulo $\Lambda$.
Then $\pi_\Lambda(\bfx_i+P)$ are pairwise disjoint subsets of $\pi_\Lambda((d+1) K)$, each of volume
$\det \Gamma$.
Counting volumes gives 
\begin{align*}
 \lvert\pi_\Lambda(K\cap \Gamma)\rvert\det \Gamma
 &\le \vol_d(\pi_\Lambda((d+1) K))\\
 &\le (d+1)^d \vol_d(\pi_\Lambda K),
\end{align*}
where the last step used \cref{lem:volumeineq}.
\end{proof}

\subsection{Proof of \texorpdfstring{\Cref*{thm:subset-sum-cap-covering}}{Theorem \getrefnumber{thm:subset-sum-cap-covering}}}
We now proceed to the proof of \cref{thm:subset-sum-cap-covering}.

By \cref{lem:zonotope-algorithms,lem:constrained-optimization},
$K = \caZ(X) \cap f^{-1}([\ell,u])$ has a linear optimization oracle with
$\Tquery(K)\le \lvert\supp X\rvert b_0^{\Oh_d(1)}$.

Given the input $X$ and $f$,  first compute $\Gamma = \caL(X)$ and $\Lambda = \Gamma \cap \ker f$ by
\cref{lem:lattice-basis-hnf,lem:lattice-kernel-image}.
Recall $K =  \caZ(X) \cap f^{-1}([\ell,u])$.
Write $w=\fw(K)\le u-\ell$.
Since $f(\Gamma)=g\Z$ for some integer $g\ge1$, a slab of
$f$-width $w$ has quotient volume $(w/g)\det\Gamma\le w\det\Gamma$, so
\begin{equation}
\vol_d(\pi_\Lambda K)\le w\det\Gamma \le (u-\ell)\det \Gamma.
\label{eqn:volpik}
\end{equation}

Write $\beta=\lceil\beta_d\rceil$, where $\beta_d$ is the constant in
\cref{lem:macbeath-center}.
We first use the following lemma to approximate the unknown $\rho$ for which $K$ contains a translate of $\rho^{-1} D$, and to compute an approximating parallelepiped. \begin{lemma}
\label{lem:intrinsic-scale}
In the setup of \cref{thm:subset-sum-cap-covering}, 
in $b_0^{\Oh_d(1)}\lvert\supp X\rvert$ time,
one can compute 
$\rho'>0$ and an origin-symmetric parallelepiped $B$ with $\maxbitlen(\rho',B)=\Oh_d(b_0)$,
such that
$\rho'\le \beta \rho$, and
writing $D'=\rho'(K-K)/2$,
\[
 D\subseteq D'\subseteq B\subseteq \beta D'.
\]
\end{lemma}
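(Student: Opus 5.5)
We use \cref{lem:macbeath-center} on $K$ to get a center $\bfc$ and an origin-symmetric parallelepiped $B_K$ with $\bfc+B_K\subseteq K\subseteq \bfc+\beta B_K$. This costs $b_0^{\Oh_d(1)}\Tquery(K)=b_0^{\Oh_d(1)}\lvert\supp X\rvert$ time, since $K$ has the linear optimization oracle noted above. Differencing gives $2B_K\subseteq K-K\subseteq 2\beta B_K$, so $(K-K)/2$ is sandwiched between $B_K$ and $\beta B_K$. The plan is to take $B=\rho' B_K\cdot\beta$ for a suitable $\rho'$. Then $D'=\rho'(K-K)/2$ satisfies $\rho'B_K\subseteq D'\subseteq \rho'\beta B_K=B$ and $B=\beta\rho' B_K\subseteq \beta D'$. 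This already gives the last two containments $D'\subseteq B\subseteq\beta D'$ for any choice of $\rho'$.

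It remains to pick $\rho'$ so that $D\subseteq D'$ and $\rho'\le\beta\rho$. Since $D=\lvert X'\rvert H$ and $H=\conv(\pm\supp X)$, the set $D$ is the convex hull of $\pm\lvert X'\rvert\bfx$ over $\bfx\in\supp X$. By convexity and symmetry, $D\subseteq \rho' B_K$ holds exactly when $\lvert X'\rvert\bfx\in\rho' B_K$ for every $\bfx\in\supp X$. Writing $B_K=\{\sum_j t_j\bfb_j:\lvert t_j\rvert\le 1\}$, this says $\rho'$ must be at least $\lvert X'\rvert\max_{\bfx\in\supp X}\lVert M^{-1}\bfx\rVert_\infty$, where $M=(\bfb_1,\dots,\bfb_d)$. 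So we define $\rho'$ to be this quantity. It is computable in $\lvert\supp X\rvert b_0^{\Oh_d(1)}$ time, and it is rational with bit length $\Oh_d(b_0)$ by Cramer's rule, because $\maxbitlen(B_K)=\Oh_d(b_0)$. With this choice, $D\subseteq\rho' B_K\subseteq D'$.

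The upper bound on $\rho'$ uses the hypothesis that $K$ contains a translate $\bfz+\rho^{-1}D$. Differencing gives $2\rho^{-1}D\subseteq K-K\subseteq 2\beta B_K$, so $D\subseteq \rho\beta B_K$. By minimality of $\rho'$ as the smallest scalar with $D\subseteq\rho'B_K$, we get $\rho'\le\beta\rho$. Note that $\rho$ itself is never used by the algorithm.

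The only potential snag is making sure minimality is phrased via the vertex test, so that it really is the least such scalar. It is, because $D$ is the convex hull of those points and $B_K$ is convex and symmetric. Also $\rho'>0$, since $\supp X$ spans $\R^d$ and so contains a nonzero vector. Everything else is routine bookkeeping of bit lengths.
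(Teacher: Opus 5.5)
Your proof is correct and follows essentially the paper's argument: take $B=\beta\rho'$ times a parallelepiped sandwiching $(K-K)/2$, choose $\rho'$ minimal with $D\subseteq\rho'\cdot(\text{that parallelepiped})$ by a vertex scan over $\pm\lvert X'\rvert\supp X$, and obtain $\rho'\le\beta\rho$ by differencing the translate of $\rho^{-1}D$ inside $K$. The only cosmetic difference is that you apply \cref{lem:macbeath-center} to $K$ and then take differences, as the paper itself does in \eqref{eqn:defnb}--\eqref{eqn:dbd}, whereas here the paper applies it directly to $(K-K)/2$ and implicitly uses a symmetrization step to place the returned center at the origin; your route avoids that step.
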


\begin{proof}
Write $A=(K-K)/2$, which has a linear optimization oracle based on the oracle for $K$. 
Apply \cref{lem:macbeath-center} to 
$A$ to obtain an origin-symmetric parallelepiped $P$ with
$P\subseteq A\subseteq \beta P$.

Since $D = \lvert X'\rvert\conv(\pm \supp X)$, we can compute the smallest $\rho'>0$ such that $D\subseteq \rho' P$, by scanning the points in $X$ and comparing against the defining inequalities of the parallelepiped $P$.
Define $B=\beta \rho' P$. Then
\[
 D\subseteq \rho' P\subseteq \rho' A\subseteq B\subseteq \beta \rho' A.
\]
Since $K$ contains a translate of $\rho^{-1} D$, 
we have $D\subseteq \rho A\subseteq \beta \rho P$, so $\rho'\le \beta \rho$ by the minimality of $\rho'$.
\end{proof}

Let $\rho',B$ be computed from \cref{lem:intrinsic-scale},
and write $D'=\rho'(K-K)/2$. 

We first quickly handle the corner case with $\rho'\ge 1$. 
In this case, it suffices to return $G=\varnothing$ with the trivial cap $Q_1=K$, and use \cref{lem:macbeath-center} to compute the required approximating parallelepiped: 
$\bfc+\beta^{-1} P_1\subseteq K\subseteq \bfc+ P_1$, which clearly satisfies 
$\vol_d(\pi_\Lambda P_1)=\Oh_d(\vol_d(\pi_\Lambda K))$.
The hypothesis on $K$ gives the required thickness of $Q_1$.
Since $1\le \rho'\le \beta \rho$, 
and $\vol_d(\pi_\Lambda K)\le (u-\ell)\det\Gamma$, the required volume bound in \eqref{eq:subset-sum-cap-volume} is satisfied as well.

Assume henceforth that $\rho'<1$, so that \cref{thm:geometric-cap-cover} becomes applicable with $\rho'$.

\paragraph*{Cap covering.}

Apply \cref{thm:geometric-cap-cover} to $\Lambda = \Gamma \cap \ker f, K,\rho'$ and $D'$ to obtain the caps $Q_i$ with enclosing parallelepipeds $P_i$, and the point set $R$.
The theorem gives
\[
 R+4d \beta D'\subseteq K\]
 and
 \[
 \sum_i \vol_d(\pi_\Lambda P_i)
 =\Oh_d(\rho' \vol_d(\pi_\Lambda K)).
\]
Thus  the required volume bound \eqref{eq:subset-sum-cap-volume}
holds due to $\rho'\le \beta \rho$ and \eqref{eqn:volpik}.
Each $Q_i$ contains a translate of $c_d D'\supseteq c_d D \supseteq c_d H$, as required.

\paragraph*{Constructing $G$.}
 We now compute the deep subset sums $G$. Intuitively, we do so by reporting line segments (along a fixed direction $\bfx_0$) contained in a deep part of $K$ with both end points near the boundary, in a modulo-$\Lambda$ sense.  The actual implementation of this intuition directly works with the $f$-images of the points for convenience.

We enumerate 
\[ E\coloneqq \bigcup_i f(P_i\cap \Gamma) \subseteq \Z\]
using \cref{lem:quotient-lattice-enumeration}.
By applying $f$ to both sides of \cref{thm:geometric-cap-cover}~\ref{item:cap-cover-deep} intersected with $\Gamma$, we get 
\[ f(K\cap \Gamma)\setminus E \subseteq f((K\ominus D')\cap \Gamma).\]

We first describe how to compute 
$f(K\cap \Gamma)\setminus E$ as an AP decomposition.
We choose $\bfx_0\in \pm \supp X$ with $p\coloneqq f(\bfx_0)>0$.
In a residue class $r+p\Z$, we say a \emph{gap} is a
maximal nonempty arithmetic progression of common difference $p$ in the set $(r+p\Z)\setminus E$; gaps may be unbounded.

\begin{claim}
    \label{claimtemp1}
A gap is either contained in $f(K\cap \Gamma)$ or disjoint from
$f(K\cap \Gamma)$.
\end{claim}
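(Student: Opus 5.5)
The idea is to show that the values of $f(K\cap \Gamma)$ inside one residue class $r+p\Z$ form a single interval of that class. Once we have this, the claim is immediate: a gap is an interval of the same class that avoids $E$, so it either lies inside that interval or misses it. The only thing that could break the interval structure is a hole strictly between two attained values, and we will rule this out.

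\textbf{Reduction to a fact about lines.} Write $f(\Gamma)=g\Z$. Since $\bfx_0\in\pm\supp X\subseteq \Gamma$, we have $g\mid p$. Fix a class $r+p\Z$, and suppose $a<b<c$ all lie in this class with $a,c\in f(K\cap\Gamma)$. We must show $b\in f(K\cap\Gamma)$. It suffices to show that the set $T=\{t\in\R: t\in f(K\cap(\bfy+\Z\bfx_0+\Lambda))\}$ is convex, for a suitable representative $\bfy$.

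\textbf{Intended argument.} The point $b$ is reached from $a$ by adding multiples of $\bfx_0$. Take $\bfy_a\in K\cap\Gamma$ with $f(\bfy_a)=a$, and $\bfy_c\in K\cap\Gamma$ with $f(\bfy_c)=c$. Then $\bfy_c-\bfy_a-\frac{c-a}{p}\bfx_0\in\Lambda$. Moving along the segment from $\bfy_a$ towards $\bfy_c$ is convex but uses non-lattice points, so direct interpolation does not work. This is exactly where \cref{item:cap-cover-deep} and \cref{item:cap-cover-local} enter. Values in $f(K\cap\Gamma)\setminus E$ are deep modulo $\Lambda$, and at a deep point $\bfz\in (K\ominus D')\cap\Gamma$ we can step by $\pm\bfx_0\in H\subseteq D'$ and stay inside $K\cap\Gamma$.

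\textbf{Gap case.} A gap is a run $b_1,b_1+p,\dots$ of values not in $E$. Suppose one element $b_j$ of the gap lies in $f(K\cap\Gamma)$. Then $b_j\notin E$ gives a point $\bfz\in(K\ominus D')\cap\Gamma$ with $f(\bfz)=b_j$, and $\bfz\pm\bfx_0\in K\cap\Gamma$. Hence the neighbours $b_j\pm p$ also lie in $f(K\cap\Gamma)$. Whenever such a neighbour is again not in $E$, the same step applies to it. By induction along the gap, every element of the gap lies in $f(K\cap\Gamma)$. So if a gap meets $f(K\cap\Gamma)$ at all, it is contained in it, which is the claim.

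\textbf{Main obstacle.} The only delicate point is the induction step. Each step uses \cref{thm:geometric-cap-cover}\ref{item:cap-cover-deep} through the chain $f(K\cap\Gamma)\setminus E\subseteq f((K\ominus D')\cap\Gamma)$, together with the facts $\bfx_0\in H\subseteq D'$ and $\bfx_0\in\Gamma$. These must be checked carefully, but the plan does not need convexity of $f(K\cap\Gamma)$ itself, only this local propagation.
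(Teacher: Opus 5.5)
Your proposal is correct and is essentially the paper's argument, carried by your \textbf{Gap case} paragraph. An element of the gap lying in $f(K\cap\Gamma)\setminus E$ comes from a lattice point of $K$ outside every $P_i\supseteq Q_i$. By \cref{thm:geometric-cap-cover}~\ref{item:cap-cover-deep} that point has a $\Lambda$-translate in $(K\ominus D')\cap\Gamma$, the steps $\pm\bfx_0\in H\subseteq D'$ put both neighbours in $f(K\cap\Gamma)$, and induction along the gap finishes (so an unbounded gap cannot meet the bounded set $f(K\cap\Gamma)$).

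You should delete the opening ``interval structure''/convexity reduction. You never use it, and it is false in general: for the zonotope with generators $(1,0),(1,0),(1,1),(1,1)$, $f(x,y)=x+3y$ and $p=1$, the lattice values are $\{0,1,2,4,5,6,8,9,10\}$, which has holes.
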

\begin{proof}
Take any gap, and suppose it contains some element in $f(K\cap \Gamma)$; let it be $t\in f(K\cap \Gamma)\setminus E$.  Then $t = f(\bfx)$ for some $\bfx \in (K\cap \Gamma) \setminus \bigcup_i P_i \subseteq (K\cap \Gamma)\setminus \bigcup_i Q_i$. By \cref{thm:geometric-cap-cover}~\ref{item:cap-cover-deep},
$\bfx+\Lambda$ intersects $(K\ominus D')\cap \Gamma$.
Since $\bfx_0\in \conv(\pm \supp X)\subseteq D\subseteq D'$, $\bfx+\bfx_0+\Lambda$ and $\bfx-\bfx_0+\Lambda$ both intersect $K\cap \Gamma$, so $t-p,t+p\in f(K\cap \Gamma)$. Induction then shows that all elements of the gap are contained in $f(K\cap \Gamma)$, proving the claim.
\end{proof} 
In particular, every unbounded gap is disjoint from $f(K\cap \Gamma)$, since $f(K\cap \Gamma)$ is bounded.

Let $B$ be the parallelepiped computed from \cref{lem:intrinsic-scale}.
Enumerate
\[
 S\coloneqq \bigcup_{\bfa\in R} f((\bfa+B)\cap \Gamma)
\]
using \cref{lem:quotient-lattice-enumeration}.
Since $B\subseteq \beta D'$ and $R+4d \beta D'\subseteq K$, we have
\[
 S\subseteq f((K\ominus D')\cap \Gamma).
\]
\begin{claim}
    \label{claimtemp2}
    A gap $e,e+p,\ldots,e'$
belongs to $f(K\cap \Gamma)$ if and only if $e\in S$.
\end{claim}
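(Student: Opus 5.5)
We prove the two directions of \cref{claimtemp2} separately. Throughout, fix a gap $e,e+p,\dots,e'$ in the residue class $r+p\Z$. By definition of a gap, its predecessor $e-p$ lies in $E$, unless the gap is unbounded below.

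The easy direction is ``$e\in S$ implies the gap belongs to $f(K\cap\Gamma)$''. We showed above that $S\subseteq f((K\ominus D')\cap\Gamma)\subseteq f(K\cap\Gamma)$. So $e\in S$ gives $e\in f(K\cap\Gamma)$, and \cref{claimtemp1} then shows that the whole gap lies in $f(K\cap\Gamma)$.

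For the converse, suppose the gap lies in $f(K\cap\Gamma)$. By the remark after \cref{claimtemp1}, the gap is bounded, so $e-p\in E$. We must exhibit $e\in S$.
\begin{itemize}
\item \emph{Choosing points.} Pick $\bfx\in K\cap\Gamma$ with $f(\bfx)=e$. Since $e\notin E$, we have $\bfx\notin\bigcup_i P_i$, and hence $\bfx\notin\bigcup_i Q_i$. Since $e-p\in E$, there is $\bfw\in P_i\cap\Gamma$ for some $i$ with $f(\bfw)=e-p=f(\bfx-\bfx_0)$. As $\bfw,\bfx-\bfx_0\in\Gamma$, this gives $\bfw-(\bfx-\bfx_0)\in\Gamma\cap\ker f=\Lambda$. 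Consequently
\[
\bfx\in \bfw+\bfx_0+\Lambda\subseteq P_i+D'+\Lambda,
\]
using $\bfx_0\in H\subseteq D\subseteq D'$.
\item \emph{Applying the local net.} We now have $\bfx\in\bigl(K\setminus\bigcup_i Q_i\bigr)\cap\bigl(\bigcup_i P_i+D'+\Lambda\bigr)$. Item~\ref{item:cap-cover-local} of \cref{thm:geometric-cap-cover} therefore gives $\bfa\in R$ and $\boldsymbol{\ell}\in\Lambda$ with $\bfx+\boldsymbol{\ell}\in\bfa+D'$.
\item \emph{Landing in $S$.} Set $\bfx'=\bfx+\boldsymbol{\ell}\in\Gamma$, which satisfies $f(\bfx')=e$. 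Since $D'\subseteq B$ by \cref{lem:intrinsic-scale}, we get $\bfx'\in(\bfa+B)\cap\Gamma$. Hence $e\in f((\bfa+B)\cap\Gamma)\subseteq S$.
\end{itemize}

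The main point to check is the middle step: we need $e-p\in E$, so that $\bfx$ is within $D'$ (modulo $\Lambda$) of some $P_i$ and the local condition~\ref{item:cap-cover-local} applies. This is exactly where boundedness of the gap (from \cref{claimtemp1}) and the choice $\bfx_0\in\pm\supp X\subseteq D'$ are used. The remaining containments, $D'\subseteq B$ and $R+4d\beta D'\subseteq K$, were already recorded before the claim.
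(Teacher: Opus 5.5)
Your proof is correct and follows the paper's argument. For the forward direction you use $S\subseteq f(K\cap\Gamma)$ together with \cref{claimtemp1}. For the converse you pick $\bfx$ with $f(\bfx)=e$, use $e-p\in E$ and $\bfx_0$ to place $\bfx$ in $P_i+D'+\Lambda$, and conclude via item~\ref{item:cap-cover-local} and $D'\subseteq B$; you also make explicit a few steps the paper leaves implicit, namely that the gap is bounded so $e-p\in E$, and that $\bfx_0\in H\subseteq D\subseteq D'$.
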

\begin{proof}
If the gap belongs to $f(K\cap \Gamma)$, choose $\bfx\in K\cap \Gamma$ with
$f(\bfx)=e$ and $\bfz\in P_i\cap \Gamma$ with $f(\bfz)=e-p$.
Then $\bfx-\bfz-\bfx_0\in \Lambda$, so $\bfx\in P_i+D'+\Lambda$.
Since $e\notin E$, $\bfx$ lies in no $Q_i$.
The local covering in \cref{item:cap-cover-local} of \cref{thm:geometric-cap-cover} gives
$\bfx\in R+D'+\Lambda\subseteq R+B+\Lambda$, hence $e\in S$.
Conversely, $e\in S\subseteq f(K\cap \Gamma)$ implies that the whole gap belongs to $f(K\cap \Gamma)$.
\end{proof}

Therefore, by \cref{claimtemp1,claimtemp2}, we can compute an AP decomposition of $f(K\cap \Gamma)\setminus E$ with common difference $p$ as follows: partition $E$ into residue classes modulo $p$.
In each residue class, include every bounded gap whose first value belongs to $S$. 

\begin{claim}
    \label{claimtemp3}
If $\bfx\in K\cap \Gamma$ and $f(\bfx)   \notin f(K\cap \Gamma)\setminus E$, then either $f(\bfx)\in S$ or $\bfx \in \bigcup_i Q_i$.
\end{claim}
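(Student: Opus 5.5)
We argue by cases on whether $f(\bfx)$ lies in $E$. Fix $\bfx\in K\cap\Gamma$ with $f(\bfx)\notin f(K\cap\Gamma)\setminus E$. Since $f(\bfx)\in f(K\cap\Gamma)$ trivially, the hypothesis forces $f(\bfx)\in E$. So there is some $i$ and $\bfz\in P_i\cap\Gamma$ with $f(\bfz)=f(\bfx)$, hence $\bfx-\bfz\in\Gamma\cap\ker f=\Lambda$, i.e.\ $\bfx\in P_i+\Lambda$. If $\bfx\in\bigcup_i Q_i$ we are done, so assume $\bfx\notin\bigcup_i Q_i$ and aim to show $f(\bfx)\in S$.

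The key step is to invoke the local covering property \cref{thm:geometric-cap-cover}~\ref{item:cap-cover-local}. Since $\zero\in D'$ (as $D'$ is origin-symmetric), we have
\[
\bfx\in\Bigl(K\setminus\bigcup_i Q_i\Bigr)\cap\Bigl(\bigcup_i P_i+D'+\Lambda\Bigr),
\]
so the property gives $\bfx\in R+D'+\Lambda$. Write $\bfx=\bfa+\bfb+\boldsymbol{\ell}$ with $\bfa\in R$, $\bfb\in D'$, $\boldsymbol{\ell}\in\Lambda$. Then $\bfx-\boldsymbol{\ell}=\bfa+\bfb\in\bfa+D'\subseteq\bfa+B$ by \cref{lem:intrinsic-scale}. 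Moreover $\bfx-\boldsymbol{\ell}\in\Gamma$ because $\Lambda\subseteq\Gamma$. Hence $\bfx-\boldsymbol{\ell}\in(\bfa+B)\cap\Gamma$, and $f(\bfx)=f(\bfx-\boldsymbol{\ell})\in f((\bfa+B)\cap\Gamma)\subseteq S$.

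The argument is essentially the same mechanism as the forward direction of \cref{claimtemp2}, but it is simpler, because $f(\bfx)\in E$ directly supplies a point of some $P_i$ in the $\Lambda$-coset of $\bfx$. No shift by $\bfx_0$ is needed here. The only point requiring care is bookkeeping: we need $\bfx$ to lie in $\bigcup_i P_i+D'+\Lambda$, which needs only $\zero\in D'$, and we need the final representative to stay in $\Gamma$, which holds since $\Lambda\subseteq\Gamma$. I do not expect a genuine obstacle.
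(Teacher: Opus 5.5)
Your proof is correct and follows the paper's argument essentially line for line: you deduce $f(\bfx)\in E$, obtain $\bfz\in P_i\cap\Gamma$ with $\bfx-\bfz\in\Lambda$, apply \cref{thm:geometric-cap-cover}~\ref{item:cap-cover-local} to get $\bfx\in R+D'+\Lambda$, and conclude via $D'\subseteq B$. The only difference is that you state explicitly two bookkeeping facts the paper leaves implicit, namely $\zero\in D'$ and $\Lambda\subseteq\Gamma$.
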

\begin{proof}
By the hypothesis, we must have $f(\bfx)\in E$.
Then the definitions of $E$ and $\Lambda$ give $\bfz\in P_i\cap \Gamma$ with $\bfx-\bfz\in \Lambda$.
If $\bfx$ lies in no $Q_i$, then \cref{thm:geometric-cap-cover}~\ref{item:cap-cover-local}
gives $\bfx\in R+D'+\Lambda\subseteq R+B+\Lambda$, so $f(\bfx)\in S$ by definition of $S$. 
\end{proof}

Thus, we can return the answer set $G$ (represented as an AP decomposition) defined by
\[
 G=(f(K\cap \Gamma)\setminus E)\cup S
 \subseteq f((K\ominus D')\cap \Gamma)
 \subseteq f((K\ominus D)\cap \Gamma),
\]
proving the first containment in \eqref{eq:cap-cover-deep-values} (the second containment in \eqref{eq:cap-cover-deep-values} follows from \cref{lem:innerpoints}).

To prove \eqref{eq:cap-cover-shallow-points}, note that for every $\bfx \in K\cap \Gamma$ with $f(\bfx)\notin G$, 
\cref{claimtemp3} immediately shows that $\bfx$ belongs to some cap $ Q_i$ as desired.

The size of the resulting AP decomposition of $G$ is $\Oh(|E|+|S|)$, and the construction time is  near-linear in $|E|$ and $|S|$, using the basic
operations in \cref{lem:ap-operations}.

\paragraph*{Time and output size.}
Write $r=\rho' \vol_d(\pi_\Lambda K)/\vol_d(\pi_\Lambda D')$ as in
\cref{thm:geometric-cap-cover}.
Applying \cref{thm:geometric-cap-cover} takes
$r\lvert\supp X\rvert b_0^{\Oh_d(1)}$ time.
Computing $\Gamma,\Lambda,\rho',B$ takes
$\lvert\supp X\rvert b_0^{\Oh_d(1)}$ time by \cref{lem:lattice-basis-hnf,lem:lattice-kernel-image} and \cref{lem:intrinsic-scale}, which is not the dominating term.

Since $\supp X\subseteq D\cap \Gamma$, we have 
$\lvert\supp X\rvert\le \lvert f(D\cap \Gamma)\rvert$ by properness.
Since $D\cap \Gamma$ spans $\R^d$, \cref{lem:quotient-lattice-point-bound}
and $D\subseteq D'$ give
$ \lvert f(D\cap \Gamma)\rvert\det \Gamma
 \le \Oh_d(\vol_d(\pi_\Lambda D'))$.
Then, chaining the two inequalities and using the definition of $r$ gives
\[
 r\lvert\supp X\rvert
 \le \Oh_d\left(\frac{\rho' \vol_d(\pi_\Lambda K)}{\det \Gamma}\right)
 \le \Oh_d(\rho' w) = \Oh_d(\rho w),
\]
where the last inequality used \eqref{eqn:volpik}.
Thus the time complexity for
\cref{thm:geometric-cap-cover} is 
$ \rho w b_0^{\Oh_d(1)}$.

We next bound
the time for enumerating $E$ and $S$, and the sizes $|E|$ and $|S|$, which also control the size of the AP decomposition of the answer set $G$ and its construction time. 
For $P=P_i$ or $P=\bfa+B$, cap thickness (\cref{thm:geometric-cap-cover}~\ref{item:cap-cover-thickness}) and $D'\subseteq B$ give
$D'\subseteq c_d^{-1}(P-P)$ in both cases.
Applying
\cref{lem:quotient-lattice-point-bound} to $c_d^{-1}(P-P)$ and using the definition of $\Lambda$,
 gives
\[
 \lvert f(P\cap \Gamma)\rvert
\le \Oh_d\left(\frac{\vol_d(\pi_\Lambda c_d^{-1}(P-P))}{\det \Gamma}\right)
 \le \Oh_d\left(\frac{\vol_d(\pi_\Lambda P)}{\det \Gamma}\right),
\]
where the last step used \cref{lem:volumeineq}.
For $P= P_i$, we use the volume bound in \cref{thm:geometric-cap-cover}~\ref{item:cap-cover-volume};
for $P = \bfa+B$, we use $\lvert R\rvert=\Oh_d(r)$ and
$\vol_d(\pi_\Lambda B)=\Oh_d(\vol_d(\pi_\Lambda D'))$ (due to $B\subseteq \beta D'$ and \cref{lem:volumeineq}).
Together these bounds give
\[
 \sum_i \big(1 + \lvert f(P_i\cap \Gamma)\rvert\big )
 +\sum_{\bfa\in R}\big ( 1 + \lvert f((\bfa+B)\cap \Gamma)\rvert\big )
 \le \Oh_d\left(\frac{\rho' \vol_d(\pi_\Lambda K)}{\det \Gamma}\right)
 \le \Oh_d(\rho' w),
\]
where the $+1$ terms are bounded using the fact that the number of caps returned and the size of $R$ are $\Oh_d(r) = \Oh_d(r\lvert\supp X\rvert)=\Oh_d(\rho' w)$ by
\cref{thm:geometric-cap-cover}.
Hence, 
\cref{lem:quotient-lattice-enumeration} enumerates all elements of $E$ and $S$ in
$b_0^{\Oh_d(1)} \rho' w$ time, and $\lvert E\rvert+\lvert S\rvert=\Oh_d(\rho' w)$.

Finally, $\rho'\le \beta \rho$ and $w\le u-\ell$ give the claimed
output size and running time bounds.
This completes the proof of 
\cref{thm:subset-sum-cap-covering}.

\subsection{Witness recovery}
\label{subsec:cap-cover-witness-recovery}

\begin{lemma}
\label{lem:cap-cover-witness}
In the setup of \cref{thm:subset-sum-cap-covering}, given $t\in G$ and $\boldsymbol{\tau}\in \Gamma$ with
$\boldsymbol{\tau}+(dH\cap \Gamma)\subseteq \Sigma(X')$,
one can recover a submultiset $Y\subseteq X$ with $\sum_{\bfy\in Y} f(\bfy)=t$ in $b_0^{\Oh_d(1)}\bigl(\lvert\supp X\rvert+\sum_{\bfv\in X'}\lvert f(\bfv)\rvert\bigr)$
time.
\end{lemma}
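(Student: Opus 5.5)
We will follow the constructive version of the rounding argument in the proof of \cref{lem:innerpoints}. Given $t\in G$, \cref{eq:cap-cover-deep-values} guarantees a point $\bfy\in (K\ominus D)\cap\Gamma$ with $f(\bfy)=t$. The plan has three steps: find such a $\bfy$, split $\bfy-\boldsymbol{\tau}$ into a vertex-type LP solution over $X\setminus X'$ with at most $d$ fractional coordinates, and then handle the rounding remainder inside $\Sigma(X')$ by a dynamic program.

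\emph{Step 1 (finding a deep preimage).} The construction of $G$ already supplies a certificate. We have either $t\in S$, so $t=f(\bfx)$ with $\bfx\in(\bfa+B)\cap\Gamma$ for some $\bfa\in R$, or $t$ lies in a gap whose first element $e=t-jp$ is in $S$. In the gap case, the point $\bfx$ for $e$ can be shifted by $j\bfx_0$ and then reduced modulo $\Lambda$.

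A simpler route, which avoids tracking that certificate, is to solve one fixed-dimension ILP directly. Using \cref{lem:lattice-optimization}, we look for $\bfy\in\Gamma$ with $f(\bfy)=t$ and $\bfy+\bfv\in K$ for every vertex $\bfv$ of the enclosing parallelepiped of $D$. Since $D=\lvert X'\rvert H\subseteq \rho' P$, we could instead test $\bfy\in K\ominus D$ directly through a separation oracle: $K\ominus D=\bigcap_{\bfx\in\pm\supp X}(K-\lvert X'\rvert\bfx)$ is an intersection of $\Oh(\lvert\supp X\rvert)$ translates of $K$. This takes $b_0^{\Oh_d(1)}\lvert\supp X\rvert$ time using \cref{lem:constrained-optimization} and \cref{cor:polytope-oracles}. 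Feasibility of this ILP is guaranteed by \cref{eq:cap-cover-deep-values}.

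\emph{Step 2 (LP and rounding).} Set $\bfz=\bfy-\boldsymbol{\tau}$. As in \cref{lem:innerpoints}, $\bfz\in\caZ(X\setminus X')$. We solve the LP $\bfz=\sum_{\bfx\in\supp(X\setminus X')}\alpha_\bfx\bfx$ with $0\le\alpha_\bfx\le\mu_{X\setminus X'}(\bfx)$, working over the support with multiplicity bounds so that the cost is near-linear in $\lvert\supp X\rvert$.

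To get a basic solution with at most $d$ fractional variables, we use the zonotope structure rather than a general LP solver. Write $\bfz$ as a convex combination of $d+1$ vertices of $\caZ(X\setminus X')$, obtained through the oracle of \cref{lem:zonotope-algorithms}, for instance by the Carathéodory/GLS decomposition in fixed dimension. Each vertex is an integral $0/\mu$ choice per generator, so the combination gives an explicit feasible $\boldsymbol\alpha$. We then run a pivoting pass that reduces the number of fractional (non-integral, non-bound) coordinates to at most $d$ by moving along the null space of any $d+1$ fractional columns. This is Gaussian elimination in dimension $d$, repeated $\Oh(\lvert\supp X\rvert)$ times.

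Rounding down gives $\bfz'\in\Sigma(X\setminus X')$, an explicit submultiset $Y_1$, and $\bfz-\bfz'\in dH\cap\Gamma$.

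\emph{Step 3 (the generating part).} Now $\bfw\coloneqq\boldsymbol{\tau}+\bfz-\bfz'\in\Sigma(X')$, and we need a submultiset $Y_2\subseteq X'$ summing to $\bfw$. Since $f$ is injective on the relevant lattice points only in a limited range, we run the standard subset-sum DP with backtracking on the integers $f(\bfv)$, $\bfv\in X'$, tracking values in $[-\sum\lvert f(\bfv)\rvert,\sum\lvert f(\bfv)\rvert]$. This costs $\Oh(\lvert X'\rvert\sum_{\bfv\in X'}\lvert f(\bfv)\rvert)$, which is too much.

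The main obstacle is therefore hitting the exact vector $\bfw$ rather than only the value $f(\bfw)$ within the stated time. We would try two things. First, note that it suffices to hit the value $f(\bfw)$, because the final output only needs $\sum_{Y}f=t$, and any $Y_2\subseteq X'$ with $\sum f=f(\bfw)$ works. Second, to fit the bound $b_0^{\Oh_d(1)}\sum_{\bfv\in X'}\lvert f(\bfv)\rvert$, we replace the quadratic DP with a Bringmann-style output-sensitive subset-sum computation with witnesses (\cref{lem:box-subset-sums} in one dimension, or divide-and-conquer with sparse convolution from \cref{lem:sparse-convolution}, recovering witnesses top-down). This runs in near-linear time in the value range $\sum_{\bfv\in X'}\lvert f(\bfv)\rvert$.

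Returning $Y=Y_1\sqcup Y_2$ gives $\sum_Y f=f(\bfz')+f(\bfw)=f(\bfy)=t$.
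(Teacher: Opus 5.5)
\textbf{The gap is in Step 1.} None of your routes finds a point of $(K\ominus D)\cap\Gamma\cap f^{-1}(t)$ within the time bound with feasibility actually guaranteed.
\begin{itemize}
\item The identity $K\ominus D=\bigcap_{\bfx\in\pm\supp X}(K-\lvert X'\rvert\bfx)$ is correct. But one membership or separation query for this set needs $2\lvert\supp X\rvert$ membership tests in $K$. Each test costs $\lvert\supp X\rvert\, b_0^{\Oh_d(1)}$ through the oracle of \cref{lem:zonotope-algorithms}, and \cref{cor:polytope-oracles} only usefully combines $\Oh_d(1)$ polytopes. So a single query already costs about $\lvert\supp X\rvert^2$. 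That is not within $b_0^{\Oh_d(1)}(\lvert\supp X\rvert+\sum_{\bfv\in X'}\lvert f(\bfv)\rvert)$ in general: in the application, $\lvert\supp X\rvert$ can be of order $\Delta$ while $\sum_{\bfv\in X'}\lvert f(\bfv)\rvert\le\lvert X'\rvert\Delta$.
\item If you instead impose the $2^d$ vertex constraints of a parallelepiped $P_D\supseteq D$, the running time is fine. However, feasibility no longer follows from \eqref{eq:cap-cover-deep-values}, because $K\ominus P_D$ can be strictly smaller than $K\ominus D$. You would have to reach into the construction, e.g.\ use $G\subseteq f((K\ominus D')\cap\Gamma)$ together with $\rho'P\subseteq D'$ from \cref{lem:intrinsic-scale}. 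You did not do this.
\item The certificate-tracking route hides the same problem in ``reduce modulo $\Lambda$''. That step again asks for a point of $(\bfx+j\bfx_0+\Lambda)\cap(K\ominus D)$.
\end{itemize}

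\textbf{The missing observation:} depth is needed only for existence, not for the search. Your rounding uses nothing about $\bfz$ beyond $\bfz\in\caZ(X\setminus X')\cap\Gamma$ and $f(\bfz)=t-f(\boldsymbol{\tau})$. From these, $\bfz-\bfz'\in dH\cap\Gamma$, so $\boldsymbol{\tau}+\bfz-\bfz'\in\Sigma(X')$ has $f$-value $t-f(\bfz')$, which is all Step 3 needs.
\begin{itemize}
\item Such a $\bfz$ exists: take $\bfy-\boldsymbol{\tau}$ for the (non-constructive) deep $\bfy$ given by \eqref{eq:cap-cover-deep-values} and the argument of \cref{lem:innerpoints}.
\item It can be found with a single call to \cref{lem:lattice-optimization} over the slice $\caZ(X\setminus X')\cap f^{-1}(t-f(\boldsymbol{\tau}))$. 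By \cref{lem:zonotope-algorithms,lem:constrained-optimization}, this slice has a linear optimization oracle with query time $\lvert\supp X\rvert\, b_0^{\Oh_d(1)}$.
\end{itemize}
This is the paper's argument, and it removes your Step 1 entirely.

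\textbf{Steps 2 and 3 are fine.} Step 3 coincides with the paper's sparse-convolution step on the values $f(\bfv)$, $\bfv\in X'$. Step 2 (constructive Carath\'eodory followed by null-space pivoting) is a workable alternative to the paper's divide-and-conquer \cref{lem:zonotope-coefficients}, provided you control bit lengths. For example, keep the active fractional columns linearly independent, so that Cramer's rule determines and bounds their values.
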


We recover the witness by
performing the same rounding step as in the proof of \cref{lem:innerpoints}. It involves solving a linear program, which can be done via divide and conquer with the optimization tools from \cref{sec:geometry-oracles}:
\begin{lemma}
\label{lem:zonotope-coefficients}
Given $\bfv_1,\ldots,\bfv_n\in \Z^d$ and
$\bfy\in \caZ((\bfv_i)_{i=1}^n)\cap \Q^d$,
one can find rational $\boldsymbol{\alpha}\in [0,1]^n$
such that
$\bfy=\sum_{i=1}^n \alpha_i \bfv_i$ and $\lvert\{i\in [n] :0<\alpha_i<1\}\rvert\le d$
 in $\tO_d(n\cdot b_0^{\Oh_d(1)})$ time,
 with $\maxbitlen(\alpha_i)=\Oh_d(b_0)$ for each $i\in [n]$,
 where $b_0=\maxbitlen(\bfv_1,\ldots,\bfv_n,\bfy)$.
\end{lemma}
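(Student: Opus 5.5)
We plan to find a point of the polytope $\{\boldsymbol{\alpha}\in[0,1]^n:\sum_i\alpha_i\bfv_i=\bfy\}$ with at most $d$ fractional coordinates. A basic feasible solution has this property. We avoid a general LP solver and use divide and conquer on the generator list, applying the linear optimization oracle for zonotopes from \cref{lem:zonotope-algorithms} at each level.

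We write $V_1=(\bfv_1,\dots,\bfv_{\lfloor n/2\rfloor})$ and $V_2$ for the rest, so that $\caZ(V)=\caZ(V_1)+\caZ(V_2)$. The first step is to find a splitting $\bfy=\bfy_1+\bfy_2$ with $\bfy_j\in\caZ(V_j)$ and rational entries of bit length $\Oh_d(b_0)$. We get this by building a linear optimization oracle for the polytope $\caZ(V_1)\cap(\bfy-\caZ(V_2))$, using \cref{cor:polytope-oracles} together with the oracles of \cref{lem:zonotope-algorithms} and \cref{lem:constrained-optimization}. Querying it once returns a vertex $\bfy_1$, and we set $\bfy_2=\bfy-\bfy_1$. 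One query costs $\Oh(n)\,b_0^{\Oh_d(1)}$. We then recurse on $(V_1,\bfy_1)$ and $(V_2,\bfy_2)$, and at a single generator we solve $\bfy=\alpha\bfv$ directly. The recursion tree has depth $\log n$ and total work $\tO_d(n\,b_0^{\Oh_d(1)})$ per level sum. The bit lengths do not blow up, because every $\bfy_j$ is a vertex of an intersection of zonotopes with integer generators of bit length $b_0$, and Cramer's rule then bounds its bit length by $\Oh_d(b_0)$ independently of the depth. This uses the fact quoted after the definition of $\maxbitlen$.

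The plain recursion can produce up to $d$ fractional coordinates in each leaf branch, so we add a cleanup step that merges fractional coordinates. After recursing on the two halves, each half has at most $d$ fractional indices, so their union $F$ has at most $2d$ indices. We consider the LP restricted to $F$: we fix all integral $\alpha_i$ and solve for $\boldsymbol{\alpha}_F\in[0,1]^F$ with $\sum_{i\in F}\alpha_i\bfv_i=\bfy-\sum_{i\notin F}\alpha_i\bfv_i$. This LP has $\Oh(d)$ variables and $d$ equality constraints. We solve it by enumerating basic solutions, or by iterated pivoting: whenever more than $d$ columns are fractional, take a kernel vector of those columns and move along it until some coordinate hits $0$ or $1$. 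Each pivot costs $\poly(d,b_0)$, and at most $2d$ pivots are needed. After this step at most $d$ coordinates in the merged block are fractional. The invariant therefore holds at every node, and the final $\boldsymbol{\alpha}$ has at most $d$ fractional entries.

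With this cleanup, the splitting step no longer needs to produce vertices; any point of the intersection suffices. The bit lengths of $\boldsymbol{\alpha}$ are controlled because each fractional block is a basic solution of a $d\times\Oh(d)$ system whose right-hand side is $\bfy$ minus an integer combination of the $\bfv_i$. Its entries are integers of bit length $\Oh(b_0+\log n)$, and $\log n\le b_0$ by the definition of $\maxbitlen$ for multisets. Cramer's rule then gives $\maxbitlen(\alpha_i)=\Oh_d(b_0)$.

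I expect the main obstacle to be the bookkeeping that keeps the bit length of the intermediate right-hand sides $\bfy_j$ bounded by $\Oh_d(b_0)$ across all levels, rather than letting it grow with the depth. The fix I would try first is to always take $\bfy_1$ to be a vertex returned by the oracle, so that Cramer's rule applies to the original integer data at every level. If that bound fails, the fallback is to round $\bfy_1$ to a nearby lattice-compatible rational point.
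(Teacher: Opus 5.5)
\textbf{Verdict: genuine gap.} Your recursion with oracle vertex splits is the paper's algorithm, and your pivoting cleanup does correctly enforce at most $d$ fractional entries. The gap is the one you flag yourself: the running time needs every intermediate target $\bfy_I$ to have small bit length, and your argument for this does not hold.

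A vertex $\bfy_{I_1}$ of $\caZ(V_{I_1})\cap(\bfy_I-\caZ(V_{I_2}))$ is \emph{not} determined by Cramer's rule on the original integer data alone. The facets of $\bfy_I-\caZ(V_{I_2})$ have right-hand sides that depend on $\bfy_I$. Naively iterating the bound $\maxbitlen(K_1\cap K_2)=\Oh_d(\cdot)$ of \cref{cor:polytope-oracles} over $\log n$ levels therefore inflates bit lengths by a factor $n^{\Oh_d(1)}$.

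Your cleanup does not repair this:
\begin{itemize}
\item With arbitrary split points, as you allow, there is no control on bit lengths at all.
\item Even with vertex splits, pivoting at a parent changes the children's coefficients. The targets $\bfy_{I_1},\bfy_{I_2}$ are then no longer partial sums of the final output, so you cannot bound them through it.
\item To apply Cramer's rule to your output, pivoting must continue until the fractional columns are linearly independent. Stopping once at most $d$ remain is not enough.
\end{itemize}

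The missing idea, which is the heart of the paper's proof, is that vertex splitting by itself already produces a vertex of the LP. So your cleanup never fires, and the plain recursion does not accumulate $d$ fractional entries per branch. Argue by induction. Suppose $\boldsymbol{\alpha}_I$ is the midpoint of two solutions $\boldsymbol{\alpha}',\boldsymbol{\alpha}''$ feasible for the node-$I$ LP. Then $\sum_{i\in I_1}\alpha'_i\bfv_i$ and $\sum_{i\in I_1}\alpha''_i\bfv_i$ both lie in $\caZ(V_{I_1})\cap(\bfy_I-\caZ(V_{I_2}))$ and average to the vertex $\bfy_{I_1}$. Hence both equal $\bfy_{I_1}$, and the complementary sums both equal $\bfy_{I_2}$. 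The restrictions of $\boldsymbol{\alpha}',\boldsymbol{\alpha}''$ are then feasible for the children's LPs and average to the children's vertex solutions, which forces $\boldsymbol{\alpha}'=\boldsymbol{\alpha}''$.

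Consequently the output $\boldsymbol{\alpha}$ is a vertex of the root LP, and two bounds follow:
\begin{itemize}
\item Its at most $d$ fractional entries have $\maxbitlen=\Oh_d(b_0)$ by Cramer's rule, using $0\le\alpha_i\le 1$. You do not need $\log n\le b_0$, which is not implied here anyway, since $b_0$ is a maximum over the individual vectors.
\item Every intermediate target satisfies $\bfy_I=\sum_{i\in I}\alpha_i\bfv_i$: an integer vector with entries at most $n2^{b_0}$ plus at most $d$ rational terms. This gives $\maxbitlen(\bfy_I)=\Oh_d(b_0)+\Oh(\log n)$ a posteriori.
\end{itemize}

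If you want to keep your framework, it can be rescued only with vertex splits, via a per-level denominator argument. Such a vertex is the image of a vertex of the node-$I$ LP, so its common denominator is that of $\bfy_I$ times a $d\times d$ minor. Bit lengths then grow additively by $\Oh_d(b_0)$ per level, which you would have to carry out explicitly.
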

\begin{proof}
A vertex solution of the LP
$\{\boldsymbol{\alpha}:\sum_i \alpha_i \bfv_i=\bfy,\ 0\le \alpha_i\le 1\}$ can only have at most $d$ fractional coordinates $0<\alpha_i<1$.
We find such a vertex by divide and conquer.

For $I\subseteq [n]$, write $Z_I=\caZ((\bfv_i)_{i\in I})$.
Build a balanced binary tree with $n$ leaves representing $1,2,\dots,n$, and identify each tree node by the set $I$ consisting of its leaf elements. 
Each tree node receives a target vector $\bfy_I\in \Q^d \cap \caZ((\bfv_i)_{i\in I})$, with the root $[n]$ receiving $\bfy_{[n]} = \bfy$.
Starting from the root $I=[n]$, we recursively find a vertex $\boldsymbol \alpha_{I}\in [0,1]^{I}$ of 
$\{\boldsymbol{\alpha}\in [0,1]^I:\sum_{i\in I} \alpha_i \bfv_i=\bfy_I\}$ as follows: let $I_1,I_2$ be its two children, so $I=I_1\sqcup I_2$,  and use \cref{lem:zonotope-algorithms,cor:polytope-oracles} to find a vertex
\[
 \bfy_{I_1} \in Z_{I_1}\cap (\bfy_I-Z_{I_2}),
\]
recurse with children's targets $\bfy_{I_1}$ and $\bfy_{I_2}\coloneqq \bfy_{I}-\bfy_{I_1}$, and return the concatenation of the two children's solutions $\boldsymbol \alpha_{I_1},\boldsymbol \alpha_{I_2}$.
A singleton $I=\{i\}$ is solved trivially.

We prove by induction that $\boldsymbol \alpha_{I}$ is indeed a vertex of 
$\{\boldsymbol{\alpha}\in [0,1]^I:\sum_{i\in I} \alpha_i \bfv_i=\bfy_I\}$.
The singleton case is immediate.
If $\boldsymbol \alpha_{I}$ is the midpoint of two feasible vectors $\boldsymbol \alpha_{I}' ,\boldsymbol \alpha_{I}''$, then their restrictions to $I_1$,  $\boldsymbol \alpha_{I_1}' ,\boldsymbol \alpha_{I_1}''$, satisfy $\frac{1}{2}\sum_{i\in I_1}(\alpha'_i\bfv_i+\alpha''_i\bfv_i) = \sum_{i\in I_1}\alpha_i\bfv_i = \bfy_{I_1}$ (and analogously for $I_2$). Since $\bfy_{I_1}$ is a vertex, this implies
$\sum_{i\in I_1}\alpha'_i\bfv_i = \sum_{i\in I_1}\alpha''_i\bfv_i = \bfy_{I_1}$ (and analogously for $I_2$). 
Then, since $\boldsymbol \alpha_{I_1}$ is a vertex by induction, and is the midpoint of $\boldsymbol \alpha_{I_1}'$ and
$\boldsymbol \alpha_{I_1}''$, both of which are feasible for the child $I_1$, we have 
$\boldsymbol \alpha_{I_1}' = \boldsymbol \alpha_{I_1}''$. An analogous argument holds for $I_2$. Therefore $\boldsymbol{\alpha}_I' = \boldsymbol{\alpha}_I''$, proving that $\boldsymbol{\alpha}_I$ is a vertex.

Since the computed $\boldsymbol{\alpha}$ is a vertex solution, one can bound the maximum bit length of all intermediate targets $\boldsymbol{\bfy_I}$ by $\Oh_d(b_0+\log n)$.
Thus a node $I$ takes
$\tO_d(\lvert I\rvert)(b_0+\log n)^{\Oh_d(1)}$ time by
\cref{lem:zonotope-algorithms,cor:polytope-oracles}.
Summing the time complexities across $\Oh(\log n)$ levels gives the stated time complexity.
\end{proof}

\begin{proof}[Proof sketch of \cref{lem:cap-cover-witness}]
We follow the proof of \cref{lem:innerpoints} with a small adaptation (namely, applying $f$ in appropriate places), and
make some steps algorithmic.
First, since $t\in G$, by \cref{thm:subset-sum-cap-covering}, there is 
$\bfy\in(K\ominus\lvert X'\rvert H)\cap\Gamma$ with $f(\bfy)=t$, where 
$K\subseteq\caZ(X)$; 
the proof of \cref{lem:innerpoints} shows 
 $\bfz=\bfy-\boldsymbol{\tau}\in\caZ(X\setminus X')\cap\Gamma$, and hence $f(\bfz)=t-f(\boldsymbol \tau)$.
Here, it suffices to find any
$\bfz\in\caZ(X\setminus X')\cap\Gamma$ with
$f(\bfz)=t-f(\boldsymbol{\tau})$; we do so using
\cref{lem:zonotope-algorithms,lem:constrained-optimization,lem:lattice-optimization}
in $b_0^{\Oh_d(1)}\lvert\supp X\rvert$ time.

Second, the proof of \cref{lem:innerpoints} chooses a basic feasible solution $\boldsymbol \alpha$
of the LP $\bfz = \sum_{i=1}^{|X\setminus X'|}\alpha_{i}\bfx_i, \alpha_{i}\in [0,1]$, where $\{\bfx_i\}_i = X\setminus X'$.
We compute such $\boldsymbol \alpha$ using \cref{lem:zonotope-coefficients} (taking the multiplicities of $X\setminus X'$ into account), and round it to $\boldsymbol \alpha'$ in the same way as in 
\cref{lem:innerpoints}, giving a $\bfz' \in \Sigma(X\setminus X')$ together with a witnessing submultiset of $X\setminus X'$, such that $f(\bfy - \bfz')\in \Sigma(f(X'))$.

Finally, it remains to find a witness of
$f(\bfy - \bfz')\in \Sigma(f(X'))$. 
Writing $R=\sum_{\bfv\in X'}\lvert f(\bfv)\rvert$, this follows from a standard divide-and-conquer algorithm with 
sparse convolution (\cref{lem:sparse-convolution}) in $\tO(R)$ time. This proves the claimed time bound and finishes the proof.
\end{proof}

\section{Solving One Group}
\label{sec:one-group}

In this section, we prove \cref{lem:oneclass}.
The proof proceeds by applying the cap covering from 
\cref{sec:subset-sum-cap-covering}, and then computing integer sets containing the $f$-values of the subset sums in the caps.

\subsection{Computing subset sums in a cap}
\label{subsec:subset-sums-in-cap}

We use the following lemma to compute a set $S\subseteq f(\Sigma(X))$ containing $f(\Sigma(X)\cap Q)$ for a cap $Q$ of the zonotope. For convenience, we assume a parallelepiped enclosing the cap is given as part of the input.
\begin{lemma}
\label{lem:cap-subset-sums}
Given a multiset $X$ of vectors in $\Z^d$, and a linear map
$f\colon \R^d\to \R$ with integer coefficients,
write $\Gamma=\caL(X)$ and $K=\caZ(X)$.
Let $\bfu\in \Q^d\setminus \{\zero\}$, $a\in \Q$,
 and a rational parallelepiped $P$ such that
$P\supseteq Q\coloneqq K\cap \{\bfx:\langle\bfu,\bfx\rangle\le a\}$ be given.

Then, one can compute an integer set $S$ satisfying
\[
 f(\Sigma(X)\cap Q)\subseteq S\subseteq f(\Sigma(X))
\]
in
$ \left(\lvert\supp X\rvert+\lvert f((P-P)\cap \Gamma)\rvert\right)
 b_0^{\Oh_d(1)}$
time, where $b_0=\maxbitlen(X,f,\bfu,a,P)$.

Moreover, given $t\in S$, one can recover a submultiset $Y\subseteq X$ with $\sum_{\bfy\in Y} f(\bfy)=t$ in the same time complexity.
\end{lemma}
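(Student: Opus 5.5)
We follow the sketch in \cref{subsec:overview-computing}, adapted to the quotient setting via \cref{lem:local-properization}. First we find an apex of $Q$: query the linear optimization oracle of $\caZ(X)$ (\cref{lem:zonotope-algorithms}) with direction $-\bfu$, and take the returned vertex $\bfr$. This vertex is the subset sum of $X^-\coloneqq\{\bfx\in X:\langle\bfu,\bfx\rangle<0\}$ together with a lexicographic tie-break on vectors with $\langle\bfu,\bfx\rangle=0$. Flip signs: set $\tilde X=(X\setminus X^-)\sqcup(-X^-)$. Then $\Sigma(X)=\bfr+\Sigma(\tilde X)$, and every $\tilde\bfx\in\tilde X$ satisfies $\langle\bfu,\tilde\bfx\rangle\ge 0$; after the tie-break this can be made strict or handled by the lexicographic order. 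Set $\tilde Q=Q-\bfr$. Since the inner products are nonnegative, if a submultiset $Y\subseteq\tilde X$ has sum in $\tilde Q$, then so does every sub-submultiset of $Y$; moreover all of these sums lie in $\tilde Q\subseteq P-\bfr$, using that $\zero\in\tilde Q$ and convexity. Any single $\tilde\bfx$ with $\tilde\bfx\notin P-P$ can therefore be discarded, since it cannot lie in a subset whose sum is in $\tilde Q$. The surviving items number at most $|\supp X|$ distinct values with multiplicities.

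Next we reparameterize. Apply \cref{lem:local-properization} to the origin-symmetric polytope $P-P$, the lattice $\Gamma$ (restricted to $\spa((P-P)\cap\Gamma)$ if needed), and $f$. This gives $\Phi$, $g$ and a box $B$ with the following properties: $\Phi\Gamma=\Z^{d'}$, $\Phi(P-P)\subseteq B$, $|B\cap\Z^{d'}|\le C_d|f((P-P)\cap\Gamma)|$, and $g$ injective on $4B\cap\Z^{d'}$. The relevant partial sums $\tilde\bfz$ lie in $\tilde Q\subseteq P-P$, so $\Phi\tilde\bfz\in B$. Partition the surviving $\tilde X$ into $2^{d'}$ classes $X_j$ according to the sign pattern of $\Phi\tilde\bfx$. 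After reflecting coordinates, each $\Phi X_j$ is a multiset of nonnegative vectors, and we compute $S_j=\Sigma(\Phi X_j)\cap B$ by the box version of Bringmann's algorithm (\cref{lem:box-subset-sums}) in $\tO_d(|\supp X|+|B\cap\Z^{d'}|)$ time. Multiplicities are handled by the standard binary splitting inside that lemma. This step is valid because, within one orthant, partial sums are coordinatewise monotone, so every prefix of a sum lying in $B$ also lies in $B$. Then compute $T=S_1+\cdots+S_{2^{d'}}\subseteq 2^{d'}B$ by multidimensional FFT in $\tO_d(|B\cap\Z^{d'}|)$ time. Output
\[
S\coloneqq f(\bfr)+g(T).
\]
Both inclusions hold. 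First, $S\subseteq f(\Sigma(X))$, since every element of $T$ is $\Phi$ of a genuine subset sum of $\tilde X$ and $f=g\circ\Phi$. Second, any $\bfz\in\Sigma(X)\cap Q$ gives $\bfz-\bfr=\sum_j\tilde\bfz_j$ with each $\tilde\bfz_j\in\tilde Q$ by the monotonicity above, hence $\Phi\tilde\bfz_j\in S_j$ and $f(\bfz)\in S$. We do not need to filter by the cap, because we only want a superset of $f(\Sigma(X)\cap Q)$ inside $f(\Sigma(X))$. Injectivity of $g$ is therefore not needed for correctness, only for the size bound.

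For witness recovery, given $t\in S$, take $\bfw\in T$ with $g(\bfw)=t-f(\bfr)$. Decompose $\bfw=\sum_j\bfw_j$ with $\bfw_j\in S_j$ by scanning the convolution, and then backtrack within each $S_j$ using the witness version of \cref{lem:box-subset-sums}. Undo the sign flips: a flipped $-\bfx$ chosen means $\bfx\in X^-$ is excluded, and unchosen means it is included.

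The main obstacle is the dimension mismatch: $\Phi$ maps into $\R^{d'}$ with $d'<d$, and $P-P$ need not be fat with respect to $\Gamma$. We have to check carefully that the orthant argument runs in the reduced coordinates, i.e.\ that $\Phi\tilde Q\subseteq B$ and the prefix property are what matter. A second point to check is that the restriction $\spa((P-P)\cap\Gamma)=\R^d$ required by \cref{lem:local-properization} holds, or else that we can work inside that span; the cap thickness assumption from \cref{thm:subset-sum-cap-covering} should guarantee it in applications.
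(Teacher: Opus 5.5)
Your proposal is correct and is essentially the paper's proof: flip signs and translate so that every $\langle\bfu,\bfx\rangle\ge 0$, discard items that cannot appear in a sum lying in the cap, apply \cref{lem:local-properization} to $P-P$, split the items by the orthant of $\Phi\bfx$, run \cref{lem:box-subset-sums} on each class, and add the results. You add in $\Z^{d'}$ before applying $g$, while the paper adds the integer sets $g(S_j)$ by sparse convolution; by linearity of $g$ these are the same.

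Two small fixes are needed. First, define $\bfr\coloneqq\sum_{\bfx\in X^-}\bfx$ directly rather than taking the oracle's vertex: with the lexicographic tie-break, vectors with $\langle\bfu,\bfx\rangle=0$ land in $\bfr$ without being flipped, and then $\Sigma(X)=\bfr+\Sigma(\tilde X)$ fails. Second, the span hypothesis of \cref{lem:local-properization} does not need the cap thickness from \cref{thm:subset-sum-cap-covering}. The paper discards every item outside the translated cap $Q\subseteq P-P$ and then works in $E=\spa(X)$ with the lattice $\caL(X)$, and your restriction to $\spa((P-P)\cap\Gamma)$ works equally well.
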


The proof proceeds by reducing the task to the following multidimensional \textsc{Subset Sum} lemma. 

\begin{lemma}
\label{lem:box-subset-sums}
Given $\bft\in \Nonneg^d$, write $B=\prod_{j=1}^d[0,t_j]$. For a multiset $X$ with $\supp X\subseteq B\cap \Z^d$, one can compute $\Sigma(X)\cap B$
in $\tO_{d}(\lvert B\cap \Z^d\rvert)$ time.

Moreover,  if $\bft\in \Sigma(X)$, one can recover a witness $Y\subseteq X$ with $\sum_{\bfy \in Y} \bfy = \bft$ in the same time complexity.
\end{lemma}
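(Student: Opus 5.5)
The plan is a black-box reduction of \cref{lem:box-subset-sums} to the one-dimensional all-target algorithm of Bringmann \cite{Bringmann17}, derandomized by Chan \cite{chan26}. That algorithm computes $\Sigma(A)\cap[0,T]$ for a multiset $A$ of positive integers in $\tO(\lvert A\rvert+T)$ time, and supports witness recovery in the same time.

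The reduction is a mixed-radix encoding. Set $N_j \coloneqq 2\,\lvert X\rvert_{\le}\,t_j+1$, where the subscript refers to a bound on the number of summands; concretely, take $N_j$ larger than any coordinate value that can arise in an intermediate sum. Encode $\bfx\mapsto \sum_j x_j\prod_{i<j}N_i$. The obstacle is that intermediate sums of $X$ can exceed $B$, so coordinates overflow into higher digits and the encoding stops being injective on the relevant sums. This cannot be fixed by simply enlarging the radix, since the radix would have to grow with $\lvert X\rvert$ and the $\lvert B\cap\Z^d\rvert$ bound would be lost.

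I would therefore not treat the 1D algorithm as a pure black box. Instead I would rerun its structure in $d$ dimensions, where it only ever needs two primitives:
\begin{itemize}
\item[(a)] capped sumset computation $(S_1+S_2)\cap B$ for $S_1,S_2\subseteq B\cap\Z^d$;
\item[(b)] color-coding/random-partition steps, which never depend on dimension.
\end{itemize}
Primitive (a) is handled by a multidimensional FFT on the box $2B$, or equivalently by a 1D FFT under the Kronecker encoding with radix $2t_j+1$. That encoding is injective on $2B$, because each operand already lies in $B$. Either way the cost is $\tO_d(\lvert B\cap\Z^d\rvert)$.

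With this primitive I would follow Chan's deterministic version step by step. First, items with $\lVert\bfx\rVert_\infty$-scale in dyadic classes are grouped, using a per-coordinate notion of ``large relative to $t_j$''. An item is in class $k$ if $\max_j x_j/t_j\in(2^{-k-1},2^{-k}]$. At most $2^{k+1}$ such items can appear in any sum inside $B$, since summing the coordinate attaining the max forces the count bound. Within class $k$, Bringmann's partition into $\approx 2^{k}$ buckets, each receiving at most one relevant item w.h.p., together with capped convolution trees, goes through verbatim. This holds because the only property used is ``each sum within $B$ uses few class-$k$ items'', and the capped convolutions cost $\tO_d(\lvert B\cap\Z^d\rvert)$ per tree level. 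Small items (large $k$) are handled by Bringmann's recursive halving of the target, applied coordinatewise: $\bft\to\lceil\bft/2\rceil$ plus slack. The standard recurrence then gives total $\tO_d(\lvert B\cap\Z^d\rvert)$, because halving all $t_j$ shrinks the box volume by roughly $2^d$.

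The derandomization replaces random partitions with Chan's explicit constructions, which are again dimension-independent. Multiplicities are first reduced so each element appears $O(\log)$ times per scale, via the standard ``$\bfx$ twice becomes $2\bfx$ once'' trick, so that $\lvert\supp X\rvert$-type terms are within budget.

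Witness recovery uses the fact that every step is a capped sumset. I would record, for each output point, a witness decomposition by sparse-convolution witness finding (or by backtracking recursively through the convolution tree, re-solving one target per node), which costs the same time.

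The step I expect to be the main obstacle is checking that the small-item recursion's slack terms and the $\lvert B\cap\Z^d\rvert$ accounting survive when some $t_j$ are tiny, e.g.\ $t_j=0$. There I would first strip coordinates with $t_j=0$, since such items must have zero there, and then pad each $t_j$ to $\ge 1$ at a constant-factor cost.
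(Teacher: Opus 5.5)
Your route differs from the paper's, and the premise that pushes you onto it does not hold. The overflow/aliasing obstruction you identify is real for positional (mixed-radix) encodings, but it does not rule out a black-box reduction.

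After the same binary bundling (so $\lvert X\rvert=\tO_d(\lvert B\cap\Z^d\rvert)$), the paper uses a CRT-style encoding:
\begin{itemize}
\item Take $q_j$ to be the least power of the $j$-th prime with $q_j\ge d(t_j+1)$, and let $P=\prod_j q_j=\Oh_d(\lvert B\cap\Z^d\rvert)$.
\item Set $E(\bfa)=\sum_j a_jP/q_j$ and $H=E(\bft)$, so that $H/P=\sum_j t_j/q_j<1$.
\item Every item is nonnegative, so any $\bfa\in\Nonneg^d$ with $E(\bfa)\le H$ satisfies $a_jP/q_j\le H<P$. Hence $a_j<q_j$, and $a_j$ is recovered as $E(\bfa)\,(P/q_j)^{-1}\bmod q_j$.
\end{itemize}
An overflowing coordinate therefore pushes the encoded value above $P>H$ instead of aliasing onto another coordinate. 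So $E$ is injective on all nonnegative vectors whose code lies in $[0,H]$. The paper then runs Chan's deterministic all-targets algorithm on $E(X)$ with target $H$ in $\tO(\lvert X\rvert+H)$ time, decodes, and discards decoded vectors outside $B$. Determinism and witness recovery are inherited with no extra work.

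Your plan would give a native $d$-dimensional algorithm, and the paper itself remarks that a direct generalization of Bringmann--Chan exists. As written, though, its correctness rests on assertions rather than arguments. The key claims are that every step of Chan's derandomization is ``dimension-independent'' and that Bringmann's layers ``go through verbatim''. Together these are essentially the whole content of the lemma, and you do not examine them.

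The parts you do spell out also contain slips:
\begin{itemize}
\item The class-$k$ count bound is $d\cdot 2^{k+1}$ rather than $2^{k+1}$. Different items may attain their maximum ratio in different coordinates, so you cannot sum a single coordinate. Instead, sum $\sum_j x_j/t_j$ over the items of a solution; the total is at most $d$.
\item Bringmann's layer step does not put the items into about $2^k$ buckets with at most one relevant item each. With $\Theta(2^k)$ relevant items and $\Theta(2^k)$ buckets, collisions are essentially certain. Instead it uses roughly $2^k/\log$ parts, each receiving $\Oh(\log)$ relevant items with high probability, followed by colour coding with polylogarithmically many colours inside each part.
\end{itemize}
These slips are repairable, but the encoding above makes the whole re-derivation unnecessary.
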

\cref{lem:box-subset-sums} follows from a direct generalization of Bringmann's algorithm \cite{Bringmann17} with its derandomization by Chan~\cite{chan26}; we include a proof of \cref{lem:box-subset-sums}  in \cref{app:box-subset-sums} by a black-box reduction to \cite{Bringmann17,chan26}.

We now prove \cref{lem:cap-subset-sums}.
\begin{proof} [Proof of \cref{lem:cap-subset-sums}]
First, we may assume that $\langle\bfu,\bfx\rangle\ge 0$ for all $\bfx\in X$ without loss of generality.
This can be justified by negating those $\bfx\in X$ with $\langle\bfu,\bfx\rangle<0$,
and translating $K,Q,P,\Sigma(X)$ accordingly, leaving $\bfu,\Gamma$, and $P-P$ unchanged. 
Then assume the cap $Q = K \cap \{\bfx:\langle \bfu,\bfx\rangle \le a\}$ satisfies $a\ge 0$ (otherwise, return $S=\varnothing$).

Then, we can assume without loss of generality that all $\bfx \in X$ satisfy $\bfx \in Q$. 
To justify this, note that if $Y\subseteq X$ contains an $\bfx$ such that $\langle \bfu,\bfx\rangle >a$, then $\sum_{\bfy\in Y}\langle \bfu, \bfy\rangle >a $ because $\langle \bfu,\bfy\rangle \ge 0$ for all $\bfy$. Thus we can simply remove all $\bfx \notin Q$ from the input multiset $X$.

We next transform the instance into a boxed form in order to apply 
\cref{lem:box-subset-sums}.
Since $\zero\in Q\subseteq P$, we have $Q\subseteq P-P$.
Let $E=\spa(X)$, $\Gamma_0=\caL(X)\subseteq \Gamma$, and assume 
$f|_E$ is nonzero (otherwise,  return $S=\{0\}$).
Apply \cref{lem:local-properization} to
$(P-P)\cap E$, $\Gamma_0$, and $f|_E$.
It gives maps $\Phi\colon E\to \R^{d'}$, $g\colon \R^{d'}\to \R$,
and a box $B=\prod_{j=1}^{d'}[-U_j,U_j]$ such that
$ \Phi \Gamma_0=\Z^{d'}, f|_E=g \circ \Phi$, and
  $\Phi((P-P)\cap E)\subseteq B$,
  with
\[
 V\coloneqq \lvert B\cap \Z^{d'}\rvert
 \le \Oh_d\bigl(\lvert f((P-P)\cap \Gamma)\rvert\bigr).
\]

Partition all $\bfx\in X$ into $X=X_1\sqcup \cdots\sqcup X_{2^{d'}}$ according to the sign pattern of the coordinates
of $\Phi \bfx$.
For each $X_j$,
we compute
 \[
 S_j=g\bigl(\Sigma(\Phi X_j)\cap B\bigr)
\]
in $b_0^{\Oh_d(1)} V$ time
using \cref{lem:box-subset-sums} (which is applicable after negating the negative coordinates).
Compute and return $S=S_1+\cdots+S_{2^{d'}}$ using sparse convolution (\cref{lem:sparse-convolution}),
since each intermediate sumset $S_1+\cdots+S_j$ is contained in
$g(2^{d'} B\cap \Z^{d'})$, which has size $\Oh_d(V)$.
Including the scan of the original input $X$ and the construction of the multisets $\Phi X_j$,
the overall time complexity is
$b_0^{\Oh_d(1)}(\lvert\supp X\rvert+V)$.

We now show correctness of $S$ using the guarantees of \cref{lem:local-properization}.
Since  $f|_E=g \circ \Phi$,
$g(\Sigma(\Phi X_j))=f(\Sigma(X_j))$, so $S_j\subseteq f(\Sigma(X_j))$, and consequently $S\subseteq f(\Sigma(X))$.
To prove the other direction, note that the guarantee $\Phi((P-P)\cap E)\subseteq B$ implies
\[f\big(\Sigma(X_j)\cap Q\big )\subseteq f\big(\Sigma(X_j)\cap (P-P)\big )= g\big (\Phi\big (\Sigma(X_j)\cap (P-P)\big )\big )
\subseteq g \big (\Sigma(\Phi X_j)\cap B\big ) =  S_j.\]
If $Y\subseteq X$ has sum in $Q$, then each $Y\cap X_j$ also has sum in
$Q$, since $\langle\bfu,\bfx\rangle\ge 0$ for all $\bfx\in X$.
Hence $f(\sum_{\bfy\in Y\cap X_j} \bfy)\in S_j$ for every $j$, proving
$f(\Sigma(X)\cap Q)\subseteq S$.

The witness recovery is straightforward, using 
the witness-recovery step of \cref{lem:box-subset-sums}.
\end{proof}

\subsection{Proof of \texorpdfstring{\cref*{lem:oneclass}}{Lemma \getrefnumber{lem:oneclass}} }
\label{sec:recursive}

We now prove \cref{lem:oneclass} by combining the cap covering theorem (\cref{thm:subset-sum-cap-covering}) and the lemma that handles caps (\cref{lem:cap-subset-sums}).

Fix $X,X',f$ as in \cref{lem:oneclass}.
Write $\Gamma=\caL(X)$, $n=\lvert X\rvert$, and $s=\lvert X'\rvert$.
For simplicity, assume $\rk \Gamma=d$ without loss of generality (otherwise, restrict to the subspace spanned by $\Gamma$).
Let $H=\conv(\pm \supp X)$, $D=sH$, and $K=\caZ(X)$.
Define $\rho=s \lambda/n$, and
$w_X=\sum_{\bfx\in X} f(\bfx)=\fw(K)$.

We apply \cref{thm:subset-sum-cap-covering} to $X,X',f$ and interval $[0,w_X]$. We first check the conditions required by
\cref{thm:subset-sum-cap-covering} are satisfied:
By properness of $f(H\cap \Gamma)$, $f$ is injective on $\supp X$.
By definition of $D$ and $\rho$, \eqref{eq:oneclass-balance} gives
$D\subseteq \rho \caZ_0(X)$,
so $K$ contains a translate of $\rho^{-1} D$.
By \eqref{eq:oneclass-generating}, $X'$ is $d$-generating for
$H\cap \Gamma$, so $\boldsymbol{\tau}+(dH\cap \Gamma)\subseteq \Sigma(X')$
for some $\boldsymbol{\tau}\in \Gamma$;
 taking convex hulls gives
$\boldsymbol{\tau}+dH\subseteq \caZ(X')$.
In particular, $K$ contains a translate of $H$.
Therefore, 
 \cref{thm:subset-sum-cap-covering} is applicable;
write $G$ for the returned set (represented as an AP decomposition of size $\Oh_d(\rho w_X)$), and $(Q_i,P_i)_{i\in \caI}$ for the returned caps
and enclosing parallelepipeds. 

For each $i\in \caI$, apply \cref{lem:cap-subset-sums} to
the cap $Q_i$ with enclosing parallelepiped $P_i$, and obtain a set $S_i$ such that 
\[f(\Sigma(X)\cap Q_i)\subseteq S_i \subseteq f(\Sigma(X)).\]
Then, by \cref{thm:subset-sum-cap-covering}, we get
$G\subseteq f(\Sigma(X))$ and
\[
 f(\Sigma(X))\setminus G
 \subseteq \bigcup_{i\in \caI} S_i.
\]
Therefore, we can return $f(\Sigma(X)) = G \cup \bigcup_{i\in \caI} S_i$ as the answer set.
Since $G$ is given as an AP decomposition, while $S_i$ are given explicitly, this answer set can be represented as an AP decomposition of the same common difference using the basic operations in 
\cref{lem:ap-operations}. 

\paragraph*{Time and output size.}
\cref{thm:subset-sum-cap-covering} runs in $\rho w_X(\log \Delta)^{\Oh_d(1)}$ time.

Each application of
\cref{lem:cap-subset-sums} takes $(\lvert\supp X\rvert + \lvert f((P_i-P_i)\cap \Gamma)\rvert)(\log \Delta)^{\Oh_d(1)}$ time.
We now bound their total time.
Let $B_i\coloneqq c_d^{-1}(P_i-P_i)$,
 where $c_d \in (0,1]$ is as in \cref{thm:subset-sum-cap-covering},
so
$\lvert f((P_i-P_i)\cap \Gamma)\rvert\le \lvert f(B_i\cap \Gamma)\rvert$.
Since $Q_i$ contains a translate of $c_d H$ by \cref{thm:subset-sum-cap-covering},
$2c_d H\subseteq Q_i-Q_i\subseteq P_i-P_i$, so $H\subseteq B_i$.
Then, since $f$ is injective on $\supp X$, 
$\lvert\supp X\rvert=\lvert f(\supp X)\rvert
\le \lvert f(B_i\cap \Gamma)\rvert$.
Thus each application of \cref{lem:cap-subset-sums} takes
$\lvert f(B_i\cap \Gamma)\rvert(\log \Delta)^{\Oh_d(1)}$ time.

Since $\Lambda=\Gamma\cap \ker f$,
$\lvert f(B_i\cap \Gamma)\rvert=\lvert\pi_\Lambda(B_i\cap \Gamma)\rvert$.
\cref{lem:quotient-lattice-point-bound} then gives
\[
 \lvert f(B_i\cap \Gamma)\rvert
 \le \Oh_d\left(\frac{\vol_d(\pi_\Lambda B_i)}{\det \Gamma}\right).
\]
$B_i$ is a translate of $(2/c_d) P_i$, so
\cref{lem:volumeineq} gives
$\vol_d(\pi_\Lambda B_i)\le \Oh_d(\vol_d(\pi_\Lambda P_i))$.
Summing and using the volume bound \eqref{eq:subset-sum-cap-volume} in \cref{thm:subset-sum-cap-covering}, we obtain
\begin{equation*}
 \sum_{i\in \caI}
 \lvert f(B_i\cap \Gamma)\rvert
 =\Oh_d(\rho w_X).
\end{equation*}
Thus, the applications of \cref{lem:cap-subset-sums} take total time $\rho w_X(\log \Delta)^{\Oh_d(1)}$.

We next bound the size of the returned AP decomposition of the answer set $f(\Sigma(X))$.
Since $S_i\subseteq f(\Sigma(X))$, \cref{thm:subset-sum-cap-covering} gives
$\bigl(\bigcup_i S_i\bigr)\setminus G\subseteq \bigcup_i f(Q_i\cap \Gamma)$.
Then, using $\lvert f(Q_i\cap \Gamma)\rvert\le \lvert f((P_i-P_i)\cap \Gamma)\rvert$ and the same bounds above, we similarly get $\sum_i\lvert f(Q_i\cap \Gamma)\rvert = \Oh_d(\rho w_X)$.
Since $G$ has an AP decomposition of size $\Oh_d(\rho w_X)$, this shows that the answer set $f(\Sigma(X)) = G \cup \bigcup_{i} S_i$ also has
an AP decomposition of size $\Oh_d(\rho w_X)$ of the same common difference, which can be efficiently computed using 
\cref{lem:ap-operations}.

Finally, since $f(\bfx)\in [1,\Delta]$ for $\bfx \in X$, the definition of $\rho$ and $w_X$ gives  $\rho w_X=(s \lambda/n) w_X\le s \lambda \Delta$.
The construction therefore takes $\tO_d(s \lambda \Delta)$ time and
returns an AP decomposition of size $\Oh_d(s \lambda \Delta)$.

\paragraph*{Witness recovery.}
Given $t\in f(\Sigma(X))$, if $t\in G$ (deep case), apply \cref{lem:cap-cover-witness} to recover a witness.
If $t\notin G$ (shallow case), then find $i\in \caI$ with $t\in S_i$ and use the witness-recovery step of \cref{lem:cap-subset-sums} to recover a witness.

\section{Merging Groups}
\label{sec:merging-groups}
In this section, we prove \cref{lem:merge}.
As in \cref{sec:one-group}, we apply the cap covering theorem (\cref{thm:subset-sum-cap-covering}) to the sumset, and obtain the deep sums together with caps that cover the shallow sums.
We use the caps to identify the relevant input pairs that can contribute to the shallow sums,  and then compute these sums by sparse convolution
(\cref{lem:sparse-convolution}).

\begin{proof}[Proof of \cref{lem:merge}]
Recall from the statement of
\cref{lem:merge} that $X', X$, and $Y$ are input multisets of vectors in $ \Gamma =  \caL(X) \subseteq \Z^d$.
For simplicity, assume $\rk \Gamma=d$ without loss of generality (otherwise, restrict to the subspace spanned by $\Gamma$).
Our task is to compute the sumset of 
$S_Y = f(\Sigma(Y)) \cap J$ and $S_X = f(\Sigma(X))$ restricted to the given interval $I$.

Write $w_X=\sum_{\bfx\in X} f(\bfx)$,  and write $J^+=J+[0,w_X]$ for the interval that contains the possible sums. 
Define the clipped zonotope
\[K=\caZ(V)\cap f^{-1}(J^+), \text{ where }V\coloneqq X\sqcup Y.\]
Then, clearly,
\[
 \caZ(X)+\bigl(\caZ(Y)\cap f^{-1}(J)\bigr)
 \subseteq K\subseteq \caZ(V). 
\]
Since $J\subseteq f(\caZ(Y))$ by assumption, $K$ contains a translate of $\caZ(X)$.

Write 
$H=\conv(\pm \supp V)$,
$s=\lvert X'\rvert$, and $D=sH$. 
We apply 
the cap covering \cref{thm:subset-sum-cap-covering}
to the multisets $X'\subseteq V$, the clipped zonotope $K$ and the interval $J^+$.
We first check the requirements of \cref{thm:subset-sum-cap-covering} are met:
\begin{itemize}
      \item The properness of $f(H\cap \Gamma)$ indeed holds by assumption. 
\item 
\Cref{eq:oneclass-generating} guarantees that $X'$ is $d$-generating for $H\cap \Gamma$.
  \item 
        Recall 
from \eqref{eq:oneclass-balance} that $H \subseteq \frac{\lambda}{\lvert X\rvert} \caZ_0(X)$.
Set \[\rho=s \lambda/\lvert X\rvert.\]
Then, $\caZ(X)$ contains a translate of
$\rho^{-1} D = \rho^{-1} sH$.
Since $K$ contains a translate of $\caZ(X)$, $K$ also contains a translate of $\rho^{-1} D$.
\item 
Taking convex hulls of the $d$-generating condition
 $\boldsymbol{\tau}+(dH\cap \Gamma)\subseteq \Sigma(X')$
 gives
$\boldsymbol{\tau}+dH\subseteq \caZ(X')\subseteq \caZ(X)$. So $K$ also contains a translate of $dH$.
\end{itemize}
Therefore  \cref{thm:subset-sum-cap-covering}
is applicable;
write $G$ for the returned set of deep subset sums, and
$Q_i=K\cap \{\bfx:\langle\bfu_i,\bfx\rangle\ge a_i\}$,
$i\in \caI$, for the returned caps, with enclosing parallelepipeds $P_i$.

\paragraph*{Useful summand pairs}
We now use the caps $Q_i$ to cover the useful summand pairs.

For each $i$,
compute $\bfq_i\in Q_i\cap \Gamma$  using
\cref{lem:constrained-optimization,lem:lattice-optimization}
(discard $i$ if such $\bfq_i$ does not exist).
For each $i$, 
pick $\bfx_i\in \caZ(X)$
that maximizes
$\langle\bfu_i,\bfx\rangle$, which can be computed by \cref{lem:zonotope-algorithms} and satisfies $\bfx_i \in \caZ(X) \cap \Gamma$.
Define $B_i=P_i-P_i$, and enumerate $U_i\coloneqq f(B_i\cap \Gamma)$ by
\cref{lem:quotient-lattice-enumeration}, and define the candidate summand lists
\begin{equation}
\label{eq:merge-candidate-lists}
\begin{aligned}
 C_{X,i}&=f(\bfx_i)+U_i,\\
 C_{Y,i}&=f(\bfq_i-\bfx_i)+U_i.
\end{aligned}
\end{equation}

\begin{claim}
\label{clm:merge-pair-coverage}
Let $\bfz_X\in \caZ(X)\cap \Gamma$ and
$\bfz_Y\in \caZ(Y)\cap \Gamma \cap f^{-1}(J)$.
If $f(\bfz_X+\bfz_Y)\notin G$, then for some $i$, it holds that
$ (f(\bfz_X),f(\bfz_Y))\in C_{X,i}\times C_{Y,i}$.
\end{claim}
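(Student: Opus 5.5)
Let $\bfz=\bfz_X+\bfz_Y$. The plan is to first locate $\bfz$ in one of the caps, and then use the argument sketched in the overview (the paragraph on merging groups) to locate the two summands.

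\emph{Step 1: $\bfz$ lies in a cap.} By construction, $\bfz_X\in\caZ(X)$ and $\bfz_Y\in\caZ(Y)\cap f^{-1}(J)$. Hence $\bfz\in\caZ(X)+(\caZ(Y)\cap f^{-1}(J))\subseteq K$. Since $\bfz_X,\bfz_Y\in\Gamma$, also $\bfz\in K\cap\Gamma$. Because $f(\bfz)\notin G$, \eqref{eq:cap-cover-shallow-points} of \cref{thm:subset-sum-cap-covering} gives an $i$ with $\bfz\in Q_i$. In particular $Q_i\cap\Gamma\neq\varnothing$, so this $i$ was not discarded, and $\bfq_i\in Q_i\cap\Gamma$ is defined.

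\emph{Step 2: replacing $\bfz_X$ by $\bfx_i$.} Consider $\bfz'=\bfx_i+\bfz_Y$.
\begin{itemize}
\item It lies in $\caZ(X)+(\caZ(Y)\cap f^{-1}(J))\subseteq K$.
\item By the maximality of $\langle\bfu_i,\bfx_i\rangle$ over $\caZ(X)$, we have $\langle\bfu_i,\bfz'\rangle\ge\langle\bfu_i,\bfz\rangle\ge a_i$.
\end{itemize}
Thus $\bfz'\in Q_i$. Now $\bfz,\bfz',\bfq_i$ all lie in $Q_i$, which gives
\[
\bfz_X-\bfx_i=\bfz-\bfz'\in Q_i-Q_i\subseteq P_i-P_i=B_i,
\]
and
\[
\bfz_Y-(\bfq_i-\bfx_i)=\bfz'-\bfq_i\in Q_i-Q_i\subseteq B_i.
\]

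\emph{Step 3: lattice membership and applying $f$.} We have $\bfx_i\in\caZ(X)\cap\Gamma$ (a vertex of $\caZ(X)$, hence a subset sum of $X$), and $\bfq_i,\bfz_X,\bfz_Y\in\Gamma$. So both differences above lie in $B_i\cap\Gamma$. Applying the linear map $f$ gives
\[
f(\bfz_X)\in f(\bfx_i)+f(B_i\cap\Gamma)=C_{X,i},
\qquad
f(\bfz_Y)\in f(\bfq_i-\bfx_i)+U_i=C_{Y,i}.
\]

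The only delicate point is the lattice membership in Step 3. It requires $\bfx_i\in\Gamma$, which holds because \cref{lem:zonotope-algorithms} returns a vertex, and $\bfz_Y\in\Gamma$, which is part of the hypothesis. The rest is the convexity argument from the overview.
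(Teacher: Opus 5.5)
Your proof is correct and follows the paper's argument essentially step for step. You locate $\bfz_X+\bfz_Y$ in some cap $Q_i$ via \eqref{eq:cap-cover-shallow-points}, note that $\bfx_i+\bfz_Y$ also lies in $Q_i$ because $\bfx_i$ maximizes $\langle\bfu_i,\cdot\rangle$ over $\caZ(X)$, and read off both differences as elements of $(Q_i-Q_i)\cap\Gamma\subseteq B_i\cap\Gamma$. You also check that $Q_i\cap\Gamma\neq\varnothing$, so this $i$ is not discarded and $\bfq_i$ exists; the paper leaves that detail implicit.
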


\begin{proof}
Write $\bfr=\bfz_X+\bfz_Y \in K\cap \Gamma$. 
Then, since $f(\bfr)\notin G$,  \cref{thm:subset-sum-cap-covering}  implies
$\bfr\in Q_i\subseteq P_i$ for some $i$.
Define $\bfp = \bfx_i+\bfz_Y\in \caZ(V)$. Since $f(\bfz_Y)\in J$, we have $\bfp\in K$ and $f(\bfp)\in J^+$ by definitions of $J^+$ and $K$.
Moreover,
$\langle\bfu_i,\bfp\rangle = \langle\bfu_i,\bfr\rangle + 
\langle\bfu_i,\bfx_i-\bfz_X\rangle
\ge \langle\bfu_i,\bfr\rangle$ by definition of $\bfx_i$.
Thus, by definition of $Q_i$, $\bfp\in Q_i\subseteq P_i$.
Since $P_i-P_i=B_i$, we obtain
\begin{align*}
 \bfz_Y-(\bfq_i-\bfx_i)&=\bfp-\bfq_i\in B_i,\\*
 \bfz_X-\bfx_i&=\bfr-\bfp\in B_i.
\end{align*}
Since both these vectors lie in $\Gamma$, applying $f$ gives
$f(\bfz_Y) -f(\bfq_i-\bfx_i)\in U_i$ and $f(\bfz_X)-f(\bfx_i)\in U_i$, proving
$(f(\bfz_X),f(\bfz_Y))\in C_{X,i}\times C_{Y,i}$.
\end{proof}

For each $i$, after computing the candidate summand lists \eqref{eq:merge-candidate-lists},
 compute 
$S_{X,i}\coloneqq C_{X,i}\cap S_X$ and
$S_{Y,i}\coloneqq C_{Y,i}\cap S_Y$ using the membership queries (\cref{lem:ap-operations}) supported by the given AP decompositions of $S_X$ and $S_Y$.
Then, compute $T_i=S_{X,i}+S_{Y,i}$ using 
sparse convolution (\cref{lem:sparse-convolution}).
Finally, let $S^+\coloneqq G\cup \bigcup_{i} T_i$, and return the answer set $S^+\cap I$ represented as an AP decomposition.
Since $G$ is given as an AP decomposition and the $T_i$ are explicit sets, the AP decomposition of the answer set can be efficiently 
constructed using the basic operations in \cref{lem:ap-operations}.

\paragraph*{Correctness.}
By construction, $T_i\subseteq S_X+S_Y\subseteq f(\Sigma(V))$.
Since $G\subseteq f(\Sigma(V))\cap J^+$
by \eqref{eq:cap-cover-deep-values}, this gives
$S^+\subseteq f(\Sigma(V))$.

Conversely, let $t\in f(\Sigma(V))\cap I$, 
and suppose $t$ is witnessed by $t = f(\bfz_X)+f(\bfz_Y)$ with $\bfz_X\in \Sigma(X)$ and $\bfz_Y\in \Sigma(Y)$.
The interval condition \eqref{eq:public-merge-restriction-condition}
gives $f(\bfz_Y)\in J$, so
$(f(\bfz_X),f(\bfz_Y))\in S_X\times S_Y$.
If $t\notin G$, \cref{clm:merge-pair-coverage} gives $i$ with
$(f(\bfz_X),f(\bfz_Y))\in S_{X,i}\times S_{Y,i}$, so $t\in T_i$.

Thus $S^+\cap I=(S_X+S_Y)\cap I=f(\Sigma(V))\cap I$.

\paragraph*{Time and output size.}
In \Cref{thm:subset-sum-cap-covering}, since $b_0=\Oh_d(\log \Delta)$, the time complexity of
\Cref{thm:subset-sum-cap-covering} is 
$\rho\lvert J^+\rvert(\log \Delta)^{\Oh_d(1)}$.

For each $i$, computing $\bfq_i,\bfx_i$ takes
$\lvert\supp V\rvert(\log \Delta)^{\Oh_d(1)}$ time by
\cref{lem:zonotope-algorithms,lem:constrained-optimization,lem:lattice-optimization}. 
Enumerating $U_i = f(B_i\cap \Gamma), C_{X,i},C_{Y,i}$ and  computing $T_i = S_{X,i}+S_{Y,i}$ take
$\lvert C_{X,i}+C_{Y,i}\rvert(\log \Delta)^{\Oh_d(1)}$ time by
\cref{lem:quotient-lattice-enumeration,lem:sparse-convolution,lem:ap-operations}.
Since both $C_{X,i}$ and $C_{Y,i}$ are translates of $U_i = f(B_i\cap \Gamma)$, we have
$\lvert C_{X,i}+C_{Y,i}\rvert \le \lvert f(2B_i\cap \Gamma)\rvert$.
The cap thickness guarantee in \cref{thm:subset-sum-cap-covering} gives
$2c_d H\subseteq Q_i-Q_i\subseteq B_i$, where $c_d\in (0,1]$,
so $\supp V\subseteq H\cap \Gamma\subseteq c_d^{-1} B_i\cap \Gamma$;
the properness of $f(H\cap \Gamma)$ and
 \cref{lem:lattice-quotient-dilation} then imply
\[
 \lvert\supp V\rvert
 \le \lvert f(c_d^{-1} B_i\cap \Gamma)\rvert
 =\Oh_d(\lvert f(B_i\cap \Gamma)\rvert).
\]
Thus, the total time for each $i$ is
$\lvert f(2B_i\cap \Gamma)\rvert(\log \Delta)^{\Oh_d(1)}$.

Let $\Lambda=\Gamma\cap \ker f$ as in \cref{thm:subset-sum-cap-covering}. Then, 
$\lvert f(2B_i\cap \Gamma)\rvert=\lvert\pi_\Lambda(2B_i\cap \Gamma)\rvert$.
Moreover, $\supp V$ spans $\R^d$, since $X\subseteq V$ and $\Gamma=\caL(X)$ has rank $d$.
By monotonicity and \cref{lem:quotient-lattice-point-bound} applied to $(2/c_d) B_i$,
\[
 \lvert f(2B_i\cap \Gamma)\rvert
 \le \lvert f((2/c_d) B_i\cap \Gamma)\rvert
 \le \Oh_d\left(\frac{\vol_d(\pi_\Lambda((2/c_d) B_i))}{\det \Gamma}\right) \le 
 \Oh_d\left(\frac{\vol_d(\pi_\Lambda P_i)}{\det \Gamma}\right),
\]
where the last step follows from $B_i = P_i-P_i$ and 
\cref{lem:volumeineq}.
Summing and using the volume bound \eqref{eq:subset-sum-cap-volume} in
\cref{thm:subset-sum-cap-covering}, we obtain
\begin{equation*}
 \sum_{i\in \caI}\lvert f(2B_i\cap \Gamma)\rvert
 =\Oh_d(\rho\lvert J^+\rvert).
\end{equation*}
Hence the total time complexity is $\rho\lvert J^+\rvert(\log \Delta)^{\Oh_d(1)}$ time.

By \cref{thm:subset-sum-cap-covering}, the returned set $G$ has an AP decomposition of size 
$\Oh_d(\rho\lvert J^+\rvert)$.
Since $T_i\subseteq C_{X,i}+C_{Y,i}$, the same bound from the previous paragraphs gives
$\sum_{i}\lvert T_i\rvert=\Oh_d(\rho\lvert J^+\rvert)$.
Therefore, the answer set
$\big (G\cup \bigcup_{i} T_i\big )\cap I$ also has 
an AP decomposition of the same common difference of size  
$\Oh_d(\rho\lvert J^+\rvert)$, which is 
computable within the same time bound by \cref{lem:ap-operations}.

Finally, since $f(\bfx)\in [1,\Delta]$ for $\bfx\in X$,
we have $w_X = \sum_{\bfx\in X} f(\bfx) \le \lvert X\rvert\Delta$, and therefore
\[
 \rho\lvert J^+\rvert=\rho(w_X+\lvert J\rvert)
 =\Oh_d\!\left(s \lambda
       \left(\Delta+\frac{\lvert J\rvert}{\lvert X\rvert}\right)\right).
\]
This gives the stated construction time and output size.

\paragraph*{Witness recovery.}

Given $t\in (S_X+S_Y)\cap I$,  if $t\notin G$ (shallow case), find $i$ and $a\in C_{X,i}$
such that $a\in S_X$ and $t-a\in S_Y$, and return the witness pair $(a,t-a)$.
If $t\in G$ (deep case), apply \cref{lem:cap-cover-witness}
to recover a witnessing submultiset $T\subseteq V$ with $\sum_{\bfv\in T} f(\bfv)=t$ (which also immediately gives the claimed witness pair).
\end{proof}

    \appendix
    \section{Proof of \texorpdfstring{\Cref*{lem:rw24}}{Lemma \getrefnumber{lem:rw24}}}
\label{app:freiman}

For a GAP $P = x_P + f_P(B_P\cap \Z^d)\subseteq \Z$, let $\maxbitlen(P)=\maxbitlen(x_P,f_P,B_P)$.

We prove \cref{lem:rw24} by extending
Randolph and W\k{e}grzycki's randomized algorithmic Freiman theorem \cite[Theorem~5 and full-version Appendix~A]{rw24}.
Besides derandomization, 
our extended version establishes two extra properties, namely $2$-properness, and that the vectors representing the input integers generate $\Z^d$. We first state and prove the two lemmas aimed for these properties.

\begin{lemma}
\label{lem:rankreduction}
Given a rank-$d$ GAP $P\subseteq \Z$ with $b_0=\maxbitlen(P)$,
in $\vol(P) b_0^{\Oh_d(1)}$ time,
one can either certify the 
$2$-properness of $P$, or compute a GAP $Q\supseteq P$ of rank $d'<d$ with
$\vol(Q)=\Oh_d(\vol(P))$ and ${\maxbitlen(Q)=\Oh_d(b_0)}$.
\end{lemma}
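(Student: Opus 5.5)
We would test $2$-properness by brute force and, on failure, run a standard rank-reduction step in the style of Tao--Vu. Write $P=x_P+f_P(B_P\cap\Z^d)$ with $B_P=\prod_i[a_i,b_i]$ and widths $w_i=b_i-a_i+1$. The GAP $2P$ is proper exactly when $f_P$ is injective on $2B_P\cap\Z^d$. This holds iff no nonzero $\bfv\in(2B_P-2B_P)\cap\Z^d$ has $f_P(\bfv)=0$, that is, iff $\ker f_P\cap\Z^d$ contains no nonzero vector with $\lvert v_i\rvert\le 2(w_i-1)$ for all $i$. The box $2B_P-2B_P$ contains $\Oh_d(\vol(P))$ integer points. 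We therefore enumerate them (or compute the multiset $f_P(2B_P\cap\Z^d)$ and sort) in $\vol(P)\,b_0^{\Oh_d(1)}$ time. If no collision is found, we certify $2$-properness. Otherwise we obtain a nonzero $\bfv\in\Z^d$ with $f_P(\bfv)=0$ and $\lvert v_i\rvert\le 2w_i$.

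Given such $\bfv$, we construct $Q$ of rank $d-1$. First divide $\bfv$ by $\gcd(v_1,\dots,v_d)$ so that it is primitive; the coordinate bound $\lvert v_i\rvert\le 2w_i$ is preserved. Pick an index $j$ with $v_j\neq0$. Consider the projection $\psi:\Z^d\to\Z^d/\Z\bfv\cong\Z^{d-1}$, which $f_P$ factors through as $f_P=g\circ\psi$ because $f_P(\bfv)=0$. For every $\bfu\in B_P\cap\Z^d$, subtract a suitable integer multiple $k\bfv$ so that the $j$-th coordinate lands in a window of length $\lvert v_j\rvert$. The result $\bfu-k\bfv$ has coordinates $u_i-kv_i$ with $\lvert k\rvert\le w_j/\lvert v_j\rvert+1$. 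Hence $\lvert kv_i\rvert\le (w_j/\lvert v_j\rvert+1)\cdot 2w_i$, which can be large relative to $w_i$ only in a controlled way. To avoid blow-up, we use the cleaner standard choice: take $j$ maximizing $\lvert v_j\rvert/w_j$ (ratio at most $2$). Then $\lvert k\rvert\le w_j/\lvert v_j\rvert+1$ and $\lvert kv_i\rvert\le (w_j/\lvert v_j\rvert+1)\lvert v_i\rvert\le w_i(w_j\lvert v_i\rvert/(w_i\lvert v_j\rvert))+2w_i\le 3w_i$, using $\lvert v_i\rvert/w_i\le\lvert v_j\rvert/w_j$.

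After the reduction, all points of $P$ are represented by vectors whose $j$-th coordinate lies in a window of length $\lvert v_j\rvert$ and whose other coordinates lie in $[a_i-3w_i,b_i+3w_i]$. Now eliminate coordinate $j$. If $\lvert v_j\rvert=1$, the $j$-th coordinate can be fixed to a constant, absorbed into the base point, giving a rank-$(d-1)$ GAP with volume at most $7^{d-1}\prod_{i\ne j}w_i\le 7^{d}\vol(P)$. If $\lvert v_j\rvert>1$, the window has $\lvert v_j\rvert\le 2w_j$ values. We then encode via the lattice $\Z^d/\Z\bfv$, choosing a basis of it (by \cref{lem:lattice-basis-hnf} or Smith-form style unimodular completion of the primitive $\bfv$). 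We take $Q$ to be the image of the enlarged box under $g$ in that basis, bounding the image box by a box of volume $\Oh_d(\prod_i w_i)$.

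The main obstacle is the last step: making the quotient box axis-aligned with volume $\Oh_d(\vol(P))$ when $\lvert v_j\rvert>1$. The fallback is to use \cref{lem:latticeboxinandout} on the projected convex body $\psi(7B_P)$ in $\R^{d-1}$ with lattice $\psi(\Z^d)$. This yields a basis and box $B'$ containing the projected body and contained in $\kappa_{d-1}$ times it. \cref{lem:classify-lattice-section}/\cref{lem:lattice-quotient-dilation}-type counting then bounds $\lvert B'\cap\Z^{d-1}\rvert$ by $\Oh_d$ of the number of $\psi(\Z^d)$-points in $\psi(7B_P)$, which is at most $\Oh_d(\vol(P))$. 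Setting $Q=x_P+g(B'\cap\Z^{d-1})$ gives $Q\supseteq P$, rank $d-1$, volume $\Oh_d(\vol(P))$, and bit length $\Oh_d(b_0)$, all computed in $b_0^{\Oh_d(1)}$ time beyond the enumeration.
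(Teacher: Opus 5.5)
Your argument works, but your final construction takes a different route from the paper. The one claim that carries the volume bound is asserted rather than proved.

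\textbf{The missing counting argument.} After recentering, you need $\lvert\kappa_{d-1}\psi(K)\cap\psi(\Z^d)\rvert=\Oh_d(\vol(P))$. For the recentering, \cref{lem:latticeboxinandout} needs an origin-symmetric body, so use $K=\prod_i[-w_i,w_i]$ and write $P=x_P+f_P(\bfa)+f_P(\prod_i[0,w_i-1]\cap\Z^d)$ instead of using $\psi(7B_P)$.

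This bound is false for projection along an arbitrary kernel vector, because $\psi(\Z^d)$ gets denser as $\bfv$ gets longer. It holds here only because $\bfv$ is short, and the proof should say so:
\begin{itemize}
\item If $y=\psi(\bfz)=\psi(\bfx)$ with $\bfz\in\Z^d$ and $\bfx\in\kappa_{d-1}K$, then $\bfz-\bfx=t\bfv$ for some real $t$.
\item Rounding $t$ to the nearest integer $k$ gives $\bfz-k\bfv\in\Z^d\cap\bigl(\kappa_{d-1}K+[-\bfv/2,\bfv/2]\bigr)\subseteq\Z^d\cap(\kappa_{d-1}+1)K$.
\item The last containment uses $\lvert v_i\rvert\le 2w_i$, which gives $\bfv/2\in K$.
\end{itemize}
Hence $\lvert B'\cap\Z^{d-1}\rvert\le\lvert(\kappa_{d-1}+1)K\cap\Z^d\rvert=\Oh_d(\vol(P))$.

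Your slab-reduction paragraph does not supply this. It shows that points of $P$ have representatives in a bounded region, which is the $Q\supseteq P$ direction, not the counting bound. Once you use the projection together with \cref{lem:latticeboxinandout}, you can drop that paragraph, the $\lvert v_j\rvert=1$ case, and the unimodular-completion attempt.

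\textbf{Comparison with the paper.} The paper never searches for a collision. It applies \cref{lem:local-properization} to $\prod_i[-w_i,w_i]$, $\Z^d$ and $f_P$, and reads off the dichotomy from the output dimension $d'$:
\begin{itemize}
\item If $d'=d$, then $\Phi$ is invertible, and injectivity of $g$ on $4C\cap\Z^d$ gives $2$-properness.
\item Otherwise it returns $Q=x_P+g(C\cap\Z^{d'})$, with $\lvert C\cap\Z^{d'}\rvert=\Oh_d(\lvert f_P(B\cap\Z^d)\rvert)$.
\end{itemize}
That lemma finds short kernel vectors with lattice algorithms. It quotients by the stabilized span of $\Lambda\cap a^jB$, where $\Lambda=\Z^d\cap\ker f_P$, possibly dropping several dimensions at once. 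This also makes the new map injective on a dilate of the box, which is more than the present lemma needs. However, it is the same tool the paper already relies on in \cref{lem:reducegap,lem:cap-subset-sums}, and it runs in $b_0^{\Oh_d(1)}$ time.

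Your route is more elementary and self-contained. It spends the permitted $\vol(P)$ time on brute-force collision detection and needs only one short kernel vector. It can ignore properness of $Q$ entirely, because the statement only asks for $Q\supseteq P$ of smaller rank and comparable volume.
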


\begin{lemma}
 \label{lem:reducegap}
Let $P\subseteq \Z$ be a rank-$d$ GAP, $0\in A\subseteq P$, and
$b_0=\maxbitlen(P)$.
Given $A$ and $P$, one can compute in $\vol(P) b_0^{\Oh_d(1)}$ time
an origin-symmetric $4$-proper GAP
$Q\supseteq A$ of rank $d'\le d$ with $\vol(Q)=\Oh_d(\vol(P))$,
$\maxbitlen(Q)=\Oh_d(b_0)$, and $\caL(f_Q^{-1}(A))=\Z^{d'}$.
\end{lemma}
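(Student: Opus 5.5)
The plan is to prove \cref{lem:reducegap} by an iterative rank-reduction loop driven by \cref{lem:rankreduction}, preceded and interleaved with a lattice-restriction step that enforces $\caL(f_Q^{-1}(A))=\Z^{d'}$. Since each step that fails to terminate drops the rank by at least one, there are at most $d$ rounds, each multiplying the volume by $\Oh_d(1)$ and the bit length by an $\Oh_d(1)$ factor. This gives the final bounds $\vol(Q)=\Oh_d(\vol(P))$ and $\maxbitlen(Q)=\Oh_d(b_0)$, and a running time of $\vol(P)\,b_0^{\Oh_d(1)}$.

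\textbf{Step 1 (centering and symmetrizing).} Since $0\in A\subseteq P$, replace $P$ by $P-P$. This is an origin-symmetric GAP containing $A$, with volume at most $2^d\vol(P)$ by \cref{prop:growgap}. Its representation is obtained directly from that of $P$: the box becomes $[a_i-b_i,b_i-a_i]$ and $f$ is unchanged.

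\textbf{Step 2 (properness via \cref{lem:rankreduction}).} To obtain $4$-properness rather than $2$-properness, apply \cref{lem:rankreduction} to the GAP $2P$. If it certifies that $2P$ is $2$-proper, then $P$ is $4$-proper. Otherwise it returns a GAP of smaller rank containing $2P\supseteq P$, with volume $\Oh_d(\vol(P))$. In that case, re-symmetrize as in Step 1 and recurse on the lower-rank GAP.

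There is a subtlety in recovering a GAP containing $P$ itself while shrinking its box back to roughly the original scale. Halving the box of a GAP $Q\supseteq 2P$ need not contain $P$. I would therefore avoid halving and simply recurse with the containing GAP, accepting a constant volume blowup per round. This costs a factor of $\Oh_d(1)^d=\Oh_d(1)$, which is acceptable. Once properness is certified, $f_Q^{-1}$ can be computed on $A$ by enumerating $B_Q\cap\Z^{d'}$ in $\vol(Q)\,b_0^{\Oh_d(1)}$ time.

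\textbf{Step 3 (generation).} Let $L=\caL(f_Q^{-1}(A))\subseteq\Z^{d'}$, computed in HNF by \cref{lem:lattice-basis-hnf}. I would handle two cases.

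If $\rk L<d'$, restrict to $\spa L$. By \cref{lem:latticeboxinandout}, applied to $B_Q\cap\spa L$ and the lattice $\Z^{d'}\cap\spa L$, we obtain a lower-rank box representation. The new progression is $f_Q\circ T(B'\cap\Z^{r})$, where the box $B'$ is sandwiched between $B_Q\cap\spa L$ and $\kappa\,(B_Q\cap\spa L)$. It contains $A$, and its volume is $\Oh_d(|B_Q\cap\Z^{d'}|)$ by \cref{lem:classify-lattice-section} and \cref{lem:lattice-quotient-dilation}. Return to Step 2 with this lower rank.

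If $\rk L=d'$ but $L\ne\Z^{d'}$, reparametrize by a basis of $L$, again via \cref{lem:latticeboxinandout} applied to $B_Q$ and $L$. The lattice points of $L$ in $B_Q$ become $\Z^{d'}$-points of a box of comparable size, so the volume does not grow beyond a constant. Properness is preserved, up to the constant $\kappa_{d'}$ dilation, because $L\subseteq\Z^{d'}$. To keep $4$-properness despite that dilation, run Step 2 with a larger properness target. For example, invoke \cref{lem:rankreduction} on $4\kappa P$ so that $P$ becomes $8\kappa$-proper, which survives the $\kappa$-dilation. After this step, $\caL(f_Q^{-1}(A))=\Z^{d'}$.

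\textbf{Main obstacle.} The hardest part will be the bookkeeping in Step 3. Changing the parametrization can break properness at the needed scale, so I must choose properness targets large enough up front, as $\Oh_d(1)$ multiples, that all subsequent $\Oh_d(1)$ dilations from \cref{lem:latticeboxinandout} stay within the certified proper range. I also need to ensure that every loop iteration either terminates or strictly decreases the rank, so that the total number of rounds stays at most $d$.
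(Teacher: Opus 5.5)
Your argument is correct, but it takes a more roundabout route than the paper's.

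\textbf{The paper's route.} The paper does not iterate. It lifts each $a\in A$ to a coefficient vector $\bfu_a$ in the box of $P$, sets $\tilde A=\{\bfu_a-\bfu_0:a\in A\}$ and $\Gamma=\caL(\tilde A)$, and applies \cref{lem:local-properization} once, to the box restricted to $\spa(\Gamma)$, the lattice $\Gamma$, and $f_P$. Because the lattice fed into that lemma is already the one generated by the preimages of $A$, all required properties come out of the single application:
\begin{itemize}
\item $\Phi\Gamma=\Z^{d'}$ gives $\caL(f_Q^{-1}(A))=\Z^{d'}$;
\item injectivity of $g$ on $4C\cap\Z^{d'}$, together with the symmetric box $C$, gives an origin-symmetric $4$-proper GAP;
\item the bound $\lvert C\cap\Z^{d'}\rvert\le C_d\lvert f_P(B\cap\Gamma)\rvert$ gives $\vol(Q)=\Oh_d(\vol(P))$.
\end{itemize}
The rank reduction you perform round by round happens inside \cref{lem:local-properization}, through the stabilizing chain of spans $V_j$.

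\textbf{Your route.} You use \cref{lem:rankreduction}, itself a wrapper around \cref{lem:local-properization} for the lattice $\Z^d$, as a black box for properness. You then repair generation afterwards with \cref{lem:latticeboxinandout}. That order is what forces the inflated properness target $c=\Theta(\kappa)$ up front and the restart whenever $\rk L$ drops. Your bookkeeping does close:
\begin{itemize}
\item every round either terminates or strictly lowers the rank, so there are $\Oh(d)$ rounds, each costing an $\Oh_d(1)$ factor in volume and bit length;
\item in the full-rank case, $T(4B'\cap\Z^{d'})\subseteq 4\kappa B_Q\cap\Z^{d'}$, so the reparametrized GAP is $4$-proper;
\item hence its preimages of $A$ are exactly $T^{-1}f_Q^{-1}(A)$, which generate $T^{-1}L=\Z^{d'}$.
\end{itemize}
Treat $A=\{0\}$ separately by returning the rank-zero GAP, as the paper does, since then $\rk L=0$ and \cref{lem:latticeboxinandout} does not apply.

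\textbf{Comparison.} Your approach gets by with already-packaged wrappers plus the John-type lemma, at the cost of the loop and dilation bookkeeping. The paper's choice of $\Gamma=\caL(\tilde A)$ makes generation automatic and needs no iteration, no re-symmetrization, and no properness margin beyond the built-in factor $4$.
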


\begin{proof}[Proof of \cref{lem:rankreduction}]
Write $P=x_P+f_P(\Z^d\cap \prod_i[0,w_i])$ and let
$B=\prod_i[-w_i,w_i]$. We may assume that $f_P$ is nonzero.
Apply \cref{lem:local-properization} to $B$, $\Z^d$, and $f_P$.
This gives rational linear maps $\Phi$ and $g$ and an integer box
$C=\prod_{i=1}^{d'}[-m_i,m_i]$, where $1\le d'\le d$ and $m_i\in\Pos$.
We have $f_P=g\circ\Phi$, $\Phi\Z^d=\Z^{d'}$, and $\Phi B\subseteq C$,
and $g$ is injective on $4C\cap\Z^{d'}$.

If $d'=d$, then $\Phi$ is invertible, so $f_P$ is injective on
$B\cap\Z^d$. Since $B$ is an integer translate of
$\prod_i[0,2w_i]$, this proves that $P$ is $2$-proper.
If $d'<d$, return $Q=x_P+g(C\cap\Z^{d'})$.
The containments $\prod_i[0,w_i]\subseteq B$ and $\Phi B\subseteq C$
give $P\subseteq Q$. By the bound on $C$ in \cref{lem:local-properization},
\[
 \vol(Q)
 =\lvert C\cap\Z^{d'}\rvert
 \le \Oh_d\bigl(\lvert f_P(B\cap\Z^d)\rvert\bigr)
 \le \Oh_d(\vol(P)),
\]
where the last inequality uses $\prod_i(2w_i+1)\le 2^d\vol(P)$.
The running time and bit length bounds follow from
\cref{lem:local-properization}.
\end{proof}

\begin{proof}[Proof of \cref{lem:reducegap}]
If $A=\{0\}$, return the rank-zero GAP $\{0\}$.
Otherwise, write $P=x_P+f_P(\Z^d\cap\prod_i[0,w_i])$.
For each $a\in A$, choose $\bfu_a\in\Z^d\cap\prod_i[0,w_i]$
with $x_P+f_P(\bfu_a)=a$.
Let $\tilde A=\{\bfu_a-\bfu_0:a\in A\}$, so that $f_P(\tilde A)=A$,
and let $\Gamma=\caL(\tilde A)$.
Then $B=\prod_i[-w_i,w_i]\cap\spa(\Gamma)$ contains $\tilde A$,
so $B\cap\Gamma$ spans $\spa(\Gamma)$.
Since $A\ne\{0\}$, $f_P$ is nonzero on this subspace.

Choose a basis of $\Gamma$ and apply \cref{lem:local-properization}
in this basis to obtain $\Phi$, $g$, and an integer box $C\subseteq\R^{d'}$.
Set $Q=g(C\cap\Z^{d'})$.
The containments $\Phi\tilde A\subseteq\Phi B\subseteq C$ and
$f_P|_{\spa(\Gamma)}=g\circ\Phi$ give $A\subseteq Q$.
Since $C$ is origin-symmetric and $g$ is injective on $4C\cap\Z^{d'}$,
$Q$ is origin-symmetric and $4$-proper. Moreover,
\[
 \caL(f_Q^{-1}(A))
 =\caL(\Phi\tilde A)=\Phi\Gamma=\Z^{d'}.
\]
The bound on $C$ in \cref{lem:local-properization} gives
\[
 \vol(Q)
 =\lvert C\cap\Z^{d'}\rvert
 \le \Oh_d\bigl(\lvert f_P(B\cap\Gamma)\rvert\bigr)
 \le \Oh_d(\vol(P)),
\]
since $\Gamma\subseteq\Z^d$ and $\prod_i(2w_i+1)\le 2^d\vol(P)$.

Enumerating the coefficient box of $P$ gives $\bfu_a$ for each $a\in A$
in $\vol(P)b_0^{\Oh_d(1)}$ time.
\Cref{lem:local-properization}
then gives $\maxbitlen(Q)=\Oh_d(b_0)$ within the same time bound.
\end{proof}

\begin{proof}[Proof of \cref{lem:rw24}]
The singleton case is immediate, so assume $\lvert A\rvert\ge 2$.
The Pl\"unnecke--Ruzsa inequality \cite[Corollaries~6.28--6.29]{taovu06} gives
$\lvert rA-sA\rvert\le K^{r+s}\lvert A\rvert$ for nonnegative integers $r,s$
with $r+s>0$.
For fixed $r,s$, we can compute $rA-sA$ in
$\Oh_K(\lvert A\rvert)\polylog\Delta$ time by sparse nonnegative convolution
(\cref{lem:sparse-convolution}).

We derandomize the modeling step of \cite[Lemma~A.2]{rw24}
by conditional expectations.
Let $A_{\mathrm{diff}}=8A-8A$.
 Choose a prime $m$ with
$8\lvert A_{\mathrm{diff}}\rvert<m<16\lvert A_{\mathrm{diff}}\rvert$,
and let $q=2^L$ be the least power of two exceeding
$\max_{x\in A_{\mathrm{diff}}}\lvert x\rvert$.
For a positive integer $n$, write $[z]_n$ for the representative of
$z\bmod n$ in $[-n/2,n/2)$, and define $\phi_\lambda(x)=[\lambda x]_q$.
We choose an odd $\lambda\bmod q$ such that, for every
$x\in A_{\mathrm{diff}}\setminus\{0\}$,
\begin{equation}
 m\nmid\phi_\lambda(x).
 \label{eq:rw-deterministic-avoidance}
\end{equation}
Write $x=2^vu$ with $u$ odd, and let $q'=2^{L-v}$.
Since $\phi_\lambda(x)=2^v[\lambda u]_{q'}$ and $m$ is odd,
$m\mid\phi_\lambda(x)$ if and only if $m\mid[\lambda u]_{q'}$.
For uniformly random odd $\lambda\bmod q$, the residue $[\lambda u]_{q'}$
is uniform among the odd residues modulo $q'$, so
$\Pr[m\mid\phi_\lambda(x)]<2/m$.
Let $N(\lambda)$ count the nonzero $x\in A_{\mathrm{diff}}$ for which
$m\mid\phi_\lambda(x)$. Then
\[
 \mathbb E_\lambda N(\lambda)
 <\frac{2\lvert A_{\mathrm{diff}}\rvert}{m}<\frac14.
\]

We fix the bits of $\lambda$ from the lowest to the highest.
At each step, choose the next bit to minimize the conditional expectation
of $N(\lambda)$.
After the lowest $t\ge 1$ bits have been fixed to $\lambda_0$, let
$h=\min\{t,L-v\}$ for each $x=2^vu$. Then
\[
 \Pr\bigl[m\mid\phi_\lambda(x)\bigm|\lambda\equiv\lambda_0\pmod{2^t}\bigr]
 =\frac{\lvert\{z\in\Z\cap[-q'/2,q'/2):
       m\mid z,\ z\equiv\lambda_0u\pmod{2^h}\}\rvert}{q'/2^h}.
\]
To count the numerator, write $z=mj$.
Since $m$ is odd, the congruence fixes one residue class of $j$ modulo
$2^h$, whose representatives in $-q'/2\le mj<q'/2$ can be counted by
integer division.
Thus we can compute the conditional expectations for the two choices of
the next bit in $\Oh(\lvert A_{\mathrm{diff}}\rvert)\polylog\Delta$ time.
Throughout, the expectation never increases, so after all bits are fixed,
$N(\lambda)=0$ and \eqref{eq:rw-deterministic-avoidance} holds.

Partition $[-q/2,q/2)$ into $16$ equal half-open intervals.
Choose an interval containing the most values $\phi_\lambda(a)$ for $a\in A$,
and let $A'$ consist of the corresponding elements of $A$.
Then $\lvert A'\rvert\ge\lvert A\rvert/16$.
Define $\psi\colon\Z\to\Z_m$ by $\psi(x)=\phi_\lambda(x)\bmod m$.
We claim that $\psi$ is a Freiman $8$-isomorphism on $A'$.
For two such sums, let $x\in A_{\mathrm{diff}}$ be their difference.
The $16$ values $\phi_\lambda(a)$ lie in one interval of length $q/16$,
so the difference of their sums has absolute value below $q/2$ and equals
$\phi_\lambda(x)$.
The sums of their $\psi$-images therefore agree exactly when
$m\mid\phi_\lambda(x)$, which by \eqref{eq:rw-deterministic-avoidance}
is equivalent to $x=0$.

Let $B=\psi(A')$ and $\alpha=\lvert B\rvert/m=\Omega_K(1)$.
For $R\subseteq\Z_m$ and $\varepsilon>0$, define
\[
 \operatorname{Bohr}_m(R,\varepsilon)
 =\{x\in\Z_m:\lVert rx/m\rVert_{\R/\Z}\le\varepsilon
       \text{ for every }r\in R\},
\]
where $\lVert\cdot\rVert_{\R/\Z}$ denotes distance to the nearest integer.
By \cite[Lemma~A.3]{rw24}, we can compute in $\tO(m)$ time a set
$R\subseteq\Z_m$ with $\lvert R\rvert<\alpha^{-2}=\Oh_K(1)$ such that
\[
 \operatorname{Bohr}_m(R,1/4)\subseteq 2B-2B.
\]

If $R\ne\varnothing$, \cite[Lemma~A.4]{rw24} computes a proper GAP
$P\subseteq\operatorname{Bohr}_m(R,1/4)$ of rank $\Oh_K(1)$ and size
$\Omega_K(m)$ in $\tO_K(m)$ time.
If $R=\varnothing$, take $P$ to be a centered interval of $m$ residues in $\Z_m$.
The Freiman $8$-isomorphism on $A'$ induces a Freiman $2$-isomorphism
$\psi\colon 2A'-2A'\to 2B-2B$.
Thus $Q=\psi^{-1}(P)\subseteq 2A'-2A'$ is a proper GAP with the same
rank and size as $P$.
Enumerating $2A'-2A'$ and tabulating $\psi$ computes $Q$ in
$\Oh_K(\lvert A\rvert)\polylog\Delta$ time.

Since
\[
 \lvert Q+A\rvert\le\lvert3A-2A\rvert
 =\Oh_K(\lvert A\rvert)=\Oh_K(\lvert Q\rvert),
\]
\cite[Lemma~A.5]{rw24} gives $X\subseteq A$ with $\lvert X\rvert=\Oh_K(1)$
and $A\subseteq Q-Q+X$.
Its running time is $\Oh_K(\lvert A\rvert)\polylog\Delta$, since
$Q-Q+A\subseteq5A-4A$.
Adjoin each $x\in X$ to $Q-Q$ as a direction with coefficient in $\{0,1\}$.
This gives a GAP $P_0\supseteq A$ of rank $\Oh_K(1)$,
volume $\Oh_K(\lvert A\rvert)$, and
$\maxbitlen(P_0)=\Oh_K(\log\Delta)$.
Indeed, doubling the coefficient widths of $Q$ and adjoining
$\lvert X\rvert=\Oh_K(1)$ binary directions multiply its volume by $\Oh_K(1)$.

Apply \cref{lem:reducegap} to $(P_0,A)$ to obtain an origin-symmetric
$4$-proper GAP $Q\supseteq A$ of rank $d=\Oh_K(1)$, and return $P=2Q$.
Then $P$ is origin-symmetric and $2$-proper and contains $A+A$.
Since $Q\subseteq P$ and $f_P=f_Q$, properness of $P$ gives
$f_P^{-1}(a)=f_Q^{-1}(a)$ for every $a\in A$.
Thus $\caL(f_P^{-1}(A))=\Z^d$.
By \cref{prop:growgap,lem:reducegap},
$\vol(P)\le 2^d\vol(Q)=\Oh_K(\lvert A\rvert)$.
The total running time is $\Oh_K(\lvert A\rvert)\polylog\Delta$.
\end{proof}

    \section{Proof of \texorpdfstring{\Cref*{thm:algorithmiccfp-highdim}}{Theorem \getrefnumber{thm:algorithmiccfp-highdim}}}
\label{app:vector-cfp-proof}

We deduce \cref{thm:algorithmiccfp-highdim} from
\cref{thm:algorithmiccfp} by encoding vectors as integers. 

\begin{proof}[Proof of \cref{thm:algorithmiccfp-highdim}]
We follow the reduction in \cite[Appendix]{pham2025sharp}.
We will apply \Cref{thm:algorithmiccfp} with exponent
$\beta_0\coloneqq 2 \beta k$ in place of $\beta$; let positive integers $C_0,d = \Oh_{\beta_0}(1) = \Oh_{\beta,k}(1)$ be the corresponding constants in \Cref{thm:algorithmiccfp}.

Let $H=2\lceil\Delta^2\rceil+1$,  and define
\[
\psi(a_1,\dots,a_k)=\sum_{j=1}^{k} H^{j-1} a_j,
\]
which is a linear map injective on $[-\Delta^2,\Delta^2]^k\cap \Z^k$.
Write $\psi^{-1}$ for the inverse of $\psi$ on the domain
$\psi([-\Delta^2,\Delta^2]^k\cap \Z^k)$, which 
can be easily computed in $\Oh_{\beta,k}(\polylog \Delta)$ time.

Then, $\psi(A)$ is a multiset of nonzero integers in
$[-\Delta_0,\Delta_0]$, where
\[
\Delta_0\coloneqq kH^{k-1} \Delta\ll_k \Delta^{2k-1}.
\]
By setting the constant
$C=C(\beta,k)$ in
\cref{thm:algorithmiccfp-highdim}
sufficiently large, we have
$\Delta_0\le \Delta^{2k}$,  and $2C_0 s \Delta\le 2C_0 \Delta^2/C\le \Delta^2$.
Hence $\Delta_0^{1/\beta_0} \le \Delta^{2k/\beta_0}=\Delta^{1/\beta}\le s$, and
\Cref{thm:algorithmiccfp} applies to $\psi(A)$ with parameters $\beta_0,s$, and $\Delta_0$.

\Cref{thm:algorithmiccfp} returns $A_0'\subseteq \hat A_0\subseteq \psi(A)$
with $\lvert A_0'\rvert\le C_0 s$, and an $s$-proper GAP $P_0\subseteq \Z$ of
rank $m\le d$. Write
$P_0=f_0(B\cap \Z^m)\subseteq \Z$, with $f_0$ linear 
and
$B=\prod_{i=1}^m[-w_i,w_i]$ with $w_i\in \Pos$.
Moreover, $\supp(\hat A_0)\subseteq P_0$ and $A_0'$ is $s$-generating
for $P_0$.

By injectivity, we have
$\hat A\coloneqq \psi^{-1}(\hat A_0)\subseteq A$ and $ A'\coloneqq \psi^{-1}(A_0')\subseteq \hat A$.

\paragraph*{Defining the lifted GAP.}
Since $A'_0$ is $s$-generating for $P_0$  (\cref{def:generating}), there is
$\boldsymbol{\tau}\in \Z^m$ such that, with $Q\coloneqq (\boldsymbol{\tau}+sB)\cap \Z^m$,
$f_0(Q)\subseteq \Sigma(A_0')=\psi(\Sigma(A'))$.
Recall $\psi$ is injective on $[-\Delta^2,\Delta^2]^k\cap \Z^k$ and $\Delta^2 \ge 2C_0 s \Delta$.
Since $\Sigma(A')\subseteq [-C_0 s \Delta,C_0 s \Delta]^k$, for each
$\bfx\in Q$ we have
$F(\bfx)\coloneqq \psi^{-1}(f_0(\bfx))\in \Sigma(A')$.
Then, for $\bfx,\bfx+\bfe_i\in Q$, 
$\psi\bigl(F(\bfx+\bfe_i)-F(\bfx)\bigr)=f_0(\bfe_i)$.
Since $\|F(\bfx+\bfe_i)-F(\bfx)\|_\infty\le 2C_0 s \Delta\le \Delta^2$,
injectivity of $\psi$ gives
\[F(\bfx+\bfe_i)-F(\bfx)=\psi^{-1}(f_0(\bfe_i)) \eqqcolon \bfg_i\]
for $\bfx,\bfx+\bfe_i\in Q$.
Compute $\bfg_i$ for $i\in [m]$ and
define $f\colon \R^m\to \R^k$ by $f(\bfz)=\sum_i z_i \bfg_i$.
By definition, we have 
 $f_0=\psi \circ f$.
We output the origin-symmetric GAP $P\coloneqq f(B\cap \Z^m)$. It remains to verify that $P$ satisfies the claimed properties.

By $s$-properness of $P_0$, 
$f_0$ is injective on $(sB)\cap \Z^m$.
Since $f_0=\psi \circ f$, this means 
$P$ is $s$-proper.

For any $\bfx\in (sB)\cap \Z^m$,
inductively applying the displayed equality above yields \[f(\bfx) = F(\boldsymbol{\tau}+\bfx)-F(\boldsymbol{\tau}).\] In particular,  $\|f(\bfx)\|_{\infty} \le 2C_0 s \Delta$.
Fix $a_0\in \hat A_0$. Let $\bfa=\psi^{-1}(a_0)$ and
$\bfz=f_0^{-1}(a_0)\in B\cap \Z^m$. Then
$\psi(f(\bfz))=a_0=\psi(\bfa)$, and both
$f(\bfz)$ and $\bfa$ lie in
$[-\Delta^2,\Delta^2]^k$. Injectivity gives $f(\bfz)=\bfa$, so
$\supp(\hat A)\subseteq P$.

Recall $A'_0$ is $s$-generating for $P_0$, i.e., $f_0^{-1}(A_0')$ is $s$-generating for
$B\cap \Z^m$. 
Applying $f$ therefore shows that
$f(f_0^{-1}(A_0'))=\psi^{-1}(A_0')=A'$ is $s$-generating for $P$.

Since $\lvert A'\rvert=\lvert A_0'\rvert$ and $\lvert\hat A\rvert=\lvert\hat A_0\rvert$, the bounds in 
    \cref{thm:algorithmiccfp} give $\lvert A'\rvert \le C_0 s$ and $\lvert A\rvert- \lvert \hat A\rvert \le C_0 s \log \Delta_0$.
 Since $\log \Delta_0\ll_{\beta,k} \log \Delta$, enlarging $C=C(\beta,k)$ gives 
$\lvert A'\rvert \le Cs$ and $\lvert A\rvert- \lvert \hat A\rvert \le C s \log \Delta$ as required.

\paragraph*{Time and coordinate bounds.}
Since
$s(\supp(\psi(A))\cup \{0\})
 =\psi\bigl(s(\supp(A)\cup \{\zero\})\bigr)$,
the application of \cref{thm:algorithmiccfp} and the remaining operations take
$\tO_{\beta,k}(s\lvert s(\supp(A)\cup \{\zero\})\rvert)$ time.

For every $i\in [m]$, we have
$2sw_i \bfg_i=F(\boldsymbol{\tau}+sw_i \bfe_i)-F(\boldsymbol{\tau}-sw_i \bfe_i)$ by the previous displayed equality.
Thus $w_i\|\bfg_i\|_\infty\le C_0 \Delta$.
Hence, $w_i\le C_0 \Delta$ since $\bfg_i\neq \zero$.

Denote $U_0=\supp(\hat A_0)\cup \{0\}$. 
By $s$-properness, $s$-generation, $\lvert A_0'\rvert\le C_0 s$, and
\cref{lem:cfp-large-sumsets},
\[
 \lvert sP\rvert=\lvert sP_0\rvert
 \le \lvert\Sigma(A_0')\rvert\le \lvert C_0 sU_0\rvert
 \ll_{\beta,k}\lvert sU_0\rvert = \lvert \psi(s(\supp(\hat A)\cup \{\zero\}))\rvert,
\]
proving the required bound on $\lvert sP\rvert$ (after enlarging $C = C(\beta,k)$).
\end{proof}

    \section{Proof of \texorpdfstring{\Cref*{lem:box-subset-sums}}{Lemma \getrefnumber{lem:box-subset-sums}}}
\label{app:box-subset-sums}

\begin{proof}[Proof of \cref{lem:box-subset-sums}]
  Let $X$ be the input multiset with $\supp(X)\subseteq B \cap \Z^d$ where $B = \prod_{j=1}^d[0,t_j]$. We may assume $X$ does not contain $\zero$, and has maximum multiplicity at most $\max_{j=1}^d t_j$.
 First, by a standard binary bundling argument (see e.g., \cite{KoiliarisX19}), we can reduce to an equivalent instance with size $n = \lvert X\rvert = \tO_d(\lvert B\cap \Z^d\rvert)$. This is done by going over each input item $\bfx \in \supp(X)$ of multiplicity $\mu(\bfx)$,  choosing a logarithmic number of multiplicities $m_1,\dots,m_k$ such that $\Sigma(\{m_1,\dots,m_k\})= \{0,1,\dots,\mu(\bfx)\}$, and creating new items $(m_i \bfx)_i$.
 Denote the new equivalent input instance still by $X$, and write $X = \{\bfx_1,\dots,\bfx_n\}$.

 We now encode the input vectors as integers.
Let $p_1,\ldots,p_d$ be distinct fixed primes, and let $q_j$ be the
smallest power of $p_j$ at least $d(t_j+1)$.
Define
\[
 P=\prod_{j=1}^d q_j,
\,\,\, E(\bfa)=\sum_{j=1}^d a_j \frac{P}{q_j},\,\,\, \text{ and } 
 H=E(\bft).
\]
Then $P=\Oh_d(\prod_j(t_j+1))$ and $H/P = \sum_{j=1}^d \frac{t_j}{q_j}<1$.

For any $\bfa \in B\cap \Z^d$, we have $0\le E(\bfa) \le E(\bft) = H$, where $E(X) =\{ E(\bfx_1),\dots,E(\bfx_n)\}$.
The encoding $E(\cdot)$ is linear: $E(\bfx+\bfy) = E(\bfx)+E(\bfy)$, so for any $\bfa \in \Sigma(X)\cap B$ it holds that $E(\bfa) \in \Sigma(E(X)) \cap [0,H]$.

Conversely, we show that, 
for any $h\in [0,H]$, there can be at most one $\bfa\in \Nonneg^d$ satisfying $E(\bfa) = h$. 
To see this, let $\bfa$ be such a vector.
Then, from $a_j P/q_j\le E(\bfa) \le H<P$, we get $0\le a_j<q_j$ for every $j$.
By definition of $E(\cdot)$, we have $E(\bfa)\equiv a_j \frac{P}{q_j} \pmod{q_j}$ for every $j$.
Since the $q_j$ are pairwise coprime, we thus have 
$a_j=h(P/q_j)^{-1} \bmod q_j$,
uniquely determining $\bfa$.

Thus, we can apply Chan's deterministic
all-targets
algorithm for 
 \textsc{Subset Sum}
\cite{chan26} to input integers $E(X)$ with target
$H$, and obtain $\Sigma(E(X))\cap [0,H]$ in $\tO(n+H)$ time.
Then, using the inversion above, we can obtain $\Sigma(X) \cap B$, thereby solving our vector \textsc{Subset Sum} instance.
By a standard divide-and-conquer approach, Chan's algorithm can support finding a witnessing submultiset attaining the target sum. This handles the witness-recovery part, proving
\cref{lem:box-subset-sums}.
\end{proof}

    \section{AP Decompositions of All Subset Sums}
\label{app:subset-sum-representation}

Recall that the AP decomposition (\cref{def:ap-decomposition}) of an integer set $S$ with common difference
$s$ has size $\lvert S\setminus (S+s)\rvert$.
We prove the following lemma. Here $\ln x = \log_e x$.
\begin{lemma}
\label{lem:aprepresentation}
Let $A$ be a nonempty multiset of nonzero integers. For some
$a\in \supp A$, the AP decomposition of $\Sigma(A)$ with common difference
$\lvert a\rvert$ has size at most $\frac{\lvert\Sigma(A)\rvert\ln \lvert\Sigma(A)\rvert}{\lvert A\rvert}$.
\end{lemma}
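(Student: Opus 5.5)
We would prove \cref{lem:aprepresentation} by an averaging argument over the elements of $A$ in the order they are added, using a potential function. Write $A=(a_1,\dots,a_n)$ with $n=\lvert A\rvert$. Let $\Sigma_j=\Sigma(\{a_1,\dots,a_j\})$, so $\Sigma_0=\{0\}$, $\Sigma_j=\Sigma_{j-1}\cup(\Sigma_{j-1}+a_j)$, and $\Sigma_n=\Sigma(A)$. The key identity is
\[
\lvert\Sigma_j\rvert-\lvert\Sigma_{j-1}\rvert=\lvert(\Sigma_{j-1}+a_j)\setminus\Sigma_{j-1}\rvert .
\]
This identity says that the growth at step $j$ equals the number of APs of common difference $\lvert a_j\rvert$ in the decomposition of $\Sigma_{j-1}$. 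Indeed, $\lvert(S+a)\setminus S\rvert=\lvert(S+\lvert a\rvert)\setminus S\rvert=\lvert S\setminus(S+\lvert a\rvert)\rvert$, since $S$ and $S+\lvert a\rvert$ have equal size.

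Our goal, however, is the decomposition size of the final set $\Sigma(A)$, not of $\Sigma_{j-1}$. So we use the ordering freedom and relate the two. Since $\Sigma(A)=\Sigma(A\setminus\{a\})\cup(\Sigma(A\setminus\{a\})+a)$ for any element $a$, we plan to view each $a\in A$ as the \emph{last} element added. The AP decomposition size of $\Sigma(A)$ with difference $\lvert a\rvert$ equals $\lvert\Sigma(A)\setminus(\Sigma(A)+\lvert a\rvert)\rvert$, the number of starting points. We would bound this by the growth $\lvert\Sigma(A)\rvert-\lvert\Sigma(A\setminus\{a\})\rvert$, up to a factor. Concretely, write $S'=\Sigma(A\setminus\{a\})$ and $S=S'\cup(S'+a)$. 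If $a>0$, then each AP-start $x$ of $S$ (i.e.\ $x\in S$, $x-a\notin S$) must lie in $S'$, since elements of $S'+a$ have predecessor in $S'\subseteq S$. Moreover $x-a\notin S'$, so $x$ is an AP-start of $S'$. The map $x\mapsto$ (end of the maximal $a$-progression of $S'$ starting at $x$) $+a$ lands in $S\setminus S'$, and it is injective. This gives
\[
\lvert S\setminus(S+\lvert a\rvert)\rvert\le \lvert\Sigma(A)\rvert-\lvert\Sigma(A\setminus\{a\})\rvert .
\]
The case $a<0$ is symmetric, using ends instead of starts.

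It therefore suffices to find $a$ with small removal loss $\lvert\Sigma(A)\rvert-\lvert\Sigma(A\setminus\{a\})\rvert$. Summing over all $a\in A$ is not directly telescoping, so we instead take a random ordering and consider the last element. We would choose the ordering greedily backwards: repeatedly remove from the current multiset $A_j$ (of size $j$) the element minimizing the loss $L_j$. This gives $\lvert\Sigma(A_{j-1})\rvert\ge\lvert\Sigma(A_j)\rvert-L_j$.

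The main obstacle is that the minimum loss at the first step alone need not be $\le \lvert\Sigma\rvert\ln\lvert\Sigma\rvert/n$ from a trivial averaging. We would fix this by a submodularity-type monotonicity. For $B\subseteq C$, the set $\Sigma(C\setminus\{a\})$ contains $\Sigma(C\setminus B)+\Sigma(B\setminus\{a\})$, so the loss of $a$ relative to $C$ should be at most roughly $\tfrac{\lvert\Sigma(C)\rvert}{\lvert\Sigma(B)\rvert}$ times its loss relative to $B$. With ratio-type monotonicity, the relative losses $L/\lvert\Sigma\rvert$ are controlled along the chain, and summing $\sum_j \min_a L(a)/\lvert\Sigma(A_j)\rvert\le\sum_j\ln(\lvert\Sigma(A_j)\rvert/\lvert\Sigma(A_{j-1})\rvert)\le\ln\lvert\Sigma(A)\rvert$ forces the first relative loss to be at most $\ln\lvert\Sigma(A)\rvert/n$. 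This last monotonicity-and-averaging step is where we expect difficulty. If the multiplicative comparison fails, we would fall back to a cruder bound that instead averages the loss of each element over uniformly random orderings and uses $1-x\le e^{-x}$.
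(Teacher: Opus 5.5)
\textbf{Step 1 is fine; the gap is in step 2.} Your bound on the number of progressions by the removal loss $\lvert\Sigma(A)\rvert-\lvert\Sigma(A\setminus\{a\})\rvert$ is correct. The paper gets it more directly from $S=S'\cup(S'+a)$, which gives $\lvert S\cap(S+a)\rvert\ge\lvert S'\rvert$.

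In step 2 you need a ``ratio-type monotonicity'': for $a\in B\subseteq C$, the relative loss of $a$ in $C$ is at most its relative loss in $B$. This is exactly diminishing returns, i.e.\ submodularity, of $I\mapsto\ln\lvert\Sigma(A_I)\rvert$. Equivalently, it is $\lvert X+Y+Z\rvert\,\lvert Y\rvert\le\lvert X+Y\rvert\,\lvert Y+Z\rvert$ for subset-sum sets $X=\Sigma(C\setminus B)$, $Y=\Sigma(B\setminus\{a\})$, $Z=\{0,a\}$, and it is false.

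Take $B=\{1,100,101,\dots,106\}$, $a=1$, and $C=B\sqcup\{2\}$.
\begin{itemize}
\item $Y=\Sigma(\{100,\dots,106\})$ is a union of eight far-apart intervals of lengths $1,7,11,13,13,11,7,1$, so $\lvert Y\rvert=64$.
\item $\lvert\Sigma(B)\rvert=\lvert Y\cup(Y+1)\rvert=72$.
\item $\lvert\Sigma(C\setminus\{1\})\rvert=\lvert Y\cup(Y+2)\rvert=78$, because the two singleton intervals do not merge with their shifts.
\item $\lvert\Sigma(C)\rvert=\lvert Y+\{0,1,2,3\}\rvert=88$.
\end{itemize}
Then $88\cdot 64=5632>5616=78\cdot 72$, so the relative loss of $1$ increases from $8/72$ to $10/88$ when you pass to the superset. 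Even a constant-factor version of the monotonicity would only give an $O(\cdot)$ bound, not the stated constant.

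Your fallback with random orderings does not avoid this. The expected relative loss of the last element in a uniform random order is exactly $\frac1n\sum_i\bigl(1-\lvert S_i\rvert/\lvert S\rvert\bigr)$, where $S_i=\Sigma(A\setminus\{a_i\})$. Bounding that quantity is the whole problem, and telescoping along orderings controls it only through diminishing returns, which is the false submodularity again.

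\textbf{The missing idea} is the Gyarmati--Matolcsi--Ruzsa submultiplicativity theorem (\cref{thm:submult}). It is a Shearer-type inequality, weaker than submodularity but true. Applied to $A_i=\{0,a_i\}$, it gives $\prod_i\lvert S_i\rvert\ge\lvert S\rvert^{n-1}$. Hence
\[
\ln\lvert S\rvert\;\ge\;\sum_i\ln\frac{\lvert S\rvert}{\lvert S_i\rvert}\;\ge\;\sum_i\Bigl(1-\frac{\lvert S_i\rvert}{\lvert S\rvert}\Bigr),
\]
and averaging over $i$ gives some $i$ with $\lvert S\rvert-\lvert S_i\rvert\le\lvert S\rvert\ln\lvert S\rvert/n$. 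Combined with your step 1, this completes the proof; it is the paper's argument.
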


In particular, if $\lvert A\rvert=n$ and $\supp (A)\subseteq \{1,\dots,\Delta\}$, then $\Sigma(A)$ has an AP
decomposition of size $\Oh(\Delta \log(n \Delta))$.

It remains open whether such an AP decomposition can be computed in
$\Oh((n+\Delta)^{1+o(1)})$ time.
In our \textsc{Subset Sum} algorithm (\cref{thm:main-main}), we compute an AP decomposition for each group (\cref{lem:oneclass}), but when we combine the answers from all the groups in \cref{alg:subset-sum-main}, we restrict attention to shorter intervals, and consequently lose information about the full $\Sigma(A)$.

To prove \cref{lem:aprepresentation}, we use the submultiplicativity theorem of Gyarmati,
Matolcsi, and Ruzsa \cite{submult}.
The same theorem was also used by Bringmann, Fischer, and Nakos \cite{BringmannFN25} to design an improved output-sensitive algorithm for computing $\Sigma(A) \cap [0,t]$ given a multiset $A$ of positive integers.

\begin{theorem}[{\cite[Theorem~1.2]{submult}}]
\label{thm:submult}
Let $n\ge 2$ and let $A_1,\ldots,A_n\subseteq \Z$ be finite and nonempty.
Define $S=\sum_{j=1}^n A_j$ and
$S_i=\sum_{j\in [n]\setminus \{i\}} A_j$.
Then
\[
 \lvert S\rvert\le \left(\prod_{i=1}^n\lvert S_i\rvert\right)^{1/(n-1)}.
\]
\end{theorem}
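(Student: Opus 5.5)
This is an external result, cited from \cite[Theorem~1.2]{submult} and not reproved in the paper, so what follows is my own outline of a route. It is organized around Shearer's inequality (the Loomis--Whitney inequality being the special case used here). The idea is to turn the additive statement into a statement about projections of a finite subset of a product set. First fix a total order on each $A_j$. For each $s\in S$, choose a canonical representation $\mathbf a(s)=(a_1,\dots,a_n)\in A_1\times\cdots\times A_n$ with $\sum_j a_j=s$, by taking the lexicographically largest such tuple. Let $T=\{\mathbf a(s):s\in S\}$, so that $|T|=|S|$. Let $\pi_i$ denote the projection that deletes the $i$-th coordinate. The Loomis--Whitney inequality gives
\[
 |T|^{n-1}\le \prod_{i=1}^n |\pi_i(T)|.
\]
So it would suffice to show $|\pi_i(T)|\le |S_i|$ for every $i$, for instance via an injection $\pi_i(T)\to S_i$.

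The natural candidate for this injection is $(a_j)_{j\ne i}\mapsto \sum_{j\ne i}a_j$, but it is not injective in general. Two canonical tuples can have equal $S_i$-sums while differing in the remaining coordinates, since different $a_i$ can compensate. I expect this to be the main obstacle. My first attempt will be to replace the single lexicographic choice by a choice adapted to $i$, and to replace Loomis--Whitney by its weighted form (Shearer's lemma) applied to a set of tuples where every coordinate is covered exactly $n-1$ times. Concretely, I would try to prove the entropy version first and then pass to cardinalities. Let $s$ be uniform on $S$, and let $X=\mathbf a(s)$ be its canonical representation, so that $H(X)=\log|S|$. Shearer's lemma gives $(n-1)H(X)\le \sum_i H(\pi_i X)$. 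It would then remain to bound $H(\pi_iX)\le \log|S_i|$, which needs only that $\pi_iX$ is determined by, or at least not much more spread out than, the sum $\sum_{j\ne i}X_j$.

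To get that determination, I would use the canonical choice together with an exchange argument. Fix the value $\sigma=\sum_{j\ne i}a_j\in S_i$. Among all canonical tuples whose $i$-deleted sum is $\sigma$, maximality should force the deleted coordinates to equal the lexicographically largest representation of $\sigma$ in $\sum_{j\ne i}A_j$. The reason is that a tuple which is not maximal on those coordinates could be improved, while $a_i$ and $s$ stay fixed, contradicting the canonical choice for $s$. This is exactly where the order of lexicographic comparison matters. If a single order does not make the argument work for all $i$ at once, I would fall back on an induction on $n$. The base case $n=2$ is trivial. For $n=3$ the inequality reads $|A+B+C|^2\le|A+B||B+C||A+C|$, and I would prove it by this exchange argument. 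The general case would then combine the $n-1$ instance with an averaging over which index is deleted, as in the fractional subadditivity arguments.
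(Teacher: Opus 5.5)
Your proof is correct: lexicographically extremal representatives plus the discrete Loomis--Whitney inequality $|T|^{n-1}\le\prod_i|\pi_i(T)|$ is the standard argument for this inequality. The paper only cites the result, so there is no in-paper proof to compare against.

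What needs fixing is your own assessment of the argument. You claim that $(a_j)_{j\ne i}\mapsto\sum_{j\ne i}a_j$ is not injective on $\pi_i(T)$. That is false for the lex-largest choice. Two canonical tuples with the same $i$-deleted sum can differ only in coordinate $i$, and $\pi_i$ forgets that coordinate.

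Your exchange argument proves exactly this, and it works for every $i$ with one fixed lexicographic order. Suppose some $\mathbf{c}\in\prod_{j\ne i}A_j$ had $\sum_{j\ne i}c_j=\sigma$ and were lex-larger than $\pi_i(\mathbf{a}(s))$. Inserting $a_i$ into $\mathbf{c}$ gives a tuple with sum $s$ that agrees with $\mathbf{a}(s)$ in coordinate $i$. Its first difference from $\mathbf{a}(s)$ is therefore at the first $j\ne i$ with $c_j\ne a_j$, so it is lex-larger than $\mathbf{a}(s)$, a contradiction. Hence $\pi_i(\mathbf{a}(s))$ is the lex-largest representation of $\sigma$ in $\prod_{j\ne i}A_j$. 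This gives $|\pi_i(T)|\le|S_i|$, and together with $|T|=|S|$ the proof is complete.

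The $i$-adapted orders, the weighted Shearer and entropy versions, and the induction on $n$ are unnecessary. The induction sketch is not an argument as stated and should be dropped. Only commutativity of addition is used, so the inequality holds in any abelian group.
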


\begin{proof}
[Proof of \cref{lem:aprepresentation}]
Let $n=\lvert A\rvert$. The case $n=1$ is immediate. For $n\ge 2$, write
$A=\{a_1,\ldots,a_n\}$, $S=\Sigma(A)$, and
$S_i=\Sigma(A\setminus \{a_i\})$.
Applying \cref{thm:submult} to $A_i=\{0,a_i\}$ gives
$\prod_{i=1}^n\lvert S_i\rvert\ge \lvert S\rvert^{n-1}$.
Taking logarithms and using
$\ln \frac{1}{x} \ge 1-x $ for $0<x\le 1$ gives
\[
 \ln \lvert S\rvert
 \ge \sum_{i=1}^n \ln \frac{\lvert S\rvert}{\lvert S_i\rvert}
 \ge \sum_{i=1}^n\left(1-\frac{\lvert S_i\rvert}{\lvert S\rvert}\right).
\]
By averaging, there exists $i$ such that $\frac{1}{n} {\ln \lvert S\rvert} \ge 1 - \lvert S_i\rvert/\lvert S\rvert$, that is, $\lvert S\rvert-\lvert S_i\rvert \le \frac{1}{n}{\lvert S\rvert\ln \lvert S\rvert}$.
Since $S=S_i\cup (S_i+a_i)$, we have $\lvert S\cap (S+a_i)\rvert \ge \lvert S_i+a_i\rvert = \lvert S_i\rvert$, and therefore
\[\lvert S\setminus (S+a_i)\rvert = \lvert S\rvert - \lvert S \cap (S+a_i)\rvert \le \lvert S\rvert-\lvert S_i\rvert \le \frac{1}{n}{\lvert S\rvert\ln \lvert S\rvert},\] as desired.
\end{proof}

We now show that an efficient algorithm for computing an AP decomposition of $\Sigma(A)$ would imply an efficient algorithm for the following problem posed by W\k{e}grzycki:

\defproblem{prob:bounded-support-subset-sum}{\textsc{Bounded Support Subset Sum}}{
Positive integers $a_1,\ldots,a_n$ and a target integer $t$.
}{
Compute $\max \{\lvert I\rvert:I\subseteq [n],\ \sum_{i\in I} a_i=t\}$,
or report that no such $I$ exists.
}

W\k{e}grzycki \cite[Problem~2]{AlgUWOpenProblems25} asked whether this problem
can be solved in $\Oh(n^{1-\delta} t+n^{\Oh(1)})$ time for some absolute
constant $\delta>0$.

\begin{proposition}
Suppose there is
a $T(n,\Delta)$-time algorithm for computing
an AP decomposition of $\Sigma(A)$ given any  multiset $A$ of at most
$n$ integers in $[1,\Delta]$.
Then, {\rm\textsc{Bounded Support Subset Sum}}
can be solved in
$\tO\!\left( T\!\left(n,3t \sqrt n\right)
+ n+t \sqrt n \right)$
time.
\end{proposition}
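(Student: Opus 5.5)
The plan is to encode cardinality into the subset sums so that a single AP decomposition of an auxiliary multiset answers the question. Membership queries (\cref{lem:ap-operations}) then read off the optimal cardinality. First discard every $a_i>t$. If fewer than two items remain, the instance is trivial. The basic encoding is
\[
b_i=M-a_i\in[1,M]\qquad\text{for a large integer } M>t .
\]
A $k$-element subset $I$ then has $b$-sum $kM-\sigma_I$, where $\sigma_I=\sum_{i\in I}a_i$. The event ``some $k$-subset has $\sigma_I=t$'' becomes ``$kM-t\in\Sigma(B)$''. Given an AP decomposition of $\Sigma(B)$ of size $N$, we test all $k=0,\dots,n$ in $\Oh(n\log N)$ time and return the largest $k$ that passes.

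The difficulty is spurious hits. We have $kM-\sigma=k'M-t$ whenever $\sigma=t+(k-k')M$ for some $k>k'$. Ruling this out for arbitrary subsets would require $M>nt$, which is too large. The stated budget $T(n,3t\sqrt n)$ corresponds to $M\approx 3t\sqrt n$. So the plan is to show that hits of the form $kM-t\in\Sigma(B)$ with $k$ much larger than the true answer either cannot occur or can be filtered out cheaply.

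I would split the items into large ones ($a_i>t/\sqrt n$) and small ones. Any solution uses fewer than $\sqrt n$ large items. I would also sort the items and use the prefix sums $s_k$ of the $k$ smallest items. Every $k$-subset has $\sigma_I\ge s_k$, so we only need the $k$ with $s_k\le t$. A valid decoding then needs $kM-\sigma$ to identify $k$ uniquely among these candidates. Concretely, the decoding is unambiguous once $\sigma$ is confined to a window of length $<M$, such as $[s_k,\,s_k+M)$.

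My first attempt at achieving this confinement would apply the encoding to groups of $\sqrt n$ consecutive candidate values of $k$. For each group I would subtract the common contribution of the forced smallest items and scale the offsets so that all relevant $b$-values lie in $[1,3t\sqrt n]$. An alternative is to encode only the small items with offset $M$ and handle the fewer than $\sqrt n$ large items by a dynamic program of cost $\Oh(t\sqrt n)$. Both routes are intended to account for the additive $\tO(n+t\sqrt n)$ term.

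This confinement step is the main obstacle: forcing every spurious collision to be excluded using only the $3t\sqrt n$ range, and I am not certain that either route achieves it as stated. Once it is in place, the rest is straightforward. We make one call to the assumed $T(n,3t\sqrt n)$-time algorithm, or $\Oh(1)$ calls if the grouping requires them, followed by $\Oh(n)$ membership queries and the sorting and prefix sums in $\tO(n)$ time. This yields the claimed bound $\tO(T(n,3t\sqrt n)+n+t\sqrt n)$.
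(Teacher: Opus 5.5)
There is a genuine gap, and it sits exactly at the confinement step you flag. If all items go into one auxiliary multiset with $M\approx 3t\sqrt n$, the test $kM-t\in\Sigma(B)$ fires whenever some $k'$-subset with $k'>k$ has $\sigma_I=t+(k'-k)M$. Items are not forced into a subset, so $s_k$ is only a lower bound on $\sigma_I$. Subsets containing large items can have sums up to $k't$, far outside any window $[s_k,s_k+M)$. The grouping route also breaks the budget: groups of $\sqrt n$ consecutive values of $k$ need about $\sqrt n$ black-box calls, not $\Oh(1)$.

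Splitting into small and large items is the right move, but after the split single-target tests no longer suffice. You need, for every $s\in[0,t]$, the maximum cardinality $S[s]$ of a small-item subset with sum $s$, and likewise $L[s]$ for large items. Then you output $\max_s\bigl(L[s]+S[t-s]\bigr)$ in $\Oh(t)$ time. For small items alone your collision worry actually disappears: their total is at most $t\sqrt n$, so $M=\lfloor t\sqrt n\rfloor+1$ already decodes uniquely. The real obstruction is extracting $S[s]$. Your encoding puts the cardinality in the high-order digit, so for fixed $s$ the candidates $kM-s$ form a progression of difference $M$. Thus $S[s]$ is a ``largest element of $\Sigma(B)$ in a residue class mod $M$'' query, which is hard here for three reasons:
\begin{itemize}
\item the AP decomposition you get back has a common difference $p$ chosen by the black box, and \cref{lem:ap-operations} offers no such query;
\item existence of a $k$-subset with sum $s$ is not monotone in $k$, so binary search fails;
\item trying all $k$ costs $\Theta(nt)$ membership tests.
\end{itemize}

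The missing idea is to put the cardinality in the low-order digit. Let $K=\lceil\sqrt n\rceil$ and encode each small item ($a_i\le t/K$) as $2na_i+1\le 3t\sqrt n$. Then $2ns+k\in\Sigma(E)$ iff some $k$-subset of small items sums to $s$, and this decodes uniquely since $0\le k\le n<2n$. Hence $S[s]=\max\{v-2ns: v\in\Sigma(E)\cap[2ns,2ns+n]\}$ is a single predecessor query at $2ns+n$. All $t+1$ of these queries are answered offline by sweeping the query points together with the endpoints of the progressions in the AP decomposition. During the sweep you keep, in a balanced search tree, the residues mod $p$ of the progressions straddling the sweep point, plus the largest endpoint already passed.

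For the large items, an ``$\Oh(t\sqrt n)$ dynamic program'' is not justified, since a DP over (sum, count) pairs costs $\Oh(t\sqrt n)$ per large item. Instead encode them as $Ka_i+1$. Any $K$ of these already exceed $t'=tK+K$, so you can run the all-targets algorithm of Bringmann/Chan up to $t'$ in $\tO(n+t\sqrt n)$ time. You then read $L[s]$ off the values $Ks+k$ with $k<K$. This step is where the additive $t\sqrt n$ term comes from.
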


In particular, if $T(n,\Delta) \le \tO(n+\Delta)$,
then \textsc{Bounded Support Subset Sum} can be solved in
 $\tO(n+t \sqrt n)$ time.

\begin{proof}
Let $K=\lceil\sqrt n\rceil$.
Call $a_i \in [1, t/K]$ the small items, and $a_i \in (t/K, t]$ the large items.
Let $L[s]$ denote the cardinality of the largest subset of large items with sum exactly $s$ (where $L[s]=-\infty$ if no such subset exists).
Similarly define $S[s]$ for small items.
We will compute these two arrays. Then, the answer is clearly $\max_{0\le s\le t}\bigl\{L[s]+S[t-s]\bigr\}$ which can be computed in $\Oh(t)$ additional time.

\paragraph*{Large items.}
By definition, a subset of sum at most $t$ must contain fewer than $K$ large items.
Define the multiset $B = \{Ka_i + 1: i\in [n] \text{ is a large item}\}$ and $t' = tK+K$. 
Note that any $B'\subseteq B$ of size $\lvert B'\rvert\ge K$ has sum $\sum_{b\in B'} b > \lvert B'\rvert(K\cdot (t/K) + 1) \ge  t'$. Use Bringmann's \textsc{Subset Sum} algorithm \cite{Bringmann17,chan26} to compute $\Sigma(B) \cap [0,t']$ in $\tO(n+t')=\tO(n+tK) $ time.
Then, for every $s\in \Z\cap [0,t]$ and $k\in \Z\cap [0,K)$, we have $Ks+k \in \Sigma(B) \cap [0,t']$ if and only if there is a subset of exactly $k$ large items that sums to $s$.
Thus we immediately obtain the array $L[s]$ for all $s\in [0,t]$.

\paragraph*{Small items.}
Define $\Delta = 3t \sqrt{n}$,  and the multiset $E = \{2na_i + 1: i\in [n] \text{ is a small item}\}$ which contains positive integers bounded by $2n\cdot t/K +1\le \Delta$.
Then, for every $s\in \Z\cap [0,t]$ and $k\in \Z\cap [0,n]$, $2ns+k \in \Sigma(E)$ if and only if there is a subset of exactly $k$ small items that sums to $s$.
Thus, for each possible sum $s\in \Z\cap [0,t]$, the desired maximum cardinality is $S[s] = \max \{v - 2ns : v\in \Sigma(E) \cap [2ns,2ns + n]\}$. To compute all $S[s]$, 
it suffices to answer the predecessor queries $\max (\Sigma(E)\cap [0,u])$ for all $u\in Q\coloneqq  \{2ns+n: s\in [0,t]\cap \Z\}$.

Compute an AP decomposition $\caD$ of $\Sigma(E)$ in $T(n,\Delta) = T(n,3t \sqrt n)$ time, and let $p>0$ denote the common difference.
We answer the predecessor queries offline in time near-linear in $\lvert Q\rvert = t+1$ and $\lvert\caD\rvert$, as follows. 
Sort all endpoints in $\caD$ and all query points in $Q$ together in increasing order, and treat them as time steps. 
Iterate over the time steps in increasing order, and maintain a dynamic set $P_u$ (using a balanced search tree) such that at time step $u$, \[P_u = \{ a \bmod p :  \{a,a+p,\dots,b\} \in \caD \text{ and } a\le u < b \}.\] Also maintain \[r_u = \max \{b :  \{a,a+p,\dots,b\} \in \caD \text{ and } b\le  u \}.\]
At query time $u\in Q$, the predecessor of $u$ in $\Sigma(E)$ is precisely
\[ \max \big\{ r_u, \max \{ u-((u-a) \bmod p) : a\in P_u \}\big \}, \]
which can be computed in logarithmic time by making $\Oh(1)$ predecessor queries to $P_u$.
The dynamic set $P_u$ can be maintained when the time step $u$ reaches an endpoint of an arithmetic progression in $\caD$, by inserting or deleting the residue of that progression modulo $p$.

The total time is therefore
$
 \tO\bigl(T(n,\Delta)+ n+tK \bigr) = \tO( T(n,3t \sqrt{n})+n+t \sqrt{n})$.
\end{proof}

  \bibliographystyle{alphaurl}
  \bibliography{main}

@inproceedings{BringmannW21,
  author    = {Karl Bringmann and
               Philip Wellnitz},
  title     = {On Near-Linear-Time Algorithms for Dense Subset Sum},
  booktitle = {Proceedings of the 2021 {ACM-SIAM} Symposium on Discrete Algorithms, {SODA} 2021},
  pages     = {1777--1796},
  publisher = {{SIAM}},
  year      = {2021},
  doi       = {10.1137/1.9781611976465.107},
  bibsource = {dblp computer science bibliography, https://dblp.org}
}

@inproceedings{Bringmann17,
  author    = {Karl Bringmann},
  title     = {A Near-Linear Pseudopolynomial Time Algorithm for {Subset Sum}},
  booktitle = {Proceedings of the Twenty-Eighth Annual {ACM-SIAM} Symposium on Discrete
               Algorithms, {SODA} 2017},
  pages     = {1073--1084},
  publisher = {{SIAM}},
  year      = {2017},
  doi       = {10.1137/1.9781611974782.69},
  bibsource = {dblp computer science bibliography, https://dblp.org}
}

@article{fox,
  title={Subset sums, completeness and colorings},
  author={Conlon, David and Fox, Jacob and Pham, Huy Tuan},
  journal={arXiv preprint arXiv:2104.14766},
  year={2021},
    eprinttype    = {arXiv},
  eprint       = {2104.14766},
}

@article {Sarkozy2,
    AUTHOR = {S{\'{a}}rk\"{o}zy, A.},
     TITLE = {Finite addition theorems. {II}},
   JOURNAL = {J. Number Theory},
  FJOURNAL = {Journal of Number Theory},
    VOLUME = {48},
      YEAR = {1994},
    NUMBER = {2},
     PAGES = {197--218},
      ISSN = {0022-314X},
   MRCLASS = {11B13},
  MRNUMBER = {1285539},
MRREVIEWER = {Georges Grekos},
       DOI = {10.1006/jnth.1994.1062},
       URL = {https://doi.org/10.1006/jnth.1994.1062},
}

@article {freiman93,
    AUTHOR = {Freiman, Gregory A.},
     TITLE = {New analytical results in subset-sum problem},
      NOTE = {Combinatorics and algorithms (Jerusalem, 1988)},
   JOURNAL = {Discrete Math.},
  FJOURNAL = {Discrete Mathematics},
    VOLUME = {114},
      YEAR = {1993},
    NUMBER = {1-3},
     PAGES = {205--217},
      ISSN = {0012-365X},
   MRCLASS = {11B25 (11N25)},
  MRNUMBER = {1217753},
       DOI = {10.1016/0012-365X(93)90367-3},
       URL = {https://doi.org/10.1016/0012-365X(93)90367-3},
}

@article{GalilM91,
  author    = {Zvi Galil and
               Oded Margalit},
  title     = {An Almost Linear-Time Algorithm for the Dense Subset-Sum Problem},
  journal   = {{SIAM} J. Comput.},
  volume    = {20},
  number    = {6},
  pages     = {1157--1189},
  year      = {1991},
  doi       = {10.1137/0220072},
  bibsource = {dblp computer science bibliography, https://dblp.org}
}

@article{Pisinger99,
  author    = {David Pisinger},
  title     = {Linear Time Algorithms for Knapsack Problems with Bounded Weights},
  journal   = {J. Algorithms},
  volume    = {33},
  number    = {1},
  pages     = {1--14},
  year      = {1999},
  doi       = {10.1006/jagm.1999.1034},
  bibsource = {dblp computer science bibliography, https://dblp.org}
}

@inproceedings{icalp21,
  author    = {Adam Polak and
               Lars Rohwedder and
               Karol W\k{e}grzycki},
  title     = {Knapsack and Subset Sum with Small Items},
  booktitle = {48th International Colloquium on Automata, Languages, and Programming, {ICALP} 2021},
  series    = {LIPIcs},
  volume    = {198},
  pages     = {106:1--106:19},
  publisher = {Schloss Dagstuhl - Leibniz-Zentrum f{\"{u}}r Informatik},
  year      = {2021},
  doi       = {10.4230/LIPIcs.ICALP.2021.106},
  bibsource = {dblp computer science bibliography, https://dblp.org}
}

@article{KoiliarisX19,
  author    = {Konstantinos Koiliaris and
               Chao Xu},
  title     = {Faster Pseudopolynomial Time Algorithms for Subset Sum},
  journal   = {{ACM} Trans. Algorithms},
  volume    = {15},
  number    = {3},
  pages     = {40:1--40:20},
  year      = {2019},
  doi       = {10.1145/3329863},
  bibsource = {dblp computer science bibliography, https://dblp.org}
}

@article{EisenbrandW20,
  author    = {Friedrich Eisenbrand and
               Robert Weismantel},
  title     = {Proximity Results and Faster Algorithms for Integer Programming Using
               the {Steinitz} Lemma},
  journal   = {{ACM} Trans. Algorithms},
  volume    = {16},
  number    = {1},
  pages     = {5:1--5:14},
  year      = {2020},
  doi       = {10.1145/3340322},
  bibsource = {dblp computer science bibliography, https://dblp.org}
}

@inproceedings{BringmannC22,
  author    = {Karl Bringmann and
               Alejandro Cassis},
  title     = {Faster Knapsack Algorithms via Bounded Monotone Min-Plus-Convolution},
  booktitle = {49th International Colloquium on Automata, Languages, and Programming, {ICALP} 2022},
  series    = {LIPIcs},
  volume    = {229},
  pages     = {31:1--31:21},
  publisher = {Schloss Dagstuhl - Leibniz-Zentrum f{\"{u}}r Informatik},
  year      = {2022},
  doi       = {10.4230/LIPIcs.ICALP.2022.31},
  bibsource = {dblp computer science bibliography, https://dblp.org}
}

@inproceedings{AxiotisBJTW19,
  author    = {Kyriakos Axiotis and
               Arturs Backurs and
               Ce Jin and
               Christos Tzamos and
               Hongxun Wu},
  title     = {{F}ast Modular Subset Sum using Linear Sketching},
  booktitle = {Proceedings of the Thirtieth Annual {ACM-SIAM} Symposium on Discrete Algorithms, {SODA} 2019},
  pages     = {58--69},
  publisher = {{SIAM}},
  year      = {2019},
  doi       = {10.1137/1.9781611975482.4},
  bibsource = {dblp computer science bibliography, https://dblp.org}
}

@article{AbboudBHS22jcss,
  author    = {Amir Abboud and
               Karl Bringmann and
               Danny Hermelin and
               Dvir Shabtay},
  title     = {Scheduling lower bounds via {AND} subset sum},
  journal   = {J. Comput. Syst. Sci.},
  volume    = {127},
  pages     = {29--40},
  year      = {2022},
  doi       = {10.1016/j.jcss.2022.01.005},
  bibsource = {dblp computer science bibliography, https://dblp.org}
}

@Book{Bellman1957,
  author =       "Bellman, Richard",
  title =        "Dynamic Programming",
  publisher =    "Princeton University Press",
  year =         "1957",
  address =   "Princeton, NJ, USA",
}

@incollection{chaimovichsurvey,
     author = {Chaimovich, Mark},
     title = {New structural approach to integer programming: a survey},
     booktitle = {Structure theory of set addition},
     series = {Ast\'erisque},
     pages = {341--362},
     year = {1999},
     publisher = {Soci\'et\'e math\'ematique de France},
     number = {258},
     mrnumber = {1701209},
     zbl = {0987.90060},
     language = {en},
     url = {https://www.numdam.org/item/AST_1999__258__341_0/}
}

@incollection{chaimovichalgo,
     author = {Chaimovich, Mark},
     title = {New algorithm for dense subset-sum problem},
     booktitle = {Structure theory of set addition},
     series = {Ast\'erisque},
     pages = {363--373},
     year = {1999},
     publisher = {Soci\'et\'e math\'ematique de France},
     number = {258},
     mrnumber = {1701210},
     zbl = {0987.90061},
     language = {en},
     url = {https://www.numdam.org/item/AST_1999__258__363_0/}
}

@inproceedings {freimanalgo,
    AUTHOR = {Freiman, G. A.},
     TITLE = {Subset-sum problem with different summands},
 BOOKTITLE = {Proceedings of the {T}wentieth {S}outheastern {C}onference on
              {C}ombinatorics, {G}raph {T}heory, and {C}omputing ({B}oca
              {R}aton, {FL}, 1989)},
   JOURNAL = {Congr. Numer.},
 FJOURNAL = {Congressus Numerantium. A Conference Journal on Numerical
              Themes},
    SERIES = {Congressus Numerantium},
    VOLUME = {70},
      YEAR = {1990},
     PAGES = {207--215},
      ISSN = {0384-9864},
   MRCLASS = {11D04 (11B83 11Y50 90C09)},
  MRNUMBER = {1041604},
MRREVIEWER = {Marlies Borchardt},
}

@article {freimanparti,
    AUTHOR = {Freiman, Gregory A.},
     TITLE = {On extremal additive problems of {P}aul {E}rd{\H{o}}s},
   JOURNAL = {Ars Combin.},
  FJOURNAL = {Ars Combinatoria},
    VOLUME = {26},
      YEAR = {1988},
    NUMBER = {B},
     PAGES = {93--114},
      ISSN = {0381-7032},
   MRCLASS = {11P65 (90C10)},
  MRNUMBER = {990542},
MRREVIEWER = {B\'{e}la Vizv\'{a}ri},
}

@article {cfgalgo,
    AUTHOR = {Chaimovich, Mark and Freiman, Gregory and Galil, Zvi},
     TITLE = {Solving dense subset-sum problems by using analytical number
              theory},
   JOURNAL = {J. Complexity},
  FJOURNAL = {Journal of Complexity},
    VOLUME = {5},
      YEAR = {1989},
    NUMBER = {3},
     PAGES = {271--282},
      ISSN = {0885-064X},
   MRCLASS = {11Y35 (11B13 68Q20)},
  MRNUMBER = {1018019},
       DOI = {10.1016/0885-064X(89)90025-3},
       URL = {https://doi.org/10.1016/0885-064X(89)90025-3},
}

@inproceedings{BringmannN21icalp,
  author    = {Karl Bringmann and
               Vasileios Nakos},
  title     = {Fast {$n$}-Fold {Boolean} Convolution via Additive Combinatorics},
  booktitle = {48th International Colloquium on Automata, Languages, and Programming, {ICALP} 2021},
  series    = {LIPIcs},
  volume    = {198},
  pages     = {41:1--41:17},
  publisher = {Schloss Dagstuhl - Leibniz-Zentrum f{\"{u}}r Informatik},
  year      = {2021},
  doi       = {10.4230/LIPIcs.ICALP.2021.41},
  bibsource = {dblp computer science bibliography, https://dblp.org}
}

@article{horowitz1974computing,
  title={Computing Partitions with Applications to the Knapsack Problem},
  author={Horowitz, Ellis and Sahni, Sartaj},
  journal={Journal of the ACM},
  volume={21},
  number={2},
  pages={277--292},
  year={1974},
  doi       = {10.1145/321812.321823},
}

@article{AbboudBHS22,
  author    = {Amir Abboud and
               Karl Bringmann and
               Danny Hermelin and
               Dvir Shabtay},
  title     = {{SETH}-based Lower Bounds for Subset Sum and Bicriteria Path},
  journal   = {{ACM} Trans. Algorithms},
  volume    = {18},
  number    = {1},
  pages     = {6:1--6:22},
  year      = {2022},
  doi       = {10.1145/3450524},
  bibsource = {dblp computer science bibliography, https://dblp.org}
}

@inproceedings{ChenLMZsoda24,
  author       = {Lin Chen and
                  Jiayi Lian and
                  Yuchen Mao and
                  Guochuan Zhang},
  title        = {Faster Algorithms for Bounded Knapsack and Bounded Subset Sum Via
                  Fine-Grained Proximity Results},
  booktitle    = {Proceedings of the 2024 {ACM-SIAM} Symposium on Discrete Algorithms,
                  {SODA} 2024, Alexandria, VA, USA, January 7-10, 2024},
  pages        = {4828--4848},
  publisher    = {{SIAM}},
  year         = {2024},
  doi          = {10.1137/1.9781611977912.171},
  bibsource    = {dblp computer science bibliography, https://dblp.org}
}

@inproceedings{Bringmann24,
  author       = {Karl Bringmann},
  title        = {Knapsack with Small Items in Near-Quadratic Time},
  booktitle    = {Proceedings of the 56th Annual {ACM} Symposium on Theory of Computing,
                  {STOC} 2024, Vancouver, BC, Canada, June 24-28, 2024},
  pages        = {259--270},
  publisher    = {{ACM}},
  year         = {2024},
  doi          = {10.1145/3618260.3649719},
  bibsource    = {dblp computer science bibliography, https://dblp.org}
}

@inproceedings{xiaomaofptas,
  author       = {Xiao Mao},
  title        = {{$(1-\epsilon)$}-Approximation of Knapsack in Nearly Quadratic
                  Time},
  booktitle    = {Proceedings of the 56th Annual {ACM} Symposium on Theory of Computing,
                  {STOC} 2024, Vancouver, BC, Canada, June 24-28, 2024},
  pages        = {295--306},
  publisher    = {{ACM}},
  year         = {2024},
  doi          = {10.1145/3618260.3649677},
  bibsource    = {dblp computer science bibliography, https://dblp.org}
}

@article {szemeredivu-apsumsets,
    AUTHOR = {Szemer\'{e}di, E. and Vu, V.},
     TITLE = {Long arithmetic progressions in sumsets: thresholds and
              bounds},
   JOURNAL = {J. Amer. Math. Soc.},
  FJOURNAL = {Journal of the American Mathematical Society},
    VOLUME = {19},
      YEAR = {2006},
    NUMBER = {1},
     PAGES = {119--169},
      ISSN = {0894-0347},
   MRCLASS = {11B25 (11B75 11P70)},
  MRNUMBER = {2169044},
MRREVIEWER = {Ben Joseph Green},
       DOI = {10.1090/S0894-0347-05-00502-3},
       URL = {https://doi.org/10.1090/S0894-0347-05-00502-3},
}

@inproceedings{clmzfocs24,
  author       = {Lin Chen and
                  Jiayi Lian and
                  Yuchen Mao and
                  Guochuan Zhang},
  title        = {An Improved Pseudopolynomial Time Algorithm for Subset Sum},
  booktitle    = {65th {IEEE} Annual Symposium on Foundations of Computer Science, {FOCS}
                  2024, Chicago, IL, USA, October 27-30, 2024},
  pages        = {2202--2216},
  publisher    = {{IEEE}},
  year         = {2024},
  doi          = {10.1109/FOCS61266.2024.00129},
  bibsource    = {dblp computer science bibliography, https://dblp.org}
}

@article{chen2026journal,
  title={An Improved Pseudopolynomial Time Algorithm for Subset Sum},
  author={Chen, Lin and Lian, Jiayi and Mao, Yuchen and Zhang, Guochuan},
  journal={Mathematical Programming},
  year={2026},
  doi={10.1007/s10107-026-02348-y},
  publisher={Springer}
}

@inproceedings{KoiliarisX17proc,
  author       = {Konstantinos Koiliaris and
                  Chao Xu},
  title        = {A Faster Pseudopolynomial Time Algorithm for Subset Sum},
  booktitle    = {Proceedings of the Twenty-Eighth Annual {ACM-SIAM} Symposium on Discrete
                  Algorithms, {SODA} 2017, Barcelona, Spain, Hotel Porta Fira, January
                  16-19},
  pages        = {1062--1072},
  publisher    = {{SIAM}},
  year         = {2017},
  doi          = {10.1137/1.9781611974782.68},
  bibsource    = {dblp computer science bibliography, https://dblp.org}
}

@article{conlon2023homogeneous,
  title={Homogeneous structures in subset sums and non-averaging sets},
  author={Conlon, David and Fox, Jacob and Pham, Huy Tuan},
  journal={arXiv preprint},
  year={2023},
  eprinttype    = {arXiv},
  eprint       = {2311.01416v1}
}

@book{schneider2014convex,
  address={Cambridge}, edition={2nd expanded}, series={Encyclopedia of Mathematics and its Applications}, volume={151}, title={Convex Bodies: The {Brunn--Minkowski} Theory}, publisher={Cambridge University Press}, author={Schneider, Rolf}, year={2014}, collection={Encyclopedia of Mathematics and its Applications},
   doi = {10.1017/CBO9781139003858}
}

@inproceedings{rw24,
  author       = {Tim Randolph and
                  Karol W\k{e}grzycki},
  title        = {Parameterized Algorithms on Integer Sets with Small Doubling: {Integer
                  Programming}, {Subset Sum} and {k-SUM}},
  booktitle    = {32nd Annual European Symposium on Algorithms, {ESA} 2024, September
                  2-4, 2024, Royal Holloway, London, United Kingdom},
  series       = {LIPIcs},
  volume       = {308},
  pages        = {96:1--96:19},
  publisher    = {Schloss Dagstuhl - Leibniz-Zentrum f{\"{u}}r Informatik},
  year         = {2024},
  note         = {Full version: \href{https://arxiv.org/abs/2407.18228}{\path{arXiv:2407.18228}}},
  doi          = {10.4230/LIPIcs.ESA.2024.96},
  bibsource    = {dblp computer science bibliography, https://dblp.org}
}

@incollection{bilu,
     author = {Bilu, Yuri},
     title = {Structure of sets with small sumset},
     booktitle = {Structure theory of set addition},
     series = {Ast\'erisque},
     pages = {77--108},
     year = {1999},
     publisher = {Soci\'et\'e math\'ematique de France},
     number = {258},
     mrnumber = {1701189},
     zbl = {0946.11004},
     language = {en},
     url = {https://www.numdam.org/item/AST_1999__258__77_0/}
}

@book{taovu06, place={Cambridge}, series={Cambridge Studies in Advanced Mathematics}, title={Additive Combinatorics}, publisher={Cambridge University Press}, author={Tao, Terence and Vu, Van H.}, year={2006}, collection={Cambridge Studies in Advanced Mathematics},
  volume={105},
  doi={10.1017/CBO9780511755149}
}

@inproceedings{BringmannFN22,
  author       = {Karl Bringmann and
                  Nick Fischer and
                  Vasileios Nakos},
  title        = {Deterministic and {Las Vegas} Algorithms for Sparse Nonnegative Convolution},
  booktitle    = {Proc. 2022 {ACM-SIAM} Symposium on Discrete Algorithms (SODA)},
  pages        = {3069--3090},
  year         = {2022},
  doi          = {10.1137/1.9781611977073.119},
  bibsource    = {dblp computer science bibliography, https://dblp.org}
}

@article {LLL,
    AUTHOR = {Lenstra, A. K. and Lenstra, Jr., H. W. and Lov\'{a}sz, L.},
     TITLE = {Factoring polynomials with rational coefficients},
   JOURNAL = {Math. Ann.},
  FJOURNAL = {Mathematische Annalen},
    VOLUME = {261},
      YEAR = {1982},
    NUMBER = {4},
     PAGES = {515--534},
      ISSN = {0025-5831},
   MRCLASS = {12-04 (12A20 68C20 68C25)},
  MRNUMBER = {682664},
MRREVIEWER = {Daniel Lazard},
       DOI = {10.1007/BF01457454},
       URL = {https://doi.org/10.1007/BF01457454},
}

@article {lev,
    AUTHOR = {Lev, V. F.},
     TITLE = {Consecutive integers in high-multiplicity sumsets},
   JOURNAL = {Acta Math. Hungar.},
  FJOURNAL = {Acta Mathematica Hungarica},
    VOLUME = {129},
      YEAR = {2010},
    NUMBER = {3},
     PAGES = {245--253},
      ISSN = {0236-5294,1588-2632},
   MRCLASS = {11B13 (11B25 11B75 11P70)},
  MRNUMBER = {2737725},
MRREVIEWER = {Radhakrishnan\ Nair},
       DOI = {10.1007/s10474-010-0026-6},
       URL = {https://doi.org/10.1007/s10474-010-0026-6},
}

@article{FischerW24,
  author       = {Nick Fischer and
                  Leo Wennmann},
  title        = {Minimizing Tardy Processing Time on a Single Machine in Near-Linear Time},
  journal      = {TheoretiCS},
  volume       = {4},
  pages        = {Article 14},
  year         = {2025},
  doi          = {10.46298/THEORETICS.25.14},
  bibsource    = {dblp computer science bibliography, https://dblp.org}
}

@article{pham2025sharp,
  title={Sharp Bound for the {Erd{\H{o}}s}--{Straus} Non-averaging Set Problem},
  author={Pham, Huy Tuan and Zakharov, Dmitrii},
  journal={Geometric and Functional Analysis},
  volume={35},
  number={6},
  pages={1712--1738},
  year={2025},
  doi={10.1007/s00039-025-00728-8},
  publisher={Springer}
}

@inproceedings{Jin24,
  author       = {Ce Jin},
  title        = {0-1 Knapsack in Nearly Quadratic Time},
  booktitle    = {Proceedings of the 56th Annual {ACM} Symposium on Theory of Computing,
                  {STOC} 2024, Vancouver, BC, Canada, June 24-28, 2024},
  pages        = {271--282},
  publisher    = {{ACM}},
  year         = {2024},
  doi          = {10.1145/3618260.3649618},
  bibsource    = {dblp computer science bibliography, https://dblp.org}
}

@article{clmznotes,
  author       = {Lin Chen and
                  Jiayi Lian and
                  Yuchen Mao and
                  Guochuan Zhang},
  title        = {A Note on Deterministic {FPTAS} for Partition},
  journal      = {CoRR},
  volume       = {abs/2501.12848},
  year         = {2025},
  eprinttype    = {arXiv},
  eprint       = {2501.12848},
  bibsource    = {dblp computer science bibliography, https://dblp.org}
}

@inproceedings{ChenLMZstoc24partition,
  author       = {Lin Chen and
                  Jiayi Lian and
                  Yuchen Mao and
                  Guochuan Zhang},
  title        = {Approximating Partition in Near-Linear Time},
  booktitle    = {Proceedings of the 56th Annual {ACM} Symposium on Theory of Computing,
                  {STOC} 2024, Vancouver, BC, Canada, June 24-28, 2024},
  pages        = {307--318},
  publisher    = {{ACM}},
  year         = {2024},
  doi          = {10.1145/3618260.3649727},
  bibsource    = {dblp computer science bibliography, https://dblp.org}
}

@phdthesis{Cassis_2024,
  title  = {Algorithms for Knapsacks, Paths and Strings},
  author = {Cassis, Alejandro},
  school = {Universit{\"a}t des Saarlandes},
  year   = {2024},
  doi    = {10.22028/D291-43984}
}

@inproceedings{chan26,
author = {Timothy M. Chan},
title = {Derandomizing Pseudopolynomial Algorithms for {Subset Sum}},
booktitle = {Proceedings of the 2026 Annual ACM-SIAM Symposium on Discrete Algorithms (SODA)},
year = {2026},
pages = {3600--3610},
doi = {10.1137/1.9781611978971.131},
}

@article{jinunpublishedknapsack,
  author       = {Ce Jin},
  title        = {Solving Knapsack with Small Items via {$\ell_0$}-Proximity},
  journal      = {CoRR},
  volume       = {abs/2307.09454},
  year         = {2023},
  eprinttype    = {arXiv},
  eprint       = {2307.09454},
  bibsource    = {dblp computer science bibliography, https://dblp.org}
}

@inproceedings{AxiotisBBJNTW21,
  author       = {Kyriakos Axiotis and
                  Arturs Backurs and
                  Karl Bringmann and
                  Ce Jin and
                  Vasileios Nakos and
                  Christos Tzamos and
                  Hongxun Wu},
  title        = {Fast and Simple Modular Subset Sum},
  booktitle    = {4th Symposium on Simplicity in Algorithms, {SOSA} 2021, Virtual Conference,
                  January 11-12, 2021},
  pages        = {57--67},
  publisher    = {{SIAM}},
  year         = {2021},
  doi          = {10.1137/1.9781611976496.6},
  bibsource    = {dblp computer science bibliography, https://dblp.org}
}

@inproceedings{CardinalI21,
  author       = {Jean Cardinal and
                  John Iacono},
  title        = {Modular Subset Sum, Dynamic Strings, and Zero-Sum Sets},
  booktitle    = {4th Symposium on Simplicity in Algorithms, {SOSA} 2021, Virtual Conference,
                  January 11-12, 2021},
  pages        = {45--56},
  publisher    = {{SIAM}},
  year         = {2021},
  doi          = {10.1137/1.9781611976496.5},
  bibsource    = {dblp computer science bibliography, https://dblp.org}
}

@inproceedings{Potepa21,
  author       = {Krzysztof Pot\k{e}pa},
  title        = {Faster Deterministic Modular Subset Sum},
  booktitle    = {29th Annual European Symposium on Algorithms, {ESA} 2021, Lisbon,
                  Portugal (Virtual Conference), September 6-8, 2021},
  series       = {LIPIcs},
  volume       = {204},
  pages        = {76:1--76:16},
  publisher    = {Schloss Dagstuhl - Leibniz-Zentrum f{\"{u}}r Informatik},
  year         = {2021},
  doi          = {10.4230/LIPIcs.ESA.2021.76},
  bibsource    = {dblp computer science bibliography, https://dblp.org}
}

@inproceedings{linchensoda26,
  author       = {Lin Chen and
                  Yuchen Mao and
                  Guochuan Zhang},
  title        = {Long Arithmetic Progressions in Sparse Subset Sums: {A} Computational
                  Perspective},
  booktitle    = {Proceedings of the 2026 Annual {ACM-SIAM} Symposium on Discrete Algorithms,
                  {SODA} 2026, Vancouver, BC, Canada, January 11-14, 2026},
  pages        = {3611--3638},
  publisher    = {{SIAM}},
  year         = {2026},
  doi          = {10.1137/1.9781611978971.132},
  bibsource    = {dblp computer science bibliography, https://dblp.org}
}

@inproceedings{linchenstoc25,
  author       = {Lin Chen and
                  Yuchen Mao and
                  Guochuan Zhang},
  title        = {Long Arithmetic Progressions in Sumsets and Subset Sums: Constructive
                  Proofs and Efficient Witnesses},
  booktitle    = {Proceedings of the 57th Annual {ACM} Symposium on Theory of Computing,
                  {STOC} 2025, Prague, Czechia, June 23-27, 2025},
  pages        = {2086--2097},
  publisher    = {{ACM}},
  year         = {2025},
  doi          = {10.1145/3717823.3718281},
  bibsource    = {dblp computer science bibliography, https://dblp.org}
}

@inproceedings{clmzknapsackstoc24,
  author       = {Lin Chen and
                  Jiayi Lian and
                  Yuchen Mao and
                  Guochuan Zhang},
  title        = {A Nearly Quadratic-Time {FPTAS} for Knapsack},
  booktitle    = {Proceedings of the 56th Annual {ACM} Symposium on Theory of Computing,
                  {STOC} 2024, Vancouver, BC, Canada, June 24-28, 2024},
  pages        = {283--294},
  publisher    = {{ACM}},
  year         = {2024},
  doi          = {10.1145/3618260.3649730},
  bibsource    = {dblp computer science bibliography, https://dblp.org}
}

@article{nederlofsurvey,
  title = {An invitation to ``{Fine-grained complexity of {NP}-complete problems}''},
journal = {Computer Science Review},
volume = {61},
pages = {100919},
year = {2026},
issn = {1574-0137},
doi = {10.1016/j.cosrev.2026.100919},
author = {Jesper Nederlof},
}

@inproceedings{BringmannC23,
  author       = {Karl Bringmann and
                  Alejandro Cassis},
  title        = {Faster 0-1-Knapsack via Near-Convex Min-Plus-Convolution},
  booktitle    = {31st Annual European Symposium on Algorithms, {ESA} 2023, Amsterdam,
                  The Netherlands, September 4-6, 2023},
  series       = {LIPIcs},
  volume       = {274},
  pages        = {24:1--24:16},
  publisher    = {Schloss Dagstuhl - Leibniz-Zentrum f{\"{u}}r Informatik},
  year         = {2023},
  doi          = {10.4230/LIPIcs.ESA.2023.24},
  bibsource    = {dblp computer science bibliography, https://dblp.org}
}

@article{LagariasO85,
  author       = {J. C. Lagarias and
                  Andrew M. Odlyzko},
  title        = {Solving Low-Density Subset Sum Problems},
  journal      = {J. {ACM}},
  volume       = {32},
  number       = {1},
  pages        = {229--246},
  year         = {1985},
  doi          = {10.1145/2455.2461},
  bibsource    = {dblp computer science bibliography, https://dblp.org}
}

@inproceedings{JouxW26,
  author       = {Antoine Joux and
                  Karol W\k{e}grzycki},
  title        = {Improving {Lagarias-Odlyzko} Algorithm for Average-Case Subset Sum:
                  Modular Arithmetic Approach},
  booktitle    = {43rd International Symposium on Theoretical Aspects of Computer Science,
                  {STACS} 2026, Grenoble, France, March 9-13, 2026},
  series       = {LIPIcs},
  volume       = {364},
  pages        = {57:1--57:19},
  publisher    = {Schloss Dagstuhl - Leibniz-Zentrum f{\"{u}}r Informatik},
  year         = {2026},
  doi          = {10.4230/LIPIcs.STACS.2026.57},
  bibsource    = {dblp computer science bibliography, https://dblp.org}
}

@article{linchensurvey,
author = {Lin Chen},
title = {Application of additive combinatorics in several classic combinatorial optimization problems},
publisher = {Operations Research Transactions},
year = {2025},
journal = {Operations Research Transactions},
volume = {29},
number = {3},
pages = {202--222},
note = {in Chinese},
url = {https://www.ort.shu.edu.cn/EN/abstract/abstract21520.shtml},
doi = {10.15960/j.cnki.issn.1007-6093.2025.03.010}
}

@article{alonfreiman,
    AUTHOR = {Alon, N. and Freiman, G.},
     TITLE = {On sums of subsets of a set of integers},
   JOURNAL = {Combinatorica},
  FJOURNAL = {Combinatorica. An International Journal of the J\'{a}nos
              Bolyai Mathematical Society},
    VOLUME = {8},
      YEAR = {1988},
    NUMBER = {4},
     PAGES = {297--306},
      ISSN = {0209-9683},
   MRCLASS = {11B13},
  MRNUMBER = {981887},
MRREVIEWER = {A.\ P.\ Street},
       DOI = {10.1007/BF02189086},
       URL = {https://doi.org/10.1007/BF02189086},
}

@article{Lenstra83,
  author       = {Lenstra, Jr., Hendrik W.},
  title        = {Integer Programming with a Fixed Number of Variables},
  journal      = {Mathematics of Operations Research},
  volume       = {8},
  number       = {4},
  pages        = {538--548},
  year         = {1983},
  doi          = {10.1287/moor.8.4.538},
  url          = {https://doi.org/10.1287/moor.8.4.538}
}

@incollection{Lenstra08Lattices,
  author       = {Lenstra, Jr., Hendrik W.},
  title        = {Lattices},
  booktitle    = {Algorithmic Number Theory: Lattices, Number Fields, Curves and Cryptography},
  editor       = {J. P. Buhler and P. Stevenhagen},
  series       = {Mathematical Sciences Research Institute Publications},
  volume       = {44},
  pages        = {127--182},
  publisher    = {Cambridge University Press},
  year         = {2008},
  doi          = {10.1017/9781139049801.007},
  url          = {https://library.slmath.org/books/Book44/files/06hwl.pdf}
}

@book{GLS88,
  author       = {Martin Gr{\"o}tschel and L{\'a}szl{\'o} Lov{\'a}sz and Alexander Schrijver},
  title        = {Geometric Algorithms and Combinatorial Optimization},
  series       = {Algorithms and Combinatorics},
  volume       = {2},
  publisher    = {Springer},
  year         = {1988},
  doi          = {10.1007/978-3-642-97881-4},
  url          = {https://doi.org/10.1007/978-3-642-97881-4}
}

@inproceedings{StorjohannLabahn96,
  author       = {Arne Storjohann and George Labahn},
  title        = {Asymptotically Fast Computation of {Hermite} Normal Forms of Integer Matrices},
  booktitle    = {Proceedings of the 1996 International Symposium on Symbolic and Algebraic Computation},
  pages        = {259--266},
  year         = {1996},
  publisher    = {ACM},
  doi          = {10.1145/236869.237083},
  url          = {https://cs.uwaterloo.ca/~astorjoh/p259-storjohann.pdf}
}

@article{macbeath,
  author  = {Macbeath, A. M.},
  title   = {A Theorem on Non-Homogeneous Lattices},
  journal = {Annals of Mathematics},
  series  = {Second Series},
  volume  = {56},
  number  = {2},
  pages   = {269--293},
  year    = {1952},
  doi     = {10.2307/1969800},
  url     = {https://www.jstor.org/stable/1969800}
}

@article{ewaldlarmanrogers1970,
  author  = {Ewald, G. and Larman, D. G. and Rogers, C. A.},
  title   = {The Directions of the Line Segments and of the
             {$r$}-Dimensional Balls on the Boundary of a Convex Body in
             {Euclidean} Space},
  journal = {Mathematika},
  volume  = {17},
  number  = {1},
  pages   = {1--20},
  year    = {1970},
  doi     = {10.1112/S0025579300002655}
}

@article{baranylarman1988,
  author  = {B{\'a}r{\'a}ny, Imre and Larman, D. G.},
  title   = {Convex Bodies, Economic Cap Coverings, Random Polytopes},
  journal = {Mathematika},
  volume  = {35},
  number  = {2},
  pages   = {274--291},
  year    = {1988},
  doi     = {10.1112/S0025579300015266}
}

@article{barany2000mregions,
  author  = {B{\'a}r{\'a}ny, Imre},
  title   = {The technique of {M}-regions and cap-coverings: a survey},
  journal = {Rend. Circ. Mat. Palermo (2), Suppl.},
  volume  = {65},
  pages   = {21--38},
  year    = {2000}
}

@article{aryafonsecamount2017,
  author  = {Arya, Sunil and da Fonseca, Guilherme D. and Mount, David M.},
  title   = {On the Combinatorial Complexity of Approximating Polytopes},
  journal = {Discrete \& Computational Geometry},
  volume  = {58},
  number  = {4},
  pages   = {849--870},
  year    = {2017},
  doi     = {10.1007/s00454-016-9856-5}
}

@article{AryaM25,
  author  = {Sunil Arya and David M. Mount},
  title   = {Optimal Volume-Sensitive Bounds for Polytope Approximation},
  journal = {Discrete \& Computational Geometry},
  volume  = {74},
  number  = {4},
  pages   = {839--871},
  year    = {2025},
  doi     = {10.1007/s00454-025-00780-z},
  url     = {https://doi.org/10.1007/s00454-025-00780-z}
}

@inproceedings{AbdelkaderM18,
  author    = {Ahmed Abdelkader and David M. Mount},
  title     = {Economical {Delone} Sets for Approximating Convex Bodies},
  booktitle = {16th Scandinavian Symposium and Workshops on Algorithm Theory (SWAT 2018)},
  series    = {Leibniz International Proceedings in Informatics},
  volume    = {101},
  pages     = {4:1--4:12},
  year      = {2018},
  publisher = {Schloss Dagstuhl -- Leibniz-Zentrum f{\"u}r Informatik},
  doi       = {10.4230/LIPIcs.SWAT.2018.4},
  url       = {https://doi.org/10.4230/LIPIcs.SWAT.2018.4}
}

@article{ChenHMZ26,
  author       = {Lin Chen and Tingwei Hu and Yuchen Mao and Guochuan Zhang},
  title        = {Dense Subset Sum in Multi-Dimension},
  journal      = {CoRR},
  volume       = {abs/2607.10343},
  year         = {2026},
  eprinttype   = {arXiv},
  eprint       = {2607.10343},
  note         = {To appear in {FOCS} 2026}
}

@article{submult,
  author       = {Katalin Gyarmati and M{\'a}t{\'e} Matolcsi and Imre Z. Ruzsa},
  title        = {A superadditivity and submultiplicativity property for cardinalities of sumsets},
  journal      = {Combinatorica},
  volume       = {30},
  number       = {2},
  pages        = {163--174},
  year         = {2010},
  doi          = {10.1007/s00493-010-2413-6},
  url          = {https://doi.org/10.1007/s00493-010-2413-6}
}

@misc{AlgUWOpenProblems25,
  key          = {AlgUW},
  title        = {Problem 2: Bounded Support Subset Sum},
  year         = {2025},
  author = {W{\k{e}}grzycki, Karol},
  howpublished = {{Open problem list for the AlgUW workshop in B\k{e}dlewo  28th Sept - 3rd Oct}},
  url          = {https://docs.google.com/document/d/1OjsF4TE0Tn0Xam8vFulpzSfNv_kcorxel1oQYHTFv3o},
  note         = {Version of September 30, 2025}
}

@article{AwerbuchKleinberg08,
  author       = {Baruch Awerbuch and Robert Kleinberg},
  title        = {Online Linear Optimization and Adaptive Routing},
  journal      = {Journal of Computer and System Sciences},
  volume       = {74},
  number       = {1},
  pages        = {97--114},
  year         = {2008},
  doi          = {10.1016/j.jcss.2007.04.016},
  url          = {https://doi.org/10.1016/j.jcss.2007.04.016}
}

@inproceedings{AryaFM23,
  author       = {Sunil Arya and Guilherme D. da Fonseca and David M. Mount},
  title        = {Economical Convex Coverings and Applications},
  booktitle    = {Proceedings of the 2023 Annual {ACM-SIAM} Symposium on Discrete Algorithms ({SODA})},
  pages        = {1834--1861},
  publisher    = {{SIAM}},
  year         = {2023},
  doi          = {10.1137/1.9781611977554.ch70},
  url          = {https://doi.org/10.1137/1.9781611977554.ch70}
}

@inproceedings{BringmannFN25,
  author       = {Karl Bringmann and
                  Nick Fischer and
                  Vasileios Nakos},
  editor       = {Yossi Azar and
                  Debmalya Panigrahi},
  title        = {Beating {Bellman}'s Algorithm for Subset Sum},
  booktitle    = {Proceedings of the 2025 Annual {ACM-SIAM} Symposium on Discrete Algorithms,
                  {SODA} 2025, New Orleans, LA, USA, January 12-15, 2025},
  pages        = {4596--4612},
  publisher    = {{SIAM}},
  year         = {2025},
  url          = {https://doi.org/10.1137/1.9781611978322.157},
  doi          = {10.1137/1.9781611978322.157},
  bibsource    = {dblp computer science bibliography, https://dblp.org}
}

@book{bookkpp,
  author       = {Hans Kellerer and
                  Ulrich Pferschy and
                  David Pisinger},
  title        = {Knapsack problems},
  publisher    = {Springer},
  year         = {2004},
  url          = {https://doi.org/10.1007/978-3-540-24777-7},
  doi          = {10.1007/978-3-540-24777-7},
  isbn         = {978-3-540-40286-2},
  bibsource    = {dblp computer science bibliography, https://dblp.org}
}

@article{ImpagliazzoPZ01,
  author       = {Russell Impagliazzo and
                  Ramamohan Paturi and
                  Francis Zane},
  title        = {Which Problems Have Strongly Exponential Complexity?},
  journal      = {J. Comput. Syst. Sci.},
  volume       = {63},
  number       = {4},
  pages        = {512--530},
  year         = {2001},
  doi          = {10.1006/JCSS.2001.1774},
  bibsource    = {dblp computer science bibliography, https://dblp.org}
}

@article{egyptian,
    AUTHOR = {Conlon, David and Fox, Jacob and He, Xiaoyu and Mubayi, Dhruv and Pham, Huy Tuan and Suk, Andrew and Verstra{\"{e}}te, Jacques},
     TITLE = {A question of {E}rd{\H{o}}s and {G}raham on {E}gyptian fractions},
   JOURNAL = {Discrete Anal.},
  FJOURNAL = {Discrete Analysis},
  MONTH = {December},
      YEAR = {2025},
     NUMBER = {28},
      ISSN = {2397-3129},
   MRCLASS = {11B13 (11D68)},
  MRNUMBER = {5005333},
MRREVIEWER = {E.\ J.\ Barbeau},
URL={https://discreteanalysisjournal.com/article/154329-a-question-of-erdos-and-graham-on-egyptian-fractions}
}

@article{taovu08john,
  author = {Terence Tao and Van Vu},
  title = {John-type theorems for generalized arithmetic progressions and iterated sumsets},
  journal = {Advances in Mathematics},
  volume = {219},
  number = {2},
  pages = {428--449},
  year = {2008},
  doi = {10.1016/j.aim.2008.05.002},
  eprint = {math/0701005},
  archivePrefix = {arXiv}
}

@misc{oai1,
  author = {{OpenAI}},
  title = {{Subset Sum in Time $O(2^{0.49n})$}},
  howpublished = {OpenAI Math Release preprint
                  \href{https://github.com/openai/math/blob/main/preprints/Subset-Sum-in-Time-2-power-0-49n-October-4-2026/subset-sum.pdf}{OAI:Subset-Sum-in-Time-2-power-0-49n-October-4-2026}},
  year = {2026}
}

@misc{oai2,
  author = {{OpenAI}},
  title = {{A Low-Space Algorithm for Worst-Case Subset Sum}},
  howpublished = {OpenAI Math Release preprint
                  \href{https://github.com/openai/math/blob/main/preprints/A-Low-Space-Algorithm-for-Worst-Case-Subset-Sum-September-26-2026/paper.pdf}{OAI:A-Low-Space-Algorithm-for-Worst-Case-Subset-Sum-September-26-2026}},
  year = {2026}
}

@article{JansenR23,
  author       = {Klaus Jansen and
                  Lars Rohwedder},
  title        = {On Integer Programming, Discrepancy, and Convolution},
  journal      = {Math. Oper. Res.},
  volume       = {48},
  number       = {3},
  pages        = {1481--1495},
  year         = {2023},
  doi          = {10.1287/MOOR.2022.1308},
  bibsource    = {dblp computer science bibliography, https://dblp.org}
}

@article{le2026deterministiclineartimemodularsubset,
      title={Deterministic Linear-Time Modular Subset Sum}, 
      author={Phuoc Dinh Le and Kha Le},
      journal={CoRR},
      volume={abs/2609.30992},
      year={2026},
      eprint={2609.30992},
      eprinttype={arXiv},
      primaryClass={cs.DS},
}

@article{vaughn2026simpleralgorithmsknapsacksubset,
      title={Simpler Algorithms for Knapsack, Subset Sum, and Min-Plus Convolution}, 
      author={Trevor Vaughn},
      journal={CoRR},
      volume={abs/2609.37449},
      year={2026},
      eprint={2609.37449},
      eprinttype={arXiv},
      primaryClass={cs.DS},
}

@article{alman2026trulysubquadratic3sumtruly,
      title={Truly Subquadratic {3SUM} and Truly Subcubic {APSP} via Triangles in Sparse Lopsided Graphs}, 
      author={Josh Alman and Virginia {Vassilevska Williams}},
      journal={CoRR},
      volume={abs/2610.06783},
      year={2026},
      eprint={2610.06783},
      eprinttype={arXiv},
      primaryClass={cs.DS},
}

@article{BronnimannCP93,
  author       = {Herv{\'{e}} Br{\"{o}}nnimann and
                  Bernard Chazelle and
                  J{\'{a}}nos Pach},
  title        = {How Hard Is Half-Space Range Searching},
  journal      = {Discret. Comput. Geom.},
  volume       = {10},
  pages        = {143--155},
  year         = {1993},
  url          = {https://doi.org/10.1007/BF02573971},
  doi          = {10.1007/BF02573971},
  bibsource    = {dblp computer science bibliography, https://dblp.org}
}

@article{ye2026subsetsumpartialmatch,
      title={Subset Sum via Partial Match}, 
      author={Lixi Ye and Baitian Li},
      journal={CoRR},
      volume={abs/2610.10446},
      year={2026},
      eprint={2610.10446},
      eprinttype={arXiv},
      primaryClass={cs.DS}
}

@article{chen2026subsetsumintegerprogramming,
      title={Subset Sum and Integer Programming Feasibility Test in Near-Optimal Time},
      author={Lin Chen and Yuchen Mao and Guochuan Zhang},
      journal={CoRR},
      volume={abs/2610.12953},
      year={2026},
      eprint={2610.12953},
      eprinttype={arXiv},
      primaryClass={cs.DS},
}

@article{BorgsCP01,
  author       = {Christian Borgs and
                  Jennifer T. Chayes and
                  Boris G. Pittel},
  title        = {Phase transition and finite-size scaling for the integer partitioning
                  problem},
  journal      = {Random Struct. Algorithms},
  volume       = {19},
  number       = {3-4},
  pages        = {247--288},
  year         = {2001},
  url          = {https://doi.org/10.1002/rsa.10004},
  doi          = {10.1002/RSA.10004},
  bibsource    = {dblp computer science bibliography, https://dblp.org}
}

@inproceedings{weakapproxicalp,
  author       = {Lin Chen and
                  Jiayi Lian and
                  Yuchen Mao and
                  Guochuan Zhang},
  !editor       = {Keren Censor{-}Hillel and
                  Fabrizio Grandoni and
                  Jo{\"{e}}l Ouaknine and
                  Gabriele Puppis},
  title        = {Weakly Approximating Knapsack in Subquadratic Time},
  booktitle    = {52nd International Colloquium on Automata, Languages, and Programming,
                  {ICALP} 2025, Aarhus, Denmark, July 8-11, 2025},
  series       = {LIPIcs},
  volume       = {334},
  pages        = {51:1--51:18},
  publisher    = {Schloss Dagstuhl - Leibniz-Zentrum f{\"{u}}r Informatik},
  year         = {2025},
  url          = {https://doi.org/10.4230/LIPIcs.ICALP.2025.51},
  doi          = {10.4230/LIPICS.ICALP.2025.51},
  timestamp    = {Fri, 21 Nov 2025 23:44:11 +0100},
  biburl       = {https://dblp.org/rec/conf/icalp/0009L0Z25.bib},
  bibsource    = {dblp computer science bibliography, https://dblp.org}
}
\end{document}